\newcommand{\yj}[1]{{\bf \color{blue} {(Yaonan: #1)}}}
\newcommand{\FPA}{{\sf FPA}}
\newcommand{\OPT}{{\sf OPT}}
\newcommand{\PoA}{{\sf PoA}}
\newcommand{\SocialWelfare}{\textsf{Social Welfare}}
\newcommand{\SocialWelfares}{\textsf{Social Welfares}}
\newcommand{\FirstPriceAuction}{\textsf{First Price Auction}}
\newcommand{\FirstPriceAuctions}{\textsf{First Price Auctions}}
\newcommand{\PriceofAnarchy}{\textsf{Price of Anarchy}}
\newcommand{\BayesNashEquilibrium}{\textsf{Bayesian Nash Equilibrium}}
\newcommand{\BayesNashEquilibria}{\textsf{Bayesian Nash Equilibria}}
\newcommand{\rename}{\text{\tt Rename}}
\newcommand{\discretize}{\text{\tt Discretize}}
\newcommand{\translate}{\text{\tt Translate}}
\newcommand{\layer}{\text{\tt Layer}}
\newcommand{\polarize}{\text{\tt Polarize}}
\newcommand{\slice}{\text{\tt Slice}}
\newcommand{\halve}{\text{\tt Halve}}
\newcommand{\AD}{\text{\tt Ascend-Descend}}
\newcommand{\main}{\text{\tt Main}}
\newcommand{\collapse}{\text{\tt Collapse}}
\newcommand{\alloc}{\mathrm{x}}         % allocation
\newcommand{\pay}{\rho}                 % payment
\newcommand{\pays}{\boldsymbol{\rho}}   % payment vector
\newcommand{\bbFPA}{\mathbb{FPA}}       % space of FPA
\newcommand{\bbBNE}{\mathbb{BNE}}       % space of BNE
\newcommand{\bbV}{\mathbb{V}}           % space of value distributions
\newcommand{\Bvalid}{\mathbb{B}_{\sf valid}}
\title{First Price Auction is $1 - 1 / e^{2}$ Efficient}
\author{
% Anonymous Submission
Yaonan Jin\thanks{Huawei TCS Lab. Work done while the author was a PhD student at Columbia University.} \\
jinyaonan@huawei.com
\and
Pinyan Lu\thanks{Shanghai University of Finance and Economics \& Huawei TCS Lab.} \\
lu.pinyan@mail.shufe.edu.cn
}
\date{}
\begin{document}

\maketitle
\begin{abstract}
We prove that the {\sf PoA} of {\sf First Price Auctions} is $1 - 1 / e^{2} \approx 0.8647$, closing the gap between the best known bounds $[0.7430,\, 0.8689]$.
% We prove that the {\sf Bayesian PoA} of {\sf First Price Auctions} with independent bidders is $1 - 1 / e^{2}$.
% Sell an item to $n \geq 1$ independent bidders via {\sf First Price Auction}. Considering all {\sf Bayes Nash Equilibria}, it is shown that the tight {\sf Price of Anarchy} is $1 - 1 / e^{2} \approx 0.8647$.
% \yj{(i)~modify the valid instance theorem's proof.}
\end{abstract}
\thispagestyle{empty}

% \newpage
\thispagestyle{empty}
{\hypersetup{linkcolor=black}\tableofcontents}
\thispagestyle{empty}
\newpage

\setcounter{page}{1}

\section{Introduction}
\label{sec:intro}

%In 1950, Nash \cite{N50auction} initiated the study of auction theory. At the center of his study was the solution concept of Nash equilibrium \cite{N50equilibrium} for auctions as non-cooperative games.
%In 1961, following Nash's original spirit, Vickrey \cite{V61} further applied the tools of game theory to explain traditional auctions and the dynamics therein.

In 1961, Vickrey \cite{V61}, a Nobel laureate in Economics,
initiated auction theory. At the center of his work was the solution concept of Nash equilibrium \cite{N50} for auctions as non-cooperative games. This game-theoretical approach has shaped modern auction theory and has a tremendous influence also on other areas in mathematical economics \cite{H04}. In particular, Vickrey systematically investigated the first-price auction (or its equivalent, the Dutch auction),
the most common auction format in real business.

% which had been widely used in the real business far before Vickrey's systematic study.
% a central auction format was

In the first-price auction, the auctioneer (seller) sells an indivisible item to $n$ potential bidders (buyers).
The rule is as simple as it can be: All bidders simultaneously submit bids to the auctioneer (each of which is unknown to the other bidders); the highest bidder wins the item, paying his/her own bid.
% Although the rule is very simple,
Simple as the rule is, the bidders' optimal bidding strategies can be sophisticated.
From one bidder's own perspective, a higher bid means a higher payment on winning but also a better chance of winning.
Accordingly, this bidder's optimal bidding strategy depends on the competitive environment, which in turn is determined by the other bidders’ bidding strategies. This situation is perfectly a non-cooperative game and the standard solution concept is Bayesian Nash equilibrium.
To get a better sense, let us show a warm-up example from Vickrey's original work.

\begin{example}[{\cite{V61}}]
\label{exp:intro:1}
Consider two bidders: Alice and Bob have (independent) uniform random values $v_{1},\, v_{2} \sim U[0,\, 1]$ and respectively bid $b_{1} = \frac{v_{1}}{2}$ and $b_{2} = \frac{v_{2}}{2}$. The value distributions and bidding strategies determine the (independent) bid distributions of bidders. In this example, they are (independent) uniform random bids $b_{1},\, b_{2} \sim U[0,\, \frac{1}{2}]$, whose CDFs are $B_{1}(b) = B_{2}(b) = \min(2b,\, 1)$.
By bidding $b \geq 0$, Alice wins with probability $\min(2b,\, 1)$ and gains a utility $(v_{1} - b)$ conditioned on winning. Her expected utility is $(v_{1} - b) \cdot \min(2b,\, 1)$, which is maximized when $b = \frac{v_{1}}{2}$.
Thus, her current strategy $b_{1} = \frac{v_{1}}{2}$ is optimal. By symmetry, Bob's current strategy $b_{2} = \frac{v_{2}}{2}$ is also optimal.
In sum, the above strategy profile $(b_{1},\, b_{2}) = (\frac{v_{1}}{2},\, \frac{v_{2}}{2})$ is an equilibrium, in a sense that no bidder can gain a better utility by deviating from his/her current strategy.
\end{example}

For clarity, let us formalize the model. Each bidder $i \in [n]$ independently draws his/her value from a distribution $v_{i} \sim V_{i}$.
Only with this knowledge and depending on his/her own strategy $s_{i}$, bidder $i$ submits a possibly random bid $b_{i} = s_{i}(v_{i})$.
% Under value $v$ and bid $b$, denote by $\mathrm{x}_{i}(b)$ this bidder's winning probability, over the randomness of the others' values $\bv_{-i} = (v_{k})_{k \neq i}$ and strategies $\bs_{-i} = \{s_{k}\}_{k \neq i}$, and by $u_{i}(v,\, b) = (v - b) \cdot \mathrm{x}_{i}(b)$ this bidder's expected utility.
Then over the randomness of the other bidders' values $\bv_{-i} = (v_{k})_{k \neq i}$ and strategies $\bs_{-i} = \{s_{k}\}_{k \neq i}$, bidder $i$ wins with probability $\mathrm{x}_{i}(b_{i}) \in [0,\, 1]$ and gains an expected utility $u_{i}(v_{i},\, b_{i}) = (v_{i} - b_{i}) \cdot \mathrm{x}_{i}(b_{i})$.

% \blue{Now let us formalize the model. Each bidder $i \in [n]$ draws his/her value $v_i$ independently from a distribution $V_{i}$. After knowing his own value but without knowing other's values, bidder $i$ submits a (possibly random) bid $b_i=s_i(v_i)$  based on his value $v_i$ and strategy $s_i$. We use $u_i(v,b)=(v-b) \cdot \mathrm{x}_i(b)$ to denote bidder $i$'s expected utility with value $v$ and bid $b$, where the winning probability $\mathrm{x}_i(b)$ is taking over the randomness of other bidders' valuations and strategies.}

\begin{definition}[Equilibria]
\label{def:bne}
A strategy profile $\bs = \{s_{i}\}_{i \in [n]}$ is a {\BayesNashEquilibrium} when: For each bidder $i \in [n]$ and any possible value $v \in \supp(V_{i})$, the current strategy $s_{i}(v)$ is optimal, namely $\Ex_{s_{i}} \big[ u_{i}(v,\, s_{i}(v)) \big] \geq u_{i}(v,\, b)$ for any deviation bid $b \geq 0$.
\end{definition}

\Cref{exp:intro:1} is special in that bidders have identically distributed values. This {\em symmetric} setting is well understood: The first-price auction has a {\em unique} Bayesian Nash equilibrium \cite{CH13}, which is {\em fully efficient} -- The bidder with the highest value always wins the item.
Instead, the current trend focuses more on the {\em asymmetric} setting, where bidders' values are distinguished by their distributions.
Again, let us get a better sense through two concrete examples.

% \blue{Let us see two more examples of {\BayesNashEquilibrium} (BNE) for the first-price auction.}

\begin{example}
\label{exp:intro:2}
Consider two bidders: Alice has a fixed value $v_{1} \equiv 2$ and always bids $s_{1}(v_{1}) \equiv 1$. Bob has a uniform random value $v_{2} \sim U[0,\, 1]$ and {\em truthfully} bids his value $s_{2}(v_{2}) = v_{2}$, namely the distribution $B_{2}(b) = \min(b,\, 1)$.
By bidding $b \geq 0$, Alice gains an expected utility $(v_{1} - b) \cdot \min(b,\, 1)$, for which her current strategy $s_{1}(v_{1}) \equiv 1$ is optimal.
Bob cannot gain a positive utility because his value $v_{2} \sim U[0,\, 1]$ is at most Alice's bid $s_{1}(v_{1}) \equiv 1$; thus his current strategy $s_{2}(v_{2}) = v_{2}$ is also optimal.
In sum, this strategy profile $(s_{1}, s_{2})$ is an equilibrium.

Unlike the symmetric setting, this auction game has {\em multiple} equilibria.
E.g., it is easy to verify that the same strategy $s_{1}(v_{1}) \equiv 1$ for Alice and a different strategy $\tilde{s}_{2}(v_{2}) = \max(\frac{2v_{2} - 1}{v_{2}},\, 0)$ for Bob (namely the distribution $\tilde{B}_{2}(b) = \frac{1}{2 - b}$ for $b \in [0,\, 1]$) are another equilibrium.
\end{example}

% We first verify that there is no better strategy for Bob. Given the competitive bid of deterministic $1$, Bob with value at most $1$ cannot have any positive utility. His truthful bidding gets a zero utility and thus is indeed already the best possible.
% For Alice, there is no point for her to increase her bid since she already gets the item with probability $1$ by bidding $1$. By bidding a lower bid $b\leq 1$, her winning probability is $b$ (since the competitive bid is uniform from $U[0,1]$) and thus her expected utility is $b(2-b)$. It is easy to see that the optimal bidding is $1$.

Both equilibria $(s_{1},\, s_{2})$ and $(s_{1},\, \tilde{s}_{2})$ given in \Cref{exp:intro:2} have two features:
(i)~The strategies $s_{1}(v_{1})$, $s_{2}(v_{2})$ and $\tilde{s}_{2}(v_{2})$ are {\em pure strategies} -- Each of them has no randomness, just the function of a value.
(ii)~Both equilibria are {\em fully efficient} akin to the symmetric setting -- Alice always has the highest value $\equiv 2$ and always wins the item.
However, these are not always the case in the asymmetric setting, such as in the following example, which only slightly modifies \cref{exp:intro:2} by changing the fixed value of Alice from $2$ to $1$.

\begin{example}
\label{exp:intro:3}
Consider two bidders: Alice has a fixed value $v_{1} \equiv 1$ and bids $s_{1}(v_{1}) \sim B_{1}$ following the distribution $B_1(b) = \frac{1}{4b - 2}\exp(\frac{4b - 3}{2b - 1})$ for $b \in [\frac{1}{2},\, \frac{3}{4}]$. Bob has a uniform random value $v_{2} \sim U[0,\, 1]$ and bids $s_{2}(v_{2}) = \max(\frac{4v_{2} - 1}{4v_{2}},\, 0)$, namely the distribution $B_2(b) = \frac{1}{4 - 4b}$ for $b \in [0,\, \frac{3}{4}]$.
By bidding $b \geq 0$, Alice gains an expected utility $(v_{1} - b) \cdot B_{2}(b)$, which is maximized $= \frac{1}{4}$ anywhere between $b \in [0,\, \frac{3}{4}]$, so her current strategy $s_{1}(v_{1}) \sim B_{1}$ is optimal. Using elementary algebra, we can check that Bob's current strategy $s_{2}(v_{2}) = \max(\frac{4v_{2} - 1}{4v_{2}},\, 0)$ also is optimal. In sum, this strategy profile $(s_{1}, s_{2})$ is an equilibrium.
\end{example}

In \Cref{exp:intro:3}: Alice has a {\em mixed strategy} -- A fixed value $v_{1} \equiv 1$ but a random bid $s_{1}(v_{1}) \sim B_{1}$.
Furthermore, the equilibrium $(s_{1},\, s_{2})$ is {\em not} fully efficient. E.g., with a value $v_{2} = \frac{3}{4}$, although not the highest value $< v_{1} \equiv 1$, Bob bids $s_{2}(v_{2}) = \frac{2}{3}$ and wins with probability $B_{1}(\frac{2}{3}) = \frac{3}{2e} \approx 0.5518$. Indeed, \Cref{exp:intro:3} has infinite equilibria, among which the given one $(s_{1}, s_{2})$ has the relatively ``simplest'' format. But none of those equilibria is a pure equilibrium or is fully efficient, despite that \cref{exp:intro:3} is only a minor modification of  \cref{exp:intro:2}.
% \red{Although the instance is only a minor modification of \cref{exp:intro:2}, there is no pure or efficient equilibrium for \cref{exp:intro:3}. }

% \blue{This is called the asymmetric setting while \Cref{exp:intro:1} belongs to symmetric settings. The following \Cref{exp:intro:3} is more complicated in the sense that the strategy in the equilibrium is a mixing one: the bid for a given value is a distribution rather than a fixed bid. In general, mixing strategies are necessary for equilibrium.}

% The bid of the first bidder with fixed value $1$ is a distribution defined by the following CDF on the support :
% The bidding strategy for the second bidder is .
% The second bidder's strategy transforms his uniform value distribution to a bid distribution defined by CDF function $B_2(b) = \frac{1}{4(1-b)}$ for $b\in [\frac{1}{4},\, \frac{3}{4}]$.

From the above examples, we observe that Bayesian Nash equilibria
% in the first-price auction
can be very complicated and sensitive to the value distributions, despite the intrinsic simple rule of the first-price auction.  After an extensive study for more than 60 years, the first-price auction and its equilibria remain the centerpiece of modern auction theory and have promoted a rich literature; see \cite{SZ90,L06,L96,L99,MR00a,MR00b,JSSZ02,MR03,HKMN11,CH13,ST13,S14,HHT14,FLN16,HTW18,WSZ20,FGHLP21} etc. These efforts are justifiable since the study of the first-price auction and its equilibria is both theoretically challenging and practically important.

Among various aspects of the first-price auction, {\em efficiency} at equilibria is of primary interest. In economics, efficiency measures to what extent a resource can be allocated to the persons who value it the most, thus maximizing the {\em social welfare}, particularly in a competitive environment.
% \red{Hence, the first-price auction or generally all auctions are perfect scenarios.}
As shown above (\Cref{exp:intro:3}), the first-price auction generally is not fully efficient at an equilibrium: The winner has the highest {\em bid} but possibly not the highest {\em value}; this crucially depends on both (i)~the instance $\bV$ itself and (ii)~which Bayesian Nash equilibrium $\bs \in \bbBNE(\bV)$ it falls into.
Earlier works in economics focus on (generalizing) the conditions for the value distribution $\bV = \{V_{i}\}_{i \in [n]}$
that guarantee the full efficiency.

However, the quality of (in)efficiency should not be all-or-nothing.
E.g., when the highest value is $1$, a value-$0.99$ bidder versus a value-$0.01$ bidder is very different, although neither of them is fully efficient. Towards a quantitative analysis,
Koutsoupias and Papadimitriou introduced a new measure on the efficiency degradation under selfish behaviors, the Price of Anarchy \cite{KP99} (which is an analog of the ``approximation ratio'' in theoretical computer science).
For the first-price auction, denote by $\OPT(\bV)$ the expected optimal social welfare from an instance $\bV$, and by $\FPA(\bV,\, \bs)$ the expected social welfare at an equilibrium $\bs \in \bbBNE(\bV)$, then the Price of Anarchy is defined to be the minimum possible ratio, as follows.

\begin{definition}[{\PriceofAnarchy}]
\label{def:poa}
Regarding {\FirstPriceAuctions}, the {\PriceofAnarchy} is given by
\[
    \PoA ~\eqdef~ \inf \bigg\{\, \frac{\FPA(\bV,\, \bs)}{\OPT(\bV)} \,\biggmid\, \bs \in \mathbb{BNE}(\bV) ~\text{and}~ \OPT(\bV) < +\infty \,\bigg\}.
\]
\end{definition}

The Price of Anarchy is bounded between $[0,\, 1]$. Namely, a larger ratio means a higher efficiency and the $= 1$ ratio means the full efficiency.
For the first-price auction, Syrgkanis and Tardos first proved that the {\PoA} is at least  $1 - 1 / e \approx 0.6321$ \cite{ST13}.
Later, Hoy, Taggart and Wang derived an improved lower bound of $\approx 0.7430$ \cite{HTW18}.
On the other hand, Hartline, Hoy and Taggart gave a concrete instance of ratio $\approx 0.8689$ \cite{HHT14}, which remains the best known upper bound.

% When an instance $\bV$ admits multiple equilibria $|\bbBNE(\bV)| > 1$, the Price of Anarchy considers the worst case, i.e., a {\em pessimistic} measure.
% The {\em optimistic} counterpart, called the Price of Stability \cite{ADKTWR08}, considers the best case and is also of fundamental interest.

% \begin{definition}[{\PriceofStability}]
% \label{def:pos}
% Regarding {\FirstPriceAuctions}, the {\PriceofStability} is given by
% \[
%     \PoS ~\eqdef~ \inf \bigg\{\, \sup \bigg\{\, \frac{\FPA(\bV,\, \bs)}{\OPT(\bV)} \,\biggmid\, \bs \in \mathbb{BNE}(\bV) \,\bigg\} \,\biggmid\, \OPT(\bV) < +\infty \,\bigg\}.
% \]
% \end{definition}

% The Price of Stability is bounded between $[\PoA,\, 1]$. Especially, if this is {\em equal} to the tight {\PoA}, then regarding the worst-case instance $\bV^{*}$, either one equilibrium is unique $|\bbBNE(\bV^{*})| = 1$ or any two equilibria $\bs \neq \tilde{\bs} \in \bbBNE(\bV^{*})$ are equally efficient $\FPA(\bV^{*},\, \bs) = \FPA(\bV^{*},\, \tilde{\bs})$.
% Conceivably, the {\em minimax} optimization for {\PoS} is technically more difficult than the {\em minimization} for {\PoA}. Indeed, the best known {\PoS} lower bound of $\approx 0.7430$ \cite{HTW18} just follows from the same {\PoA} lower bound by implication, and no nontrivial upper bound $< 1$ has been formally proved.

Despite the prevalence of the first-price auction and much effort in studying its efficiency, there has been a persistent gap in the state of the art. In the current paper, we will completely solve this long-standing open problem.

\begin{theorem}[Tight {\PoA}]
\label{thm:main}
The {\PriceofAnarchy} in {\FirstPriceAuctions} is $1 - 1 / e^{2} \approx 0.8647$.
\end{theorem}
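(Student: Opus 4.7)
The theorem has two halves: the upper bound $\PoA \leq 1 - 1/e^{2}$, for which one needs an explicit instance (or limiting family) whose equilibrium attains efficiency arbitrarily close to $1 - 1/e^{2}$, and the lower bound $\PoA \geq 1 - 1/e^{2}$, for which one must show that every Bayesian Nash equilibrium of every instance achieves at least this fraction of the optimal social welfare. Since the best prior lower bound is $\approx 0.7430$, the latter is the substantial new contribution. For the upper bound I would refine the Hartline--Hoy--Taggart construction (already $\approx 0.8689$): a natural candidate is a two-bidder family with one bidder concentrated near a high value and the other having a tuned continuous distribution on an overlapping range, where the mutual indifference conditions produce a coupled ODE whose exponential-type solutions make $\FPA / \OPT \to 1 - 1/e^{2}$ in an appropriate limit. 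The remaining work is a direct verification of the BNE conditions and computation of both expectations.

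The lower bound is the heart of the argument. The operation macros declared in the preamble ($\rename$, $\discretize$, $\translate$, $\layer$, $\polarize$, $\slice$, $\halve$, $\ascend$, $\descend$, $\AD$, $\collapse$) strongly suggest a normalization pipeline: a sequence of instance transformations, each proved to weakly \emph{decrease} the efficiency ratio $\FPA(\bV, \bs) / \OPT(\bV)$, driving any worst-case candidate to a canonical form whose ratio can be computed exactly. A plausible ordering is: first $\rename$ and $\discretize$ pass to a finite-support formulation; next $\translate$ and $\layer$ decompose a complicated instance into simple value-range slabs; then $\polarize$ pushes mass to the extremes of each slab so that every bidder is effectively ``high'' or ``low''; then $\slice$, $\halve$, $\ascend$, $\descend$, and $\AD$ progressively monotonize and normalize the equilibrium bid CDFs; finally $\collapse$ eliminates redundant degrees of freedom until only a scalar optimization is left, whose value should evaluate to $1 - 1/e^{2}$ and thus match the upper bound.

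The main obstacle will be establishing monotonicity of each reduction. Unlike the smoothness-style arguments of Syrgkanis--Tardos and Hoy--Taggart--Wang, which test only one or a few deviation bids and inevitably lose a constant factor in the analysis, here one must guarantee that every operation in the pipeline either preserves or lowers the ratio while producing a genuine new BNE, exploiting the full force of the simultaneous indifference conditions rather than a single test deviation. The $\polarize$ step is likely the crux: concentrating value mass at the extremes of a slab forces the entire equilibrium to re-solve, so one needs a careful coupling or convexity argument to certify that the resulting efficiency ratio has not improved. The $1 / e^{2}$ constant (as opposed to the $1/e$ of the smoothness bound) should then emerge naturally from integrating the logarithmic derivative of the canonical equilibrium CDF twice --- once against the winning probability and once against the expected winning value --- at the terminal step of the pipeline.
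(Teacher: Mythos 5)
Your high-level guess --- a pipeline of instance transformations, each weakly decreasing $\FPA/\OPT$, terminating in a canonical instance whose ratio is computed exactly --- does match the paper's strategy, but the proposal is missing the idea that makes such a pipeline possible at all, and you have in fact put your finger on the obstruction without resolving it. You note that a step like $\polarize$ "forces the entire equilibrium to re-solve," and indeed no coupling or convexity argument on value distributions is offered or available. The paper avoids this entirely by changing representation: an equilibrium is encoded by its \emph{bid distributions} $\bB$ together with one conditional value distribution $P$ at the lowest winning bid, and the value distributions and strategies are reconstructed analytically through the bid-to-value mappings $\varphi_i(b) = b + \big(\sum_{k\neq i} B_k'(b)/B_k(b)\big)^{-1}$ (Theorem~\ref{thm:valid}, Lemmas~\ref{lem:value_dist}, \ref{lem:pseudo_distribution}). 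All reductions are then performed directly on the mappings $\bvarphi$ (or the table $\bPhi$), validity of the transformed instance reduces to checking monotonicity-type conditions, and both welfares have closed formulas in $(P,\bB)$ --- so no equilibrium is ever re-solved. Without this step your pipeline cannot be made rigorous. Your reading of the individual macros is also off in ways that matter: $\polarize$ only derandomizes the monopolist's conditional value at the boundary bid; the heavy lifting is done by $\slice$, $\collapse$, $\halve$, $\AD$ iterated under a potential function (counting "ultra-ceiling" table entries), together with the notion of \emph{pseudo bidders} that represent the limit of infinitely many low-impact bidders; and the terminal object is not "a scalar optimization" but a \emph{functional} optimization over the high bidder's bid CDF, solved via the Euler--Lagrange equation, whose ODE has a closed-form solution --- that is where $1-1/e^{2}$ actually comes from.

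On the upper bound, "refine Hartline--Hoy--Taggart" is the right instinct, but a genuine two-bidder family cannot attain the bound: the worst case is one bidder with a fixed high value against $n\to\infty$ i.i.d.\ low-impact bidders $\{\sqrt[n]{L}\}^{\otimes n}$ (the two-"bidder" picture $H\otimes L$ is only the pseudo-instance limit, in which $L$ competes against itself), and the paper exhibits the explicit distributions and verifies $\OPT \geq 1-\epsilon$ while $\FPA \leq 1-1/e^{2}$. So the proposal identifies the correct skeleton but leaves the two essential components --- the bid-distribution/bid-to-value-mapping representation that legitimizes the reductions, and the calculus-of-variations step that pins down the constant --- as gaps.
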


Remarkably, neither the best known lower bound $\approx 0.7430$ nor the upper bound $\approx 0.8689$ is tight; we close the gap by improving both of them. Our tight bound $1 - 1 / e^{2} \approx 0.8647$ not only is a mathematically elegant result but has further implications in real business since it is fairly close to $1$. Namely, at any equilibrium,  the efficiency degradation in the first-price auction is small, no worse than 13.53\%, which might be acceptable given other merits of the first-price auction.

En route to the tight {\PoA}, we obtain many new important perspectives, characterizations, and properties of equilibria in the first-price auction. These might be of independent interest and find applications in the future study of other aspects of the first-price auction, e.g., the complexity of computing equilibria. Beyond the first-price auction, our overall approach is general enough and might help determine the Price of Anarchy in other auctions.

\subsection{Technical overview}
\label{sec:overview}

% This subsection sketches our high-level proof ideas. To ease the readability, we would elaborate on the tight {\PoA}. (The upper bound analysis can be slightly modified to accommodate the tight {\PoS}, accomplishing the whole proof.)
% Below, some descriptions are roughly accurate but not perfectly, and most technical details are deferred to \Cref{sec:structure,sec:preprocessing,sec:reduction,sec:UB,sec:LB}.
% Our approach is very different from the prior works \cite{ST13,HTW18}, which mainly adopt the smoothness techniques or the extension.

This subsection sketches our high-level proof ideas. To ease the readability, some descriptions are roughly accurate but not perfectly, and most technical details are deferred to \Cref{sec:structure,sec:preprocessing,sec:reduction,sec:UB,sec:LB}.
Our approach is very different from the prior works \cite{ST13,HTW18}, which mainly adopt the smoothness techniques or an extension.
More concretely, we employ a first principle approach that directly characterizes the {\em worst-case} instance and the {\em worst-case} Bayesian Nash equilibrium regarding the definition of the Price of Anarchy. To this end, we step by step narrow down the search space of the worst case by proving more and more necessary conditions it must satisfy.
Finally, we capture the exact worst case and thus derive the tight {\PoA} bound.

\subsection*{Changing the viewpoints (\Cref{sec:structure})}

% \yj{to continue}

Regarding the {\PoA} definition, we need to prove that
the auction social welfare $\FPA(\bV,\, \bs)$ is within a certain ratio of the optimal social welfare $\OPT(\bV)$, given any value distribution $\bV = \{V_{i}\}_{i \in [n]}$ and any equilibrium thereof $\bs = \{s_{i}\}_{i \in [n]}$.

Two difficulties emerge immediately: (i)~One value distribution $\bV \in \bbV$ generally has {\em multiple} or even {\em infinite} equilibria. (ii)~One equilibrium $\bs \in \bbBNE(\bV)$ generally has {\em no} analytic
% closed-form
solution; even an efficient algorithm for computing (approximate) equilibria from value distributions is unknown.

To the rescue, instead of the original value-strategy representation $(\bV,\, \bs) \in \bbV \times \bbBNE$ of equilibria,
we use another representation, the {\em bid distributions} $\bB(\bV,\, \bs)$ resulted from equilibria.
These two representations and their relationship are demonstrated in \Cref{fig:mappings}.
Given one equilibrium bid distribution $\bB(\bV,\, \bs) = \{B_{i}\}_{i \in [n]}$: (i)~One equilibrium $(\bV,\, \bs) \in \bbV \times \bbBNE$ is {\em uniquely} determined.
(ii)~The reconstruction of the $(\bV,\, \bs)$ essentially {\em has} an analytic
% a closed-form
solution, through the {\em bid-to-value mappings} $\varphi_{i}(b) \eqdef b + (\sum_{k \neq i} B'_{k}(b) / B_{k}(b))^{-1}$ that are almost the inverse functions of the strategies $\bs = \{s_{i}\}_{i \in [n]}$.\footnote{Although the mapping formulas $\varphi_{i}(b) = b + (\sum_{k \neq i} B'_{k}(b) / B_{k}(b))^{-1}$ are previously known, NO prior work takes ANY further step. Even this structure result is unknown: ``One (valid) bid distribution $\bB \in \Bvalid$ backward uniquely determines the underlying value-strategy tuple $(\bV,\, \bs) \in \bbV \times \bbBNE$''; let alone techniques towards the tight $\PoA = 1 - 1 / e^{2}$.}
These mitigate the above two difficulties.

In sum, there is a bijection between the space of equilibria $\bbV \times \bbBNE \ni (\bV,\, \bs)$ and the space of equilibrium bid distributions $\{\bB(\bV,\, \bs)\}$.
Further, the new representation $\{\bB(\bV,\, \bs)\}$ is mathematically equivalent but technically easier -- this is {\em one} infinite set instead of the Cartesian product $\bbV \times \bbBNE$ of  {\em two} infinite sets -- showing an avenue towards the tight {\PoA}. (Later in \Cref{sec:structure,sec:preprocessing,sec:reduction,sec:UB,sec:LB}, we will see more advantages.)

% Thus, we use bid distribution rather than value distribution to describe an instance and analyze the {\PoA} bound of the instance from that. It is clear from the figure that the bid distribution perspective has many advantages. 

\newlength{\FigMappingsHeight}
\setlength{\FigMappingsHeight}{3.1cm}

\begin{figure}[t]
    \centering
    \subfloat[Forward direction: $\bbV \times \bbBNE \mapsto \Bvalid$
    \label{fig:mappings:1}]{
    \resizebox{.38\textwidth}{!}{
    \begin{tikzpicture}[thick, smooth, scale = 1]
        \draw[fill = red!15] (-1.35, 0) arc(-180: 180: 1.35cm and \FigMappingsHeight);
        \draw[fill = green!15] (3.65, 0) arc(-180: 180: 1.35cm and \FigMappingsHeight) -- cycle;
        \node[below] at (0, -\FigMappingsHeight) {Space $\bbV$};
        \node[below] at (5, -\FigMappingsHeight) {Space $\Bvalid$};
        
        \draw[fill = black] (0, 0.7) circle (3pt);
        \node[left] at (0, 0.7) (V1) {$\bV$};
        \draw[fill = black] (5, 2.1) circle (3pt);
        \node[right] at (5, 2.1) (B1) {$\bB$};
        \draw[fill = black] (5, 0.7) circle (3pt);
        \node[right] at (5, 0.7) (B2) {$\bar{\bB}$};
        \draw[fill = black] (5, -0.7) circle (3pt);
        \node[right] at (5, -0.7) (B3) {$\hat{\bB}$};
        
        \draw[fill = black] (0, -2.1) circle (3pt);
        \node[left] at (0, -2.1) (V4) {$\tilde{\bV}$};
        \draw[fill = black] (5, -2.1) circle (3pt);
        \node[right] at (5, -2.1) (B4) {$\tilde{\bB}$};
        
        \draw[shorten <=0.2cm, shorten >=0.2cm, ->, >=triangle 45] (V1) to node[above, rotate = 15, yshift = -.1cm] {\small $\bs \in \bbBNE(\bV)$} (B1);
        \draw[shorten <=0.2cm, shorten >=0.2cm, ->, >=triangle 45] (V1) to node[above, yshift = -.1cm] {\small $\bar{\bs} \in \bbBNE(\bV)$} (B2);
        \draw[shorten <=0.2cm, shorten >=0.2cm, ->, >=triangle 45] (V1) to node[above, rotate = -13, yshift = -.1cm] {\small $\hat{\bs} \in \bbBNE(\bV)$} (B3);
        
        \draw[shorten <=0.2cm, shorten >=0.2cm, ->, >=triangle 45] (V4) to node[above, yshift = -.1cm] {\small $\tilde{\bs} \in \bbBNE(\bV)$} (B4);
    \end{tikzpicture}}}
    \hfill
    \subfloat[Inverse direction: $\Bvalid \mapsto \bbV \times \bbBNE$
    \label{fig:mappings:2}]{
    \resizebox{.38\textwidth}{!}{
    \begin{tikzpicture}[thick, smooth, scale = 1]
        \draw[fill = orange!15] (8.65, 0) arc(-180: 180: 1.35cm and \FigMappingsHeight);
        \draw[fill = green!15] (3.65, 0) arc(-180: 180: 1.35cm and \FigMappingsHeight) -- cycle;
        \node[below] at (10, -\FigMappingsHeight) {Space $\bbV \times \bbBNE$};
        \node[below] at (5, -\FigMappingsHeight) {Space $\Bvalid$};
        
        \draw[fill = black] (10, 2.1) circle (3pt);
        \node[right] at (10, 2.1) (V1) {};
        \node[below] at (10, 2.1) {$(\bV,\, \bs)$};
        \draw[fill = black] (5, 2.1) circle (3pt);
        \node[left] at (5, 2.1) (B1) {$\bB$};
        \draw[fill = black] (10, 0.7) circle (3pt);
        \node[right] at (10, 0.7) (V2) {};
        \node[below] at (10, 0.7) {$(\bV,\, \bar{\bs})$};
        \draw[fill = black] (5, 0.7) circle (3pt);
        \node[left] at (5, 0.7) (B2) {$\bar{\bB}$};
        \draw[fill = black] (10, -0.7) circle (3pt);
        \node[right] at (10, -0.7) (V3) {};
        \node[below] at (10, -0.7) {$(\bV,\, \hat{\bs})$};
        \draw[fill = black] (5, -0.7) circle (3pt);
        \node[left] at (5, -0.7) (B3) {$\hat{\bB}$};
        
        \draw[fill = black] (10, -2.1) circle (3pt);
        \node[right] at (10, -2.1) (V4) {};
        \node[below] at (10, -2.1) {$(\tilde{\bV},\, \tilde{\bs})$};
        \draw[fill = black] (5, -2.1) circle (3pt);
        \node[left] at (5, -2.1) (B4) {$\tilde{\bB}$};
        
        \draw[shorten <=0.2cm, shorten >=0.2cm, ->, >=triangle 45] (B1) to (V1);
        \draw[shorten <=0.2cm, shorten >=0.2cm, ->, >=triangle 45] (B2) to (V2);
        \draw[shorten <=0.2cm, shorten >=0.2cm, ->, >=triangle 45] (B3) to (V3);
        
        \draw[shorten <=0.2cm, shorten >=0.2cm, ->, >=triangle 45] (B4) to (V4);
    \end{tikzpicture}}}
    \caption{Diagram of the two representations of equilibrium, (i)~the value-strategy representation $(\bV,\, \bs) \in \bbV \times \bbBNE$ versus (ii)~the bid distribution representation $\bB = \bB(\bV,\, \bs) \in \Bvalid$.
    One value distribution $\bV$ has multiple or even infinite equilibria $\bs \in \bbBNE(\bV)$; each equilibrium induces one valid bid distribution $\bB = \bB(\bV,\, \bs) \in \Bvalid$. Backward, the $\bB$ determines the underlying tuple $(\bV,\, \bs) \in \bbV \times \bbBNE$ via the bid-to-value mappings $\bvarphi$. Hence, there is a bijection between the two spaces $\bbV \times \bbBNE$ and $\Bvalid$.
    \label{fig:mappings}}
\end{figure}

% \blue{Mappings between value distributions $\bV \in \bbV$ and bid distributions $\bB \in \mathbb{B}$. \\
% \Cref{fig:mappings:1}: A value distribution $\bV \in \bbV$ can be mapped to multiple (or even uncountably many) bid distributions $\bB \in \Bvalid \subsetneq \mathbb{B}$,
% each of which is determined by one exact {\BayesNashEquilibrium} $\bs \in \bbBNE(\bV,\, \mathrm{x}) \neq \emptyset$.
% This ``$\bV \mapsto \bB$'' mapping is {\em non-surjective}, since the image $\Bvalid$ (i.e., the space of the $\bB$'s that can be constructed from some value distribution $\bV \in \bbV$ and some {\BayesNashEquilibrium} $\bs \in \bbBNE(\bV)$) is a proper subset of the codomain $\mathbb{B}$ (i.e., the space of all $\bB$'s). \\
% \Cref{fig:mappings:2}: A bid distribution $\bB \in \mathbb{B}$ determines either a {\em well-defined} value distribution $\bV \in \bbV$ or a {\em fake} value distribution $\tilde{\bV} \notin \bbV$, through the bid-to-value mappings $\{\varphi_{i}\}_{i \in [n]}$ (\Cref{def:mapping}) and \Cref{cor:high_bid}. The membership of the result (either $\bV$ or $\tilde{\bV}$) in space $\bbV$ depends on the validity of the considered bid distribution $\bB \in \mathbb{B}$, namely whether $\bB \in \Bvalid$ or $\bB \in (\mathbb{B} \setminus \Bvalid)$. Moreover, this ``$(\bB \in \Bvalid)$-to-$(\bV \in \bbV)$'' mapping is {\em surjective} but {\em non-injective}.}

Nonetheless, there is no free lunch. The new representation has two drawbacks. (i)~Unlike that any value distribution $\bV \in \bbV$ must be feasible since the existence of an equilibrium $\bs \in \bbBNE(\bV) $ is promised \cite{L96},\footnote{This existence result requires suitable {\em tie-breaking} rules in the first-price auction; see \Cref{sec:structure} for more details.}
% \footnote{This should be combined with suitable tie-breaking rules; see \Cref{sec:structure} for details.}
a generic bid distribution $\bB$ not necessarily induces an equilibrium.
To remedy this issue, we introduce the notion of {\em validity}, which almost refers to {\em monotonicity} of the bid-to-value mappings $\bvarphi = \{\varphi_{i}\}_{i \in [n]}$.
% To capture this point, We introduce the notion of validity, which mainly says that the bid-to-value mappings $\bvarphi = \{\varphi_{i}(b)\}_{i \in [n]}$ must be monotonic.
In addition, (ii)~a mapping $\varphi_{i}$ is undefined at a singular point where the equilibrium bid distribution $B_{i}$ of that bidder $i \in [n]$ has a probability mass, which must be handled separately.
Luckily, such probability masses are possible only at the {\em left endpoint} of the distribution's support.
So, we further consider the {\em conditional value distribution} $P$ given one's bid being at the left endpoint.
% \red{left endpoint}.
% this probability mass.
Indeed, the concept of validity refers to a tuple $(\bB,\, P)$ of equilibrium bid distributions and a conditional value distribution.

More rigorously, our new representation considers all the {\em valid} tuples/instances $(\bB,\, P) \in \Bvalid$.
Here, the {\em valid} distributions $\bB$ can be replaced with the {\em increasing} bid-to-value mappings $\bvarphi$: One determines the other and vice versa (\Cref{lem:pseudo_distribution}).
More importantly, monotonicity of the mappings $\bvarphi$ gives a strong geometric intuition, making the third representation $(\bvarphi,\, P)$ of equilibrium more convenient for our later use.
In particular, in what follows, the figures of the mappings $\bvarphi = \{\varphi_{i}\}_{i \in [n]}$ \textbf{\em play the role of visual demonstrations}, i.e., \textbf{\em horizontal bid axes} and \textbf{\em vertical value axes}, where each mapping $\varphi_{i}$ denotes one bidder $i \in [n]$.

\subsection*{The worst-case instance (\Cref{sec:LB})}

Our overall approach is to reduce any given instance to the {\em worst-case} instance step by step. Thus, it is helpful to describe the worst case in advance, which explains why we design those reductions since our target is very specific.

\begin{wrapfigure}{r}{0.48\textwidth}
    \centering
    \includegraphics[width = 0.8\linewidth]
    {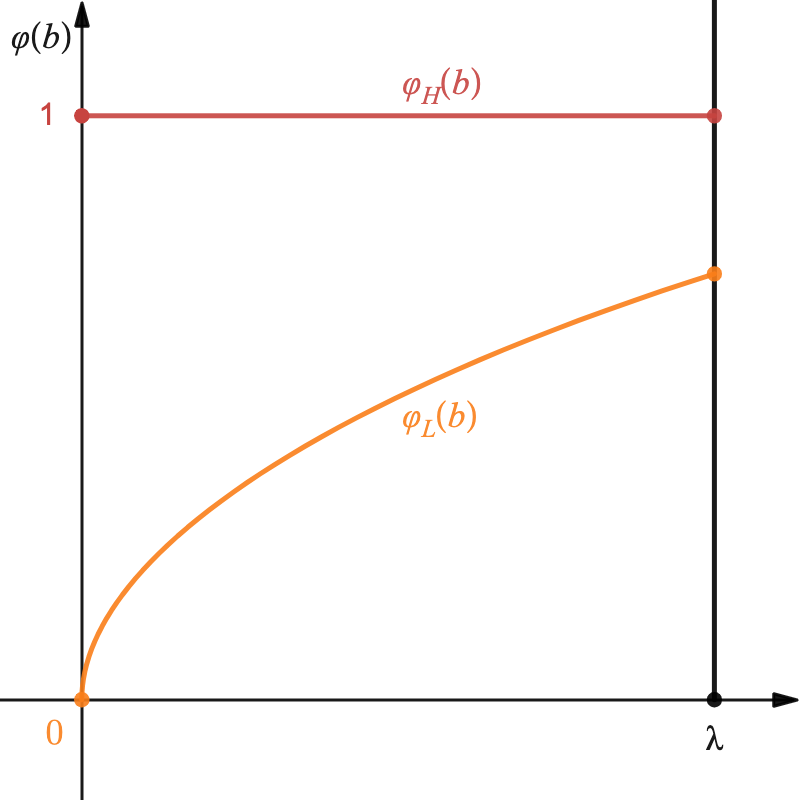}
    \caption{\label{fig:intro:worst_case}
    The worst case instance $H \otimes \{\sqrt[n]{L}\}^{\otimes n}$. Bidder $H$ has a fixed high value $\varphi_{H}(b) \equiv 1$ for $b \in [0,\, 1 - 4 / e^{2}]$ and each bidder $L$ has the (parametric) bid-to-value mapping $\varphi_{L}(1 - t^{2} \cdot e^{2 - 2t}) = 1 - t \cdot e^{2 - 2t}$ for $t \in [1,\, 2]$.}
\end{wrapfigure}

The worst case has $n + 1$ bidders, one bidder $H$ with a {\em fixed} high value $v_{H} \equiv 1$ and {\em sufficiently many} $n \gg 1$ identical low-value bidders $\{\sqrt[n]{L}\}^{\otimes n}$.
Among the low-value bidders, the highest value $v_{L} \sim V_{L}$ is distributed following the parametric equation $V_{L}(1 - t \cdot e^{2 - 2t}) = 4 / t^{2} \cdot e^{2t - 4}$ for $t \in [1,\, 2]$, over the value support $v_{L} \in \supp(V_{L}) = [0,\, 1 - 2 / e^{2}]$.
See \Cref{fig:intro:worst_case} for a visual aid, based on the bid-to-value mappings $\varphi_{H} \otimes \{\varphi_{L}\}^{\otimes n}$.

The equilibrium strategies $s_{H} \otimes \{s_{L}\}^{\otimes n}$ and the bid distributions $H \otimes \{\sqrt[n]{L}\}^{\otimes n}$ are less important for understanding our approach;\footnote{\label{footnote:worst_case}\cite{HHT14} ``just'' considered a reasonable instance $H(b) = \sqrt{b / \lambda}$ and $L(b) = \frac{1 - \lambda}{1 - b}$ for $\lambda = 0.57$, but neither gave argument/evidence for  ``worstness'' of the instance family $H \otimes \{\sqrt[n]{L}\}^{\otimes n}$, nor searched the worst case $(H^{*},\,\lambda^{*})$.

In contrast, our contributions are twofold: (primary) ``worstness'' of this family; (secondary) the nontrivial worst case $b = 1 - {4H^{*}} \cdot {e^{2 - 4\sqrt{H^{*}}}}$ and $L^{*}(b) = \frac{1 - \lambda^{*}}{1 - b}$ for $\lambda^{*} = 1 - 4 / e^{2}$. In this family, the {\PoA} is less sensitive to different instances, hence two numerically close bounds $0.8689$ versus $0.8647$.}
%\footnote{The CDF $H(b)$ is given by the implicit equation $b = 1 - 4H \cdot \exp(2 - 4\sqrt{H})$ for $0 \leq b \leq 1 - 4 / e^{2}$ and $1 / 4 \leq H \leq 1$, and the CDF $L(b)$ is given by $L(b) = \frac{4 / e^{2}}{1 - b}$ for $0 \leq b \leq 1 - 4 / e^{2}$.}
we will show in \Cref{sec:LB} how to deduce them and the tight {\PoA} $= 1 - 1 / e^{2}$.
% we defer a description and the proof of the tight {\PoA} $= 1 - 1 / e^{2}$ to \Cref{sec:LB}.
% (There, we will prove that this equilibrium gives the tight {\PoA} $= 1 - 1 / e^{2}$.)
The parametric equation $V_{L}$ also is less important, being included just for completeness.

% A detail description of the instance is given in 
% We do not describe an equilibrium or the resulting bid distributions $\bB$ here, since these details are less important to understand our proof overview. (See  for more details,)

In contrast, the point is the underlying structure: The bidder $H$ always contributes the optimal social welfare $\equiv 1$.
The low-value bidders $\{\sqrt[n]{L}\}^{\otimes n}$ {\em individually} have negligible effects (the winning probabilities etc)
but {\em collectively} make the auction game less efficient.

We are inspired by the instance due to Hartline, Hoy and Taggart \cite{HHT14}, which has the same $H \otimes \{\sqrt[n]{L}\}^{\otimes n}$ structure. Our construction differs from theirs in the concrete distributions, hence a slightly worse {\PoA} of $1 - 1 / e^{2} \approx 0.8647$ in comparison with their ratio of $\approx 0.8689$ (\Cref{footnote:worst_case}).

% We use the bid-to-value mapping figures to demonstrate our worst case instance in  \Cref{fig:intro:worst_case}. 

%\blue{We often use the figures of these mappings to depict an instance.}

\subsection*{The pseudo instances (\Cref{subsec:pseudo}) and \blackref{alg:collapse} (\Cref{subsec:collapse})}

Rigorously, the worst case $H \otimes \{\sqrt[n]{L}\}^{\otimes n}$ for $n \gg 1$ described above is not a specific instance, but a sequence of instances with the {\em limit} $n \to +\infty$ being the worst case.
This incurs some notational inconvenience.
% This incurs some inconvenience especially in notation.
When such issues occur in real/complex analysis, the standard method is to add the infinite point(s) to the space of the real/complex numbers, making the space more complete.
%axis $\mathbb{R}_{\geq 0} = [0,\, +\infty) \cup \{+\infty\}$ or the two-dimensional space $\R_{\geq 0}^{2} = [0,\, +\infty]^{2}$ etc, making a closed space ``more closed''.
% As in mathematics, we often add $\infty$ in to real number $\mathbb{R}$ or the infinite points to the 2D Euclidean plane to make the space more complete.
In the same spirit, we introduce the notion of {\em pseudo bidders} and {\em pseudo instances} (\Cref{def:pseudo}), thus including the above limit to our search space.

\begin{figure}[t]
    \centering
    \subfloat[\label{fig:intro:collapse_input}
    The input {\em non collapsed} mappings $\bvarphi$]{
    \includegraphics[width = .4\textwidth]
    {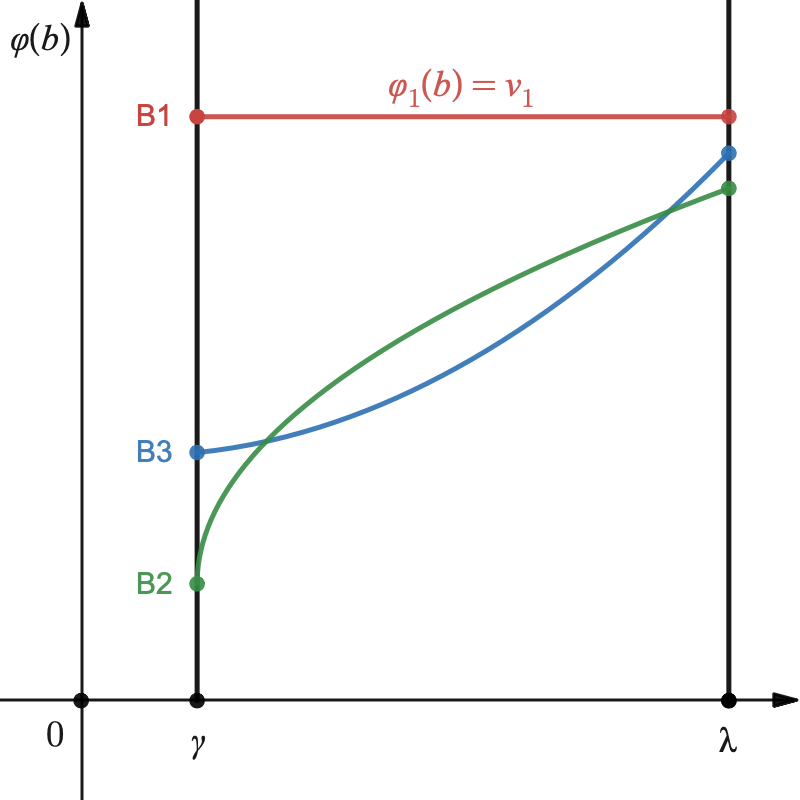}}
    \hfill
    \subfloat[\label{fig:intro:collapse_output}
    The output {\em collapsed} mappings $\tilde{\bvarphi}$]{
    \includegraphics[width = .4\textwidth]
    {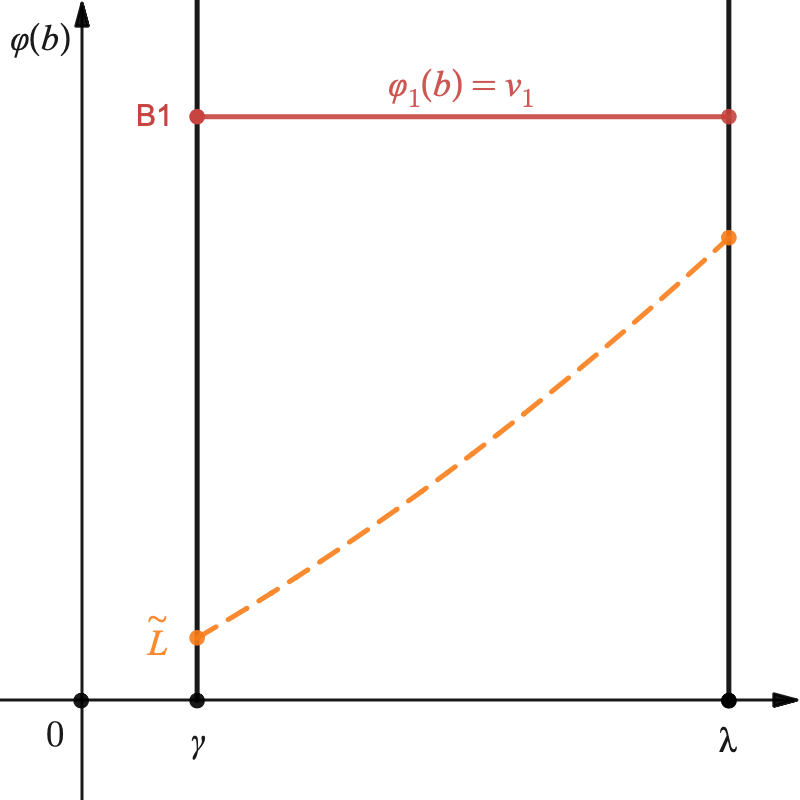}}
    \caption{The {\collapse} reduction.
    When a specific bidder $i \in [n]$ has a fixed high value $\varphi_{i}(b) \equiv v_{i} \geq \max\{ \varphi_{k}(b): k \neq i \}$ over the bid support $b \in [\gamma,\, \lambda]$, the {\collapse} reduction transforms this instance $\bB = B_{i} \otimes \bB_{-i}$ to a two-bidder pseudo instance $B_{i} \otimes \tilde{L}$ that yields a worse-or-equal {\PoA}.
    \label{fig:intro:twin}}
\end{figure}

In our extended language, the worst case has a succinct format: one high-value bidder $H$ and one pseudo bidder $L$,
%$L = \lim_{n \to +\infty} (\sqrt[n]{L})^{n}$, 
namely a two-bidder pseudo instance (\Cref{fig:intro:worst_case}).
Beyond notational brevity, the notion of pseudo bidders/instances also simplifies our proof in several places.

% , which is the combination of all these infinite many infinitesimal bidders. The detailed definition can be found in (\Cref{subsec:pseudo}).

% Then the above worst case can be described in our extended language as a two-bidder Pseudo instance: one high-value bidder and one Pseudo bidder, which is the combination of all these infinite many infinitesimal bidders. The detailed definition can be found in (\Cref{subsec:pseudo}). The convenience is not only for the description of the worst-case instance, it is also beneficial in many other places of our proof.

This extension does not hurt our proof, given that pseudo instances are only considered in the {\em lower-bound} analysis. Precisely, we show that even the worst case in the extended search space has the {\PoA} $\geq 1 - 1 / e^{2}$, implying a lower bound of $1 - 1 / e^{2}$ for the original problem.
In addition, the {\em upper-bound} analysis leverages the above instance sequence $H \otimes \{\sqrt[n]{L}\}^{\otimes n}$.
Precisely, we show that when the $n \gg 1$ is sufficiently large, the {\PoA} can be arbitrarily close to $1 - 1 / e^{2}$. As a combination, the tight {\PoA} in \Cref{thm:main} gets established.

% We also use the above worse case instance sequence, each of which is a real instance for fixed $n$, to prove that there are real instances whose PoA are arbitrarily close to $1 - 1 / e^{2}$. As a combination, we get the tight {\PoA}.

% We remark that this extension of the search space does not hurt our proof. Pseudo instances are only consider in the lower-bound analysis, namely the PoA of pseudo instance is at least $1 - 1 / e^{2}$. Thus, the PoA of any real instance (as a subset of pseudo instance space) is at least $1 - 1 / e^{2}$. We also use the above worse case instance sequence, each of which is a real instance for fixed $n$, to prove that there are real instances whose PoA are arbitrarily close to $1 - 1 / e^{2}$. To combine these two, we get a tight PoA.

Another related thing is the \blackref{alg:collapse} reduction. As \Cref{fig:intro:collapse_input,fig:intro:collapse_output} show,
when a specific bidder $i \in [n]$ has a fixed high value $\varphi_{i}(b) \equiv v_{i} \geq \max\{ \varphi_{k}(b): k \neq i \}$ over the bid support $b \in [\gamma,\, \lambda]$,
% when there is one bidder $i \in [n]$ whose value is fixed $\varphi_{i}(b) \equiv v_{i}$ and is always higher than the other bidders,
the \blackref{alg:collapse} reduction
can replace all the other bidders $\{B_{k}\}_{k \neq i}$ by one pseudo bidder $\tilde{L}$, resulting in a two-bidder pseudo instance $B_{i} \otimes \tilde{L}$ that has a worse-or-equal {\PoA}.
(I.e, whether before or after the \blackref{alg:collapse} reduction, the optimal social welfare is the bidder $i$'s fixed high value $\equiv v_{i}$. Moreover, it turns out that the auction social welfare can only decrease.)
Such a two-bidder pseudo instance $B_{i} \otimes \tilde{L}$ {\em will} be easy to handle since it has the same shape as the worst case $H \otimes L$.
% (\Cref{fig:intro:twin}) and.

In sum, the remaining task is to transform every valid instance $\in \Bvalid$ into a specific instance that the {\collapse} reduction can work on, i.e., an instance that has one {\em fixed-high-value} bidder.

% in the reduction is to convert the instance to a new one with a single high-value bidder with a constant valuation.

% ``split'' the other bidders $k \neq i$ into ``infinitesimal'' bidders and then ``merge'' all of them into one pseudo bidder to get a pseudo instance with a worse-or-equal {\PoA}.
% After the reduction, it is of the shape of the worst case instance (See \Cref{fig:intro:collapse} and \Cref{fig:intro:worst_case}). As a result, the key remaining task in the reduction is to convert the instance to a new one with a single high-value bidder with a constant valuation.

\subsection*{\blackref{alg:layer} (\Cref{subsec:layer}): Hierarchizing the bidders}

\begin{figure}[t]
    \centering
    \subfloat[\label{fig:intro:layer:old}
    The input {\em increasing} mappings $\bvarphi$]{
    \includegraphics[width = .4\textwidth]
    {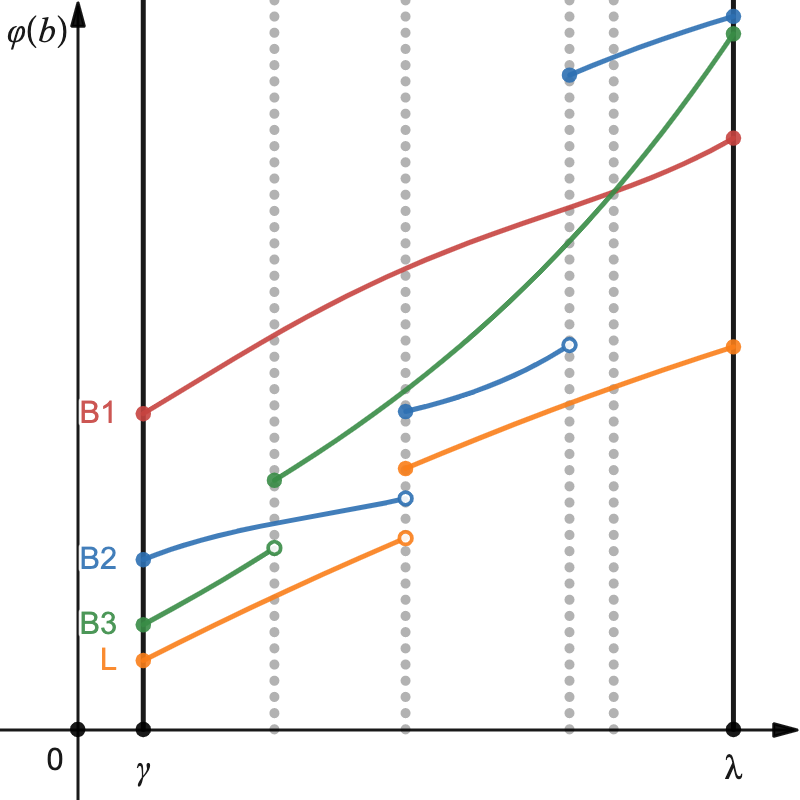}}
    \hfill
    \subfloat[\label{fig:intro:layer:new}
    The output {\em layered} mappings $\tilde{\bvarphi}$]{
    \includegraphics[width = .4\textwidth]
    {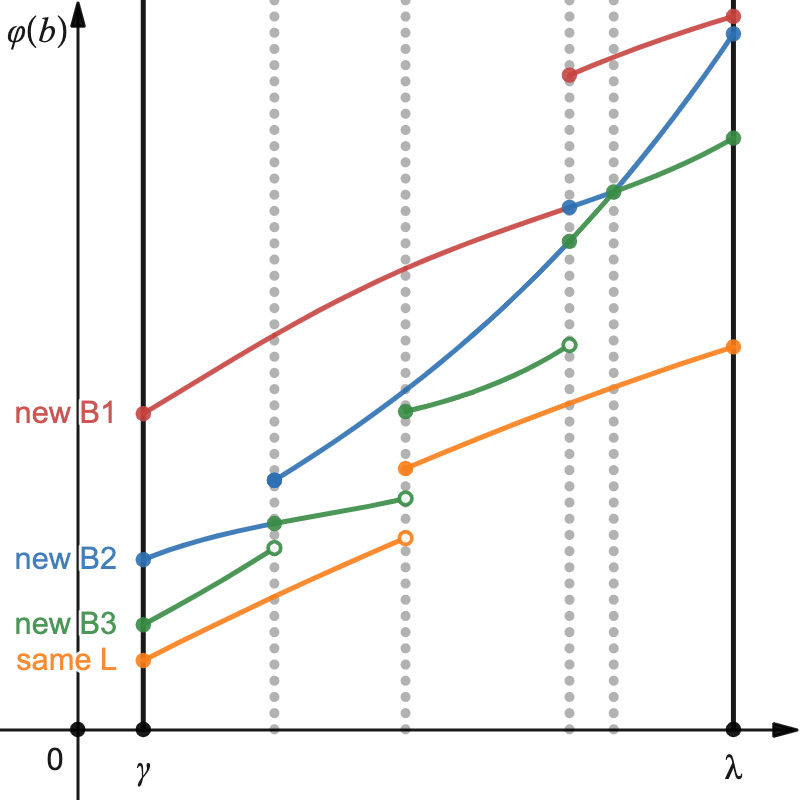}}
    \caption{The {\layer} reduction. Each color corresponds to a bidder. After the reduction, the bidders are ordered. For example, the red bidder is the highest bidder. \label{fig:intro:layer}}
\end{figure}

To transform a given instance into the worst case, we shall identify one special high-value bidder $H \in [n]$ and ``derandomize'' his/her value, i.e., making the bid-to-value mapping a constant function $\varphi_{H}(b) \equiv v_{H}$. (If so, upon scaling, we can normalize this fixed value $v_{H} = 1$.)
But it is even unclear which bidder $i \in [n]$ should be the candidate: The highest-value bidder $\argmax\{ \varphi_{i}(b): i \in [n]\}$ at different bids $b \in [\gamma,\, \lambda]$ can be different.
More concretely, consider the example in \Cref{fig:intro:layer:old},
the highest-value bidder is the red bidder $B_{1}$ initially for small bids,
but changes to the blue bidder $B_{2}$ later for large bids.

We can segment the bid-to-value mappings into pieces and rearrange them, in a sense of \Cref{fig:intro:layer:new}, making them {\em ordered point-wise} $\tilde{\varphi}_{1}(b) \geq \dots \geq \tilde{\varphi}_{n}(b)$.
Then, the candidate high-value bidder clearly is the index-$1$ bidder $\tilde{B}_{1}$ (i.e., the red one in \Cref{fig:intro:layer:new}).
We call this reduction \blackref{alg:layer}.

Under the \blackref{alg:layer} reduction, the auction/optimal social welfares {\FPA} and {\OPT} each remain the same, so the {\PoA} is invariant.
% (\Cref{lem:layer}).
% We prove that the expected optimal social welfare and expected welfare obtained from first price auction both remain unchanged before and after this layer reduction.
% Therefore, we can apply this reduction without affect the PoA of the instance.
This proof is not that technically involved,
% about \blackref{alg:layer},
once we formalize the reduction.
However, without switching our viewpoint to the bid-to-value mappings $\bvarphi = \{\varphi_{i}\}_{i \in [n]}$, even describing this reduction seems difficult.

% as you can see such reduction is not easy to identify if one only take the value distribution viewpoint, or even difficult to describe this reduction.

In sum, we narrow down the search space to the ``layered'' instances and the remaining task to address the candidate high-value bidder $B_{1}$, particularly derandomizing his/her value $\varphi_{1}(b) \equiv v_{1}$.
% \red{and make it a constant}.

\subsection*{\blackref{alg:discretize} and \blackref{alg:translate} (\Cref{subsec:discretize,subsec:translate}): Two simplifications}

\begin{figure}[t]
    \centering
    \subfloat[\label{fig:infro:discretize:old}
    The input {\em increasing} mappings $\bvarphi$]{
    \includegraphics[width = .425\textwidth]
    {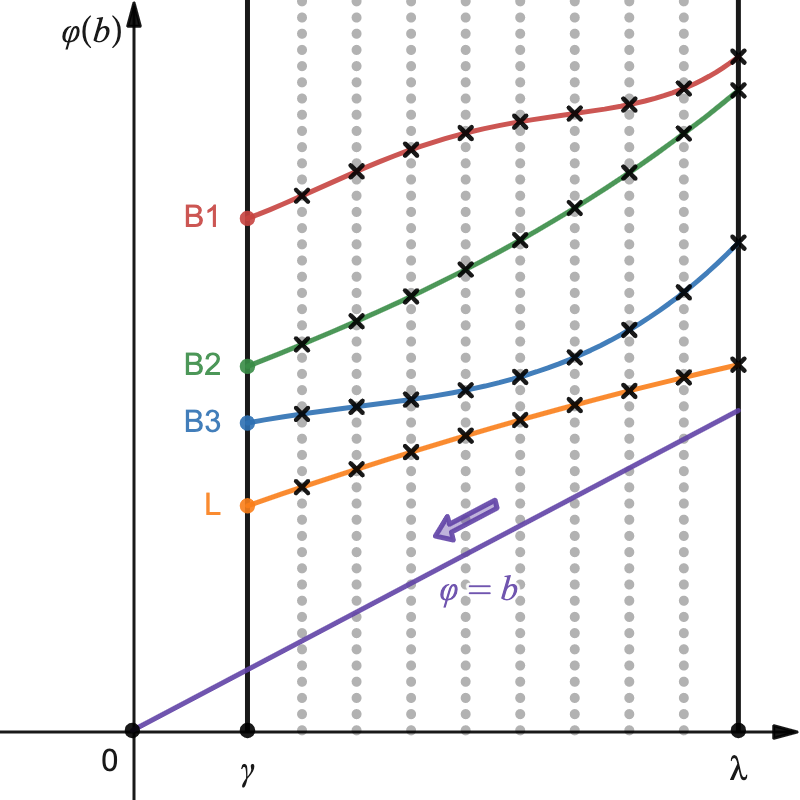}}
    \hfill
    \subfloat[\label{fig:infro:discretize:new}
    The output {\em discretized}/{\em translated} mappings $\tilde{\bvarphi}$]{
    \includegraphics[width = .425\textwidth]
    {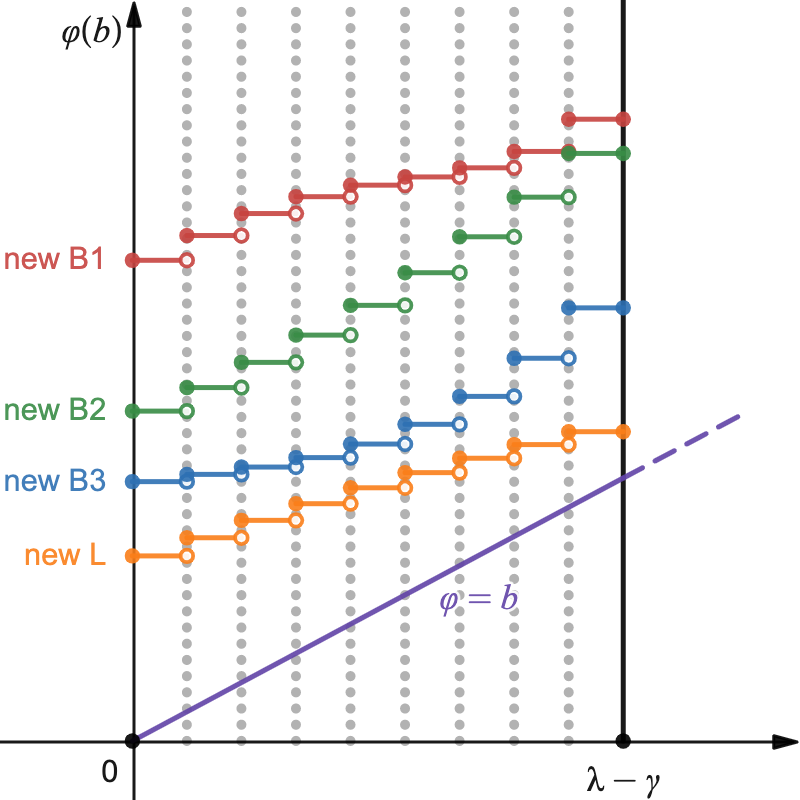}}
    \caption{The {\discretize} and {\translate} reductions
    \label{fig:intro:discretize}}
\end{figure}

Our main reduction is an {\em iterative} algorithm; thus it is convenient to work with {\em bounded discrete} instances, for which the main reduction will terminate in {\em finite} iterations.
Such an idea appears in many prior works \cite{CGL14,AHNPY19,JLTX20,JLQTX19a,JJLZ21}, and the proof plan is twofold: (i)~For any bounded discrete instance, the {\PoA} is at least $1 - 1 / e^{2}$. (ii)~For any valid instance, there exists a bounded discrete instance approximating the {\PoA}, up to an {\em arbitrary but fixed} error $\epsilon > 0$.
As a combination, the {\PoA} for any valid instance is at least $1 - 1 / e^{2}$.

However, our discretization scheme is subtle. First, we cannot discretize the valid bid distributions $\bB$ given that they must be continuous everywhere except the left endpoint of the bid support.
Second, we cannot discretize the value distributions and the strategies $(\bV,\, \bs)$.
Although bounded discrete value distributions $\tilde{\bV}$ that well approximate the given ones $\bV$ are trivial, it is unclear how to obtain desirable strategies $\tilde{\bs} \in \bbBNE(\tilde{\bV})$:
No algorithm for computing equilibria from value distributions is available.
Also, with modified value distributions, even the existence of $\tilde{\bs} \in \bbBNE(\tilde{\bV})$ that is ``close enough'' to the given equilibrium $\bs \in \bbBNE(\bV)$ is doubtful.

% Although it is easy to find some discretized value distributions $\tilde{\bV}$ that well approximate the given ones $\bV$, it is unclear how to accordingly get a modified equilibrium $\tilde{\bs} \in \bbBNE(\tilde{\bV})$:
% As mentioned, no algorithm for computing equilibria from value distributions is known.
% Indeed, even the existence of such an equilibrium $\tilde{\bs} \in \bbBNE(\tilde{\bV})$ that is ``close enough'' to the given one $\bs \in \bbBNE(\bV)$ is doubtful.
% It is fine to discretize the value distributions $\tilde{\bV}$ so that it is a good approximation to $\bV$.  But it is unclear how to get the corresponding BNE $\tilde{\bs}$ after the discretization since it is not easy to compute BNE from value distributions. It is even not clear if there exists a $\tilde{\bs}$ in the neighborhood of $\bs$ such that $\tilde{\bs}$ is a BNE of $\tilde{\bV}$.

% \color{blue}

We circumvent these issues by discretizing the bid-to-value mappings $\bvarphi$ instead (\Cref{fig:intro:discretize}), i.e., approximating them
% the given mappings $\bvarphi$
through {\em piecewise constant} functions $\tilde{\bvarphi}$.\footnote{In real analysis, for various purposes, piecewise constant functions (a.k.a.\ simple functions) are widely used as approximations to general functions.}
The benefits are threefold:
(i)~The valid bid distributions $\tilde{\bB}$, the value distributions $\tilde{\bV}$, and the equilibrium strategies $\tilde{\bs} \in \bbBNE(\tilde{\bV})$ can be reconstructed from the new mappings $\tilde{\bvarphi}$, through analytic expressions.
(ii)~The piecewise constancy of the new mappings $\tilde{\bvarphi}$ makes the value distribution $\tilde{\bV}$ {\em bounded} and {\em discrete}.
% , although different from those by directly discretizing the $\bV$ in naive ways.
(iii)~More importantly, we are able to obtain a {\em sequence} of piecewise constant mappings $\{\tilde{\bvarphi}\}$ that {\em pointwise converge} to the given ones $\bvarphi$.
The corresponding $\{ \tilde{\bB} \}$, $\{ \tilde{\bV}\}$, and $\{ \tilde{\bs} \}$ can be arbitrarily close to the given $\bB$,\, $\bV$, and $\bs$, approximating the {\PoA} up to any fixed error $\epsilon > 0$.
This exactly implements the second part of our proof plan.

Afterward, the \blackref{alg:translate} reduction further vanishes the lowest bid, through shifting both the value space and the bid space by the same distance of $-\gamma$ (\cref{fig:intro:discretize}).
As a consequence, both the auction/optimal social welfares decrease by an amount of $\gamma$, giving a worse-or-equal {\PoA}.

In sum, we can focus on {\em piecewise constant} and {\em increasing} mappings $\bvarphi$ over a {\em bounded} support $b \in [0,\, \lambda]$, together with the conditional value $P$ given the nil bid $b = 0$ (i.e., the left endpoint).

\subsection*{\blackref{alg:polarize} and \blackref{alg:slice} (\Cref{subsec:polarize,subsec:slice}): Dealing with the probability masses}

Now we address the probability masses and the conditional value $P$. (This is the only information that cannot be reconstructed from the bid-to-value mappings $\bvarphi$ and must be handled separately.)
Our main observation is (\Cref{lem:monopolist}) that {\em at most one} bidder can have a nontrivial conditional value $P$, who will be called the monopolist $H$.
(After the \blackref{alg:layer} reduction, this monopolist $H$ will be the {\em high-value} bidder; cf.\ \Cref{fig:intro:layer}.)
That is why we can use a single conditional value $P$, instead of $n$ values akin to the distributions $\bV$ or $\bB$. Moreover, following the monotonicity, this value $P$ is supported on $[0,\varphi_{H}(0)]$.

The \blackref{alg:polarize} reduction derandomizes the conditional value by moving all the probabilities to either $P \equiv 0$ or $P \equiv \varphi_{H}(0)$.
Namely, this is a {\em win-win} reduction, in a sense that at least one of the two possibilities $P \equiv 0$ and $P \equiv \varphi_{H}(0)$ induces a worse-or-equal {\PoA} than the given instance. Later, such win-win reductions also appear in several other places.

% We first deal with the probability mass and the conditional value distribution $P$. This part of information is not included in the bid-to-value mappings and  need to be handled separately. Our most important observation here is that there is at most one bidder with nontrivial conditional distribution $P$ (\Cref{lem:monopolist}), and we call this bidder the monopolist $H$. (After the \blackref{alg:layer}, the monopolist $H$ is also the high-value bidder.) That is the reason why we can use a single distribution $P$ rather than $n$ distributions like $\bV$ and $\bB$.  Due to the monotonicity, we know that this distribution $P$ is supported on $[0,\varphi_{H}(0)]$. In the \blackref{alg:polarize} reduction, we prove that we can move all the probability mass to either $0$ or  $\varphi_{H}(0)$. This is a win-win reduction, namely we construct two candidate instances from the given instance and prove that at least one candidate yields a worse-or-equal {\PoA}. We use such win-win type reductions in several places.

Further, the \blackref{alg:slice} reduction totally determines the conditional value $P \equiv \varphi_{H}(0)$ by eliminating the other possibility.
This reduction modifies both the conditional value $P$ and the mappings $\bvarphi$, thus far more complicated than the \blackref{alg:polarize} reduction, which only modifies the $P$.\footnote{For very technical reasons, we {\em should not} unify \blackref{alg:polarize} and \blackref{alg:slice} into a single reduction (cf.\ \Cref{sec:preprocessing,sec:reduction}).}

In conclusion, we can drop the notation $P$ hereafter: Just the {\em piecewise constant} and {\em increasing} mappings $\bvarphi$ are enough to describe the entire instance and the reductions on it.

% Later, we further use the \blackref{alg:slice} reduction to show that we can assume that this distribution is deterministic on $\varphi_{H}(0)$.  \blackref{alg:slice} reduction is significantly more complicated  since it also modifies the bid-to-value mappings while  \blackref{alg:polarize} only changes the distribution $P$ itself. After that, we can drop the distribution $P$ and only use $\bB$ (or $\bvarphi$ the bid-to-value mappings) to describe an instance and the reductions on it.

\subsection*{\blackref{alg:halve} and \blackref{alg:AD} (\Cref{subsec:halve,subsec:AD}): The main reductions}

\afterpage{
\begin{figure}
\centering
\begin{minipage}{.49\textwidth}
    \centering
    \subfloat[\label{fig:intro:halve:input}
    {The input with a {\bf pseudo jump}}]{\includegraphics[width = .79\linewidth]{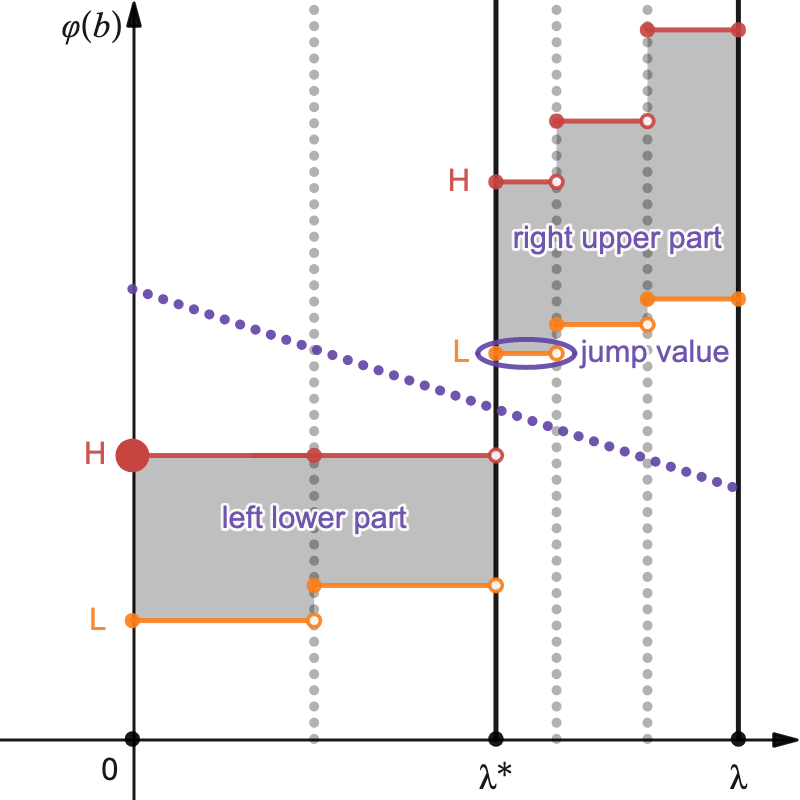}} \\
    \subfloat[\label{fig:intro:halve:left}
    The {\em left} candidate output]{\includegraphics[width = .79\linewidth]{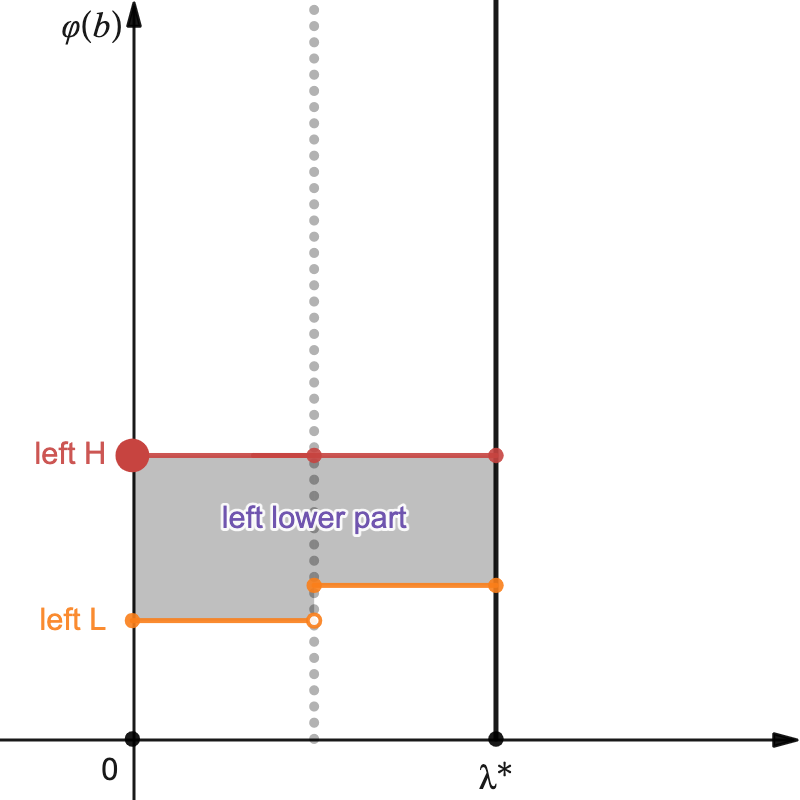}} \\
    \subfloat[\label{fig:intro:halve:right}
    The {\em right} candidate output]{\includegraphics[width = .79\linewidth]{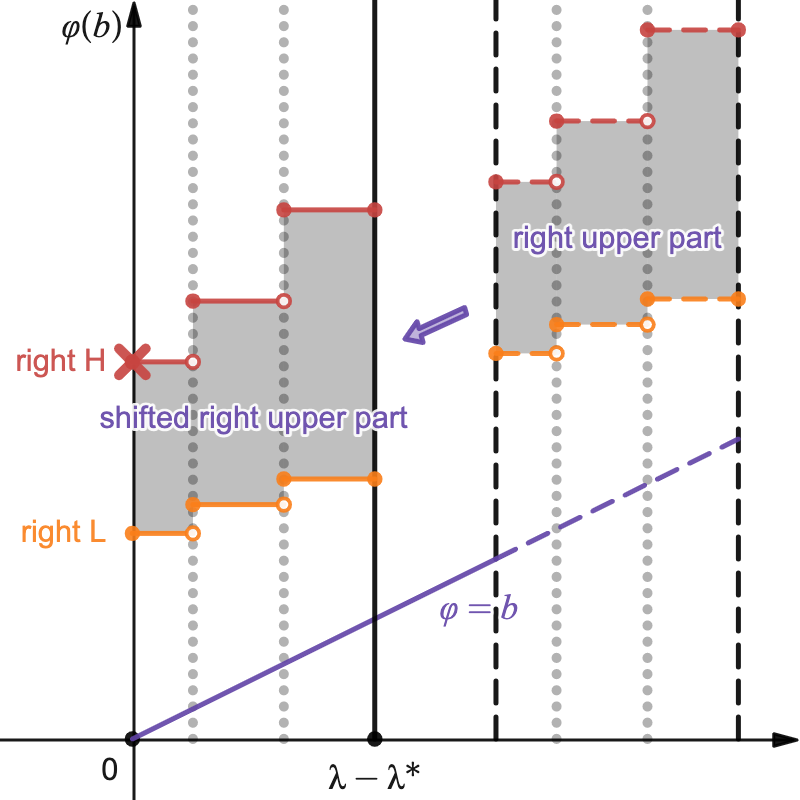}}
    \captionof{figure}{The {\halve} reduction \\
    (A baby version of \Cref{fig:halve})}
    \label{fig:intro:halve}
\end{minipage}
\hfill
\begin{minipage}{.49\textwidth}
    \centering
    \subfloat[\label{fig:intro:AD:input}
    {The input with a {\bf real jump}}]{\includegraphics[width = .79\linewidth]{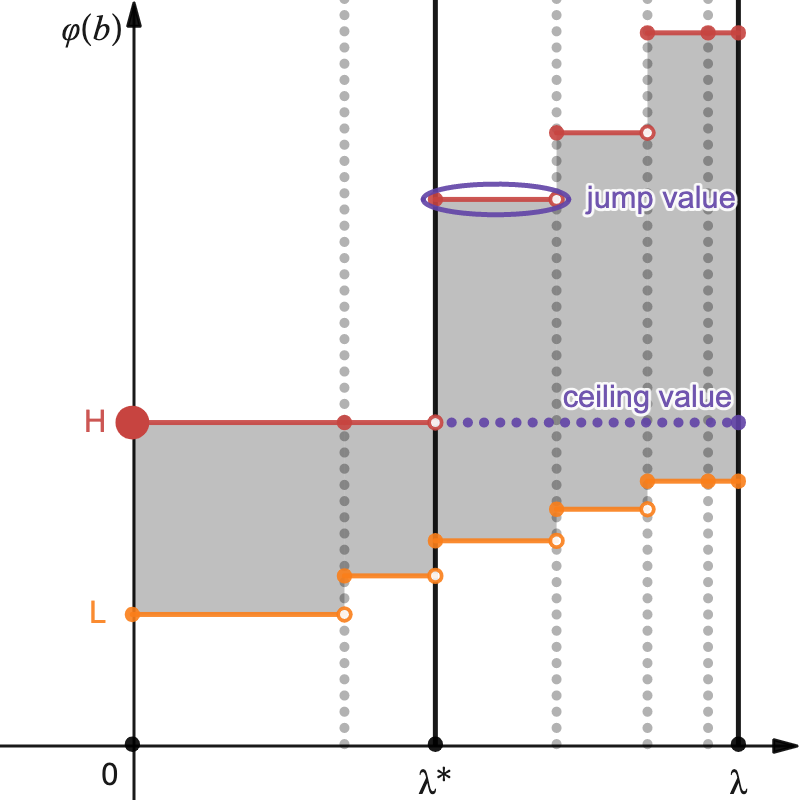}} \\
    \subfloat[\label{fig:intro:AD:ascend}
    The {\em ascended} candidate output]{\includegraphics[width = .79\linewidth]{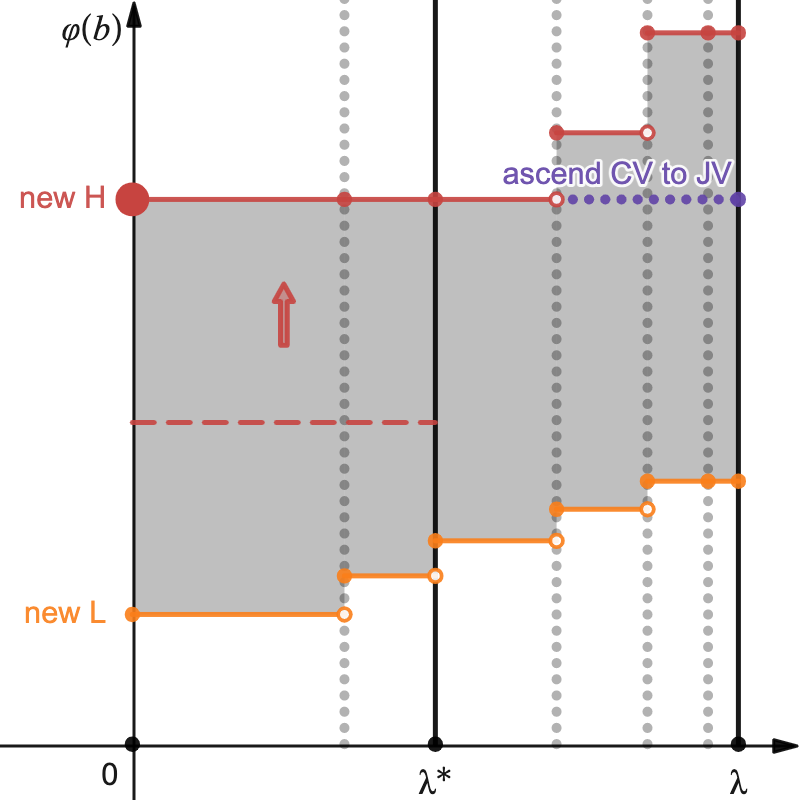}} \\
    \subfloat[\label{fig:intro:AD:descend}
    The {\em descended} candidate output]{\includegraphics[width = .79\linewidth]{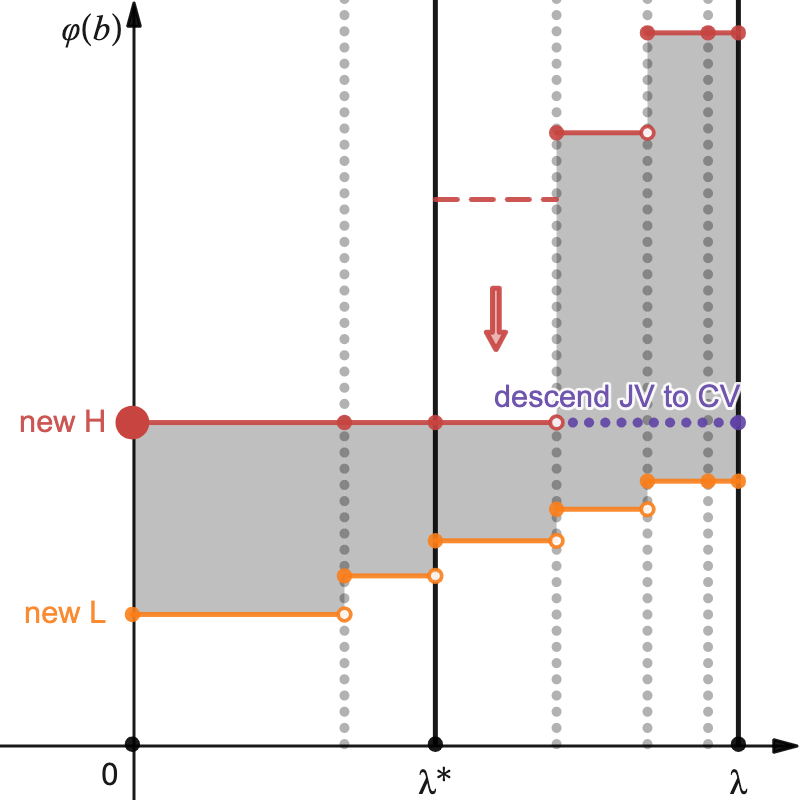}}
    \captionof{figure}{The {\AD} reduction \\
    (A baby version of \Cref{fig:AD})}
    \label{fig:intro:AD}
\end{minipage}
\end{figure}
\clearpage}

Our main reduction aims to reduce the number of pieces of the high-value bidder $H$'s mapping $\varphi_{H}$.
(After the \blackref{alg:discretize} reduction, this mapping has {\em finite} many pieces. In particular, the one in the worst-case instance has one {\em single} piece (\Cref{fig:intro:worst_case}), namely a constant mapping.)
To this end, we look at the first {\em jump point} of this mapping $\varphi_{H}$.
Furthermore, we distinguish two types of jumps, {\bf pseudo jumps} versus {\bf real jumps}, and adopt two different reductions respectively.

% (\Cref{fig:intro:halve,fig:intro:AD}).
% The main part of the reduction is to reduce the number of pieces for the high-value bidder's bid-to-value mapping. 
% Since it is a bounded piecewise constant function, it only has finite many pieces. In the target worst-case instance, the high-value bidder has a constant value, namely its bid-to-value mapping only has one single piece(see \Cref{fig:intro:worst_case}). 
% To this end, we look at the first jump point of the high-value bidder's bid-to-value mapping. 
%Thanks to the discretization reduction, we only have finite many of them.
% We distinguish two types of jumps: pseudo jump and real jump as depicted in \Cref{fig:intro:halve,fig:intro:AD}. We use different reduction schemes for different types.

For a \blackref{halve_jump} (\Cref{fig:intro:halve:input}),
the after-jump parts of all mappings $\bvarphi$ are {\em universally} higher than the before-jump parts, so the entire instance naturally divides into the {\em left-lower} part versus the {\em right-upper} part (\Cref{fig:intro:halve:left,fig:intro:halve:right}).
We prove that at least one part induces a worse-or-equal {\PoA} than the given instance.\footnote{In particular, for the right-upper part, we reapply the \blackref{alg:translate} reduction to make the bid space start from $0$.}
This is another win-win reduction and will be called \blackref{alg:halve}.

% For a pseudo jump (\Cref{fig:intro:halve}), the entire mappings after the jump is above that before the jump. Thus, the instance is naturally divided into the left-lower part versus the right-upper part. We prove that at least one of them  has a worse-or-equal {\PoA} than the original one. We call this \blackref{alg:halve} reduction, which is another win-win type reduction. For the right-upper part, we apply a \blackref{alg:translate} reduction so that its bid space also starts from $0$.

For a \blackref{AD_jump} (\Cref{fig:intro:AD:input}), the after-jump parts (of some mappings) still interlace with the before-jump parts.
We would modify the first two pieces of the high-value mapping $\varphi_{H}$ to reduce the number of pieces by one.
I.e., we either {\em ascend} the first piece to the value of the second piece (\Cref{fig:intro:AD:ascend}) or {\em descend} the second piece to the value of the first piece (\Cref{fig:intro:AD:descend}).
We show that at least one modification gives a worse-or-equal {\PoA} than the given instance. This is again a win-win reduction and will be called \blackref{alg:AD}.

% For a real jump (\Cref{fig:intro:AD}), the mappings still interlace with each other before and after the jump. In this case, we simply make the first two pieces of the high-value bidder's bid-to-value mapping equal to each other so that we reduce the number of pieces by one.
% We introduce two reductions: Ascend (to increase the first piece to the value of the second one, see \Cref{fig:intro:AD:ascend}) and Descend (to decrease the second piece to the value of the first one, see \Cref{fig:intro:AD:descend}). We also prove that at least one of these two instances has a worse-or-equal {\PoA} than the original one.

To conclude, whether a \blackref{halve_jump} or a \blackref{AD_jump}, we obtain a new instance that (i)~has a worse-or-equal {\PoA} and (ii)~the high-value mapping $\varphi_{H}$ has one fewer piece.
Clearly, we can {\em iterate} the entire process until a constant mapping $\varphi_{H}(b) \equiv v_{H}$, i.e., one {\em single} piece.
Then, the \blackref{alg:collapse} reduction can replace all the other bidders $\{B_{k}\}_{k \neq H}$ by one pseudo bidder $L$, hence a two-bidder pseudo instance $H \otimes L$ that has the same shape as the worst case (\Cref{fig:intro:twin}), as desired.

% In both cases, we transform a given instance to a new one with a worse-or-equal {\PoA} and the number of pieces is decreased by at least one. Then we can apply the reductions to the new instance and further reduce the number of pieces. Finally, there is only one piece and the high-value bidder's value becomes a constant. Then further apply the collapse reduction and all the other bidders become  a single pseudo bidder. We reach the shape of worst case instance as in \Cref{fig:intro:worst_case}.

We stress again that the above description is greatly simplified and thus just roughly accurate -- The high-level idea is delivered, but most technical details are hidden.

\subsection*{Solving the functional optimization (\Cref{sec:UB})}

After the above reductions, we get a nice-looking instance: (i)~one bidder $H$ with a fixed high value $\varphi_{H}(b) \equiv v_{H}$ and (ii)~infinite  $n \to +\infty$ identical low-value bidders $\{\sqrt[n]{L}\}^{\otimes n}$, or one pseudo bidder $L$ in our extended language.
Towards the worst case, it remains to determine the pseudo mapping $\varphi_{L}(b)$, or essentially, the high-value bidder $H$'s bid distribution.

This is accomplished by standard tools from the calculus of variations.
Via the Euler-Lagrange equation \cite{GS20}, we formulate the {\em worst-case} bid distribution $H$ as the solution to an ordinary differential equation (ODE).
% Then further use standard calculus to determine the remaining parameters to get the optimal value.
Luckily, this ODE admits a closed-form solution, and ``coincidentally'' the tight bound $1 - 1 / e^{2}$ is pretty nice-looking.
% \red{If one only looks for a numerical solution, nothing special there.}

% \red{It is surprising that as a centuries-old theory, Calculus of Variations has not been widely used in algorithmic game theory. We hope to see more applications in future, for which the techniques developed in this work (\Cref{sec:UB}) may be helpful.}

\subsection*{Comparison with previous techniques}

On the Price of Anarchy in auctions, the canonical approach is the {\em smoothness framework} proposed by Roughgarden \cite{R15} and developed by Syrgkanis and Tardos \cite{ST13}.
The past decade has seen an abundance of its applications and extensions (cf.\ the survey \cite{RST17}).

However, the smoothness framework has an intrinsic restriction -- It focuses on the structure of an auction game BUT ignores the {\em independence}, i.e., both the independence of value distributions $\bV = \{V_{i}\}_{i \in [n]}$ and the independence of strategies $\bs = \{s_{i}\}_{i \in [n]}$.
Thus, although giving an arsenal of tools for proving {\em lower bounds}, this approach seems hard to access {\em tight bounds} for some auctions. In particular for the first-price auction, the smoothness-based bound of $1 - 1 / e \approx 0.6321$ obtained by Syrgkanis and Tardos \cite{ST13} is tight when {\em correlated} distributions are allowed. Hence, this bound cannot be improved without taking the independence into account. 

The work by Hoy, Taggart, and Wang \cite{HTW18} is one of the very few follow-ups that transcend the smoothness framework.\footnote{To the best of our knowledge, \cite{HTW18} is the only work for Bayesian {\em single-item} auctions that transcends the smoothness framework. Other such works for {\em multi-item} auctions are thoroughly discussed in \cite[Section~8.3]{RST17}.}
For the first-price auction, they show an improved lower bound of $\approx 0.7430$, but this bound is still not tight.
Technically, the improvement stems from an extension of the smoothness framework that leverages the independence.

In sum, towards {\em tight bounds} for the first-price auction and the beyond, the primary consideration is:
What is the consequence of the {\em independence} of values $\bv = (v_{i})_{i \in [n]} \sim \bV$ and strategies $\bs$?
Our answer is, the bids $\bs(\bv) = (s_{i}(v_{i}))_{i \in [n]}$ are also independent and follow a product distribution $\bB = \{B_{i}\}_{i \in [n]}$. This is the starting point of our overall approach.

In the literature, ALL prior works (within/beyond the smoothness framework) follow the proof paradigm: ``searching over value distributions $\bV \in \bbV$ and equilibrium strategies $\bs \in \bbBNE(\bV)$ plus deviation strategies thereof $\bs' \neq \bs$'' \cite[Section~8.3]{RST17}. Instead, we present the FIRST different proof paradigm: ``searching over (valid) bid distributions $\bB = \bB(\bV,\, \bs) \in \Bvalid$ that are resulted from certain value-strategy tuples $(\bV,\, \bs) \in \bbV \times \bbBNE$''.

In our main proof, we step by step narrow down the search space of the worst-case instance, by showing stronger and stronger conditions it must satisfy, and completely capture the worst-case instance eventually.
En route, an abundance of tools is developed, which may find applications in the future.
% which may be of independent interest.
It is conceivable that our approach can be adapted to other auctions.
So, we hope that this would develop into a complement to the smoothness framework, especially in the setting of independent distributions.
%We hope that this would develop into an equally powerful one as the smoothness framework \cite{R15,ST13}, each of which complements the other depending on whether or not the independence is imposed.
% ary approach in addition

% Our proof is a first principle approach, namely directly characterizing the worst-case instance and equilibrium by the definition of Price of Anarchy. To this end, 

% \subsection*{Discussion}
% overbid
% \Cref{sec:beyond}

\newpage

\section{Structural Results}
\label{sec:structure}

This section shows a number of structural results on {\FirstPriceAuction} and {\BayesNashEquilibrium}. Some results or their analogs may already appear in the literature \cite{SZ90,L96,MR00a,MR00b,JSSZ02,MR03,HKMN11,CH13}. However, we still formalize and reprove them for completeness.
Provided these structural results, we can reformulate the {\PriceofAnarchy} problem in terms of bid distributions rather than value distributions.

To sell a single indivisible item to one of $n \geq 1$ bidders, each bidder $i \in [n]$ is asked to submit a non-negative bid $b_{i} \geq 0$. A generic auction $\calA = (\alloc,\, \pays)$ is given by its allocation rule $\alloc(\bb): \R^{n} \mapsto [n]$ and payment rule $\pays(\bb) = (\pay_{i}(\bb))_{i \in [n]}: \R^{n} \to \R^{n}$, both of which can be {\em randomized}.
Concretely, the item is allocated to {\em the} bidder $\alloc(\bb) \in [n]$, and $\pay_{i}(\bb)$'s are the payments of bidders $i \in [n]$.

Rigorously, {\FirstPriceAuction} is not a single auction but a family of auctions, each of which has a specific allocation rule that obeys the \blackref{pro:allocation}/\blackref{pro:payment} principles.

\begin{definition}[{\FirstPriceAuctions}]
\label{def:fpa}
An $n$-bidder single-item auction $\calA = (\alloc,\, \pays)$ is called a {\FirstPriceAuction} when its allocation rule $\alloc(\bb)$ and payment rule $\pays(\bb) = (\pay_{i}(\bb))_{i \in [n]}$ satisfy that:
\begin{itemize}
    \item \term[{\bf first price allocation}]{pro:allocation}{\bf:}
    Let $X(\bb) \eqdef \argmax \big\{b_{i}: i \in [n]\big\}$ be the set of first-order bidders. When such bidders are not unique $|X(\bb)| \geq 2$, allocate the item to one of them $\alloc(\bb) \in X(\bb)$, through a (randomized) {\em tie-breaking} rule given by the auction $\calA$ itself on this bid profile $\bb$. When such a bidder is unique $|X(\bb)| = 1$, allocate the item to him/her $\alloc(\bb) \equiv X(\bb)$.

    \item \term[{\bf first price payment}]{pro:payment}{\bf:} The allocated bidder $\alloc(\bb)$ pays his/her own bid, while non-allocated bidders $[n] \setminus \{\alloc(\bb)\}$ each pay nothing. Formally, $\pay_{i}(\bb) = b_{i} \cdot \indicator\big(i = \alloc(\bb)\big)$ for $i \in [n]$.
\end{itemize}
Without ambiguity, the allocation rule $\alloc(\bb)$ itself will be called the considered {\FirstPriceAuction}, since it controls the payment rule $\pays(\bb)$.
Denote by $\bbFPA \ni \alloc$ the space of all {\FirstPriceAuctions}.
\end{definition}

The only undetermined part of a {\FirstPriceAuction} is the tie-breaking rule, which is subtle and plays an important role in our analysis.
To clarify it, let us rigorously define the interim allocation/ utility formulas (\Cref{def:interim_utility}) and {\BayesNashEquilibrium} (\Cref{def:bne_formal}).

% \violet{Thus we define it explicitly here.} Based on \Cref{def:fpa}, let us rigorously define the interim allocation/utility formulas (\Cref{def:interim_utility}) and {\BayesNashEquilibrium} (\Cref{def:bne_formal}).

\begin{definition}[Interim allocations/utilities]
\label{def:interim_utility}
Given a {\FirstPriceAuction} $\alloc \in \bbFPA$, value distributions $\bV = \{V_{i}\}_{i \in [n]}$,  and a strategy profile $\bs = \{s_{i}\}_{i \in [n]}$:
\begin{itemize}
    \item Each interim allocation formula $\alloc_{i}(b) \eqdef \Pr_{\alloc,\, \bs_{-i}(\bv_{-i})} \big[ i = \alloc\big(b,\, \bs_{-i}(\bv_{-i})\big) \big]$ for $b \geq 0$, over the randomness of the others' bids $\bs_{-i}(\bv_{-i})$ and the considered {\FirstPriceAuction} $\alloc \in \bbFPA$ itself.

    \item Each interim utility formula $u_{i}(v,\, b) \eqdef (v - b) \cdot \alloc_{i}(b)$ for $v \in \supp(V_{i})$ and $b \geq 0$.
\end{itemize}
\end{definition}

\begin{definition}[{\BayesNashEquilibria}]
\label{def:bne_formal}
Following \Cref{def:interim_utility}, the strategy profile $\bs = \{s_{i}\}_{i \in [n]}$ reaches a {\BayesNashEquilibrium} for this distribution-auction tuple, namely $\bs \in \bbBNE(\bV,\, \alloc)$, when: For each bidder $i \in [n]$, any possible value $v \in \supp(V_{i})$, and any deviation bid $b \geq 0$,
\begin{align*}
    \Ex_{s_{i}}\big[\, u_{i}(v,\, s_{i}(v)) \,\big] ~\geq~ u_{i}(v,\, b).
\end{align*}
\end{definition}

\Cref{def:bne_formal} means for any value $v \in \supp(V_{i})$, nearly all the equilibrium bids $b \in \supp(s_{i}(v))$ EACH maximize the interim utility formula $u_{i}(v,\, b)$, except a {\em zero-measure} set.
By excluding these zero-measure sets from the bid supports $\supp(s_{i}(v))$, for $i\in [n]$, the modified strategy profile still satisfies the definition of {\BayesNashEquilibrium}. Hereafter, without loss of generality, we assume that EVERY equilibrium bid $b \in \supp(s_{i}(v))$ maximizes the formula $u_{i}(v,\, b)$.

The following existence result can be concluded from \cite{L96}.

\begin{theorem}[{{\BayesNashEquilibrium}~\cite{L96}}]
\label{thm:exist_bne}
Given any value distribution $\bV = \{V_{i}\}_{i \in [n]}$, there exists at least one {\FirstPriceAuction} $\alloc \in \bbFPA$ that admits an equilibrium $\bbBNE(\bV,\, \alloc) \neq \emptyset$.
\end{theorem}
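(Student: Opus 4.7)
My plan is to establish existence by a discretization-and-limit argument with endogenous tie-breaking, in the spirit of Simon--Zame and Jackson--Swinkels. The key observation is that \Cref{thm:exist_bne} only promises \emph{some} tie-breaking rule works, so the rule $\alloc$ may be constructed after the fact to match the singularities of the limiting strategies; this flexibility sidesteps the usual payoff-discontinuity obstacles that plague first-price auctions.

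I would first truncate each unbounded $V_{i}$ to a common support $[0, M]$ and recover the general case at the end via tightness and diagonalization along $M \to \infty$. With boundedness in hand, discretize the bid space to a grid $G_{k} = \{0,\, 1/k,\, 2/k,\, \dots,\, \lceil kM \rceil / k\}$ and force every bidder to bid in $G_{k}$. Combined with any fixed auxiliary tie-breaking rule (e.g.\ uniform over $X(\bb)$), this yields a Bayesian game with finitely many actions; an equilibrium $\bs^{(k)}$ exists by Glicksberg's fixed-point theorem applied to the mixed-strategy simplex (or by Nash's theorem on the reduced interim game). The induced independent bid distributions $B_{i}^{(k)}$ are tight on $[0, M]$, so Prokhorov's theorem together with a diagonal subsequence extracts a weak limit $(v_{i},\, s_{i}^{(k)}(v_{i})) \Rightarrow (v_{i},\, s_{i}^{\infty}(v_{i}))$ whose value-marginal is $V_{i}$ and whose bid-marginals $B_{i}^{\infty}$ remain independent across $i \in [n]$.

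The main obstacle is verifying $\bs^{\infty} \in \bbBNE(\bV,\, \alloc)$ for some $\alloc \in \bbFPA$: the interim allocation $\alloc_{i}(b)$ is discontinuous at any bid $b^{*}$ where two or more $B_{i}^{\infty}$ place coinciding atoms, so the finite-game best-response inequalities do not pass to the limit under a generic tie-breaking rule. The remedy is to define the tie-breaking rule \emph{from the sequence itself}: after passing to a further subsequence, at each tied bid profile $\bb$ with $|X(\bb)| \geq 2$ set the probability that $\alloc(\bb) = i$ equal to the limiting share of tied mass that was allocated to $i$ in game $k$ (which exists subsequentially by compactness of the simplex). With this choice, the interim formula $\alloc_{i}(b)$ becomes the correct weak limit of the finite-game analogues; the best-response inequalities $\Ex[\, u_{i}^{(k)}(v,\, s_{i}^{(k)}(v)) \,] \geq u_{i}^{(k)}(v,\, b)$ then pass to the limit for every $b \in \bigcup_{k} G_{k}$, and density together with the constructed tie-breaking extends the inequality to all $b \geq 0$. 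The delicate step is precisely this extension at atom bids, where only the endogenous choice of $\alloc$ makes the deviation payoff semi-continuous in the required direction; once this is checked, \Cref{def:bne_formal} is satisfied by $\bs^{\infty}$ under the constructed $\alloc \in \bbFPA$, completing the proof modulo the initial truncation.
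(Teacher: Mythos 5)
Your strategy is sound, and it is worth noting that the paper itself does not prove \Cref{thm:exist_bne} at all: it imports the result from \cite{L96}, remarking only that suitable tie-breaking is needed. What you sketch is the standard endogenous-sharing-rule route of \cite{SZ90,JSSZ02} (discretize, extract weak limits of the distributional strategies, and let the tie-breaking rule be determined by the limiting allocation behavior), which is exactly why the theorem is phrased as ``there exists at least one $\alloc \in \bbFPA$'' rather than for a fixed rule; so your route is consistent with, and essentially of the same family as, the cited literature rather than with anything internal to the paper. Two steps in your write-up need sharpening before this is a proof. First, the tie-breaking rule cannot be defined pointwise as ``the limiting share of tied mass allocated to $i$ at the profile $\bb$ in game $k$'': in each finite game the event that the bid profile equals a given $\bb$ typically has probability zero, so that ratio is $0/0$. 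The correct construction takes weak limits of the joint measures over (bid profile, allocation) and defines the sharing rule by disintegration (equivalently, as limits of conditional allocation shares over shrinking neighborhoods of the finitely or countably many bid levels at which two or more of the limit distributions $B_i^{\infty}$ carry atoms), and one must check the resulting rule still allocates within $X(\bb)$ so that it is a legitimate member of $\bbFPA$; you correctly flag this as the delicate point, but the fix is the Simon--Zame device, not a pointwise limit. Second, the truncation at $M$ changes the game (the value distributions themselves), so the diagonalization over $M$ is a second limit of equilibria of \emph{different} games; it goes through because equilibrium bids can be taken dominated by values, giving tightness uniformly in $M$, and because bidder $i$'s interim problem depends on the opponents only through $\calB_{-i}$, which converges weakly --- but this needs to be said, or avoided altogether by letting the bid cap grow with the grid so that only one limit is taken. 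With those two repairs your outline delivers the theorem; as it stands it is a correct plan with the genuinely hard measure-theoretic step deferred rather than executed.
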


\begin{comment}

\yj{organization of this section.}

\blue{This existence result is the basics of our later discussions

As mentioned in \red{(overview)}, ???. In particular, we will leverage  to prove our universal approximation result.

En route, we gradually characterize the given exact equilibrium $\bs \in \bbBNE(\bV,\, \alloc)$ and conclude with a {\em bid-based} equivalence condition for {\BayesNashEquilibrium} (\Cref{thm:bne}), rather than the original {\em value-/utility-based} statement (\Cref{def:bne_formal}).
Using this equivalent condition, it will be trivial how to derive the universal ``$\epsilon$-approximate'' strategy $\bs^{*} = \{s_{i}^{*}\}_{i \in [n]}$ by perturbing the $\alloc$-exclusive strategy $\bs = \{s_{i}\}_{i \in [n]}$.}

First of all, \Cref{lem:optimal_utility} restates \Cref{def:bne_formal}, but we include it for completeness.

\begin{lemma}[Utility optimality]
\label{lem:optimal_utility}
For $i \in [n]$ and any possible value $v \in \supp(V_{i})$, the equilibrium bid $s_{i}(v)$ optimizes the interim utility formula $u_{i}(v,\, s_{i}(v)) = \sup\big\{ u_{i}(v,\, b): b \geq 0\big\}$, almost surely.
\end{lemma}

\begin{proof}
By the condition (\Cref{def:bne_formal}) for making $\bs \in \bbBNE(\bV,\, \alloc) \neq \emptyset$ an exact equilibrium.
\end{proof}

\end{comment}

\subsection{The forward direction: {\BayesNashEquilibria}}
\label{subsec:bne}

This subsection presents a bunch of lemmas that characterize an equilibrium $\bs \in \bbBNE(\bV,\, \alloc) \neq \emptyset$.
To this end, let us introduce some helpful notations, which will be adopted throughout this paper.
%\begin{mdframed}
%\begin{minipage}{\textwidth}
\begin{itemize}%[leftmargin = 1.55em]
    \item $\bB = \{B_{i}\}_{i \in [n]}$ denotes the equilibrium bid distributions $\bs(\bv) = (s_{i}(v_{i}))_{i \in [n]} \sim \bB$.
    % 's product distribution.
    % , for $\bv = (v_{i})_{i \in [n]} \sim \bV$.
    % According to \Cref{lem:bid_independence}, this IS a product distribution $\bB = \{B_{i}\}_{i \in [n]}$.

    % \item Denote by $\bB = \{B_{i}\}_{i \in [n]}$ the equilibrium bid profile $\bs(\bv)$'s distribution, for $\bv = (v_{i})_{i \in [n]} \sim \bV$.
    % According to \Cref{lem:bid_independence}, this IS a product distribution $\bB = \{B_{i}\}_{i \in [n]}$.

    \item $\calB(b) = \prod_{i \in [n]} B_{i}(b)$ denotes the first-order bid distribution $\max(\bs(\bv)) \sim \calB$.

    \item $\calB_{-i}(b) = \prod_{k \in [n] \setminus \{i\}} B_{k}(b)$ denotes the competing bid distribution of each bidder $i \in [n]$.

    \item $\gamma \eqdef \inf(\supp(\calB))$ and $\lambda \eqdef \sup(\supp(\calB))$ denote the ``infimum''/``supremum'' first-order bids, respectively.
    % Clearly we always have $0 \leq \gamma \leq \max(\bs(\bv)) \leq \lambda \leq +\infty$.
    Without ambiguity, we call $v,\, b < \gamma$ the low values/bids, $v,\, b = \gamma$ the boundary values/bids, and $v,\, b > \gamma$ the normal values/bids. As their names suggest:
    (i)~low bids $b < \gamma$ each gives a zero winning probability, so these bids are less important;
    (ii)~normal bids $b > \gamma$ are the most common ones and we will show that they behave nicely; and
    (iii)~boundary bids $b = \gamma$ are tricky and we will deal with them separately.
    
    \item $\calU_{i}(v) \eqdef \max \big\{u_{i}(v,\, b): b \geq 0\big\}$ denotes the optimal utility formula of each bidder $i \in [n]$. This is well defined on the value support $v \in \supp(V_{i})$, because (\Cref{def:bne_formal}) every equilibrium bid $s_{i}(v)$ does optimize the interim utility formula $u_{i}(v,\, s_{i}(v)) = \calU_{i}(v)$.
\end{itemize}
%\end{minipage}
%\end{mdframed}

% Without loss of generality, we would assume $n \geq 2$. (The other case $n = 1$ is trivial, where the unique bidder always bids zero in any {\FirstPriceAuction} $\alloc^{*} \in \bbFPA$.)

Each bidder's interim allocation $\alloc_{i}(b)$ relies on his/her competing bid distribution $\calB_{-i}(b)$, plus the tie-breaking rule of the considered auction $\alloc \in \bbFPA$.
When $\calB_{-i}(b)$ is a continuous distribution,
they are identical $\alloc_{i}(b) \equiv \calB_{-i}(b)$,
because the probability of being ONE of the first-order bidders $\Prx_{\bb} [X(\bb) \ni i]$ is exactly the probability of being the ONLY one $\Prx_{\bb} [X(\bb) = \{i\}]$.
% since it is the probability that he is the first price bidder.
In general, we can at least obtain (\Cref{lem:allocation}) the following properties for an interim allocation $\alloc_{i}(b)$.

\begin{lemma}[Interim allocations]
\label{lem:allocation}
For each interim allocation formula $\alloc_{i}(b)$:
\begin{enumerate}[font = {\em\bfseries}]
    \item\label{lem:allocation:1}
    $\alloc_{i}(b)$ is weakly increasing for $b \geq 0$.

    \item\label{lem:allocation:2}
    It is point-wise lower/upper bounded as $\sup \big\{ \calB_{-i}(t): t < b\big\} \leq \alloc_{i}(b) \leq \calB_{-i}(b)$ for $b \geq 0$.

    \item\label{lem:allocation:3}
    $\alloc_{i}(b) = 0$ for any low bid $b < \gamma$, AND $\alloc_{i}(b) > 0$ for any normal bid $b > \gamma$.
    % , where $\gamma = \inf(\supp(\calB))$.
\end{enumerate}
\end{lemma}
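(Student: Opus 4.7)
The plan is to reformulate the interim allocation through the decomposition
\[
\alloc_{i}(b) = \Pr\!\big[\max(\bs_{-i}) < b\big] + \Pr\!\big[\max(\bs_{-i}) = b\big] \cdot q_{i}(b),
\]
where $q_{i}(b) \in [0, 1]$ denotes the auction-induced probability that bidder $i$ wins among the first-order bidders in the event of a tie at bid $b$. Combined with the factorization $\calB(t) = B_{i}(t)\cdot \calB_{-i}(t)$ enforced by independence of the bids, this explicit expression will be the common engine behind all three parts.

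For Part~\ref{lem:allocation:1}, I fix $b' > b$ and partition $\{\max(\bs_{-i}) < b'\}$ into the disjoint pieces $\{\max(\bs_{-i}) < b\}$, $\{\max(\bs_{-i}) = b\}$, and $\{b < \max(\bs_{-i}) < b'\}$; substituting into the decomposition and using $q_{i} \in [0,1]$ yields
\[
\alloc_{i}(b') - \alloc_{i}(b) \geq \Pr[\max(\bs_{-i}) = b](1 - q_{i}(b)) + \Pr[b < \max(\bs_{-i}) < b'] + \Pr[\max(\bs_{-i}) = b']\, q_{i}(b') \geq 0.
\]
For Part~\ref{lem:allocation:2}, I read the upper bound $\alloc_{i}(b) \leq \calB_{-i}(b)$ off by inserting the extremal value $q_{i}(b) = 1$, and obtain the lower bound from $\alloc_{i}(b) \geq \Pr[\max(\bs_{-i}) < b] = \lim_{t \uparrow b}\calB_{-i}(t) = \sup\{\calB_{-i}(t) : t < b\}$, where the last two equalities use left-continuity and monotonicity of the CDF $\calB_{-i}$.

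For Part~\ref{lem:allocation:3}, the direction $b > \gamma \Rightarrow \alloc_{i}(b) > 0$ is short: any $t \in (\gamma, b)$ satisfies $\calB(t) > 0$ (by $\gamma = \inf\supp(\calB)$), and the factorization $\calB(t) = B_{i}(t)\cdot \calB_{-i}(t)$ forces $\calB_{-i}(t) > 0$, so Part~\ref{lem:allocation:2} gives $\alloc_{i}(b) \geq \calB_{-i}(t) > 0$. For the converse $b < \gamma \Rightarrow \alloc_{i}(b) = 0$, the identity $\calB(b) = 0 = B_{i}(b) \cdot \calB_{-i}(b)$ splits into two subcases: (a)~$\calB_{-i}(b) = 0$, which Part~\ref{lem:allocation:2} immediately closes via $\alloc_{i}(b) \leq \calB_{-i}(b) = 0$; and (b)~the degenerate ``leakage'' configuration $B_{i}(b) = 0 < \calB_{-i}(b)$, in which bidder $i$'s own support sits above $b$ while the competitors' joint mass leaks below $\gamma$. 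To close subcase~(b), I will invoke the BNE optimality of $\bs$: a positive $\calB_{-i}$-mass below $\gamma$ would allow some value type of bidder $i$ to undercut into the leaked region for a strictly profitable deviation, contradicting \Cref{def:bne_formal}.

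The main obstacle will be precisely subcase~(b): the purely measure-theoretic manipulations that dispatch Parts~\ref{lem:allocation:1}, \ref{lem:allocation:2}, and subcase~(a) do not rule out the leakage configuration on their own, and one must bring the equilibrium condition $\bs \in \bbBNE(\bV, \alloc)$ --- rather than only the product structure of $\bB$ --- into play.
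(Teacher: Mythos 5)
Your decomposition $\alloc_{i}(b) = \Pr[\max(\bs_{-i}) < b] + \Pr[\max(\bs_{-i}) = b]\cdot q_{i}(b)$ and the resulting arguments for \Cref{lem:allocation:1}, \Cref{lem:allocation:2}, and the direction $b > \gamma \Rightarrow \alloc_{i}(b) > 0$ of \Cref{lem:allocation:3} are correct, and they are in substance the paper's own argument: the paper treats these as immediate from the first-price allocation rule of \Cref{def:fpa}, with your two extreme values $q_{i} \in \{0, 1\}$ corresponding to its ``always favor / always disfavor'' remark.

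The genuine problem is your plan for subcase~(b) of the first half of \Cref{lem:allocation:3}. You propose to rule out the configuration $B_{i}(b) = 0 < \calB_{-i}(b)$ (for some $b < \gamma$) by exhibiting a strictly profitable undercut for some value type of bidder $i$, contradicting \Cref{def:bne_formal}. But profitable relative to what? In this configuration bidder $i$ never bids below $\gamma$, and the value types he actually possesses may already earn, at their equilibrium bids $\geq \gamma$, at least as much as any undercut into the leaked region, so no contradiction arises. The paper's own \Cref{exp:intro:2} realizes exactly this situation at equilibrium: Alice ($v_{1} \equiv 2$, $s_{1} \equiv 1$) against truthful Bob ($v_{2} \sim U[0,1]$) gives a first-order bid that is a point mass at $1$, hence $\gamma = 1$, while $\calB_{-\mathrm{Alice}}(b) = b > 0$ for every $b \in (0,1)$; an undercut to $b < 1$ earns Alice $(2 - b)\cdot b < 1$, strictly worse than her equilibrium utility $1$. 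Worse still for your plan, in this equilibrium $\alloc_{\mathrm{Alice}}(b) = b > 0$ for low bids $b \in (0, \gamma)$, so in subcase~(b) the conclusion you are trying to establish is itself violated; no argument can close that subcase.

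What survives is exactly your subcase~(a): $\alloc_{i}(b) = 0$ for $b < \gamma$ whenever $\calB_{-i}$ places no mass below $\gamma$, which already follows from \Cref{lem:allocation:2}; in particular this proviso holds automatically for every bidder $i$ whose own distribution $B_{i}$ has mass below $\gamma$, since $\calB(t) = B_{i}(t)\cdot\calB_{-i}(t) = 0$ with $B_{i}(t) > 0$ forces $\calB_{-i}(t) = 0$ for all $t < \gamma$ (and at most one bidder can be the ``leakage target'' you describe). Note that the paper's one-line justification of \Cref{lem:allocation:3} (``$\gamma$ is the infimum first-order bid'') has the same blind spot you ran into: the first-order bid constrains the maximum \emph{including} bidder $i$'s own equilibrium bid, not the opponents' maximum faced by a deviation of bidder $i$. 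So the honest form of the first half of \Cref{lem:allocation:3} carries the proviso ``$\calB_{-i}$ has no mass below $\gamma$'' (or restricts to the bidders for which this holds), and its downstream invocations, e.g.\ in \Cref{lem:dichotomy}, have to be checked against that weaker form rather than derived from a blanket identity via an equilibrium-deviation argument as you intend.
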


\begin{proof}
{\bf \Cref{lem:allocation:1,lem:allocation:2}} directly follows from the \blackref{pro:allocation} principle (\Cref{def:fpa}). Especially for {\bf \Cref{lem:allocation:2}}, suppose that (with a nonzero probability) bidder $B_{i}$ ties with someone else $[n] \setminus \{i\}$ for this first-order bid $s_{i}(v_{i}) = \max(\bs_{-i}(\bv_{-i})) = b$, then the auction always favors this bidder $B_{i}$ when ``$\alloc_{i}(b) = \calB_{-i}(b)$'', or always disfavor him/her when ``$\alloc_{i}(b) = \sup \big\{ \calB_{-i}(t): t < b\big\}$''.
{\bf \Cref{lem:allocation:3}} holds because $\gamma = \inf(\supp(\calB))$ is the infimum first-order bid $\max(\bs(\bv)) \sim \calB$.
\end{proof}

\Cref{lem:dichotomy} shows that both of the value space $v \in \supp(V_{i})$ and the bid space $b \geq 0$ of each bidder $i \in [n]$, are almost separated by the infimum first-order bid $\gamma = \inf(\supp(\calB))$.

\begin{lemma}[Bidding dichotomy]
\label{lem:dichotomy}
For each bidder $i \in [n]$:
\begin{flushleft}
\begin{enumerate}[font = {\em\bfseries}]
    \item\label{lem:dichotomy:1}
    For a low/boundary value $v \in \supp_{\leq \gamma}(V_{i})$, the optimal utility is zero $\calU_{i}(v) = 0$ and, almost surely over random strategy $s_{i}$, the equilibrium bid is low/boundary $s_{i}(v) \leq \gamma$.

    \item\label{lem:dichotomy:2}
    For a normal value $v \in \supp_{> \gamma}(V_{i})$, the optimal utility is nonzero $\calU_{i}(v) > 0$ and, almost surely over random strategy $s_{i}$, the equilibrium bid is boundary/normal $\gamma \leq s_{i}(v) < v$.
\end{enumerate}
\end{flushleft}
\end{lemma}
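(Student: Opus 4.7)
The two parts of \Cref{lem:dichotomy} are mirror images of each other, and both follow from combining \Cref{lem:allocation} (particularly the vanishing of $\alloc_{i}(b)$ for low bids and its strict positivity for normal bids) with the utility-optimality characterization of {\BayesNashEquilibrium} (\Cref{def:bne_formal}). The plan is to first pin down the value of $\calU_{i}(v)$, then use optimality of $s_{i}(v)$ to locate its support.

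For {\bf \Cref{lem:dichotomy:1}}, fix a value $v \leq \gamma$. Any bid $b < \gamma$ gives $\alloc_{i}(b) = 0$ by \Cref{lem:allocation:3}, hence zero interim utility; any bid $b \geq \gamma$ satisfies $v - b \leq 0$ while $\alloc_{i}(b) \geq 0$, so the interim utility is non-positive. Taking $b = \gamma$ and noting $v = \gamma$ is allowed shows the supremum is attained at $0$, hence $\calU_{i}(v) = 0$. Since every equilibrium bid optimizes $u_{i}(v,\cdot)$, we must have $u_{i}(v, s_{i}(v)) = 0$ almost surely; but any realization $s_{i}(v) > \gamma$ would force $v < s_{i}(v)$ and $\alloc_{i}(s_{i}(v)) > 0$ (again by \Cref{lem:allocation:3}), yielding a strictly negative utility. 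Hence $s_{i}(v) \leq \gamma$ almost surely.

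For {\bf \Cref{lem:dichotomy:2}}, fix a normal value $v > \gamma$. Pick any $b \in (\gamma,\, v)$; by \Cref{lem:allocation:3} we have $\alloc_{i}(b) > 0$, so $u_{i}(v,\,b) = (v-b)\cdot \alloc_{i}(b) > 0$, which gives $\calU_{i}(v) > 0$. Now we locate $s_{i}(v)$ by contradiction. First, if $s_{i}(v) \geq v$ with positive probability, then $(v - s_{i}(v)) \cdot \alloc_{i}(s_{i}(v)) \leq 0$ on that event, contradicting that every equilibrium bid achieves $\calU_{i}(v) > 0$; hence $s_{i}(v) < v$ almost surely. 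Second, if $s_{i}(v) < \gamma$ with positive probability, then $\alloc_{i}(s_{i}(v)) = 0$ on that event and the utility is zero, again contradicting optimality; hence $s_{i}(v) \geq \gamma$ almost surely.

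Both arguments are short and the only mild subtlety is the boundary bid $b = \gamma$, where the tie-breaking rule governs $\alloc_{i}(\gamma)$. The two-sided bound $\sup\{\calB_{-i}(t):t<\gamma\} \leq \alloc_{i}(\gamma) \leq \calB_{-i}(\gamma)$ from \Cref{lem:allocation:2} is enough to make the sign of $u_{i}(v,\gamma)$ unambiguous in each case, so no further case analysis on the tie-breaking rule is needed. I do not foresee any serious obstacle.
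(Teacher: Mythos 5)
Your proposal is correct and follows essentially the same route as the paper: use the allocation dichotomy of \Cref{lem:allocation} (zero allocation below $\gamma$, strictly positive above) to pin down $\calU_{i}(v)$, then invoke the optimality of every equilibrium bid to rule out bids outside $[0,\gamma]$ (resp.\ outside $[\gamma,v)$). The paper's proof is the same case analysis, with the only cosmetic difference that it exhibits the concrete underbid $b^{*}=(\gamma+v)/2$ where you take an arbitrary $b\in(\gamma,v)$.
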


\begin{proof}
Recall \Cref{def:interim_utility} for the interim utility formula $u_{i}(v,\, b) = (v - b) \cdot \alloc_{i}(b)$.

\vspace{.1in}
\noindent
{\bf \Cref{lem:dichotomy:1}.}
A low/boundary value $v \in \supp_{\leq \gamma}(V_{i})$ induces EITHER a low (under)bid $b < v \leq \gamma$ and a zero interim allocation/utility $\alloc_{i}(b) = u_{i}(v,\, b) = 0$, OR a truthful/over bid $b \geq v$ and a negative interim utility $u_{i}(v,\, b) \leq 0$. Especially, a normal bid $b > \gamma \geq v$ induces a nonzero interim allocation $\alloc_{i}(b) > 0$ and a strictly negative interim utility $u_{i}(v,\, b) < 0$. Therefore, the optimal utility is zero $\calU_{i}(v) = 0$ and the equilibrium bid must be low/boundary $s_{i}(v) \leq \gamma$.
{\bf \Cref{lem:dichotomy:1}} follows then.

\vspace{.1in}
\noindent
{\bf \Cref{lem:dichotomy:2}.}
A normal value $v \in \supp_{> \gamma}(V_{i})$ is able to gain a positive utility $u_{i}(v,\, b) > 0$; for example, underbidding $b^{*} = (\gamma + v) / 2 \in (\gamma,\, v)$ gives a nonzero allocation/utility $\alloc_{i}(b^{*}),\, u_{i}(v,\, b^{*}) > 0$.
So, the equilibrium bid $s_{i}(v)$ neither can be truthful/over $b \geq v$ (hence a negative utility $u_{i}(v,\, b) \leq 0$), nor can be low $b < \gamma$ (hence a zero allocation/utility $\alloc_{i}(b) = u_{i}(v,\, b) = 0$). That is, the optimal utility is nonzero $\calU_{i}(v) > 0$ and the equilibrium (under)bid must be boundary/normal $\gamma \leq s_{i}(v) < v$.
{\bf \Cref{lem:dichotomy:2}} follows then.
\end{proof}

\Cref{lem:bid_monotonicity} shows, in the normal value regime $v \in \supp_{> \gamma}(V_{i})$, the optimal utility formula $\calU_{i}(v)$ and the equilibrium strategy $s_{i}(v)$ are monotonic (cf.\ \cite[Lemma~3.9]{CH13} for a similar result).

\begin{lemma}[Bidding monotonicity]
\label{lem:bid_monotonicity}
For any two normal values $v,\, w \in \supp_{> \gamma}(V_{i})$ that $v > w$ of a bidder $i \in [n]$:
\begin{enumerate}[font = {\em\bfseries}]
    \item\label{lem:bid_monotonicity:1}
    The two optimal utilities are strictly monotonic $\calU_{i}(v) > \calU_{i}(w) > 0$.

    \item\label{lem:bid_monotonicity:2}
    The two random bids are weakly monotonic $s_{i}(v) \geq s_{i}(w) \geq \gamma$, almost surely over the random strategy $s_{i}$.
\end{enumerate}
\end{lemma}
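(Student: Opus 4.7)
The plan is to handle the two parts by the two standard tools available for this kind of single-crossing result: a single-deviation bound for Part 1, and a pairwise-exchange argument for Part 2. Both rely on \Cref{lem:dichotomy:2} plus the convention (stated right after \Cref{def:bne_formal}) that every bid in $\supp(s_{i}(v))$ actually attains $\calU_{i}(v)$. Note that $\calU_{i}(w) > 0$ and $s_{i}(w) \geq \gamma$ are already delivered by \Cref{lem:dichotomy:2}, so the real content is the strict inequality $\calU_{i}(v) > \calU_{i}(w)$ and the monotonicity $s_{i}(v) \geq s_{i}(w)$.

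For Part 1, I would bound $\calU_{i}(v)$ from below by letting the type-$v$ bidder \emph{deviate} to an equilibrium bid $b_{w} \in \supp(s_{i}(w))$. Because $b_{w}$ attains $\calU_{i}(w) > 0$, the factorization $(w-b_{w})\,\alloc_{i}(b_{w}) = \calU_{i}(w)$ forces both $b_{w} < w$ and $\alloc_{i}(b_{w}) > 0$. Then
\[
    \calU_{i}(v) \;\geq\; u_{i}(v,\, b_{w}) \;=\; (v - b_{w}) \cdot \alloc_{i}(b_{w}) \;>\; (w - b_{w}) \cdot \alloc_{i}(b_{w}) \;=\; \calU_{i}(w) \;>\; 0,
\]
where the strict step uses $v > w$ together with $\alloc_{i}(b_{w}) > 0$. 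This settles Part 1.

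For Part 2, I plan the classical exchange argument. Suppose, for contradiction, that with positive probability $s_{i}(v) < s_{i}(w)$; then one can pick $b_{v} \in \supp(s_{i}(v))$ and $b_{w} \in \supp(s_{i}(w))$ with $b_{v} < b_{w}$. The equilibrium optimality of the two bids yields
\[
    (v - b_{v})\,\alloc_{i}(b_{v}) \;\geq\; (v - b_{w})\,\alloc_{i}(b_{w}) \quad\text{and}\quad (w - b_{w})\,\alloc_{i}(b_{w}) \;\geq\; (w - b_{v})\,\alloc_{i}(b_{v}).
\]
Adding and simplifying gives $(v - w)(\alloc_{i}(b_{v}) - \alloc_{i}(b_{w})) \geq 0$, hence $\alloc_{i}(b_{v}) \geq \alloc_{i}(b_{w})$; combined with the weak monotonicity of $\alloc_{i}$ from \Cref{lem:allocation:1}, this forces $\alloc_{i}(b_{v}) = \alloc_{i}(b_{w})$. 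Since $\calU_{i}(w) > 0$ is attained at $b_{w}$, the common value is strictly positive, and dividing it out of each of the two displayed inequalities yields $b_{v} \leq b_{w}$ and $b_{w} \leq b_{v}$, contradicting $b_{v} < b_{w}$.

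The only real subtlety — and the expected ``main obstacle'' if any — is that strategies are \emph{mixed}, so the claim has to be ``almost surely'': one has to phrase the contradiction at the level of supports rather than realizations, which is exactly what the convention after \Cref{def:bne_formal} permits. A further minor point is that $\alloc_{i}$ may jump at $b = \gamma$ because of tie-breaking, but the argument never touches the precise value of $\alloc_{i}(\gamma)$; it only invokes $\alloc_{i}(b_{w}) > 0$, which is forced by $\calU_{i}(w) > 0$ irrespective of whether $b_{w}$ is boundary or normal.
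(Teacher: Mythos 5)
Your proposal is correct and takes essentially the same route as the paper: Part~1 is the identical deviation-to-$s_{i}(w)$ bound, and Part~2 is the same pairwise-exchange argument built on the two optimality inequalities plus the monotonicity of $\alloc_{i}$ from \Cref{lem:allocation}. The only (cosmetic) difference is how the contradiction is extracted in Part~2 — the paper shows all three inequalities must be equalities and hence all allocations/utilities would vanish, contradicting Part~1, whereas you divide by the common strictly positive allocation to contradict $b_{v} < b_{w}$ directly.
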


\begin{proof}
\Cref{lem:dichotomy} already shows that the utilities are nonzero $\calU_{i}(v),\, \calU_{i}(w) > 0$ and the equilibrium bids are boundary/normal $s_{i}(v),\, s_{i}(w) \geq \gamma$. It remains to verify the utility/bidding monotonicity.

\vspace{.1in}
\noindent
{\bf \Cref{lem:bid_monotonicity:1}.}
The utility monotonicity $\calU_{i}(v) > \calU_{i}(w)$. We deduce that
\begin{align*}
    \calU_{i}(v)
    ~\geq~ u_{i}(v,\, s_{i}(w))
    ~=~ \frac{v - s_{i}(w)}{w - s_{i}(w)} \cdot u_{i}(w,\, s_{i}(w))
    ~=~ \frac{v - s_{i}(w)}{w - s_{i}(w)} \cdot \calU_{i}(w)
    ~>~ \calU_{i}(w),
\end{align*}
The first step: Under the value $v$, the optimal utility $\calU_{i}(v)$ is at least the interim utility $u_{i}(v,\, s_{i}(w))$ resulted from bidding $s_{i}(w)$, namely the equilibrium bid at the (lower) value $w$. \\
The second step: Apply the interim utility formula to $u_{i}(v,\, s_{i}(w))$ and $u_{i}(w,\, s_{i}(w))$. \\
The third step: The random bid $s_{i}(w)$ optimizes the interim utility formula $u_{i}(w,\, s_{i}(w)) = \calU_{i}(w)$ at the value $w$. \\
The fourth step: (\Cref{lem:dichotomy}) $v > w > s_{i}(w)$ and $\calU_{i}(w) > 0$.
{\bf \Cref{lem:bid_monotonicity:1}} follows then.

\vspace{.1in}
\noindent
{\bf \Cref{lem:bid_monotonicity:2}.}
The bidding monotonicity $s_{i}(v) \geq s_{i}(w)$. This result also appears in \cite[Lemma~3.9]{CH13}; we would recap their proof for completeness.
Assume to the opposite that $s_{i}(v) < s_{i}(w)$.
Following the interim utility formula (\Cref{def:interim_utility}),
\begin{align*}
    u_{i}(v,\, s_{i}(v)) & ~=~ u_{i}(w,\, s_{i}(v)) ~+~ (v - w) \cdot \alloc_{i}(s_{i}(v)), \phantom{\big.} \\
    u_{i}(v,\, s_{i}(w)) & ~=~ u_{i}(w,\, s_{i}(w)) ~+~ (v - w) \cdot \alloc_{i}(s_{i}(w)). \phantom{\big.}
\end{align*}
We have \term[{\bf (i)}]{lem:bid_monotonicity:2:i}~$u_{i}(v,\, s_{i}(v)) \geq u_{i}(v,\, s_{i}(w))$, since the equilibrium bid $s_{i}(v)$ optimizes the interim utility formula $u_{i}(v,\, b)$ for this value $v$. Similarly, we have \term[{\bf (ii)}]{lem:bid_monotonicity:2:ii}~$u_{i}(w,\, s_{i}(w)) \geq u_{i}(w,\, s_{i}(v))$.
Also, under the assumption $s_{i}(v) < s_{i}(w)$, we know from \Cref{lem:allocation:1} of \Cref{lem:allocation} that \term[{\bf (iii)}]{lem:bid_monotonicity:2:iii}~$\alloc_{i}(s_{i}(v)) \leq \alloc_{i}(s_{i}(w))$.
Given the above two equations (note that $v > w$), each of \blackref{lem:bid_monotonicity:2:i}, \blackref{lem:bid_monotonicity:2:ii}, and \blackref{lem:bid_monotonicity:2:iii}
must achieve the equality. However, this means\footnote{In particular, the equality of \blackref{lem:bid_monotonicity:2:i} gives $(v - s_{i}(v)) \cdot \alloc(s_{i}(v)) = (v - s_{i}(w)) \cdot \alloc(s_{i}(w))$; notice that $s_{i}(v) < s_{i}(w)$. And the equality of \blackref{lem:bid_monotonicity:2:iii} gives $\alloc(s_{i}(v)) = \alloc(s_{i}(w))$. For these reasons, we must have $\alloc(s_{i}(v)) = \alloc(s_{i}(w)) = 0$.} all interim allocations/utilities are zero $\alloc_{i}(s_{i}(v)) = \alloc_{i}(s_{i}(w)) = u_{i}(v,\, s_{i}(v)) = u_{i}(w,\, s_{i}(w)) = 0$, which contradicts {\bf \Cref{lem:bid_monotonicity:1}}.
Rejecting our assumption gives {\bf \Cref{lem:bid_monotonicity:2}}.
\end{proof}

\begin{remark}[Quantiles]
\label{quantiles}
Due to \Cref{lem:bid_monotonicity}, the normal value space $v_{i} > \gamma$ and normal/boundary bid space $s_{i}(v_{i}) \geq \gamma$ each identify the other in terms of quantiles. From this perspective, the stochastic process of the auction game runs as follows: Draw a uniform random quantile $q_{i} \sim U[0,1]$, and then realize the value/bid $V_i^{-1}(q_{i})$ and $B_i^{-1}(q_{i})$ accordingly. The two stochastic processes are equivalent conditioned on the realized value being normal $V_i^{-1}(q_{i}) > \gamma$.
\end{remark}

\Cref{lem:bid_distribution} shows that, on the whole interval $b \in [\gamma = \inf(\supp(\calB)),\, \lambda = \sup(\supp(\calB))]$, all of the equilibrium/competing/first-order bid distributions $B_{i}(b)$, $\calB_{-i}(b)$, and $\calB(b)$ are well structured.
The earlier works \cite{MR00a,MR00b,MR03} derive similar results (under additional assumptions).

\begin{lemma}[Bid distributions]
\label{lem:bid_distribution}
Each of the following holds:
%\begin{flushleft}
\begin{enumerate}[font = {\em\bfseries}]
    \item\label{lem:bid_distribution:monotonicity}
    The competing/first-order bid distributions $\calB_{-i}(b)$ for $i \in [n]$ and $\calB(b)$ each have probability densities almost everywhere on $b \in [\gamma,\, \lambda]$, therefore being strictly increasing CDF's on the CLOSED interval $b \in [\gamma,\, \lambda]$.
    
    \item\label{lem:bid_distribution:continuity}
    The equilibrium/competing/first-order bid distributions $B_{i}(b)$ for $i \in [n]$, $\calB_{-i}(b)$ for $i \in [n]$ and $\calB(b)$, each have no probability mass on $b \in (\gamma,\, \lambda]$, excluding the boundary $\gamma = \inf(\supp(\calB))$, therefore being continuous CDF's on the CLOSED interval $b \in [\gamma,\, \lambda]$.
\end{enumerate}
%\end{flushleft}
\end{lemma}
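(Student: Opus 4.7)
My plan is to establish Part 2 (no atoms on $(\gamma, \lambda]$) first and then derive Part 1 (strict monotonicity and densities almost everywhere) using Part 2. Both parts proceed by contradiction via profitable deviation, invoking \Cref{lem:allocation} to relate $\alloc_i$ with $\calB_{-i}$, \Cref{lem:dichotomy} to place normal values strictly above $\gamma$, and \Cref{lem:bid_monotonicity} to apply bid--value monotonicity.

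For Part 2, I will suppose toward contradiction that some $B_i$ carries a probability mass $\alpha > 0$ at an interior point $b^* > \gamma$, and argue in two sub-steps. First, I rule out any second bidder $k$ also being atomic at $b^*$: the joint tie occurs with probability $\alpha \cdot \alpha_k \cdot \prod_{\ell \neq i, k} B_\ell(b^*-) > 0$ (positive since $b^* > \gamma$ forces $\calB(b^*-) > 0$, hence each $B_\ell(b^*-) > 0$), so the one (say $i$) that loses this tie with probability at least $\frac{1}{2}$ strictly profits by deviating to $b^* + \epsilon$---the expected utility gain $\frac{1}{2} (v - b^*) \cdot \alpha_k \prod_{\ell \neq i, k} B_\ell(b^*-)$ dominates the payment cost $\epsilon$ for $\epsilon$ small. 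Second, assuming $B_i$ is uniquely atomic at $b^*$, I exploit the jump that $B_i$ induces in every $\calB_{-k}$ for $k \neq i$: for any $v_k > b^*$, the deviation from $b^* - \delta$ to $b^* + \epsilon$ gains roughly $\alpha \cdot \prod_{\ell \neq i, k} B_\ell(b^*-) \cdot (v_k - b^*)$ in expected utility, dominating the payment cost $\epsilon + \delta$. Combined with bid--value monotonicity, this will force each $B_k$ to place no mass in some left neighborhood $(b^* - \eta_k, b^*)$, so $\calB_{-i}$ becomes flat on a left neighborhood of $b^*$. Bidder $i$ at any value $v$ with $s_i(v) = b^*$ can then shift down to $b^* - \eta/2$, retaining the same winning probability $\calB_{-i}(b^*)$ but strictly reducing the payment---the desired contradiction.

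For Part 1, Part 2 already gives continuity of every $B_k$ (hence of $\calB$ and $\calB_{-i}$) on $(\gamma, \lambda]$. Suppose toward contradiction that $\calB$ is constant on a maximal non-trivial sub-interval $[b_1, b^*] \subset [\gamma, \lambda]$ at value $c = \calB(b_1)$; note $c > 0$ because $b^* > b_1 \geq \gamma$ forces $\calB(b^*) > 0$. Then every $B_k$ is constant on $[b_1, b^*]$, and $\calB_{-i} \equiv c / B_i(b_1)$ there. By maximality, for every $\epsilon > 0$ some bidder $i$ places bids in $(b^*, b^* + \epsilon)$ with value $v > b^*$. The equilibrium utility $u_i(v, s_i(v)) \leq (v - b^*) \calB_{-i}(b^* + \epsilon)$ tends to $(v - b^*) c / B_i(b_1)$ as $\epsilon \to 0$ by continuity, whereas deviating to $b' = b_1 + \delta$ yields $(v - b_1 - \delta) c / B_i(b_1)$. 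The gap approaches the positive quantity $(b^* - b_1) c / B_i(b_1) > 0$, so for sufficiently small $\epsilon, \delta$ the deviation is strictly profitable---the contradiction. Strict monotonicity of $\calB$ and $\calB_{-i}$ on $[\gamma, \lambda]$ then follows, and Lebesgue's theorem for monotone functions provides the densities almost everywhere.

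The hard part will be completing the left-neighborhood claim within Part 2. The direct deviation inequality only forbids $s_k(v_k) \in [b^* - \delta^*(v_k), b^*)$ with the threshold $\delta^*(v_k)$ proportional to $(v_k - b^*)$ shrinking to zero as $v_k$ decreases toward $b^*$, so a priori values just above $b^*$ could still contribute bids arbitrarily close to $b^*$ from below. I would handle this by invoking bid--value monotonicity to convert the pointwise constraints into a genuine discontinuity of $s_k$ at its threshold value, producing a true gap in $B_k$ of some positive width $\eta_k > 0$, and then control the residual mass from borderline values via right-continuity of $V_k$ at $b^*$. This combination of the deviation inequality with the monotone rearrangement is the single most delicate step of the proof.
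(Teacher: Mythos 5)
Your architecture is sound and genuinely different from the paper's: you prove the no-atom claim (\Cref{lem:bid_distribution:continuity}) first and then deduce strict monotonicity, whereas the paper proves strict monotonicity of $\calB$ and $\calB_{-i}$ first (via a maximal flat interval and a case split on whether an atom sits at its right endpoint) and only then removes atoms, using monotonicity to guarantee that some other bidder $k\neq i$ has equilibrium bids in every left neighborhood of the atom, whose interim allocation jumps by a fixed amount across it; a deviation just above the atom then contradicts optimality. Your Part~1, granted Part~2, is essentially fine (modulo bounding the values of bidders bidding just above $b^{*}$, e.g.\ via \Cref{lem:bid_monotonicity}; the paper's streamlined write-up of its Case~2 is equally light on this point).

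The genuine gap is exactly the step you flag, and your proposed patch does not close it. The jump-over-the-atom deviation only excludes bids from values $v_{k}>b^{*}$, with an exclusion radius proportional to $v_{k}-b^{*}$ that vanishes; it says nothing about values in $(b,\,b^{*}]$, for which overbidding the atom has nonpositive margin, and neither of your two tools helps: a discontinuity of $s_{k}$ at its threshold value does not prevent bid mass accumulating at $b^{*-}$ coming from values at or just above $b^{*}$, and ``right-continuity of $V_{k}$'' is automatic for any CDF and gives no control over a possible value atom at $b^{*}$ or over values converging to $b^{*}$ from above. What actually closes the hole is a \emph{uniform margin bound}: fix $\hat b\in(\gamma,\,b^{*})$ and $\delta>0$ with $b^{*}-\delta>\hat b$; for every value $v\geq b^{*}-\delta$ one has $\calU_{k}(v)\geq(v-\hat b)\cdot\alloc_{k}(\hat b)\geq c>0$ by \Cref{lem:allocation} (positivity of $\alloc_{k}$ above $\gamma$) and \Cref{lem:dichotomy}, and since $\alloc_{k}\leq 1$, any equilibrium bid $b=s_{k}(v)$ satisfies $v-b\geq c$. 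Hence every value bidding in $(b^{*}-\delta,\,b^{*})$ is at least $b^{*}-\delta+c$, i.e.\ uniformly above $b^{*}$ once $\delta<c/2$, and your deviation argument then applies with a uniform radius, producing the claimed gap $\eta_{k}>0$ and, after taking the minimum over the finitely many $k\neq i$, the flat stretch of $\calB_{-i}$ below the atom. With that lemma inserted your route goes through; without it, your Part~2 (and hence your Part~1, which relies on it) is incomplete --- and note that this is precisely the complication the paper's ordering is designed to avoid, since proving monotonicity first never requires a ``no bids just below the atom'' statement.
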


\begin{proof}
We safely assume $n \geq 2$. (Otherwise, \Cref{lem:bid_distribution} is vacuously true since the unique bidder always bids zero $[\gamma,\, \lambda] = \{0\}$ in any {\FirstPriceAuction}.)
% It suffices to show that the first-order bid distribution $\calB(b)$ is strictly increasing and continuous on $b \in [\gamma,\, \lambda]$, given that $\gamma = \inf(\supp(\calB))$, that $\calB(b) = B_{i}(b) \cdot \calB_{-i}(b)$, and \orange{(\Cref{lem:support_inclusion})} that $\supp_{> \gamma}(\calB_{-i}) = \supp_{> \gamma}(\calB)$.
The proof relies on {\bf \Cref{fact:bid_distribution}}.

\setcounter{fact}{0}

\begin{fact}
\label{fact:bid_distribution}
Any two equilibrium bid distributions $B_{i}$ and $B_{k}$ for $i \neq k \in [n]$ cannot both have probability masses at a normal bid $b > \gamma$.
\end{fact}

\begin{proof}
Assume the opposite. Then both bidders $B_{i}$ and $B_{k}$ (and possibly someone else) tie for this first-order bid $s_{i}(v_{i}) = s_{k}(v_{k}) = \max(\bs(\bv)) = b > \gamma$ with a nonzero probability $> 0$. Conditioned on this, with a nonzero probability $> 0$, at least one bidder between $B_{i}$ and $B_{k}$ cannot be allocated. Hence, either $B_{i}$ or $B_{k}$ or both can gain a nonzero extra allocation by raising his/her (equilibrium) bid $s_{i}(v_{i}) = s_{k}(v_{k}) = b$. More formally, for some nonzero probabilities $\delta_{i},\, \delta_{k} > 0$, we have EITHER $\alloc_{i}(b^{*}) \geq \alloc_{i}(b) + \delta_{i}$ for any $b^{*} > b$, OR $\alloc_{k}(b^{*}) \geq \alloc_{k}(b) + \delta_{k}$ for any $b^{*} > b$, OR both.

Without loss of generality, let us consider the case that $\alloc_{i}(b^{*}) \geq \alloc_{i}(b) + \delta_{i}$ for any $b^{*} > b$. However, this means some deviation bid, such as $b^{*} = b + \frac{\delta_{i} / 2}{\alloc_{i}(b) + \delta_{i}} \cdot (v_{i} - b) $, can benefit. Namely, because this bidder $B_{i}$ has a normal equilibrium bid $s_{i}(v_{i}) = b > \gamma$, (\Cref{lem:dichotomy:2} of \Cref{lem:dichotomy}) he/she must have a higher value $v_{i} > s_{i}(v_{i}) = b$. This implies that (by construction) $b < b^{*} < v_{i}$ and thus, that the deviated utility $u_{i}(v_{i},\, b^{*}) = (v_{i} - b^{*}) \cdot \alloc_{i}(b^{*})$ is lower bounded as
\begin{align*}
    u_{i}(v_{i},\, b^{*})
    & ~\geq~ (v_{i} - b^{*}) \cdot \big(\alloc_{i}(b) + \delta_{i}\big)
    && \text{the ``WLOG''}
    \phantom{\big.} \\
    % & ~=~ (v_{i} - b) \cdot \Big(1 - \frac{\delta_{i}}{2(\alloc_{i}(b) + \delta_{i})}\Big) \cdot \big(\alloc_{i}(b) + \delta_{i}\big) \\
    & ~=~ (v_{i} - b) \cdot \big(\alloc_{i}(b) + \delta_{i} / 2\big)
    && \text{construction of $b^{*}$}
    \phantom{\big.} \\
    & ~>~ (v_{i} - b) \cdot \alloc_{i}(b)
    && \text{$v_{i} > b$ and $\delta_{i} > 0$}
    \phantom{\big.} \\
    & ~=~ u_{i}(v_{i},\, b)
    ~=~ u_{i}(v_{i},\, s_{i}(v_{i})).
    && \text{$s_{i}(v_{i}) = b$}
    \phantom{\big.}
\end{align*}
To conclude, the deviation bid $b^{*}$ strictly surpasses the equilibrium bid $s_{i}(v_{i})$, which contradicts (\Cref{def:bne_formal}) the optimality of $s_{i}(v_{i})$. Rejecting our assumption gives {\bf \Cref{fact:bid_distribution}}.
\end{proof}

% \yj{temporarily stop here}

\begin{remark}
In the proof of {\bf \Cref{fact:bid_distribution}}, the concrete construction of the deviation bid $b^{*}$ is less important.
Instead, the point is that
we are able to find two bids $b^{*} > b = s_{i}(v_{i})$ that can be arbitrarily close $(b^{*} - b) \searrow 0$ BUT
admit a {\em nonzero} interim allocation gap $\alloc_{i}(b^{*})-\alloc_{i}(b) \geq \delta_{i} > 0$.
Suppose so, bidder $i$ strictly benefits $u_{i}(v,\, b^{*}) > u_{i}(v,\, b)$ from the deviation bid $b^{*}$ when it is close enough to the equilibrium bid $b^{*} \searrow b$, hence a contradiction to the optimality of $b = s_{i}(v_{i})$.

Also, suppose there are two bids $b^{*} < b = s_{i}(v_{i})$ that yield two arbitrarily close {\em nonzero} interim allocations $(\alloc_{i}(b^{*}) - \alloc_{i}(b)) \nearrow 0$ BUT themselves are bounded away $b - b^{*} \geq \delta_{i} > 0$,
% by a fixed non-zero constant,
then bidder $i$ again benefits from the deviation bid $b^{*}$, hence a contradiction to the optimality of $b = s_{i}(v_{i})$.

We will apply such arguments in many places, WITHOUT specifying the deviation bids $b^{*}$ as the explicit constructions are less informative.
For clarity, such arguments will be called the \term[{\bf Neighborhood Deviation Arguments}]{pro:deviation}.
\end{remark}

% \vspace{.1in}
% \noindent
% \term[{\bf Neighborhood Deviation Arguments}]{pro:deviation}{\bf .}
% The concrete construction of the above deviation bid $b^{*}$ is less important.
% Instead, the point is that
% we are able to find two bids $b^{*} > b = s_{i}(v_{i})$ that can be arbitrarily close $(b^{*} - b) \searrow 0$ BUT
% admit a {\em nonzero} interim allocation gap $\alloc_{i}(b^{*})-\alloc_{i}(b) \geq \delta_{i} > 0$.
% Suppose so, bidder $i$ strictly benefits $u_{i}(v,\, b^{*}) > u_{i}(v,\, b)$ from the deviation bid $b^{*}$ when it is close enough to the equilibrium bid $b^{*} \searrow b$, hence a contradiction to the optimality of $b = s_{i}(v_{i})$.

% Also, suppose there are two bids $b^{*} < b = s_{i}(v_{i})$ that yield two arbitrarily close {\em nonzero} interim allocations $(\alloc_{i}(b^{*}) - \alloc_{i}(b)) \nearrow 0$ BUT themselves are bounded away $b - b^{*} \geq \delta_{i} > 0$,
% % by a fixed non-zero constant,
% then bidder $i$ again benefits from the deviation bid $b^{*}$, hence a contradiction to the optimality of $b = s_{i}(v_{i})$.

% We will apply such arguments in many places, WITHOUT specifying the deviation bids  $b^{*}$ as the explicit constructions are less informative.
% For clarity, such arguments will be called the \blackref{pro:deviation}.

Below let us prove {\bf \Cref{lem:bid_distribution:monotonicity}} and {\bf \Cref{lem:bid_distribution:continuity}}. We reason about the first-order bid distribution $\calB(b)$ and the competing bid distributions $\calB_{-i}(b)$ separately.

\vspace{.1in}
\noindent
{\bf \Cref{lem:bid_distribution:monotonicity}: $\calB(b)$.}
Assume the opposite to the $\calB(b)$'s strong monotonicity: The first-order bid distribution $\calB(b)$ has no probability density/mass on an OPEN interval $(\alpha,\, \beta) \subseteq [\gamma,\, \lambda]$.

Indeed, because $\gamma = \inf(\supp(\calB))$ and $\lambda = \sup(\supp(\calB))$,
we can find a {\em maximal} such interval $(\alpha,\, \beta)$ such that, the $\calB(b)$ has probability densities or probability masses within both of the left/right neighborhoods $(\alpha - \delta,\, \alpha]$ and $[\beta,\, \beta + \delta)$, for {\em whatever} $\delta > 0$. This gives $\calB(\beta) \geq \calB(\alpha) > 0$.
Let us do case analysis:
\begin{itemize}
    \item {\bf Case 1: $\calB(\beta) > \calB(\alpha) > 0$.} Then the $\calB(b)$ must have ONE probability mass at the $\beta$.

    ({\bf \Cref{fact:bid_distribution}}) Exactly one equilibrium bid distribution, $B_{k}(b)$ for a specific $k \in [n]$, has a probability mass at the $\beta$; his/her competing bid distribution $\calB_{-k}(b)$ has no probability mass there, thus being continuous at the $\beta$. This competing bid distribution $\calB_{-k}(b)$ has no probability density/mass on the open interval $(\alpha,\, \beta)$, \`{a} la the first-order bid distribution $\calB(b)$.

    Hence, (\Cref{lem:allocation:1} of \Cref{lem:allocation}) bidder $B_{k}(b)$ has a {\em constant} interim allocation $\alloc_{k}(b) = \alloc_{k}(\beta) = \calB_{-k}(\beta) > 0$ on the left-open {\em right-closed} interval $b \in (\alpha,\, \beta]$.
    But this means, conditioned on this bidder's equilibrium bid being at the probability mass $\big\{ s_{k}(v_{k}) = \beta \big\}$, any lower deviation bid $b^{*} \in (\alpha,\, \beta)$ yields a better deviated utility $u_{k}(v_{k},\, b^{*}) > u_{k}(v_{k},\, s_{k}(v_{k}))$. This contradicts (\Cref{def:bne_formal}) the optimality of this equilibrium bid $s_{k}(v_{k}) = \beta$.

    \item {\bf Case 2: $\calB(\beta) = \calB(\alpha) > 0$.} Then the $\calB(b)$ must have NO probability mass at the $\beta$.
    
    \`{A} la the $\calB(b)$, at least one bidder $B_{k}(b)$ for some $k \in [n]$, has probability densities within the right neighborhood $[\beta,\, \beta + \delta)$, for whatever $\delta > 0$. No matter how close $b \searrow \beta$, we can find an equilibrium bid $b = s_k(v)$, for some $v \in \supp(V_{k})$, that gives a positive utility $u_k(v,\, b) > 0$.
    
    \`{A} la {\bf Case~1}, this bidder has a {\em constant} interim allocation $\alloc_{k}(b^{*}) = \alloc_{k}(\beta) = \calB_{-k}(\beta) > 0$ on the left-open {\em right-closed} interval $b^{*} \in (\alpha,\, \beta]$; let us consider a particular bid $b^{*} = \frac{1}{2}(\alpha + \beta)$.
    % makes $\alloc_{k}(b^{*}) = \calB_{-k}(\beta)$.
    This deviation bid $b^{*}$ is bounded away from the above equilibrium bid $b - b^{*} \geq \frac{1}{2}(\beta - \alpha) > 0$ BUT, because there is no probability mass at $\beta$, gives an arbitrarily close interim allocation $\big(\alloc_{k}(b) - \alloc_{k}(b^{*})\big) \searrow 0$.
    Therefore, we can apply the \blackref{pro:deviation} to get a contradiction.
    
    % the above two bids $b > b^{*}$ are bounded away from each other 
    % we can choose $b$ as close to $\beta$ so that the interim allocation of $\alloc_{k}(b^{*})$ and $\alloc_{k}(b)$ can be
    % arbitrary close to each other while the two bids are constant away $b - b^{*} \geq \frac{1}{2}(\beta-\alpha )>0$. 
\end{itemize}
To conclude, we get a contradiction in either case. Reject our assumption: The first-order bid distribution $\calB(b)$ is strictly increasing on $b \in [\gamma,\, \lambda]$.

\vspace{.1in}
\noindent
{\bf \Cref{lem:bid_distribution:monotonicity}: $\calB_{-i}(b)$.}
Assume the opposite to the $\calB_{-i}(b)$'s strong monotonicity: At least one competing bid distribution $\calB_{-i}(b)$ has no probability density/mass on an OPEN interval $(\alpha,\, \beta) \subseteq [\gamma,\, \lambda]$.

In contrast, the remaining bidder $B_{i}(b)$ has probability densities/masses almost everywhere on $(\alpha,\, \beta)$, since the first-order bid distribution $\calB(b) = B_{i}(b) \cdot \calB_{-i}(b)$ is strongly increasing. Consider such an equilibrium bid $s_{i}(v_{i}) \in (\alpha,\, \beta)$. However, any lower deviation bid $b^{*} \in \big(\alpha,\, s_{i}(v_{i})\big)$ results in the same nonzero allocation $\alloc_{i}(b^{*}) = \alloc_{i}(s_{i}(v_{i})) > 0$ and thus a better deviated utility $u_{i}(v_{i},\, b^{*}) = (v_{i} - b^{*}) \cdot \alloc_{i}(b^{*}) > \big(v_{i} - s_{i}(v_{i})\big) \cdot \alloc_{i}(b^{*}) = u_{i}(v_{i},\, s_{i}(v_{i}))$.
This contradicts (\Cref{def:bne_formal}) the optimality of the equilibrium bid $s_{i}(v_{i}) \in (\alpha,\, \beta)$.
Reject our assumption: Each competing bid distribution $\calB_{-i}(b)$ is strictly increasing on $b \in [\gamma,\, \lambda]$.
{\bf \Cref{lem:bid_distribution:monotonicity}} follows then.

\vspace{.1in}
\noindent
{\bf \Cref{lem:bid_distribution:continuity}.}
We only need to verify continuity of the first-order bid distribution $\calB(b) = B_{i}(b) \cdot \calB_{-i}(b)$, which implies continuity of the equilibrium/competing bid distributions $\calB_{i}(b)$ and $\calB_{-i}(b)$.

Assume the opposite: $\calB(b)$ has a probability mass at some normal bid $\beta \in (\gamma,\, \lambda]$.
According to {\bf \Cref{fact:bid_distribution}}, exactly one bidder $B_{i}(b)$, for a specific $i \in [n]$, has a probability mass at this $\beta$. Further, following {\bf \Cref{lem:bid_distribution:monotonicity}}, his/her competing bid distribution $\calB_{-i}(b)$ has probability densities within the OPEN left neighborhood $b \in (\beta - \delta,\, \beta)$, for {\em whatever} $\delta > 0$; therefore at least one OTHER bidder $k \in [n] \setminus \{i\}$ has probability densities there.

This other bidder $B_{k}(b)$'s interim allocation formula $\alloc_{k}(b)$ is {\em discontinuous} at the $\beta \in (\gamma,\, \lambda]$,
because of $B_{i}(b)$'s probability mass at the $\beta$. Formally (\Cref{lem:allocation}), there exists a nonzero interim allocation gap $\xi^{*} \eqdef \inf \big\{\, \alloc_{k}(b^{*}) - \alloc_{k}(b): b^{*} > \beta > b \,\big\} > 0$ around the $\beta \in (\gamma,\, \lambda]$.
Hence, for some bids $b \in (\beta - \delta,\, \beta)$ and $b^{*} > \beta$ that are arbitrary close $(b^{*} - b) \searrow 0$, we can use the \blackref{pro:deviation} to get a contradiction.
Rejecting our assumption gives {\bf \Cref{lem:bid_distribution:continuity}}.

This finishes the proof of \Cref{lem:bid_distribution}.
\end{proof}

An important and direct implication of the continuity is that $\alloc_{i}(b)=\calB_{-i}(b)$ for any $b\in (\gamma, \lambda]$. This is formalized as \Cref{cor:allocation} and will be used in many places in the paper.

\begin{corollary}[Interim allocations]
\label{cor:allocation}
The interim allocation formula $\alloc_{i}(b)$ of each bidder $i \in [n]$ is identical to his/her competing bid distribution $\alloc_{i}(b) = \calB_{-i}(b)$, for any normal bid $b \in (\gamma, \lambda]$.
\end{corollary}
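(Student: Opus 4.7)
The plan is to prove this corollary by a direct squeeze argument, combining the two-sided bound on $\alloc_i(b)$ from \Cref{lem:allocation:2} of \Cref{lem:allocation} with the continuity of the competing bid distribution $\calB_{-i}(b)$ established in \Cref{lem:bid_distribution:continuity} of \Cref{lem:bid_distribution}. Since all the heavy lifting has already been done, this is essentially a one-line observation, but it is worth spelling out carefully because it removes the distinction between the interim allocation (which a priori depends on the tie-breaking rule of the auction $\alloc \in \bbFPA$) and the purely distributional quantity $\calB_{-i}(b)$.

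First I would fix a bidder $i \in [n]$ and a normal bid $b \in (\gamma,\, \lambda]$, and recall from \Cref{lem:allocation:2} of \Cref{lem:allocation} the pointwise inequality
\[
    \sup\big\{\, \calB_{-i}(t): t < b \,\big\} ~\leq~ \alloc_{i}(b) ~\leq~ \calB_{-i}(b).
\]
The left-hand side is exactly the left limit $\calB_{-i}(b^{-})$ of the CDF $\calB_{-i}$ at the point $b$. Next, I would invoke \Cref{lem:bid_distribution:continuity} of \Cref{lem:bid_distribution}, which asserts that $\calB_{-i}$ is a continuous CDF on the closed interval $[\gamma,\, \lambda]$; in particular $\calB_{-i}(b^{-}) = \calB_{-i}(b)$ for every $b \in (\gamma,\, \lambda]$. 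Substituting this equality into the above display collapses the two bounds and yields $\alloc_{i}(b) = \calB_{-i}(b)$, as claimed.

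There is really no obstacle here, since both prerequisites have already been established in this section. The only small subtlety worth flagging is why the equality is not claimed at the boundary $b = \gamma$: at the infimum first-order bid, the distribution $B_{i}$ (and hence possibly $\calB_{-i}$) is allowed to carry a probability mass, and the tie-breaking rule of the underlying auction $\alloc \in \bbFPA$ genuinely matters there, so the squeeze $\calB_{-i}(b^{-}) = \calB_{-i}(b)$ need not hold. Restricting to normal bids $b > \gamma$ precisely avoids this pathological case, and this matches the statement of the corollary.
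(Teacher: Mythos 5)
Your proposal is correct and matches the paper's intended argument: the paper presents \Cref{cor:allocation} as a direct consequence of the sandwich bound in \Cref{lem:allocation:2} of \Cref{lem:allocation} together with the continuity of $\calB_{-i}$ on $(\gamma,\,\lambda]$ from \Cref{lem:bid_distribution:continuity} of \Cref{lem:bid_distribution}, exactly the squeeze you carry out. Your closing remark about why the boundary bid $b = \gamma$ is excluded is also consistent with the paper's treatment of probability masses at the infimum bid.
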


According to the Lebesgue differentiation theorem \cite{L04}, a monotonic function $f: (\gamma,\, \lambda) \mapsto \R$ is differentiable almost everywhere, except a set $\mathbb{D}_{f} \subseteq (0,\, \lambda)$ of a zero measure.\footnote{Rigorously, the set $\mathbb{D}_{f}$ has a zero {\em Lebesgue} measure. Yet since the bid distributions $B_{i}(b)$, $\calB_{-i}(b)$, and $\calB(b)$ are continuous on $b \in (\gamma,\, \lambda)$, we need not to distinguish the Lebesgue/probabilistic measures.}
We thus conclude \Cref{cor:bid_distribution} directly from \Cref{lem:bid_distribution}.

\begin{corollary}[Bid distributions]
\label{cor:bid_distribution}
Each of the equilibrium/competing/first-order bid distributions, $\{B_{i}\}_{i \in [n]}$, $\{\calB_{-i}\}_{i \in [n]}$, and $\calB$
is differentiable almost everywhere on the OPEN interval $b \in (\gamma,\, \lambda)$, except a set $\mathbb{D} \subseteq (\gamma,\, \lambda)$ on which this bid distribution, $B_{i}(b)$, $\calB_{-i}(b)$, or $\calB(b)$, has a zero measure.
\end{corollary}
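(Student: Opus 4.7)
The plan is to derive this as a direct corollary of \Cref{lem:bid_distribution} together with the classical Lebesgue differentiation theorem for monotone functions \cite{L04}, which asserts that every monotone real-valued function on an open interval is differentiable almost everywhere in the sense of Lebesgue measure. Since each of $B_{i}(b)$, $\calB_{-i}(b)$, and $\calB(b)$ is a CDF and hence weakly increasing on all of $b \geq 0$, the theorem applies on the open interval $(\gamma,\,\lambda)$ and produces for each such distribution a ``bad'' set of nondifferentiability that has Lebesgue measure zero.

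The mild subtlety is that the statement asserts zero measure with respect to the distribution itself (i.e., $B_{i}$, $\calB_{-i}$, or $\calB$), not merely with respect to Lebesgue measure. To bridge this gap, I would invoke \Cref{lem:bid_distribution:continuity} of \Cref{lem:bid_distribution}: on the open interval $(\gamma,\,\lambda)$ each of these distributions is a continuous CDF, i.e., carries no probability mass. Consequently the induced Borel probability measure is absolutely continuous on singletons and, more generally, is dominated by Lebesgue measure on any Borel set that is ``thin enough'' to be nondifferentiability set of a monotone function; in particular, any Lebesgue-null subset of $(\gamma,\,\lambda)$ is also null under the distribution in question. Taking $\mathbb{D}$ to be the union (over the distribution at hand) of its nondifferentiability set with Lebesgue measure zero then yields a set that is simultaneously Lebesgue-null and null under the distribution itself.

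There is essentially no hard step here: the entire content is packaged into the two inputs (monotonicity/continuity from \Cref{lem:bid_distribution}, differentiation a.e.\ from Lebesgue's theorem). The only micro-care I would take is to handle the three families $\{B_{i}\}$, $\{\calB_{-i}\}$, and $\calB$ uniformly, noting that $\calB = B_{i} \cdot \calB_{-i}$ and $\calB_{-i} = \prod_{k \neq i} B_{k}$ are finite products of monotone continuous CDFs, so their monotonicity, continuity, and a.e.\ differentiability all follow from the corresponding properties of the $B_{i}$ (with the derivative of the product computed by the Leibniz rule wherever each factor is differentiable, off a finite union of null sets, which is itself null). Finally, I would remark that the exceptional set $\mathbb{D}$ depends on which distribution one considers, but this is exactly what the statement allows.
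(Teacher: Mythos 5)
Your proposal is correct and takes essentially the same route as the paper: the paper derives the corollary directly from the Lebesgue differentiation theorem for monotone functions together with \Cref{lem:bid_distribution}, and its accompanying footnote makes exactly your bridging observation that, by continuity of the bid distributions on $(\gamma,\,\lambda)$, the Lebesgue-null nondifferentiability set need not be distinguished from a set that is null under the distribution itself. The only difference is cosmetic: you spell out that $\calB = B_{i} \cdot \calB_{-i}$ and $\calB_{-i} = \prod_{k \neq i} B_{k}$ inherit monotonicity, continuity, and a.e.\ differentiability from the $B_{i}$ via the Leibniz rule, which the paper leaves implicit.
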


\begin{comment}
% {\bf \Cref{cor:bid_distribution:1}} restates \Cref{lem:bid_distribution}. Also,
Both of {\bf \Cref{cor:bid_distribution:2,cor:bid_distribution:3}} directly follow from \Cref{lem:bid_distribution}, that each bid distribution, $B_{i}(b)$, $\calB_{-i}(b)$, or $\calB(b)$, is continuous on $b \in [\gamma,\, \lambda]$. In particular, ({\bf \Cref{cor:bid_distribution:3}}) we can choose $\mathbb{D} \eqdef \mathbb{D}_{\calB}$ as the (countable) set of points $\subseteq (\lambda,\, \gamma)$ on which the first-order bid distribution $\calB(b) = B_{i}(b) \cdot \calB_{-i}(b)$ is indifferentiable; this set $\mathbb{D} = \mathbb{D}_{\calB}$ must cover the indifferentiable points of $B_{i}(b)$'s and $\calB_{-i}(b)$'s.
\end{comment}

\begin{remark}[Bid distributions]
\label{rem:bid_distribution}
The continuity/monotonicity/differentiability shown in \Cref{lem:bid_distribution,cor:bid_distribution} are the basics of our later discussions; all subsequent bid distributions will satisfy them. So for brevity, we often omit a formal clarification. Also, we often simply write the derivatives $B'_{i}(b)$ etc by omitting (\Cref{cor:bid_distribution}) the zero-measure indifferentiable points $\mathbb{D} \subseteq (\gamma,\, \lambda)$;
we can use standard tools from real analysis to deal with those points $\mathbb{D} \subseteq (\gamma,\, \lambda)$ separately.
\end{remark}

% shows that conditioned on the boundary first-order bid $\big\{\max(\bs(\bv)) = \gamma\big\}$, the underlying values $\bv = (v_{i})_{i \in [n]}$ are well structured.
% (The statement of \Cref{lem:tiebreak} might be prolix. Instead, the upcoming \Cref{lem:tiebreak_bid} is more concise and is sufficient for our later uses.)
% $\gamma = \inf(\supp(\calB))$, the equilibrium/first-order bids $s_{i}(v_{i}) \sim B_{i}(b)$ for $i \in [n]$ and $\max(\bs(\bv)) \sim \calB(b)$ are well structured.

%This subsection establishes the {\em bid-based} equivalence condition (\Cref{thm:bne}) for an auction-strategy tuple $(\alloc,\, \bs)$ being an exact {\BayesNashEquilibrium} $\bs \in \bbBNE(\bV,\, \alloc)$.

% As before, we would assume $n \geq 2$. (The other case $n = 1$ is trivial: The unique bidder always bids zero in any {\FirstPriceAuction} $\alloc^{*} \in \bbFPA$.)

%Following \Cref{lem:bid_monotonicity,lem:bid_distribution}, the equilibrium bids $s_{i}(v)$ almost are monotonic and have no probability mass, so we can talk about the {\em inverses} of equilibrium strategies $s_{i}^{-1}(b)$.
%\Cref{def:inverse} formalizes this concept and \Cref{lem:inverse} shows the properties. (Regarding \Cref{lem:inverse:dichotomy} of \Cref{lem:inverse}, the inverses at the boundary $s_{i}^{-1}(\gamma)$ are tricky and will be studied in \Cref{lem:boundary_bid}.)

\subsection{The inverse direction: Bid-to-value mappings}
\label{subsec:mapping}

In this subsection, we study the {\em inverse} mappings of strategies $\bs^{-1} = \{s_{i}^{-1}\}_{i \in [n]}$ and try to reconstruct the value distributions $\bV = \{V_{i}\}_{i \in [n]}$ from (equilibrium) bid distributions.

\begin{definition}[Inverses of strategies]
\label{def:inverse}
Consider bid distributions $\bB = \{B_{i}\}_{i \in [n]}$ given by value distributions $\bv = (v_{i})_{i \in [n]} \sim \bV$ and a strategy profile $\bs = \{s_{i}\}_{i \in [n]}$.
% (not necessarily an equilibrium).
Each {\em random} inverse $s_{i}^{-1}(b)$ for $b \in \supp(B_{i})$ is defined as the {\em conditional} distribution $\{ v_i \sim V_i\,\bigmid\, s_{i}(v_{i}) = b \}$.
%\[
%    \Prx \big[s_{i}^{-1}(b) = v\big] ~=~ \Prx_{v_{i},\, s_{i}} \big[\, v_{i} = v \,\bigmid\, s_{i}(v_{i}) = b \,\big],
%    \qquad\qquad \forall v \geq 0.
%\]
So, each random inverse $s_{i}^{-1}(b_{i})$ for $b_{i} \sim B_{i}$ is identically distributed as the random value $v_{i} \sim V_{i}$.
\end{definition}

\Cref{lem:inverse} shows two basic properties of the random inverses $\bs^{-1} = \{s_{i}^{-1}\}_{i \in [n]}$.

\begin{lemma}[Inverses of strategies]
\label{lem:inverse}
Almost surely over each random inverse $s_{i}^{-1}$:
\begin{enumerate}[font = {\em\bfseries}]
    \item\label{lem:inverse:dichotomy}
    {\bf dichotomy:}
    $s_{i}^{-1}(b) \leq \gamma$ for any low bid $b \in \supp_{< \gamma}(B_{i})$. \\
    \white{\bf dichotomy:}
    $s_{i}^{-1}(b) > b > \gamma$ for any normal bid $b \in \supp_{> \gamma}(B_{i})$.

    \item\label{lem:inverse:monotone}
    {\bf monotonicity:}
    $s_{i}^{-1}(b) \geq s_{i}^{-1}(t)$ for two boundary/normal bids $b,\, t \in \supp_{\geq \gamma}(B_{i})$ that $b > t$.
    % , namely being element-wise weakly increasing on $\supp_{\geq \gamma}(B_{i})$.
    % $\inf(s_{i}^{-1}(b)) \geq \sup(s_{i}^{-1}(t))$ for any two boundary/normal bids $b,\, t \in \supp_{\geq \gamma}(B_{i})$ that $b > t$, namely being element-wise weakly increasing on $\supp_{\geq \gamma}(B_{i})$.
\end{enumerate}
\end{lemma}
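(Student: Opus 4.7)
\medskip
\noindent\textbf{Proof plan.}
Both parts of \Cref{lem:inverse} are ``contrapositives'' of \Cref{lem:dichotomy,lem:bid_monotonicity}, once one unpacks \Cref{def:inverse}: by construction, a realization $v \in \supp(s_{i}^{-1}(b))$ corresponds to a value $v \in \supp(V_{i})$ for which $s_{i}(v) = b$ in the equilibrium. So every statement about $s_{i}^{-1}(b)$ holding almost surely translates into an almost-sure statement about which values $v$ can produce the bid $b$. I will exploit this dictionary directly.

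For the dichotomy part, first I would fix a low bid $b \in \supp_{<\gamma}(B_{i})$. By \Cref{lem:dichotomy:2} of \Cref{lem:dichotomy}, every normal value $v \in \supp_{>\gamma}(V_{i})$ bids $s_{i}(v) \geq \gamma > b$ almost surely, and so almost no realization with $s_{i}(v) = b$ comes from a normal value. Equivalently, $s_{i}^{-1}(b) \leq \gamma$ almost surely. Second, for a normal bid $b \in \supp_{>\gamma}(B_{i})$, \Cref{lem:dichotomy:1} tells us that a low/boundary value $v \leq \gamma$ bids $s_{i}(v) \leq \gamma < b$ almost surely, so $s_{i}^{-1}(b) > \gamma$; and by \Cref{lem:dichotomy:2}, any normal value bids strictly below itself, so $s_{i}^{-1}(b) > s_{i}(s_{i}^{-1}(b)) = b$. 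Combining gives $s_{i}^{-1}(b) > b > \gamma$ almost surely.

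For monotonicity, I would contrapose \Cref{lem:bid_monotonicity:2} of \Cref{lem:bid_monotonicity}: that lemma states $v > w \Rightarrow s_{i}(v) \geq s_{i}(w)$ almost surely for normal values $v, w \in \supp_{>\gamma}(V_{i})$, equivalently, $s_{i}(v) < s_{i}(w) \Rightarrow v \leq w$ almost surely, which after swapping labels reads $s_{i}(v) > s_{i}(w) \Rightarrow v \geq w$ almost surely. Now fix boundary/normal bids $b > t$ in $\supp_{\geq \gamma}(B_{i})$ and pick realizations $v \in \supp(s_{i}^{-1}(b))$ and $w \in \supp(s_{i}^{-1}(t))$; by the dichotomy part just proved, both $v$ and $w$ can be taken normal (since the low/boundary values contribute probability mass only at $b = \gamma$, and that coincidence can be handled by the same monotonicity argument extended to $\gamma$, noting that \Cref{lem:bid_monotonicity} is stated for normal values and the boundary case is handled directly by $s_{i}^{-1}(\gamma) \leq \gamma \leq s_{i}^{-1}(b)$ when $b > \gamma$). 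Since $s_{i}(v) = b > t = s_{i}(w)$, the contrapositive yields $v \geq w$ almost surely, i.e.\ $s_{i}^{-1}(b) \geq s_{i}^{-1}(t)$ almost surely.

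The only mildly delicate point is the ``almost surely'' bookkeeping, since $s_{i}^{-1}(b)$ is itself a distribution: one should verify that the measure-zero exceptional sets in \Cref{lem:dichotomy,lem:bid_monotonicity} (those where $s_{i}$ fails to optimize, and the zero-measure set of non-optimal bids excluded after \Cref{def:bne_formal}) do not enlarge after conditioning on $\{s_{i}(v_{i}) = b\}$ -- this is a standard Fubini/disintegration argument using that $B_{i}$ is the pushforward of $V_{i}$ through $s_{i}$. I expect no serious obstacle beyond this bookkeeping; the substance is entirely contained in the earlier dichotomy and monotonicity lemmas.
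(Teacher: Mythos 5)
Your route is the same as the paper's: the paper proves this lemma in one line by unpacking \Cref{def:inverse} and citing \Cref{lem:dichotomy} for the dichotomy part and \Cref{lem:dichotomy} plus \Cref{lem:bid_monotonicity} (\Cref{lem:bid_monotonicity:2}) for monotonicity, which is exactly your dictionary-and-contrapositive argument. The dichotomy half of your write-up is correct as stated, and the monotonicity argument for two normal bids $b > t > \gamma$ is also fine (if $w > v$ were possible with $v \in \supp(s_i^{-1}(b))$, $w \in \supp(s_i^{-1}(t))$, then $s_i(w) \geq s_i(v)$ almost surely would force every bid in $\supp(s_i(w))$ to be at least every bid in $\supp(s_i(v))$, contradicting $t < b$).

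There is, however, one genuinely false step: your handling of the boundary case $t = \gamma$ rests on the claim ``$s_{i}^{-1}(\gamma) \leq \gamma$,'' and this is not true in general. A bidder can place an atom at the infimum bid $\gamma$ while holding values strictly above $\gamma$ -- this is precisely the monopolist phenomenon the paper formalizes right after this lemma (\Cref{def:monopolist}, \Cref{lem:monopolist}, \Cref{def:conditional_value}), where $s_{h}^{-1}(\gamma)$ is supported on $[\gamma,\, \varphi_{h}(\gamma)]$ and can lie well above $\gamma$; e.g., in the worst-case instance of \Cref{exp:LB} the bidder $H$ has $\gamma = 0$, a point mass $H(0) = 1/4$, and conditional values near $1$. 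So ``$s_{i}^{-1}(\gamma) \leq \gamma \leq s_{i}^{-1}(b)$'' cannot be the justification. Fortunately the repair stays inside your own argument: for $b > t = \gamma$ take any $v \in \supp(s_{i}^{-1}(b))$ and $w \in \supp(s_{i}^{-1}(\gamma))$ and suppose $w > v$; since $v > b > \gamma$ (your dichotomy part), $w$ is then itself a normal value, so \Cref{lem:bid_monotonicity} (\Cref{lem:bid_monotonicity:2}) gives $s_{i}(w) \geq s_{i}(v)$ almost surely, which is incompatible with $\gamma \in \supp(s_{i}(w))$ and $b \in \supp(s_{i}(v))$ with $b > \gamma$; hence $v \geq w$, i.e.\ $s_{i}^{-1}(b) \geq s_{i}^{-1}(\gamma)$ almost surely. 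With that substitution (and the measure-theoretic bookkeeping you already flag), the proof is complete and matches the paper's intended argument.
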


\begin{proof}
Under our definition of the $s_{i}^{-1}(b)$, {\bf \Cref{lem:inverse:dichotomy}} follows directly from \Cref{lem:dichotomy}, while {\bf \Cref{lem:inverse:monotone}} follows from a combination of \Cref{lem:dichotomy} and \Cref{lem:bid_monotonicity} (\Cref{lem:bid_monotonicity:2}).
\end{proof}

Below, we characterize (\Cref{lem:high_bid}) the inverses of {\em equilibrium} strategies $s_{i}^{-1}(b)$ for {\em normal} bids $b \in (\gamma,\, \lambda]$, using (\Cref{def:mapping}) the concept of {\em bid-to-value mappings} $\varphi_{i}(b)$. We prove that the inverse $s_{i}^{-1}(b)$ at a normal bid $b \in (\gamma,\, \lambda]$ is essentially a {\em deterministic} value. (The inverses at the boundary bid $s_{i}^{-1}(\gamma)$ are tricky and will be studied later.)

\begin{definition}[Bid-to-value mappings]
\label{def:mapping}
Given equilibrium bid distributions $\bB = \{B_{i}\}_{i \in [n]}$ (cf.\ \Cref{lem:bid_distribution,cor:bid_distribution}), define each {\em bid-to-value mapping} $\varphi_{i}(b)$ for $b \in (\gamma,\, \lambda)$ as follows:
\[
    \varphi_{i}(b)
    ~\eqdef~ b + \frac{\calB_{-i}(b)}{\calB'_{-i}(b)}
    ~=~ b + \Big(\sum_{k \in [n] \setminus \{i\}} \frac{B'_{k}(b)}{B_{k}(b)}\Big)^{-1}
    ~=~ b + \Big(\frac{\calB'(b)}{\calB(b)} - \frac{B'_{i}(b)}{B_{i}(b)}\Big)^{-1}
\]
It turns out that (\Cref{lem:high_bid:monotone} of \Cref{lem:high_bid}) each $\varphi_{i}(b)$ is an increasing function, so the domain can be extended to include the both endpoints $\varphi_{i}(\gamma) \eqdef \lim_{b \searrow \gamma} \varphi_{i}(b)$ and $\varphi_{i}(\lambda) \eqdef \lim_{b \nearrow \lambda} \varphi_{i}(b)$.
% Given a product bid distribution $\bB = \{B_{i}\}_{i \in [n]}$ that:
% \begin{itemize}
%     \item The first-order bid CDF $\calB(b) = \prod_{i \in [n]} B_{i}(b)$ is continuous and strictly increasing (so the right derivative $\calB'(b) > 0$) on the closed interval/support $b \in \supp(\calB) = [\gamma,\, \lambda]$.
%
%     \item Each competing bid CDF $\calB_{-i}(b) = \prod_{k \in [n] \setminus \{i\}} B_{i}(b)$ for $i \in [n]$ is also continuous and strictly increasing (so the right derivative $\calB'_{-i}(b) > 0$) on the closed interval $b \in [\gamma,\, \lambda]$.
% \end{itemize}
% Then each {\em bid-to-value mapping} $\varphi_{i}: [\gamma,\, \lambda] \mapsto \RR$ for $i \in [n]$ is given by
% \begin{align*}
%     \varphi_{i}(b)
%     ~=~ b + \frac{\calB_{-i}(b)}{\calB'_{-i}(b)}
%     ~=~ b + \bigg(\sum_{k \in [n] \setminus \{i\}} \frac{B'_{k}(b)}{B_{k}(b)}\bigg)^{-1}.
% \end{align*}
\end{definition}

The above bid-to-value mapping is known in the literature (\cite{HHT14} etc). However, it is usually defined only on the support of $B_i$. We define it and study its properties on the entire interval  $[\gamma,\, \lambda]$, which is very important to our argument.

\begin{lemma}[Normal bids]
\label{lem:high_bid}
For each bid-to-value mapping $\varphi_{i}(b)$, the following hold:
\begin{enumerate}[font = {\em\bfseries}]
    \item\label{lem:high_bid:inverse}
    \term[{\bf invertibility}]{value_invertibility}{\bf :}
    $s_{i}^{-1}(b) = \varphi_{i}(b)$ almost surely over the random inverse $s_{i}^{-1}$, for any normal bid $b \in \supp_{> \gamma}(B_{i})$.\footnote{More rigorously, (\Cref{cor:bid_distribution,rem:bid_distribution}) we shall exclude a zero-measure set $\mathbb{D} \subseteq (\gamma,\, \lambda]$, namely the indifferentiable points of the $B_{i}(b)$'s}

    \item\label{lem:high_bid:monotone}
    \term[{\bf monotonicity}]{value_monotonicity}{\bf :}
    $\varphi_{i}(b)$ is weakly increasing on the closed interval $b \in [\gamma,\, \lambda]$.

    \item\label{lem:high_bid:rational}
    {\bf rationality:}
    $\varphi_{i}(b) > b$ on the left-open right-closed interval $b \in (\gamma,\, \lambda]$.
    % for any normal bid $b > \gamma$,
    Further, $\varphi_{i}(\gamma) \geq \gamma$.
    % on the boundary $b = \gamma$.
\end{enumerate}
\end{lemma}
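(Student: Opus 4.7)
My plan is to establish the three parts sequentially, using throughout that on normal bids $b \in (\gamma, \lambda]$, the interim utility simplifies to $u_i(v, b) = (v - b)\, \calB_{-i}(b)$ via \Cref{cor:allocation}. For the invertibility (\Cref{lem:high_bid:inverse}), I would fix a normal bid $b = s_i(v) \in \supp_{>\gamma}(B_i)$ outside the zero-measure indifferentiable set $\mathbb{D}$ from \Cref{cor:bid_distribution}. Since the equilibrium bid $b$ maximizes $b' \mapsto (v - b')\, \calB_{-i}(b')$ over $b' \geq 0$, the first-order condition at $b$ reads $-\calB_{-i}(b) + (v - b)\, \calB'_{-i}(b) = 0$, which rearranges to $v = b + \calB_{-i}(b) / \calB'_{-i}(b) = \varphi_i(b)$. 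Hence the conditional distribution $s_i^{-1}(b)$ is almost surely a point mass at $\varphi_i(b)$.

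For the monotonicity (\Cref{lem:high_bid:monotone}), on $\supp_{>\gamma}(B_i)$ the claim follows immediately from the invertibility combined with \Cref{lem:inverse:monotone}. To extend monotonicity to all of $(\gamma, \lambda)$, I would handle maximal gaps $(\alpha, \beta)$ of $\supp(B_i)$ that lie inside $(\gamma, \lambda)$ by exploiting the equilibrium optimality of the boundary bids. The values $v_\alpha = \varphi_i(\alpha)$ and $v_\beta = \varphi_i(\beta)$ bidding at $\alpha$ and $\beta$ must each prefer their own bid to any $b \in (\alpha, \beta)$; the resulting global-max conditions on $u_i(v_\alpha, \cdot)$ and $u_i(v_\beta, \cdot)$, translated via the sign of $\partial u_i / \partial b = -\calB_{-i}(b) + (v - b)\, \calB'_{-i}(b)$, would sandwich $\varphi_i(b) \in [\varphi_i(\alpha),\, \varphi_i(\beta)]$ across the gap and propagate pointwise monotonicity within it. The endpoint values $\varphi_i(\gamma)$ and $\varphi_i(\lambda)$ are then well-defined monotone limits.

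For the rationality (\Cref{lem:high_bid:rational}), the strict inequality $\varphi_i(b) > b$ on $(\gamma, \lambda]$ follows directly from the formula: by \Cref{lem:bid_distribution:monotonicity}, both $\calB_{-i}(b) > 0$ and $\calB'_{-i}(b) > 0$ hold almost everywhere on $(\gamma, \lambda]$, so the correction term $\calB_{-i}(b) / \calB'_{-i}(b)$ is strictly positive. The bound $\varphi_i(\gamma) \geq \gamma$ then follows by passing to the limit $\varphi_i(\gamma) = \lim_{b \searrow \gamma} \varphi_i(b)$ and using the strict inequality just above $\gamma$ together with the monotonicity just established. The main obstacle I anticipate is the rigorous propagation of monotonicity within gaps of $\supp(B_i)$: the formula for $\varphi_i$ only sees the competing distribution $\calB_{-i}$ and makes no direct reference to what $B_i$ does in the gap, so the argument has to convert global-max conditions at the boundary of the gap into pointwise control of $\varphi_i$ inside, which may need a finer local analysis of $\calB_{-i}$'s density or an inductive sandwich over nested subintervals.
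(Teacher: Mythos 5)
Your treatments of invertibility and rationality match the paper's: the first-order condition $\frac{\partial u_i}{\partial b^*} = (v - \varphi_i(b^*))\cdot \calB'_{-i}(b^*)$ at an equilibrium bid gives $s_i^{-1}(b) = \varphi_i(b)$, and strict monotonicity of $\calB_{-i}$ gives $\varphi_i(b) > b$ off the boundary. The problem is Claim 2, exactly at the point you flag. Your plan controls $\varphi_i$ inside a gap $(\alpha,\beta)$ of $\supp(B_i)$ using only bidder $i$'s incentive constraints at the two gap endpoints. Those are two scalar inequalities; they bound the CDF $\calB_{-i}(b)$ inside the gap by envelopes of the form $(v_\beta - \beta)\calB_{-i}(\beta)/(v_\beta - b)$, but they say nothing pointwise about the logarithmic derivative $\calB'_{-i}(b)/\calB_{-i}(b)$, which is what $\varphi_i(b) = b + \calB_{-i}(b)/\calB'_{-i}(b)$ depends on. The density of $\calB_{-i}$ can oscillate freely underneath those envelopes, so no sandwich — let alone monotonicity, which is strictly stronger than a sandwich and is what the lemma asserts — can be extracted from bidder $i$'s endpoint optimality alone. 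The information that rules out such oscillation lives in the \emph{other} bidders' equilibrium conditions at the bids inside the gap, which your argument never invokes.

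The paper's route supplies precisely that missing ingredient. Inside a gap $B_i$ is flat, so $\varphi_i$ coincides there with the first-order mapping $\varphi_{\calB}(b) = b + \calB(b)/\calB'(b)$, and the heart of the proof is Fact~\ref{fact:pseudo}: $\varphi_{\calB}$ is increasing on $(\gamma,\lambda)$. This is shown by an equivalence between monotonicity of $b + B/B'$ and convexity of $1/B$ (Fact~\ref{fact:high_bid:convex}), followed by an algebraic identity expressing $\calR'' = (1/\calB)''$ as a nonnegative combination of the cross terms $R'_{k_1}R'_{k_2} \geq 0$ and the convex second derivatives $\calR''_{-k} \geq 0$ — the latter holding at each bid because, for every \emph{other} bidder $k$ supported there, invertibility plus bidding monotonicity (\Cref{lem:inverse}) makes $\varphi_k$ increasing on $\supp_{>\gamma}(B_k)$; the argument also uses that every bid in $[\gamma,\lambda]$ lies in the support of at least two bidders (\Cref{lem:bid_distribution}). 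So the convexity of the reciprocal of the \emph{product} $\calB = \prod_k B_k$ is the aggregation device that converts all bidders' local incentive information into pointwise control of $\varphi_i$ inside bidder $i$'s gaps. Without this (or some equivalent global mechanism), the extension of monotonicity beyond $\supp_{>\gamma}(B_i)$ — which is the part of the lemma the rest of the paper actually leans on, e.g.\ in \Cref{thm:valid} — does not go through, so the proposal as written has a genuine gap.
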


\begin{proof}
We safely assume $n \geq 2$. (Otherwise, \Cref{lem:high_bid} is vacuously true since the unique bidder always bids zero $[\gamma,\, \lambda] = \{0\}$ in any {\FirstPriceAuction}.)
% We check {\bf \Cref{lem:high_bid:inverse,lem:high_bid:monotone,lem:high_bid:rational}} one by one.

\vspace{.1in}
\noindent
{\bf \Cref{lem:high_bid:inverse}.}
Consider a specific inverse/value $s_{i}^{-1}(b) = v$ resulted from a normal bid $b \in \supp_{> \gamma}(B_{i})$. Regarding a normal bid $b^{*} > \gamma$, we can rewrite (\Cref{lem:allocation} (\Cref{lem:allocation:2}) and \Cref{cor:allocation}) the interim utility formula $u_{i}(v,\, b^{*}) = (v - b^{*}) \cdot \alloc_{i}(b^{*}) = (v - b^{*}) \cdot \calB_{-i}(b^{*})$ and deduce the partial derivative
\[
    \tfrac{\partial u_{i}}{\partial b^{*}}
    ~=~ -\calB_{-i}(b^{*}) + (v - b^{*}) \cdot \calB'_{-i}(b^{*})
    ~=~ \big(v - \varphi_{i}(b^{*})\big) \cdot \calB'_{-i}(b^{*}).
\]
Hence, to make (\Cref{def:bne_formal}) the equilibrium bid $b \in \supp_{> \gamma}(B_{i})$ optimize the formula $u_{i}(v,\, b^{*})$, we must have $\varphi_{i}(b) = v = s_{i}^{-1}(b)$. Here we use the fact that $B'_{-i}(b^{*})>0$ since $B_{-i}$ is strictly increasing (\Cref{lem:bid_distribution}).
Thus, {\bf \Cref{lem:high_bid:inverse}} follows then.

\vspace{.1in}
\noindent
{\bf \Cref{lem:high_bid:monotone}.}
For brevity, we consider a {\em twice differentiable}\footnote{In general, following the Lebesgue differentiation theorem \cite{L04}, the $\calB(b)$ is twice differentiable almost everywhere on $(\gamma,\, \lambda)$, except a zero-measure set $\mathbb{D}_{\calB} \subseteq (\gamma,\, \lambda)$ that can be handled by standard tools from real analysis.} first-order bid CDF $\calB(b) = B_{i}(b) \cdot \calB_{-i}(b)$.
Thus for $i \in [n]$, each competing bid CDF $\calB_{-i}(b)$ is also twice differentiable and each bid-to-value mapping $\varphi_{i}(b) = b + \calB_{-i}(b) \big/ \calB'_{-i}(b)$ is continuous on $b \in (\gamma,\, \lambda)$.

Following a combination of {\bf \Cref{lem:high_bid:inverse}} and \Cref{lem:inverse} (\Cref{lem:inverse:monotone}), each mapping $\varphi_{i}(b)$ is increasing on this equilibrium bid distribution $B_{i}$'s normal bid support $\supp_{> \gamma}(B_{i})$, so the task is to extend this monotonicity to the whole interval $b \in (\gamma,\, \lambda)$.
The proof relies on {\bf \Cref{fact:high_bid:convex,fact:pseudo}}.

% The following lemma gives a {\em geometric} interpretation of validness, which is more convenient for our later use.

\setcounter{fact}{0}

\begin{fact}
\label{fact:high_bid:convex}
A mapping $\varphi(b) = b + B(b) \big/ B'(b)$ is increasing iff the reciprocal function $1 \big/ B(b)$ is convex.
\end{fact}

\begin{proof}
Once again, we assume that the underlying CDF $B(b)$ is twice differentiable.
% (The twice-indifferentiable points can be handled using real analysis.)
The reciprocal function $1 \big/ B(b)$ has the second derivative $\frac{\d^{2}}{\d b^{2}} \big(\frac{1}{B}\big)
= \frac{\d}{\d b} \big(-\frac{B'}{B^{2}}\big)
= -\frac{B'' \cdot B - 2 \cdot (B')^{2}}{B^{2}}$.
% \[
%     \frac{\d^{2}}{\d b^{2}} \Big(\frac{1}{B}\Big)
%     ~=~ \frac{\d}{\d b} \Big(-\frac{B'}{B^{2}}\Big)
%     ~=~ -\frac{B'' \cdot B - 2 \cdot (B')^{2}}{B^{2}}.
% \]
So we can rewrite the first derivative of the mapping $\varphi'(b) = 1 + \frac{(B')^{2} - B \cdot B''}{(B')^{2}} = \frac{B^{2}}{(B')^{2}} \cdot \frac{\d^{2}}{\d b^{2}} \big(\frac{1}{B}\big)$.
% $\varphi(b) = b + B(b) \big/ B'(b)$ as follows:
% \begin{align*}
%     \frac{\d \varphi}{\d b}
%     ~=~ 1 + \frac{(B')^{2} - B \cdot B''}{(B')^{2}}
%     ~=~ \frac{B^{2}}{(B')^{2}} \cdot \frac{\d^{2}}{\d b^{2}} \Big(\frac{1}{B}\Big).
% \end{align*}
Precisely, the mapping is increasing $\varphi'(b) \geq 0$ iff the reciprocal function is convex $\frac{\d^{2}}{\d b^{2}} (\frac{1}{B}) \geq 0$. {\bf \Cref{fact:high_bid:convex}} follows then.
% Now it is clear that $\varphi'(b) \geq 0$ iff $\frac{\d^{2}}{\d b^{2}} (\frac{1}{B}) \geq 0$. Namely, the mapping $\varphi(b)$ is increasing iff the reciprocal function $1 \big/ B(b)$ is convex. {\bf \Cref{fact:high_bid:convex}} follows then.
\end{proof}

% instead of considering the monotonicity, it suffices to prove the next fact.

\begin{fact}
\label{fact:pseudo}
The first-order mapping $\varphi_{\calB}(b) \eqdef b + \calB(b) \big/ \calB'(b)$ is increasing on $b \in (\gamma,\, \lambda)$.
\end{fact}

\begin{proof}
Based on {\bf \Cref{fact:high_bid:convex}}, it suffices to show that the first-order reciprocal function $\calR(b) \eqdef 1 \big/ \calB(b)$ is convex.
For brevity, we denote the equilibrium/competing reciprocal functions $R_{i}(b) \eqdef 1 \big/ B_{i}(b)$ and $\calR_{-i}(b) \eqdef 1 \big/ \calB_{-i}(b)$ for $i \in [n]$. Also, we may simply write $R_{i} = R_{i}(b)$ etc.

% each equilibrium bid distribution $\bB = \{B_{i}\}_{i \in [n]}$ is supported at the considered

% We first establish {\bf \Cref{fact:pseudo}} under the assumption that all bid distributions $\bB = \{B_{i}\}_{i \in [n]}$ are supported at this bid, namely $b \in \supp_{> \gamma}(B_{i})$ for $i \in [n]$.
By elementary algebra, each competing reciprocal function $\calR_{-i}(b) = \prod_{k \in [n] \setminus \{i\}} R_{k}(b)$ has the second derivative
\begin{align*}
    \calR''_{-i}
    & ~=~ \sum_{k_{1} \neq k_{2} \neq i} \Big(R'_{k_{1}} R'_{k_{2}} \prod_{k_{3} \notin \{i,\, k_{1},\, k_{2}\}} R_{k_{3}}\Big)
    + \sum_{k \neq i} \Big(R''_{k} \prod_{k_{3} \notin \{i,\, k\}} R_{k_{3}}\Big) \\
    & ~=~ \calR_{-i} \cdot \Big(\sum_{k_{1} \neq k_{2} \neq i} \tfrac{R'_{k_{1}} R'_{k_{2}}}{R_{k_{1}} R_{k_{2}}} + \sum_{k \neq i} \tfrac{R''_{k}}{R_{k}}\Big).
    \qquad \mbox{factor out $\calR_{-i} = \prod_{k \neq i} R_{k}$}
\end{align*}
% Here the second step factors out the term $\calR_{-i}(b) = \prod_{k \in [n] \setminus \{i\}} R_{k}(b)$.
We thus deduce that
\begin{align*}
    \tfrac{1}{n - 1} \sum_{i \in [n]} R_{i} \cdot \calR''_{-i}
    & ~=~ \tfrac{\calR}{n - 1} \sum_{i \in [n]} \Big(\sum_{k_{1} \neq k_{2} \neq i} \tfrac{R'_{k_{1}} R'_{k_{2}}}{R_{k_{1}} R_{k_{2}}} + \sum_{k \neq i} \tfrac{R''_{k}}{R_{k}}\Big)
    && \mbox{$R_{i} \cdot \calR_{-i} = \calR$} \\
    & ~=~ \tfrac{n - 2}{n - 1} \cdot \calR \cdot \Big(\sum_{k_{1} \neq k_{2}} \tfrac{R'_{k_{1}} R'_{k_{2}}}{R_{k_{1}} R_{k_{2}}}\Big)
    ~+~ \calR \cdot \Big(\sum_{k \in [n]} \tfrac{R''_{k}}{R_{k}}\Big).
    && \mbox{combine the like terms}
\end{align*}
% Here the second step combines the like terms.
Moreover, the first-order reciprocal function $\calR(b) = \prod_{k \in [n]} R_{k}(b)$ has the second derivative
\begin{align*}
    \calR''
    & ~=~ \calR \cdot \Big(\sum_{k_{1} \neq k_{2}} \tfrac{R'_{k_{1}} R'_{k_{2}}}{R_{k_{1}} R_{k_{2}}} + \sum_{k \in [n]} \tfrac{R''_{k}}{R_{k}}\Big)
    && \mbox{\`{a} la the $\calR''_{-i}$ formulas} \\
    & ~=~ \tfrac{\calR}{n - 1} \sum_{k_{1} \neq k_{2}} \tfrac{R'_{k_{1}} R'_{k_{2}}}{R_{k_{1}} R_{k_{2}}}
    + \tfrac{1}{n - 1} \sum_{k \in [n]} R_{k} \cdot \calR''_{-k}
    ~\geq~ 0.
    && \mbox{substitute $\sum_{i \in [n]} \calR''_{-i} \cdot R_{i}$}
\end{align*}
% Here the first step is derived \`{a} la the above $\calR''_{-i}$ formulas. And the second step applies the above $\sum_{i \in [n]} \calR''_{-i} \cdot R_{i}$ formula. For this second derivative $R''$ formula, 
The first summation: $R'_{k_{1}},\, R'_{k_{2}} \leq 0$ for $k_{1} \neq k_{2} \in [n]$, so the product $R'_{k_{1}} R'_{k_{2}} \geq 0$. \\
% because each $R_{i}$ is the reciprocal of a CDF. \\
The second summation: $\calR''_{-k} \geq 0$ for $k \in [n]$. At the considered bid $b \in \supp_{> \gamma}(B_{i})$, each mapping is increasing $\varphi'_{i}(b) \geq 0$ ({\bf \Cref{lem:high_bid:inverse}}) and thus each reciprocal is convex $\calR''_{-i}(b) \geq 0$ ({\bf \Cref{fact:high_bid:convex}}).

Hence, the first-order reciprocal function $\calR(b)$ is convex on $b \in \supp_{> \gamma}(B_{i})$.
{\bf \Cref{fact:pseudo}} follows.
\end{proof}

{\bf \Cref{lem:high_bid:monotone}} follows directly: On the normal bid support $b \in \supp_{> \gamma}(B_{i})$, the mapping $\varphi_{i}(b) = s_{i}^{-1}(b)$ must be increasing (\Cref{lem:inverse:monotone} of \Cref{lem:inverse}).
Outside the support $b \in (\gamma,\, \lambda) \setminus \supp_{> \gamma}(B_{i})$, we have $\calB_{-i}(b) \big/ \calB'_{-i}(b) = \calB(b) \big/ \calB'(b)$, so the mapping $\varphi_{i}(b) = \varphi_{\calB}(b)$ is still increasing ({\bf \Cref{fact:pseudo}}).

\vspace{.1in}
\noindent
{\bf \Cref{lem:high_bid:rational}.}
Following \Cref{lem:bid_distribution}, each competing bid distribution $\calB_{-i}(b)$ is a continuous and strictly increasing function on $b \in [\gamma,\, \lambda]$.
Hence, by definition $\varphi_{i}(b) = b + \calB_{-i}(b) \big/ \calB'_{-i}(b) \geq b$; the equality is possible only at the boundary bid $b = \gamma$. {\bf \Cref{lem:high_bid:rational}} follows then. This finishes the proof.
\end{proof}

As an implication of \Cref{lem:high_bid}, the value distributions $\bV = \{V_{i}\}_{i \in [n]}$ can partially be reconstructed from the equilibrium bid distributions $\bB = \{B_{i}\}_{i \in [n]}$. (See \Cref{quantiles}.)
% ; see \Cref{fig:monopolist} for a visual aid.

\begin{corollary}[Reconstructions]
\label{cor:high_bid}
For each $i \in [n]$, the $v \geq \varphi_{i}(\gamma)$ part of the value distribution $V_{i}(v) = \Prx_{v_{i} \sim V_{i}} \big[v_{i} \leq v\big]$ can be reconstructed from the equilibrium bid distributions $\bB$ as follows.
\[
    V_{i}(v) ~=~ \Prx_{b_{i} \sim B_{i}}\big[(b_{i} \le \gamma) \vee (\varphi_{i}(b_{i}) \leq v) \big],
    \qquad\qquad \forall v \geq \varphi_{i}(\gamma).
\]
\end{corollary}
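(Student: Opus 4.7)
The plan is to decompose the event $\{v_i \leq v\}$ based on whether the coupled equilibrium bid $b_i = s_i(v_i) \sim B_i$ is normal ($b_i > \gamma$) or is boundary/low ($b_i \leq \gamma$), and to match the two pieces against the two disjuncts on the right-hand side. Throughout, $v \geq \varphi_i(\gamma) \geq \gamma$ by \Cref{lem:high_bid:rational}, so $v \geq \gamma$ comes for free.

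On the normal-bid event $\{b_i > \gamma\}$, the invertibility statement \Cref{lem:high_bid:inverse} gives $v_i = \varphi_i(b_i)$ almost surely, hence $\{v_i \leq v,\, b_i > \gamma\} = \{\varphi_i(b_i) \leq v,\, b_i > \gamma\}$ up to a null set; this is exactly the ``$\varphi_i(b_i) \leq v$'' disjunct restricted to normal bids. On the complementary event $\{b_i \leq \gamma\}$, I would show $v_i \leq \varphi_i(\gamma) \leq v$ almost surely, so the whole event sits inside $\{v_i \leq v\}$ and pairs with the ``$b_i \leq \gamma$'' disjunct. For $b_i < \gamma$, the dichotomy \Cref{lem:inverse:dichotomy} of \Cref{lem:inverse} immediately gives $v_i \leq \gamma \leq \varphi_i(\gamma)$. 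For the boundary bid $b_i = \gamma$, I would combine the monotonicity \Cref{lem:inverse:monotone} of \Cref{lem:inverse} with \Cref{lem:high_bid:inverse} to obtain, for every $b > \gamma$ in $\supp(B_i)$, the chain $s_i^{-1}(\gamma) \leq s_i^{-1}(b) = \varphi_i(b)$, and then take the right limit $b \searrow \gamma$ to conclude $s_i^{-1}(\gamma) \leq \varphi_i(\gamma)$ from the definition $\varphi_i(\gamma) \eqdef \lim_{b \searrow \gamma}\varphi_i(b)$.

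Combining the two disjoint regimes yields
\[
    V_i(v) \;=\; \Pr[\,b_i \leq \gamma\,] \,+\, \Pr[\,b_i > \gamma,\; \varphi_i(b_i) \leq v\,] \;=\; \Prx_{b_i \sim B_i}\big[(b_i \leq \gamma) \vee (\varphi_i(b_i) \leq v)\big],
\]
where the final equality just repackages the two disjoint pieces via a logical OR (on $\{b_i \leq \gamma\}$ the first disjunct already fires, so $\varphi_i(b_i)$ need not be evaluated there). The main obstacle is the boundary-bid case $b_i = \gamma$: \Cref{lem:high_bid:inverse} does not apply pointwise at $b = \gamma$, so one has to pass through the right-limit definition of $\varphi_i(\gamma)$. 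In the degenerate sub-case where $\supp_{>\gamma}(B_i)$ fails to accumulate at $\gamma$ (so the limit argument above degrades), a short neighborhood deviation argument based on the first-order condition $\tfrac{\partial}{\partial b} u_i(v_i, b)\big|_{b = \gamma^+} = \calB'_{-i}(\gamma)\cdot(v_i - \varphi_i(\gamma)) \leq 0$ at the optimal bid $b_i = \gamma$ still forces $v_i \leq \varphi_i(\gamma)$ almost surely. Everything else is an immediate consequence of \Cref{lem:high_bid:inverse}, \Cref{lem:dichotomy}, and \Cref{lem:inverse}.
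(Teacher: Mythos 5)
Your proposal is correct and is essentially the proof the paper intends: the corollary is left as a direct implication of \Cref{lem:high_bid}, i.e., invertibility $v_i=\varphi_i(b_i)$ on normal bids plus the dichotomy for low bids, which is exactly your decomposition. Your treatment of the boundary bid $b_i=\gamma$ — forcing $s_i^{-1}(\gamma)\leq\varphi_i(\gamma)$ via bidding monotonicity and the right-limit definition of $\varphi_i(\gamma)$ (or a neighborhood deviation argument) — is precisely the boundedness fact the paper itself later derives from the monotonicity of $\varphi_i$ in \Cref{lem:monopolist}, so no gap remains.
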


In the statement of \Cref{cor:high_bid}, a mapping $\varphi_{i}(b_{i})$ is undefined when $b_{i} < \gamma$, but since the first condition $(b_{i} \leq \gamma)$ already holds, we think of the disjunction $(b_{i} \leq \gamma) \vee (\varphi_{i}(b_{i}) \leq v)$ as being satisfied.

\begin{figure}[t]
    \centering
    \includegraphics[width = .9\textwidth]{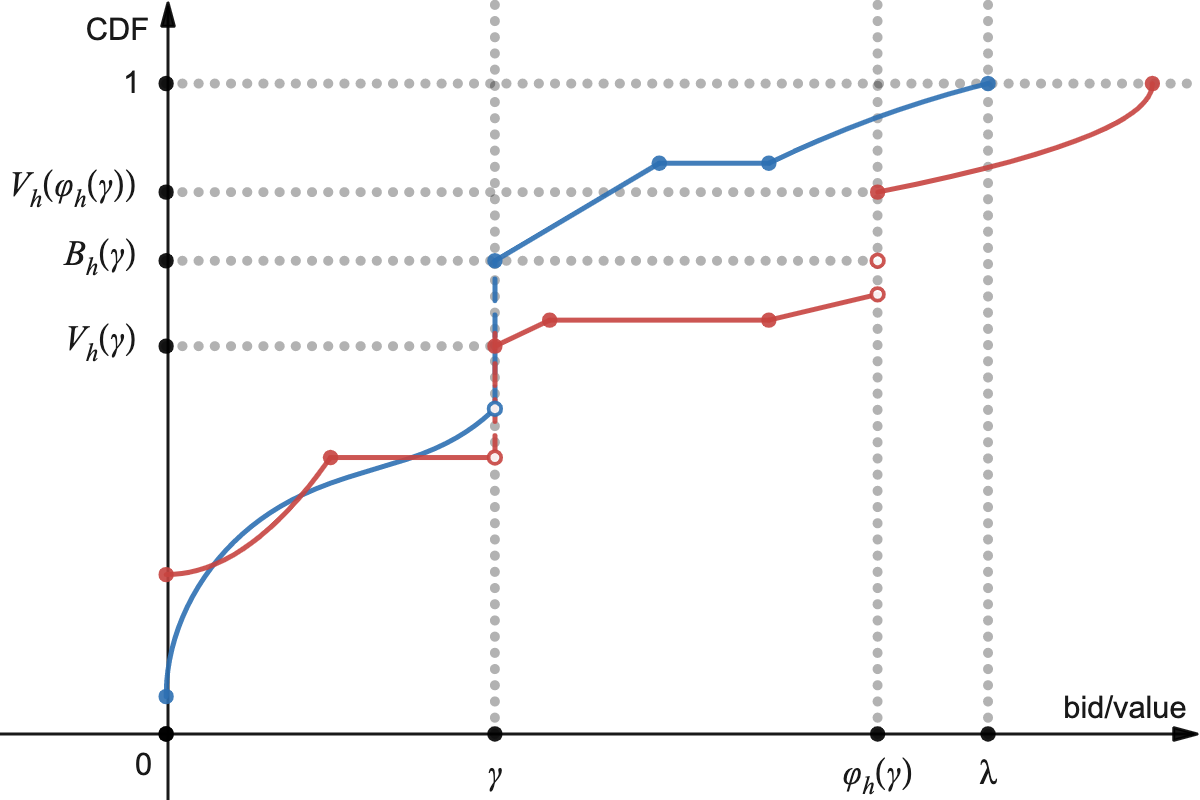}
    \caption{Diagram of the reconstruction of a monopolist $h$'s value distributions $V_{h}$ (\Cref{def:monopolist,lem:value_dist}).
    Notice that $V_{i}(\gamma) \leq B_{i}(\gamma) \leq V_{i}(\varphi_{i}(\gamma))$.
    By \Cref{lem:dichotomy,lem:bid_monotonicity,lem:high_bid}: \\
    (i)~A normal value $v_{h} > \varphi_{h}(\gamma)$ induces a normal bid $s_{h}(v_{h}) > \gamma$. \\
    (ii)~The normal value $v_{h} = \varphi_{h}(\gamma)$ induces a normal bid $s_{h}(v_{h}) > \gamma$ or the boundary bid $s_{h}(v_{h}) = \gamma$, both of which are possible especially when there is a probability mass $\Prx_{v_{h} \sim V_{h}} [v_{h} = \varphi_{h}(\gamma)] > 0$. \\
    (iii)~A normal value $v_{h} \in (\gamma,\, \varphi_{h}(\gamma))$ induces the boundary bid $s_{h}(v_{h}) = \gamma$. \\
    (iv)~The boundary value $v_{h} = \gamma$ induces a low bid $s_{h}(v_{h}) < \gamma$ or the boundary bid $s_{h}(v_{h}) = \gamma$, both of which are possible especially when there is a probability mass $\Prx [v_{h} = \gamma] > 0$. \\
    (v)~A low value $v_{h} < \gamma$ induces a bid $s_{h}(v_{h})$ that yields the optimal zero utility $u_{h}(v_{h},\, s_{h}(v_{h})) = 0$ but is otherwise arbitrary; even an overbid $s_{h}(v_{h}) > v_{h}$ is fine.
    \label{fig:monopolist}}
\end{figure}

\Cref{cor:high_bid} enables us to reconstruct the $v \geq \varphi_{i}(\gamma)$ part of each value distribution $V_{i}$.
Then how about the $v < \varphi_{i}(\gamma)$ part?
We shall consider two subparts separately.
(i)~The low value $v < \gamma$ subpart of a value distribution $V_{i}$ cannot be reconstructed from the bid distributions $\bB$. That is, an equilibrium strategy $b_{i} = s_{i}(v)$ for a low value $v < \gamma$ is arbitrary as long as it yields the optimal zero utility $u_{i}(v,\, b_{i}) = 0$. Luckily, these low value $v < \gamma$ subparts (\Cref{lem:auction_welfare,lem:optimal_welfare}) turn out to have no contribution to neither of the auction/optimal {\SocialWelfares}, so we can ignore them.
(ii)~The $\gamma \leq v < \varphi_{i}(\gamma)$ subpart of a value distribution $V_{i}$ {\em has} contributions to the auction/optimal {\SocialWelfares} but also cannot be reconstructed.
This subpart always induces to the boundary bid $b_{i} = s_{i}(v) = \gamma$, following \blackref{value_monotonicity} (\Cref{lem:dichotomy,lem:high_bid}).

% \blue{By \blackref{value_monotonicity}, we known that if bidder $h$'s value $v_h\in(\gamma,\varphi_{h}(\gamma)]$, he must bid $s_h(v_h)=\gamma$. Thus, a monopolist has a non-zero probability to bid $\gamma$. We have the following key properties regarding monopolists and the tie-breaking rule at bid $\gamma$.}

As a remedy, we introduce the concept of {\em monopolists} (\Cref{def:monopolist}); cf.\ \Cref{fig:monopolist} for a visual aid.
Here we notice that $V_{i}(\gamma) \leq B_{i}(\gamma) = \Pr_{b_{i} \sim B_{i}} [b_{i} \leq \gamma] \leq V_{i}(\varphi_{i}(\gamma))$.
I.e., the first inequality holds since (\Cref{lem:dichotomy}) a normal bid $b_{i} > \gamma$ induces a higher normal value $v_{i} = s_{i}^{-1}(b_{i}) = \varphi_{i}(b_{i}) > b_{i} > \gamma$.
The second inequality directly follows from \Cref{cor:high_bid}.

\begin{definition}[Monopolists]
\label{def:monopolist}
A bidder $h \in [n]$ is called a {\em monopolist} when the probabilities of taking normal bids/values are unequal $1 - B_{h}(\gamma) < 1 - V_{h}(\gamma)$ or equivalently, when the probability of taking the boundary bid yet a normal value is nonzero $\Pr_{b_{h},\, s_{h}^{-1}} \big[(b_{h} = \gamma) \wedge (s_{h}^{-1}(b_{h}) > \gamma)\big] > 0$.
\end{definition}

It is easier to understand this definition from the perspective of quantiles (\Cref{quantiles}): A bidder $h \in [n]$ is a monopolist when there are some quantiles $q$ such that $V_h^{-1}(q)>\gamma$ but $B_h^{-1}(q)=\gamma$.

% the probabilities of taking normal bids $\Prx[b_{h} > \gamma] = 1 - B_{h}(\gamma)$ and of taking normal values $\Prx[v_{h} > \gamma] = 1 - V_{h}(\gamma)$ are {\em unequal}, namely $1 - B_{h}(\gamma) < 1 - V_{h}(\gamma)$.
% A bidder $h \in [n]$ is called a {\em monopolist} when the has non-trivial probability density on $(\gamma,\varphi_{h}(\gamma)]$. \yj{$\Prx[b_{h} > \gamma] = 1 - B_{h}(\gamma) < 1 - V_{h}(\gamma) = \Prx[v_{h} > \gamma]$.}

\Cref{lem:monopolist} presents the properties about the monopolists.
(From the statement of \Cref{lem:monopolist}, we can infer that there is NO monopolist when the first-order bid $\max(\bb) \sim \calB$ has no probability mass at the boundary bid $\calB(\gamma) = 0$.)

\begin{lemma}[Monopolists]
\label{lem:monopolist}
There exists at most one monopolist $h$. If existential:
\begin{itemize}% [font = {\em\bfseries}]
    \item \term[{\bf monopoly}]{mono_monopoly}{\bf :}
    The probability of a boundary first-order bid $\big\{ \max(\bb) = \gamma \big\}$ is nonzero $\calB(\gamma) > 0$.
    Conditioned on the tiebreaker $\big\{ b_{h} = \max(\bb) = \gamma \big\}$, the monopolist wins $\alloc(\bb) = h$ almost surely over the random allocation $\alloc$.
    
    \item \term[{\bf boundedness}]{mono_boundedness}{\bf :}
    A boundary bid $b_{h} = \gamma$ induces a bounded random value $s_{h}^{-1}(\gamma) \in [\gamma,\, \varphi_{h}(\gamma)]$.
\end{itemize}
\end{lemma}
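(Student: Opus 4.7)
The plan is to organize the whole proof around one key identity: for a monopolist $h$, the interim allocation at the boundary bid satisfies $\alloc_{h}(\gamma) = \calB_{-h}(\gamma)$, i.e., the monopolist wins all ties at $\gamma$ almost surely. Both uniqueness and the two bulleted claims will then drop out from this identity by short optimality arguments, so my first step is to establish it. By \Cref{def:monopolist}, the event $\{b_{h} = \gamma\} \wedge \{v_{h} > \gamma\}$ has positive probability, so equilibrium optimality applies to such $(v_{h}, \gamma)$: for every $\epsilon > 0$,
\[
    (v_{h} - \gamma) \cdot \alloc_{h}(\gamma) ~\geq~ (v_{h} - \gamma - \epsilon) \cdot \calB_{-h}(\gamma + \epsilon),
\]
where the RHS is rewritten using \Cref{cor:allocation}. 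Letting $\epsilon \searrow 0$ and invoking right-continuity of the CDF $\calB_{-h}$ at $\gamma$ yields $\alloc_{h}(\gamma) \geq \calB_{-h}(\gamma)$; the reverse inequality from \Cref{lem:allocation} then gives equality.

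For uniqueness, I would argue by contradiction: if two bidders $h_{1} \neq h_{2}$ were both monopolists, independence would yield $\Pr[b_{h_{1}} = b_{h_{2}} = \gamma] > 0$, but the identity above would force each of them to win the tie against the other with probability one on this event, which is impossible. So at most one monopolist $h$ exists, and the identity is exactly the tie-breaking half of \blackref{mono_monopoly}. The remaining claim $\calB(\gamma) > 0$ I would prove by yet another contradiction: if $\calB_{-h}(\gamma) = 0$, then $u_{h}(v_{h}, \gamma) = 0$, yet by the strict monotonicity of $\calB_{-h}$ on $[\gamma, \lambda]$ (\Cref{lem:bid_distribution}) one has $\calB_{-h}(\gamma + \epsilon) > 0$ and thus a strictly positive $u_{h}(v_{h}, \gamma + \epsilon)$ for small $\epsilon > 0$, contradicting optimality of bidding $\gamma$. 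Hence $\calB_{-h}(\gamma) > 0$, and combining with $B_{h}(\gamma) \geq \Pr[b_{h} = \gamma] > 0$ gives $\calB(\gamma) = B_{h}(\gamma) \cdot \calB_{-h}(\gamma) > 0$.

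For \blackref{mono_boundedness}, the lower bound $s_{h}^{-1}(\gamma) \geq \gamma$ is immediate: a hypothetical $v < \gamma$ with $s_{h}(v) = \gamma$ would earn strictly negative utility $(v - \gamma) \cdot \calB_{-h}(\gamma) < 0$, while any low bid $< \gamma$ attains zero utility (\Cref{lem:dichotomy}), contradicting optimality. For the upper bound $s_{h}^{-1}(\gamma) \leq \varphi_{h}(\gamma)$, I would consider a hypothetical $v > \varphi_{h}(\gamma)$ with $s_{h}(v) = \gamma$: by continuity of $\varphi_{h}$ at $\gamma$ (the extension in \Cref{def:mapping}), there is $\delta > 0$ such that $\varphi_{h}(b) < v$ throughout $(\gamma, \gamma + \delta)$, so the partial derivative $\partial_{b} u_{h}(v, b) = (v - \varphi_{h}(b)) \cdot \calB'_{-h}(b)$ computed in the proof of \Cref{lem:high_bid} is strictly positive there; integrating, $u_{h}(v, b^{*}) > \lim_{b \searrow \gamma} u_{h}(v, b) = (v - \gamma) \cdot \calB_{-h}(\gamma) \geq u_{h}(v, \gamma)$ for any $b^{*} \in (\gamma, \gamma + \delta)$, again contradicting optimality.

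The main obstacle will be the subtlety at the boundary $\gamma$: \Cref{lem:bid_distribution} guarantees continuity of bid distributions only on $(\gamma, \lambda]$, so $\calB_{-h}$ genuinely may carry a probability mass at $\gamma$ (e.g.\ contributed by low-value bidders $k \neq h$), and one must carefully distinguish $\calB_{-h}(\gamma)$ from the right-limit $\calB_{-h}(\gamma^{+})$ when passing to limits. Fortunately the two coincide by right-continuity of CDFs, which is exactly what makes the $\epsilon \searrow 0$ step in the first paragraph go through cleanly.
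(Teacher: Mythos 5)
Your proposal is correct and follows essentially the same route as the paper: the tie-winning claim is obtained by an upward-deviation optimality argument (your explicit $\epsilon \searrow 0$ limit is just a quantitative rendering of the paper's ``Neighborhood Deviation Argument''), uniqueness comes from two tie-winners being impossible for one item, $\calB(\gamma) > 0$ from the positivity of the monopolist's optimal utility, and \blackref{mono_boundedness} from the same negative-utility and monotonicity considerations (you rederive the upper bound $s_{h}^{-1}(\gamma) \leq \varphi_{h}(\gamma)$ by a local deviation computation where the paper simply cites \blackref{value_monotonicity}, which is the same content). Your closing remark on right-continuity at $\gamma$ correctly handles the only genuine subtlety.
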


\begin{proof}
Suppose there is a monopolist $h \in [n]$.
We first verify \blackref{mono_monopoly} and \blackref{mono_boundedness}, and then prove that there is no other monopolist.

For some normal value $v_{h} = s_{h}^{-1}(b_{h}) > \gamma$, (\Cref{def:monopolist,lem:dichotomy}) the boundary bid $b_{h} = \gamma$ yields the optimal utility $u_{h}(v_{h},\, \gamma) = (v_{h} - \gamma) \cdot \alloc_{h}(\gamma) = \calU_{h}(v_{h}) > 0$, so the interim allocation is nonzero $\alloc_{h}(\gamma) > 0$.
The boundary first-order bid $\big\{ \max(\bb) = \gamma \big\}$ occurs with probability $\calB(\gamma) = B_{h}(\gamma) \cdot \calB_{-h}(\gamma) \geq \Prx_{b_{h}} [b_{h} = \gamma] \cdot \alloc_{h}(\gamma) > 0$.
Assume for contradiction that this monopolist $h \in [n]$ loses the tiebreaker $\big\{ b_{h} = \max(\bb) = \gamma \big\}$ with a nonzero probability $> 0$.
Based on the \blackref{pro:deviation}, some (infinitesimally) higher deviation bid $b_{h}^{*}> b_{h} = \gamma$ gives a strictly better utility $u_{h}(v_{h},\, b_{h}^{*}) > u_{h}(v_{h},\, \gamma) > 0$, which contradicts (\Cref{def:bne_formal}) the optimality of this equilibrium bid $b_{h} = \gamma$.
Reject our assumption: This monopolist $h \in [n]$ always wins the tiebreaker $\big\{ b_{h} = \max(\bb) = \gamma \big\}$ (\blackref{mono_monopoly}).

A boundary bid $b_{h} = \gamma$ induces a bounded random value $\gamma \leq s_{h}^{-1}(\gamma) \leq \varphi_{h}(\gamma)$ (\blackref{mono_boundedness}). (i)~This monopolist $h \in [n]$ can never take a low value $s_{h}^{-1}(\gamma) < \gamma$. The tiebreaker $\big\{ b_{h} = \max(\bb) = \gamma \big\}$ occurs with a nonzero probability $> 0$ and, suppose so, the monopolist wins $\alloc(\bb) = h$ almost surely over the random allocation $\alloc$.
Hence, a low value $s_{h}^{-1}(\gamma) < \gamma$ together with a boundary bid $b_{h} = \gamma$ yields a negative utility $< 0$, which is impossible.
(ii)~The value is also upper bounded $s_{h}^{-1}(\gamma) \leq \lim_{b \searrow \gamma} s_{h}^{-1}(b) = \lim_{b \searrow \gamma} \varphi_{h}(b) = \varphi_{h}(\gamma)$, given bidding monotonicity (\Cref{lem:bid_monotonicity:2} of \Cref{lem:bid_monotonicity}) and \blackref{value_invertibility} (\Cref{lem:high_bid:inverse} of \Cref{lem:high_bid}).

This monopolist $h \in [n]$ is the unique one. Otherwise, (\Cref{def:monopolist} and \blackref{mono_monopoly}) at least two monopolists $h \neq k \in [n]$ tie for the boundary first-order bid $\big\{ b_{h} = b_{k} = \max(\bb) = \gamma \big\}$ with a nonzero probability $> 0$ and, suppose so, BOTH win $\alloc(\bb) = \{h,\, k\}$ almost surely over the random allocation $\alloc$.
However, this is impossible because we are auctioning ONE item.
\Cref{lem:monopolist} follows then.
\end{proof}

Conceivably, we shall reconstruct the $\gamma \leq v < \varphi_{h}(\gamma)$ subpart just for the value distribution $V_{h}$ of the unique monopolist $h$ (if existential) -- We will show this later in \Cref{lem:value_dist}. To reconstruct the $V_{h}$, we introduce (\Cref{def:conditional_value}) the concept of {\em conditional value distributions}.

\begin{definition}[Conditional value distributions]
\label{def:conditional_value}
Regarding the monopolist $h$'s truncated random value $\max(s_{h}^{-1}(b_{h}),\, \gamma)$ for $b_{h} \sim B_{h}$, define the {\em conditional value distribution} $P(v)$ as follows.
\[
    P(v) ~\eqdef~ \Prx_{b_{h} \sim B_{h},\, s_{h}^{-1}} \big[\, \max(s_{h}^{-1}(b_{h}),\, \gamma) \leq v \,\bigmid\, b_{h} \leq \gamma \,\big],\qquad\qquad \forall v \geq 0.
\]
% Let $\tilde{v}_{h} = \max(v_h,\, \gamma)$ be the truncated random value of the monopolist $h$. Define $P$ to be the conditional value distribution for $\tilde{v}_h$ under the condition that $\tilde{v}_h\leq \varphi_{h}(\gamma)$.
In the case of no monopolist $h = \emptyset$, the $P(v) \eqdef \indicator(v \geq \gamma)$ always takes the boundary value $\gamma$.
\end{definition}

\begin{remark}[Conditional value distributions]
\label{rem:conditional_value}
The random value $s_{h}^{-1}(b_{h})$ for $b_{h} \sim B_{h}$ exactly follows the distribution $V_{h}$.
With the help of \Cref{fig:monopolist}, we can see that the CDF $P(v)$ is given by
\begin{align*}
    P(v) ~=~
    \begin{cases}
    0 & \forall v \in [0,\, \gamma) \\
    V_{h}(v) / B_{h}(\gamma) & \forall v \in [\gamma,\, \varphi_{h}(\gamma)) \\
    1 & \forall v \in [\varphi_{h}(\gamma),\, +\infty]
    \end{cases}.
\end{align*}
\end{remark}

In general, the $P(v)$ can be any $[\gamma,\, \varphi_{h}(\gamma)]$-supported distribution.
Given this extra information, we can reconstruct the value distributions $\bV = \{V_{i}\}_{i \in [n]}$ (\Cref{lem:value_dist}) except the unimportant low value $v < \gamma$ parts.
% that have no effect on the auction/optimal {\SocialWelfares}.

\begin{lemma}[Reconstructions]
\label{lem:value_dist}
For each $i \in [n]$, the $v \geq \gamma$ part of the value distribution $V_{i}(v) = \Prx_{v_{i} \sim V_{i}} \big[v_{i} \leq v\big]$ can be reconstructed from the equilibrium bid distributions $\bB$ plus the conditional value distribution $P$ as follows.
\begin{itemize}
    \item $V_{i}(v) = \Prx_{b_{i} \sim B_{i}} \big[(b_{i} \leq \gamma) \vee (\varphi_{i}(b_{i}) \leq v) \big]$ in the case of a non-monopoly bidder $i \in [n] \setminus \{h\}$.

    \item $V_h(v) = P(v) \cdot \Prx_{b_h \sim B_{h}} \big[(b_h \leq \gamma) \vee (\varphi_h(b_h) \leq v) \big]$ in the case of the monopolist $h$.
    % (if existential).
\end{itemize}
\end{lemma}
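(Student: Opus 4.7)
The plan is to verify both formulas by splitting the domain $v \geq \gamma$ into the two subintervals $[\gamma,\, \varphi_{i}(\gamma))$ and $[\varphi_{i}(\gamma),\, +\infty)$, with the main leverage being the monotonicity and invertibility of the bid-to-value mapping $\varphi_{i}$ (\Cref{lem:high_bid}) together with the monopolist dichotomy (\Cref{def:monopolist,lem:monopolist}).

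On the subinterval $v \geq \varphi_{i}(\gamma)$, \Cref{cor:high_bid} yields the identity directly for a non-monopolist $i \neq h$. For the monopolist $h$, I would first observe that the boundedness clause \blackref{mono_boundedness} of \Cref{lem:monopolist} places $s_{h}^{-1}(\gamma) \in [\gamma,\, \varphi_{h}(\gamma)]$ almost surely, and for $b_{h} < \gamma$ (\Cref{lem:inverse}) one has $s_{h}^{-1}(b_{h}) \leq \gamma$; hence conditioned on $b_{h} \leq \gamma$, the truncation $\max(s_{h}^{-1}(b_{h}),\, \gamma)$ is almost surely at most $\varphi_{h}(\gamma) \leq v$, so $P(v) = 1$ and the monopolist formula also reduces to \Cref{cor:high_bid}.

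The more delicate subinterval is $v \in [\gamma,\, \varphi_{i}(\gamma))$. Here \blackref{value_monotonicity} simplifies the right-hand side: since $\varphi_{i}$ is weakly increasing, any normal bid $b_{i} > \gamma$ gives $\varphi_{i}(b_{i}) \geq \varphi_{i}(\gamma) > v$, so the disjunction $(b_{i} \leq \gamma) \vee (\varphi_{i}(b_{i}) \leq v)$ collapses to $(b_{i} \leq \gamma)$, with probability $B_{i}(\gamma)$. For a non-monopolist $i$, \Cref{def:monopolist} forces $V_{i}(\gamma) = B_{i}(\gamma)$, and it remains to show $V_{i}$ is constant on $[\gamma,\, \varphi_{i}(\gamma))$: a value $v_{i} \in (\gamma,\, \varphi_{i}(\gamma))$ would bid $s_{i}(v_{i}) \geq \gamma$ by \Cref{lem:dichotomy}, yet a normal bid would force $v_{i} = \varphi_{i}(s_{i}(v_{i})) \geq \varphi_{i}(\gamma)$ via \Cref{lem:high_bid} (contradiction), while the boundary bid $\gamma$ paired with a normal value violates the non-monopoly condition. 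For the monopolist $h$, unfolding \Cref{def:conditional_value} gives $P(v) \cdot B_{h}(\gamma) = \Pr[v_{h} \leq v,\, s_{h}(v_{h}) \leq \gamma]$ for $v \geq \gamma$; reapplying the same $\varphi_{h}$-monotonicity shows that $v_{h} \leq v < \varphi_{h}(\gamma)$ forces $s_{h}(v_{h}) \leq \gamma$, so the right-hand side simplifies to $V_{h}(v)$, matching the formula.

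The main obstacle is maintaining consistent bookkeeping at the boundary bid $\gamma$: the non-monopoly case exploits the zero-probability of ``boundary bid with normal value'' to equate $V_{i}(\gamma)$ with $B_{i}(\gamma)$ and to suppress all mass in $(\gamma,\, \varphi_{i}(\gamma))$, whereas the monopoly case uses $P$ to encode precisely this nontrivial conditional mass; both cases nonetheless collapse the right-hand side on $[\gamma,\, \varphi_{i}(\gamma))$ through the same $\varphi_{i}$-monotonicity argument.
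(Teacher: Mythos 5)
Your proposal is correct and follows essentially the same route as the paper's proof: invoke \Cref{cor:high_bid} for $v \geq \varphi_{i}(\gamma)$, then on $[\gamma,\, \varphi_{i}(\gamma))$ use \blackref{value_monotonicity} to collapse the disjunction to $\{b_{i} \leq \gamma\}$, argue that $V_{i}$ is constant $= B_{i}(\gamma)$ there for a non-monopolist, and unfold \Cref{def:conditional_value} to get $P(v) \cdot B_{h}(\gamma) = V_{h}(v)$ for the monopolist. The only difference is that you spell out the monopolist's $v \geq \varphi_{h}(\gamma)$ regime (via $P(v) = 1$) explicitly, which the paper delegates to \Cref{rem:conditional_value}; this is a matter of bookkeeping, not of substance.
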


\begin{figure}[t]
    \centering
    \subfloat[\label{fig:value_dist:monopolist}
    The monopolist $h$]{
    \includegraphics[width = .49\textwidth]
    {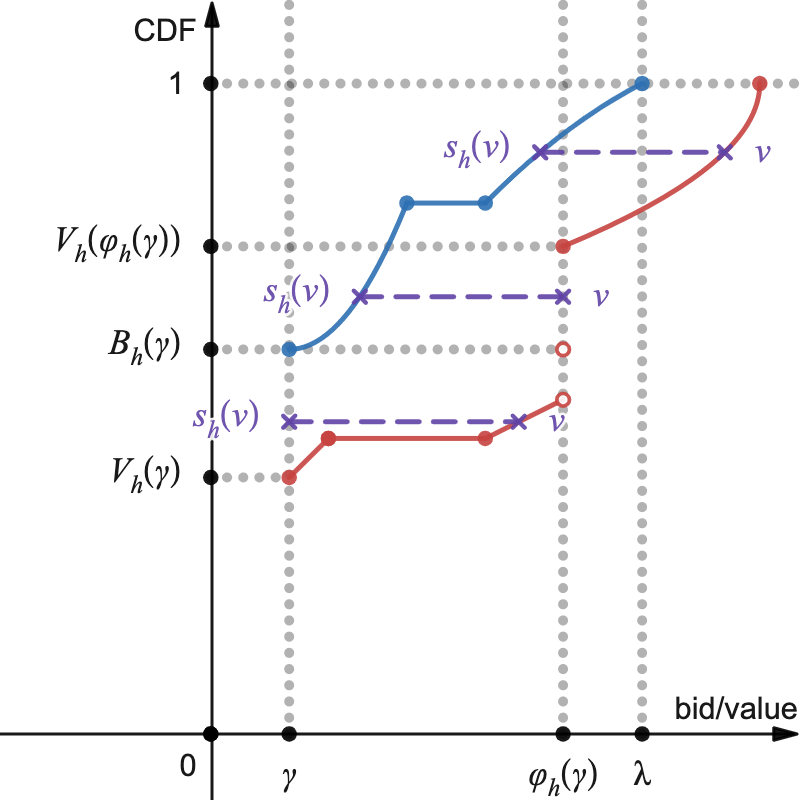}}
    \hfill
    \subfloat[\label{fig:value_dist:non}
    {A non-monopoly bidder $i \in [n] \setminus \{h\}$}]{
    \includegraphics[width = .49\textwidth]
    {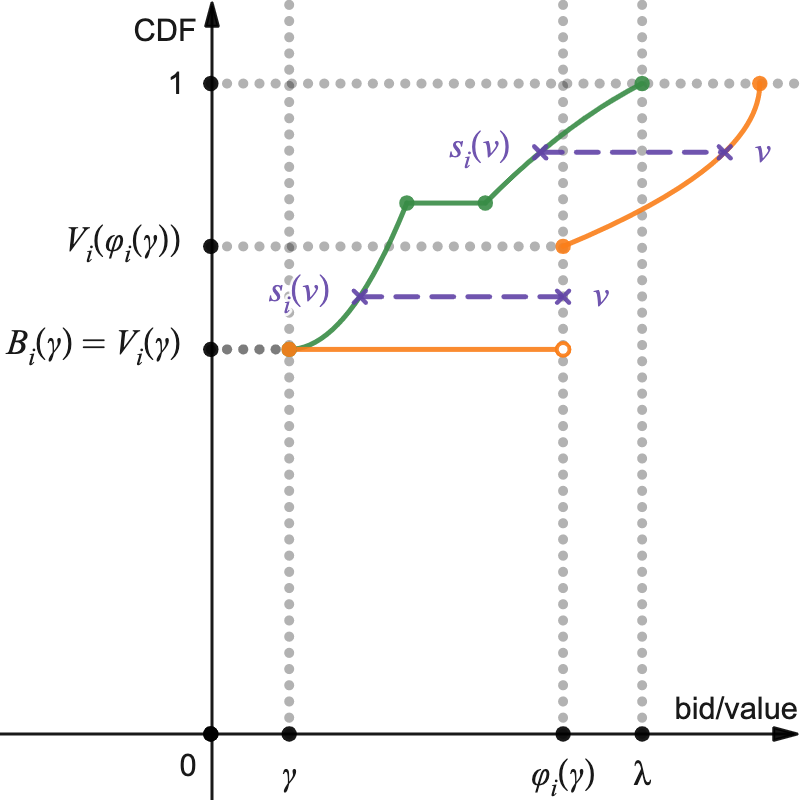}}
    \caption{Diagram of (\Cref{lem:value_dist}) the reconstruction of the value distributions $\bV$.
    \label{fig:value_dist}}
\end{figure}

\begin{proof}
See \Cref{fig:value_dist} for a visual aid.
% , namely \Cref{fig:value_dist:monopolist} for the monopolist $h$ and \Cref{fig:value_dist:non} for a non-monopoly bidder $i \in [n] \setminus \{h\}$.
% For each non-monopoly bidder $i \in [n] \setminus \{h\}$, (\Cref{lem:dichotomy,def:monopolist}) a normal bid $b_{i} > \gamma$ induces a normal value $v_{i} = s_{i}^{-1}(b_{i}) = \varphi_{i}(b_{i}) > \gamma$ and vice versa.
% has a zero probability measure on the left-open right-closed interval $(\gamma,\varphi_{i}(\gamma)]$
% We derive the claimed ``non-monopoly'' formula $V_{i}(v)$ 
% , the WHOLE normal bid $> \gamma$ part of the value distribution $V_{i}$ can be reconstructed by naively generalizing \Cref{cor:high_bid}:
Regarding a non-monopoly bidder $i \in [n] \setminus \{h\}$, (\Cref{lem:dichotomy,def:monopolist}; cf.\ \Cref{fig:value_dist:non}) a normal bid $> \gamma$ induces a normal value $> \gamma$ and vice versa.
By \blackref{value_monotonicity}, such a normal value $v_{i} = \varphi_{i}(b_{i})$ for $b_{i} > \gamma$ is at least $v_{i} \geq \varphi_{i}(\gamma)$. Hence, on the OPEN interval $v \in (\gamma,\, \varphi_{i}(\gamma))$, the value distribution $V_{i}(v)$ has no density and keeps a constant function $V_{i}(v) = B_{i}(\gamma)$. Given these, the reconstruction in \Cref{cor:high_bid} can be generalized naively: For $v \in [\gamma,\, \varphi_{i}(\gamma))$, we have $V_{i}(v) = B_{i}(\gamma) = \Prx_{b_{i} \sim B_{i}}\big[ b_{i} \le \gamma \big] = \Prx_{b_{i} \sim B_{i}}\big[(b_{i} \le \gamma) \vee (\varphi_{i}(b_{i}) \leq v \big]$.

% \begin{align*}
%     V_{i}(v)
%     ~=~ B_{i}(\gamma)
%     ~=~ \Prx_{b_{i} \sim B_{i}}\big[ b_{i} \le \gamma \big]
%     ~=~ \Prx_{b_{i} \sim B_{i}}\big[(b_{i} \le \gamma) \vee (\varphi_{i}(b_{i}) \leq v \big].
% \end{align*}
% \begin{align*}
%     V_{i}(v)
%     ~=~ V_{i}(\varphi_{i}(\gamma))
%     ~=~ \Prx_{b_{i} \sim B_{i}}\big[(b_{i} \le \gamma) \vee (\varphi_{i}(b_{i}) \leq \varphi_{i}(\gamma)) \big]
%     ~=~ \Prx_{b_{i} \sim B_{i}}\big[(b_{i} \le \gamma) \vee (\varphi_{i}(b_{i}) \leq v \big].
% \end{align*}
% This finishes the proof for the non-monopoly bidders $i \in [n] \setminus \{h\}$.
% \[
%     P(v) \cdot \Prx_{b_h \sim B_{h}} \big[(b_h \leq \gamma) \vee (\varphi_h(b_h) \leq v) \big]
%     ~=~ P(v) \cdot \Prx_{b_h \sim B_{h}} \big[ b_h \leq \gamma \big]
%     ~=~ P(v) \cdot B_{h}(\gamma)
%     ~=~ V_{h}(v).
% \]

Further, we can check the claimed ``monopolist'' formula $V_{h}(v)$ for $v \in [\gamma,\, \varphi_{i}(\gamma))$ (cf.\ \Cref{fig:value_dist:monopolist}):
$P(v) \cdot \Prx_{b_h \sim B_{h}} \big[(b_h \leq \gamma) \vee (\varphi_h(b_h) \leq v) \big]
= P(v) \cdot \Prx_{b_h \sim B_{h}} \big[ b_h \leq \gamma \big]
= P(v) \cdot B_{h}(\gamma) = V_{h}(v)$, where the first step again leverages \blackref{value_monotonicity}.
This finishes the proof.
% For the monopolist $h$, since its conditional value distribution on $(\gamma,\varphi_{i}(\gamma)]$ is $P$. We get the formula by simply multiplying the CDF $P$ to the same extension as non-monopoly bidders.
\end{proof}

\begin{comment}

Regarding the value distribution $V_{i}$, the $v \geq \varphi_{i}(\gamma)$ part is 

Together with the monotonicity

satisfies $V_{i}(\gamma) = B_{i}(\gamma) = V_{i}(\varphi_{i}(\gamma))$.

For non-monopoly bidder $i$, since the value $v_{i}$ has no probability density on $(\gamma,\varphi_{i}(\gamma)]$. The distribution for $v>\varphi_{i}(\gamma)$ in \Cref{cor:high_bid} can be directly extended to the range $v\geq \gamma$, namely the CDF value at $V_{i}(v)$ for any $v\in [\gamma,\varphi_{i}(\gamma))$ should be equal to
\[
    V_{i}(\varphi_{i}(\gamma))= \Prx_{b_{i}}\big[b_{i} \leq \gamma \vee \varphi_{i}(b_{i})\leq \varphi_{i}(\gamma) \big] = \Prx_{b_{i}}\big[b_{i} \leq \gamma \big].
\]
Interestingly, the same formula as is already satisfies this extension since for any  $v\in [\gamma,\varphi_{i}(\gamma))$,
\[
    V_{i}(v)= \Prx_{b_{i}}\big[b_{i} \leq \gamma \vee \varphi_{i}(b_{i})\leq v \big] = \Prx_{b_{i}}\big[b_{i} \leq \gamma \big].
\]
This finishes the proof for non-monopoly bidders.

\end{comment}

% Conditioned on the boundary first-order bid $\big\{ \max(\bb) = \gamma \big\}$,
% the conditional value distribution $P$ captures
% both of the allocated bidder's value $s_{\alloc(\bb)}^{-1}(\gamma)$ and the optimal {\SocialWelfare} $\max(\bs^{-1}(\bb))$ follow the conditional value distribution $P$.
% , because of the tie-breaking rule for the monopolist $h$.

\Cref{lem:conditional_value} will be useful for formulating the expected auction/optimal {\SocialWelfares}.

\begin{lemma}[{\SocialWelfares}]
\label{lem:conditional_value}
Conditioned on the boundary first-order bid $\big\{ \max(\bb) = \gamma \big\}$, each of the following exactly follows the conditional value distribution $P$:
\begin{itemize}
    \item The conditional auction {\SocialWelfare} $\big\{ s_{\alloc(\bb)}^{-1}(\gamma) \bigmid \max(\bb) = \gamma \big\}$.
    
    \item The conditional optimal {\SocialWelfare} $\big\{ \max(\bs^{-1}(\bb)) \bigmid \max(\bb) = \gamma \big\}$.
    
    % \blue{iff one of the three equivalent conditions holds: (i)~It satisfies \blackref{re:twin_ceiling}. (ii)~Its potential is zero $\Psi(H^{\uparrow} \otimes \bB \otimes L) = 0$. (iii)~}
\end{itemize}
\end{lemma}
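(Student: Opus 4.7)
The plan is to reduce both welfare quantities to a single random variable on the event $\{\max(\bb) = \gamma\}$, and then identify its conditional distribution with $P$. Concretely, I will prove that both $s_{\alloc(\bb)}^{-1}(\gamma)$ and $\max(\bs^{-1}(\bb))$ equal $\max(s_h^{-1}(b_h),\, \gamma)$ almost surely on this event, where $h$ is the unique monopolist granted by \Cref{lem:monopolist}. The no-monopolist case runs analogously and collapses to both welfares equaling $\gamma$ almost surely, matching the point mass $P = \indicator(v \geq \gamma)$.

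First I split by the monopolist's own bid. If $b_h = \gamma$, then \blackref{mono_monopoly} gives $\alloc(\bb) = h$ almost surely, so the auction welfare is $s_h^{-1}(\gamma) \in [\gamma,\, \varphi_h(\gamma)]$ by \blackref{mono_boundedness}. For every non-monopolist $k$, the condition $b_k \leq \gamma$ combined with \Cref{def:monopolist} and the contrapositive of \Cref{lem:dichotomy:2} forces $v_k \leq \gamma \leq s_h^{-1}(\gamma)$, so the optimal welfare likewise equals $s_h^{-1}(\gamma)$. If instead $b_h < \gamma$, then $v_h \leq \gamma$ by the contrapositive of \Cref{lem:dichotomy:2}, and some non-monopolist $k$ with $b_k = \gamma$ wins the tiebreak; because this winner satisfies $\alloc_k(\gamma) > 0$, the optimality of bidding $\gamma$ combined with $\calU_k(v_k) = 0$ rules out $v_k < \gamma$ (which would produce strictly negative interim utility), and the non-monopolist condition rules out $v_k > \gamma$. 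Thus $v_k = \gamma$, forcing both welfares to equal $\gamma$ in this subcase. Combining, both welfares coincide with $\max(s_h^{-1}(b_h),\, \gamma)$ on $\{\max(\bb) = \gamma\}$ almost surely.

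Next I identify the conditional distribution. Since $\max(s_h^{-1}(b_h),\, \gamma)$ depends only on whether $b_h = \gamma$ or $b_h < \gamma$ (and, in the former case, on the distribution of $s_h^{-1}(\gamma)$), it suffices to match the conditional probabilities of these two events under $\{\max(\bb) = \gamma\}$ with those under $\{b_h \leq \gamma\}$. By independence,
\[
    \Pr\big[\, b_h = \gamma \,\bigmid\, \max(\bb) = \gamma \,\big]
    ~=~ \frac{\Pr[b_h = \gamma] \cdot \calB_{-h}(\gamma)}{\calB(\gamma)}
    ~=~ \frac{\Pr[b_h = \gamma]}{B_h(\gamma)}
    ~=~ \Pr\big[\, b_h = \gamma \,\bigmid\, b_h \leq \gamma \,\big].
\]
The analogue for $\{b_h < \gamma\}$ hinges on the observation that $\Pr[\max_{k \neq h}(b_k) < \gamma] = 0$: any positive probability here, combined with a positive $\Pr[b_h < \gamma]$, would yield $\Pr[\max(\bb) < \gamma] > 0$ by independence, contradicting $\gamma = \inf\supp(\calB)$.

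The main obstacle is the $b_h < \gamma$ subcase: pinning down $v_k = \gamma$ for the tiebreak winner, and then reconciling the two conditionings. Both steps ultimately lean on $\Pr[\max(\bb) < \gamma] = 0$ propagating through the independence of the $B_i$'s, together with the utility-optimality argument that excludes strict under-valuation for any bidder whose bid $\gamma$ wins with positive probability.
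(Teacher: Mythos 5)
Your proof is correct and follows essentially the same route as the paper's: both arguments reduce the auction and the optimal welfare on $\big\{\max(\bb) = \gamma\big\}$ to the single random variable $\max(s_h^{-1}(b_h),\, \gamma)$ via \Cref{lem:monopolist}, \Cref{lem:dichotomy} and \Cref{def:monopolist}, and then use independence across bidders to replace the conditioning event by $\{b_h \leq \gamma\}$, which is exactly \Cref{def:conditional_value}. Your explicit matching of the conditional probabilities of $\{b_h = \gamma\}$ and $\{b_h < \gamma\}$ (using $\Pr[\max(\bb) < \gamma] = 0$) is simply a spelled-out version of the paper's one-line ``independence'' step.
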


\begin{proof}
The random events $\big\{ \max(\bb) = \gamma \big\}$ and $\big\{ \max(\bb) \leq \gamma \big\}$ are identical, regarding the infimum first-order bid $\gamma = \inf(\supp(\calB))$.
Conditioned on this, (i)~a non-monopoly bidder $i \in [n] \setminus \{h\}$ has a low/boundary bid $b_{i} \leq \gamma$ and (\Cref{lem:dichotomy,def:monopolist}) a low/boundary value $s_{i}^{-1}(b_{i}) \leq \gamma$; while
(ii)~the monopolist $h$ has EITHER a low bid $b_{h} < \gamma$ and a low/boundary value $s_{h}^{-1}(b_{h}) \leq \gamma$, OR the boundary bid $b_{h} = \gamma$ and (\Cref{lem:monopolist}) a boundary/normal value $s_{h}^{-1}(b_{h}) \in [\gamma,\, \varphi_{h}(\gamma)]$.

The allocated bidder $\alloc(\bb) \in [n]$ always take the boundary bid $b_{\alloc(\bb)} = \gamma$ and, to make the utility nonnegative, a boundary/normal value $s_{\alloc(\bb)}^{-1}(b_{\alloc(\bb)}) = s_{\alloc(\bb)}^{-1}(\gamma) \geq \gamma$; the case of a normal value $> \gamma$ occurs only if the monopolist is allocated $\alloc(\bb) = h$.
Hence, the conditional optimal {\SocialWelfare} is identically distributed as each of the following:
\begin{align*}
    \big\{ \max(\bs^{-1}(\bb)) \bigmid \max(\bb) = \gamma \big\}
    & ~\overset{\tt d}{=}~
    \big\{ \max(\bs^{-1}(\bb)) \bigmid \max(\bb) \leq \gamma \big\}
    && \gamma = \inf(\supp(\calB)) \\
    & ~\overset{\tt d}{=}~
    \big\{ \max(\bs^{-1}(\bb),\, \gamma) \bigmid \max(\bb) \leq \gamma \big\}
    && s_{\alloc(\bb)}^{-1}(b_{\alloc(\bb)}) \geq \gamma \\
    & ~\overset{\tt d}{=}~
    \big\{ \max(s_{h}^{-1}(b_{h}),\, \gamma) \bigmid \max(\bb) \leq \gamma \big\}
    && \mbox{$s_{i}^{-1}(b_{i}) \leq \gamma$ for $i \neq h$} \\
    & ~\overset{\tt d}{=}~
    \big\{ \max(s_{h}^{-1}(b_{h}),\, \gamma) \bigmid b_{h} \leq \gamma \big\}.
    && \mbox{independence}
\end{align*}
That is, the conditional optimal {\SocialWelfare} follows (\Cref{def:conditional_value}) the distribution $P$.
Further, since the monopolist $h$ (as the only possible bidder that has a normal value $> \gamma$) always wins the tiebreaker $\big\{ b_{h} = \max(\bb) = \gamma \big\}$ (\Cref{lem:monopolist}), the conditional auction/optimal {\SocialWelfares} are identically distributed $\big\{ s_{\alloc(\bb)}^{-1}(\gamma) \bigmid \max(\bb) = \gamma \big\} \overset{\tt d}{=} \big\{ \max(\bs^{-1}(\bb)) \bigmid \max(\bb) = \gamma \big\}$, which again follow the distribution $P$.
This finishes the proof.
% \blue{Conditional on $\big\{ \max(\bb) = \gamma \big\}$, the value of a non-monopoly bidder $i$ is at most $\gamma$ since it has no probability density on $(\gamma,\varphi_{i}(\gamma)]$ and the bid would be larger than $\gamma$ if the value is larger than $\varphi_{i}(\gamma)$. On the other hand, the allocated bidder $\alloc(\bb)$'s value $v_{\alloc(\bb)} $ and as a result the first order value $\max(\bv))$ is at least  $\gamma$ because otherwise the allocated bidder $\alloc(\bb)$ will have negative utility as the payment is $\gamma$.
% We also know that when $v_h\in (\gamma,\varphi_{i}(\gamma)]$, the monopolist bids $\gamma$. As a result, when $v_h\in (\gamma,\varphi_{i}(\gamma)]$, the monopolist is the allocated bidder $\alloc(\bb)$ due to the tie-breaking rule and also have the largest value since others' values are at most $\gamma$. For the remaining probability $v_h\leq \gamma$, the allocated bidder $\alloc(\bb)$'s value $v_{\alloc(\bb)} $ and the first order value $\max(\bv)$ is exactly $\gamma$. Therefore, this exact coincides with the definition of distribution $P$.}
\end{proof}

\subsection{Reformulation for the {\PriceofAnarchy} problem}
\label{subsec:reformulation}

To address the {\PriceofAnarchy} problem, we shall formulate the expected auction/optimal {\SocialWelfares} $\FPA(\bV,\, \bs,\, \alloc)$ and $\OPT(\bV,\, \bs,\, \alloc)$.
Indeed, these can be written in terms of the equilibrium bid distributions $\bB$ plus the conditional value distribution $P$ (\Cref{def:conditional_value}).

First, \Cref{lem:auction_welfare} formulates the expected auction {\SocialWelfare} $\FPA(\bV,\, \bs,\, \alloc)$.

\begin{lemma}[Auction {\SocialWelfare}]
\label{lem:auction_welfare}
The expected auction {\SocialWelfare} $\FPA(\bV,\, \bs,\, \alloc)$ can be formulated based on the conditional value distribution and the equilibrium bid distributions $(P,\, \bB)$:
\begin{align*}
    \FPA(\bV,\, \bs,\, \alloc)
    ~=~ \FPA(P,\, \bB)
    ~=~ \Ex[ P ] \cdot \calB(\gamma) + \sum_{i \in [n]} \bigg(\int_{\gamma}^{\lambda} \varphi_{i}(b) \cdot \frac{B'_{i}(b)}{B_{i}(b)} \cdot \calB(b) \cdot \d b\bigg).
\end{align*}
where the first-order bid distribution $\calB(b) = \prod_{i \in [n]} B_{i}(b)$ and the bid-to-value mappings $\varphi_{i}(b)$ can be computed from the bid distributions $\bB = \{B_{i}\}_{i \in [n]}$ (\Cref{def:mapping}).
\end{lemma}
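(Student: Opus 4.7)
The expected auction social welfare is $\FPA = \Ex[\,v_{\alloc(\bb)}\,]$, the expected value of the allocated bidder. My plan is to split this expectation into two disjoint events based on whether the realized first-order bid $\max(\bb)$ lies at the boundary $=\gamma$ or is normal $>\gamma$, and then invoke the structural results from \Cref{subsec:bne,subsec:mapping} to rewrite each piece purely in terms of $(P,\,\bB)$.

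First I would handle the boundary event $\{\max(\bb)=\gamma\}$, which by the definition of $\gamma=\inf(\supp(\calB))$ occurs with probability exactly $\calB(\gamma)$. Conditioned on this event, \Cref{lem:conditional_value} states that the conditional auction social welfare $s_{\alloc(\bb)}^{-1}(\gamma)$ is identically distributed as $P$, and therefore contributes $\Ex[P]\cdot \calB(\gamma)$ to the expectation. This directly accounts for the first term in the claimed formula.

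Second I would handle the normal event $\{\max(\bb)>\gamma\}$ by summing over the index $i\in[n]$ of the winning bidder and integrating over the realized first-order bid $b\in(\gamma,\lambda]$. For a normal bid $b$, the first-order bid CDF $\calB$ is continuous with no probability mass (\Cref{lem:bid_distribution}), so almost surely there is a unique highest bidder; moreover, by \Cref{lem:high_bid} \Cref{lem:high_bid:inverse}, the winning bidder's value is the deterministic quantity $\varphi_i(b)$ (modulo the zero-measure set $\mathbb{D}$ from \Cref{cor:bid_distribution,rem:bid_distribution}, which can be ignored). The joint density that bidder $i$ bids in $[b,b+\d b]$ and simultaneously outbids everyone else is
\[
B'_i(b)\cdot \calB_{-i}(b)\cdot \d b \;=\; \frac{B'_i(b)}{B_i(b)}\cdot \calB(b)\cdot \d b,
\]
using the identity $\calB_{-i}(b)=\calB(b)/B_i(b)$. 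Weighting by the value $\varphi_i(b)$, integrating over $b\in(\gamma,\lambda]$, and summing over $i\in[n]$ yields exactly the claimed sum of integrals.

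The only technical subtleties to address are: (i) justifying that the low-bid part $b_i<\gamma$ contributes nothing, which follows from \Cref{lem:allocation} \Cref{lem:allocation:3} since such bids win with probability $0$; (ii) handling the zero-measure indifferentiable set via the real-analysis convention of \Cref{rem:bid_distribution}; and (iii) making sure the boundary tie-breaking at $b=\gamma$ is not double-counted when the integral is taken over the half-open interval $(\gamma,\lambda]$ rather than $[\gamma,\lambda]$, which is automatic since $\calB$ is continuous on $(\gamma,\lambda]$ so the endpoint has zero Lebesgue weight. The main conceptual work has already been done in \Cref{lem:monopolist,lem:conditional_value,lem:high_bid}; the present lemma is essentially an assembling step, and I expect the only mildly delicate part to be the careful bookkeeping of the boundary mass at $\gamma$ versus the absolutely continuous part on $(\gamma,\lambda]$.
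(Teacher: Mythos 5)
Your proposal is correct and follows essentially the same route as the paper: split the expectation at the boundary event $\{\max(\bb)=\gamma\}$ (handled via \Cref{lem:conditional_value}, giving $\Ex[P]\cdot\calB(\gamma)$) versus the normal event, then sum over the winner $i$, use $s_i^{-1}(b)=\varphi_i(b)$ from \Cref{lem:high_bid} and the identity $\alloc_i(b)=\calB_{-i}(b)=\calB(b)/B_i(b)$ (\Cref{cor:allocation}) to obtain the integrals $\int_\gamma^\lambda \varphi_i(b)\cdot\frac{B'_i(b)}{B_i(b)}\cdot\calB(b)\,\d b$. The technical caveats you flag (zero winning probability of low bids, the zero-measure indifferentiable set, no double counting at $b=\gamma$ by continuity) are exactly the ones the paper handles implicitly, so nothing is missing.
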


\begin{proof}
In any realization, the outcome auction {\SocialWelfare} is the allocated bidder's value $v_{\alloc(\bs(\bv))}$. Over the randomness of the values $\bv = (v_{i})_{i \in [n]} \sim \bV$, the equilibrium strategies $\bs = \{s_{i}\}_{i \in [n]}$, and the allocation rule $\alloc(\bs(\bv))$, the expected auction {\SocialWelfare} $\FPA$ is given by
\begin{align}
    \FPA
    ~=~ \Ex_{\bv,\, \bs,\, \alloc} \big[\, v_{\alloc(\bs(\bv))} \,\big]
    & ~=~ \Ex_{\bv,\, \bs,\, \alloc} \big[\, v_{\alloc(\bs(\bv))} \cdot \indicator(\max(\bs(\bv)) = \gamma) \,\big]
    \label{eq:fpa_boundary}\tag{B} \\
    & \phantom{~=~} + \Ex_{\bv,\, \bs,\, \alloc} \big[\, v_{\alloc(\bs(\bv))} \cdot \indicator(\max(\bs(\bv)) > \gamma) \,\big]. \hspace{0.47cm}
    \label{eq:fpa_high}\tag{N}
\end{align}
% The disjoint events $\big\{\max(\bs(\bv)) = \gamma\big\}$ and $\big\{\max(\bs(\bv)) > \gamma\big\}$ are the only two possibilities, because $\gamma = \inf(\supp(\calB))$ is the infimum first-order bid $\max(\bs(\bv)) \sim \calB$.
Here we used the fact that the first-order bid $\max(\bs(\bv)) \sim \calB$ is at least $\gamma = \inf(\supp(\calB))$.

For Term~\eqref{eq:fpa_boundary}, we deduce that
\begin{align*}
    \text{Term~\eqref{eq:fpa_boundary}}
    & ~=~ \Ex_{\bv,\, \bs,\, \alloc} \big[\, v_{\alloc(\bs(\bv))} \cdot \indicator(\max(\bs(\bv)) = \gamma) \,\big] \\
    & ~=~ \Ex_{\bv,\, \bs, \, \alloc} \big[\, v_{\alloc(\bs(\bv))} \,\bigmid\, \max(\bs(\bv)) = \gamma \,\big] \cdot \Prx_{\bv,\, \bs} \big[\, \max(\bs(\bv)) = \gamma \,\big] \\
    & ~=~ \Ex[ P ] \cdot \calB(\gamma).
\end{align*}
The last equality uses the fact that conditioned on the boundary first-order bid $\big\{\max(\bs(\bv)) = \gamma\big\}$, the allocated bidder $\alloc(\bs(\bv))$'s value  follows the distribution $P$ (\Cref{lem:conditional_value}).

For Term~\eqref{eq:fpa_high}, we deduce that
\begin{align}
    \text{Term~\eqref{eq:fpa_high}}
    & ~=~ \Ex_{\bv,\, \bs,\, \alloc} \big[\, v_{\alloc(\bs(\bv))} \cdot \indicator(\max(\bs(\bv)) > \gamma) \,\big] \phantom{\bigg.}
    \nonumber \\
    & ~=~ \Ex_{\bb,\, \bs^{-1},\, \alloc} \big[\, s_{\alloc(\bb)}^{-1}(b_{\alloc(\bb)}) \cdot \indicator(\max(\bb) > \gamma) \,\big] \phantom{\bigg.}
    \label{eq:fpa:h1}\tag{N1} \\
    & ~=~ \sum_{i \in [n]} \Ex_{\bb,\, s_{i}^{-1},\, \alloc} \big[\, s_{i}^{-1}(b_{i}) \cdot \indicator(\alloc(\bb) = i) \cdot \indicator(b_{i} > \gamma) \,\big] \phantom{\bigg.} \hspace{1.63cm}
    \label{eq:fpa:h2}\tag{N2} \\
    % & ~=~ \sum_{i \in [n]} \Ex_{\bb,\, \alloc} \big[\, \varphi_{i}(b_{i}) \cdot \indicator(\alloc(\bb) = i) \cdot \indicator(b_{i} > \gamma) \,\big] \phantom{\bigg.} \\
    & ~=~ \sum_{i \in [n]} \Ex_{b_{i}} \big[\, \varphi_{i}(b_{i}) \cdot \alloc_{i}(b_{i}) \cdot \indicator(b_{i} > \gamma) \,\big] \phantom{\bigg.}
    \label{eq:fpa:h3}\tag{N3} \\
    & ~=~ \sum_{i \in [n]} \Ex_{b_{i}} \big[\, \varphi_{i}(b_{i}) \cdot \calB_{-i}(b_{i}) \cdot \indicator(b_{i} > \gamma) \,\big] \phantom{\bigg.}
    \label{eq:fpa:h4}\tag{N4} \\
    & ~=~ \sum_{i \in [n]} \bigg(\int_{\gamma}^{\lambda} \varphi_{i}(b) \cdot \calB_{-i}(b) \cdot B'_{i}(b) \cdot \d b\bigg)
    \nonumber \\
    & ~=~ \sum_{i \in [n]} \bigg(\int_{\gamma}^{\lambda} \varphi_{i}(b) \cdot \frac{B'_{i}(b)}{B_{i}(b)} \cdot \calB(b) \cdot \d b\bigg).
    \label{eq:fpa:h5}\tag{N5}
\end{align}
\eqref{eq:fpa:h1}: Redenote $b_{i} = s_{i}(v_{i}) \sim B_{i}$ for $i \in [n]$. Recall \Cref{def:inverse} that each random inverse $s_{i}^{-1}(b_{i})$ for $b_{i} \sim B_{i}$ is identically distributed as the random value $v_{i} \sim V_{i}$. \\
\eqref{eq:fpa:h2}: Divide the event $\big\{\max(\bb) > \gamma\big\}$ into subevents $\big\{\alloc(\bb) = i ~\text{and}~ b_{i} = \max(\bb) > \gamma \big\}$ for $i \in [n]$. \\
\eqref{eq:fpa:h3}: \Cref{lem:high_bid} (\Cref{lem:high_bid:inverse}) that $s_{i}^{-1}(b) = \varphi_{i}(b)$ for any normal bid $b \in \supp_{> \gamma}(B_{i})$
% (except a countable zero-measure set $b \in \mathbb{D} \subseteq (\lambda,\, \gamma]$).
and \Cref{def:interim_utility} that $\alloc_{i}(b) = \Prx_{\bb_{-i},\, \alloc} \big[ \alloc(\bb_{-i},\, b) = i \big] = \Ex_{\bb_{-i},\, \alloc} \big[ \indicator(\alloc(\bb_{-i},\, b) = i) \big]$. \\
\eqref{eq:fpa:h4}: \Cref{cor:allocation} that $\alloc_{i}(b) = \calB_{-i}(b)$ for any normal bid $b > \gamma$. \\
\eqref{eq:fpa:h5}: The first-order bid distribution $\calB(b) = B_{i}(b) \cdot \calB_{-i}(b)$.

\vspace{.1in}
Combining Terms~\eqref{eq:fpa_boundary} and \eqref{eq:fpa_high} together finishes the proof of \Cref{lem:auction_welfare}.
\end{proof}

Moreover, \Cref{lem:optimal_welfare} formulates the expected optimal {\SocialWelfare} $\OPT(\bV,\, \bs,\, \alloc)$.

\begin{lemma}[Optimal {\SocialWelfare}]
\label{lem:optimal_welfare}
The expected optimal {\SocialWelfare} $\OPT(\bV,\, \bs,\, \alloc)$ can be formulated based on the conditional value distribution and the equilibrium bid distributions $(P,\, \bB)$:
\begin{align*}
    \OPT(\bV,\, \bs,\, \alloc)
    ~=~ \OPT(P,\, \bB)
    ~=~ \gamma + \int_{\gamma}^{+\infty} \Big(1 - P(v) \cdot \prod_{i \in [n]} \Prx_{b_{i}}\big[b_{i} \leq \gamma \vee \varphi_{i}(b_{i})\leq v \big]\Big) \cdot \d v.
\end{align*}
%where the semi first-order value distribution $\calV(v) \eqdef \prod_{i \in [n]} \max\big(V_{i}(v),\, B_{i}(\gamma)\big)$ can be reconstructed from the $\bB = \{B_{i}\}_{i \in [n]}$ (\Cref{cor:high_bid}).
\end{lemma}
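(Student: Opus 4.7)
The plan is to rewrite $\OPT$ as the expected maximum value via the tail integral formula
\[
\Ex[M] ~=~ \int_{0}^{+\infty} \Pr[M > v] \cdot \d v,
\qquad M ~\eqdef~ \max_{i \in [n]} v_{i},
\]
and then split the integral at $\gamma$. The first chunk $\int_{0}^{\gamma}\Pr[M > v] \cdot \d v$ should contribute exactly $\gamma$, and for $v \geq \gamma$ the integrand should assume the product form stated in the lemma.

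The key claim is that $M \geq \gamma$ almost surely, so that $\Pr[M > v] = 1$ for every $v < \gamma$. I would prove this by splitting on the realization of the first-order bid $\max(\bb)$. Since $\gamma = \inf(\supp(\calB))$, the event $\{\max(\bb) < \gamma\}$ has probability zero, so it suffices to handle the two cases $\max(\bb) > \gamma$ and $\max(\bb) = \gamma$. When $\max(\bb) > \gamma$, some bidder $j$ has a normal bid $b_{j} > \gamma$, and the contrapositive of \Cref{lem:dichotomy}~(\Cref{lem:dichotomy:1}) forces $v_{j} > \gamma$, so $M > \gamma$. When $\max(\bb) = \gamma$ (which occurs with probability $\calB(\gamma)$), \Cref{lem:conditional_value} says that the conditional random variable $\{\max(\bs^{-1}(\bb)) \mid \max(\bb) = \gamma\}$ is distributed as $P$, whose support lies in $[\gamma,\,\varphi_{h}(\gamma)]$ in the monopolist case and is $\{\gamma\}$ otherwise. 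Either way $M \geq \gamma$ almost surely on this event, which completes the claim.

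For the $v \geq \gamma$ portion, independence of $v_{i} \sim V_{i}$ gives $\Pr[M \leq v] = \prod_{i \in [n]} V_{i}(v)$, and \Cref{lem:value_dist} lets me rewrite each factor explicitly. Separating the unique monopolist $h$ (if present) from the remaining bidders and multiplying through,
\[
\prod_{i \in [n]} V_{i}(v) ~=~ P(v) \cdot \prod_{i \in [n]} \Prx_{b_{i}}\big[\, (b_{i} \leq \gamma) \vee (\varphi_{i}(b_{i}) \leq v) \,\big], \qquad \forall v \geq \gamma.
\]
In the monopolist-free case $P(v) \equiv 1$ on $[\gamma,\, +\infty)$, so the same identity holds uniformly, and combining the two pieces $\int_{0}^{\gamma} + \int_{\gamma}^{+\infty}$ yields the claimed expression.

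The main obstacle is the ``$M \geq \gamma$ almost surely'' step, because the value distributions on the low range $[0,\,\gamma)$ are not reconstructible from $(P,\,\bB)$ alone and could in principle place mass arbitrarily low. Routing through \Cref{lem:conditional_value} sidesteps this ambiguity by directly bounding $\max(\bs^{-1}(\bb))$ on the boundary event $\{\max(\bb) = \gamma\}$, which is why the low-value portion of the integral cleanly collapses to $\gamma$ and $\OPT$ becomes a well-defined function of the bid-side data $(P,\,\bB)$.
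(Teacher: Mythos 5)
Your proof is correct and follows essentially the same route as the paper's: both write $\OPT$ via the complementary-CDF (tail) formula, truncate at $\gamma$ using the fact that $\max(\bv) \geq \gamma$ almost surely, and then apply \Cref{lem:value_dist} (with $P \equiv 1$ on $[\gamma, +\infty)$ in the no-monopolist case) to express $\prod_{i} V_{i}(v)$ through $(P,\, \bB)$. The only difference is that you spell out, via \Cref{lem:dichotomy} and \Cref{lem:conditional_value}, the claim $\max(\bv) \geq \gamma$ that the paper asserts in one line, which is a harmless (indeed slightly more careful) elaboration rather than a different argument.
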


%\yj{to continue}

\begin{proof}
This proof simply follows from \Cref{lem:value_dist}, i.e., we essentially reconstruct the value distributions $\bV = \{V_{i}\}_{i \in [n]}$ from the tuple $(P,\, \bB)$.

In any realization, the outcome optimal {\SocialWelfare} is the first-order value $\max(\bv)$. Over the randomness of the values $\bv = (v_{i})_{i \in [n]} \sim \bV$, the expected optimal {\SocialWelfare} $\OPT$ is given by
\begin{align}
    \OPT
    & ~=~ \Ex_{\bv} \big[\, \max(\bv) \,\big]
    % \nonumber \\
    ~=~ \Ex_{\bv} \big[\, \max(\bv,\, \gamma) \,\big] \phantom{\bigg.}
    \label{eq:optimal_welfare:1}\tag{O1} \\
    & ~=~ \int_{0}^{+\infty} \Big(1 - \prod_{i \in [n]} V_{i}(v) \cdot \indicator(v \geq \gamma)\Big) \cdot \d v
    \label{eq:optimal_welfare:2}\tag{O2} \\
    & ~=~ \int_{0}^{+\infty} \Big(1 - \Big(1 - P(v) \cdot \prod_{i \in [n]} \Prx_{b_{i}}\big[b_{i} \leq \gamma \vee \varphi_{i}(b_{i})\leq v \big]\cdot \indicator(v \geq \gamma) \Big) \cdot \d v
    \label{eq:optimal_welfare:3}\tag{O3} \\
%    & ~=~ \int_{0}^{+\infty} \Big(1 - P(v) \cdot \calV(v) \cdot \indicator(v \geq \gamma)\Big) \cdot \d v
%    \label{eq:optimal_welfare:4}\tag{O4} \\
    & ~=~ \gamma + \int_{\gamma}^{+\infty} \Big(1 - P(v) \cdot \prod_{i \in [n]} \Prx_{b_{i}}\big[b_{i} \leq \gamma \vee \varphi_{i}(b_{i})\leq v \big]\Big) \cdot \d v.
    \nonumber
\end{align}
\eqref{eq:optimal_welfare:1}: The first-order value $\max(\bv)$ for $\bv \sim \bV$ is always boundary/normal $\geq \gamma$. \\
%\footnote{The first-order bid $\max(\bs(\bv)) \sim \calB$ is always boundary/high $\geq \gamma = \inf(\supp(\calB))$. (\Cref{lem:tiebreak:no_overbid} of \Cref{lem:tiebreak}) In the ``boundary'' case $\{\max(\bs(\bv)) = \gamma\}$, there exists at least one boundary/normal value $v_{i} \geq \gamma$. (\Cref{lem:dichotomy}) In the ``normal'' case $\{\max(\bs(\bv)) > \gamma\}$, each normal bid $s_{i}(v_{i}) > \gamma$ ensures a higher value $v_{i} > s_{i}(v_{i}) > \gamma$.} \\
\eqref{eq:optimal_welfare:2}: The expectation of a {\em nonnegative} distribution $F$ is given by $\E[F] = \int_{0}^{+\infty} \big(1 - F(v)\big) \cdot \d v$. \\
\eqref{eq:optimal_welfare:3}: Apply \Cref{lem:value_dist}. It also holds in case that no monopolist exists since, suppose so, we have $P(v)=1$ for $v \geq \gamma$.
This finishes the proof of \Cref{lem:optimal_welfare}.
\end{proof}

The above $(P,\, \bB)$-based {\SocialWelfare} formulas $\FPA(P,\, \bB)$ and $\OPT(P,\, \bB)$, together with the characterizations of $P$ and $\bB$, are the foundation of the whole paper. From all these discussions, we notice that the below-$\gamma$ parts of bid distributions $\bB$ and value distributions $\bV$ are less important. To stress this observation, we have the key definition of {\em valid instances} $(P,\, \bB) \in \Bvalid$.

\begin{definition}[Valid instances]
\label{def:valid}
A {\em valid} instance $(P,\, \bB) \in \Bvalid$, where $\bB = \{B_{i}\}_{i \in [n]}$ and $n \geq 2$, is given by $n + 1$ independent distributions and satisfies the following.
\begin{itemize}
    \item The bid distributions $\bB = \{B_{i}\}_{i \in [n]}$ have a common {\em infimum bid} $\inf(\supp(B_{i})) = \gamma \in [0,\, +\infty)$ and a universal {\em supremum bid} $\max_{i \in [n]} \sup(\supp(B_{i})) = \lambda \in [\gamma,\, +\infty]$.
    
    \item The bid CDF's $B_{i}(b)$ for $i \in [n]$ are {\em continuous} functions on $b \in [\gamma,\, \lambda]$, while the competing bid CDF's $\calB_{-i}(b) = \prod_{k \in [n] \setminus \{i\}} B_{k}(b)$ for $i \in[n]$ and the first-order bid CDF $\calB(b) = \prod_{i \in [n]} B_{i}(b)$ are {\em continuous} and {\em strictly increasing} functions on $b \in [\gamma,\, \lambda]$.
    
    \item {\bf monotonicity:}
    The bid-to-value mappings $\varphi_{i}(b) \eqdef b + \frac{\calB_{-i}(b)}{\calB'_{-i}(b)} = b + \big(\sum_{k \in [n] \setminus \{i\}} \frac{B'_{k}(b)}{B_{k}(b)}\big)^{-1}$ for $i \in [n]$ are {\em weakly increasing} functions on $b \in [\gamma,\, \lambda]$.

    \item {\bf monopoly:}
    The index-$1$ bidder $B_{1}$ without loss of generality is the unique {\em monopolist}.\footnote{This accommodates the ``no monopolist'' case of \Cref{lem:monopolist}, by taking $P$ as a deterministic value of $\gamma$.} I.e., this bidder always wins $\alloc(\bb) \equiv 1$ in the all-infimum tiebreaker $\{\bb = \gamma^{\otimes n}\}$.
    
    \item {\bf boundedness:}
    Under the infimum bid $\gamma$, the monopolist $B_{1}$'s conditional value $s_{1}^{-1}(\gamma) \sim P$ is bounded as $\supp(P) \subseteq [\gamma,\, \varphi_{1}(\gamma)]$, while the other bidders' conditional values are truthful, $s_{i}^{-1}(\gamma) \equiv \gamma$ for $i \in [2: n]$.
\end{itemize}
\end{definition}

The valid instances $(P,\, \bB) \in \Bvalid$ capture nearly all {\BayesNashEquilibria} $\big\{ \bs \in \bbBNE(\bV,\, \alloc) \big\}$. Essentially, there is a bijection between these two spaces (without considering some less important details about {\BayesNashEquilibria}), especially in a sense of the {\SocialWelfare} invariant.
This is formalized into \Cref{thm:valid}.

% . In particular, they are equivalent in terms of {\PoA} in the following strong sense.

\begin{theorem}[Valid instances]
\label{thm:valid}
(I)~For any valid instance $(P,\, \bB) \in \Bvalid$, there is some equilibrium $\bs \in \bbBNE(\bV,\, \alloc) \neq \emptyset$, for some value distribution $\bV = \{V_{i}\}_{i \in [n]}$, whose conditional value distribution and bid distribution are exactly the $(P,\, \bB)$.
(II)~For any value distribution $\bV = \{V_{i}\}_{i \in [n]}$ and any equilibrium $\bs \in \bbBNE(\bV,\, \alloc) \neq \emptyset$,
there is some valid instance $(P,\, \bB) \in \Bvalid$ that yields the same auction/optimal {\SocialWelfares} $\FPA(P,\, \bB) = \FPA(\bV,\, \bs,\, \alloc)$ and $\OPT(P,\, \bB) = \OPT(\bV,\, \bs,\, \alloc)$.
%Therefore, we have {\SocialWelfares} $\FPA(P,\, \bB) = \FPA(\bV,\, \bs,\, \alloc)$ and $\OPT(P,\, \bB) = \OPT(\bV,\, \bs,\, \alloc)$.
\end{theorem}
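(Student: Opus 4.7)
The plan is to dispatch the two directions separately, with Part~(II) essentially following by assembling the structural results already proved in \Cref{subsec:bne,subsec:mapping,subsec:reformulation}, and Part~(I) requiring an explicit construction together with a verification of the equilibrium conditions.

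For Part~(II), I start from an arbitrary equilibrium $\bs \in \bbBNE(\bV,\alloc) \neq \emptyset$ and take $\bB = \{B_i\}_{i \in [n]}$ to be its bid distributions and $P$ to be the conditional value distribution of \Cref{def:conditional_value}. Checking that $(P,\bB) \in \Bvalid$ is then a matter of matching bullet against bullet in \Cref{def:valid}: the common infimum $\gamma$ and the interval $[\gamma,\lambda]$ are defined as in \Cref{subsec:bne}; continuity and strong monotonicity of $B_i$, $\calB_{-i}$, $\calB$ on $(\gamma,\lambda]$ come from \Cref{lem:bid_distribution}; monotonicity of the bid-to-value mappings $\varphi_i$ comes from \Cref{lem:high_bid}; uniqueness of the monopolist and the support condition $\supp(P) \subseteq [\gamma,\varphi_h(\gamma)]$ come from \Cref{lem:monopolist} together with \Cref{rem:conditional_value}. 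The {\SocialWelfare} identities $\FPA(P,\bB) = \FPA(\bV,\bs,\alloc)$ and $\OPT(P,\bB) = \OPT(\bV,\bs,\alloc)$ are then literally \Cref{lem:auction_welfare,lem:optimal_welfare}.

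For Part~(I), given $(P,\bB) \in \Bvalid$, I reverse-engineer a triple $(\bV,\bs,\alloc)$ as follows. Define pure strategies $s_i$ by $s_i(\varphi_i(b)) = b$ on normal bids $b \in (\gamma,\lambda]$, declare bidder $1$ (the designated monopolist) to take the boundary bid $\gamma$ on the value interval $[\gamma,\varphi_1(\gamma)]$ distributed according to $P$, and let every other bidder take some underbid (say, zero) for values below $\gamma$. Define each $V_i$ by the reconstruction formula of \Cref{lem:value_dist}; by construction the push-forward $s_i(V_i)$ equals $B_i$. Choose the auction $\alloc \in \bbFPA$ whose tie-breaking at the boundary bid $b = \gamma$ always favors bidder $1$, and is arbitrary elsewhere (ties at normal bids occur with probability zero by \Cref{lem:bid_distribution}).

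The only nontrivial step is verifying the equilibrium condition $\Ex_{s_i}[u_i(v,s_i(v))] \geq u_i(v,b)$ for every $v \in \supp(V_i)$ and $b \geq 0$. For a bidder $i \neq 1$ and a normal value $v = \varphi_i(b^*) > \gamma$, the map $b \mapsto u_i(v,b) = (v - b)\calB_{-i}(b)$ has derivative $(\varphi_i(b) - v)\cdot \calB_{-i}'(b) \cdot (\text{sign-preserving factor})$; by monotonicity of $\varphi_i$ (part of validity) this changes sign exactly at $b^*$, confirming optimality among normal bids. Low bids $b < \gamma$ yield zero utility $\leq u_i(v,b^*)$ since $v > \gamma$, and overbids $b \geq v$ give nonpositive utility. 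For the monopolist, on values $v > \varphi_1(\gamma)$ the same calculus argument applies, while on values $v \in [\gamma,\varphi_1(\gamma)]$, bidding $\gamma$ wins the tie almost surely so the interim allocation is $\calB_{-1}(\gamma)$, which matches the left limit of $\calB_{-1}(b)$, so no upward deviation strictly improves; downward deviations yield a zero-allocation payoff. Finally, low values $v < \gamma$ are handled as in \Cref{lem:dichotomy}: any bid yielding zero utility is optimal. Welfare equivalence is then again immediate from \Cref{lem:auction_welfare,lem:optimal_welfare}.

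The main obstacle I anticipate is the boundary-bid analysis: the equilibrium optimality at $b = \gamma$ hinges delicately on the tie-breaking rule, and one must verify that favoring the monopolist is both admissible under \Cref{def:fpa} (it is, since the rule is defined on each bid profile) and sufficient to prevent profitable upward deviations from $\gamma$. The argument is a converse to the \blackref{pro:deviation} reasoning used in \Cref{lem:monopolist}: because $\calB_{-1}$ has no mass at $\gamma$ and is continuous there, the interim allocation at $\gamma$ equals $\calB_{-1}(\gamma)$ for the monopolist, matching the limit from above, so no infinitesimal deviation helps.
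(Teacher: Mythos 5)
Your overall plan tracks the paper's: Part~(II) by assembling the structural lemmas of \Cref{subsec:bne,subsec:mapping} and invoking \Cref{lem:auction_welfare,lem:optimal_welfare}, and Part~(I) by reconstructing $(\bV,\,\bs,\,\alloc)$ from $(P,\,\bB)$ via \Cref{lem:value_dist} and verifying the equilibrium conditions with the derivative $\frac{\partial u_{i}}{\partial b^{*}} = (v - \varphi_{i}(b^{*}))\cdot\calB'_{-i}(b^{*})$ and a tie-break at $\gamma$ favoring the monopolist. However, there are two genuine gaps. The more serious one is in Part~(I): you define \emph{pure} strategies by $s_{i}(\varphi_{i}(b)) = b$. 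This is ill-defined---and cannot be repaired by any pure selection---whenever $\varphi_{i}$ is constant on a bid interval over which $B_{i}$ strictly increases: then $V_{i}$ places an atom at that value, any pure strategy sends the whole atom to a single bid, and the push-forward $s_{i}(V_{i})$ acquires a point mass, contradicting the continuity of $B_{i}$ on $(\gamma,\,\lambda]$ required for validity. Mixed strategies are unavoidable here; indeed the paper's eventual worst-case (twin ceiling) instances have $\varphi_{H}(b)\equiv\text{const}$ on the entire support, so your construction fails precisely on the instances the theorem is ultimately needed for. The paper handles this with the quantile coupling $\Prx_{s_i}[s_{i}(v)=b]=\Prx_{q\sim U[0,1]}[B_{i}^{-1}(q)=b \mid V_{i}^{-1}(q)=v]$, after which your optimality verification (monotone $\varphi_{i}$, sign of the derivative, boundary tie-break) goes through essentially as in the paper's \Cref{fact:valid:equilibrium}.

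The second gap is in Part~(II): taking $\bB$ to be the raw equilibrium bid distributions does not in general yield a member of $\Bvalid$, because low-value bidders ($v<\gamma$) may bid strictly below $\gamma$ at equilibrium (any losing bid is optimal for them), so the $B_{i}$ need not share the common infimum $\gamma=\inf(\supp(\calB))$ demanded by \Cref{def:valid}. The paper therefore truncates, $B_{i}(b)\equiv B_{i}^{(\bs)}(b)\cdot\indicator(b\geq\gamma)$, pushing that mass to $\gamma$; since such bids never win and such values never attain the optimum, both welfares are unchanged---which is also exactly why statement~(II) asserts only equality of welfares rather than that $(P,\,\bB)$ equals the equilibrium's own bid data. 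Two smaller slips: the monopolist's interim allocation at $\gamma$ equals $\calB_{-1}(\gamma)$, which matches the limit of $\calB_{-1}$ from \emph{above} (not the ``left limit''), and the non-monopoly bidders' interim allocation at the boundary bid is $0$ rather than $\calB_{-i}(\gamma)$, which is what makes their boundary bids trivially optimal.
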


\begin{figure}[t]
    \centering
    \subfloat[\label{fig:valid:monopolist}
    The monopolist $h = B_{1}$]{
    \includegraphics[width = .49\textwidth]
    {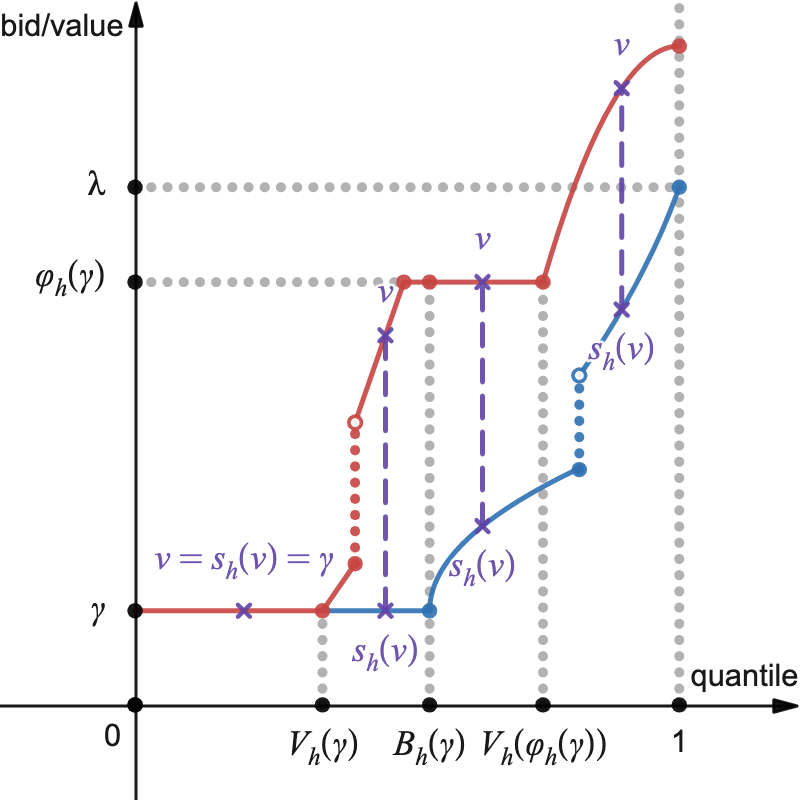}}
    \hfill
    \subfloat[\label{fig:valid:non}
    {A non-monopoly bidder $B_{i}$ for $i \in [2: n]$}]{
    \includegraphics[width = .49\textwidth]
    {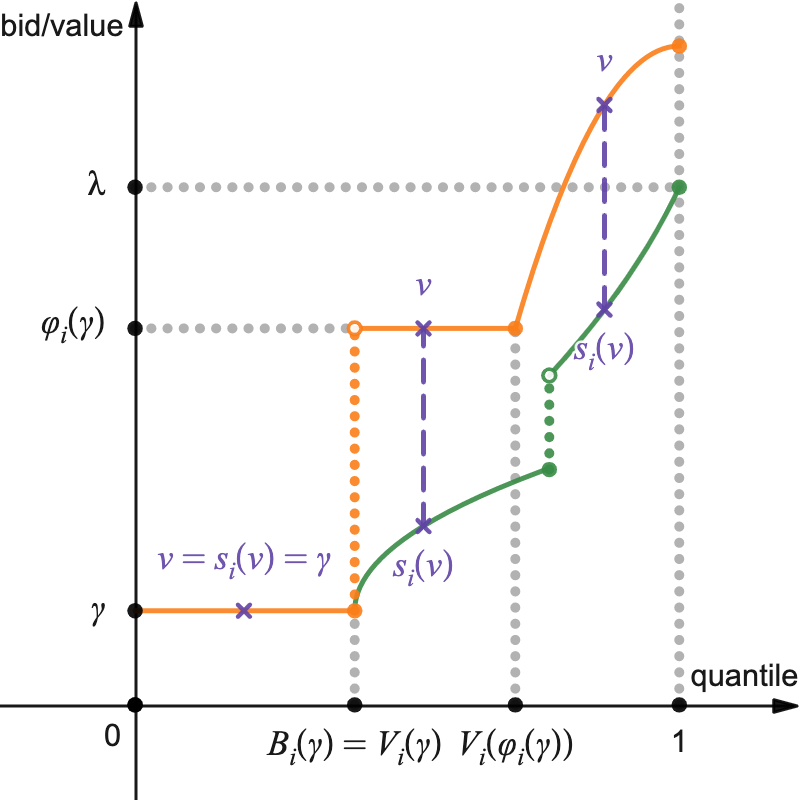}}
    \caption{Diagram of valid instances $(P,\, \bB) \in \Bvalid$ and \Cref{thm:valid}.
    \label{fig:valid}}
\end{figure}

\begin{proof}
See \Cref{fig:valid} for a visual aid. We prove Items~(I) and (II) separately.

(I)~Given a valid instance $(P,\, \bB) \in \Bvalid$, we can construct
the boundary/normal value $v \geq \gamma$ part of some value distribution $\bV = \{V_{i}\}_{i \in [n]}$, following \Cref{lem:value_dist}.
For the low value $v < \gamma$ part, we simply let $V_{i}(v) \equiv 0$ on $v < \gamma$, namely putting all the undecided probabilities to the boundary value $\gamma$. Then the value distribution $\bV = \{V_{i}\}_{i \in [n]}$ is well defined. 

Since we zero-out everything below $\gamma$, the equivalence in \Cref{quantiles} can be extend to the entire quantile space $q \in [0,1]$. Hence, we can construct the equilibrium by identifying the quantiles for value distributions and bid distributions. 
Consider the quantile bid/value functions $\{B_{i}^{-1}\}_{i \in [n]}$ and $\{V_{i}^{-1}\}_{i \in [n]}$.
As \Cref{fig:valid} shows, we construct the strategy profile $\bs = \{s_{i}\}_{i \in [n]}$ as follows: For $i \in [n]$ and $v \in \supp(V_{i})$, let $q \sim U[0,\, 1]$ be a uniform random quantile, then
\[
    \Prx_{s_{i}}\big[\, s_{i}(v) = b \,\big] ~=~ \Prx_{q \,\sim\, U[0,\, 1]} \big[\, B_{i}^{-1}(q) = b \,\bigmid\, V_{i}^{-1}(q) = v \,\big],
    \qquad\qquad \forall b \in [\gamma,\, \lambda].
\]
We conclude Item~(I) with verifying the equilibrium conditions (\Cref{def:bne_formal}).

\setcounter{fact}{0}

\begin{fact}
\label{fact:valid:equilibrium}
$u_{i}(v,\, s_{i}(v)) \geq u_{i}(v,\, b^{*})$ almost surely over the random strategy $s_{i}$, for each bidder $i \in [n]$, any value $v \in \supp(V_{i})$, and any deviation bid $b^{*} \geq 0$.
\end{fact}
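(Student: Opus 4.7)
The plan is to verify the best-response inequality pointwise by (i)~characterizing the maximizer of $u_{i}(v, \cdot)$ on $[\gamma, \lambda]$ via first-order analysis, (ii)~matching the quantile construction to this maximizer, and (iii)~ruling out deviations outside $[\gamma, \lambda]$. First, I would fix the tie-breaking rule inside the auction $\alloc \in \bbFPA$ so that bidder $h = 1$ (the monopolist, if one exists) wins every tie at the boundary bid $\gamma$; validity makes each $B_{i}$ continuous on $(\gamma, \lambda]$, so ties above $\gamma$ have zero probability and need no further specification. With this rule, $\alloc_{i}(b^{*}) = \calB_{-i}(b^{*})$ for $b^{*} \in (\gamma, \lambda]$, $\alloc_{1}(\gamma) = \calB_{-1}(\gamma)$, $\alloc_{i}(\gamma) = 0$ for each non-monopoly $i \neq 1$, and $\alloc_{i}(b^{*}) = 0$ for every $b^{*} < \gamma$.

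Next, on $b^{*} \in (\gamma, \lambda)$ the derivative computation
\[
    \tfrac{\partial}{\partial b^{*}}\, u_{i}(v, b^{*}) ~=~ -\calB_{-i}(b^{*}) + (v - b^{*}) \cdot \calB'_{-i}(b^{*}) ~=~ \bigl(v - \varphi_{i}(b^{*})\bigr) \cdot \calB'_{-i}(b^{*}),
\]
combined with the validity conditions ($\varphi_{i}$ weakly increasing, $\calB'_{-i} > 0$), shows that $u_{i}(v, \cdot)$ restricted to $[\gamma, \lambda]$ is maximized at $b^{*} = \varphi_{i}^{-1}(v)$ when $v \in [\varphi_{i}(\gamma), \varphi_{i}(\lambda)]$, and at $b^{*} = \gamma$ when $v \leq \varphi_{i}(\gamma)$. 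Deviations outside this window are easily dismissed: $b^{*} < \gamma$ yields $u_{i}(v, b^{*}) = 0$ and cannot beat the nonnegative equilibrium utility, while $b^{*} > \lambda$ is dominated by $b^{*} = \lambda$ (same allocation probability $1$, smaller margin).

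The main step is to align the quantile construction $s_{i}(v) = B_{i}^{-1}(q)$ (for $v = V_{i}^{-1}(q)$) with this maximizer. Using \Cref{lem:value_dist}, I would check the CDF identity $V_{i}(\varphi_{i}(b)) = B_{i}(b)$ for $b \in [\gamma, \lambda]$: for non-monopoly bidders the reconstruction gives $V_{i}(\varphi_{i}(b)) = \Prx_{b_{i}}[b_{i} \leq \gamma \vee \varphi_{i}(b_{i}) \leq \varphi_{i}(b)] = B_{i}(b)$, and for the monopolist the $P$-factor equals $1$ on $[\varphi_{1}(\gamma), +\infty)$ since $P$ is supported on $[\gamma, \varphi_{1}(\gamma)]$. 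Inverting, quantiles $q \in [B_{i}(\gamma), 1]$ produce $s_{i}(v) = \varphi_{i}^{-1}(v)$ for $v \geq \varphi_{i}(\gamma)$; and quantiles $q \in [0, B_{i}(\gamma)]$ produce $s_{i}(v) = \gamma$ (since $B_{i}^{-1}$ is flat on that interval), matching the boundary optimum for either the monopolist's values $v \in [\gamma, \varphi_{1}(\gamma)]$ or a non-monopoly bidder's boundary value $v = \gamma$.

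The hard part will be the boundary bookkeeping at $b^{*} = \gamma$. For every value $v \in [\gamma, \varphi_{1}(\gamma)]$ of the monopolist, I need the right-derivative of $u_{1}(v, \cdot)$ at $\gamma^{+}$ to be non-positive, i.e., $v \leq \lim_{b^{*} \searrow \gamma} \varphi_{1}(b^{*}) = \varphi_{1}(\gamma)$; this is exactly the validity requirement on the support of $P$. I also need the chosen tie-breaking rule to be both admissible in $\bbFPA$ and consistent, so that the monopolist receives $\alloc_{1}(\gamma) = \calB_{-1}(\gamma)$ while non-monopoly bidders still get $\alloc_{i}(\gamma) = 0$ (and hence best-response utility zero) when bidding $\gamma$. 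Finally, the conditional distribution $P$ must be threaded through the quantile map so that the monopolist's values in $[\gamma, \varphi_{1}(\gamma)]$ are realized as bids equal to $\gamma$ almost surely, with any measure-zero slack absorbed into the ``almost surely'' clause of the statement.
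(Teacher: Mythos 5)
Your proposal is correct and follows essentially the same route as the paper's proof: the first-order analysis of $u_{i}(v,\,b^{*})$ via the sign of $(v - \varphi_{i}(b^{*}))\cdot\calB'_{-i}(b^{*})$ together with \blackref{monotonicity}, the quantile coupling between $V_{i}$ and $B_{i}$, the support condition $\supp(P)\subseteq[\gamma,\,\varphi_{1}(\gamma)]$ at the boundary bid, and tie-breaking at $\gamma$ in favor of the monopolist. Your explicit treatment of deviations $b^{*}>\lambda$ and of the tie-breaking rule merely makes precise what the paper leaves implicit, so no substantive difference remains.
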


\begin{proof}
% We consider the monopolist $B_{1}$ and the other non-monopoly bidders $i \in [2: n]$ respectively (\Cref{def:valid}).
First, the monopolist $B_{1}$ (\Cref{cor:allocation,def:monopolist,lem:monopolist}) has the allocation $\alloc_{1}(b) = \calB_{-1}(b)$ for a boundary/normal bid $b \geq \gamma$ and $\alloc_{1}(b) = 0$ for a low $b \in [0,\, \gamma)$. \\
{\bf (i)}~A Normal Bid $s_{1}(v) \in (\gamma,\, \lambda]${\bf .}
The underlying value $v = \varphi_{1}(s_{1}(v)) > s_{1}(v) > \gamma$, by construction and \Cref{lem:value_dist,lem:high_bid}.
Accordingly, the current allocation/utility are positive $u_{1}(v,\, s_{1}(v)) = (v - s_{1}(v)) \cdot \calB_{-1}(s_{1}(v)) > 0$.
Then a low deviation bid $b^{*} < \gamma$ is suboptimal, because the deviated allocation/utility are zero $u_{1}(v,\, b^{*}) = (v - b^{*}) \cdot \alloc_{1}(b^{*}) = 0$.
Moreover, a boundary/normal deviation bid $b^{*} \geq \gamma$ is suboptimal -- The current bid $s_{1}(v) > \gamma$ maximizes the utility formula $u_{1}(v,\, b^{*}) = (v - b^{*}) \cdot \calB_{-1}(b^{*})$, because the partial derivative $\frac{\partial u_{1}}{\partial b^{*}} = \big(v - \varphi_{1}(b^{*})\big) \cdot \calB'_{-1}(b^{*})$, the underlying value $v = \varphi_{1}(s_{1}(v))$, and the mapping $\varphi_{1}(b^{*})$ is increasing (\Cref{lem:high_bid:monotone} of \Cref{lem:high_bid}). \\
{\bf (ii)}~A Boundary Bid $s_{1}(v) = \gamma${\bf .}
The underlying value $v = s_{1}^{-1}(\gamma)$ exactly follows the conditional value distribution $P$ and ranges within $\supp(P) \subseteq [\gamma,\, \varphi_{1}(\gamma)]$, by construction and \Cref{def:conditional_value,def:valid}.
Reusing the above arguments (i.e., the utility formula $u_{1}(v,\, b^{*})$ is decreasing for $b^{*} \geq \gamma$),
we can easily see that the current bid $s_{1}(v) = \gamma$ is optimal.
In sum, the monopolist $B_{1}$ meets the equilibrium conditions.

\vspace{.1in}
Each non-monopoly bidder $i \in [2: n]$ (\Cref{cor:allocation,def:monopolist,lem:monopolist}) has the allocation $\alloc_{i}(b) = \calB_{-i}(b)$ for a normal bid $b > \gamma$ and $\alloc_{i}(b) = 0$ for a low/boundary bid $b \in [0,\, \gamma]$. \\
{\bf (i)}~A Normal Bid $s_{i}(v) \in (\gamma,\, \lambda]${\bf .}
Reusing the above arguments for the monopolist $B_{1}$,
we can easily see that the current normal bid $s_{i}(v) \in (\gamma,\, \lambda]$ is optimal. \\
{\bf (ii)}~A Boundary Bid $s_{i}(v) = \gamma${\bf .}
The underlying value must be the boundary value $v = s_{i}^{-1}(\gamma) \equiv \gamma$, by construction and \Cref{def:conditional_value,def:valid}.
Obviously, the current {\em zero} utility $u_{i}(v,\, s_{i}(v)) = 0$ and the current {\em boundary} bid $s_{i}(v) = \gamma$ are optimal.
In sum, each non-monopoly bidder $i \in [2: n]$ meets the equilibrium conditions.
{\bf \Cref{fact:valid:equilibrium}} follows then.
\end{proof}

(II)~Given an equilibrium $\bs \in \bbBNE(\bV,\, \alloc) \neq \emptyset$, as before, consider the bid distribution $\bB^{(\bs)}$, the conditional value distribution $P^{(\bs)}$, and the infimum/supremum first-order bids $\gamma = \inf(\supp(\calB^{(\bs)}))$ and $\lambda = \sup(\supp(\calB^{(\bs)}))$.
Following \Cref{lem:bid_distribution,lem:high_bid,lem:monopolist,def:conditional_value}, this instance $(P^{(\bs)},\, \bB^{(\bs)})$ is almost valid, except that
the distribution $\bB^{(\bs)} = \{B_{i}^{(\bs)}\}_{i \in [n]}$ can be supported on low bids $b < \gamma$. (Recall that the distribution $P^{(\bs)}$ is just supported on boundary/high values $v \geq \gamma$.)
By truncating the bid distribution $B_{i}(b) \equiv B_{i}^{(\bs)}(b) \cdot \indicator(b \geq \gamma)$ for $i \in [n]$ and reusing the conditional value distribution $P \equiv P^{(\bs)}$, we obtain a valid instance $(P,\, \bB) \in \Bvalid$.
In particular, the truncation $B_{i}(b) \equiv B_{i}^{(\bs)}(b) \cdot \indicator(b \geq \gamma)$ does not modify the auction/optimal {\SocialWelfares} (cf.\ \Cref{lem:auction_welfare,lem:optimal_welfare}).
% we need to vanish out the distributions of $B_{i}$ for everything $< \gamma$ and put all these probability mass to the point of $\gamma$: $\tilde{B}_{i}(b) \equiv B_{i}(b) \cdot \indicator(b \geq \gamma)$ for each $i \in [n]$.
% Then, $(P, \tilde{\bB})$ is a valid instance. This modification of $\tilde{B}_{i}(b)$ does not affect the expression in .
This finishes the proof of Item~(II).
\end{proof}

We conclude this section with (\Cref{cor:poa_identity}) an equivalent definition for {\PriceofAnarchy} in {\FirstPriceAuctions}, as an implication of \Cref{thm:valid}.

\begin{corollary}[{\PriceofAnarchy}]
\label{cor:poa_identity}
Regarding {\FirstPriceAuctions}, the {\PriceofAnarchy} is given by
\begin{align*}
    \PoA ~=~ \inf \bigg\{\, \frac{\FPA(P,\, \bB)}{\OPT(P,\, \bB)} \,\biggmid\, (P,\, \bB) \in \Bvalid ~\text{\em and}~ \OPT(P,\, \bB) < +\infty \,\bigg\}.
\end{align*}
\end{corollary}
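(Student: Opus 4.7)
The plan is to derive \Cref{cor:poa_identity} as an immediate consequence of \Cref{thm:valid}, by establishing the two inequalities between the two infima separately. The key observation is that \Cref{thm:valid} provides a welfare-preserving correspondence between the space of value-strategy equilibria $(\bV,\, \bs)$ with $\bs \in \bbBNE(\bV,\, \alloc)$ and the space of valid instances $(P,\, \bB) \in \Bvalid$, so each side of the correspondence witnesses ratios that appear on the other side.

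For the direction $\PoA \geq \inf \{\FPA(P,\, \bB) / \OPT(P,\, \bB) : (P,\, \bB) \in \Bvalid\}$, I would take any $(\bV,\, \bs)$ with $\bs \in \bbBNE(\bV,\, \alloc)$ and $\OPT(\bV) < +\infty$. By \Cref{thm:valid}~(II), there exists a valid instance $(P,\, \bB) \in \Bvalid$ such that $\FPA(P,\, \bB) = \FPA(\bV,\, \bs,\, \alloc)$ and $\OPT(P,\, \bB) = \OPT(\bV,\, \bs,\, \alloc) < +\infty$. Hence the ratio $\FPA(\bV,\, \bs)/\OPT(\bV)$ appearing in \Cref{def:poa} equals a ratio $\FPA(P,\, \bB)/\OPT(P,\, \bB)$ considered in the new infimum, so the latter infimum is at most every such ratio, and therefore at most $\PoA$.

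For the reverse direction $\PoA \leq \inf \{\FPA(P,\, \bB) / \OPT(P,\, \bB) : (P,\, \bB) \in \Bvalid\}$, I would take any valid instance $(P,\, \bB) \in \Bvalid$ with $\OPT(P,\, \bB) < +\infty$. By \Cref{thm:valid}~(I), there exists a value distribution $\bV$, a {\FirstPriceAuction} $\alloc \in \bbFPA$, and an equilibrium $\bs \in \bbBNE(\bV,\, \alloc)$ whose conditional value distribution and bid distribution are exactly $(P,\, \bB)$. Applying \Cref{lem:auction_welfare,lem:optimal_welfare} to this equilibrium yields $\FPA(\bV,\, \bs,\, \alloc) = \FPA(P,\, \bB)$ and $\OPT(\bV,\, \bs,\, \alloc) = \OPT(P,\, \bB) < +\infty$. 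Thus $(\bV,\, \bs)$ contributes exactly the ratio $\FPA(P,\, \bB)/\OPT(P,\, \bB)$ to the original infimum in \Cref{def:poa}, so $\PoA$ is at most every such ratio.

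Combining the two inequalities gives the claimed identity. There is essentially no technical obstacle here, since \Cref{thm:valid} already does all the heavy lifting: the welfare-preservation property in both of its directions is exactly what is needed to equate the two infima, and the finiteness condition $\OPT < +\infty$ transfers automatically because the optima are literally equal (not merely comparable) on the two sides of the correspondence.
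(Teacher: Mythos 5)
Your proof is correct and follows exactly the route the paper intends: the corollary is stated as an immediate implication of \Cref{thm:valid}, and your two-inequality argument (using part (II) for $\PoA \geq \inf$ and part (I) together with \Cref{lem:auction_welfare,lem:optimal_welfare} for $\PoA \leq \inf$) is precisely the fleshed-out version of that implication.
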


\newpage

\newpage

\section{Preprocessing Valid Instances}
\label{sec:preprocessing}

This section presents several preparatory reductions towards the potential worst-case instances.
In \Cref{subsec:pseudo}, we introduce the concept of valid {\em pseudo} instances, which generalizes valid instances and eases the presentation.
% as a limit of real instances.
% This extended space is convenient for our uses and will be our working space for the whole paper.
% and turns out to be convenient.
In \Cref{subsec:discretize}, we {\em discretize} an instance up to any {\PoA}-error $\epsilon > 0$, simplifying the bid-to-value mappings $\bvarphi$ to a (finite-size) bid-to-value table $\bPhi$.
In \Cref{subsec:translate}, we {\em translate} the instances, making the infimum bids zero $\gamma = 0$.
In \Cref{subsec:layer}, we
% the \blackref{alg:layer} reduction
{\em layer} the bid-to-value mappings $\bvarphi$, making the table $\bPhi$ decreasing bidder-by-bidder.
% both in rows and in columns.
In \Cref{subsec:polarize}, we
% the \blackref{alg:polarize} reduction
{\em derandomize} the conditional value $P$ to either the {\em floor} value $P^{\downarrow} \equiv 0$ or the {\em ceiling} value $P^{\uparrow} \equiv \varphi_1(0)$.

To conclude (\Cref{subsec:preprocess}),
we restrict the space of the worst cases to the {\em floor}/{\em ceiling} pseudo instances, which can be represented just by the table $\bPhi$.
These materials serve as the basis for the more complicated reductions later in \Cref{sec:reduction}.

% concludes the above materials:
% After all these regularization, w
% use  to represent a potential worst-case instance.
% , dropping the $\gamma$ and the $P$.
% can be approximated by {\em discretized} instances, and later we will only work with {\em discretized} instances.
% In particular, the bid-to-value mappings degenerate to a finite bid-to-value table after discretization.

\begin{comment}

Roughly speaking, towards a lower bound on the {\PriceofAnarchy}, we can switch the search space from valid real instances to {\em valid pseudo instances}.
\begin{itemize}
    \item \Cref{subsec:pseudo} the concept of pseudo instances

    \item \Cref{subsec:discretize} {\blackref{alg:discretize}}

    \item \Cref{subsec:translate} {\blackref{alg:translate}}

    \item \Cref{subsec:layer} {\blackref{alg:layer}}

    \item \Cref{subsec:polarize} {\blackref{alg:polarize}}

    \item \Cref{subsec:preprocess} falls in the subspace $(\Bvalid^{\downarrow} \cup \Bvalid^{\uparrow})$ of {\em floor}/{\em ceiling} pseudo instances.
\end{itemize}

\begin{reminder*}
\blue{{\bf Yaonan:} For \Cref{sec:preprocessing}, we are left with
\begin{itemize}
  %  \item organization of \Cref{sec:preprocessing}

  %  \item \Cref{rem:pseudo_instance} about pseudo bidder, namely a number of small bidders

    \item \Cref{fact:discretize:function_property} in the proof of \Cref{lem:discretize}

    \item \Cref{lem:discretize}; {\bf The General Case}.
\end{itemize}
}
\end{reminder*}

\newpage

\end{comment}

\subsection{The concept of valid pseudo instances}
\label{subsec:pseudo}

This subsection introduces the concept of valid {\em pseudo} instances $(P,\, \bB \otimes L)$, a natural extension of valid {\em real} instances $(P,\, \bB)$ from \Cref{def:valid}.
Given a valid real instance $(P,\, \bB)$, we can always (\Cref{lem:pseudo_instance}) construct a valid pseudo instance $(P,\, \bB \otimes L)$ that yields the same auction/optimal {\SocialWelfares}. Thus towards a lower bound on the {\PriceofAnarchy}, we can consider valid pseudo instances instead (\Cref{cor:pseudo_instance}).

Roughly speaking, a pseudo instance $(P,\, \bB \otimes L)$ includes one more {\em pseudo bidder} $L$ to a real instance $(P,\, \bB)$. To clarify this difference, we often use the letter $i \in [n]$ and its variants for real bidders, and the Greek letter $\sigma \in [n] \cup \{L\}$ and its variants for real or pseudo bidders. Without ambiguity, we redenote by $\Bvalid$ the space of valid pseudo instances. The formal definition of {\em valid} pseudo instances is given below; cf.\ \Cref{def:valid} for comparison.

\begin{definition}[Valid pseudo instances]
\label{def:pseudo}
A {\em valid} instance $(P,\, \bB \otimes L) \in \Bvalid$, where $\bB = \{B_{i}\}_{i \in [n]}$ and $n \geq 1$, is given by $n + 2$ independent distributions and satisfies the following.
\begin{itemize}
    \item The real bidders $B_{i}$ for $i \in [n]$ each compete with other bidders $\bB_{-i} \otimes L$, while the pseudo bidder $L$ competes with all bidders $\bB \otimes L$, including him/herself $L$.
    
    \item The bid distributions $\bB = \{B_{\sigma}\}_{\sigma \in [n] \cup \{L\}}$ have a common {\em infimum bid} $\inf(\supp(B_{\sigma})) = \gamma \in [0,\, +\infty)$ and a universal {\em supremum bid} $\max_{\sigma \in [n] \cup \{L\}} \sup(\supp(B_{\sigma})) = \lambda \in [\gamma,\, +\infty]$.
    
    \item The bid CDF's $B_{\sigma}(b)$ for $\sigma \in [n] \cup \{L\}$ are {\em continuous} functions on $b \in [\gamma,\, \lambda]$, while the competing bid CDF's $\calB_{-i}(b) = \prod_{k \in [n] \setminus \{i\}} B_{k}(b) \cdot L(b)$ for $i \in [n]$ and the first-order bid CDF $\calB(b) = \prod_{\sigma \in [n] \cup \{L\}} B_{\sigma}(b)$ are {\em continuous} and {\em strictly increasing} functions on $b \in [\gamma,\, \lambda]$. \\
    (The $\calB(b)$ is exactly the pseudo bidder $L$'s competing bid CDF.)
    
    \item \term[{\bf monotonicity}]{monotonicity}{\bf :}
    The bid-to-value mappings $\varphi_{\sigma}(b)
    \eqdef b + \big(\frac{\calB'(b)}{\calB(b)} - \frac{B'_{\sigma}(b)}{B_{\sigma}(b)} \cdot \indicator(\sigma \neq L)\big)^{-1} = \\ b + \big(\sum_{k \in [n] \setminus \{\sigma\}} \frac{B'_{k}(b)}{B_{k}(b)} + \frac{L'(b)}{L(b)}\big)^{-1}$ for $\sigma \in [n] \cup \{L\}$ are {\em weakly increasing} functions on $b \in [\gamma,\, \lambda]$.

    \item \term[{\bf monopoly}]{monopoly}{\bf :}
    The index-$1$ bidder $B_{1}$ without loss of generality is the unique {\em monopolist}. I.e., this bidder always wins $\alloc(\bb) \equiv 1$ in the all-infimum tiebreaker $\{\bb = \gamma^{\otimes n + 1}\}$.
    
    \item \term[{\bf boundedness}]{boundedness}{\bf :}
    Under the infimum bid $\gamma$, the monopolist $B_{1}$'s conditional value $s_{1}^{-1}(\gamma) \sim P$ is bounded as $\supp(P) \subseteq [\gamma,\, \varphi_{1}(\gamma)]$, while the other bidders' conditional values are truthful, $s_{\sigma}^{-1}(\gamma) \equiv \gamma$ for $\sigma \in [2: n] \cup \{L\}$.
\end{itemize}
\end{definition}

\begin{remark}[Pseudo instances]
\label{rem:pseudo_instance}
The concept of pseudo instances greatly eases the presentation and similar ideas also appear in the previous works \cite{AHNPY19,JLTX20,JLQTX19a,JJLZ21}.

Essentially, a pseudo instance $\bB \otimes L$ can be viewed as the limit of a sequence of real instances. I.e., the pseudo bidder $L$ can be viewed as, in the limit $k \to +\infty$, a collection of {\em low-impact} bidders $L_{1}(b) = \dots = L_{k}(b) \eqdef (L(b))^{1 / k}$.
All these low-impact bidders have the {\em common} competing bid distribution $\mathfrak{B}(b) = \prod_{i \in [n]} B_{i}(b) \cdot (L(b))^{1 - 1 / k}$, so the highest bid among them leads to the highest value and we just need to keep track of the collective information $\prod_{j \in [k]} L_{j} = L(b)$.

On the other hand, in the limit $k \to +\infty$, the common competing bid distribution $\mathfrak{B}(b)$ pointwise converges to the first-order bid distribution $\calB(b) = \prod_{i \in [n]} B_{i}(b) \cdot L(b)$.
This accounts for the pseudo mapping $\varphi_{L}(b) = b + \calB(b) / \calB'(b)$.
\end{remark}

Because valid pseudo instances $(P,\, \bB \otimes L)$ differ from the valid real instances ONLY in the definition of the pseudo mapping $\varphi_{L}(b)$, most results in \Cref{sec:structure} still hold. In particular,
\Cref{lem:pseudo_welfare} formulates their {\SocialWelfares} (which restates \Cref{lem:auction_welfare,lem:optimal_welfare} basically).

\begin{lemma}[{\SocialWelfares}]
\label{lem:pseudo_welfare}
For a valid pseudo instance $(P,\, \bB \otimes L) \in \Bvalid$, the expected auction/ optimal {\SocialWelfares} $\FPA(P,\, \bB \otimes L)$ and $\OPT(P,\, \bB \otimes L)$ are given by
\begin{align*}
    \FPA(P,\, \bB \otimes L)\;
    & ~=~ \Ex[P] \cdot \calB(\gamma)
    ~+~ \sum_{\sigma \in [n] \cup \{L\}} \Big(\int_{\gamma}^{\lambda} \varphi_{\sigma}(b) \cdot \frac{B'_{\sigma}(b)}{B_{\sigma}(b)} \cdot \calB(b) \cdot \d b\Big), \\
    \OPT(P,\, \bB \otimes L)
    % & = \Ex_{p,\, \bb,\, \ell} \Big[ \varphi_{\OPT}(p,\, \bb,\, \ell) \Big]\,
    & ~=~ \gamma ~+~ \int_{\gamma}^{+\infty} \Big(1 - P(v) \cdot \calV(v)\Big) \cdot \d v,
\end{align*}
where the first-order bid distribution $\calB(b) \eqdef \prod_{\sigma \in [n] \cup \{L\}} B_{\sigma}(b)$ and the quasi first-order value distribution $\calV(v) \eqdef \prod_{\sigma \in [n] \cup \{L\}} \Prx_{b_{\sigma} \sim B_{\sigma}}\big[\, (b_{\sigma} \leq \gamma) \vee (\varphi_{\sigma}(b_{\sigma}) \leq v) \,\big]$ can be reconstructed from $\bB \otimes L$.
\end{lemma}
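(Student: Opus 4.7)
The plan is to realize the pseudo instance as the limit of a sequence of valid real instances, as envisioned in \Cref{rem:pseudo_instance}, and then derive both formulas by applying \Cref{lem:auction_welfare,lem:optimal_welfare}. For each integer $k \geq 1$, I would replace the pseudo bidder $L$ with $k$ identical low-impact real copies $L_1^{(k)} = \cdots = L_k^{(k)}$, each having CDF $(L(b))^{1/k}$, and keep the same monopolist $B_1$ and conditional value distribution $P$. A direct calculation shows that the resulting real instance $(P,\, \bB \otimes (L^{1/k})^{\otimes k})$ has first-order bid CDF equal to $\calB(b) = \prod_{i \in [n]} B_i(b) \cdot L(b)$, and for every real bidder $i \in [n]$ its competing CDF $\calB_{-i}$ and mapping $\varphi_i$ coincide with those of the pseudo instance. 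Each low-impact copy shares the common competing CDF $\prod_{i \in [n]} B_i(b) \cdot (L(b))^{(k-1)/k}$, which converges pointwise to $\calB$ as $k \to \infty$, so each low-impact mapping $\varphi_{L_j}^{(k)}$ converges to $\varphi_L = b + \calB / \calB'$. I would first verify that for all sufficiently large $k$ this approximating instance is valid in the sense of \Cref{def:valid}; the key point is that \blackref{monotonicity} of $\varphi_L$ in the pseudo instance promotes to monotonicity of $\varphi_{L_j}^{(k)}$ in the limit.

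For the auction welfare, \Cref{lem:auction_welfare} applied to the $k$-th real instance produces the boundary term $\Ex[P] \cdot \calB(\gamma)$ (identical to the claimed formula) together with contributions from the real bidders and from the low-impact copies. The real-bidder sum $\sum_{i \in [n]} \int_\gamma^\lambda \varphi_i(b) \cdot (B_i'(b) / B_i(b)) \cdot \calB(b) \cdot \d b$ is independent of $k$ and is exactly the $\sigma \in [n]$ portion of the pseudo formula. The low-impact contribution collapses because $(L^{1/k})'/L^{1/k} = L'/(kL)$ and the $k$ copies are identical:
\begin{align*}
    \sum_{j = 1}^{k} \int_\gamma^\lambda \varphi_{L_j}^{(k)}(b) \cdot \frac{(L^{1/k})'(b)}{(L^{1/k})(b)} \cdot \calB(b) \cdot \d b
    ~=~ \int_\gamma^\lambda \varphi_{L_j}^{(k)}(b) \cdot \frac{L'(b)}{L(b)} \cdot \calB(b) \cdot \d b.
\end{align*}
By dominated convergence (with $\varphi_{L_j}^{(k)}$ bounded by $\lambda$ and $(L'/L) \cdot \calB$ integrable on $(\gamma, \lambda]$), this tends to $\int_\gamma^\lambda \varphi_L(b) \cdot (L'(b)/L(b)) \cdot \calB(b) \cdot \d b$, exactly the missing $\sigma = L$ term.

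For the optimal welfare, \Cref{lem:optimal_welfare} applied to the $k$-th instance yields an integrand of the form $1 - P(v)$ times products over real bidders and low-impact copies of probabilities of the form $\Prx[\,(\text{bid} \leq \gamma) \vee (\text{mapping} \leq v)\,]$. The real-bidder factors already match the $i \in [n]$ part of $\calV(v)$. The $k$-fold low-impact product telescopes: since $\varphi_{L_j}^{(k)}$ is monotone, the event $\{b_j \leq \gamma\} \cup \{\varphi_{L_j}^{(k)}(b_j) \leq v\}$ has probability $\bigl(L(\max(\gamma,\, (\varphi_{L_j}^{(k)})^{-1}(v)))\bigr)^{1/k}$ under $b_j \sim L^{1/k}$, whose $k$-th power is $L(\max(\gamma,\, (\varphi_{L_j}^{(k)})^{-1}(v)))$. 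As $k \to \infty$, this converges pointwise to $L(\max(\gamma,\, \varphi_L^{-1}(v))) = \Prx_{b_L \sim L}[(b_L \leq \gamma) \vee (\varphi_L(b_L) \leq v)]$, the missing $L$-factor of $\calV(v)$. Dominated convergence with envelope $1$ then pushes the limit through the outer $\d v$-integral and delivers the claimed formula for $\OPT(P,\, \bB \otimes L)$.

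The main obstacle is rigorously justifying that the approximating real instances are valid for large $k$, in particular the monotonicity of the low-impact mappings $\varphi_{L_j}^{(k)}$. I would handle this by observing that the competing CDFs $\calB_{-L_j}^{(k)}$ and their derivatives converge uniformly to $\calB$ and $\calB'$ on subintervals of $[\gamma,\, \lambda]$ away from the zero-measure set of nondifferentiability (\Cref{rem:bid_distribution}), so monotonicity of $\varphi_L$ lifts to that of $\varphi_{L_j}^{(k)}$ for all sufficiently large $k$. Alternatively, one may simply \emph{define} $\FPA(P,\, \bB \otimes L)$ and $\OPT(P,\, \bB \otimes L)$ as the respective limits of real-instance welfares along this sequence, in which case the lemma becomes a direct consequence of the above limit computations together with \Cref{lem:auction_welfare,lem:optimal_welfare}.
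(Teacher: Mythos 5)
There is a genuine gap, and it sits exactly where you flagged it: the validity of the approximating real instances. Your claim that monotonicity of $\varphi_{L}$ ``lifts'' to the copies' mappings $\varphi^{(k)}_{L_j}$ for all large $k$ does not follow from uniform convergence of the competing CDF's and their derivatives (weak monotonicity of a limit is not an open condition), and in fact it is false for a large class of valid pseudo instances --- including precisely the discretized ones (\Cref{def:discretize}) that the paper later feeds into this lemma. Concretely, each copy's mapping is $\varphi^{(k)}_{L_j}(b) = b + \big(\tfrac{\calB'(b)}{\calB(b)} - \tfrac{1}{k}\tfrac{L'(b)}{L(b)}\big)^{-1}$. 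Take $n = 1$ and a piece on which both mappings are constant, say $\varphi_{1} \equiv \phi_{1} \geq \varphi_{L} \equiv c > b$, so that $\calB'/\calB = (c - b)^{-1}$ and $L'/L = (\phi_{1} - b)^{-1}$; a one-line differentiation shows that the sign of $\tfrac{\d}{\d b}\varphi^{(k)}_{L_j}(b)$ equals the sign of $(1 + 1/k) - 2\tfrac{\phi_{1} - b}{c - b} \leq (1 + 1/k) - 2 < 0$ for every $k \geq 2$. So the copies' mappings are strictly decreasing on the whole piece no matter how large $k$ is; by \Cref{lem:high_bid} such bid distributions cannot arise from any equilibrium, hence \Cref{lem:auction_welfare,lem:optimal_welfare} cannot be applied to the approximants, and both your main computation and your fallback (defining the pseudo welfares as limits of real-instance welfares along this sequence) lose their licensed starting point. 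The same obstruction appears whenever $\varphi_{L}$ has a flat part on which $L$ strictly increases, which is the generic situation for the instances this lemma is actually used on. (A secondary slip: your dominated-convergence envelope ``$\varphi^{(k)}_{L_j}$ bounded by $\lambda$'' is wrong since values exceed bids, but this is harmless --- $\varphi^{(k)}_{L_j}$ decreases monotonically in $k$ to $\varphi_{L}$, so monotone convergence suffices.)

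For comparison, the paper does not run a limit argument here at all: since a pseudo instance differs from a real one only in the definition of the pseudo mapping, it treats \Cref{lem:pseudo_welfare} as the verbatim extension --- essentially the definition --- of the formulas in \Cref{lem:auction_welfare,lem:optimal_welfare}, with the pseudo bidder contributing value $\varphi_{L}(b)$ at a normal bid and $\gamma$ at the boundary; the limit-of-real-instances picture of \Cref{rem:pseudo_instance} is invoked only informally, and is made rigorous only for the specific worst case in \Cref{sec:LB}, where monotonicity of the copies' mappings is verified by an explicit computation rather than by a soft convergence argument. If you want to keep your route, you would need either to restrict to pseudo instances with strictly increasing $\varphi_{L}$ (and treat the rest by an additional perturbation argument), or to verify validity of the approximating instances directly instance by instance.
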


\Cref{lem:pseudo_mapping} shows that given a pseudo instance $(P,\, \bB \otimes L)$, the pseudo bidder $L$ always has a {\em dominated} bid-to-value mapping $\varphi_{L}(b) \leq \varphi_{i}(b)$, compared with other real bidders $i \in [n]$.

\begin{lemma}[Bid-to-value mappings]
\label{lem:pseudo_mapping}
For a valid pseudo instance $(P,\, \bB \otimes L) \in \Bvalid$, the bid-to-value mappings $\bvarphi = \{\varphi_{\sigma}\}_{\sigma \in [n] \cup \{L\}}$ satisfy the following:
\begin{enumerate}[font = {\em\bfseries}]
    \item\label{lem:pseudo_mapping:dominance}
    $\min(\bvarphi(b^{\otimes n + 1})) = \varphi_{L}(b) > b$ for $b \in (\gamma,\, \lambda]$. Further, $\min(\bvarphi(\gamma^{\otimes n + 1})) = \varphi_{L}(\gamma) \geq \gamma$.
    
    \item\label{lem:pseudo_mapping:inequality}
    $\sum_{i \in [n]} \frac{\varphi_{L}(b) - b}{\varphi_{i}(b) - b} \geq n - 1$ for $b \in (\gamma,\, \lambda]$.
\end{enumerate}
\end{lemma}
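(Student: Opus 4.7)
The plan is to reduce both parts to elementary algebra by rewriting the mappings via logarithmic derivatives. For each $\sigma \in [n] \cup \{L\}$ (with the convention $B_L := L$), set $f_\sigma(b) := B'_\sigma(b)/B_\sigma(b)$ and $F(b) := \calB'(b)/\calB(b) = \sum_{\sigma \in [n] \cup \{L\}} f_\sigma(b)$. Directly from \Cref{def:pseudo}, the two mapping formulas simplify to
\[
    \varphi_L(b) - b \;=\; \frac{1}{F(b)}, \qquad \varphi_i(b) - b \;=\; \frac{1}{F(b) - f_i(b)} \quad \text{for } i \in [n].
\]
All of the subsequent comparisons will come from manipulating these two reciprocals; the differentiability issues on a zero-measure subset of $(\gamma,\lambda)$ can be absorbed by the standard real-analysis caveat in \Cref{rem:bid_distribution}.

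For \textbf{\Cref{lem:pseudo_mapping:dominance}}, each CDF $B_\sigma$ is weakly increasing, so $f_i(b) \geq 0$, and hence $F(b) - f_i(b) \leq F(b)$. On $(\gamma,\lambda]$ the first-order distribution $\calB$ is strictly increasing (the pseudo-instance analog of \Cref{lem:bid_distribution}, proved by the same argument), so $F(b) > 0$ there. This yields $0 < \varphi_L(b) - b \leq \varphi_i(b) - b$ for every $i \in [n]$, and therefore $\varphi_L(b) = \min_{\sigma}\varphi_\sigma(b) > b$. The boundary bound $\varphi_L(\gamma) \geq \gamma$ then follows by the right-limit, which exists because \blackref{monotonicity} in \Cref{def:pseudo} makes $\varphi_L$ increasing on $[\gamma,\lambda]$.

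For \textbf{\Cref{lem:pseudo_mapping:inequality}}, the individual ratio telescopes cleanly:
\[
    \frac{\varphi_L(b) - b}{\varphi_i(b) - b} \;=\; \frac{F(b) - f_i(b)}{F(b)} \;=\; 1 - \frac{f_i(b)}{F(b)}.
\]
Summing over $i \in [n]$ and using $\sum_{i \in [n]} f_i(b) = F(b) - f_L(b)$, I get
\[
    \sum_{i \in [n]} \frac{\varphi_L(b) - b}{\varphi_i(b) - b} \;=\; n - \frac{F(b) - f_L(b)}{F(b)} \;=\; n - 1 + \frac{f_L(b)}{F(b)} \;\geq\; n - 1,
\]
the final inequality being immediate from $f_L(b) = L'(b)/L(b) \geq 0$ and $F(b) > 0$.

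I do not foresee a serious obstacle: the proof is a one-line identity once the logarithmic-derivative substitution is made. The only delicate points are routine, namely (i) invoking the pseudo-instance extension of \Cref{lem:bid_distribution} to guarantee $F(b) > 0$ strictly on $(\gamma,\lambda]$, and (ii) handling the zero-measure set where $\calB$ fails to be differentiable via \Cref{cor:bid_distribution}. Neither affects the resulting inequalities.
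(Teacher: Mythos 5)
Your proposal is correct and matches the paper's own argument: both parts follow by the same logarithmic-derivative manipulation of the formulas in \Cref{def:pseudo}, with \Cref{lem:pseudo_mapping:dominance} coming from $B'_i/B_i \geq 0$ plus positivity of $\calB'/\calB$ on $(\gamma,\lambda]$, and \Cref{lem:pseudo_mapping:inequality} from the identity $\sum_{i\in[n]}\frac{\varphi_L(b)-b}{\varphi_i(b)-b} = n-1+\frac{L'(b)/L(b)}{\calB'(b)/\calB(b)} \geq n-1$, which is exactly the paper's rearrangement. The only cosmetic difference is that the paper phrases the first part as a citation of Fact~2 in the proof of \Cref{lem:high_bid}, whereas you rederive it directly; the substance is identical.
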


\begin{proof}
{\bf \Cref{lem:pseudo_mapping:dominance}} is an immediate consequence of (\Cref{def:pseudo}) the bid-to-value mappings 
$\varphi_{\sigma}(b) = b + \big(\frac{\calB'(b)}{\calB(b)} - \frac{B'_{\sigma}(b)}{B_{\sigma}(b)} \cdot \indicator(\sigma \neq L)\big)^{-1}$ for $\sigma \in [n] \cup \{L\}$.
% , where the first-order bid distribution $\calB(b) = \prod_{i \in [n]} B_{i}(b) \cdot L(b)$.
In particular, in the proof of \Cref{lem:high_bid} (Fact~\ref{lem:high_bid:rational}), we already show that the pseudo mapping $\varphi_{L}(b) > b$ for $b \in (\gamma,\, \lambda]$ and $\varphi_{L}(\gamma) \geq \gamma$. Moreover, we have
$\sum_{i \in [n]} \big(\varphi_{i}(b) - b\big)^{-1}
= n \cdot \frac{\calB'(b)}{\calB(b)} - \sum_{i \in [n]} \frac{B'_{i}(b)}{B_{i}(b)}
= (n - 1) \cdot \frac{\calB'(b)}{\calB(b)} + \frac{L'(b)}{L(b)}
\geq (n - 1) \cdot \big(\varphi_{L}(b) - b\big)^{-1}$.
Rearranging this equation gives {\bf \Cref{lem:pseudo_mapping:inequality}}.
\end{proof}

\Cref{lem:pseudo_distribution} shows that the bid distributions $\bB \otimes L = \{B_{i}\}_{i \in [n]} \otimes L$ can be reconstructed from the bid-to-value mappings $\bvarphi = \{\varphi_{\sigma}\}_{\sigma \in [n] \cup \{L\}}$.
This lemma is important in that many subsequent reductions will construct new bid distributions $\tilde{\bB} \otimes \tilde{L}$ in terms of the mappings $\tilde{\bvarphi}$.
Therefore, the correctness of such construction reduces to checking \Cref{lem:pseudo_mapping:dominance,lem:pseudo_mapping:inequality} of \Cref{lem:pseudo_mapping} and monotonicity of the mappings $\tilde{\bvarphi}$.
% the modified mappings are increasing and

\begin{lemma}[Bid distributions]
\label{lem:pseudo_distribution}
For $(n + 1)$ functions $\bvarphi = \{\varphi_{\sigma}\}_{\sigma \in [n] \cup \{L\}}$ that are increasing on $b \in [\gamma,\, \lambda]$ and satisfy \Cref{lem:pseudo_mapping:dominance,lem:pseudo_mapping:inequality} of \Cref{lem:pseudo_mapping},
the functions $\bB \otimes L$ given below are well-defined $[\gamma,\, \lambda]$-supported bid distributions and their bid-to-value mappings are exactly $\bvarphi = \{\varphi_{\sigma}\}_{\sigma \in [n] \cup \{L\}}$.
\begin{align*}
    B_{i}(b) & ~\equiv~ \exp\Big(-\int_{b}^{\lambda}
    \Big(\big(\varphi_{L}(b) - b\big)^{-1} - \big(\varphi_{i}(b) - b\big)^{-1}\Big) \cdot \d b\Big) \cdot \indicator(b \geq \gamma),
    \qquad\qquad \forall i \in [n]. \\
    L(b) & ~\equiv~ \exp\Big(-\int_{b}^{\lambda}
    \Big(\sum_{i \in [n]} \big(\varphi_{i}(b) - b\big)^{-1} - (n - 1) \cdot \big(\varphi_{L}(b) - b\big)^{-1}\Big) \cdot \d b\Big) \cdot \indicator(b \geq \gamma).
\end{align*}
\end{lemma}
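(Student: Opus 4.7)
My plan is to check two things: that the defined $B_i$ and $L$ are bona fide CDFs on $[\gamma,\lambda]$, and that taking the log-derivatives recovers exactly $\bvarphi$ via the formula in \Cref{def:pseudo}. Both checks reduce to straightforward manipulations once the two structural hypotheses \Cref{lem:pseudo_mapping:dominance,lem:pseudo_mapping:inequality} are expressed in the right form.

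First, for the CDF properties. The integrand defining $B_i$ is $\frac{1}{\varphi_L(b)-b}-\frac{1}{\varphi_i(b)-b}$, which is $\geq 0$ on $(\gamma,\lambda]$ by \Cref{lem:pseudo_mapping:dominance} (dominance $\varphi_L\le\varphi_i$). The integrand defining $L$ is $\sum_{i}\frac{1}{\varphi_i(b)-b}-(n-1)\frac{1}{\varphi_L(b)-b}$, which is $\geq 0$ on $(\gamma,\lambda]$ by dividing \Cref{lem:pseudo_mapping:inequality} through by $\varphi_L(b)-b>0$. Hence both $B_i$ and $L$ are nondecreasing on $[\gamma,\lambda]$, equal to $1$ at $b=\lambda$ (the integral collapses), equal to $0$ for $b<\gamma$ (by the indicator), and by monotonicity lie in $[0,1]$ throughout. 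Since the integrands are continuous on every compact subinterval of $(\gamma,\lambda]$, the CDFs are continuous on $(\gamma,\lambda]$; any probability mass can only sit at $\gamma$, which is permitted. Together with $B_\sigma(\lambda)=1$ this gives $\sup(\supp(B_\sigma))=\lambda$ and $\inf(\supp(B_\sigma))=\gamma$.

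Second, for the bid-to-value mappings. Taking logarithmic derivatives yields
\[
    \frac{B'_i(b)}{B_i(b)}=\frac{1}{\varphi_L(b)-b}-\frac{1}{\varphi_i(b)-b},\qquad
    \frac{L'(b)}{L(b)}=\sum_{i\in[n]}\frac{1}{\varphi_i(b)-b}-(n-1)\frac{1}{\varphi_L(b)-b}.
\]
Summing all $n+1$ of these gives $\frac{\calB'(b)}{\calB(b)}=\frac{1}{\varphi_L(b)-b}$, which recovers $\varphi_L(b)=b+\calB(b)/\calB'(b)$, the pseudo-bidder mapping from \Cref{def:pseudo}. For each real bidder $i$, subtracting $B'_i/B_i$ from $\calB'/\calB$ gives $\frac{1}{\varphi_L-b}-\bigl(\frac{1}{\varphi_L-b}-\frac{1}{\varphi_i-b}\bigr)=\frac{1}{\varphi_i-b}$, so $b+\bigl(\calB'/\calB-B'_i/B_i\bigr)^{-1}=\varphi_i(b)$, matching \Cref{def:pseudo} verbatim. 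Hence the bid-to-value mappings associated with the constructed $\bB\otimes L$ are exactly the given $\bvarphi$.

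The main technical nuisance I expect is the behavior at the left endpoint $b=\gamma$: when $\varphi_L(\gamma)=\gamma$ the integrand blows up, so the integrals in the definitions of $B_i$ and $L$ could diverge, corresponding to $B_\sigma(\gamma)=0$, i.e., no probability mass at $\gamma$. This is a legitimate CDF, but it has to be reconciled with the monopolist's conditional value $P$, whose support lives at $\gamma$. My plan is to treat the $b=\gamma$ endpoint by a limiting argument $\int_{\gamma+\varepsilon}^{\lambda}$ with $\varepsilon\searrow 0$, noting that the resulting limit $B_\sigma(\gamma)\in[0,1]$ always exists by monotone convergence, and that the log-derivative identities above hold on all of $(\gamma,\lambda]$ where differentiation is permitted. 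Monotonicity of $\bvarphi$ (the hypothesis of the lemma) guarantees differentiability almost everywhere (cf.\ \Cref{cor:bid_distribution,rem:bid_distribution}), so the identification of mappings above is valid almost everywhere on $(\gamma,\lambda]$, which is all we need.
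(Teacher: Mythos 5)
Your proposal is correct and follows essentially the same route as the paper: the two hypotheses from \Cref{lem:pseudo_mapping} give nonnegative integrands (hence increasing CDFs with the right boundary values $B_{\sigma}(\lambda)=1$ and $B_{\sigma}(b)=0$ below $\gamma$), and the logarithmic-derivative algebra recovers the mappings of \Cref{def:pseudo}. You simply write out the elementary algebra and the endpoint/a.e.-differentiability caveats that the paper states but omits, which is fine.
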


\begin{proof}
The construction of $\bB \otimes L$ intrinsically ensures that $B_{\sigma}(b) = 0$ below the infimum bid $b < \gamma$ and $B_{\sigma}(\lambda) = 1$ at the supremum bid $\lambda$, for $\sigma \in [n] \cup \{L\}$.
Moreover, \Cref{lem:pseudo_mapping:dominance,lem:pseudo_mapping:inequality} of \Cref{lem:pseudo_mapping} together make each $B_{\sigma}(b)$ an increasing function on $b \in [\gamma,\, \lambda]$, hence a well-defined $[\gamma,\, \lambda]$-supported bid distribution. It remains to verify $\varphi_{\sigma}(b) = b + \big(\sum_{k \in [n] \setminus \{\sigma\}} \frac{B'_{k}(b)}{B_{k}(b)} + \frac{L'(b)}{L(b)}\big)^{-1}$ over the bid support $b \in [\gamma,\, \lambda]$, for $\sigma \in [n] \cup \{L\}$, namely the functions $\bvarphi = \{\varphi_{\sigma}\}_{\sigma \in [n] \cup \{L\}}$ are exactly (\Cref{def:pseudo}) the bid-to-value mappings stemmed from the $\bB \otimes L$.
These identities for $\sigma \in [n] \cup \{L\}$ can be easily seen via elementary algebra; we omit the details for brevity.
\Cref{lem:pseudo_distribution} follows then.
\end{proof}

\Cref{lem:pseudo_instance} suggests that any valid {\em real} instance $(P,\, \bB)$ can be reinterpreted as a valid {\em pseudo} instance $(P,\, \bB \otimes L_{\gamma})$, by employing a specific pseudo bidder $L_{\gamma}$.
As a consequence (\Cref{cor:pseudo_instance}): Towards a lower bound on the {\PriceofAnarchy} in {\FirstPriceAuctions}, we can consider valid {\em pseudo} instances instead; cf.\ \Cref{cor:poa_identity} for comparison.

\begin{lemma}[Pseudo instances]
\label{lem:pseudo_instance}
For a valid instance $(P,\, \bB) \in \Bvalid$, the following hold:
\begin{enumerate}[font = {\em\bfseries}]
    \item\label{lem:pseudo_instance:property}
    The pseudo bidder $L_{\gamma}(b) \equiv \indicator(b \geq \gamma)$ induces a valid pseudo instance $(P,\, \bB \otimes L_{\gamma}) \in \Bvalid$.

    \item\label{lem:pseudo_instance:poa}
     $\OPT(P,\, \bB \otimes L_{\gamma}) = \OPT(P,\, \bB)$ and $\FPA(P,\, \bB \otimes L_{\gamma}) = \FPA(P,\, \bB)$.
\end{enumerate}
\end{lemma}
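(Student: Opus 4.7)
The plan is to exploit that $L_\gamma$ places a unit atom at the infimum bid $\gamma$ and contributes nothing above it: on the open interval $(\gamma, \lambda]$ we have $L_\gamma(b) \equiv 1$ and $L_\gamma'(b) \equiv 0$, and $L_\gamma(\gamma) = 1$. Hence the first-order bid distribution $\calB(b) = \prod_{i \in [n]} B_{i}(b) \cdot L_\gamma(b)$ and its derivative $\calB'(b)$ agree pointwise on the closed interval $[\gamma, \lambda]$ with the corresponding quantities computed for the original real instance $(P, \bB)$. This observation drives both parts of the lemma.

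For \textbf{\Cref{lem:pseudo_instance:property}}, I would verify the \blackref{monotonicity} and \blackref{boundedness} conditions of \Cref{def:pseudo}. Substituting $L_\gamma'/L_\gamma \equiv 0$ into the pseudo-instance formula collapses $\varphi_i(b) = b + \big(\tfrac{\calB'(b)}{\calB(b)} - \tfrac{B_i'(b)}{B_i(b)}\big)^{-1}$ for each $i \in [n]$ to exactly the real-instance mapping, so its monotonicity is inherited from the validity of $(P, \bB)$. The pseudo mapping simplifies to $\varphi_L(b) = b + \calB(b)/\calB'(b)$, i.e., the first-order mapping of the original instance, whose monotonicity is precisely the content of \Cref{fact:pseudo} invoked within the proof of \Cref{lem:high_bid}. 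For \blackref{boundedness}, the monopolist's constraint $\supp(P) \subseteq [\gamma, \varphi_1(\gamma)]$ transfers unchanged since $\varphi_1$ is unchanged at $\gamma$ by continuous extension, and $L_\gamma$ trivially satisfies the truthful-at-boundary condition because it bids $\gamma$ almost surely.

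For \textbf{\Cref{lem:pseudo_instance:poa}}, I would substitute into the welfare formulas of \Cref{lem:pseudo_welfare}. In $\OPT$, the quasi first-order value distribution $\calV(v)$ picks up the extra factor $\Prx_{b_L \sim L_\gamma}\big[(b_L \leq \gamma) \vee (\varphi_L(b_L) \leq v)\big] = L_\gamma(\gamma) = 1$, so $\calV$ and hence the entire formula are unchanged. In $\FPA$, the boundary term $\Ex[P] \cdot \calB(\gamma)$ is invariant because $L_\gamma(\gamma) = 1$; the integrands for $i \in [n]$ are pointwise identical to the real-instance ones on $(\gamma, \lambda]$; and the extra summand for $\sigma = L$, namely $\int_\gamma^\lambda \varphi_L(b) \cdot \tfrac{L_\gamma'(b)}{L_\gamma(b)} \cdot \calB(b) \cdot \d b$, vanishes because $L_\gamma' \equiv 0$ on $(\gamma, \lambda]$. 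The only mildly subtle point I anticipate is the atom of $L_\gamma$ at $\gamma$, but both welfare formulas already segregate the boundary contribution (via $\Ex[P] \cdot \calB(\gamma)$ for $\FPA$ and via the $\gamma +$ prefix with the $[\gamma, +\infty)$ integral for $\OPT$), so the atom never enters the density-based expressions in the normal-bid region and poses no real obstacle.
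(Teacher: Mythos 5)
Your proposal is correct and follows essentially the same route as the paper: it observes that $L_{\gamma}$ leaves the real mappings untouched, identifies $\varphi_{L_{\gamma}}(b) = b + \calB(b)/\calB'(b)$ with the first-order mapping whose monotonicity is already established in Fact~\ref{fact:pseudo} of \Cref{lem:high_bid}, and then checks that every term of the welfare formulas in \Cref{lem:pseudo_welfare} is unchanged because $L_{\gamma} \equiv 1$ and $L_{\gamma}' \equiv 0$ on the bid support. Your write-up of \Cref{lem:pseudo_instance:poa} merely spells out the term-by-term substitution that the paper leaves implicit.
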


\begin{proof}
As mentioned in \Cref{rem:pseudo_instance}, to see the validity $(P,\, \bB \otimes L_{\gamma}) \in \Bvalid$, we only need to verify \blackref{monotonicity} of mappings $\bvarphi = \{\varphi_{\sigma}\}_{\sigma \in [n] \cup \{L\}}$ and \blackref{boundedness} of the conditional value $P$.

\vspace{.1in}
\noindent
{\bf \Cref{lem:pseudo_instance:property}.}
We have $L_{\gamma}(b) = 1$ and $L'_{\gamma}(b) \big/ L_{\gamma}(b) = 0$ for $b \geq \gamma$. Thus, each real bidder $i \in [n]$ keeps the {\em same} mapping $\varphi_{i}(b)$ after including the pseudo bidder $L_{\gamma}$ (\Cref{def:valid,def:pseudo}), preserving \blackref{monotonicity}.
Also, the pseudo mapping can be written as $\varphi_{L_{\gamma}}(b) = b + \calB(b) \big/ \calB'(b)$ for $b \in [\gamma,\, \lambda]$, based on the first-order bid distribution $\calB(b) = \prod_{i \in [n]} B_{i}(b) \cdot L_{\gamma}(b) = \prod_{i \in [n]} B_{i}(b)$.
In the proof of \Cref{lem:high_bid} (Fact~\ref{fact:pseudo}), we already verify \blackref{monotonicity} of this mapping.
Moreover, the {\em unmodified} conditional value $P$ must preserve \blackref{boundedness}.

\vspace{.1in}
\noindent
{\bf \Cref{lem:pseudo_instance:poa}.}
This can be easily inferred from (\Cref{lem:auction_welfare,lem:optimal_welfare} versus \Cref{lem:pseudo_welfare}) the {\SocialWelfare} formulas for valid {\em real} versus  {\em pseudo} instances; thus we omit details for brevity. Roughly speaking, the point is that the pseudo bidder $L_{\gamma}(b) = 1$ for $b \geq \gamma$ has no effect on {\SocialWelfares}.
\end{proof}

Due to the above lemma, we know that the {\PoA} of pseudo instances is a lower bound of the real {\PoA}. (On the other hand, we know that a pseudo instance can be viewed as the limit of a sequence of real instances. As a result, any {\PoA} ratio obtained by a pseudo instance can be approximated by  real instances arbitrarily well.) Thus, we have the following corollary.

\begin{corollary}[Lower bound]
\label{cor:pseudo_instance}
Regarding {\FirstPriceAuctions}, the {\PriceofAnarchy} is at least
\begin{align*}
    \PoA ~\geq~ \inf \bigg\{\, \frac{\FPA(P,\, \bB \otimes L)}{\OPT(P,\, \bB \otimes L)} \,\biggmid\, (P,\, \bB \otimes L) \in \Bvalid ~\text{\em and}~ \OPT(P,\, \bB \otimes L) < +\infty \,\bigg\}.
\end{align*}
\end{corollary}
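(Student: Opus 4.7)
The plan is to establish \Cref{cor:pseudo_instance} by comparing with \Cref{cor:poa_identity}, which already identifies $\PoA$ with the infimum of $\FPA(P,\bB)/\OPT(P,\bB)$ over valid real instances $(P,\bB)\in\Bvalid$, and verifying the two-sided inclusion of the two infima.

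The direction ``pseudo infimum $\leq \PoA$'' is immediate from \Cref{lem:pseudo_instance}: for any valid real instance $(P,\bB)$, the lemma produces a valid pseudo instance $(P,\bB \otimes L_\gamma)$ with identical auction and optimal welfares, hence the same ratio. Since every real ratio is realized as a pseudo ratio, taking infima on both sides yields this inequality.

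For the reverse direction ``$\PoA \leq$ pseudo infimum'', I would approximate every valid pseudo instance by valid real ones in the spirit of \Cref{rem:pseudo_instance}. Fix a valid pseudo instance $(P, \bB \otimes L)$ and consider the sequence of real instances $(P, \bB \otimes \{L^{1/k}\}^{\otimes k})$ for $k \geq 1$, replacing the pseudo bidder by $k$ identical low-impact real copies. A direct computation using \Cref{def:pseudo} shows that each real mapping $\varphi_i$ for $i \in [n]$ is preserved exactly, while each copy's mapping $\varphi_{L^{1/k}}^{(k)}(b) = b + \bigl(\calB'(b)/\calB(b) - (1/k) L'(b)/L(b)\bigr)^{-1}$ converges pointwise to the pseudo mapping $\varphi_L(b) = b + \calB(b)/\calB'(b)$ as $k \to \infty$. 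Plugging into the welfare formulas of \Cref{lem:pseudo_welfare} then yields convergence of both $\FPA$ and $\OPT$, so the pseudo ratio is a limit of real ratios, giving $\PoA \leq$ pseudo infimum.

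The main technical point I expect to work through is the validity of each finite-$k$ approximating real instance. Monotonicity of the preserved mappings $\varphi_i$ is automatic, but monotonicity of each copy's mapping reduces (via \Cref{fact:high_bid:convex}) to convexity of $1/\calC_k$ with $\calC_k = \calB \cdot L^{-1/k}$, which is not directly implied by the pseudo-validity conditions. I would address this by a continuity-plus-perturbation argument: strict convexity of $1/\calB$ (from pseudo monotonicity of $\varphi_L$) propagates monotonicity of the copy mappings to large $k$ in the generic case, and any tight boundary points of the convexity can be absorbed by a small smoothing of $L$ whose effect on the welfare ratio vanishes in the $k \to \infty$ limit.
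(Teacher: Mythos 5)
Your proposal follows the paper's own argument for this corollary: one direction is exactly \Cref{lem:pseudo_instance} (every valid real instance induces a pseudo instance with identical auction/optimal welfares, hence the same ratio), and the reverse direction is the $k$-fold splitting of the pseudo bidder into i.i.d.\ low-impact copies $\{L^{1/k}\}^{\otimes k}$ from \Cref{rem:pseudo_instance}, which the paper invokes informally to say every pseudo ratio is a limit of real ratios. You are in fact more explicit than the paper, which never verifies validity (monotonicity of the copies' bid-to-value mappings) of the finite-$k$ approximants; the smoothing/perturbation step you flag addresses precisely the point the paper leaves implicit.
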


\subsection{{\discretize}: Towards (approximate) discrete pseudo instances}
\label{subsec:discretize}

In this subsection, we introduce the concept of {\em discretized} pseudo instances $(P,\, \bB \otimes L)$; see \Cref{def:discretize,fig:discretize} for its formal definition and a visual aid.
% Recall that the quasi value distributions $\max\big(V_{\sigma}(v),\, B_{\sigma}(\gamma)\big)$ for $\sigma \in [n] \cup \{L\}$ (\Cref{cor:reconstruction}) can be reconstructed from the bid distributions $\bB \otimes L$.
%Roughly speaking, these $\max\big(V_{\sigma}(v),\, B_{\sigma}(\gamma)\big)$ are {\em discrete} distributions when the pseudo instance $(P,\, \bB \otimes L)$ is {\em discretized}.

%The formal definition of {\em discretized} pseudo instances $(P,\, \bB \otimes L)$ is given below.

\begin{definition}[Discretized pseudo instances]
\label{def:discretize}
A {\em valid} pseudo instance $(P,\, \bB \otimes L)$ from  \Cref{def:pseudo} is further called {\em discretized} when it satisfies \blackref{finiteness} and \blackref{piecewise}.
\begin{itemize}
    \item \term[{\bf finiteness}]{finiteness}{\bf :}
    The supremum bid/value are {\em bounded away} from infinite $\lambda \leq \max(\bvarphi(\blambda)) < +\infty$.

    \item \term[{\bf piecewise constancy}]{piecewise}{\bf :}
    Given some {\em bounded} integer $0 \leq m < +\infty$ and some $(m + 1)$-piece partition of the bid support $\bLambda \eqdef [\gamma = \lambda_{0},\, \lambda_{1}) \cup
    [\lambda_{1},\, \lambda_{2}) \cup \dots \cup
    [\lambda_{m},\, \lambda_{m + 1} = \lambda]$, each bid-to-value mapping $\varphi_{\sigma}(b)$ for $\sigma \in [n] \cup \{L\}$ is a {\em piecewise constant} function under this partition.
\end{itemize}
Notice that the partition $\bLambda$ relies on the bid distributions $\bB \otimes L$ but is irrelevant to the monopolist $B_{1}$'s conditional value $P$.
% \blue{The integer $m$ will also be called the {\em partition number} $N_{\sf partition}(\bB \otimes L)$; notice that this irrelevant to the monopolist $B_{1}$'s conditional value $P$.}
\end{definition}

\begin{remark}[Discretized pseudo instances]
\label{rem:discretize}
The value distributions $V_{\sigma}(v)$ for $\sigma \in [n] \cup \{L\}$ reconstructed from a {\em discretized} pseudo instance $(P,\, \bB \otimes L)$ are almost discrete, except the conditional value $P$ of the monopolist $B_{1}$ (which ranges between $\supp(P) \subseteq [\gamma,\, \varphi_{1}(\gamma)]$ but otherwise is arbitrary).
Namely, each index $j \in [0: m]$ piece of a bid-to-value mapping $\varphi_{\sigma}(b)$ induces a probability mass $\big(B_{\sigma}(\lambda_{j + 1}) - B_{\sigma}(\lambda_{j})\big)$ of the value distribution $V_{\sigma}(v)$.
%However, at the moment, such ``partially'' {\em discretized} pseudo instances $(P,\, \bB \otimes L)$ are sufficient for our purpose.
%(In later subsections, we will further discretize the conditional values $P$.)
\end{remark}

\Cref{lem:discretize} shows that the {\PoA} bound of a (generic) valid pseudo instance $(P,\, \bB \otimes L)$ can be approximated {\em arbitrarily close} by a {\em discretized} pseudo instance $(\tilde{P},\, \tilde{\bB} \otimes \tilde{L})$.

\begin{intuition*}
For the {\PriceofAnarchy} problem, we only consider (\Cref{cor:pseudo_instance}) the pseudo instances that have {\em bounded} {\SocialWelfares} $\FPA(P,\, \bB \otimes L) \leq \OPT(P,\, \bB \otimes L) < +\infty$.
In the sense of Lebesgue integrability, we can carefully truncate\footnote{We will use a special truncation scheme. The usual truncation scheme often induces probability masses at the truncation threshold, but we require that the bid CDF's $\bB \otimes L$ are {\em continuous} over the bid support $b \in [\gamma,\, \lambda]$.} the bid distributions $\bB \otimes L$ to meet the \blackref{finiteness} condition, by which the auction/optimal {\SocialWelfares} change by negligible amounts.

Then, we replace the bid-to-value mappings $\bvarphi = \{\varphi_{\sigma}\}_{\sigma \in [n] \cup \{L\}}$ by their {\em interpolation functions} $\tilde{\bvarphi} = \{\tilde{\varphi}_{\sigma}\}_{\sigma \in [n] \cup \{L\}}$ (\blackref{piecewise}).
The mappings $\bvarphi$ determine the bid distributions $\bB \otimes L$ (\Cref{lem:pseudo_distribution}) and the auction/optimal {\SocialWelfares} (\Cref{lem:pseudo_welfare}) in terms of {\em integral formulas}, which are insensitive to the (local) changes of the interpolation functions $\tilde{\bvarphi}$. Thus as usual, when the interpolation scheme is accurate enough, everything (including the {\PoA}-bound) can be approximated arbitrarily well.
To make the arguments rigorous, we use the continuous mapping theorem \cite{B13} in measure theory.
\end{intuition*}

% each bid-to-value mapping can be approximated by a piecewise constant function arbitrarily good.
% the reconstruction from $\bvarphi = \{\varphi_{\sigma}\}_{\sigma \in [n] \cup \{L\}}$ to the bid distributions and the function of $\OPT$ and $\FPA$ in terms of these distributions and mappings are all continuous. Therefore, the approximation error for $\OPT$ and $\FPA$ and thus for PoA can be arbitrarily small.

% \blue{The finiteness is due to the fact that  $\OPT(P,\, \bB \otimes L) < +\infty$ and thus we can always choose a large enough upper bound on values (and thus bids as well) and truncated the value distribution and bid distribution. This can effect  $\OPT$ and $\FPA$ as small as possible since we choose large enough upper bound. After this truncation, each bid-to-value mapping can be approximated by a piecewise constant function arbitrarily good. We also notice that the reconstruction from $\bvarphi = \{\varphi_{\sigma}\}_{\sigma \in [n] \cup \{L\}}$ to the bid distributions and the function of $\OPT$ and $\FPA$ in terms of these distributions and mappings are all continuous. Therefore, the approximation error for $\OPT$ and $\FPA$ and thus for PoA can be arbitrarily small. To make the arguments rigorous, we use the continuous mapping theorem \cite{B13} in measure theory.}
% This is quite intuitive. 

\begin{theorem}[{The continuous mapping theorem \cite{B13}}]
\label{thm:continuous_mapping}
Let $X$ and $\{X_{m}\}_{m \geq 0}$ be random elements defined on a metric space $\mathbb{S}$. Regarding a function $g: \mathbb{S} \mapsto \mathbb{S}'$ (where $\mathbb{S}'$ is another metric space), suppose its set of discontinuity points $\mathbb{D}_{g}$ has a zero measure $\Pr[X \in \mathbb{D}_{g}] = 0$, then this function $g$ perseveres the property
% (``{\tt d}'')
of convergence in distribution. Formally,
\[
    \big\{ X_{m} \big\} ~\overset{{\tt dist}}{\longrightarrow}~ X
    \qquad\Longrightarrow\qquad
    \big\{ g(X_{m}) \big\} ~\overset{{\tt dist}}{\longrightarrow}~ g(X).
\]
\end{theorem}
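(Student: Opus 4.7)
The plan is to use the Portmanteau characterization of convergence in distribution, together with the hypothesis $\Pr[X \in \mathbb{D}_{g}] = 0$, to transfer the weak limit from $(X_{m})$ to $(g(X_{m}))$. Recall that $X_{m} \overset{{\tt dist}}{\longrightarrow} X$ is equivalent to $\E[f(X_{m})] \to \E[f(X)]$ for every bounded continuous test function $f : \mathbb{S} \mapsto \R$, and (by the generalized Portmanteau theorem) equivalent to the same convergence for every bounded measurable $f$ whose discontinuity set is null under the law of $X$.

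First I would fix an arbitrary bounded continuous $h : \mathbb{S}' \mapsto \R$ and form the composition $h \circ g : \mathbb{S} \mapsto \R$. Since $h$ is continuous everywhere, any discontinuity of $h \circ g$ must originate from a discontinuity of $g$, hence $\mathbb{D}_{h \circ g} \subseteq \mathbb{D}_{g}$. The hypothesis $\Pr[X \in \mathbb{D}_{g}] = 0$ therefore implies $\Pr[X \in \mathbb{D}_{h \circ g}] = 0$, so $h \circ g$ is a bounded measurable function whose discontinuities carry zero mass under the law of $X$.

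Next I would invoke the generalized Portmanteau theorem, applied to $f := h \circ g$ and the weakly convergent sequence $X_{m} \overset{{\tt dist}}{\longrightarrow} X$, to conclude $\E[h(g(X_{m}))] = \E[f(X_{m})] \to \E[f(X)] = \E[h(g(X))]$. Since $h$ was an arbitrary bounded continuous function on $\mathbb{S}'$, this gives $g(X_{m}) \overset{{\tt dist}}{\longrightarrow} g(X)$, which is the desired conclusion.

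The main obstacle is the generalized Portmanteau step itself, which carries the substantive content. Its usual proof sandwiches $f$ between upper and lower semicontinuous envelopes $f^{*}$ and $f_{*}$ that agree off $\mathbb{D}_{f}$; the inequalities $\limsup_{m} \E[f^{*}(X_{m})] \leq \E[f^{*}(X)]$ and $\liminf_{m} \E[f_{*}(X_{m})] \geq \E[f_{*}(X)]$ (which follow directly from the continuous-function form of Portmanteau applied to mollified approximations of $f^{*}$ and $f_{*}$) combined with $\E[f^{*}(X)] = \E[f_{*}(X)]$ (by the null-measure hypothesis) force $\E[f(X_{m})] \to \E[f(X)]$. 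An alternative and more transparent route, available whenever $\mathbb{S}$ is separable, is to invoke Skorokhod's representation theorem: realize $X_{m}, X$ on a common probability space with $X_{m} \to X$ almost surely; then on the probability-one event $\{X \notin \mathbb{D}_{g}\}$, continuity of $g$ at $X$ yields $g(X_{m}) \to g(X)$ almost surely, hence in distribution, and since $g(Y_{m}), g(Y)$ share laws with $g(X_{m}), g(X)$ the claim follows.
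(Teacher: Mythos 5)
This statement is not proved in the paper at all: it is imported as a black-box citation to Billingsley \cite{B13} and used only inside the proof of \Cref{lem:discretize}, so there is no ``paper route'' to compare against. Your argument is essentially the standard textbook proof, and it is correct. The reduction to bounded continuous test functions $h$ on $\mathbb{S}'$, the inclusion $\mathbb{D}_{h \circ g} \subseteq \mathbb{D}_{g}$, and the appeal to the a.e.-continuity version of Portmanteau are all sound, and you correctly flag that the substantive content sits in that last step and sketch two legitimate ways to discharge it: the semicontinuous-envelope sandwich (using that on a metric space a bounded u.s.c.\ function is a decreasing limit of bounded continuous functions, so $\limsup_m \E[f^{*}(X_m)] \leq \E[f^{*}(X)]$, and $\E[f^{*}(X)] = \E[f_{*}(X)]$ because $\Pr[X \in \mathbb{D}_f] = 0$), or Skorokhod representation, which as you note needs separability (of the support of the law of $X$). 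Two minor caveats worth making explicit: (i) $g$ must be Borel measurable (or the statement read on the completed law of $X$) for $g(X_m)$, $g(X)$ to be random elements at all --- a.e.\ continuity alone does not give this; (ii) if you prefer an even shorter route that avoids any risk of circularity with the ``generalized Portmanteau'' step, Billingsley's own proof works directly with closed sets: for closed $F \subseteq \mathbb{S}'$, $\limsup_m \Pr[g(X_m) \in F] \leq \Pr\bigl[X \in \overline{g^{-1}(F)}\bigr]$ and $\overline{g^{-1}(F)} \subseteq g^{-1}(F) \cup \mathbb{D}_{g}$, so the null hypothesis on $\mathbb{D}_{g}$ finishes it in two lines.
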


\begin{lemma}[Discretization]
\label{lem:discretize}
Given a valid pseudo instance $(P,\, \bB \otimes L)$ that has bounded expected auction/optimal {\SocialWelfares} $\FPA(P,\, \bB \otimes L) \leq \OPT(P,\, \bB \otimes L) < +\infty$, for any error $\epsilon \in (0,\, 1)$, there is a discretized pseudo instance $(\tilde{P},\, \tilde{\bB} \otimes \tilde{L})$ such that $|\PoA(\tilde{P},\, \tilde{\bB} \otimes \tilde{L}) - \PoA(P,\, \bB \otimes L)| \leq \epsilon$.
\end{lemma}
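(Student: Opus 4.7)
The plan is to proceed in two stages: a tail truncation to achieve \blackref{finiteness}, followed by a piecewise-constant approximation of the bid-to-value mappings $\bvarphi$ to achieve \blackref{piecewise}, and then to transfer the resulting pointwise convergence $\tilde{\bvarphi} \to \bvarphi$ to convergence of the {\SocialWelfares} via the continuous mapping theorem (\Cref{thm:continuous_mapping}).

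First, because $\OPT(P,\, \bB \otimes L) < +\infty$ by hypothesis, the first-order value $\max_{\sigma}\varphi_{\sigma}(b_{\sigma})$ has finite expectation, so its tail is integrable. I choose a cutoff $\lambda^{*} \in [\gamma,\, \lambda]$ with $\lambda^{*} < +\infty$ large enough that the contribution of the region $b > \lambda^{*}$ to both $\FPA(P,\, \bB \otimes L)$ and $\OPT(P,\, \bB \otimes L)$ is at most $\epsilon / 3$. Rather than truncate the CDFs $\bB \otimes L$ directly (which would create a probability mass at $\lambda^{*}$ and break continuity), I restrict the mappings $\bvarphi = \{\varphi_{\sigma}\}_{\sigma \in [n] \cup \{L\}}$ to $[\gamma,\, \lambda^{*}]$ and rebuild the bid distributions via \Cref{lem:pseudo_distribution}. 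Monotonicity and the inequalities of \Cref{lem:pseudo_mapping} are inherited from the original instance, so the restricted tuple is again a valid pseudo instance; the change in $\FPA$ and in $\OPT$ is at most $\epsilon/3$ by construction.

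Second, I take a sequence of partitions $\bLambda^{(m)} = [\gamma = \lambda^{(m)}_{0},\, \lambda^{(m)}_{1}) \cup \cdots \cup [\lambda^{(m)}_{m},\, \lambda^{(m)}_{m + 1} = \lambda^{*}]$ with mesh $\max_{j}(\lambda^{(m)}_{j+1} - \lambda^{(m)}_{j}) \to 0$, and I define $\tilde{\varphi}^{(m)}_{\sigma}(b) \eqdef \varphi_{\sigma}(\lambda^{(m)}_{j})$ for $b \in [\lambda^{(m)}_{j},\, \lambda^{(m)}_{j+1})$ for each $\sigma \in [n] \cup \{L\}$. By construction $\tilde{\bvarphi}^{(m)}$ is increasing and piecewise constant. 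I then invoke \Cref{lem:pseudo_distribution} to build bid distributions $\tilde{\bB}^{(m)} \otimes \tilde{L}^{(m)}$, and set the conditional value $\tilde{P}^{(m)}$ to be $P$ restricted/rescaled to the (possibly slightly shifted) support $[\gamma,\, \tilde{\varphi}^{(m)}_{1}(\gamma)]$; since $\tilde{\varphi}^{(m)}_{1}(\gamma) \to \varphi_{1}(\gamma)$ as the mesh shrinks, the induced change in $\Ex[\tilde{P}^{(m)}]$ is $o(1)$. Pointwise convergence $\tilde{\bvarphi}^{(m)} \to \bvarphi$ together with the explicit reconstruction formula of \Cref{lem:pseudo_distribution} yields $\tilde{\bB}^{(m)} \otimes \tilde{L}^{(m)} \to \bB \otimes L$ in distribution on $[\gamma,\, \lambda^{*}]$. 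Applying \Cref{thm:continuous_mapping} to the bounded continuous welfare integrands of \Cref{lem:pseudo_welfare} (whose discontinuity sets have measure zero under the limit distribution, by \Cref{cor:bid_distribution}) gives $\FPA(\tilde{P}^{(m)},\, \tilde{\bB}^{(m)} \otimes \tilde{L}^{(m)}) \to \FPA(P,\, \bB \otimes L)$ and likewise for $\OPT$. Choosing $m$ large so each error is at most $\epsilon/3$ and combining with the truncation error completes the argument.

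The main obstacle is verifying that the piecewise-constant mappings $\tilde{\bvarphi}^{(m)}$ do satisfy the hypotheses of \Cref{lem:pseudo_mapping} on the \emph{entire} interval $[\gamma,\, \lambda^{*}]$, not merely at the breakpoints $\lambda^{(m)}_{j}$: the key inequality $\sum_{i} (\tilde{\varphi}^{(m)}_{L} - b)/(\tilde{\varphi}^{(m)}_{i} - b) \geq n - 1$ must hold pointwise while $b$ slides across a cell $[\lambda^{(m)}_{j},\, \lambda^{(m)}_{j+1})$ even though each $\tilde{\varphi}^{(m)}_{\sigma}(b)$ is frozen. A careful choice of endpoint — defining $\tilde{\varphi}^{(m)}_{\sigma}$ via the \emph{left} endpoint $\varphi_{\sigma}(\lambda^{(m)}_{j})$ for the real bidders and the \emph{right} endpoint $\varphi_{L}(\lambda^{(m)}_{j+1})$ for the pseudo bidder (or an analogous asymmetric scheme) — should give enough slack to absorb the sliding $b$, since each $\varphi_{\sigma} - b$ is bounded away from zero away from $\gamma$, and the mesh can be refined near $\gamma$ to handle the boundary. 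The secondary technicality, ensuring the tail truncation in the first stage does not create atoms, is handled by the reconstruction route through \Cref{lem:pseudo_distribution} rather than a naive CDF cutoff.
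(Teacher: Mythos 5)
You have the right architecture (truncate the tail by restricting the mappings and rebuilding via \Cref{lem:pseudo_distribution}, then replace $\bvarphi$ by piecewise-constant interpolants and pass to the limit with \Cref{thm:continuous_mapping}), and your first stage is essentially the paper's truncation $B_{\sigma}^{(t)}(b) = \min(B_{\sigma}(b)/B_{\sigma}(t),\,1)$ (modulo the small point that the welfare change also carries a rescaling factor $1/\calB(\lambda^{*})$, not just the tail mass — harmless since $\calB(\lambda^{*}) \to 1$). But the obstacle you flag at the end is a genuine gap, and your proposed resolution does not close it. With left-endpoint interpolation, the pseudo mapping fails rationality inside a cell: $\tilde{\varphi}^{(m)}_{L}(b) = \varphi_{L}(\lambda^{(m)}_{j})$ need only exceed $\lambda^{(m)}_{j}$, not $b$, and in the very first cell one can have $\varphi_{L}(\gamma) = \gamma$, so $\tilde{\varphi}_{L}(b) < b$ for every $b \in (\gamma,\,\lambda^{(m)}_{1})$ no matter how fine the mesh — refining near $\gamma$ does not help. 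Your asymmetric fix (left endpoint for real bidders, right endpoint for the pseudo bidder) repairs rationality but breaks the dominance condition of \Cref{lem:pseudo_mapping} (\Cref{lem:pseudo_mapping:dominance}): nothing prevents $\varphi_{i}(\lambda^{(m)}_{j}) < \varphi_{L}(\lambda^{(m)}_{j+1})$ — this happens on \emph{every} cell whenever the mappings coincide and are strictly increasing (e.g.\ a symmetric-looking instance) — and then the reconstructed $B_{i}$ from \Cref{lem:pseudo_distribution} has $\frac{\d}{\d b}\ln B_{i}(b) = (\tilde{\varphi}_{L}(b)-b)^{-1} - (\tilde{\varphi}_{i}(b)-b)^{-1} < 0$, so it is not a CDF.

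The paper's resolution is to interpolate \emph{all} mappings at the \emph{right} endpoint, $\tilde{\varphi}_{\sigma}(b) \eqdef \varphi_{\sigma}(\lambda_{j+1})$ on $[\lambda_{j},\,\lambda_{j+1})$. Then rationality and dominance are immediate ($\tilde{\varphi}_{i}(b) = \varphi_{i}(\lambda_{j+1}) \geq \varphi_{L}(\lambda_{j+1}) > \lambda_{j+1} > b$), and the sliding-$b$ inequality comes for free: since $\varphi_{i}(\lambda_{j+1}) \geq \varphi_{L}(\lambda_{j+1})$, each ratio $\frac{\varphi_{L}(\lambda_{j+1})-b}{\varphi_{i}(\lambda_{j+1})-b}$ is \emph{decreasing} in $b$, so its value anywhere in the cell is at least its value at $b=\lambda_{j+1}$, where the input's validity already gives $\sum_{i}\frac{\varphi_{L}(\lambda_{j+1})-\lambda_{j+1}}{\varphi_{i}(\lambda_{j+1})-\lambda_{j+1}} \geq n-1$. (Right-endpoint interpolation also enlarges $[\gamma,\,\tilde{\varphi}_{1}(\gamma)]$, so $P$ needs no rescaling at all.) Separately, your limit step is stated too loosely: the welfare is a functional of both the changing distributions and the changing mappings, so one needs the paper's two-stage argument — a.e.\ continuity and boundedness of the interim welfare formulas, pointwise convergence of the discretized formulas, convergence in distribution of the profiles, and a double limit through \Cref{thm:continuous_mapping} — rather than a single appeal to "bounded continuous integrands."
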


\afterpage{
\begin{figure}[t]
    \centering
    \begin{mdframed}
    Approximate Reduction $\term[\discretize]{alg:discretize}(P,\, \bB \otimes L,\, \bLambda)$

    \begin{flushleft}
    {\bf Input:} A (generic) valid pseudo instance $(P,\, \bB \otimes L)$;
    \hfill \Cref{def:pseudo} \\
    \white{\bf Input:} A partition $\bLambda = [\gamma = \lambda_{0},\, \lambda_{1}) \cup
    [\lambda_{1},\, \lambda_{2}) \cup \dots \cup
    [\lambda_{m},\, \lambda_{m + 1} = \lambda]$.
    \hfill \Cref{def:discretize}
    
    \vspace{.05in}
    {\bf Output:} A {\em discretized} pseudo instance $(P,\, \tilde{\bB} \otimes \tilde{L})$.

    \begin{enumerate}
        \item\label{alg:discretize:mapping}
        Define $\tilde{\varphi}_{\sigma}(b) \eqdef \sum_{j \in [0:\, m]} \varphi_{\sigma}(\lambda_{j + 1}) \cdot \indicator(\lambda_{j} \leq b < \lambda_{j + 1})$ on $b \in [\gamma,\, \lambda)$, for $\sigma \in [n] \cup \{L\}$.

        \item[] \OliveGreen{$\triangleright$ Define $\tilde{\varphi}_{\sigma}(\lambda_{m + 1}) \eqdef \varphi_{\sigma}(\lambda_{m + 1})$ to make a closed domain.}

        \item\label{alg:discretize:distribution}
        {\bf Return} $(P,\, \tilde{\bB} \otimes \tilde{L})$, namely only the bid distributions $\tilde{\bB} \otimes \tilde{L} = \{\tilde{B}_{\sigma}\}_{\sigma \in [n] \cup \{L\}}$ are modified and (\Cref{lem:pseudo_distribution}) are reconstructed from the mappings $\tilde{\bvarphi} = \{\tilde{\varphi}_{\sigma}\}_{\sigma \in [n] \cup \{L\}}$.
    \end{enumerate}
    \end{flushleft}
    \end{mdframed}
    \caption{The {\discretize} approximate reduction.
    \label{fig:alg:discretize}}
\end{figure}
\begin{figure}
    \centering
    \subfloat[\label{fig:discretize:old}
    The input {\em increasing} mappings $\bvarphi = \{\varphi_{\sigma}\}_{\sigma \in [n] \cup \{L\}}$.]{
    \includegraphics[width = .49\textwidth]
    {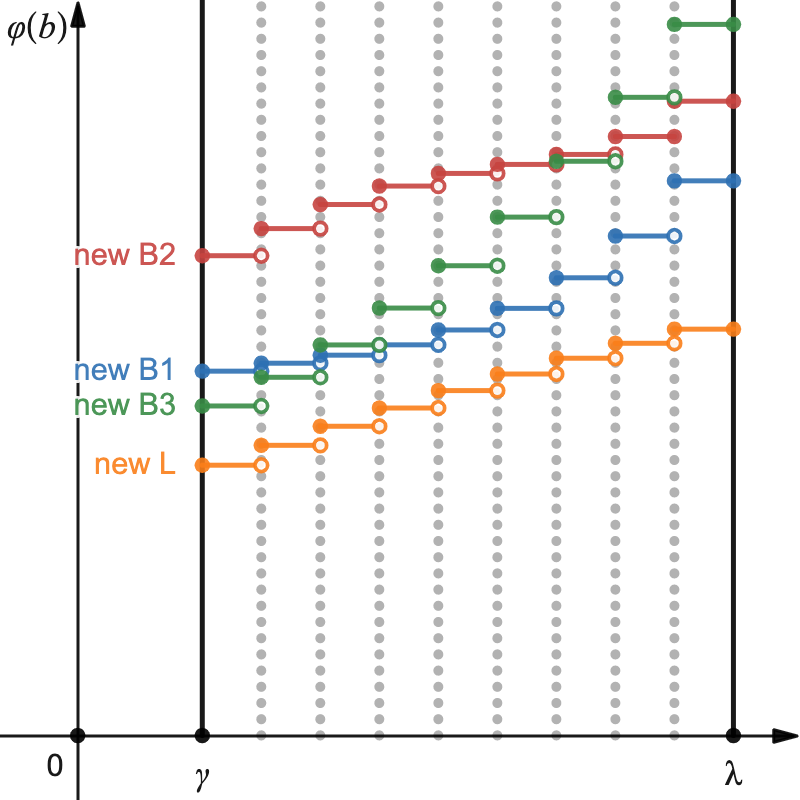}}
    \hfill
    \subfloat[\label{fig:discretize:new}
    The output {\em discretized} mappings $\tilde{\bvarphi} = \{\tilde{\varphi}_{\sigma}\}_{\sigma \in [n] \cup \{L\}}$.]{
    \includegraphics[width = .49\textwidth]
    {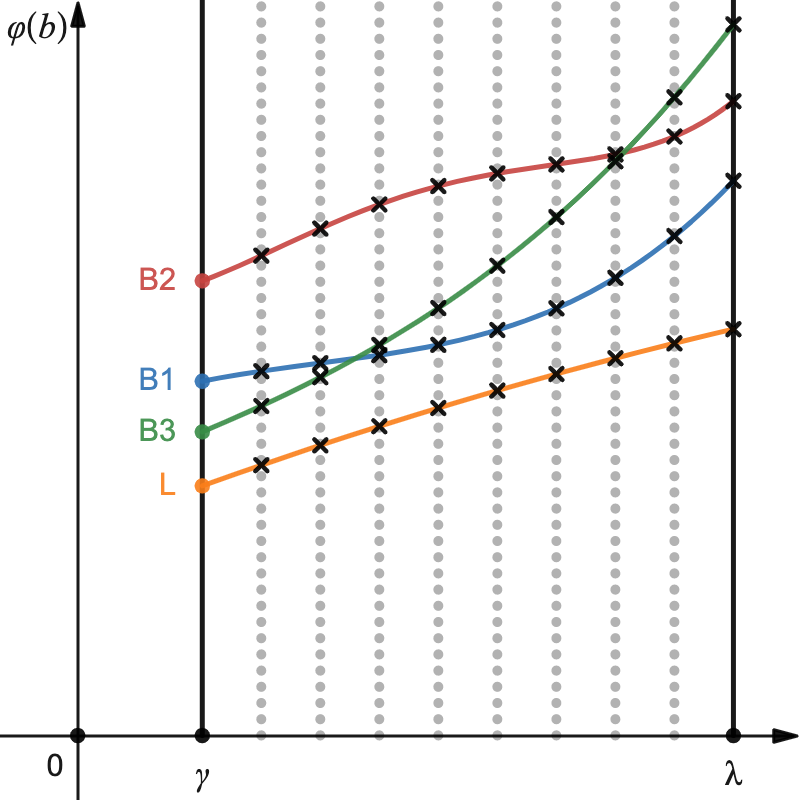}}
    \caption{Diagram of the {\discretize} approximate reduction.
    \label{fig:discretize}}
\end{figure}
\clearpage}

\begin{proof}
For ease of presentation, we first assume that the given pseudo instance $(P,\, \bB \otimes L)$ itself satisfies \blackref{finiteness} $\lambda \leq \max(\bvarphi(\blambda)) < +\infty$; this assumption will be removed later.

% \vspace{.1in}
% \noindent
% {\bf The ``Finiteness'' Case.}
Given an integer $m \geq 0$, we would divide the {\em bounded} bid support into $(m + 1)$ uniform pieces $\bLambda = [\gamma = \lambda_{0},\, \lambda_{1}) \cup [\lambda_{1},\, \lambda_{2}) \cup \dots \cup [\lambda_{m},\, \lambda_{m + 1} = \lambda]$, via the partition points
$\lambda_{j} \eqdef \gamma + \frac{j}{m + 1} \cdot (\lambda - \gamma)$ for $j \in [0:\, m + 1]$.
Regarding this partition $\bLambda$, the {\blackref{alg:discretize}} reduction (see \Cref{fig:alg:discretize,fig:discretize} for its description and a visual aid) transforms the given pseudo instance $(P,\, \bB \otimes L)$ into another {\em discretized} pseudo instance $(P,\, \tilde{\bB} \otimes \tilde{L})$. This is formalized into {\bf \Cref{fact:discretize:output}}.

\setcounter{fact}{0}

\begin{fact}
\label{fact:discretize:output}
Under reduction $(P,\, \tilde{\bB} \otimes \tilde{L}) \gets \discretize(P,\, \bB \otimes L,\, \bLambda)$, the output $(P,\, \tilde{\bB} \otimes \tilde{L})$ is a discretized pseudo instance; the conditional value $P$ is unmodified.
\end{fact}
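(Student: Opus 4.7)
The plan is to verify the four conditions that collectively define a discretized pseudo instance, namely \blackref{monotonicity} and \blackref{boundedness} from Definition~\ref{def:pseudo}, together with \blackref{finiteness} and \blackref{piecewise} from Definition~\ref{def:discretize}. That $P$ is literally unmodified is immediate from Step~\ref{alg:discretize:distribution} of the reduction, and both \blackref{piecewise} (the output clearly has $m + 1$ pieces by construction) and \blackref{finiteness} (inherited from the working assumption $\max \bvarphi(\blambda) < +\infty$ via the endpoint convention $\tilde{\varphi}_{\sigma}(\lambda) \eqdef \varphi_{\sigma}(\lambda)$) are read off directly from Step~\ref{alg:discretize:mapping}.

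Next I would check \blackref{monotonicity} and \blackref{boundedness}. Monotonicity of each $\tilde{\varphi}_{\sigma}$ is transparent: the function is constant on each half-open piece $[\lambda_{j},\, \lambda_{j+1})$, and the values $\varphi_{\sigma}(\lambda_{1}) \leq \varphi_{\sigma}(\lambda_{2}) \leq \cdots \leq \varphi_{\sigma}(\lambda_{m+1})$ are weakly increasing since the input $\varphi_{\sigma}$ is increasing. For boundedness, I note that $\tilde{\varphi}_{1}(\gamma) = \varphi_{1}(\lambda_{1}) \geq \varphi_{1}(\gamma)$, so $\supp(P) \subseteq [\gamma,\, \varphi_{1}(\gamma)] \subseteq [\gamma,\, \tilde{\varphi}_{1}(\gamma)]$ as required.

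The substantive step is to show that $\tilde{\bvarphi}$ satisfies the dominance and inequality conditions of Lemma~\ref{lem:pseudo_mapping}, which by Lemma~\ref{lem:pseudo_distribution} will certify $\tilde{\bB} \otimes \tilde{L}$ as well-defined bid distributions whose induced bid-to-value mappings recover $\tilde{\bvarphi}$. Dominance $\tilde{\varphi}_{L}(b) \leq \tilde{\varphi}_{i}(b)$ transfers piecewise from $\varphi_{L}(\lambda_{j+1}) \leq \varphi_{i}(\lambda_{j+1})$, and $\tilde{\varphi}_{L}(b) \geq \lambda_{j+1} > b$ on $[\lambda_{j},\, \lambda_{j+1})$ via Lemma~\ref{lem:pseudo_mapping} applied to the input at $\lambda_{j+1}$. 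For the inequality, on a fixed piece I would set $a = \varphi_{L}(\lambda_{j+1})$ and $c_{i} = \varphi_{i}(\lambda_{j+1})$ and study
\[
    f(b) \;=\; \sum_{i \in [n]} \frac{a - b}{c_{i} - b},
    \qquad b \in [\lambda_{j},\, \lambda_{j+1}],
\]
then compute $f'(b) = \sum_{i \in [n]} (a - c_{i}) \big/ (c_{i} - b)^{2} \leq 0$ using dominance $a \leq c_{i}$. Hence $f$ is weakly decreasing in $b$, and since $f(\lambda_{j+1}) \geq n - 1$ by validity of the input at $\lambda_{j+1}$, one obtains $f(b) \geq n - 1$ throughout the piece.

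The only real subtlety is the bookkeeping at partition points $\{\lambda_{1},\, \ldots,\, \lambda_{m}\}$, where the half-open convention of Step~\ref{alg:discretize:mapping} must be tracked so that dominance and the inequality are applied to the correct one-sided limits; this is purely clerical, and the right-endpoint construction $\tilde{\varphi}_{\sigma}(b) = \varphi_{\sigma}(\lambda_{j+1})$ on $[\lambda_{j},\, \lambda_{j+1})$ makes the verifications uniform across pieces. Assembling these checks settles Fact~\ref{fact:discretize:output} under the working assumption that the input itself has bounded bid and value supports; removing that assumption is the job of a separate pre-truncation step in the larger proof of Lemma~\ref{lem:discretize}, not of the Fact itself.
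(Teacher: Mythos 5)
Your proposal is correct and follows essentially the same route as the paper's own proof: verify the defining conditions directly, use the right-endpoint interpolation to inherit monotonicity/finiteness/piecewise constancy and the extended range $[\gamma,\varphi_{1}(\lambda_{1})]\supseteq\supp(P)$ for boundedness, and reduce well-definedness of $\tilde{\bB}\otimes\tilde{L}$ to the Lemma~\ref{lem:pseudo_mapping} conditions, where the key inequality follows because each ratio $\frac{\varphi_{L}(\lambda_{j+1})-b}{\varphi_{i}(\lambda_{j+1})-b}$ is decreasing in $b$ and evaluates to at least $n-1$ at $b=\lambda_{j+1}$ by validity of the input. Your explicit derivative computation for $f(b)$ just spells out the monotonicity step the paper asserts, and your deferral of the unbounded case to the truncation step of Lemma~\ref{lem:discretize} matches the paper's structure.
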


\begin{proof}
The {\discretize} reduction (Line~\ref{alg:discretize:distribution}) reconstructs the output bid distributions $\tilde{\bB} \otimes \tilde{L}$ from the output mappings $\tilde{\bvarphi} = \{\tilde{\varphi}_{\sigma}\}_{\sigma \in [n] \cup \{L\}}$; (Line~\ref{alg:discretize:mapping}) each output mapping $\tilde{\varphi}_{\sigma}(b)$ is the $(m + 1)$-piece {\em interpolation} of the input mapping $\varphi_{\sigma}(b)$ under the partition $\bLambda$.
In particular (see \Cref{fig:discretize} for a visual aid), the output mappings $\tilde{\varphi}_{\sigma}(b) = \varphi_{\sigma}(\lambda_{j + 1}) \geq \varphi_{\sigma}(b)$ interpolate the right endpoint of each index-$j$ piece $b \in [\lambda_{j},\, \lambda_{j + 1})$ of the partition $\bLambda$.

Obviously, each interpolation $\tilde{\varphi}_{\sigma}(b)$ is a piecewise constant function on the same bounded domain $b \in [\gamma,\, \lambda]$ (\blackref{finiteness} and \blackref{piecewise}) and is an increasing function \`{a} la the input mapping $\varphi_{\sigma}(b)$ (\blackref{monotonicity}).
Also, the extended range $[\gamma,\, \tilde{\varphi}_{1}(\gamma)] = [\gamma,\, \varphi_{1}(\lambda_{1})] \supseteq [\gamma, \varphi_{1}(\gamma)] \supseteq \supp(P)$ restricts the unmodified conditional value $P$ (\blackref{boundedness}).

It remains to show that the reconstructed bid distributions $\tilde{\bB} \otimes \tilde{L}$ are well defined, namely that 
(\Cref{lem:pseudo_distribution}) the interpolations $\tilde{\bvarphi} = \{\tilde{\varphi}_{\sigma}\}_{\sigma \in [n] \cup \{L\}}$ satisfy the conditions in \Cref{lem:pseudo_mapping}.
The input mappings $\bvarphi = \{\varphi_{\sigma}\}_{\sigma \in [n] \cup \{L\}}$ themselves must satisfy those conditions:
(\Cref{lem:pseudo_mapping:dominance} of \Cref{lem:pseudo_mapping}) that $\min(\bvarphi(b^{\otimes n + 1})) = \varphi_{L}(b) > b$ for $b \in (\gamma,\, \lambda]$ and $\min(\bvarphi(\gamma^{\otimes n + 1})) = \varphi_{L}(\gamma) \geq \gamma$; (\Cref{lem:pseudo_mapping:inequality} of \Cref{lem:pseudo_mapping}) that $\sum_{i \in [n]} \frac{\varphi_{L}(b) - b}{\varphi_{i}(b) - b} \geq n - 1$ for $b \in (\gamma,\, \lambda]$.

By interpolating the right endpoint of each index-$j$ piece $b \in [\lambda_{j},\, \lambda_{j + 1})$ of the partition $\bLambda$,
the output mappings satisfy $\tilde{\varphi}_{\sigma}(b) = \varphi_{\sigma}(\lambda_{j + 1}) \geq \tilde{\varphi}_{L}(b) = \varphi_{L}(\lambda_{j + 1}) > \lambda_{j + 1} > b$ (\Cref{lem:pseudo_mapping:dominance} of \Cref{lem:pseudo_mapping})
and thus $\sum_{i \in [n]} \frac{\tilde{\varphi}_{L}(b) - b}{\tilde{\varphi}_{i}(b) - b} = \sum_{i \in [n]} \frac{\varphi_{L}(\lambda_{j + 1}) - b}{\varphi_{i}(\lambda_{j + 1}) - b} \geq \sum_{i \in [n]} \frac{\varphi_{L}(\lambda_{j + 1}) - \lambda_{j + 1}}{\varphi_{i}(\lambda_{j + 1}) - \lambda_{j + 1}} \geq n - 1$ (\Cref{lem:pseudo_mapping:inequality} of \Cref{lem:pseudo_mapping}).
This finishes the proof of {\bf \Cref{fact:discretize:output}}.
\end{proof}

We thus obtain a sequence of {\em discretized} pseudo instances $(P,\, \bB^{(m)} \otimes L^{(m)})$, for $m \geq 0$.
Below we study the bound $\PoA(P,\, \bB^{(m)} \otimes L^{(m)})$ for each of them.
To this end, consider a {\em specific} profile $(p,\, \bb,\, \ell) \in
% \supp(P,\, \bB^{(m)} \otimes L^{(m)}) \subseteq
[\gamma,\, \varphi_{1}(\gamma)] \times [\gamma,\, \lambda]^{n + 1}$.
Following \Cref{def:pseudo}, depending on whether the bids $(\bb,\, \ell) = (b_{\sigma})_{\sigma \in [n] \cup \{L\}}$ each take the infimum $= \gamma$ or not $> \gamma$:
\begin{itemize}
    \item The monopolist $B_{1}$ takes a value $v_{1}^{(m)} =  p \cdot \indicator(b_{1} = \gamma) + \varphi_{1}^{(m)}(b_{1}) \cdot \indicator(b_{1} > \gamma)$.

    \item Each other bidder $\sigma \in [2:\, n] \cup \{L\}$ takes a value $v_{\sigma}^{(m)} = \gamma \cdot \indicator(b_{1} = \gamma) + \varphi_{\sigma}^{(m)}(b_{\sigma}) \cdot \indicator(b_{\sigma} > \gamma)$.
\end{itemize}
Over the randomness of the allocation rule $\alloc(\bb,\, \ell)$, this index-$m$ pseudo instance $(P,\, \bB^{(m)} \otimes L^{(m)})$ yields an interim auction {\SocialWelfare} $\varphi_{\FPA}^{(m)}(p,\, \bb,\, \ell) \eqdef \Ex_{\alloc} \big[ v_{\alloc(\bb,\, \ell)}^{(m)} \big]$.

Likewise, under the {\em same} profile $(p,\, \bb,\, \ell)$, the input pseudo instance $(P,\, \bB \otimes L)$ yields a counterpart formula $\varphi_{\FPA}(p,\, \bb,\, \ell) \eqdef \Ex_{\alloc} \big[ v_{\alloc(\bb,\, \ell)} \big]$.

{\bf \Cref{fact:discretize:function_property,fact:discretize:function_convergence}} show certain properties of the input/index-$m$ mappings $\bvarphi$ and $\bvarphi^{(m)}$ and the input/index-$m$ formulas $\varphi_{\FPA}(p,\, \bb,\, \ell)$ and $\varphi_{\FPA}^{(m)}(p,\, \bb,\, \ell)$.

\begin{fact}
\label{fact:discretize:function_property}
On the domain $b \in [\gamma,\, \lambda]$, the mappings $\bvarphi = \{\varphi_{\sigma}\}_{\sigma \in [n] \cup \{L\}}$ and $\bvarphi^{(m)} = \{\varphi_{\sigma}^{(m)}\}_{\sigma \in [n] \cup \{L\}}$ are bounded everywhere, and continuous almost everywhere except a zero-measure set $\mathbb{D} \subseteq [\gamma,\, \lambda]$.
Also, on the domain $(p,\, \bb,\, \ell) \in [\gamma,\, \varphi_{1}(\gamma)] \times [\gamma,\, \lambda]^{n + 1}$, the formulas $\varphi_{\FPA}(p,\, \bb,\, \ell)$ and $\varphi_{\FPA}^{(m)}(p,\, \bb,\, \ell)$ are bounded everywhere, and continuous almost everywhere except a zero-measure set $\mathbb{D}_{\FPA}$.
\end{fact}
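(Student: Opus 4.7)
The plan is to handle the claim about the mappings separately from the claim about the welfare formulas, and in each case exploit that monotonicity forces the discontinuities to lie in a small set.

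For $\bvarphi$ and $\bvarphi^{(m)}$, boundedness is immediate from the \blackref{finiteness} assumption: by \blackref{value_monotonicity}, each $\varphi_\sigma(b) \leq \varphi_\sigma(\lambda) \leq \max(\bvarphi(\blambda)) < +\infty$, and the interpolation rule on Line~\ref{alg:discretize:mapping} inherits the same pointwise upper bound. For continuity, the classical real-analysis fact that a monotone function on a real interval has at most countably many jump discontinuities makes the set $\mathbb{D}_\sigma$ of discontinuity points of $\varphi_\sigma$ a Lebesgue-null subset of $[\gamma,\lambda]$; taking $\mathbb{D} \eqdef \bigcup_\sigma \mathbb{D}_\sigma$ (a finite union of countable sets) handles the input mappings. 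Each $\varphi^{(m)}_\sigma$ is piecewise constant with jump set contained in $\{\lambda_0,\ldots,\lambda_{m+1}\}$, so we simply enlarge $\mathbb{D}$ to cover it.

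For $\varphi_{\FPA}$ and $\varphi_{\FPA}^{(m)}$, write $\varphi_{\FPA}(p,\bb,\ell) = \Ex_\alloc[v_{\alloc(\bb,\ell)}]$, a composition of the value reconstruction $b_\sigma \mapsto v_\sigma$ with the randomized allocation $\alloc(\bb,\ell)$. Boundedness follows from the bound just proved on $\bvarphi$ together with $\supp(P) \subseteq [\gamma,\varphi_1(\gamma)]$. The only possible sources of discontinuity are: (i) the value reconstruction, which jumps at $\{b_\sigma = \gamma\}$ due to the indicators $\indicator(b_\sigma > \gamma)$ and at the countably many jump points of $\varphi_\sigma$; and (ii) the allocation rule, which is discontinuous only on the tie set $\bigcup_{\sigma \neq \tau}\{b_\sigma = b_\tau\}$---away from ties the \blackref{pro:allocation} principle picks out the unique highest bidder. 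Each such piece is a finite union of coordinate hyperplanes, or the pullback of a countable set under a coordinate projection, hence $(n+2)$-dimensional Lebesgue-null by Fubini. Their union $\mathbb{D}_{\FPA}$ is still null, and on its complement both the value reconstruction and the allocation are jointly continuous in $(p,\bb,\ell)$, so $\varphi_{\FPA}$ is continuous there. The same reasoning applies verbatim to $\varphi_{\FPA}^{(m)}$ (whose jump sets are now finite), and we may enlarge $\mathbb{D}_{\FPA}$ once to serve both.

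The only mildly delicate point is the bookkeeping that every ``bad'' coordinate hyperplane is Lebesgue-null in the full $(n+2)$-dimensional product space, which is Fubini applied to the hyperplanes $\{b_\sigma = \gamma\}$, the tie hyperplanes $\{b_\sigma = b_\tau\}$, and the product of $[\gamma,\varphi_1(\gamma)]$ with the countable jump sets of $\varphi_\sigma$. Once this is in place, the set-up is exactly what the continuous mapping theorem (\Cref{thm:continuous_mapping}) needs, in the next step of the main proof, to pass from distributional convergence of the bids under $\bB^{(m)} \otimes L^{(m)} \to \bB \otimes L$ to convergence of the welfare integrals, and thereby to the claimed $\epsilon$-approximation of the {\PoA}.
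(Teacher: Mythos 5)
Your proof is correct and takes essentially the same approach as the paper's: boundedness comes from finiteness, almost-everywhere continuity of the mappings comes from monotonicity (countably many jump points), and for the welfare formulas the discontinuities are localized to ties, infimum-bid coordinates, and jump points of the mappings, all of which are negligible, so continuity and boundedness carry over. The only difference is presentational: you do the null-set bookkeeping explicitly in Lebesgue terms via Fubini on the relevant hyperplanes and pullbacks, whereas the paper argues more tersely that tiebreaks ``occur with a zero probability'' and defers the measure-theoretic statement under the bid distributions to the subsequent facts.
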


\begin{proof}
Each input mapping $\varphi_{\sigma}(b)$ as an increasing function (\blackref{monotonicity}) is continuous almost everywhere, except countably many discontinuities that in total have a zero measure. Also, each input mapping $\varphi_{\sigma}(b)$ is bounded $\varphi_{\sigma}(b) \leq \varphi_{\sigma}(\lambda) \leq \max(\bvarphi(\blambda)) < +\infty$ (\blackref{finiteness}).
Such continuity and boundedness extend to the input formula $\varphi_{\FPA}(p,\, \bb,\, \ell)$, since tiebreakers $\alloc(\bb,\, \ell) \in \argmax(\bb,\, \ell)$ occur with a zero probability.\footnote{That is, conditioned on the {\em all-infimum} bid profile $\big\{ (\bb,\, \ell) = \bgamma \big\}$, the monopolist {\em always} gets allocated $\alloc(\bgamma) = B_{1}$ (\blackref{monopoly}). Otherwise $\big\{ (\bb,\, \ell) \neq \bgamma \big\}$, tiebreakers occur with a zero probability because the underlying bid CDF's $\bB \otimes L$ are continuous functions (\Cref{def:pseudo}).}

For the same reasons, the index-$m$ mappings $\varphi_{\sigma}^{(m)}(b)$ and the index-$m$ formula $\varphi_{\FPA}^{(m)}(p,\, \bb,\, \ell)$ are also bounded and continuous almost everywhere.
{\bf \Cref{fact:discretize:function_property}} follows then.
\end{proof}

% the defining first-order bidder $\sigma_{\FPA}(\bb,\, \ell) = \argmax \big\{ b_{\sigma}: \sigma \in [n] \cup \{L\} \big\}$, by breaking ties in favor of the smallest index, is uniquely specified.

% is increasing and bounded $\varphi_{\sigma}(b) \leq \varphi_{\sigma}(\lambda) \leq \max(\bvarphi(\blambda)) < +\infty$, thus being continuous almost everywhere. Each index-$m$ mapping $\varphi_{\sigma}^{(m)}(b)$ as the $(m + 1)$-piece interpolation must also be bounded and continuous almost everywhere.

% On $b \in [\gamma,\, \lambda]$, (\blackref{monotonicity}/\blackref{finiteness})

% On $b \in [\gamma,\, \lambda]$, (\blackref{monotonicity}/\blackref{finiteness}) each input mapping $\varphi_{\sigma}(b)$ is increasing and bounded $\varphi_{\sigma}(b) \leq \varphi_{\sigma}(\lambda) \leq \max(\bvarphi(\blambda)) < +\infty$, thus being continuous almost everywhere. Each index-$m$ mapping $\varphi_{\sigma}^{(m)}(b)$ as the $(m + 1)$-piece interpolation must also be bounded and continuous almost everywhere.

% The boundedness and continuity extend to the {\FPA} formulas $\varphi_{\FPA}(p,\, \bb,\, \ell)$ and $\varphi_{\FPA}^{(m)}(p,\, \bb,\, \ell)$, since the defining first-order bidder $\sigma_{\FPA}(\bb,\, \ell) = \argmax \big\{ b_{\sigma}: \sigma \in [n] \cup \{L\} \big\}$, by breaking ties in favor of the smallest index, is uniquely specified.

\begin{fact}
\label{fact:discretize:function_convergence}
The sequence $\big\{ \varphi_{\FPA}^{(m)}(p,\, \bb,\, \ell) \big\}_{m \geq 0}$ pointwise converges to $\varphi_{\FPA}(p,\, \bb,\, \ell)$ almost everywhere.
\end{fact}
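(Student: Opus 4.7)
My plan is to unwind the definitions and reduce pointwise convergence of $\{\varphi_{\FPA}^{(m)}\}$ to pointwise convergence of $\{\bvarphi^{(m)}\}$, which in turn follows from right-continuity of each $\varphi_{\sigma}$ at all but countably many points. Recall that the partition $\bLambda^{(m)}$ consists of $m+1$ equal pieces with mesh $(\lambda-\gamma)/(m+1) \to 0$, and (Line~\ref{alg:discretize:mapping}) for any $b \in [\gamma,\lambda)$ we have $\varphi_{\sigma}^{(m)}(b) = \varphi_{\sigma}(\lambda_{j(b,m)+1}^{(m)})$ where $\lambda_{j(b,m)+1}^{(m)} \in (b, b + \frac{\lambda-\gamma}{m+1}]$. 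Hence $\lambda_{j(b,m)+1}^{(m)} \searrow b$ as $m \to \infty$, and $\varphi_{\sigma}^{(m)}(b) \to \varphi_{\sigma}(b^+)$. Since $\varphi_{\sigma}$ is monotone increasing (\blackref{monotonicity}), it is right-continuous except on a countable (hence zero-measure) set $\mathbb{D}_{\sigma} \subseteq [\gamma,\lambda]$, so $\varphi_{\sigma}^{(m)}(b) \to \varphi_{\sigma}(b)$ for every $b \in [\gamma,\lambda] \setminus \mathbb{D}_{\sigma}$ (at $b = \lambda$ the definition sets $\tilde{\varphi}_\sigma(\lambda) = \varphi_\sigma(\lambda)$ directly).

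Next I lift this to the $\varphi_{\FPA}$-formula. The allocation rule $\alloc(\bb,\ell)$ depends only on the bid profile $(\bb,\ell)$ and not on $m$, so one can write $\varphi_{\FPA}^{(m)}(p,\bb,\ell) = \Ex_{\alloc}[v_{\alloc(\bb,\ell)}^{(m)}]$ where the realized values satisfy $v_{\sigma}^{(m)} - v_{\sigma} = (\varphi_{\sigma}^{(m)}(b_{\sigma}) - \varphi_{\sigma}(b_{\sigma})) \cdot \indicator(b_{\sigma} > \gamma)$, i.e. the $m$-dependence enters only through $\bvarphi^{(m)}$ (the $p$ and $\gamma$ terms, which handle the $b_\sigma = \gamma$ cases, are untouched by the reduction). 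Consequently, at any profile $(p,\bb,\ell)$ with $b_{\sigma} \notin \mathbb{D}_{\sigma}$ for all $\sigma \in [n] \cup \{L\}$, each $v_{\sigma}^{(m)} \to v_{\sigma}$, and the finite expectation over $\alloc$ transmits this convergence to $\varphi_{\FPA}^{(m)}(p,\bb,\ell) \to \varphi_{\FPA}(p,\bb,\ell)$.

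It remains to verify that the exceptional set has measure zero in $[\gamma,\varphi_{1}(\gamma)] \times [\gamma,\lambda]^{n+1}$. The bad set is contained in $\bigcup_{\sigma \in [n]\cup\{L\}} \{(p,\bb,\ell) : b_{\sigma} \in \mathbb{D}_{\sigma}\}$, a finite union of sets each of which is a countable union of axis-aligned hyperplanes (one coordinate fixed to a point in $\mathbb{D}_{\sigma}$). By Fubini each such hyperplane has Lebesgue measure zero in the ambient product, so the entire bad set has measure zero, establishing pointwise convergence almost everywhere.

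The arguments are all soft, so there is no significant obstacle to anticipate; the only subtlety worth calling out is that the interpolation in Line~\ref{alg:discretize:mapping} picks the right endpoint of each piece, which makes the natural limit $\varphi_{\sigma}(b^+)$ rather than $\varphi_{\sigma}(b^-)$, and that is precisely why right-continuity (rather than two-sided continuity) of $\varphi_{\sigma}$ is the relevant notion here.
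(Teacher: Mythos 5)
Your proposal is correct, and it reaches the conclusion by a finer-grained route than the paper. The paper's own proof is a one-liner: it regards $\varphi_{\FPA}^{(m)}$ as the $(m+1)^{n+1}$-piece uniform interpolation of $\varphi_{\FPA}$ and invokes the boundedness and almost-everywhere continuity of $\varphi_{\FPA}$ established in \Cref{fact:discretize:function_property}, so that interpolation with vanishing mesh converges at every continuity point. You instead work bidder-by-bidder: you observe that the $m$-dependence enters only through $\varphi_{\sigma}^{(m)}(b_{\sigma})$ for $b_{\sigma} > \gamma$ (the allocation $\alloc(\bb,\ell)$ and the $p$- and $\gamma$-terms are untouched), prove $\varphi_{\sigma}^{(m)}(b) \to \varphi_{\sigma}(b^{+}) = \varphi_{\sigma}(b)$ at all but countably many $b$ using monotonicity and the right-endpoint interpolation, push the limit through the finite expectation over $\alloc$, and control the exceptional set as a countable union of coordinate hyperplanes of Lebesgue measure zero. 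What your decomposition buys is precision on a point the paper elides: $\varphi_{\FPA}^{(m)}$ is \emph{not} literally a grid interpolation of $\varphi_{\FPA}$, since the argmax/tie structure in $\alloc(\bb,\ell)$ is evaluated at the true bids rather than at snapped grid points; your observation that only the value coordinates are discretized is exactly the fact that makes the paper's loose description harmless, and your hyperplane argument pins down the exceptional set explicitly rather than appealing to the (coarser) discontinuity set of the aggregated formula. The paper's route is shorter and reuses \Cref{fact:discretize:function_property}; yours is self-contained, slightly more careful, and identifies right-continuity of the monotone mappings as the operative property, which is a correct and useful remark given the right-endpoint convention in Line~\ref{alg:discretize:mapping}.
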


\begin{proof}
By construction, each index-$m$ {\FPA} formula $\varphi_{\FPA}^{(m)}(p,\, \bb,\, \ell)$ for $m \geq 0$ is the $(m + 1)^{n + 1}$-piece uniform interpolation of the input {\FPA} formula $\varphi_{\FPA}(p,\, \bb,\, \ell)$. Because the $\varphi_{\FPA}(p,\, \bb,\, \ell)$ itself is bounded and continuous almost everywhere ({\bf \Cref{fact:discretize:function_property}}), letting $m \to +\infty$ makes the sequence $\big\{ \varphi_{\FPA}^{(m)}(p,\, \bb,\, \ell) \big\}_{m \geq 0}$ converge to $\varphi_{\FPA}(p,\, \bb,\, \ell)$ almost everywhere.
{\bf \Cref{fact:discretize:function_convergence}} follows then.
\end{proof}

Now we would consider a {\em random} profile from the input pseudo instance $(p,\, \bb,\, \ell) \sim (P,\, \bB \otimes L)$
rather than a specific profile $\in [\gamma,\, \varphi_{1}(\gamma)] \times [\gamma,\, \lambda]^{n + 1}$.

{\bf \Cref{fact:discretize:measure_input,fact:discretize:measure_sequence}} show that the formulas $\varphi_{\FPA}(p,\, \bb,\, \ell)$ and $\varphi_{\FPA}^{(m)}(p,\, \bb,\, \ell)$ preserve the almost-everywhere continuity in the probabilistic metric space $(p,\, \bb,\, \ell) \sim (P,\, \bB \otimes L)$.

\begin{fact}
\label{fact:discretize:measure_input}
The measure $\Prx_{p,\, \bb,\, \ell} \big[\, (p,\, \bb,\, \ell) \in \mathbb{D}_{\FPA} \,\big] = 0$, where $\mathbb{D}_{\FPA}$ is the set of discontinuity points of the input {\FPA} formula $\varphi_{\FPA}(p,\, \bb,\, \ell)$.
\end{fact}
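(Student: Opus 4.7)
}

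The plan is to decompose $\mathbb{D}_{\FPA}$ according to the structural sources of discontinuity of $\varphi_{\FPA}$ (and simultaneously of every $\varphi_{\FPA}^{(m)}$), and to show that each source has zero probability under the product measure $(P,\, \bB \otimes L)$. Since $\varphi_{\FPA}(p,\, \bb,\, \ell) = \Ex_{\alloc}[v_{\alloc(\bb,\, \ell)}]$ depends on $(\bb,\, \ell)$ only through the identity and the bid of the argmax, the only discontinuity sources in the interior of the domain are \textbf{(a)} some coordinate $b_\sigma$ coinciding with a jump of $\varphi_\sigma$ or of $\varphi_\sigma^{(m)}$ for some $m \geq 0$, and \textbf{(b)} the argmax $\argmax(\bb,\, \ell)$ being non-unique with $\max(\bb,\, \ell) > \gamma$.

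For source \textbf{(a)}, each $\varphi_\sigma$ is weakly increasing on $[\gamma,\, \lambda]$ by \blackref{monotonicity}, so its jump set $\mathbb{J}_\sigma$ is countable; each $\varphi_\sigma^{(m)}$ is piecewise constant with at most $(m + 1)$ jumps located at the grid points $\{\lambda_j\}_{j \in [0:\, m]}$. The countable union $\bigcup_{\sigma,\, m}(\mathbb{J}_\sigma \cup \mathbb{J}_\sigma^{(m)}) \subseteq (\gamma,\, \lambda]$ is still countable. By \Cref{lem:bid_distribution}, $B_\sigma$ is continuous on $(\gamma,\, \lambda]$, hence assigns this countable set measure zero; Fubini combined with the product structure of $\bB \otimes L$ then gives zero probability for the corresponding cylindrical events under the full measure.

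For source \textbf{(b)}, fix any pair $\sigma \neq \sigma'$ and consider the tie event $\{b_\sigma = b_{\sigma'} > \gamma\}$. Both marginals are continuous on $(\gamma,\, \lambda]$ by \Cref{lem:bid_distribution}, and independence plus Fubini force $\Pr[b_\sigma = b_{\sigma'} > \gamma] = 0$; a finite union over pairs preserves the bound.

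The subtle case---which I expect to be the main obstacle---is the boundary atom $\{(\bb,\, \ell) = \bgamma\}$, which can carry positive mass $\calB(\gamma) > 0$. At this configuration \Cref{lem:monopolist} forces $\alloc(\bgamma) = 1$ deterministically, so $\varphi_{\FPA}(p,\, \bgamma) = \varphi_{\FPA}^{(m)}(p,\, \bgamma) = p$ for every $m$; restricted to the stratum $\{\bgamma\}$ both function sequences are identically $p$ and thus trivially continuous in $p$. Consequently $\bgamma$ does \emph{not} enter $\mathbb{D}_{\FPA}$ in the sense of \Cref{fact:discretize:function_property}---the ``zero-measure'' set outside of which $\varphi_{\FPA}$ and all $\varphi_{\FPA}^{(m)}$ are jointly continuous. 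Verifying this exclusion carefully (and confirming that sources \textbf{(a)} and \textbf{(b)} exhaust $\mathbb{D}_{\FPA}$) is the delicate technical step; once established, the desired bound $\Prx_{p,\bb,\ell}[(p,\bb,\ell) \in \mathbb{D}_{\FPA}] = 0$ follows by countable additivity from the zero estimates above.
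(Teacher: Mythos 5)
Your proposal is correct and takes essentially the same route as the paper's own proof: the same decomposition of $\mathbb{D}_{\FPA}$ into allocation-switch/tie points and mapping-jump points, the same zero-measure arguments (continuity of the bid CDF's on $(\gamma,\,\lambda]$ from \Cref{lem:bid_distribution}, countability of the jumps of the increasing mappings, independence/Fubini for ties), and the same treatment of the all-infimum atom via \Cref{lem:monopolist}, where the allocation is deterministic and the formula reduces to the identity in $p$. The only (harmless) difference is packaging: you fold the jump sets of the discretized mappings $\varphi_{\sigma}^{(m)}$ into the same argument, whereas the paper handles the discretized formulas separately in \Cref{fact:discretize:measure_sequence}.
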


\begin{proof}
We claim that, with the $(\bb,\, \ell) \in [\gamma,\, \lambda]^{n + 1}$ held constant, the formula $\varphi_{\FPA}(p,\, \bb,\, \ell)$ is continuous with respect to $p \in [\gamma,\, \varphi_{1}(\gamma)]$, the monopolist $B_{1}$'s conditional value.
{\bf IF} the bid profile is {\em all-infimum} $\big\{ (\bb,\, \ell) = \bgamma \big\}$, the monopolist $B_{1}$ takes the conditional value $v_{1} = p$ (\Cref{def:pseudo}) and {\em will} be allocated $\alloc(\bgamma) = B_{1}$ (\blackref{monopoly}) -- the {\em identity} formula $\varphi_{\FPA}(p,\, \bgamma) = \Ex_{\alloc} [v_{\alloc(\bgamma)}] = p$.
{\bf ELSE} $\big\{ (\bb,\, \ell) \neq \bgamma \big\}$, each first-order bidder $\sigma \in \argmax(\bb,\, \ell)$ takes a {\em non-infimum} bid $b_{\sigma} = \max(\bb,\, \ell) > \gamma$ (\Cref{def:pseudo}) and thus a {\em $p$-irrelevant} value $v_{\sigma} = \varphi_{\sigma}(b_{\sigma})$; the allocated bidder must be one of them $\alloc(\bb,\, \ell) \in \argmax(\bb,\, \ell)$ -- the {\em constant} formula $\varphi_{\FPA}(p,\, \bb,\, \ell) = \Ex_{\alloc} [v_{\alloc(\bb,\, \ell)}]$.

Combining both cases {\bf IF}/{\bf ELSE} gives our claim.
As a consequence, the formula $\varphi_{\FPA}(p,\, \bb,\, \ell)$ can just have two types of discontinuity points.
{\bf FIRST}, those by {\em switches} of the allocated bidder $\alloc(\bb,\, \ell) \in \argmax(\bb,\, \ell)$. Those have a zero measure, because (\Cref{def:pseudo}) bid distributions $B_{\sigma}(b)$ for $\sigma \in [n] \cup \{L\}$ are continuous except the infimum bid $b = \gamma$ AND (\blackref{monopoly}) conditioned on the {\em all-infimum} bid profile $\big\{ (\bb,\, \ell) = \bgamma \big\}$, the monopolist {\em always} gets allocated $\alloc(\bgamma) = B_{1}$.
{\bf SECOND}, those by the (unswitched) allocated bidder's mapping $\varphi_{\alloc(\bb,\, \ell)}(b_{\alloc(\bb,\, \ell)})$ at a {\em non-infimum} first-order bid $b_{\alloc(\bb,\, \ell)} \in (\gamma,\, \lambda]$. Those again have a zero measure, because ({\bf \Cref{fact:discretize:function_property}}) each mapping $\varphi_{\sigma}(b)$ is bounded and continuous almost everywhere AND (\Cref{def:pseudo}) except the infimum bid $b \in (\gamma,\, \lambda]$, each bid distribution $B_{\sigma}(b)$ is continuous.

Both types {\bf FIRST}/{\bf SECOND} together have a zero measure. {\bf \Cref{fact:discretize:measure_input}} follows then.
\end{proof}

\begin{fact}
\label{fact:discretize:measure_sequence}
The measure $\Prx_{p,\, \bb,\, \ell} \big[\, (p,\, \bb,\, \ell) \in \mathbb{D}_{\FPA}^{(m)} \,\big] = 0$, where $\mathbb{D}_{\FPA}^{(m)}$ is the set of discontinuity points of the index-$m$ {\FPA} formula $\varphi_{\FPA}^{(m)}(p,\, \bb,\, \ell)$ for $m \geq 0$.
\end{fact}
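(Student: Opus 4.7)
The plan is to mirror closely the proof of Fact~\ref{fact:discretize:measure_input}, tracking how discontinuities of the discretized formula $\varphi_{\FPA}^{(m)}$ arise and showing each source has zero measure under $(p,\, \bb,\, \ell) \sim (P,\, \bB \otimes L)$.

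First I would observe that, with $(\bb,\, \ell)$ held constant, the $p$-dependence of $\varphi_{\FPA}^{(m)}(p,\, \bb,\, \ell)$ is structurally identical to that of $\varphi_{\FPA}(p,\, \bb,\, \ell)$: either the bid profile is all-infimum $(\bb,\, \ell) = \bgamma$, in which case the monopolist wins by \Cref{lem:monopolist} and the formula equals $p$; or at least one bid exceeds $\gamma$, the allocated first-order bidder $\alloc(\bb,\, \ell) \in \argmax(\bb,\, \ell)$ takes a non-infimum bid, and the resulting value $\tilde{\varphi}_{\alloc(\bb,\, \ell)}^{(m)}(b_{\alloc(\bb,\, \ell)})$ is $p$-independent. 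Hence the formula is continuous in $p$, and the whole analysis reduces to the bid coordinates.

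Next I would partition the $(\bb,\, \ell)$-discontinuities into two familiar families: \emph{allocation switches}, where $\alloc(\bb,\, \ell) \in \argmax(\bb,\, \ell)$ jumps between candidates, and \emph{mapping jumps}, where the allocated bidder's piecewise-constant mapping $\tilde{\varphi}_{\alloc(\bb,\, \ell)}^{(m)}$ jumps at a non-infimum first-order bid. The allocation-switch family is handled verbatim as in Fact~\ref{fact:discretize:measure_input}: \Cref{lem:bid_distribution} implies each $B_\sigma$ is continuous on $(\gamma,\, \lambda]$, so ties among non-infimum bids have zero probability; and \Cref{lem:monopolist} resolves the all-infimum case deterministically in favor of the monopolist.

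The main new ingredient, and the only place that distinguishes this fact from Fact~\ref{fact:discretize:measure_input}, is the mapping-jump family. By construction (Line~\ref{alg:discretize:mapping}), each $\tilde{\varphi}_\sigma^{(m)}$ is piecewise constant with jump discontinuities confined to the finitely many interior partition points $\{\lambda_1,\, \ldots,\, \lambda_m\}$, all strictly above $\gamma$. For each such $\lambda_j$, \Cref{lem:bid_distribution} gives $\Pr_{b_\sigma \sim B_\sigma}[b_\sigma = \lambda_j] = 0$, since $B_\sigma$ may have a probability mass only at the infimum $\gamma$. A union bound over the finitely many pairs $(\sigma,\, j) \in ([n] \cup \{L\}) \times [1:\, m]$ sends the total probability to zero, completing the proof. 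The only mild subtlety I foresee is distinguishing the monopolist's right-boundary behavior at $b_1 = \gamma$ (where $\tilde{\varphi}_1^{(m)}$ may jump between $p$ and $\varphi_1(\lambda_1)$) from interior partition jumps; but this is already absorbed by the $p$-continuity step, since when $b_1 = \gamma$ the formula consults $p$ rather than the mapping value.
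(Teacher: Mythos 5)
Your proposal is correct and takes essentially the same route as the paper, whose proof of this fact simply says to rerun the argument for \Cref{fact:discretize:measure_input}, using \Cref{fact:discretize:function_property} for the index-$m$ mappings and the same two-type classification of discontinuities (allocation switches handled via \Cref{lem:bid_distribution,lem:monopolist}, mapping jumps via continuity of the input bid distributions above $\gamma$). Your treatment of the mapping-jump family is just a more explicit instantiation of the paper's citation of \Cref{fact:discretize:function_property}: the jumps of $\varphi_{\sigma}^{(m)}$ are confined to the finitely many partition points $\lambda_{1},\dots,\lambda_{m} > \gamma$, each of zero probability under the atomless-above-$\gamma$ distributions $B_{\sigma}$.
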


\begin{proof}
Indeed, {\bf \Cref{fact:discretize:measure_input}} uses two ``input-specific'' arguments.
(i)~The formula $\varphi_{\FPA}(p,\, \bb,\, \ell)$ only has two kinds of discontinuity points. So is the index-$m$ formula $\varphi_{\FPA}^{(m)}(p,\, \bb,\, \ell)$, by the same analysis.
(ii)~The input mappings $\varphi_{\sigma}(b)$ are bounded and continuous almost everywhere. So are the index-$m$ mappings $\varphi_{\sigma}^{(m)}(b)$, by {\bf \Cref{fact:discretize:function_property}}.
Thus, readopting the proof of {\bf \Cref{fact:discretize:measure_input}} implies {\bf \Cref{fact:discretize:measure_sequence}}.
\end{proof}

Now we further draw {\em random} profiles $(p^{(m)},\, \bb^{(m)},\, \ell^{(m)}) \sim (P,\, \bB^{(m)} \otimes L^{(m)})$ from the {\em discretized} pseudo instances, one by one $m \geq 0$. {\bf \Cref{fact:discretize:distribution_convergence}} shows that the sequence of these profiles {\em converges in distribution} to the input {\em random} profile $(p,\, \bb,\, \ell) \sim (P,\, \bB \otimes L)$.

\begin{fact}
\label{fact:discretize:distribution_convergence}
The sequence of random profiles $\big\{ (p^{(m)},\, \bb^{(m)},\, \ell^{(m)}) \sim (P,\, \bB^{(m)} \otimes L^{(m)}) \big\}_{m \geq 0}$ converges in distribution ($\overset{{\tt dist}}{\longrightarrow}$) to the random input profile $(p,\, \bb,\, \ell) \sim (P,\, \bB \otimes L)$.
\end{fact}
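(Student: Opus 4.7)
The plan is to reduce joint convergence in distribution to marginal convergence of each bid CDF $B_\sigma^{(m)}$, and then extract the latter from the explicit integral representation of Lemma~\ref{lem:pseudo_distribution}. First, since the {\discretize} reduction leaves the conditional value untouched ({\bf Fact~\ref{fact:discretize:output}}), the first coordinate satisfies $p^{(m)} \overset{\tt d}{=} p$ for every $m \geq 0$. Since the conditional value is independent of the bid profile in every valid pseudo instance (\Cref{def:pseudo}), and since the product of independent random variables converges jointly in distribution whenever each coordinate does so marginally (by Slutsky/Portmanteau), it suffices to prove that for each bidder $\sigma \in [n] \cup \{L\}$, the sequence of marginal bid distributions $\{B_\sigma^{(m)}\}_{m \geq 0}$ converges in distribution to $B_\sigma$.

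Next, by the Portmanteau theorem, marginal convergence in distribution is equivalent to pointwise convergence of the CDFs at every continuity point of $B_\sigma$. By \Cref{lem:bid_distribution}, $B_\sigma$ is continuous on the closed interval $[\gamma,\lambda]$ except possibly at the single point $\gamma$, which is a measure-zero exception and can be ignored. For any continuity point $b \in (\gamma, \lambda]$, use the explicit formula from \Cref{lem:pseudo_distribution}:
\begin{align*}
    B_\sigma^{(m)}(b) ~=~ \exp\Big(-\int_b^\lambda f_\sigma^{(m)}(t) \cdot \d t\Big),
\end{align*}
where $f_\sigma^{(m)}(t)$ is the appropriate linear combination of $(\varphi_\tau^{(m)}(t) - t)^{-1}$ terms (for real bidders, $f_i^{(m)}(t) = (\varphi_L^{(m)}(t) - t)^{-1} - (\varphi_i^{(m)}(t) - t)^{-1}$; analogously for $L$). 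By construction of the {\discretize} reduction (Line~\ref{alg:discretize:mapping}), on each sub-interval $[\lambda_j^{(m)}, \lambda_{j+1}^{(m)})$ the interpolation satisfies $\varphi_\tau^{(m)}(t) = \varphi_\tau(\lambda_{j+1}^{(m)})$; as the mesh $(\lambda - \gamma)/(m+1)$ shrinks to zero, right-continuity of the monotone function $\varphi_\tau$ (at its continuity points, which form a full-measure set by {\bf \Cref{fact:discretize:function_property}}) yields $\varphi_\tau^{(m)}(t) \to \varphi_\tau(t)$ for a.e.\ $t \in (\gamma, \lambda]$, and hence $f_\sigma^{(m)}(t) \to f_\sigma(t)$ a.e.

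The main subtlety is producing an integrable dominating function so that we can invoke the dominated convergence theorem and conclude $\int_b^\lambda f_\sigma^{(m)}(t)\,\d t \to \int_b^\lambda f_\sigma(t)\,\d t$. The monotonicity-preserving interpolation gives $\varphi_\tau^{(m)}(t) \geq \varphi_\tau(t)$ pointwise, because $\varphi_\tau$ is increasing and $\varphi_\tau^{(m)}$ interpolates the \emph{right} endpoint of each sub-interval; this pointwise domination of the mappings, combined with $\varphi_L^{(m)}(t) > t$ and $\varphi_\tau^{(m)}(t) > t$ guaranteed by \Cref{lem:pseudo_mapping}, allows us to bound $|f_\sigma^{(m)}(t)|$ by a fixed multiple of $f_\sigma(t) + C$ on any interval $[b,\lambda]$ bounded away from $\gamma$; since the input pseudo instance is well defined, $\int_b^\lambda f_\sigma(t)\,\d t < +\infty$ provides the needed domination. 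Applying dominated convergence gives $B_\sigma^{(m)}(b) \to B_\sigma(b)$ at every continuity point $b \in (\gamma, \lambda]$, which concludes marginal convergence and hence, by independence, joint convergence in distribution. The main obstacle is verifying the uniform integrable upper bound near $b \searrow \gamma$; for points of discontinuity of $B_\sigma$ (only $\gamma$ itself can be such), this can be bypassed because the Portmanteau characterization only requires convergence at continuity points.
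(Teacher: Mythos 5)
Your proposal is correct and follows essentially the same route as the paper's own proof: reconstruct the bid CDFs from the mappings via the integral formulas of \Cref{lem:pseudo_distribution}, use that the right-endpoint interpolations $\bvarphi^{(m)}$ converge to $\bvarphi$ almost everywhere, pass the limit through the integrals to get pointwise convergence of the CDFs at continuity points, and conclude convergence in distribution (with independence handling the joint profile and the unmodified $P$). The only difference is presentational: where the paper tersely invokes Riemann integrability of the bounded, a.e.-continuous mappings, you justify the limit interchange via Portmanteau plus dominated convergence on $[b,\lambda]$ with $b>\gamma$ — which indeed works, since $\varphi_L(t)-t$ is bounded away from $0$ there and $\varphi_L^{(m)}\geq\varphi_L$ gives a uniform integrable (in fact bounded) dominating function.
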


\begin{proof}
The bid distributions $\bB \otimes L$ and/or $\bB^{(m)} \otimes L^{(m)}$ (\Cref{lem:pseudo_distribution}) can be reconstructed from the mappings $\bvarphi$ and/or $\bvarphi^{(m)}$ in term of integral formulas. All of them are {\em Riemann integrable}, since ({\bf \Cref{fact:discretize:function_property}}) all mappings $\bvarphi$ and $\bvarphi^{(m)}$ are bounded and continuous almost everywhere on $b \in [\gamma,\, \lambda]$.
In particular, each {\em discretized} mapping $\varphi_{\sigma}^{(m)}(b)$ as the $(m + 1)$-piece uniform interpolation of the input mapping $\varphi_{\sigma}(b)$, pointwise converges to $\varphi_{\sigma}(b)$ almost everywhere as $m \to +\infty$.
The sequence $\big\{ \bB^{(m)} \otimes L^{(m)} \big\}_{m \geq 0}$ thus pointwise converge to input bid distributions $\bB \otimes L$. This gives {\bf \Cref{fact:discretize:distribution_convergence}}.
\end{proof}

Based on {\bf \Cref{fact:discretize:distribution_convergence,fact:discretize:function_convergence,fact:discretize:measure_input,fact:discretize:measure_sequence}}, we can deduce that\footnote{The continuous mapping theorem (\Cref{thm:continuous_mapping}) considers an unmodified function $g$, rather than a modified one $\varphi_{\FPA}^{(m)}$ versus $\varphi_{\FPA}$ in our case. Thus we need to take the limit $\lim_{m \to +\infty}$ and $\lim_{t \to +\infty}$ twice.}
\begin{align}
    \lim_{\substack{m \to +\infty \\ t \to +\infty}}\, \Ex_{p^{(t)},\, \bb^{(t)},\, \ell^{(t)}} \big[\, \varphi_{\FPA}^{(m)}(p^{(t)},\, \bb^{(t)},\, \ell^{(t)}) \,\big]
    & ~=~ \lim_{m \to +\infty}\, \Ex_{p,\, \bb,\, \ell} \big[\, \varphi_{\FPA}^{(m)}(p,\, \bb,\, \ell) \,\big]
    \label{eq:discretize:1}\tag{D1} \\
    & ~=~ \Ex_{p,\, \bb,\, \ell} \big[\, \varphi_{\FPA}(p,\, \bb,\, \ell) \,\big]
    \label{eq:discretize:2}\tag{D2} \\
    & ~=~ \FPA(\text{\em input}) < +\infty. \phantom{\Big.}
    \nonumber
    \label{eq:discretize:3}\tag{D3}
\end{align}
\eqref{eq:discretize:1}: Apply the continuous mapping theorem (\Cref{thm:continuous_mapping}), as the sequence $\big\{ (p^{(t)},\, \bb^{(t)},\, \ell^{(t)}) \big\}_{t \geq 0}$ converges in distribution to the $(p,\, \bb,\, \ell) \sim (P,\, \bB \otimes L)$ ({\bf \Cref{fact:discretize:distribution_convergence}}) and each set of discontinuity points $\mathbb{D}_{\FPA}^{(m)}$ for $m \geq 0$ has a zero measure $\Prx_{p,\, \bb,\, \ell} \big[\, (p,\, \bb,\, \ell) \in \mathbb{D}_{\FPA}^{(m)} \,\big] = 0$ ({\bf \Cref{fact:discretize:measure_sequence}}). \\
\eqref{eq:discretize:2}: {\bf \Cref{fact:discretize:function_convergence}} that the sequence of {\em discretized} {\FPA} formulas $\big\{ \varphi_{\FPA}^{(m)}(p,\, \bb,\, \ell) \big\}_{m \geq 0}$ pointwise converges to the input {\FPA} formula $\varphi_{\FPA}(p,\, \bb,\, \ell)$ almost everywhere.\footnote{Rigorously, \Cref{eq:discretize:1} for $m \geq 0$ and \Cref{eq:discretize:2} are all Riemann integrable, because ({\bf \Cref{fact:discretize:function_property}}) formulas $\varphi_{\FPA}^{(m)}(p,\, \bb,\, \ell)$ for $m \geq 0$ and $\varphi_{\FPA}(p,\, \bb,\, \ell)$ each are bounded and continuous almost everywhere AND ({\bf \Cref{fact:discretize:measure_input,fact:discretize:measure_sequence}}) the sets of discontinuity points $\mathbb{D}_{\FPA}^{(m)}$ for $m \geq 0$ and $\mathbb{D}_{\FPA}$ each have a zero measure on $(p,\, \bb,\, \ell) \sim (P,\, \bB \otimes L)$.} \\
\eqref{eq:discretize:3}: $\FPA(\text{\em input}) \leq \OPT(\text{\em input}) < +\infty$, as the statement of \Cref{lem:discretize} promises.

\vspace{.1in}
It follows that $\lim_{m \to +\infty}\, \Ex_{p^{(m)},\, \bb^{(m)},\, \ell^{(m)}} \big[\, \varphi_{\FPA}^{(m)}(p^{(m)},\, \bb^{(m)},\, \ell^{(m)}) \,\big] = \FPA(\text{\em input})$. Given this and because
% Given that ``the {\LHS} of \eqref{eq:discretize:1} converges to $\eqref{eq:discretize:3} = \FPA(\text{\em input})$'' and
the error $\epsilon \in (0,\, 1)$ is a constant, for some {\em bounded} threshold $M_{\FPA} = M_{\FPA}(\epsilon) < +\infty$, every index $m \geq M_{\FPA}$ {\em discretized} pseudo instance yields a close enough auction {\SocialWelfare}
\begin{align*}
    \FPA(P,\, \bB^{(m)} \otimes L^{(m)})\;
    ~=~ \Ex_{p^{(m)},\, \bb^{(m)},\, \ell^{(m)}} \Big[\, \varphi_{\FPA}^{(m)}(p,\, \bb,\, \ell) \,\Big]\,
    ~=~ \FPA(\text{\em input}) \cdot e^{\pm \epsilon / 3}.\;
\end{align*}
Likewise,\footnote{Namely, the counterpart {\OPT} formulas $\varphi_{\OPT}^{(m)}(p,\, \bb,\, \ell)$ and $\varphi_{\OPT}(p,\, \bb,\, \ell)$ for profiles $(p,\, \bb,\, \ell) \in [\gamma,\, \varphi_{1}(\gamma)] \times [\gamma,\, \lambda]^{n + 1}$ satisfy the analogs of {\bf \Cref{fact:discretize:function_property,fact:discretize:measure_input,fact:discretize:function_convergence,fact:discretize:measure_sequence}}. These analogs together with {\bf \Cref{fact:discretize:distribution_convergence}} result in the {\OPT} version of the claim ``$\lim_{m \to +\infty}\, \Ex_{p^{(m)},\, \bb^{(m)},\, \ell^{(m)}} \big[\, \varphi_{\FPA}^{(m)}(p^{(m)},\, \bb^{(m)},\, \ell^{(m)}) \,\big] = \FPA(\text{\em input})$''.}
for some $M_{\OPT} = M_{\OPT}(\epsilon) < +\infty$, every index $m \geq M_{\OPT}(\epsilon)$ {\em discretized} pseudo instance yields a close enough optimal {\SocialWelfare}
\begin{align*}
    % \label{eq:discretize:5}\tag{D5}
    \OPT(P,\, \bB^{(m)} \otimes L^{(m)})
    ~=~ \Ex_{p^{(m)},\, \bb^{(m)},\, \ell^{(m)}} \Big[\, \varphi_{\OPT}^{(m)}(p,\, \bb,\, \ell) \,\Big]
    ~=~ \OPT(\text{\em input}) \cdot e^{\pm \epsilon / 3}.
\end{align*}
So the index $M \eqdef \max(M_{\FPA},\, M_{\OPT})$ {\em discretized} pseudo instance gives a close enough bound
% (Recall that any {\PoA} is $\leq 1$.)
\begin{align*}
    \PoA(P,\, \bB^{(M)} \otimes L^{(M)})
    ~=~ \PoA(\text{\em input}) \cdot e^{\pm 2 \cdot (\epsilon / 3)}
    ~=~ \PoA(\text{\em input}) \pm \epsilon.
    \hspace{2.05cm}
\end{align*}
This finishes the proof of \Cref{lem:discretize} assuming that the input pseudo instance $(P,\, \bB \otimes L)$ admits \blackref{finiteness} that $\lambda \leq \max(\bvarphi(\blambda)) < +\infty$. Below we move on to the general case.

\vspace{.1in}
\noindent
{\bf The General Case.}
When the input $(P,\, \bB \otimes L) \in \Bvalid$ violates \blackref{finiteness}, we instead consider the {\em truncated} bid distributions $\bB^{(t)} \otimes L^{(t)} = \{B_{\sigma}^{(t)}\}_{\sigma \in [n] \cup \{L\}}$ given by $B_{\sigma}^{(t)}(b) \equiv \min(\frac{B_{\sigma}(b)}{B_{\sigma}(t)},\, 1)$, where the parameter $t \in (\gamma,\, \lambda)$; see \Cref{fig:finiteness} for a visual aid.\footnote{The proof works for any supremum bid $\lambda \leq +\infty$, including the infinite one $\lambda = +\infty$. However, we demonstrate a bounded one $\lambda < +\infty$ in \Cref{fig:finiteness} for convenience.}

By construction, (\Cref{def:pseudo}) the mappings keep the same $\varphi_{\sigma}^{(t)}(b) = \varphi_{\sigma}(b)$ on the {\em truncated} bid support $b \in [\gamma,\, t]$, which means $\sup(\supp(\bB^{(t)} \otimes L^{(t)})) = t \leq \max(\bvarphi^{(t)}(\bt)) = \max(\bvarphi(\bt)) < +\infty$.
Precisely, together with the unmodified conditional value $P$, this {\em truncated} pseudo instance is valid $(P,\, \bB^{(t)} \otimes L^{(t)}) \in \Bvalid$ and satisfies \blackref{finiteness}.

% \begin{align*}
%     \FPA(\text{\em input})
%     ~=~ \Ex[P] \cdot \calB(\gamma)
%     ~+~ \sum_{\sigma \in [n] \cup \{L\}} \Big(\int_{\gamma}^{\lambda} \varphi_{\sigma}(b) \cdot \frac{B'_{\sigma}(b)}{B_{\sigma}(b)} \cdot \calB(b) \cdot \d b\Big)
%     ~<~ +\infty.
% \end{align*}

% We claim that the {\em truncated} pseudo instance $(P,\, \bB^{(t)} \otimes L^{(t)})$ approximates the bounded input {\SocialWelfares} $\FPA(\text{\em input}) \leq \OPT(\text{\em input}) < +\infty$ arbitrarily well, when the parameter $t \in (\gamma,\, \lambda)$ is close enough to the supremum bid $\lambda \leq +\infty$.

We claim that when the parameter $t \in (\gamma,\, \lambda)$ is close enough to the supremum bid $\lambda \leq +\infty$, the truncation has an arbitrarily small effect on the {\SocialWelfares} $\FPA(\text{\em input}) \leq \OPT(\text{\em input}) < +\infty$.

Following \Cref{lem:pseudo_welfare}, the input auction {\SocialWelfare} $\FPA(\text{\em input}) = \Ex[P] \cdot \calB(\gamma) + \int_{\gamma}^{\lambda} f(b) \cdot \d b$, for the integrand $f(b) \eqdef \sum_{\sigma} \varphi_{\sigma}(b) \cdot \frac{B'_{\sigma}(b)}{B_{\sigma}(b)} \cdot \calB(b)$.
This formula $< +\infty$ is Lebesgue integrable, hence $\lim_{t \nearrow \lambda} \int_{t}^{\lambda} f(b) \cdot \d b = 0$.
Likewise, the {\em truncated} auction {\SocialWelfare}
\begin{align*}
    \FPA(P,\, \bB^{(t)} \otimes L^{(t)})
    & ~=~ \Ex[P] \cdot \calB^{(t)}(\gamma)
    + \int_{\gamma}^{t} \Big(\sum_{\sigma} \varphi_{\sigma}^{(t)}(b) \cdot \tfrac{{B_{\sigma}^{(t)}}'(b)}{B_{\sigma}^{(t)}(b)} \cdot \calB^{(t)}(b)\Big) \cdot \d b ~~~~ \\
    & ~=~ \frac{1}{\calB(t)} \cdot \Ex[P] \cdot \calB(\gamma)
    + \frac{1}{\calB(t)} \cdot \int_{\gamma}^{t} f(b) \cdot \d b \\
    & ~=~ \frac{1}{\calB(t)} \cdot \Big(\FPA(\text{\em input}) - \int_{t}^{\lambda} f(b) \cdot \d b\Big).
\end{align*}
We have $\lim_{t \nearrow \lambda} \FPA(P,\, \bB^{(t)} \otimes L^{(t)}) = \FPA(\text{\em input})$, given continuity $\lim_{t \nearrow \lambda} \calB(t) = 1$ (\Cref{def:pseudo}).
% since the first-order bid CDF $\calB(b)$ is continuous on $b \in [\gamma,\, \lambda]$ (\Cref{lem:bid_distribution}).
Accordingly, for some threshold $T_{\FPA} = T_{\FPA}(\epsilon) \in (\gamma,\, \lambda)$, any parameter $t \in [T_{\FPA},\, \lambda)$ yields a close enough auction {\SocialWelfare} $\FPA(P,\, \bB^{(t)} \otimes L^{(t)}) = \FPA(\text{\em input}) \cdot e^{\pm \epsilon / 3}$.

% Since $\lim_{t \nearrow \lambda} \int_{t}^{\lambda} f(b) \cdot \d b = 0$ and t

\begin{figure}[t]
    \centering
    \includegraphics[width = .9\textwidth]{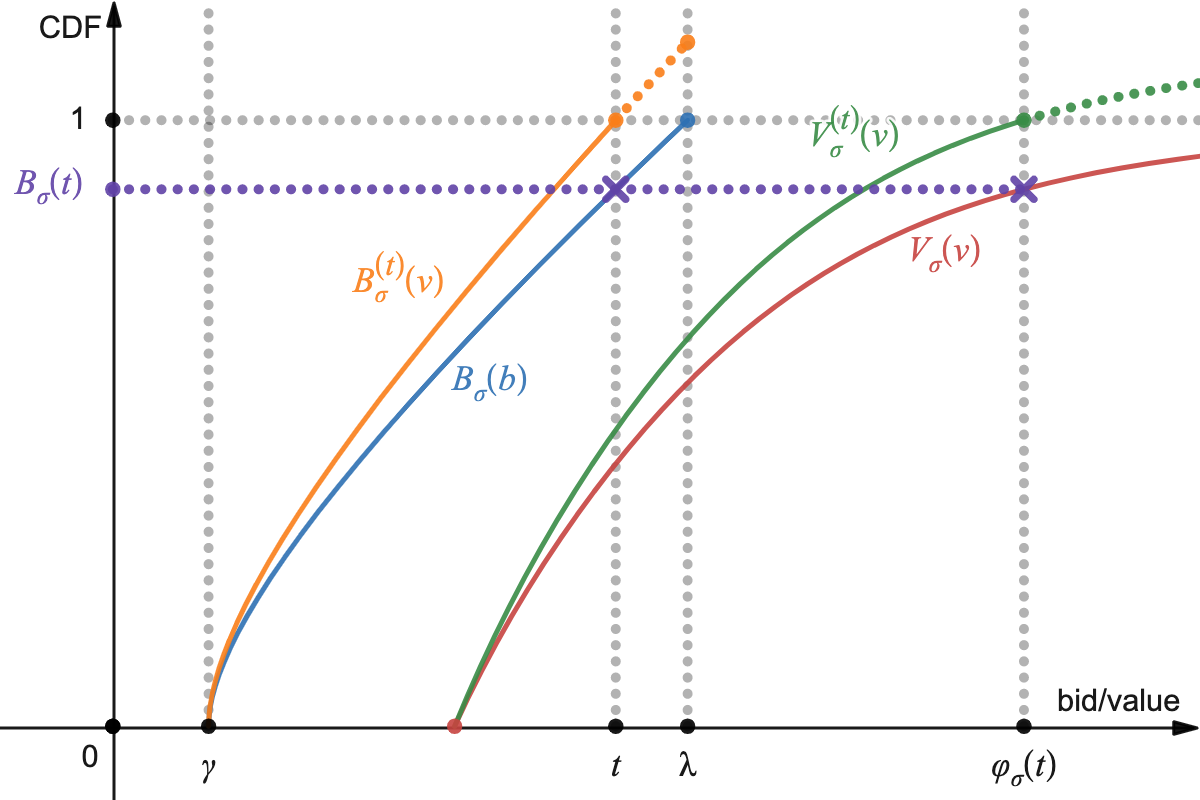}
    \caption{Diagram of the transformation from $(P,\, \bB)$ into $(P,\, \tilde{\bB})$, making a valid pseudo instance further satisfy {\bf finiteness} (\Cref{def:discretize}). For convenience, the demonstrated supremum bid is bounded $\lambda < +\infty$. But the transformation works for an arbitrary one $\lambda \leq +\infty$.}
    \label{fig:finiteness}
\end{figure}

Due to \Cref{lem:pseudo_welfare}, the input optimal {\SocialWelfare} $\OPT(\text{\em input}) = \gamma + \int_{\gamma}^{+\infty} (1 - \prod_{\sigma} V_{\sigma}(v)) \cdot \d v$.
This formula $< +\infty$ is Lebesgue integrable, hence $\int_{\gamma}^{+\infty} (1 - V_{\sigma}(v)) \cdot \d v < +\infty$.
Since the bid-to-value mappings are invariant $\varphi_{\sigma}^{(t)}(b) = \varphi_{\sigma}(b)$ over the {\em truncated} support $b \in [\gamma,\, t]$,
(\Cref{lem:value_dist}) the bid distributions $B_{\sigma}^{(t)}(b) \equiv \min(\frac{B_{\sigma}(b)}{B_{\sigma}(t)},\, 1)$ induce the value distributions $V_{\sigma}^{(t)}(v) \equiv \min(\frac{V_{\sigma}(v)}{B_{\sigma}(t)},\, 1)$.
Given continuity $\lim_{t \nearrow \lambda} B_{\sigma}(t) = 1$ (\Cref{def:pseudo}), we deduce that $\lim_{t \nearrow \lambda} \int_{\gamma}^{+\infty} \big|V_{\sigma}^{(t)}(v) - V_{\sigma}(v)\big| \cdot \d v = 0$.
Based on \Cref{lem:pseudo_welfare}, we can bound the {\em truncated} optimal {\SocialWelfare}:
\begin{align*}
    \OPT(P,\, \bB^{(t)} \otimes L^{(t)})
    & ~=~ \OPT(\text{\em input})
    - \int_{\gamma}^{+\infty} P(v) \cdot \Big(\prod_{\sigma} V_{\sigma}^{(t)}(v) - \prod_{\sigma} V_{\sigma}(v)\Big) \cdot \d v \\
    & ~=~ \OPT(\text{\em input})
    - \sum_{\sigma} \int_{\gamma}^{+\infty} \Big(P(v) \cdot \big(V_{\sigma}^{(t)}(v) - V_{\sigma}(v)\big) \cdot \\
    & \phantom{~=~ \OPT(\text{\em input})
    - \int_{\gamma}^{+\infty} \sum_{\sigma} \Big(} \prod_{k < \sigma} V_{k}^{(t)}(v) \cdot \prod_{k > \sigma} V_{k}(v)\Big) \cdot \d v \\
    & ~=~ \OPT(\text{\em input})
    \pm \sum_{\sigma} \int_{\gamma}^{+\infty} \big| V_{\sigma}^{(t)}(v) - V_{\sigma}(v) \big| \cdot \d v.
\end{align*}
Here the last step uses the relaxation $|P(v)|,\, |V_{\sigma}^{(t)}(v)|,\, |V_{\sigma}(v)| \leq 1$.

Now it is clear that $\lim_{t \nearrow \lambda} \OPT(P,\, \bB^{(t)} \otimes L^{(t)}) = \OPT(\text{\em input})$.
Accordingly, for some threshold $T_{\OPT} = T_{\OPT}(\epsilon) \in (\gamma,\, \lambda)$, any parameter $t \in [T_{\OPT},\, \lambda)$ yields a close enough optimal {\SocialWelfare} $\OPT(P,\, \bB^{(t)} \otimes L^{(t)}) = \OPT(\text{\em input}) \cdot e^{\pm \epsilon / 3}$.

% \red{Likewise, for some threshold $T_{\OPT} = T_{\OPT}(\epsilon) \in (\gamma,\, \lambda)$, every parameter $t \in [T_{\OPT},\, \lambda)$ results in a close enough optimal {\SocialWelfare} $\OPT(P,\, \bB^{(t)} \otimes L^{(t)}) = \OPT(P,\, \bB \otimes L) \cdot e^{\pm \epsilon / 3}$.}

Combining everything together, the parameter $T \eqdef \max(T_{\FPA},\, T_{\OPT})$ pseudo instance ensures a close enough bound $\PoA(P,\, \bB^{(T)} \otimes L^{(T)}) = \PoA(\text{\em input}) \cdot e^{\pm 2 \cdot (\epsilon / 3)} = \PoA(\text{\em input}) \pm \epsilon$ and satisfies \blackref{finiteness}.
Through the \blackref{alg:discretize} reduction, we can transform this {\em truncated} pseudo instance $(P,\, \bB^{(T)} \otimes L^{(T)})$ into a {\em discretized} pseudo instance $(P,\, \tilde{\bB} \otimes \tilde{L})$, as desired.
% that satisfies \blackref{finiteness} and \blackref{piecewise}.
In particular, the resulting bound $\PoA(P,\, \tilde{\bB} \otimes \tilde{L}) = \PoA(P,\, \bB^{(T)} \otimes L^{(T)}) \pm \epsilon = \PoA(\text{\em input}) \pm 2\epsilon$.
Reducing the error $\epsilon \in (0,\, 1)$ by a factor of $2$ finishes the proof of \Cref{lem:discretize}.
\end{proof}

\begin{comment}

\[
    \PoA(P,\, \tilde{\bB} \otimes \tilde{L})
    ~=~ \PoA(P,\, \bB^{(T)} \otimes L^{(T)}) \pm \epsilon
    ~=~ \PoA(\text{\em input}) \pm 2\epsilon.
\]

pseudo instance $(P,\, \bB^{(t)} \otimes L^{(t)})$ for $t \in [T_{\OPT},\, \lambda)$ yields a close enough auction {\SocialWelfare}

\red{When the supremum bid is unbounded $\lambda = +\infty$,
we can adjust our partition to $\lambda_{j} = \gamma + j / 2^{m}$ for each $j \in [0:\, 4^{m}]$, and then reuse the {\blackref{alg:discretize}} reduction.}
Lebesgue integral

\begin{align*}
    \FPA(P,\, \bB^{(t)} \otimes L^{(t)}) ~=~ \Ex_{p,\, \bb,\, \ell} \Big[\, \varphi_{\FPA}(p,\, \bb,\, \ell) \cdot \indicator((\bb,\, \ell) \leq t^{\otimes n + 1}) \,\Big] \cdot \frac{1}{\calB(t)}.
\end{align*}

\[
    \OPT(\tilde{P},\, \tilde{\bB} \otimes \tilde{L}) ~\geq~ \Ex_{p,\, \bb,\, \ell} \Big[\, \varphi_{\OPT}(p,\, \bb,\, \ell) \cdot \indicator((\bb,\, \ell) \leq t^{\otimes n + 1}) \,\Big].
\]

\[
    \OPT(\tilde{P},\, \tilde{\bB} \otimes \tilde{L}) ~\leq~ \Ex_{p,\, \bb,\, \ell} \Big[\, \varphi_{\OPT}(p,\, \bb,\, \ell) \,\Big].
\]

\end{comment}

\subsection{{\translate}: Vanishing the minimum winning bids}
\label{subsec:translate}

This subsection presents the {\blackref{alg:translate}} reduction (see \Cref{fig:alg:translate,fig:translate} for its description and a visual aid), which transforms a {\em discretized} pseudo instance (\Cref{def:discretize}) into a more special {\em translated} pseudo instance (\Cref{def:translate}).

\begin{definition}[Translated pseudo instances]
\label{def:translate}
A {\em discretized} pseudo instance $(P,\, \bB \otimes L)$ from \Cref{def:discretize} is further called {\em translated} when (\term[{\bf nil infimum bid}]{nil}) the infimum bid is nil $\gamma = 0$.
\end{definition}

\Cref{lem:translate} presents performance guarantees of the {\blackref{alg:translate}} reduction.

\begin{lemma}[{\translate}; \Cref{fig:alg:translate}]
\label{lem:translate}
Under reduction $(\tilde{P},\, \tilde{\bB} \otimes \tilde{L}) \gets \translate(P,\, \bB \otimes L)$:
\begin{enumerate}[font = {\em\bfseries}]
    \item\label{lem:translate:property}
    The output $(\tilde{P},\, \tilde{\bB} \otimes \tilde{L})$ is a translated pseudo instance.

    \item\label{lem:translate:poa}
    A (weakly) worse bound is yielded $\PoA(\tilde{P},\, \tilde{\bB} \otimes \tilde{L}) \leq \PoA(P,\, \bB \otimes L)$.
\end{enumerate}
\end{lemma}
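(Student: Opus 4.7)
The plan is to define the \translate{} reduction as a rigid coordinate shift by $\gamma$, applied jointly to the bid axis and the value axis. Formally, I would set $\tilde B_{\sigma}(b) \equiv B_{\sigma}(b + \gamma)$ for each $\sigma \in [n] \cup \{L\}$ and $\tilde P(v) \equiv P(v + \gamma)$, so that the new bid support is $[0,\, \tilde\lambda]$ with $\tilde\lambda \eqdef \lambda - \gamma$ and $\tilde\gamma = 0$. By direct substitution into \Cref{def:pseudo}, the resulting bid-to-value mappings are $\tilde\varphi_{\sigma}(b) = \varphi_{\sigma}(b + \gamma) - \gamma$. Visually, this is exactly the picture already promised in \Cref{fig:intro:discretize:new}: shift everything down-left by $\gamma$.

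For \textbf{\Cref{lem:translate:property}}, I would check the items of \Cref{def:discretize} one by one. \blackref{monotonicity} and \blackref{boundedness} are preserved because a rigid translation respects order and preserves the interval $\supp(\tilde P) = \supp(P) - \gamma \subseteq [0,\, \varphi_{1}(\gamma) - \gamma] = [0,\, \tilde\varphi_{1}(0)]$. \blackref{finiteness} is preserved since $\tilde\lambda \leq \max(\tilde\bvarphi(\tilde\blambda)) = \max(\bvarphi(\blambda)) - \gamma < +\infty$. \blackref{piecewise} is preserved because the partition $\bLambda$ pulls back to $\tilde\bLambda = \bLambda - \gamma$, under which the new mappings remain piecewise constant. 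Finally \blackref{nil} holds by construction.

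For \textbf{\Cref{lem:translate:poa}}, the key observation is that both welfares drop by exactly $\gamma$. Plugging the translated quantities into the formulas of \Cref{lem:pseudo_welfare} and changing the variable of integration $b \to b + \gamma$ (respectively $v \to v + \gamma$), I would obtain
\begin{align*}
\FPA(\tilde P,\, \tilde\bB \otimes \tilde L)
& ~=~ \FPA(P,\, \bB \otimes L) - \gamma, \\
\OPT(\tilde P,\, \tilde\bB \otimes \tilde L)
& ~=~ \OPT(P,\, \bB \otimes L) - \gamma.
\end{align*}
Indeed, in the {\FPA} formula the shift $\tilde\varphi_{\sigma}(b) = \varphi_{\sigma}(b+\gamma) - \gamma$ subtracts $\gamma$ from every winning value, while in the {\OPT} formula the ``$\gamma +$'' prefactor disappears and the integrand is unchanged after the substitution. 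Since $\FPA \leq \OPT$ and $\gamma \geq 0$, subtracting the same nonnegative constant from numerator and denominator of a ratio that lies in $[0,1]$ only decreases it, yielding $\PoA(\tilde P,\, \tilde\bB \otimes \tilde L) \leq \PoA(P,\, \bB \otimes L)$.

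The proof is essentially a change-of-variables bookkeeping, and I do not anticipate a real obstacle; the only thing to handle with a bit of care is the edge case $\OPT = \gamma$, which forces $\FPA = \gamma$ as well and makes the translated ratio $0/0$ degenerate. However, in this degenerate case the original instance contributes no meaningful {\PoA} constraint (both sides are trivial), so we may safely restrict attention to $\OPT > \gamma$, under which the ratio inequality is strict-friendly and the bound goes through cleanly.
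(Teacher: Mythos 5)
Your proposal is correct and follows essentially the same route as the paper's own proof: the same rigid shift $\tilde{B}_{\sigma}(b) \equiv B_{\sigma}(b+\gamma)$, $\tilde{P}(v) \equiv P(v+\gamma)$, the same verification that all defining conditions (including \blackref{nil}) are preserved under translation, and the same welfare bookkeeping showing both $\FPA$ and $\OPT$ drop by exactly $\gamma$, after which the ratio inequality follows because $\FPA \leq \OPT$ and both are at least $\gamma$. Your extra remark about the degenerate case $\OPT = \gamma$ is a harmless refinement the paper leaves implicit.
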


\afterpage{
\begin{figure}[t]
    \centering
    \begin{mdframed}
    Reduction $\term[\translate]{alg:translate}(P,\, \bB \otimes L)$

    \begin{flushleft}
    {\bf Input:} A (generic) {\em discretized} pseudo instance $(P,\, \bB \otimes L)$
    \hfill \Cref{def:discretize}
    
    \vspace{.05in}
    {\bf Output:} A {\em translated} pseudo instance $(\tilde{P},\, \tilde{\bB} \otimes \tilde{L})$.
    \hfill \Cref{def:translate}

    \begin{enumerate}
        \item Define $\tilde{P}(v) \equiv P(v + \gamma)$ and $\tilde{B}_{\sigma}(b) \equiv B_{\sigma}(b + \gamma)$ for each bidder $\sigma \in [n] \cup \{L\}$.

        \item {\bf Return} $(\tilde{P}, \tilde{\bB} \otimes \tilde{L})$.
    \end{enumerate}
    \end{flushleft}
    \end{mdframed}
    \caption{The {\translate} reduction.
    \label{fig:alg:translate}}
\end{figure}
\begin{figure}
    \centering
    \includegraphics[width = .65\textwidth]{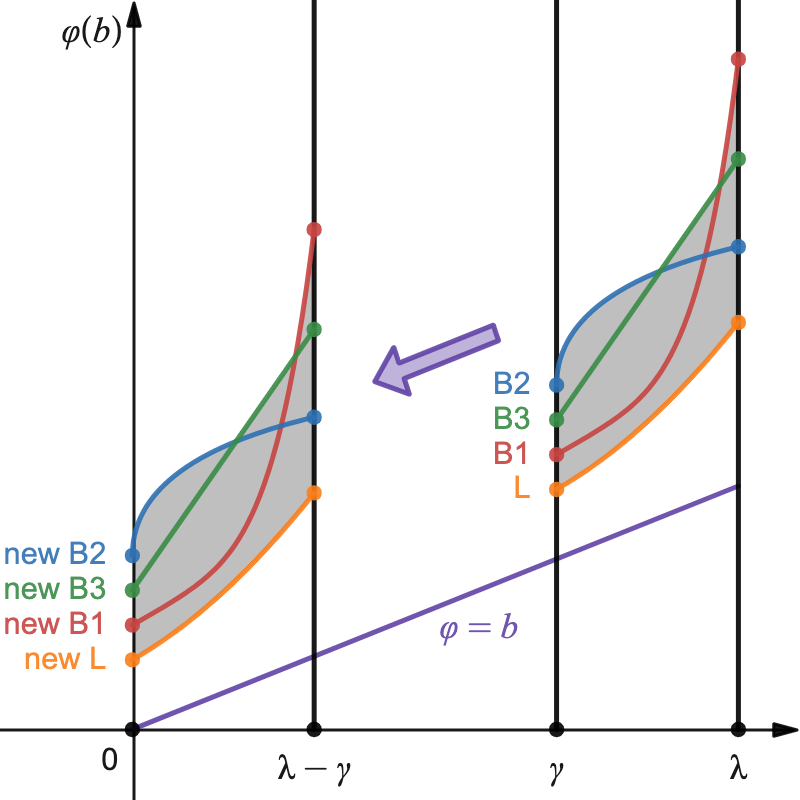}
    \caption{Diagram of the {\translate} reduction (\Cref{fig:alg:translate}), which indeed works for all VALID pseudo instances $(P,\, \bB \otimes L) \in \Bvalid$ (\Cref{def:pseudo}), not only the {\em discretized} ones (\Cref{def:discretize}).
    On the whole shifted bid support $b \in [0,\, \lambda - \gamma]$, each bidder $\sigma \in [n] \cup \{L\}$ has a shifted bid-to-value mapping $\tilde{\varphi}_{\sigma}(b) \equiv \varphi_{\sigma}(b + \gamma) - \gamma$.
    \label{fig:translate}}
\end{figure}
\clearpage}

\begin{proof}
See \Cref{fig:translate} for a visual aid.
Let us verify {\bf \Cref{lem:translate:property,lem:translate:poa}} one by one.

\vspace{.1in}
\noindent
{\bf \Cref{lem:translate:property}.}
The {\blackref{alg:translate}} reduction shifts the bid/value spaces each by a distance of $-\gamma$, inducing a nil infimum bid $\gamma - \gamma = 0$ (\blackref{nil}). Clearly, each shifted mapping $\tilde{\varphi}_{\sigma}(b) = \varphi_{\sigma}(b + \gamma) - \gamma$ for $\sigma \in [n] \cup \{L\}$ (akin to the input mapping $\varphi_{\sigma}$) is increasing on the shifted support $b \in [0,\, \lambda - \gamma]$ (\blackref{monotonicity}), and the shifted conditional value $\tilde{P} = P - \gamma$ (akin to the input one $\supp(P) \subseteq [\gamma,\, \varphi_{1}(\gamma)]$) ranges between $\supp(\tilde{P}) \subseteq [\gamma - \gamma,\, \varphi_{1}(\gamma) - \gamma] = [0,\, \tilde{\varphi}_{1}(\gamma)]$ (\blackref{boundedness}).
The shifted pseudo instance $(\tilde{P},\, \tilde{\bB} \otimes \tilde{L})$ must preserve (\Cref{def:discretize}) \blackref{piecewise} of the mappings $\tilde{\bvarphi} = \{\tilde{\varphi}_{\sigma}\}_{\sigma \in [n] \cup \{L\}}$ and \blackref{finiteness}, namely $\lambda - \gamma \leq \max(\tilde{\bvarphi}(\blambda - \bgamma)) = \max(\bvarphi(\blambda)) - \gamma < +\infty$. Thus, {\bf \Cref{lem:translate:property}} follows.

\vspace{.1in}
\noindent
{\bf \Cref{lem:translate:poa}.}
Because the bid/value spaces each are shifted by a distance of $-\gamma$, the auction/optimal {\SocialWelfares} (\Cref{lem:pseudo_welfare}) each drop by an amount of $\gamma$. We thus deduce that
\[
    \PoA(\tilde{P},\, \tilde{\bB} \otimes \tilde{L})
    % = \frac{\FPA(\tilde{P},\, \tilde{\bB} \otimes \tilde{L})\;}{\OPT(\tilde{P},\, \tilde{\bB} \otimes \tilde{L})}
    ~=~ \frac{\FPA(P,\, \bB \otimes L)\; - \gamma}{\OPT(P,\, \bB \otimes L) - \gamma}
    ~\leq~ \frac{\FPA(P,\, \bB \otimes L)\;}{\OPT(P,\, \bB \otimes L)}
    ~=~ \PoA(P,\, \bB \otimes L).
\]
Here the inequality holds, because the {\SocialWelfares} $\OPT(P,\, \bB \otimes L) \geq \FPA(P,\, \bB \otimes L)$ are at least the infimum bid $\gamma$ (cf.\ \Cref{lem:pseudo_welfare}). {\bf \Cref{lem:translate:poa}} and \Cref{lem:translate} follow then.
\end{proof}

\Cref{lem:translate_welfare} presents alternative {\SocialWelfare} formulas for {\em translated} pseudo instances, which supplement \Cref{lem:pseudo_welfare} and will be more convenient in several places.

\begin{lemma}[{\SocialWelfares}]
\label{lem:translate_welfare}
For a translated pseudo instance $(P,\, \bB \otimes L)$, the expected auction {\SocialWelfare} $\FPA(P,\, \bB \otimes L)$ is given by
\begin{align*}
    \FPA(P,\, \bB \otimes L)\;
    & ~=~ \Ex[P] \cdot \calB(0) ~+~ \int_{0}^{\lambda} \sum_{i \in [n]} \Big(\frac{\varphi_{i}(b)}{\varphi_{L}(b) - b} - \frac{\varphi_{i}(b) - \varphi_{L}(b)}{\varphi_{i}(b) - b}\Big) \cdot \calB(b) \cdot \d b~~~~~\;\; \\
    & \phantom{~=~ \Ex[P] \cdot \calB(0)} ~-~ \int_{0}^{\lambda} (n - 1) \cdot \frac{\varphi_{L}(b)}{\varphi_{L}(b) - b} \cdot \calB(b) \cdot \d b.
\end{align*}

Regarding the underlying partition $\bLambda = [0 = \lambda_{0},\, \lambda_{1}) \cup
[\lambda_{1},\, \lambda_{2}) \cup \dots \cup
[\lambda_{m},\, \lambda_{m + 1} = \lambda]$,
represent the piecewise constant bid-to-value mappings $\bvarphi = \{\varphi_{\sigma}\}_{\sigma \in [n] \cup \{L\}}$ as a bid-to-value table $\bPhi = \big[\phi_{\sigma,\, j}\big]$ for $(\sigma,\, j) \in ([n] \cup \{L\}) \times [0: m]$, i.e., $\varphi_{\sigma}(b) \equiv \phi_{\sigma,\, j}$ on each index-$j$ piece $b \in [\lambda_{j},\, \lambda_{j + 1})$.
Then the expected optimal {\SocialWelfare} $\OPT(P,\, \bB \otimes L)$ is given by
\begin{align*}
    \OPT(P,\, \bB \otimes L)
    ~=~ \int_{0}^{+\infty} \Big(1 - P(v) \cdot \prod_{(\sigma,\, j) \,\in\, \bPhi} \big(1 - \omega_{\sigma,\, j} \cdot \indicator(v < \phi_{\sigma,\, j})\big)\Big) \cdot \d v.
    \hspace{2.27cm}
\end{align*}
where the probabilities $\omega_{\sigma,\, j} \eqdef 1 - \frac{B_{\sigma}(\lambda_{j})}{B_{\sigma}(\lambda_{j + 1})}$ can be reconstructed from the partition-table tuple $(\bLambda,\, \bPhi)$ following \Cref{lem:pseudo_distribution}.
\end{lemma}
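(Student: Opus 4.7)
The plan is to derive both formulas by specializing Lemma~\ref{lem:pseudo_welfare} with $\gamma = 0$ (which holds by \blackref{nil}) and by rewriting the log-derivatives $B'_{\sigma}(b)/B_{\sigma}(b)$ in terms of the bid-to-value mappings $\bvarphi = \{\varphi_{\sigma}\}_{\sigma \in [n] \cup \{L\}}$ for the {\FPA} side, and by exploiting \blackref{piecewise} for the {\OPT} side.

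For the {\FPA} formula, I will first record the two elementary identities that follow straight from the definition of bid-to-value mappings (\Cref{def:pseudo}):
\[
    \frac{\calB'(b)}{\calB(b)} \;=\; \frac{1}{\varphi_{L}(b) - b},
    \qquad
    \frac{B'_{i}(b)}{B_{i}(b)} \;=\; \frac{1}{\varphi_{L}(b) - b} - \frac{1}{\varphi_{i}(b) - b} \quad (i \in [n]).
\]
The second equation comes from rearranging $\varphi_{i}(b) = b + \bigl(\calB'(b)/\calB(b) - B'_{i}(b)/B_{i}(b)\bigr)^{-1}$ and plugging in the first. Since the log-derivatives of all CDFs sum to $\calB'(b)/\calB(b)$, we then obtain
\[
    \frac{L'(b)}{L(b)} \;=\; \sum_{i \in [n]} \frac{1}{\varphi_{i}(b) - b} - (n - 1) \cdot \frac{1}{\varphi_{L}(b) - b}.
\]
Substituting these three identities into the summand $\varphi_{\sigma}(b)\cdot B'_{\sigma}(b)/B_{\sigma}(b)$ of \Cref{lem:pseudo_welfare} and combining like terms bidder-by-bidder (the $\varphi_{L}(b)\cdot L'(b)/L(b)$ term contributes the $(n-1)\varphi_{L}(b)/(\varphi_{L}(b)-b)$ piece with a minus sign; each real bidder contributes the expression inside the sum $\sum_{i\in [n]}$) yields the claimed formula after elementary algebra. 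This step is purely algebraic bookkeeping.

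For the {\OPT} formula, I start from $\OPT(P,\, \bB \otimes L) = \int_{0}^{+\infty}\bigl(1 - P(v) \cdot \calV(v)\bigr) \cdot \d v$ (\Cref{lem:pseudo_welfare} with $\gamma = 0$) and rewrite each factor of $\calV(v) = \prod_{\sigma}\Pr_{b_{\sigma}}[(b_{\sigma} \leq 0) \vee (\varphi_{\sigma}(b_{\sigma}) \leq v)]$ using the piecewise structure. Fix $\sigma$ and, using \blackref{value_monotonicity} of $\varphi_{\sigma}$, let $j_{v} \eqdef \max\{j \in [0:m]: \phi_{\sigma,j} \leq v\}$ (or $-1$ if the set is empty). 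Because $\varphi_{\sigma}$ is constant on each half-open piece and the bid distribution has no mass on $(0,\lambda]$, the event $(b_{\sigma} \leq 0) \vee (\varphi_{\sigma}(b_{\sigma}) \leq v)$ equals $\{b_{\sigma} \leq \lambda_{j_{v}+1}\}$ (up to a measure-zero boundary), whose probability is $B_{\sigma}(\lambda_{j_{v}+1})$. Telescoping gives
\[
    B_{\sigma}(\lambda_{j_{v}+1}) \;=\; \prod_{j = j_{v}+1}^{m}\frac{B_{\sigma}(\lambda_{j})}{B_{\sigma}(\lambda_{j+1})} \;=\; \prod_{j \in [0:m]}\bigl(1 - \omega_{\sigma,j}\cdot \indicator(v < \phi_{\sigma,j})\bigr),
\]
where the last equality holds since the $j \leq j_{v}$ factors equal $1$ and the $j > j_{v}$ factors equal $B_{\sigma}(\lambda_{j})/B_{\sigma}(\lambda_{j+1}) = 1 - \omega_{\sigma,j}$. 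Multiplying over $\sigma \in [n] \cup \{L\}$ yields the claimed expression for $\calV(v)$, and substitution into the {\OPT} integral finishes the proof.

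I do not anticipate any genuine obstacle: both derivations are mechanical once the right identities are in hand. The only minor care needed is handling the zero-measure endpoints $\{\lambda_{j}\}$ in the {\OPT} step (so that the telescoping argument goes through on the half-open pieces), and verifying that $\omega_{\sigma,j}$ is well-defined (which follows from \Cref{lem:bid_distribution:monotonicity} of \Cref{lem:bid_distribution} giving $B_{\sigma}(\lambda_{j+1}) > 0$ whenever the piece is non-degenerate).
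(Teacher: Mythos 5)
Your proposal is correct and follows essentially the same route as the paper: specialize \Cref{lem:pseudo_welfare} at $\gamma = 0$, rewrite the log-derivatives $B'_{\sigma}/B_{\sigma}$ and $L'/L$ via the bid-to-value mappings (exactly the identities underlying \Cref{lem:pseudo_distribution}) and rearrange for the {\FPA} formula, then use piecewise constancy plus \blackref{monotonicity} and $B_{\sigma}(\lambda_{m+1}) = 1$ for the {\OPT} formula. Your telescoping of $B_{\sigma}(\lambda_{j_v+1})$ into the product is just a rephrasing of the paper's sum-to-product step, so there is nothing substantively different.
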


% , each bid-to-value mapping $\varphi_{\sigma}(b)$ for $\sigma \in [n] \cup \{L\}$ is a {\em piecewise constant} function under this partition.
% \begin{align*}
%     & \frac{B'_{i}(b)}{B_{i}(b)}
%     ~=~ \frac{\d}{\d b}\big(\ln B_{i}(b)\big)
%     ~=~ \big(\varphi_{L}(b) - b\big)^{-1} - \big(\varphi_{i}(b) - b\big)^{-1},
%     \qquad\qquad \forall i \in [n], \\
%     & \frac{L'(b)}{L(b)}
%     ~=~ \frac{\d}{\d b}\big(\ln L(b)\big)
%     ~=~ \sum_{i \in [n]} \big(\varphi_{i}(b) - b\big)^{-1} - (n - 1) \cdot \big(\varphi_{L}(b) - b\big)^{-1}.
% \end{align*}

\begin{proof}
We deduce from \Cref{lem:pseudo_distribution} that $\frac{B'_{i}(b)}{B_{i}(b)}
= \frac{\d}{\d b}\big(\ln B_{i}(b)\big)
= \big(\varphi_{L}(b) - b\big)^{-1} - \big(\varphi_{i}(b) - b\big)^{-1}$ for $i \in [n]$ and
$\frac{L'(b)}{L(b)}
= \frac{\d}{\d b}\big(\ln L(b)\big)
= \sum_{i \in [n]} \big(\varphi_{i}(b) - b\big)^{-1} - (n - 1) \cdot \big(\varphi_{L}(b) - b\big)^{-1}$.
Thus we can rewrite the auction {\SocialWelfare} formula from \Cref{lem:pseudo_welfare} (with \blackref{nil} $\gamma = 0$) as follows:
\begin{align*}
    \FPA(P,\, \bB \otimes L)
    & ~=~ \Ex[P] \cdot \calB(0) ~+~ \sum_{\sigma \in [n] \cup \{L\}} \Big(\int_{0}^{\lambda} \varphi_{\sigma}(b) \cdot \frac{B'_{\sigma}(b)}{B_{\sigma}(b)} \cdot \calB(b) \cdot \d b\Big) \\
    & ~=~ \Ex[P] \cdot \calB(0) ~+~ \sum_{i \in [n]} \Big(\int_{0}^{\lambda} \Big(\frac{\varphi_{i}(b)}{\varphi_{L}(b) - b} - \frac{\varphi_{i}(b)}{\varphi_{i}(b) - b}\Big) \cdot \calB(b) \cdot \d b\Big) \\
    & \phantom{~=~ \Ex[P] \cdot \calB(0)} ~+~ \int_{0}^{\lambda} \Big(\sum_{i \in [n]} \frac{\varphi_{L}(b)}{\varphi_{i}(b) - b} - (n - 1) \cdot \frac{\varphi_{L}(b)}{\varphi_{L}(b) - b}\Big) \cdot \calB(b) \cdot \d b,
    % \\
    % & ~=~ \Ex[P] \cdot \calB(0) ~+~ \sum_{i \in [n]} \Big(\int_{0}^{\lambda} \Big(\frac{\varphi_{i}(b)}{\varphi_{L}(b) - b} - \frac{\varphi_{i}(b) - \varphi_{L}(b)}{\varphi_{i}(b) - b}\Big) \cdot \calB(b) \cdot \d b\Big) \\
    % & \phantom{~=~ \Ex[P] \cdot \calB(0)} ~-~ \int_{0}^{\lambda} (n - 1) \cdot \frac{\varphi_{L}(b)}{\varphi_{L}(b) - b} \cdot \calB(b) \cdot \d b.
\end{align*}
which after being rearranged gives the formula in the statement of \Cref{lem:translate_welfare}.

Regarding the partition $\bLambda = [0 \equiv \lambda_{0},\, \lambda_{1}) \cup
[\lambda_{1},\, \lambda_{2}) \cup \dots \cup
[\lambda_{m},\, \lambda_{m + 1} \equiv \lambda]$ and the bid-to-value table $\bPhi = \big[\phi_{\sigma,\, j}\big]$, namely $\varphi_{\sigma}(b) \equiv \phi_{\sigma,\, j}$ on each piece $b \in [\lambda_{j},\, \lambda_{j + 1})$, for $(\sigma,\, j) \in ([n] \cup \{L\}) \times [0: m]$,
we can formulate the conditional probabilities $\omega_{\sigma,\, j} = 1 - \frac{B_{\sigma}(\lambda_{j})}{B_{\sigma}(\lambda_{j + 1})}$ through \Cref{lem:pseudo_distribution} as follows:
\begin{align*}
    & \omega_{i,\, j}
    % = 1 - \frac{B_{i}(\lambda_{j})}{B_{i}(\lambda_{j + 1})}
    ~=~ 1 - \exp\Big(-\int_{\lambda_{j}}^{\lambda_{j + 1}}
    \Big(\big(\phi_{L,\, j} - b\big)^{-1} - \big(\phi_{i,\, j} - b\big)^{-1}\Big) \cdot \d b\Big),
    \qquad \forall i \in [n]. \\
    & \omega_{L,\, j}
    % = 1 - \frac{L(\lambda_{j})}{L(\lambda_{j + 1})}
    ~=~ 1 - \exp\Big(-\int_{\lambda_{j}}^{\lambda_{j + 1}}
    \Big(\sum_{i \in [n]} \big(\phi_{i,\, j} - b\big)^{-1} - (n - 1) \cdot \big(\phi_{L,\, j} - b\big)^{-1}\Big) \cdot \d b\Big). \hspace{1.92cm}
\end{align*}

Following \Cref{lem:pseudo_welfare} (with \blackref{nil} $\gamma = \lambda_{0} \equiv 0$), the optimal {\SocialWelfare} is given by
$\OPT(P,\, \bB \otimes L) = \int_{0}^{+\infty} \big(1 - P(v) \cdot \prod_{\sigma \in [n] \cup \{L\}} \Prx_{b_{\sigma} \sim B_{\sigma}}\big[\, (b_{\sigma} \leq \lambda_{0}) \vee (\varphi_{\sigma}(b_{\sigma}) \leq v) \,\big]\big) \cdot \d v$.
Because we are considering piecewise constant mappings, namely $\varphi_{\sigma}(b) \equiv \phi_{\sigma,\, j}$ on each piece $b \in [\lambda_{j},\, \lambda_{j + 1})$, we can deduce that for $\sigma \in [n] \cup \{L\}$ and any nonnegative value $v \geq 0$,
\begin{align*}
    \Prx_{b_{\sigma} \sim B_{\sigma}}\big[\, (b_{\sigma} \leq \lambda_{0}) \vee (\varphi_{\sigma}(b_{\sigma}) \leq v) \,\big]
    & ~=~ B_{\sigma}(\lambda_{0}) + \sum_{j \in [0: m]} (B_{\sigma}(\lambda_{j + 1}) - B_{\sigma}(\lambda_{j})) \cdot \indicator(\phi_{\sigma,\, j} \leq v)~\, \\
    & ~=~ 1 - \sum_{j \in [0: m]} (B_{\sigma}(\lambda_{j + 1}) - B_{\sigma}(\lambda_{j})) \cdot \indicator(v < \phi_{\sigma,\, j}) \\
    & ~=~ \prod_{j \in [0: m]} \Big(1 - \big(1 - \tfrac{B_{\sigma}(\lambda_{j})}{B_{\sigma}(\lambda_{j + 1})}\big) \cdot \indicator(v < \phi_{\sigma,\, j})\Big)
\end{align*}
Here the first/third steps holds because (\blackref{monotonicity}) the mappings $\varphi_{\sigma}(b)$ are increasing over the bid support $b \in [0,\, \lambda]$, namely $\phi_{\sigma,\, 0} \leq \dots \leq \phi_{\sigma,\, j} \leq \dots \leq \phi_{\sigma,\, m}$. And the second step uses the boundary conditions $B_{\sigma}(\lambda_{m + 1}) = 1$ at the supremum bid $\lambda_{m + 1} \equiv \lambda$.

Applying the above identities (together with $\omega_{\sigma,\, j} = 1 - \frac{B_{\sigma}(\lambda_{j})}{B_{\sigma}(\lambda_{j + 1})}$) to the optimal {\SocialWelfare} formula gives the alternative formula claim in \Cref{lem:translate_welfare}.
This finishes the proof.
\end{proof}

\subsection{{\layer}: Rearranging the bid-to-value mappings}
\label{subsec:layer}

This subsection shows the {\blackref{alg:layer}} reduction (see \Cref{fig:alg:layer,fig:layer} for its description and for a visual aid), which transforms a {\em translated} pseudo instance (\Cref{def:translate}) into a more special {\em layered} pseudo instance (\Cref{def:layer}). Recall \Cref{lem:pseudo_mapping} that over the bid support $b \in [0,\, \lambda]$, the pseudo mapping is dominated $\varphi_{L}(b) \leq \varphi_{i}(b)$ by the real mappings $i \in [n]$.

\begin{definition}[Layered pseudo instances]
\label{def:layer}
A {\em translated} pseudo instance $(P,\, \bB \otimes L)$ from \Cref{def:translate} is further called {\em layered} when (\term[{\bf layeredness}]{layeredness}) the {\em real} bid-to-value mappings $\{\varphi_{i}\}_{i \in [n]}$ are ordered $\varphi_{1}(b) \geq \dots \geq \varphi_{i}(b) \geq \dots \geq \varphi_{n}(b) \geq \varphi_{L}(b)$ over the bid support $b \in [0,\, \lambda]$.
\end{definition}

\Cref{lem:layer} shows performance guarantees of the {\blackref{alg:layer}} reduction.

\begin{lemma}[{\layer}; \Cref{fig:alg:layer}]
\label{lem:layer}
Under reduction $(P,\, \tilde{\bB} \otimes \tilde{L}) \gets \layer(P,\, \bB \otimes L)$:
\begin{enumerate}[font = {\em\bfseries}]
    \item\label{lem:layer:property}
    The output $(P,\, \tilde{\bB} \otimes \tilde{L})$ is a layered pseudo instance; the conditional value $P$ is unmodified.

    \item\label{lem:layer:poa}
    The bound keeps the same $\PoA(P,\, \tilde{\bB} \otimes \tilde{L}) = \PoA(P,\, \bB \otimes L)$.
\end{enumerate}
\end{lemma}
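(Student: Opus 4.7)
The plan is to construct $\tilde{\bB} \otimes \tilde{L}$ by rearranging, within each piece of the underlying partition $\bLambda$, the bid-to-value values of the real bidders. Representing the input as the bid-to-value table $\bPhi = [\phi_{\sigma,\, j}]$ from \Cref{lem:translate_welfare}, on each index-$j$ piece $b \in [\lambda_{j},\, \lambda_{j+1})$ I sort the real entries $\{\phi_{i,\, j}\}_{i \in [n]}$ in weakly decreasing order to obtain $\tilde{\phi}_{1,\, j} \geq \tilde{\phi}_{2,\, j} \geq \dots \geq \tilde{\phi}_{n,\, j}$, and keep the pseudo entry $\tilde{\phi}_{L,\, j} = \phi_{L,\, j}$ unchanged. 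I then define the new piecewise constant mappings $\tilde{\varphi}_{\sigma}(b)$ from the rearranged table and reconstruct $\tilde{\bB} \otimes \tilde{L}$ from $\tilde{\bvarphi}$ via \Cref{lem:pseudo_distribution}.

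For \textbf{\Cref{lem:layer:property}}, I first observe that every order statistic of finitely many weakly increasing functions is weakly increasing (a short super-level-set argument), so each $\tilde{\varphi}_{i}$ is weakly increasing and \blackref{layeredness} holds by construction. The conditions in \Cref{lem:pseudo_mapping} are symmetric in $\{\varphi_{i}\}_{i \in [n]}$ with $\varphi_{L}$ held fixed, hence they transfer to $\{\tilde{\varphi}_{i}\}_{i \in [n]}$, and \Cref{lem:pseudo_distribution} yields a well-defined pseudo instance. The conditions \blackref{finiteness}, \blackref{piecewise}, and \blackref{nil} are inherited from the input, and since $\tilde{\varphi}_{1}(0) = \max_{i \in [n]} \varphi_{i}(0) \geq \varphi_{1}(0)$ we have $\supp(P) \subseteq [0,\, \varphi_{1}(0)] \subseteq [0,\, \tilde{\varphi}_{1}(0)]$, preserving \blackref{boundedness} with $P$ unmodified.

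The key observation for \textbf{\Cref{lem:layer:poa}} is that both $\calB(b) = \prod_{\sigma} B_{\sigma}(b)$ and $L(b)$ are invariant under the reshuffle: via \Cref{lem:pseudo_distribution}, the log-derivatives $\sum_{i \in [n]} B'_{i}/B_{i}$ and $L'/L$ depend only symmetrically on $\{\varphi_{i}(b)\}_{i \in [n]}$ (with $\varphi_{L}$ invariant). Invariance of $\FPA$ then follows directly from \Cref{lem:translate_welfare}: each integrand term is either of the form $\sum_{i \in [n]} f(\varphi_{i}(b))$ where $f$ involves only $\varphi_{L}$ and $b$ (a symmetric function of $\{\varphi_{i}(b)\}_{i \in [n]}$), or depends only on the invariants $P$, $\varphi_{L}$, $\calB$, and $n$.

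The main obstacle is the invariance of $\OPT$, whose formula in \Cref{lem:translate_welfare} couples each value $\phi_{i,\, j}$ with its own conditional weight $\omega_{i,\, j}$, and $\omega_{i,\, j}$ in turn depends on $\phi_{i,\, j}$ through \Cref{lem:pseudo_distribution}. The plan is to exploit the explicit formula $\omega_{i,\, j} = 1 - \exp(-\int_{\lambda_{j}}^{\lambda_{j+1}} [(\varphi_{L}(b) - b)^{-1} - (\phi_{i,\, j} - b)^{-1}] \cdot \d b)$, which shows that $\omega_{i,\, j}$ is a function of $\phi_{i,\, j}$ alone (with the invariant $\varphi_{L}$ and fixed endpoints $\lambda_{j},\, \lambda_{j+1}$). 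Consequently, the multiset $\{(\phi_{i,\, j},\, \omega_{i,\, j})\}_{i \in [n]}$ on each piece $j$ is preserved under the reshuffle, and the factor $\prod_{i \in [n]}(1 - \omega_{i,\, j} \cdot \indicator(v < \phi_{i,\, j}))$ is a symmetric function of that multiset, hence invariant. Combined with invariance of the pseudo factor $(1 - \omega_{L,\, j} \cdot \indicator(v < \phi_{L,\, j}))$ and of $P$, this yields $\OPT(P,\, \tilde{\bB} \otimes \tilde{L}) = \OPT(P,\, \bB \otimes L)$ and therefore $\PoA(P,\, \tilde{\bB} \otimes \tilde{L}) = \PoA(P,\, \bB \otimes L)$.
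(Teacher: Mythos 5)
Your proposal is correct and takes essentially the same route as the paper: pointwise (equivalently, column-wise) sorting of the real bid-to-value mappings with the pseudo mapping held fixed, reconstruction of $\tilde{\bB} \otimes \tilde{L}$ via \Cref{lem:pseudo_distribution}, validity by symmetry of the \Cref{lem:pseudo_mapping} conditions, and invariance of $\FPA$ and $\OPT$ by symmetry of the formulas in \Cref{lem:translate_welfare} together with invariance of $\varphi_{L}$, $\calB$, and $P$. Your explicit observation that each $\omega_{i,\, j}$ is a function of $\phi_{i,\, j}$ alone (given the invariant pseudo mapping and partition endpoints), so the multiset of pairs $(\phi_{i,\, j},\, \omega_{i,\, j})$ is preserved, merely spells out the paper's terser claim that the values and probabilities keep the same magnitudes under the column-wise reordering.
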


\afterpage{
\begin{figure}[t]
    \centering
    \begin{mdframed}
    Reduction $\term[\layer]{alg:layer}(P,\, \bB \otimes L)$

    \begin{flushleft}
    {\bf Input:}
    A (generic) {\em translated} pseudo instance $(P,\, \bB \otimes L)$.
    \hfill \Cref{def:translate}
    
    \vspace{.05in}
    {\bf Output:}
    A {\em layered} pseudo instance $(P,\, \tilde{\bB} \otimes \tilde{L})$.
    \hfill \Cref{def:layer}

    \begin{enumerate}
        \item\label{alg:layer:real_mapping}
        Define the real mappings $\tilde{\varphi}_{i} \equiv \varphi_{(i)}$ as the pointwise {\em ordering} of input real mappings $\{\varphi_{i}\}_{i \in [n]}$, namely $\varphi_{(1)}(b) \geq \dots \geq \varphi_{(i)}(b) \geq \dots \geq \varphi_{(n)}(b)$, for $b \in [0,\, \lambda]$.

        \item\label{alg:layer:pseudo_mapping}
        Reuse the same pseudo mapping $\tilde{\varphi}_{L} \equiv \varphi_{L}$. (Given Line~\ref{alg:layer:real_mapping} and \Cref{lem:pseudo_mapping}, this remains the {\em dominated} mapping $\min(\tilde{\bvarphi}) \equiv \min(\tilde{\bvarphi}_{-L},\, \tilde{\varphi}_{L}) \equiv \min(\bvarphi_{-L},\, \varphi_{L}) \equiv \varphi_{L} \equiv \tilde{\varphi}_{L}$.)

        \item\label{alg:layer:distribution}
        {\bf Return} $(P,\, \tilde{\bB} \otimes \tilde{L})$, namely only the bid distributions $\tilde{\bB} \otimes \tilde{L} = \{\tilde{B}_{\sigma}\}_{\sigma \in [n] \cup \{L\}}$ are modified and (\Cref{lem:pseudo_distribution}) are reconstructed from the {\em layered} mappings $\tilde{\bvarphi}$.
    \end{enumerate}
    \end{flushleft}
    \end{mdframed}
    \caption{The {\layer} reduction.}
    \label{fig:alg:layer}
\end{figure}
\begin{figure}
    \centering
    \subfloat[\label{fig:layer:old}
    The input {\em increasing} mappings $\bvarphi = \{\varphi_{\sigma}\}_{\sigma \in [n] \cup \{L\}}$.]{
    \includegraphics[width = .49\textwidth]
    {layer_input.png}}
    \hfill
    \subfloat[\label{fig:layer:new}
    The output {\em layered} mappings $\tilde{\bvarphi} = \{\tilde{\varphi}_{\sigma}\}_{\sigma \in [n] \cup \{L\}}$.]{
    \includegraphics[width = .49\textwidth]
    {layer_output.png}}
    \caption{Diagram of the {\layer} reduction (\Cref{fig:alg:layer}), which indeed works for all VALID pseudo instances $(P,\, \bB \otimes L) \in \Bvalid$ (\Cref{def:pseudo}), not only the {\em translated} ones (\Cref{def:translate}).
    \label{fig:layer}}
\end{figure}
\clearpage}

\begin{proof}
Let us verify {\bf \Cref{lem:layer:property,lem:layer:poa}} one by one; see \Cref{fig:layer} for a visual aid.

\vspace{.1in}
\noindent
{\bf \Cref{lem:layer:property}.}
The {\blackref{alg:layer}} reduction (Line~\ref{alg:layer:real_mapping}) pointwise/piecewise reorders $\{\tilde{\varphi}_{i} \equiv \varphi_{(i)}\}_{i \in [n]}$ all of the {\em real} mappings $\{\varphi_{i}\}_{i \in [n]}$ over the bid support $b \in [0,\, \lambda]$, namely $\varphi_{(1)}(b) \geq \dots \geq \varphi_{(i)}(b) \geq \dots \geq \varphi_{(n)}(b)$,
and (Line~\ref{alg:layer:pseudo_mapping}) preserves the {\em pseudo} mapping $\tilde{\varphi}_{L} \equiv \varphi_{L}$, which keeps being the {\em dominated} mapping $\min(\tilde{\bvarphi}) \equiv \min(\tilde{\bvarphi}_{-L},\, \tilde{\varphi}_{L}) \equiv \min(\bvarphi_{-L},\, \varphi_{L}) \equiv \varphi_{L} \equiv \tilde{\varphi}_{L}$ provided Line~\ref{alg:layer:real_mapping} and \Cref{lem:pseudo_mapping}.

Clearly, such reordered mappings $\tilde{\bvarphi} = \{\tilde{\varphi}_{\sigma}\}_{\sigma \in [n] \cup \{L\}}$ satisfy \blackref{layeredness}, \blackref{nil}, \blackref{finiteness}, and \blackref{piecewise} (\Cref{def:layer,def:translate,def:discretize})
and are increasing over the bid support $b \in [0,\, \lambda]$ (\blackref{monotonicity}; \Cref{def:pseudo})
\`{a} la the input mappings $\bvarphi$, which can be easily seen with the help of \Cref{fig:layer}.
The extended range $[0,\, \tilde{\varphi}_{1}(0)] = [0,\, \varphi_{(1)}(0)] = [0,\, \max(\bvarphi(\zeros))] \supseteq [0, \varphi_{1}(0)] \supseteq \supp(P)$ must restrict the unmodified conditional value $P$ (\blackref{boundedness}).

It remains to show that (Line~\ref{alg:layer:distribution}) the reconstructed bid distributions $\tilde{\bB} \otimes \tilde{L}$ are well defined, i.e,
(\Cref{lem:pseudo_distribution}) the reordered mappings $\tilde{\bvarphi} = \{\tilde{\varphi}_{\sigma}\}_{\sigma \in [n] \cup \{L\}}$ satisfy the conditions in \Cref{lem:pseudo_mapping}.
This is obvious in that (i)~those conditions are symmetric about the real mappings $\{\tilde{\varphi}_{i} \equiv \varphi_{(i)}\}_{i \in [n]}$ and (ii)~the pseudo mapping is unmodified $\tilde{\varphi}_{L} \equiv \varphi_{L}$.
This finishes the proof of {\bf \Cref{lem:layer:property}}.

\vspace{.1in}
\noindent
{\bf \Cref{lem:layer:poa}.}
Consider the auction/optimal {\SocialWelfare} formulas from \Cref{lem:translate_welfare}:
\begin{align*}
    & \FPA(P,\, \bB \otimes L)\;
    ~=~ \Ex[P] \cdot \calB(0) ~+~ \int_{0}^{\lambda} \sum_{i \in [n]} \Big(\frac{\varphi_{i}(b)}{\varphi_{L}(b) - b} - \frac{\varphi_{i}(b) - \varphi_{L}(b)}{\varphi_{i}(b) - b}\Big) \cdot \calB(b) \cdot \d b \\
    & \phantom{\FPA(P,\, \bB \otimes L)\; ~=~ \Ex[P] \cdot \calB(0)} ~-~ \int_{0}^{\lambda} (n - 1) \cdot \frac{\varphi_{L}(b)}{\varphi_{L}(b) - b} \cdot \calB(b) \cdot \d b, \\
    & \OPT(P,\, \bB \otimes L)
    ~=~ \int_{0}^{+\infty} \Big(1 - P(v) \cdot \prod_{(\sigma,\, j) \,\in\, \bPhi} \big(1 - \omega_{\sigma,\, j} \cdot \indicator(v < \phi_{\sigma,\, j})\big)\Big) \cdot \d v.
\end{align*}
The probabilities $\omega_{\sigma,\, j}$ are given by the partition $\bLambda = [0 = \lambda_{0},\, \lambda_{1}) \cup
[\lambda_{1},\, \lambda_{2}) \cup \dots \cup
[\lambda_{m},\, \lambda_{m + 1} = \lambda]$ and the bid-to-value table $\bPhi = \big[\phi_{\sigma,\, j}\big]$ for $(\sigma,\, j) \in ([n] \cup \{L\}) \times [0: m]$.

The auction {\SocialWelfare} keeps the same $\FPA(P,\, \tilde{\bB} \otimes \tilde{L}) = \FPA(P,\, \bB \otimes L)$. Concretely, (i)~the pseudo mapping $\varphi_{L}$, the first-order bid distribution (\Cref{lem:pseudo_distribution}) $\calB(b) = \exp\big(-\int_{b}^{\lambda} (\varphi_{L}(b) - b)^{-1} \cdot \d b\big)$, and the conditional value $P$ are invariant, while (ii)~the auction {\SocialWelfare} formula is {\em symmetric} about the reordered {\em real} mappings $\{\tilde{\varphi}_{i} \equiv \varphi_{(i)}\}_{i \in [n]}$.

The optimal {\SocialWelfare} again keeps the same $\OPT(P,\, \tilde{\bB} \otimes \tilde{L}) = \OPT(P,\, \bB \otimes L)$, following the same arguments (see the proof of \Cref{lem:translate_welfare} for more details). Basically, (i)~the {\em magnitude} of the values $\phi_{\sigma,\, j}$ and the probabilities $\omega_{\sigma,\, j}$ for $(\sigma,\, j) \in ([n] \cup \{L\}) \times [0: m]$ keeps the same, just {\em column-wise} ($j \in [0: m]$) reordering the {\em real} rows such that $\tilde{\phi}_{1,\, j} \geq \dots \tilde{\phi}_{i,\, j} \geq \dots \geq \tilde{\phi}_{n,\, j}$, while
(ii)~the optimal {\SocialWelfare} formula considers the {\em overall} effect of all entries $(\sigma,\, j) \in ([n] \cup \{L\}) \times [0: m]$ and is irrelevant to the reordering.

To conclude, the {\PoA}-bound keeps the same and {\bf \Cref{lem:layer:poa}} follows. This finishes the proof.
\end{proof}

% Obviously, the reordered {\em real} mappings $\{\tilde{\varphi}_{i} \equiv \varphi_{(i)}\}_{i \in [n]}$ together with the the unmodified pseudo mapping $\tilde{\varphi}_{L}(b) \equiv \varphi_{L}(b) = b + \calB(b) \big/ \calB'(b)$ and the unmodified first-order bid distribution $\tilde{\calB}(b) \equiv \calB(b)$

%\blue{({\bf Yaonan:} to be proved by using \Cref{lem:translate_welfare})
% \blue{This is implied directly by the expressions for $\OPT$ and $\FPA$ in \Cref{lem:translate_welfare}. We note that the {\em pseudo} mapping $\varphi_{L}$ and the first order bid distribution $\calB$ remains unchanged after the Layer reduction. For the real bidders,  these expressions for $\OPT$ and $\FPA$ in \Cref{lem:translate_welfare} are basically a summation over all the pieces without connecting different pieces of the same bidder. These pieces remains unchanged after the layer reduction as a result the expressions for the layered instance is simply a reshuffle of of the original one and the summation keeps the same.} Thus, {\bf \Cref{lem:layer:poa}} and \Cref{lem:layer} follow then.

\subsection{{\polarize}: Derandomizing the conditional values under zero bids}
\label{subsec:polarize}

This subsection shows the {\blackref{alg:polarize}} reduction (see \Cref{fig:alg:polarize} for its description), which transforms a {\em layered} pseudo instance (\Cref{def:layer}) into a more special pseudo instance (\Cref{def:ceiling_floor}): EITHER a {\em floor} pseudo instance OR a {\em ceiling} pseudo instance. Recall that (\blackref{nil} and \blackref{boundedness}) the monopolist $B_{1}$'s conditional value $P$ ranges between $\supp(P) \subseteq [0,\, \varphi_{1}(0)]$.

\begin{definition}[Floor/Ceiling pseudo instances]
\label{def:ceiling_floor}
A {\em layer} pseudo instance $(P,\, \bB \otimes L)$ from \Cref{def:layer} is further called {\em floor}/{\em ceiling} when it satisfies either \blackref{floorness} or \blackref{ceilingness}. Namely, in either case, the conditional value $P$ is determined by the bid distributions $\bB \otimes L$.
\begin{itemize}
    \item \term[{\bf floorness}]{floorness}{\bf :}
    The monopolist $B_{1}$'s conditional value always takes the {\em nil value} $P \equiv 0$. \\
    Therefore, this pseudo instance can be redenoted as $H^{\downarrow} \otimes \bB_{-1} \otimes L = (P^{\downarrow},\, \bB \otimes L)$, with the monopolist $H^{\downarrow} = (P^{\downarrow} \equiv 0,\, B_{1})$. Denote by $\Bvalid^{\downarrow}$ the space of such pseudo instances.

    \item \term[{\bf ceilingness}]{ceilingness}{\bf :}
    The monopolist $B_{1}$'s conditional value always takes the {\em ceiling value} $P \equiv \varphi_{1}(0)$. \\
    Therefore, this pseudo instance can be redenoted as $H^{\uparrow} \otimes \bB_{-1} \otimes L = (P^{\uparrow},\, \bB \otimes L)$, with the monopolist $H^{\uparrow} = (P^{\uparrow} \equiv \varphi_{1}(0),\, B_{1})$. Denote by $\Bvalid^{\uparrow}$ the space of such pseudo instances.
\end{itemize}
\end{definition}

\Cref{lem:polarize} presents performance guarantees of the {\blackref{alg:polarize}} reduction.

\begin{lemma}[{\polarize}; \Cref{fig:alg:polarize}]
\label{lem:polarize}
Under reduction $H \otimes \bB_{-1} \otimes L \gets \polarize(P,\, \bB \otimes L)$:
\begin{enumerate}[font = {\em\bfseries}]
    \item\label{lem:polarize:property}
    The output $H \otimes \bB_{-1} \otimes L$ is EITHER a floor pseudo instance OR a ceiling pseudo instance; the bid distributions $\bB \otimes L$ are unmodified.

    \item\label{lem:polarize:poa}
    A (weakly) worse bound is yielded $\PoA(H \otimes \bB_{-1} \otimes L) \leq \PoA(P,\, \bB \otimes L)$.
\end{enumerate}
\end{lemma}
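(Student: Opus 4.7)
The plan is to exploit a dual linearity in $P$ and then prove a unimodality property on $[0, \varphi_1(0)]$. First, \Cref{lem:pseudo_welfare} writes $\FPA(P, \bB \otimes L) = \Ex[P] \cdot \calB(0) + A$, where $A$ is independent of $P$. Second, for a Dirac $P = \delta_v$ with $v \in [0, \varphi_1(0)]$, a direct computation $\int_0^\infty (1 - \indicator(u \geq v) \cdot \calV(u))\, \d u = v + \int_v^\infty (1 - \calV(u))\, \d u$ yields $\OPT(\delta_v, \bB \otimes L) = G^\downarrow + H(v)$, where $G^\downarrow := \OPT(P^\downarrow, \bB \otimes L)$ and $H(v) := \int_0^v \calV(u)\, \d u$. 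Integrating against the representation $P = \int \delta_v\, \d P(v)$ yields $\OPT(P, \bB \otimes L) = G^\downarrow + \Ex_{X \sim P}[H(X)]$, so $\OPT$ is affine in $P$ as well.

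Since both $\FPA$ and $\OPT$ are affine functionals of $P$, one can write $\FPA(P, \bB \otimes L)/\OPT(P, \bB \otimes L) = \int f(v)\,\d P(v) \big/ \int g(v)\,\d P(v)$ with $f(v) := F^\downarrow + v\calB(0)$, $g(v) := G^\downarrow + H(v)$, and $F^\downarrow := \FPA(P^\downarrow, \bB \otimes L)$; the pointwise inequality $f(v) \geq (\min_u f(u)/g(u)) \cdot g(v)$ then integrates to $\FPA(P)/\OPT(P) \geq \min_{v \in [0, \varphi_1(0)]} h(v)$, where $h(v) := f(v)/g(v)$. The remaining task is to show this minimum is attained at $v = 0$ or $v = \varphi_1(0)$; assuming this, $r^\downarrow := h(0) = F^\downarrow/G^\downarrow$ and $r^\uparrow := h(\varphi_1(0)) = F^\uparrow/G^\uparrow$ between them bound $\FPA(P)/\OPT(P)$ from below, so the \polarize{} reduction simply returns whichever of the floor $H^\downarrow \otimes \bB_{-1} \otimes L$ or the ceiling $H^\uparrow \otimes \bB_{-1} \otimes L$ produces the smaller ratio.

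For the unimodality of $h$, I will differentiate to see that $h'(v)$ has the same sign as $N(v) := \calB(0) \cdot g(v) - f(v) \cdot \calV(v)$; differentiating $N$ and cancelling the two $\calB(0)\calV(v)$ terms leaves $N'(v) = -f(v) \cdot \calV'(v) \leq 0$, since $\calV(v) = \prod_\sigma \Prx_{b_\sigma \sim B_\sigma}[(b_\sigma \leq 0) \vee (\varphi_\sigma(b_\sigma) \leq v)]$ is a product of CDF's in $v$ and hence non-decreasing. Thus $N$ is non-increasing and changes sign at most once from $\geq 0$ to $\leq 0$, so $h$ is first non-decreasing and then non-increasing and its minimum on $[0, \varphi_1(0)]$ is attained at one of the two endpoints.

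Finally, \textbf{\Cref{lem:polarize:property}} is immediate: the bid distributions $\bB \otimes L$ are untouched, so \blackref{monotonicity}, \blackref{piecewise}, \blackref{finiteness}, \blackref{nil}, and \blackref{layeredness} all transfer verbatim from the input, and the degenerate $P^\downarrow \equiv 0$ or $P^\uparrow \equiv \varphi_1(0)$ trivially satisfies \blackref{boundedness}. The main subtlety I foresee is when $\calV$ is piecewise constant with jumps, as it will be after the \discretize{} reduction; then $\calV'(v)$ must be interpreted in the distributional/Lebesgue-differentiation sense, but since $\calV$ remains monotonically non-decreasing $N$ is still a non-increasing right-continuous function and the unimodality argument goes through unchanged, exactly in the spirit of \Cref{rem:bid_distribution}.
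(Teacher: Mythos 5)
Your proposal is correct, but it takes a genuinely different route from the paper. The paper's proof is a local, change-based win-win argument: it writes $\FPA(\text{\em floor}) = \FPA(\text{\em input}) - \Delta_{\FPA}^{\downarrow}$, $\FPA(\text{\em ceiling}) = \FPA(\text{\em input}) + \Delta_{\FPA}^{\uparrow}$ (and likewise for $\OPT$), proves the single key inequality $\Delta_{\FPA}^{\downarrow}/\Delta_{\OPT}^{\downarrow} \geq \Delta_{\FPA}^{\uparrow}/\Delta_{\OPT}^{\uparrow}$ by integration by parts — exploiting that the normalized antiderivative of $1 - P$ is concave while that of $P$ is convex, so the former dominates pointwise — and then closes by contradiction: if both candidates were strictly better than the input, one would get $\Delta_{\FPA}^{\downarrow}/\Delta_{\OPT}^{\downarrow} < \PoA(\text{\em input}) < \Delta_{\FPA}^{\uparrow}/\Delta_{\OPT}^{\uparrow}$. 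You instead linearize both welfares in $P$ as mixtures over Dirac conditional values, reduce everything to the one-variable ratio $h(v) = \FPA(\delta_v)/\OPT(\delta_v) = f(v)/g(v)$, and show $h$ is quasiconcave on $[0,\varphi_1(0)]$ because $N(v) = \calB(0)\,g(v) - f(v)\,\calV(v)$ is non-increasing, so the minimum sits at an endpoint and the mediant inequality transfers it to arbitrary $P$. Your route buys a slightly stronger and arguably cleaner statement (every mixture is no worse than the worse of the two Dirac endpoints, with no contradiction step), at the price of differentiating $h$ and having to interpret $\calV'$ distributionally when $\calV$ has jumps — which you correctly flag and which does go through since monotonicity of $\calV$ is all that is used; the paper's integration-by-parts route sidesteps smoothness of $\calV$ and of $h$ entirely and matches the win-win template it reuses later in the {\slice} and {\AD} reductions. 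Both treatments wave off the same degenerate $0/0$ cases, and your handling of Item~1 coincides with the paper's.
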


\afterpage{
\begin{figure}[t]
    \centering
    \begin{mdframed}
    Reduction $\term[\polarize]{alg:polarize}(P,\, \bB \otimes L)$

    \begin{flushleft}
    {\bf Input:}
    A (generic) {\em layered} pseudo instance $(P,\, \bB \otimes L)$.
    \hfill \Cref{def:layer}
    
    \vspace{.05in}
    {\bf Output:}
    A {\em floor}/{\em ceiling} pseudo instance $H \otimes \bB_{-1} \otimes L \in (\Bvalid^{\downarrow} \cup \Bvalid^{\uparrow})$.
    \hfill \Cref{def:ceiling_floor}

    \begin{enumerate}
        \item\label{alg:polarize:floor}
        Define the \term[\text{\em floor}]{floor} candidate $H^{\downarrow} \otimes \bB_{-1} \otimes L \eqdef (P^{\downarrow} \equiv 0,\, \bB \otimes L) \in \Bvalid^{\downarrow}$.

        \item\label{alg:polarize:ceiling}
        Define the \term[\text{\em ceiling}]{ceiling} candidate $H^{\uparrow} \otimes \bB_{-1} \otimes L \eqdef (P^{\uparrow} \equiv \varphi_{1}(0),\, \bB \otimes L) \in \Bvalid^{\uparrow}$.

        \item\label{alg:polarize:return}
        {\bf Return} the {\PoA}-worse candidate $\argmin \big\{\PoA(\blackref{floor}),~ \PoA(\blackref{ceiling})\big\}$; \\
        \white{\bf Return} breaking ties in favor of the {\em ceiling} candidate $H^{\uparrow} \otimes \bB_{-1} \otimes L$.
    \end{enumerate}
    \end{flushleft}
    \end{mdframed}
    \caption{The {\polarize} reduction.}
    \label{fig:alg:polarize}
\end{figure}
\begin{figure}[t]
    \centering
    \subfloat[\label{fig:polarize:input}
    The {\em input} monopolist $(P,\, B_{1})$]{
    \includegraphics[width = .49\textwidth]
    {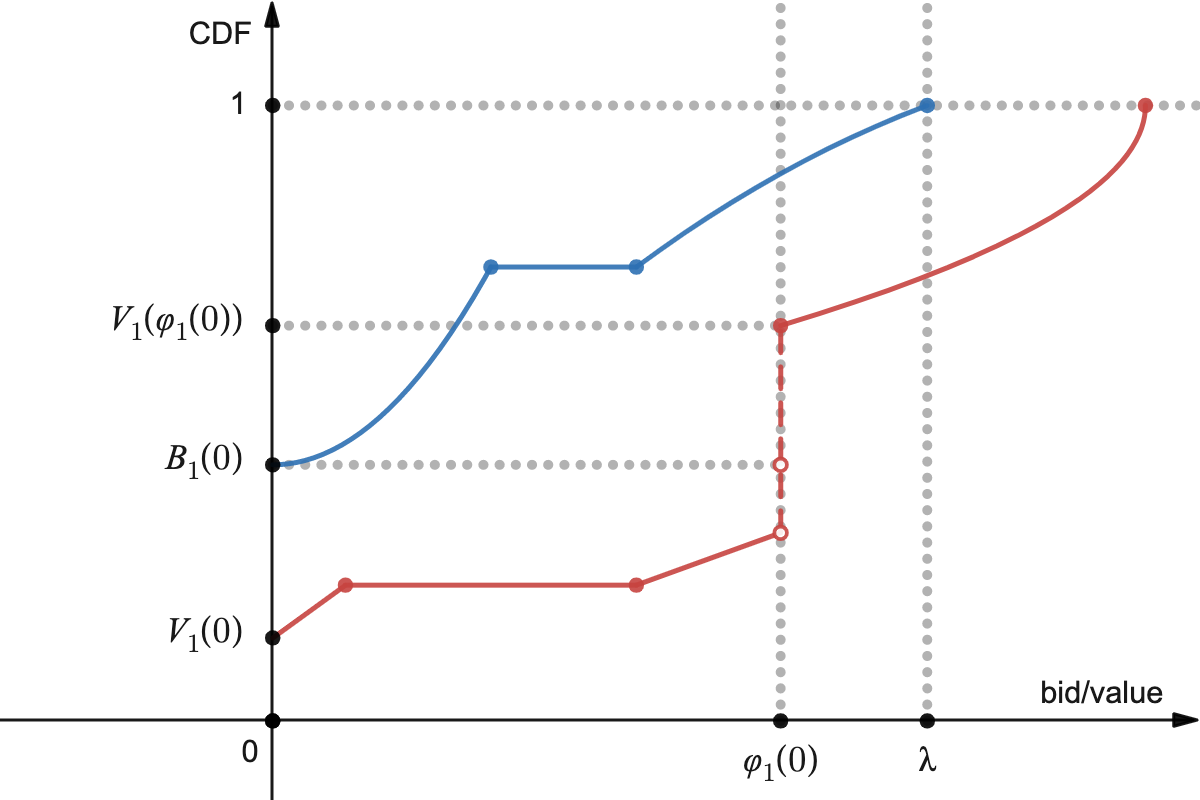}} \\
    \vspace{1cm}
    \subfloat[\label{fig:polarize:floor}
    The {\em floor} monopolist $H^{\downarrow} = (P^{\downarrow} \equiv 0,\, B_{1})$]{
    \includegraphics[width = .49\textwidth]
    {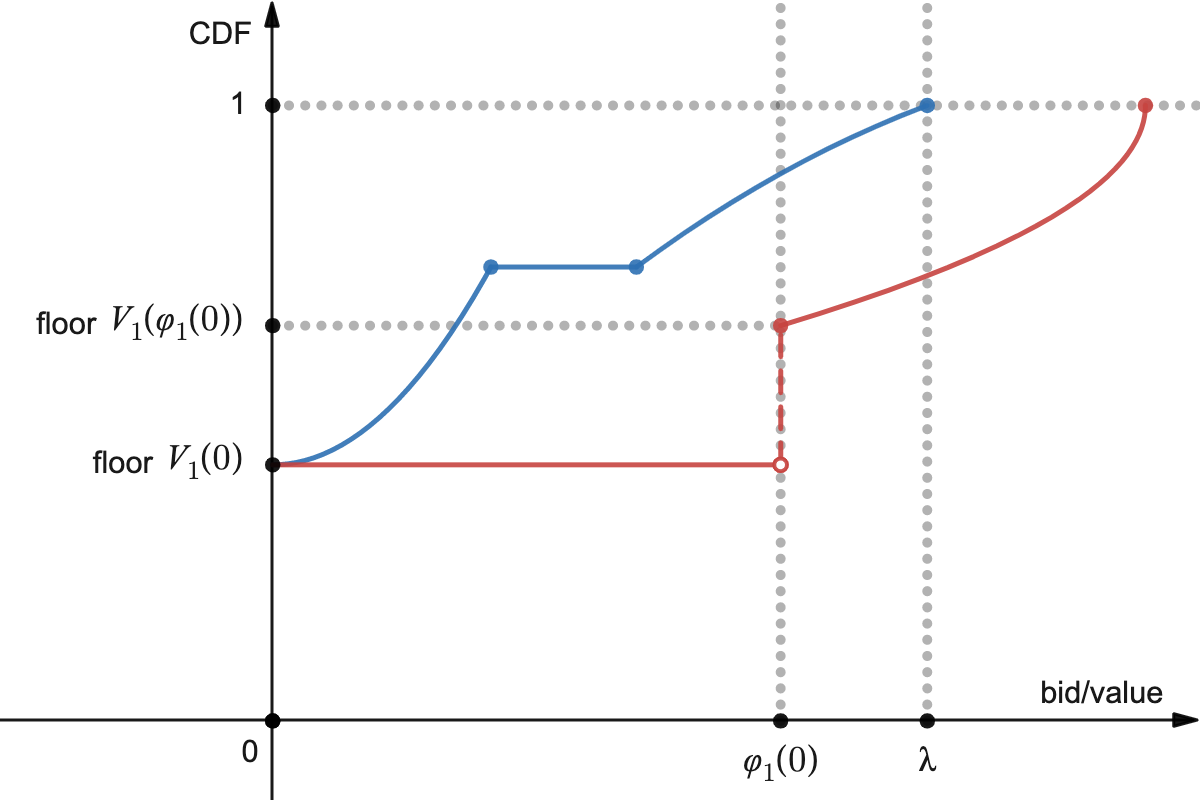}}
    \hfill
    \subfloat[\label{fig:polarize:ceiling}
    The {\em ceiling} monopolist $H^{\uparrow} = (P^{\uparrow} \equiv \varphi_{1}(0),\, B_{1})$]{
    \includegraphics[width = .49\textwidth]
    {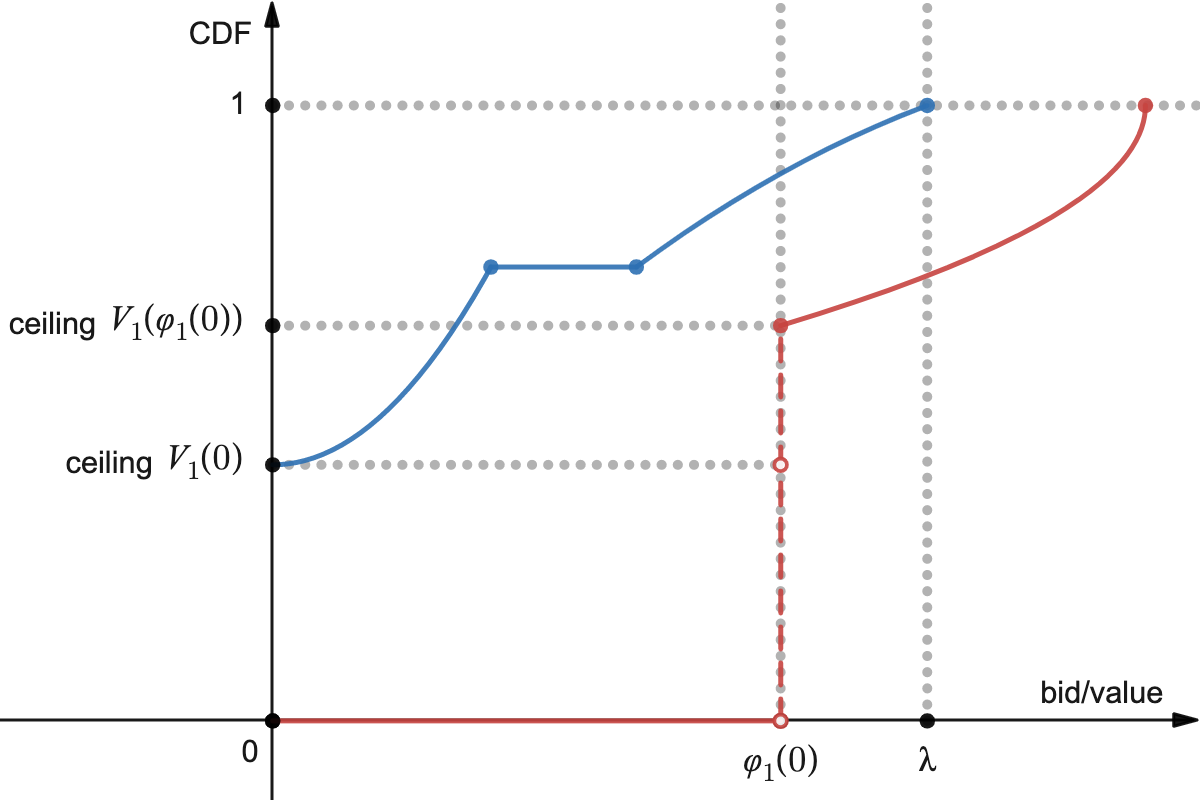}}
    \caption{Diagram of the {\polarize} reduction (\Cref{fig:alg:polarize}), where blue/red curves denote bid/value CDF's, respectively.
    \label{fig:polarize}}
\end{figure}
\clearpage}

\begin{proof}
Let us verify {\bf \Cref{lem:polarize:property,lem:polarize:poa}} one by one.

\vspace{.1in}
\noindent
{\bf \Cref{lem:polarize:property}.}
By construction (Lines~\ref{alg:polarize:floor} to \ref{alg:polarize:return}): The output $H \otimes \bB_{-1} \otimes L$ is the {\PoA}-worse one between two candidates, the \blackref{floor} candidate $H^{\downarrow} \otimes \bB_{-1} \otimes L$ versus the \blackref{ceiling} candidate $H^{\uparrow} \otimes \bB_{-1} \otimes L$. Each of them does not modify the bid distributions $\bB \otimes L$ and the mappings $\bvarphi$, thus preserving \blackref{layeredness}, \blackref{nil}, \blackref{finiteness}, \blackref{piecewise}, and \blackref{monotonicity} (\Cref{def:layer,def:translate,def:discretize,def:pseudo}).
Further, the \blackref{floor} candidate $H^{\downarrow} \otimes \bB_{-1} \otimes L$ promises an {\em invariant} nil conditional value $P^{\downarrow} \equiv 0 \in [0,\, \varphi_{1}(0)]$ (\blackref{floorness}/\blackref{boundedness}) and the \blackref{ceiling} candidate $H^{\uparrow} \otimes \bB_{-1} \otimes L$ promises an {\em invariant} ceiling  value $P^{\uparrow} \equiv \varphi_{1}(0) \in [0,\, \varphi_{1}(0)]$ (\blackref{ceilingness}/\blackref{boundedness}). So these two candidates are \blackref{floor}/\blackref{ceiling} pseudo instances, respectively. {\bf \Cref{lem:polarize:property}} follows then.

\vspace{.1in}
\noindent
{\bf \Cref{lem:polarize:poa}.}
We claim that either or both of the two candidates $H^{\downarrow} \otimes \bB_{-1} \otimes L$ and $H^{\uparrow} \otimes \bB_{-1} \otimes L$ has a (weakly) worse {\PoA} bound than the input $(P,\, \bB \otimes L)$.
Notice that each candidate only modifies the conditional value distribution $P$, to EITHER the \blackref{floor} value distribution $P^{\downarrow}(v) \equiv \indicator(v \geq 0)$
OR the \blackref{ceiling} value distribution $P^{\uparrow}(v) \equiv \indicator(v \geq \varphi_{1}(0))$.

\vspace{.1in}
\noindent
{\bf Auction {\SocialWelfares}.}
Following \Cref{lem:pseudo_welfare} (with \blackref{nil} $\gamma = 0$), the input auction {\SocialWelfare}
$\FPA(\text{\em input}) \equiv \FPA(P,\, \bB \otimes L) = \Ex[P] \cdot \calB(0) + \sum_{\sigma \in [n] \cup \{L\}} \big(\int_{0}^{\lambda} \varphi_{\sigma}(b) \cdot \frac{B'_{\sigma}(b)}{B_{\sigma}(b)} \cdot \calB(b) \cdot \d b\big)$,
where the $\calB(b) = \prod_{\sigma \in [n] \cup \{L\}} B_{\sigma}(b)$.
The summation term is {\em invariant}, since the bid distributions $\bB \otimes L$ and the mappings $\bvarphi$ are unmodified.
Therefore, we can formulate the \blackref{floor} counterpart as $\FPA(\blackref{floor}) = \FPA(\text{\em input}) - \Delta_{\FPA}^{\downarrow}$, using\footnote{Recall that the expectation of a {\em nonnegative} distribution $F$ can be written as $\E[F] = \int_{0}^{+\infty} \big(1 - F(v)\big) \cdot \d v$.}
\begin{align*}
    \Delta_{\FPA}^{\downarrow}
    = \big(\Ex[P] - \Ex[P^{\downarrow}]\big) \cdot \calB(0)
    = \calB(0) \cdot \int_{0}^{\varphi_{1}(0)} \big(1 - P(v)\big) \cdot \d v,
    \hspace{1.72cm}
    && \parbox{2.45cm}{\blackref{boundedness} \\ $0 \leq P \leq \varphi_{1}(0)$}
\end{align*}
and formulate the \blackref{ceiling} counterpart as $\FPA(\blackref{ceiling}) = \FPA(\text{\em input}) + \Delta_{\FPA}^{\uparrow}$, using
\begin{align*}
    \Delta_{\FPA}^{\uparrow}
    = \big(\Ex[P^{\uparrow}] - \Ex[P]\big) \cdot \calB(0)
    = \calB(0) \cdot \int_{0}^{\varphi_{1}(0)} P(v) \cdot \d v.
    \hspace{2.72cm}
    && \parbox{2.45cm}{\blackref{boundedness} \\ $0 \leq P \leq \varphi_{1}(0)$}
\end{align*}

\noindent
{\bf Optimal {\SocialWelfares}.}
Following \Cref{lem:translate_welfare}, the input optimal {\SocialWelfare}
$\OPT(\text{\em input}) \equiv \OPT(P,\, \bB \otimes L) = \int_{0}^{+\infty} \big(1 - P(v) \cdot \calV(v)\big) \cdot \d v$,
where the $\calV(v)$ is determined by the unmodified bid distributions $\bB \otimes L$ (\Cref{lem:pseudo_welfare}) and thus is {\em invariant}.
Accordingly, we can formulate the \blackref{floor} counterpart as $\OPT(\blackref{floor}) = \OPT(\text{\em input}) - \Delta_{\OPT}^{\downarrow}$, using
\begin{align*}
    \Delta_{\OPT}^{\downarrow}
    = \int_{0}^{+\infty} \big(P^{\downarrow}(v) - P(v)\big) \cdot \calV(v) \cdot \d v
    = \int_{0}^{\varphi_{1}(0)} \big(1 - P(v)\big) \cdot \calV(v) \cdot \d v,
    && \parbox{2.45cm}{\blackref{boundedness}}
\end{align*}
and formulate the \blackref{ceiling} counterpart as $\OPT(\blackref{ceiling}) = \OPT(\text{\em input}) + \Delta_{\OPT}^{\uparrow}$, using
\begin{align*}
    \Delta_{\OPT}^{\uparrow}
    = \int_{0}^{+\infty} \big(P(v) - P^{\uparrow}(v)\big) \cdot \calV(v) \cdot \d v
    = \int_{0}^{\varphi_{1}(0)} P(v) \cdot \calV(v) \cdot \d v.
    \hspace{1cm}
    && \parbox{2.45cm}{\blackref{boundedness}}
\end{align*}

Notice that all the four terms $\Delta_{\FPA}^{\downarrow}$ etc are nonnegative. The remaining proof relies on {\bf \Cref{fact:polarize}}. (For brevity, we ignore the ``$0 / 0$'' issue, which can happen only if EITHER $P \equiv P^{\downarrow}$ OR $P \equiv P^{\uparrow}$, making {\bf \Cref{lem:polarize:poa}} vacuously true.)

\setcounter{fact}{0}

\begin{fact}
\label{fact:polarize}
$\Delta_{\FPA}^{\downarrow} / \Delta_{\OPT}^{\downarrow} \geq \Delta_{\FPA}^{\uparrow} / \Delta_{\OPT}^{\uparrow}$ or equivalently, $\Delta_{\OPT}^{\downarrow} / \Delta_{\FPA}^{\downarrow} \leq \Delta_{\OPT}^{\uparrow} / \Delta_{\FPA}^{\uparrow}$.
\end{fact}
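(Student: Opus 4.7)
The plan is to clear denominators, reducing the claim $\Delta_{\FPA}^{\downarrow}\cdot \Delta_{\OPT}^{\uparrow} \geq \Delta_{\FPA}^{\uparrow}\cdot \Delta_{\OPT}^{\downarrow}$ to a correlation-type inequality between two weakly increasing functions on $[0,\varphi_{1}(0)]$, namely $P(v)$ and $\calV(v)$. Monotonicity of $P$ is immediate from \Cref{def:conditional_value}: $P$ is a CDF supported on $[0,\varphi_{1}(0)]$, so $P$ is weakly increasing. Monotonicity of $\calV$ follows from its explicit form in \Cref{lem:pseudo_welfare}, $\calV(v)=\prod_{\sigma\in[n]\cup\{L\}}\Prx_{b_{\sigma}\sim B_{\sigma}}[(b_{\sigma}\le\gamma)\vee(\varphi_{\sigma}(b_{\sigma})\le v)]$: each factor is nonnegative and weakly increases in $v$ (because raising $v$ can only enlarge the event $\{\varphi_{\sigma}(b_{\sigma})\le v\}$), so the product does too.

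Next I would substitute the four explicit integrals from the proof of \Cref{lem:polarize:poa}, cancel the common $\calB(0)$ factor, and rewrite the target as
\[
    \Big(\int_{0}^{\varphi_{1}(0)}(1-P(v))\,\d v\Big)\Big(\int_{0}^{\varphi_{1}(0)}P(u)\calV(u)\,\d u\Big)\;\geq\;\Big(\int_{0}^{\varphi_{1}(0)}P(v)\,\d v\Big)\Big(\int_{0}^{\varphi_{1}(0)}(1-P(u))\calV(u)\,\d u\Big).
\]
Expanding the left minus right as a double integral over $(u,v)\in[0,\varphi_{1}(0)]^{2}$, the cross-terms $P(v)P(u)\calV(u)$ cancel, leaving $\iint \calV(u)\,[P(u)-P(v)]\,\d u\,\d v$. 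Swapping $u\leftrightarrow v$ and averaging with the original integral symmetrizes this expression into
\[
    \tfrac{1}{2}\iint_{[0,\varphi_{1}(0)]^{2}}[\calV(u)-\calV(v)]\,[P(u)-P(v)]\,\d u\,\d v.
\]
Because $\calV$ and $P$ are both weakly increasing, the integrand is pointwise nonnegative, so the integral is $\geq 0$; this is just the Chebyshev integral inequality applied against Lebesgue measure on $[0,\varphi_{1}(0)]$. Equality holds iff $P$ or $\calV$ is a.e.\ constant on that interval.

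There is no real obstacle, only one small piece of housekeeping: if $\int_{0}^{\varphi_{1}(0)}P\,\d v=0$ or $\int_{0}^{\varphi_{1}(0)}(1-P)\,\d v=0$, then $P$ agrees almost everywhere with its floor $P^{\downarrow}\equiv 0$ or its ceiling $P^{\uparrow}\equiv\varphi_{1}(0)$, and the corresponding $\Delta$'s vanish, making the statement of \Cref{fact:polarize} vacuously true (the same caveat was already flagged in the parenthetical preceding the fact). Hence the Chebyshev-style two-line symmetrization is the whole proof, with the only nontrivial input being the monotonicity of $\calV(v)$, which we read off directly from the $\bvarphi$-based reconstruction in \Cref{lem:pseudo_welfare}.
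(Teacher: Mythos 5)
Your proof is correct, but it takes a genuinely different route from the paper's. After cross-multiplying to $\Delta_{\FPA}^{\downarrow}\cdot\Delta_{\OPT}^{\uparrow}\geq\Delta_{\FPA}^{\uparrow}\cdot\Delta_{\OPT}^{\downarrow}$ and cancelling $\calB(0)$, you expand the difference as a double integral, note the cross-terms $P(u)P(v)\calV(u)$ cancel, and symmetrize to $\tfrac12\int_0^{\varphi_1(0)}\!\int_0^{\varphi_1(0)}\big(\calV(u)-\calV(v)\big)\big(P(u)-P(v)\big)\,\d u\,\d v\geq 0$, i.e.\ Chebyshev's integral (correlation) inequality for the two weakly increasing functions $P$ and $\calV$; your monotonicity checks for both are right, and your handling of the degenerate $P\equiv P^{\downarrow}$ or $P\equiv P^{\uparrow}$ cases matches the paper's ``$0/0$'' caveat. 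The paper instead works with the ratios directly: it normalizes the antiderivatives $\bar{\mathfrak{P}}(v)\propto\int_0^v(1-P)$ and $\mathfrak{P}(v)\propto\int_0^v P$ so that both run from $0$ to $1$, integrates by parts against $\calV$, and uses that $\bar{\mathfrak{P}}$ is concave while $\mathfrak{P}$ is convex (hence $\bar{\mathfrak{P}}\geq\mathfrak{P}$ pointwise) together with $\calV$ increasing --- a second-order-stochastic-dominance style argument. What your route buys: it is more elementary and needs only monotonicity and boundedness of $P$ and $\calV$, with no appeal to $P'$ or $\calV'$; this is a genuine (if minor) robustness gain, since at the {\polarize} stage the instance has already been discretized, so $\calV$ is a step function and the paper's $\calV'(v)$ must be read as a Riemann--Stieltjes measure $\d\calV(v)$ for the integration by parts to be literal. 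What the paper's route buys is the dominance picture between the two normalized conditional-value ``reshapings,'' which is the template it reuses for the later win-win reductions; your symmetrization also makes the equality case ($P$ or $\calV$ a.e.\ constant on $[0,\varphi_1(0)]$) immediate, which the paper does not state.
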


\begin{proof}
It is more convenient to verify the second equation: Using the (normalized) antiderivatives
$\bar{\mathfrak{P}}(v) \eqdef (1 / \Delta_{\FPA}^{\downarrow}) \cdot \calB(0) \cdot \int_{0}^{v} (1 - P(t)) \cdot \d t$ AND
$\mathfrak{P}(v) \eqdef (1 / \Delta_{\FPA}^{\uparrow}) \cdot \calB(0) \cdot \int_{0}^{v} P(t) \cdot \d t$, we can rewrite
$\LHS \equiv \Delta_{\OPT}^{\downarrow} \big/ \Delta_{\FPA}^{\downarrow}
= \int_{0}^{\varphi_{1}(0)} \bar{\mathfrak{P}}'(v) \cdot \calV(v) \cdot \d v$ AND
$\RHS \equiv \Delta_{\OPT}^{\uparrow} / \Delta_{\FPA}^{\uparrow} = \int_{0}^{\varphi_{1}(0)} \mathfrak{P}'(v) \cdot \calV(v) \cdot \d v$.

Using integration by parts, we can deduce {\bf \Cref{fact:polarize}} as follows:
\begin{align}
    \LHS
    & ~=~ \Big(\bar{\mathfrak{P}}(v) \cdot \calV(v)\Big) \Bigmid_{v = 0}^{\varphi_{1}(0)}
    ~-~ \int_{0}^{\varphi_{1}(0)} \bar{\mathfrak{P}}(v) \cdot \calV'(v) \cdot \d v
    \nonumber \\
    % \label{eq:polarize:1}\tag{P1} \\
    & ~=~ \Big(\mathfrak{P}(v) \cdot \calV(v)\Big) \Bigmid_{v = 0}^{\varphi_{1}(0)}
    ~-~ \int_{0}^{\varphi_{1}(0)} \bar{\mathfrak{P}}(v) \cdot \calV'(v) \cdot \d v
    \label{eq:polarize:2}\tag{P1} \\
    & ~\leq~ \Big(\mathfrak{P}(v) \cdot \calV(v)\Big) \Bigmid_{v = 0}^{\varphi_{1}(0)}
    ~-~ \int_{0}^{\varphi_{1}(0)} \mathfrak{P}(v) \cdot \calV'(v) \cdot \d v
    ~=~ \RHS.
    \label{eq:polarize:3}\tag{P2}
\end{align}
% \eqref{eq:polarize:1}: Integration by parts. \\
\eqref{eq:polarize:2}: For the first term, by construction $\bar{\mathfrak{P}}(v) = \mathfrak{P}(v) = 0$ and $\bar{\mathfrak{P}}(\varphi_{1}(0)) = \mathfrak{P}(\varphi_{1}(0)) = 1$. \\
\eqref{eq:polarize:3}: For the second term, by construction $\bar{\mathfrak{P}}(v) \geq \mathfrak{P}(v) \geq 0$ and the PDF $\calV'(v) \geq 0$.
In particular, the \blackref{floor} antiderivative $\bar{\mathfrak{P}}(v)$ is concave $\bar{\mathfrak{P}}''(v) = -(1 / \Delta_{\FPA}^{\downarrow}) \cdot \calB(0) \cdot P'(v) \leq 0$, whereas the \blackref{ceiling} antiderivative $\mathfrak{P}(v)$ is convex $\mathfrak{P}''(v) = (1 / \Delta_{\FPA}^{\uparrow}) \cdot \calB(0) \cdot P'(v) \geq 0$.
\end{proof}

Assume the opposite to {\bf \Cref{lem:polarize:poa}}: The \blackref{floor}/\blackref{ceiling} candidates $H^{\downarrow} \otimes \bB_{-1} \otimes L$ and $H^{\uparrow} \otimes \bB_{-1} \otimes L$ EACH yield a strictly larger bound than the input $(P,\, \bB \otimes L)$. That is,
\begin{align*}
    & \PoA(\blackref{floor}) ~~\,\;>~ \PoA(\text{\em input})
    && \iff &&
    \frac{\FPA(\text{\em input}) - \Delta_{\FPA}^{\downarrow}}{\OPT(\text{\em input}) - \Delta_{\OPT}^{\downarrow}}
    ~>~ \frac{\FPA(\text{\em input})}{\OPT(\text{\em input})}, \\
    & \PoA(\blackref{ceiling}) ~>~ \PoA(\text{\em input})
    && \iff &&
    \frac{\FPA(\text{\em input}) + \Delta_{\FPA}^{\uparrow}}{\OPT(\text{\em input}) + \Delta_{\OPT}^{\uparrow}}
    ~>~ \frac{\FPA(\text{\em input})}{\OPT(\text{\em input})}.
\end{align*}
Rearranging both equations gives $\Delta_{\FPA}^{\downarrow} / \Delta_{\OPT}^{\downarrow}
< \PoA(\text{\em input})
< \Delta_{\FPA}^{\uparrow} / \Delta_{\OPT}^{\uparrow}$, which contradicts {\bf \Cref{fact:polarize}}.
Refute our assumption: At least one candidate between $H^{\downarrow} \otimes \bB_{-1} \otimes L$ and $H^{\uparrow} \otimes \bB_{-1} \otimes L$ has a weakly worse bound. {\bf \Cref{lem:polarize:poa}} and \Cref{lem:polarize} follow then.
\end{proof}

% \begin{remark}[Floor/Ceiling pseudo instances]
% \label{rem:ceiling_floor}
% We emphasize that a pair of \blackref{floor}/\blackref{ceiling} conditional value $P^{\downarrow} \equiv 0$ and $P^{\uparrow} \equiv \varphi_{1}(0)$ are also determined by the underlying bid distributions $H \otimes \bB_{-1} \otimes L$.
% Hence, to see the membership $H^{\downarrow} \otimes \bB_{-1} \otimes L \in \Bvalid^{\downarrow}$ and/or $H^{\uparrow} \otimes \bB_{-1} \otimes L \in \Bvalid^{\uparrow}$, it suffices to check \blackref{monotonicity}
% \end{remark}
% As a consequence of \Cref{lem:polarize}, we can just consider \blackref{floor}/\blackref{ceiling} pseudo instances hereafter.

\subsection{The floor/ceiling pseudo instances}
\label{subsec:preprocess}

Below we summarize our discussions throughout \Cref{sec:preprocessing}, about how to preprocess valid pseudo instances (\Cref{def:pseudo}).
For ease of reference, we rephrase \Cref{cor:pseudo_instance,lem:discretize,lem:translate,lem:layer,lem:polarize} (after minor modifications).

\begin{restate}[\Cref{cor:pseudo_instance}]
Regarding {\FirstPriceAuctions}, the {\PriceofAnarchy} is at least
\begin{align*}
    \PoA ~\geq~ \inf \bigg\{\, \frac{\FPA(P,\, \bB \otimes L)}{\OPT(P,\, \bB \otimes L)} \,\biggmid\, (P,\, \bB \otimes L) \in \Bvalid ~\text{\em and}~ \OPT(P,\, \bB \otimes L) < +\infty \,\bigg\}.
\end{align*}
\end{restate}

\begin{restate}[\Cref{lem:discretize}]
Given a valid pseudo instance $(P,\, \bB \otimes L)$ that has bounded expected auction/optimal {\SocialWelfares} $\FPA(P,\, \bB \otimes L) \leq \OPT(P,\, \bB \otimes L) < +\infty$, for any error $\epsilon \in (0,\, 1)$, there is a discretized pseudo instance $(\tilde{P},\, \tilde{\bB} \otimes \tilde{L})$ such that $|\PoA(\tilde{P},\, \tilde{\bB} \otimes \tilde{L}) - \PoA(P,\, \bB \otimes L)| \leq \epsilon$.
\end{restate}

\begin{restate}[\Cref{lem:translate}]
Given a discretized pseudo instance $(P,\, \bB \otimes L)$, there is a translated pseudo instance $(\tilde{P},\, \tilde{\bB} \otimes \tilde{L})$ such that $\PoA(\tilde{P},\, \tilde{\bB} \otimes \tilde{L}) \leq \PoA(P,\, \bB \otimes L)$.
\end{restate}

\begin{restate}[\Cref{lem:layer}]
Given a translated pseudo instance $(P,\, \bB \otimes L)$, there is a layered pseudo instance $(\tilde{P},\, \tilde{\bB} \otimes \tilde{L})$ such that $\PoA(\tilde{P},\, \tilde{\bB} \otimes \tilde{L}) \leq \PoA(P,\, \bB \otimes L)$.
\end{restate}

\begin{restate}[\Cref{lem:polarize}]
Given a layered pseudo instance $(P,\, \bB \otimes L)$, there is a floor/ceiling pseudo instance $\tilde{H} \otimes \tilde{\bB} \otimes \tilde{L} \in (\Bvalid^{\downarrow} \cup \Bvalid^{\uparrow})$ such that $\PoA(\tilde{H} \otimes \tilde{\bB} \otimes \tilde{L}) \leq \PoA(P,\, \bB \otimes L)$.
\end{restate}

\Cref{cor:preprocess} is a refinement of \Cref{cor:pseudo_instance}: Towards a lower bound on the {\PriceofAnarchy} in {\FirstPriceAuctions}, we can restrict our attention to the subspace $(\Bvalid^{\downarrow} \cup \Bvalid^{\uparrow})$ of {\em floor}/{\em ceiling} pseudo instances (\Cref{def:ceiling_floor}).

\begin{corollary}[Lower bound]
\label{cor:preprocess}
Regarding {\FirstPriceAuctions}, the {\PriceofAnarchy} is at least
\begin{align*}
    \PoA ~\geq~ \inf \bigg\{\, \frac{\FPA(H \otimes \bB \otimes L)}{\OPT(H \otimes \bB \otimes L)} \,\biggmid\, H \otimes \bB \otimes L \in (\Bvalid^{\downarrow} \cup \Bvalid^{\uparrow}) ~\text{\em and}~ \OPT(H \otimes \bB \otimes L) < +\infty \,\bigg\}.
\end{align*}
\end{corollary}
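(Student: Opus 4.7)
The plan is to chain \Cref{cor:pseudo_instance} with the four reductions restated just above (\Cref{lem:discretize,lem:translate,lem:layer,lem:polarize}) to carve the search space down from $\Bvalid$ to $(\Bvalid^{\downarrow} \cup \Bvalid^{\uparrow})$, tracking how the PoA ratio can only deteriorate along the chain. Because each reduction was designed so that its output class is exactly the input class of the next, the composition goes through mechanically.

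First I would fix an arbitrary $(P, \bB \otimes L) \in \Bvalid$ with $\OPT(P, \bB \otimes L) < +\infty$ and an arbitrary error $\epsilon \in (0, 1)$. Applying \Cref{lem:discretize} yields a \emph{discretized} instance whose PoA is within $\epsilon$ of $\PoA(P, \bB \otimes L)$, hence bounded above by $\PoA(P, \bB \otimes L) + \epsilon$. Feeding this successively through \Cref{lem:translate}, \Cref{lem:layer}, and \Cref{lem:polarize} yields in turn a \emph{translated}, a \emph{layered}, and finally a floor/ceiling pseudo instance $H^{*} \otimes \bB^{*}_{-1} \otimes L^{*} \in (\Bvalid^{\downarrow} \cup \Bvalid^{\uparrow})$, each step weakly lowering the PoA. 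Composing these bounds gives
\begin{equation*}
    \PoA(H^{*} \otimes \bB^{*}_{-1} \otimes L^{*}) \;\leq\; \PoA(P, \bB \otimes L) + \epsilon.
\end{equation*}

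Second, taking the infimum on the left over $(\Bvalid^{\downarrow} \cup \Bvalid^{\uparrow})$, letting $\epsilon \searrow 0$, and then taking the infimum on the right over valid pseudo instances with finite $\OPT$ (and invoking \Cref{cor:pseudo_instance} to identify the resulting quantity with $\PoA$) delivers
\begin{equation*}
    \inf \Big\{ \tfrac{\FPA(H \otimes \bB \otimes L)}{\OPT(H \otimes \bB \otimes L)} : H \otimes \bB \otimes L \in (\Bvalid^{\downarrow} \cup \Bvalid^{\uparrow}),\; \OPT < +\infty \Big\} \;\leq\; \PoA,
\end{equation*}
which is exactly the claim after rearrangement.

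The only items that require a sanity check are routine. (i)~Type compatibility: the cumulative structure of \Cref{def:discretize,def:translate,def:layer,def:ceiling_floor} guarantees that each lemma's hypothesis is exactly the previous lemma's conclusion, so no additional preprocessing is needed between steps. (ii)~Finiteness of $\OPT$ is preserved along the whole chain, which we need both to invoke \Cref{lem:discretize} (hypothesis) and to ensure the terminal floor/ceiling instance is admissible in the conclusion's infimum: \texttt{Discretize} multiplies $\OPT$ by at most $e^{\epsilon/3}$, \texttt{Translate} decreases $\OPT$ by $\gamma$, \texttt{Layer} leaves $\OPT$ unchanged, and \texttt{Polarize} shifts $\OPT$ by at most $\varphi_1(0)$ which is finite by the \blackref{finiteness} condition installed by \texttt{Discretize}. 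There is no genuine obstacle; the corollary is pure bookkeeping on the four reductions proved earlier in the section.
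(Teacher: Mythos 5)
Your proposal is correct and follows essentially the same route as the paper: chain \Cref{cor:pseudo_instance} with \Cref{lem:discretize,lem:translate,lem:layer,lem:polarize} to produce, for each valid pseudo instance with finite $\OPT$ and each $\epsilon$, a floor/ceiling instance whose PoA is at most $\PoA(P,\,\bB\otimes L)+\epsilon$, then let $\epsilon \searrow 0$ and take infima. Your extra bookkeeping on type compatibility and finiteness of $\OPT$ is fine and only makes explicit what the paper leaves implicit.
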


\begin{proof}
For a valid pseudo instance $(P,\, \bB \otimes L) \in \Bvalid$ considered in \Cref{cor:pseudo_instance}, the {\SocialWelfares} are bounded $\FPA(P,\, \bB \otimes L) \leq \OPT(P,\, \bB \otimes L) < +\infty$.
For any error $\epsilon \in (0,\, 1)$, (\Cref{lem:discretize,lem:translate,lem:layer,lem:polarize}) there is a {\em floor}/{\em ceiling} pseudo instance $\tilde{H} \otimes \tilde{\bB} \otimes \tilde{L} \in (\Bvalid^{\downarrow} \cup \Bvalid^{\uparrow})$ such that
\[
    \PoA(\tilde{H} \otimes \tilde{\bB} \otimes \tilde{L})
    ~\leq~ \PoA(P,\, \bB \otimes L) + \epsilon.
\]
Since the error can be arbitrarily small $\epsilon \to 0^{+}$, it has no effect on the {\em infimum} {\PoA} bound.
\end{proof}

\newpage

% yaonan: notation in this subsection

\newcommand{\Bfloor}{\mathbb{B}_{\sf valid}^{\downarrow}}
\newcommand{\Bceiling}{\mathbb{B}_{\sf valid}^{\uparrow}}
\newcommand{\Bstrong}{\mathbb{B}_{\sf strong}^{\uparrow}}
\newcommand{\Btwin}{\mathbb{B}_{\sf twin}^{\uparrow}}

\section{Towards the Worst Case Pseudo Instances}
\label{sec:reduction}

Following \Cref{cor:preprocess}, towards a lower bound on the {\PriceofAnarchy} in {\FirstPriceAuctions}, we can concentrate on {\em floor}/{\em ceiling} pseudo instances $\in (\Bfloor \cup \Bceiling)$.
This section will characterize the worst cases in this search space. More concretely:
\begin{itemize}
    \item \Cref{subsec:tech_prelim} presents some prerequisite notions, including the concept of {\em twin ceiling} pseudo instances $\Btwin$ (\Cref{def:twin}), which form a subset of the space of {\em ceiling} pseudo instances $\Btwin \subsetneq \Bceiling$ and turn out to be the worst cases.
    
    \item Given an undesirable pseudo instance $\in (\Bfloor \cup (\Bceiling \setminus \Btwin))$, we will see that there are four types of possible modifications; see the \blackref{ref:reduction:sketch} part in \Cref{subsec:tech_prelim} for details. We give four reductions in \Cref{subsec:slice,subsec:collapse,subsec:halve,subsec:AD}, one-to-one dealing with each of these four types.
    
    \item \Cref{subsec:main} presents the \blackref{alg:main} procedure, which {\em iteratively} invokes the mentioned four reductions, transforming a given {\em floor}/{\em ceiling} pseudo instance $\in (\Bfloor \cup \Bceiling)$ to a {\em twin ceiling} pseudo instance $\in \Btwin$, as desired. Overall, we will use the {\em potential method} to upper bound the number of invocations.
\end{itemize}

\begin{comment}
\blue{As before, we adopt these terminologies and notations:
\begin{itemize}
    \item bidders $\sigma \in \{H\} \cup [n] \cup \{L\}$
    
    \item real bidders $i \in \{H\} \cup [n]$
    
    \item the monopolist $H$
    
    \item non-monopoly bidders $i \in [n]$
    
    \item the pseudo bidder $L$
\end{itemize}}

\begin{reminder*}
\blue{{\bf Yaonan:} For \Cref{sec:reduction}, we are left with
\begin{itemize}
    % \item organization of \Cref{sec:reduction} \red{\bf (Yaonan: Pinyan, if okay with the above short organization [Following ... the number of invocations], just remove this bullet. Also, remove ``these terminologies and notations'' if you think they are unnecessary.)}
    
    \item \Cref{subsec:main}; the very last part.
\end{itemize}
}
\end{reminder*}

\end{comment}

\subsection{Twin ceiling, strong ceiling, jumps, and potentials}
\label{subsec:tech_prelim}

This subsection introduces several prerequisite concepts.
But before that, for ease of reference, let us recap the definitions of {\em floor}/{\em ceiling} pseudo instances (\Cref{def:pseudo,def:discretize,def:translate,def:layer,def:ceiling_floor}).

\begin{definition}[Floor/Ceiling pseudo instances]
\label{def:ceiling_floor:restate}
For a {\em floor}/{\em ceiling} pseudo instance $H \otimes \bB \otimes L$:
\begin{itemize}
    \item The {\em monopolist} $H$ is a real bidder, who competes with OTHER bidders $\bB \otimes L$.
    
    \item The {\em non-monopoly} bidders $\bB = \{B_{i}\}_{i \in [n]}$ (if existential; $n \geq 0$) are real bidders, each of which competes with OTHER bidders $H \otimes \bB_{-i} \otimes L$.
    
    \item The pseudo bidder $L$ competes with ALL bidders $H \otimes \bB \otimes L$, {\em including him/herself $L$}.
\end{itemize}
Using the first-order bid distribution $\calB(b) \eqdef \prod_{\sigma \in \{H\} \cup [n] \cup \{L\}} B_{\sigma}(b)$, each bidder $\sigma \in \{H\} \cup [n] \cup \{L\}$ has the bid-to-value mapping
\[
    \varphi_{\sigma}(b)
    ~\eqdef~
    % b + \bigg(\sum_{k \in (\{H\} \cup [n]) \setminus \{\sigma\}} \frac{B'_{k}(b)}{B_{k}(b)} + \frac{L'(b)}{L(b)}\bigg)^{-1}
    % ~=~
    b + \bigg(\frac{\calB'(b)}{\calB(b)} - \frac{B'_{\sigma}(b)}{B_{\sigma}(b)} \cdot \indicator(\sigma \neq L)\bigg)^{-1}.
\]
This pseudo instance, no matter being {\em floor} $H^{\downarrow} \otimes \bB \otimes L \in \Bfloor$ or being {\em ceiling} $H^{\uparrow} \otimes \bB \otimes L \in \Bceiling$, satisfies \blackref{re:discretization}, \blackref{re:monotonicity}, and \blackref{re:layeredness}.
\begin{itemize}
    \item \term[\textbf{discretization}]{re:discretization}{\bf :}
    This pseudo instance has a {\em bounded} bid support $b \in [0 \equiv \gamma,\, \lambda] \subseteq [0,\, +\infty)$. Regarding some $(m + 1)$-piece partition $\bLambda \eqdef [0 \equiv \lambda_{0},\, \lambda_{1}) \cup
    [\lambda_{1},\, \lambda_{2}) \cup \dots \cup
    [\lambda_{m},\, \lambda_{m + 1} \equiv \lambda]$, for $0 \leq m < +\infty$, the bid-to-value mappings $\bvarphi = \{\varphi_{\sigma}\}_{\sigma \in \{H\} \cup [n] \cup \{L\}}$ are {\em piecewise constant} functions and thus, can be represented as an $(n + 2)$-to-$(m + 1)$ bid-to-value table $\bPhi = [\phi_{\sigma,\, j}]$ for $\sigma \in \{H\} \cup [n] \cup \{L\}$ and $j \in [0:\, m]$:
    \begin{align*}
        \bPhi ~\eqdef~
        \Big[\, \text{$\varphi_{\sigma}(b) \equiv \phi_{\sigma,\, j}$ on every piece $b \in [\lambda_{j},\, \lambda_{j + 1})$} \,\Big].
    \end{align*}
    
    \item \term[\textbf{monotonicity}]{re:monotonicity}{\bf :}
    The mappings $\bvarphi$ are increasing over the bid support $b \in [0,\, \lambda]$; thus the table $\bPhi = [\phi_{\sigma,\, j}]$ is increasing $\phi_{\sigma,\, 0} \leq \dots \leq \phi_{\sigma,\, j} \leq \dots \leq \phi_{\sigma,\, m}$ in each row $\sigma \in \{H\} \cup [n] \cup \{L\}$.
    
    \item \term[\textbf{layeredness}]{re:layeredness}{\bf :} 
    The mappings $\bvarphi$ are ordered $\varphi_{H}(b) \geq \varphi_{1}(b) \geq \dots \geq \varphi_{n}(b) \geq \varphi_{L}(b)$ over the bid support $b \in [0,\, \lambda]$; thus the table $\bPhi = [\phi_{\sigma,\, j}]$ is decreasing $\phi_{H,\, j} \geq \phi_{1,\, j} \geq \dots \geq \phi_{n,\, j} \geq \phi_{L,\, j}$ in each column $j \in [0: m]$.
\end{itemize}
Given these, a pair of {\em floor}/{\em ceiling} pseudo instances (\Cref{def:ceiling_floor}) are uniquely determined. Namely, the {\em floor} one $H^{\downarrow} \otimes \bB \otimes L \in \Bfloor$ further satisfies
(\term[\textbf{floorness}]{re:floorness}) that the monopolist $H^{\downarrow}$'s conditional value always takes the {\em nil value} $P^{\downarrow} \equiv 0$,
while the {\em ceiling} one $H^{\uparrow} \otimes \bB \otimes L \in \Bceiling$ further satisfies (\term[\textbf{ceilingness}]{re:ceilingness}) that the monopolist $H^{\uparrow}$'s conditional value always takes the {\em ceiling value} $P^{\uparrow} \equiv \phi_{H,\, 0}$.
\end{definition}

We notice that possibly there are NO non-monopoly bidders $\bB = \{B_{i}\}_{i \in [n]} = \emptyset$, or possibly all non-monopoly bidders are ``dummies'' $B_{i}(b) \equiv 1$ on $b \in [0,\, \lambda]$. Suppose so, only the monopolist $H$ and the pseudo bidder $L$ have effects; then without ambiguity, we can write $H \otimes \bB \otimes L \cong H \otimes L$.
Indeed, the {\em twin ceiling} pseudo instances $H^{\uparrow} \otimes L \in \Bceiling$ given below (\Cref{def:twin}; a visual aid is deferred to \Cref{fig:twin}) have this structure and turn out to be the WORST CASES.

\begin{definition}[Twin ceiling pseudo instances]
\label{def:twin}
A {\em ceiling} pseudo instance $H^{\uparrow} \otimes \bB \otimes L \in \Bceiling$ is further called {\em twin ceiling} when it satisfies \blackref{re:twin_ceiling} and \blackref{re:twin_collapse}.
\begin{itemize}
    \item \term[\textbf{absolute ceilingness}]{re:twin_ceiling}{\bf :}
    The monopolist $H^{\uparrow}$ takes a {\em constant} bid-to-value mapping $\varphi_{H}(b) \equiv \varphi_{H}(0)$ on $b \in [0,\, \lambda]$; thus the row-$H$ entries in the bid-to-value table $\bPhi = [\phi_{\sigma,\, j}]$ are the same $\phi_{H,\, 0} = \dots = \phi_{H,\, j} = \dots = \phi_{H,\, m}$. (As before, the {\em ceiling} conditional value $P^{\uparrow} \equiv \phi_{H,\, 0}$.)
    
    \item \term[\textbf{non-monopoly collapse}]{re:twin_collapse}{\bf :}
    All non-monopoly bidders $i \in [n]$ (if existential) each exactly take the pseudo bid-to-value mapping $\varphi_{i}(b) \equiv \varphi_{L}(b)$ on $b \in [0,\, \lambda]$; thus in each column $j \in [0: m]$ of the bid-to-value table $\bPhi = [\phi_{\sigma,\, j}]$, all the non-monopoly entries are the same as the pseudo entry $\phi_{1,\, j} = \dots = \phi_{n,\, j} = \phi_{L,\, j}$.
    That is,\footnote{Following \Cref{def:ceiling_floor:restate}, $\varphi_{i}(b) \equiv \varphi_{L}(b)$ means $B'_{i}(b) \big/ B_{i}(b) \equiv 0$, which together with the boundary condition at the supremum bid $B_{i}(\lambda) = 1$ implies that $B_{i}(b) \equiv 1$ on $b \in [0,\, \lambda]$.}
    each non-monopoly bidder $i \in [n]$ has no effect $B_{i}(b) \equiv 1$ over the bid support $b \in [0,\, \lambda]$.
\end{itemize}
Since all non-monopoly bidders have no effect, this {\em twin ceiling} pseudo instance can be written as $H^{\uparrow} \otimes \bB \otimes L \cong H^{\uparrow} \otimes L$.
Denote by $\Btwin$ the space of such pseudo instances, a subset of the space of {\em ceiling} pseudo instances $\Btwin \subsetneq \Bceiling$.
\end{definition}

To convert a {\em floor}/{\em ceiling} pseudo instance $H \otimes \bB \otimes L \in (\Bfloor \cup \Bceiling)$ to another {\em twin ceiling} pseudo instance, a very first question is how to measure its distance ``$\mathrm{dist}(H \otimes \bB \otimes L,\, \Btwin)$'' to the target space $\Btwin$. Also, a {\em floor} pseudo instance $H^{\downarrow} \otimes \bB \otimes L \in \Bfloor$ should have a larger distance than the paired {\em ceiling} pseudo instance $H^{\uparrow} \otimes \bB \otimes L \in \Bfloor$, but just slightly larger since they only differ on the monopolist's conditional value $P^{\downarrow} \equiv 0$ versus $P^{\uparrow} \equiv \phi_{H,\, 0}$.

The above intuition guides us to introduce the potential $\Psi(H \otimes \bB \otimes L)$ of a {\em floor}/{\em ceiling} pseudo instance (\Cref{def:potential,fig:potential:table}). As the name ``potential'' suggests, we will transform such a pseudo instance iteratively until getting a {\em twin ceiling} pseudo instance $\tilde{H} \otimes \tilde{L} \in \Btwin$ and use the potential method to bound the number of iterations.

\begin{definition}[Potentials]
\label{def:potential}
Given a pair of {\em floor}/{\em ceiling} pseudo instances $H^{\downarrow} \otimes \bB \otimes L \in \Bfloor$ and $H^{\uparrow} \otimes \bB \otimes L \in \Bceiling$, consider their {\em common} $(n + 2)$-to-$(m + 1)$ bid-to-value table $\bPhi = [\phi_{\sigma,\, j}]$ for $\sigma \in \{H\} \cup [n] \cup \{L\}$ and $j \in [0:\, m]$ (\Cref{def:ceiling_floor:restate}):
\begin{align*}
    \bPhi ~=~
    \Big[\, \text{$\varphi_{\sigma}(b) \equiv \phi_{\sigma,\, j}$ on every piece $b \in [\lambda_{j},\, \lambda_{j + 1})$} \,\Big].
\end{align*}
An entry $\phi_{\sigma,\, j}$ is called {\em ultra-ceiling} when it {\em strictly} exceeds the ceiling value $\phi_{\sigma,\, j} > \phi_{H,\, 0} \equiv \varphi_{H}(0)$ of the monopolist $H$. Then the {\em ceiling} pseudo instance's potential counts the ultra-ceiling entries
\[
    \Psi(H^{\uparrow} \otimes \bB \otimes L) ~\eqdef~ \big|\big\{(\sigma,\, j) \in \bPhi: \phi_{\sigma,\, j} > \phi_{H,\, 0}\big\}\big|. \hspace{0.35cm}
\]
In contrast, the {\em floor} pseudo instance's potential adds an extra ONE (which accounts for the nil conditional value $P^{\downarrow} \equiv 0$ rather than $P^{\uparrow} \equiv \phi_{H,\, 0}$):
\[
    \Psi(H^{\downarrow} \otimes \bB \otimes L) ~\eqdef~ \Psi(H^{\uparrow} \otimes \bB \otimes L) ~+~ 1.~~ \hspace{1.22cm}
\]
\end{definition}

% \begin{remark}[Potentials]
% \label{rem:potential}
% Because a bid-to-value table $\bPhi = [\phi_{\sigma,\, j}]$ is row-wise increasing and column-wise decreasing (\blackref{re:monotonicity}/\blackref{re:layeredness}), as \Cref{fig:potential} illustrates,
% (i)~the ultra-ceiling entries $\phi_{\sigma,\, j} > \phi_{H,\, 0}$ form an upper-right subregion of the table $\bPhi$; and
% (ii)~the column-$0$ entries $\phi_{\sigma,\, 0}$ for $\sigma \in \{H\} \cup [n] \cup \{L\}$, especially the ceiling value $\phi_{H,\, 0}$ itself, cannot be ultra-ceiling.
% \end{remark}

\afterpage{
\begin{figure}
    \centering
    \subfloat[{The bid-to-value table $\bPhi = \big[\phi_{\sigma,\, j}\big]$; ultra-ceiling entries $\phi_{\sigma,\, j} > \phi_{H,\, 0}$ marked in blue.}
    \label{fig:potential:table}]{
    \parbox[c]{14cm}{
    {\centering
    \includegraphics[
    height = 10cm]
    {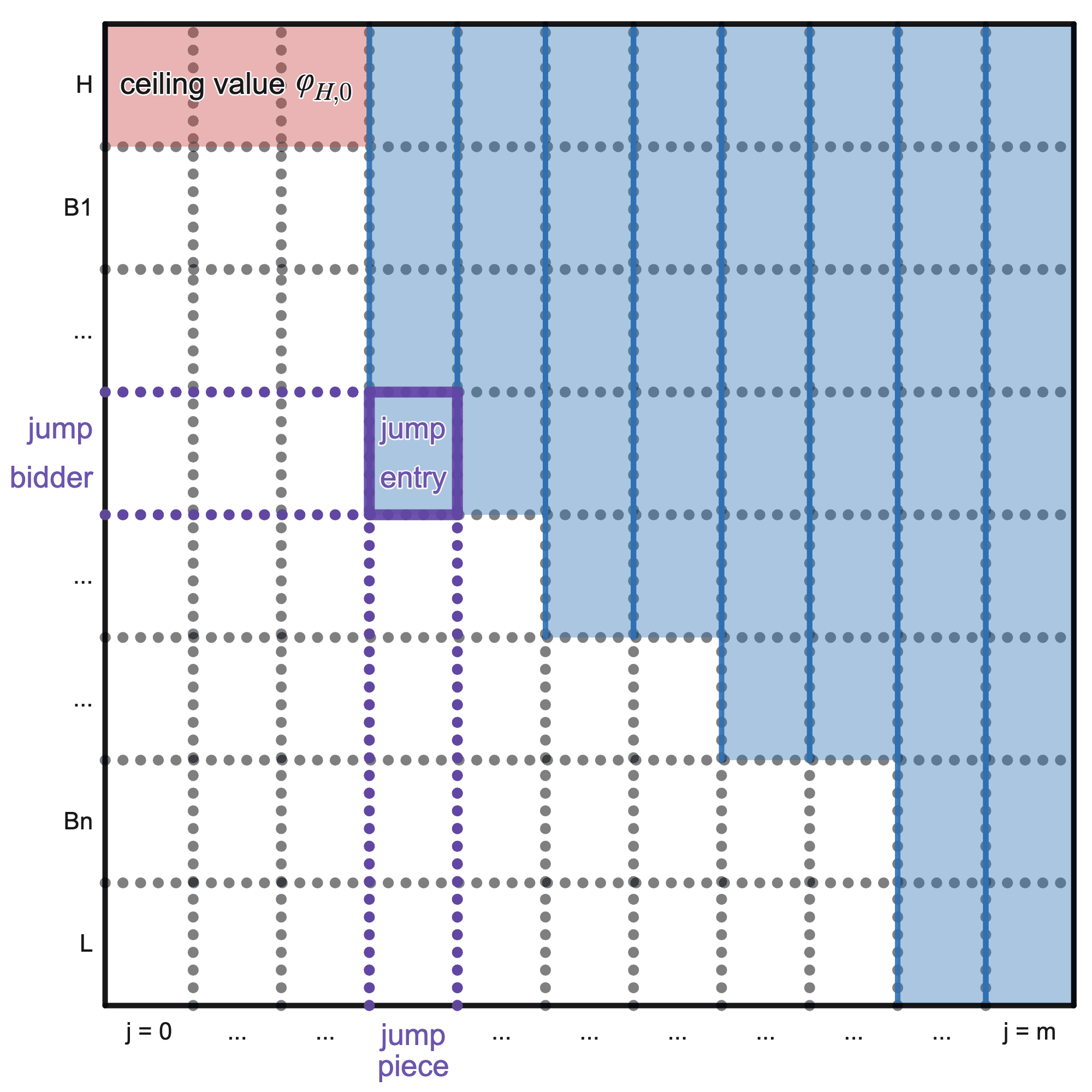}
    \par}}} \\
    \vspace{1cm}
    \subfloat[{A {\bf pseudo jump} $\sigma^{*} = L$}
    \label{fig:potential:pseudo}]{
    \includegraphics[width = .49\textwidth]
    {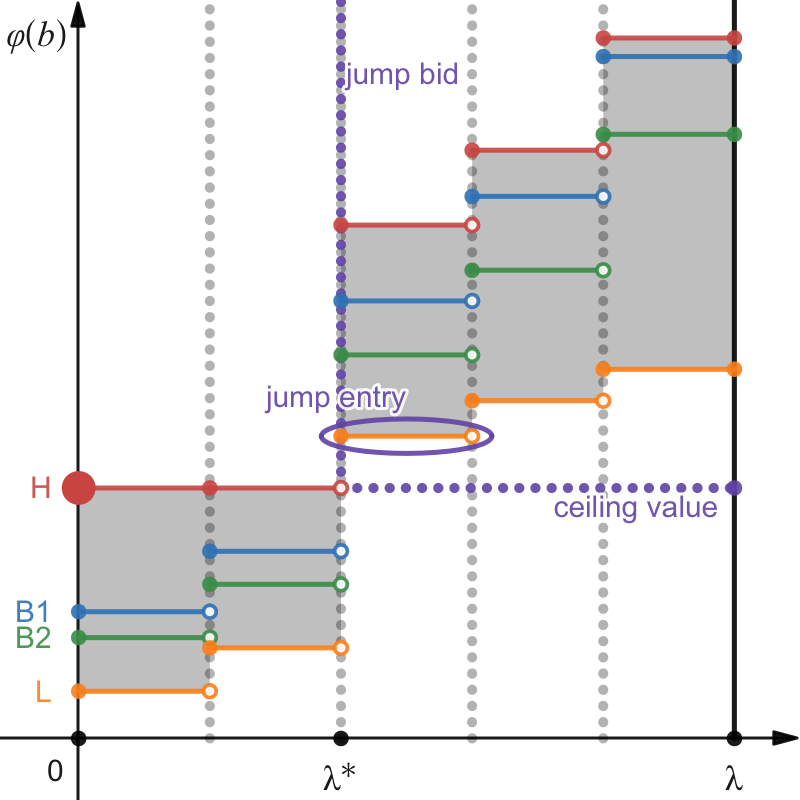}}
    \hfill
    \subfloat[{A {\bf real jump} $\sigma^{*} \in \{H\} \cup [n]$}
    \label{fig:potential:real}]{
    \includegraphics[width = .49\textwidth]
    {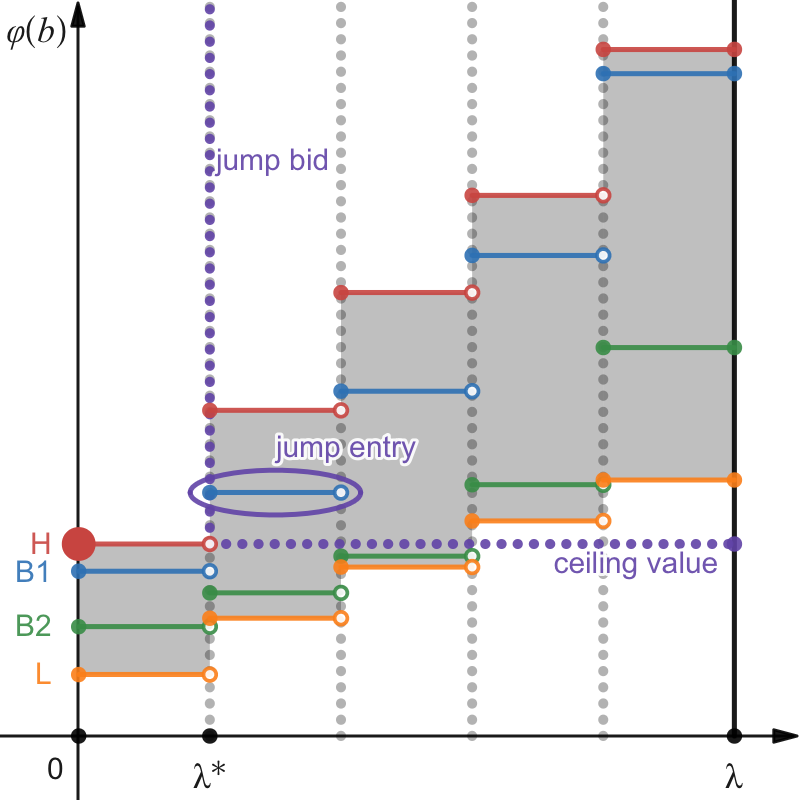}}
    \caption{Diagram of potentials and jumps of a {\em ceiling} pseudo instance $H^{\uparrow} \otimes \bB \otimes L \in \Bceiling$ (\Cref{def:potential,def:jump}).
    % The potential $\Psi(H^{\uparrow} \otimes \bB \otimes L)$ of a {\em ceiling} pseudo instance counts (blue in \Cref{fig:potential:table}) ultra-ceiling entries $\phi_{\sigma,\, j} > \phi_{H,\, 0}$ in the table $\bPhi = \big[\phi_{\sigma,\, j}\big]$.
    % The paired {\em floor} potential adds another ONE $\Psi(H^{\downarrow} \otimes \bB \otimes L) = \Psi(H^{\uparrow} \otimes \bB \otimes L) + 1$.
    }
    \label{fig:potential}
\end{figure}
\clearpage}

\Cref{lem:potential} summarizes several basic properties about the potentials.

\begin{lemma}[Potentials]
\label{lem:potential}
The following hold:
% \begin{flushleft}
\begin{enumerate}[font = {\em\bfseries}]
    \item\label{lem:potential:bound}
    A floor/ceiling pseudo instance $H \otimes \bB \otimes L \in (\Bfloor \cup \Bceiling)$ has a bounded potential \\
    $0 \leq \Psi(H \otimes \bB \otimes L) \leq |\bPhi| < +\infty$.
    
    \item\label{lem:potential:floor}
    A floor pseudo instance $H^{\downarrow} \otimes \bB \otimes L \in \Bfloor$ has a nonzero potential $\Psi(H^{\downarrow} \otimes \bB \otimes L) \geq 1$.
    
    % \item\label{lem:potential:ceiling}
    % A ceiling pseudo instance $H^{\uparrow} \otimes \bB \otimes L \in \Bceiling$ has a nonnegative potential $\Psi(H^{\uparrow} \otimes \bB \otimes L) \geq 0$.
    
    \item\label{lem:potential:ceiling}
    A ceiling pseudo instance $H^{\uparrow} \otimes \bB \otimes L \in \Bceiling$ has a zero potential $\Psi(H^{\uparrow} \otimes \bB \otimes L) = 0$ iff it satisfies \blackref{re:twin_ceiling}.
    % A twin ceiling pseudo instance $H^{\uparrow} \otimes \bB \otimes L \cong H^{\uparrow} \otimes L \in \Btwin$ thus has a zero potential $\Psi(H^{\uparrow} \otimes \bB \otimes L) = 0$.
\end{enumerate}
% \end{flushleft}
\end{lemma}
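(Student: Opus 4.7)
The plan is to verify the three parts essentially from the definitions of $\Psi$ together with \blackref{re:discretization}, \blackref{re:monotonicity}, and \blackref{re:layeredness}, so the argument should be short bookkeeping rather than a genuinely new calculation.

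For \Cref{lem:potential:bound}, I would observe that by \blackref{re:discretization} the table $\bPhi$ has $|\bPhi|=(n+2)(m+1)$ entries, which is finite since $n<+\infty$ and $m<+\infty$. The counting definition then gives $0\leq\Psi(H^{\uparrow}\otimes\bB\otimes L)\leq|\bPhi|<+\infty$ directly, and the extra $+1$ in the floor case merely shifts the range to $[1,|\bPhi|+1]$. \Cref{lem:potential:floor} is immediate from the same definition: $\Psi(H^{\downarrow}\otimes\bB\otimes L)=\Psi(H^{\uparrow}\otimes\bB\otimes L)+1\geq 1$.

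The only nontrivial item is \Cref{lem:potential:ceiling}. For the ``$\Longleftarrow$'' direction, I would assume \blackref{re:twin_ceiling}, so that $\phi_{H,0}=\phi_{H,1}=\dots=\phi_{H,m}$. Then \blackref{re:layeredness} (applied column-by-column) yields $\phi_{\sigma,j}\leq\phi_{H,j}=\phi_{H,0}$ for every $(\sigma,j)\in\bPhi$, so no entry is ultra-ceiling and $\Psi(H^{\uparrow}\otimes\bB\otimes L)=0$. For the ``$\Longrightarrow$'' direction, I would assume $\Psi(H^{\uparrow}\otimes\bB\otimes L)=0$; then in particular the row-$H$ entries satisfy $\phi_{H,j}\leq\phi_{H,0}$ for every $j\in[0{:}m]$. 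Combined with the row-$H$ instance of \blackref{re:monotonicity}, namely $\phi_{H,0}\leq\phi_{H,1}\leq\dots\leq\phi_{H,m}$, we get $\phi_{H,j}=\phi_{H,0}$ for all $j$, which is exactly \blackref{re:twin_ceiling}.

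I do not anticipate any real obstacle here: the statement is a sanity check on the chosen potential function and follows by unwinding definitions, with the only substantive input being the interplay between column-wise monotonicity on the $H$-row and the ordering within each column given by layeredness. The proof should occupy at most a short paragraph; the more interesting use of $\Psi$ will come later, when it serves as the termination measure for the iterative reductions of \Cref{subsec:main}.
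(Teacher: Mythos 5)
Your argument for \Cref{lem:potential:floor} and \Cref{lem:potential:ceiling} is exactly the paper's: part~2 is immediate from \Cref{def:potential}, and part~3 is the observation that, with the row-$H$ entries increasing (\blackref{re:monotonicity}) and each column dominated by its row-$H$ entry (\blackref{re:layeredness}), the set of ultra-ceiling entries is empty iff $\phi_{H,0}=\dots=\phi_{H,m}$; your two-direction write-up is just the expanded version of the paper's one-liner.

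There is, however, a small but genuine mismatch in your treatment of \Cref{lem:potential:bound}. The lemma asserts $\Psi(H\otimes\bB\otimes L)\leq|\bPhi|$ for \emph{both} floor and ceiling instances, whereas your argument only yields $\Psi(H^{\downarrow}\otimes\bB\otimes L)\leq|\bPhi|+1$ in the floor case (you even state the floor range as $[1,|\bPhi|+1]$), so as written you have not proved the stated bound. The missing observation, which is how the paper closes this, is that the entry $(H,0)$ can never itself be ultra-ceiling, since the defining inequality $\phi_{H,0}>\phi_{H,0}$ is strict and hence false; therefore the ceiling potential is at most $|\bPhi|-1$, and adding the extra one for the nil conditional value $P^{\downarrow}\equiv 0$ still gives $\Psi(H^{\downarrow}\otimes\bB\otimes L)\leq(|\bPhi|-1)+1=|\bPhi|$. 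The slip is immaterial for the downstream use of $\Psi$ (only finiteness is needed for the potential argument in \Cref{subsec:main}), but the one-line fix is needed to match the lemma as stated.
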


\begin{proof}
{\bf \Cref{lem:potential:bound}} follows as $\Psi(H \otimes \bB \otimes L) \leq \big(|\bPhi| - 1\big) + 1 = |\bPhi|$. Namely, the potential cannot counts the ceiling value $\phi_{H,\, 0}$ itself, but may add an extra ONE for the nil conditional value $P^{\downarrow} \equiv 0$.
{\bf \Cref{lem:potential:floor}} follows directly from \Cref{def:potential}.
{\bf \Cref{lem:potential:ceiling}} holds since (\blackref{re:monotonicity}/\blackref{re:layeredness}) the bid-to-value table $\bPhi = [\phi_{\sigma,\, j}]$ is row-wise increasing and column-wise decreasing, so the potential $\Psi(H^{\uparrow} \otimes \bB \otimes L) = \big|\big\{(\sigma,\, j) \in \bPhi: \phi_{\sigma,\, j} > \phi_{H,\, 0}\big\}\big|$ is zero iff 
(\blackref{re:twin_ceiling}) all the row-$H$ entries are the same $\phi_{H,\, 0} = \dots = \phi_{H,\, j} = \dots = \phi_{H,\, m}$.
\end{proof}

Because the table $\bPhi = [\phi_{\sigma,\, j}]$ is row-wise increasing and column-wise decreasing,
all ultra-ceiling entries $\phi_{\sigma,\, j} > \phi_{H,\, 0}$ (if existential) together form an upper-right subregion (\Cref{fig:potential}). Among those entries, the {\em leftmost-then-lowest} one $\phi_{\sigma^{*},\, j^{*}} > \phi_{H,\, 0}$ will be of particular interest and we would call it the {\em jump entry} (\Cref{def:jump}). In the sense of \Cref{fig:potential:pseudo,fig:potential:real}, the name ``jump'' describes the behavior of the bid-to-value mappings $\bvarphi = \{\varphi_{\sigma}\}_{\sigma \in \{H\} \cup [n] \{L\}}$ at that entry.

\begin{definition}[Jumps]
\label{def:jump}
For a {\em ceiling} pseudo instance $H^{\uparrow} \otimes \bB \otimes L \in \Bceiling$, define the \term[\textbf{jump entry}]{jump_entry} $(\sigma^{*},\, j^{*})$ as the {\em leftmost-then-lowest} ultra-ceiling entry $\phi_{\sigma,\, j} > \phi_{\phi_{H,\, 0}}$ (if existential; \Cref{def:potential}) of the bid-to-value table $\bPhi = [\phi_{\sigma,\, j}]$:
\begin{itemize}
    \item $j^{*} \eqdef \min \big\{ j \in [0: m]: \phi_{H,\, j} > \phi_{H,\, 0} \big\}$ denotes the {\em jump piece} of the underlying partition $\bLambda$.
    
    \item $\sigma^{*} \eqdef \max \big\{ \sigma \in \{H < 1 < \dots < n < L\}: \phi_{\sigma,\, j^{*}} > \varphi_{H,\, 0} \big\}$ denotes the {\em jump bidder}.
\end{itemize}
Technically,\footnote{That is, we will handle a \blackref{pseudo_jump} (\Cref{subsec:halve}) versus a \blackref{real_jump} (\Cref{subsec:AD}) via different approaches. But specifically for a \blackref{real_jump} $\sigma^{*} \in \{H\} \cup [n]$, there is no difference between a jump monopolist $\sigma^{*} = H$ and a jump non-monopoly bidder $\sigma^{*} \in [n]$.}
there are two types of jumps (if well-defined; see \Cref{lem:jump}):
\begin{itemize}
    \item \term[\textbf{pseudo jump}]{pseudo_jump}{\bf :}
    The jump bidder is the pseudo bidder $\sigma^{*} = L$ (\Cref{fig:potential:pseudo}). \\
    At such a jump, the bid-to-value spectrum $\big\{\, \varphi_{L}(b) \leq v \leq \varphi_{H}(b): b \in [0,\, \lambda] \,\big\}$ decomposes to two {\em disconnected} parts.
    
    \item \term[\textbf{real jump}]{real_jump}{\bf :}
    The jump bidder is one of the real bidders $\sigma^{*} \in \{H\} \cup [n]$ (\Cref{fig:potential:real}). \\
    At such a jump, the bid-to-value spectrum $\big\{\, \varphi_{L}(b) \leq v \leq \varphi_{H}(b): b \in [0,\, \lambda] \,\big\}$ is still {\em connected}.
\end{itemize}
Moreover, denote by $\lambda^{*} \eqdef \lambda_{j^{*}}$ the {\em jump bid} and by $\phi^{*} \eqdef \phi_{\sigma^{*},\, j^{*}}$ the {\em jump value}.
\end{definition}

\Cref{lem:jump} follows directly from \Cref{def:potential,def:jump}, yet we include it for completeness.

% As mentioned (\Cref{def:potential}), the ultra-ceiling entries $\phi_{\sigma,\, j} > \phi_{H,\, 0}$ of a table $\bPhi = [\phi_{\sigma,\, j}]$ form an upper-right subregion, and a {\em ceiling} pseudo instance's potential exactly counts these ultra-ceiling entries $\Psi(H^{\uparrow} \otimes \bB \otimes L) = \sum_{(\sigma,\, j) \in \bPhi} \indicator(\phi_{\sigma,\, j} > \phi_{H,\, 0})$. For these reasons, we can immediately conclude \Cref{lem:jump}.

\begin{lemma}[Jumps]
\label{lem:jump}
% \begin{flushleft}
When a ceiling pseudo instance $H^{\uparrow} \otimes \bB \otimes L \in \Bceiling$ has a nonzero potential $\Psi(H^{\uparrow} \otimes \bB \otimes L) > 0$, i.e., (\Cref{lem:potential:ceiling} of \Cref{lem:potential}) when it violates \blackref{re:twin_ceiling}:
\begin{enumerate}[font = {\em\bfseries}]
    \item The \blackref{jump_entry} $(\sigma^{*},\, j^{*}) \in (\{H\} \cup [n] \cup \{L\}) \times [m]$ exists and cannot be in column $0$.
    
    \item The jump bid as the index $j^{*} \in [m]$ partition bid $\lambda^{*} \equiv \lambda_{j^{*}} \in (0,\, \lambda)$ neither can be the nil bid $\lambda_{0} \equiv 0$ nor can be the supremum bid $\lambda_{m + 1} \equiv \lambda$.
\end{enumerate}
% \end{flushleft}
\end{lemma}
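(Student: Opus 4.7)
The plan is to prove both parts by directly unpacking \Cref{def:potential,def:jump} and invoking \blackref{re:layeredness} of the bid-to-value table $\bPhi = [\phi_{\sigma,j}]$; the argument is essentially a definition-chasing exercise.

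First, I would establish existence of the jump entry. The hypothesis $\Psi(H^{\uparrow} \otimes \bB \otimes L) > 0$ means, via \Cref{def:potential}, that the set $\mathbb{U} \eqdef \{(\sigma,j) \in \bPhi : \phi_{\sigma,j} > \phi_{H,0}\}$ of ultra-ceiling entries is nonempty. Since \blackref{re:discretization} ensures $\bPhi$ (and hence $\mathbb{U}$) is finite, the prescription $j^{*} = \min\{j : \exists \sigma,\ (\sigma,j) \in \mathbb{U}\}$ followed by $\sigma^{*} = \max\{\sigma \in \{H < 1 < \cdots < n < L\} : (\sigma, j^{*}) \in \mathbb{U}\}$ from \Cref{def:jump} yields a well-defined entry.

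Next, I would rule out column $0$. By \blackref{re:layeredness}, the column-$0$ entries are ordered $\phi_{H,0} \geq \phi_{1,0} \geq \cdots \geq \phi_{n,0} \geq \phi_{L,0}$, so every entry $(\sigma, 0)$ satisfies the weak inequality $\phi_{\sigma, 0} \leq \phi_{H,0}$ and therefore cannot meet the strict inequality $\phi_{\sigma, 0} > \phi_{H,0}$ required to belong to $\mathbb{U}$. Hence $j^{*} \geq 1$, and combined with $j^{*} \leq m$ (by the definition of $\bPhi$) we obtain $j^{*} \in [m]$, giving Part~1.

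Finally, for Part~2, the $(m+1)$-piece partition from \blackref{re:discretization} is a genuine partition of $[0,\lambda]$ into nonempty intervals, so its endpoints form a strictly increasing chain $0 = \lambda_{0} < \lambda_{1} < \cdots < \lambda_{m} < \lambda_{m+1} = \lambda$. Coupled with $j^{*} \in [m]$ from Part~1, this yields $\lambda^{*} = \lambda_{j^{*}} \in [\lambda_{1}, \lambda_{m}] \subset (0,\lambda)$, as claimed. The only point worth highlighting is the strict-versus-weak asymmetry in the column-$0$ argument (ultra-ceiling demands strict, layeredness provides only weak), but this is precisely what is needed and presents no genuine obstacle; no serious difficulty arises in this lemma.
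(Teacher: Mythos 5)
Your proof is correct and matches the paper, which gives no argument at all for this lemma beyond remarking that it "follows directly from \Cref{def:potential,def:jump}" — your definition-chasing (finiteness of $\bPhi$ for existence, \blackref{re:layeredness} to exclude column $0$, strict monotonicity of the partition points for Part~2) is exactly the intended routine verification. The only cosmetic remark: the paper's \Cref{def:jump} literally defines $j^{*}$ via the row-$H$ condition $\phi_{H,j} > \phi_{H,0}$ rather than via arbitrary ultra-ceiling entries, but \blackref{re:layeredness} makes your characterization equivalent, so nothing is lost.
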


\begin{remark}[Jumps]
\label{rem:jump}
Without ambiguity, we can slightly generalize the concept of jump. Concretely, when a ceiling pseudo instance $H^{\uparrow} \otimes \bB \otimes L \in \Bceiling$ has a zero potential $\Psi(H^{\uparrow} \otimes \bB \otimes L) = 0$, i.e., (\Cref{lem:potential}) when it satisfies \blackref{re:twin_ceiling}, we can reinterpret the {\em undefined} jump bid as the supremum bid $\lambda^{*} = \lambda_{m + 1} \equiv \lambda$.
\end{remark}

The above discussions
% on potentials and jumps (\Cref{def:potential,def:jump})
investigate the first condition, \blackref{re:twin_ceiling}, for a {\em twin ceiling} pseudo instance $H^{\uparrow} \otimes \bB \otimes L \in \Btwin$ (\Cref{def:twin}).
% (Maybe we can interpret this condition as ``the jump bid $\lambda^{*}$ takes the supremum bid $\lambda_{m + 1} = \lambda$.'')
We shall further consider the second condition, \blackref{re:twin_collapse}.
Indeed, by considering this condition just before the jump bid $b \in [0,\, \lambda^{*})$ (instead of the whole bid support $b \in [0,\, \lambda]$), we will introduce the concept of {\em strong ceiling} pseudo instances (\Cref{def:strong}).

\begin{definition}[Strong ceiling pseudo instances]
\label{def:strong}
A {\em ceiling} pseudo instance $H^{\uparrow} \otimes \bB \otimes L \in \Bceiling$ from \Cref{def:ceiling_floor:restate} is further called {\em strong ceiling} when (\term[\textbf{non-monopoly collapse}]{re:collapse}) before the jump bid $b \in [0,\, \lambda^{*})$, each non-monopoly bidder $i \in [n]$ (if existential) exactly takes the pseudo bid-to-value mapping $\varphi_{i}(b) \equiv \varphi_{L}(b)$; therefore in each before-jump column $j \in [0: j^{*} - 1]$ of the bid-to-value table $\bPhi = [\phi_{\sigma,\, j}]$, all of the non-monopoly entries are the same as the pseudo entry $\phi_{1,\, j} = \dots = \phi_{n,\, j} = \phi_{L,\, j}$.
That is, before the jump bid $b \in [0,\, \lambda^{*})$, each non-monopoly bidder $i \in [n]$ has a {\em constant} bid distribution $B_{i}(b) = B_{i}(\lambda^{*})$ and thus has no effect.\footnote{Following \Cref{def:ceiling_floor:restate}, $\varphi_{i}(b) = \varphi_{L}(b)$ means $B'_{i}(b) \big/ B_{i}(b) = 0$, which together with the boundary condition at the jump bid $\lambda^{*} \equiv \lambda_{j^{*}}$ implies that $B_{i}(b) = B_{i}(\lambda^{*})$ on $b \in [0,\, \lambda^{*}]$.}

Denote by $\Bstrong$ the space of such pseudo instances, an intermediate class between the {\em ceiling} class and the {\em twin ceiling} class $\Bceiling \supsetneq \Bstrong \supsetneq \Btwin$ (\Cref{def:ceiling_floor:restate,def:twin}).
\end{definition}

\Cref{lem:strong} follows directly from \Cref{def:twin,def:strong}, \Cref{lem:potential} (\Cref{lem:potential:ceiling}) and \Cref{rem:jump}.

\begin{lemma}[Twin ceiling pseudo instances]
\label{lem:strong}
A strong ceiling pseudo instance $H^{\uparrow} \otimes \bB \otimes L \in \Bstrong$ is further a twin ceiling pseudo instance $H^{\uparrow} \otimes \bB \otimes L \cong H^{\uparrow} \otimes L \in \Btwin$ iff one of the three equivalent conditions holds: (i)~It satisfies \blackref{re:twin_ceiling}. (ii)~Its potential is zero $\Psi(H^{\uparrow} \otimes \bB \otimes L) = 0$. (iii)~Its jump bid is the supremum bid $\lambda^{*} = \lambda$.
\end{lemma}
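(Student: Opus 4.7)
The plan is to derive the lemma essentially as an assembly of the ingredients already laid out in \Cref{def:twin,def:strong,lem:potential} and \Cref{rem:jump}. First I would observe that the equivalence of the three conditions (i), (ii), (iii) is almost immediate and does not even use the ``strong'' hypothesis: by \Cref{lem:potential:ceiling} of \Cref{lem:potential} a ceiling pseudo instance has zero potential precisely when it obeys \blackref{re:twin_ceiling}, giving (i)~$\Longleftrightarrow$~(ii); and by \Cref{rem:jump}, when $\Psi(H^{\uparrow}\otimes\bB\otimes L)=0$ the (otherwise undefined) \blackref{jump_entry} is conventionally assigned the supremum bid $\lambda^{*}=\lambda$, while conversely if $\lambda^{*}=\lambda$ then no ultra-ceiling entry exists in the bid support, so $\Psi=0$; this gives (ii)~$\Longleftrightarrow$~(iii). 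So the body of the lemma is really the equivalence ``strong ceiling + (any of the three) $\iff$ twin ceiling.''

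Next I would prove the forward direction (strong ceiling plus, say, condition (iii) implies twin ceiling). By \Cref{def:twin}, a twin ceiling pseudo instance is a ceiling one satisfying \blackref{re:twin_ceiling} (row-$H$ entries are all equal to $\phi_{H,0}$) and \blackref{re:twin_collapse} (in every column $j\in[0:m]$, all non-monopoly entries $\phi_{i,j}$ coincide with the pseudo entry $\phi_{L,j}$). The ceiling hypothesis is given. Condition (i) gives \blackref{re:twin_ceiling} directly. For \blackref{re:twin_collapse}, the strong ceiling hypothesis (\Cref{def:strong}, \blackref{re:collapse}) already forces $\phi_{1,j}=\dots=\phi_{n,j}=\phi_{L,j}$ in every \emph{before-jump} column $j\in[0:j^{*}-1]$; condition (iii) says $\lambda^{*}=\lambda$, i.e.\ $j^{*}=m+1$, so every column $j\in[0:m]$ is before-jump and the collapse holds everywhere. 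Combining, we get \blackref{re:twin_ceiling} and \blackref{re:twin_collapse}, hence $H^{\uparrow}\otimes\bB\otimes L\in\Btwin$, and the ``$\cong H^{\uparrow}\otimes L$'' identification is the one stated in \Cref{def:twin} (each non-monopoly bidder is a dummy $B_{i}(b)\equiv 1$).

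For the reverse direction, a twin ceiling pseudo instance trivially satisfies \blackref{re:twin_ceiling} by definition, so (i) holds, and then (ii), (iii) follow from the equivalences established in the first paragraph. This closes the iff.

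I do not anticipate any real obstacle here; the only thing to be careful about is bookkeeping between the ``before-jump'' qualifier in \blackref{re:collapse} and the ``everywhere'' quantifier in \blackref{re:twin_collapse}, which is exactly the gap that condition (iii) (equivalently (i) or (ii)) closes. Once that observation is made, the proof is essentially a one-line verification in each direction, so I would keep the write-up short and appeal directly to \Cref{def:twin,def:strong,lem:potential} and \Cref{rem:jump} rather than re-deriving any of the underlying structure.
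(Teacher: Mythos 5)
Your proposal is correct and follows essentially the same route as the paper, which states that \Cref{lem:strong} follows directly from \Cref{def:twin,def:strong}, \Cref{lem:potential} (\Cref{lem:potential:ceiling}) and \Cref{rem:jump}; your write-up just makes that assembly explicit, including the key bookkeeping that condition~(iii) turns the ``before-jump'' collapse of \blackref{re:collapse} into the everywhere collapse \blackref{re:twin_collapse}.
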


\afterpage{
\begin{figure}
    \centering
    \includegraphics[width = .45\textwidth, height = 6.375cm]
    {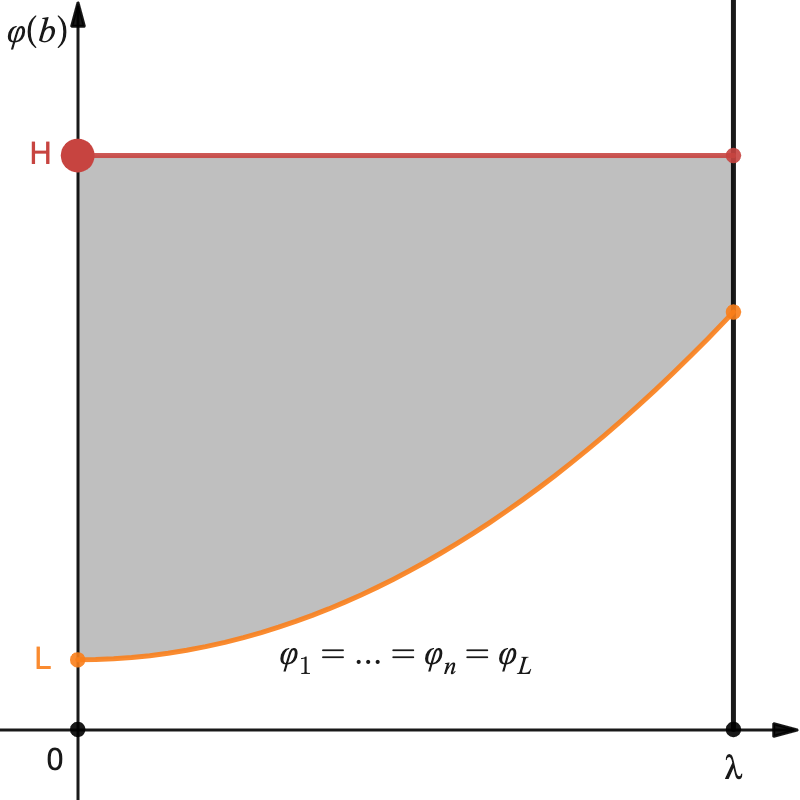}
    \caption{Example of {\em twin ceiling} pseudo instances.
    \label{fig:twin}}
\end{figure}
\begin{figure}
    \centering
    \subfloat[\label{fig:reduction:floor}
    {\em floor}, thus {\em non-ceiling}]{
    \includegraphics[width = .45\textwidth, height = 6.375cm]
    {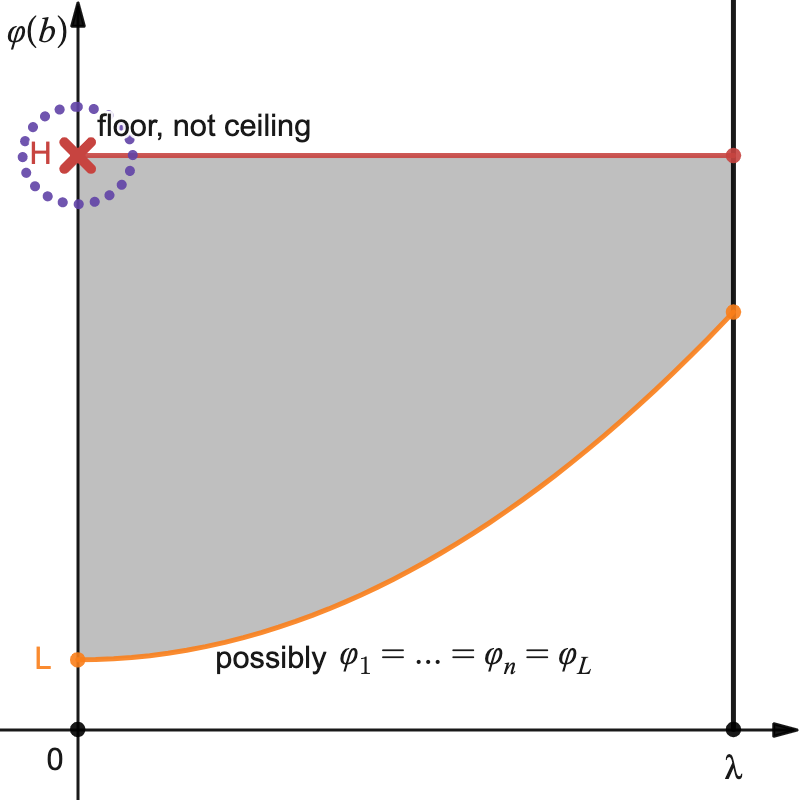}}
    \hfill
    \subfloat[\label{fig:reduction:non_strong}
    {\em ceiling} but {\em non strong ceiling}]{
    \includegraphics[width = .45\textwidth, height = 6.375cm]
    {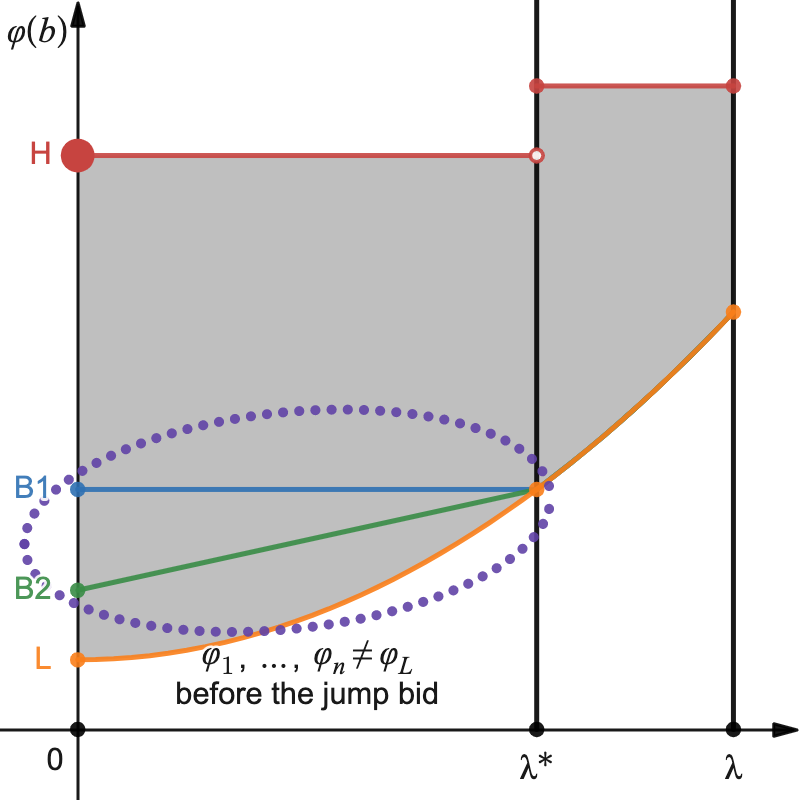}} \\
    \subfloat[\label{fig:reduction:pseudo}
    {\em strong} but {\em non twin ceiling}; {\bf pseudo jump}]{
    \includegraphics[width = .45\textwidth, height = 6.375cm]
    {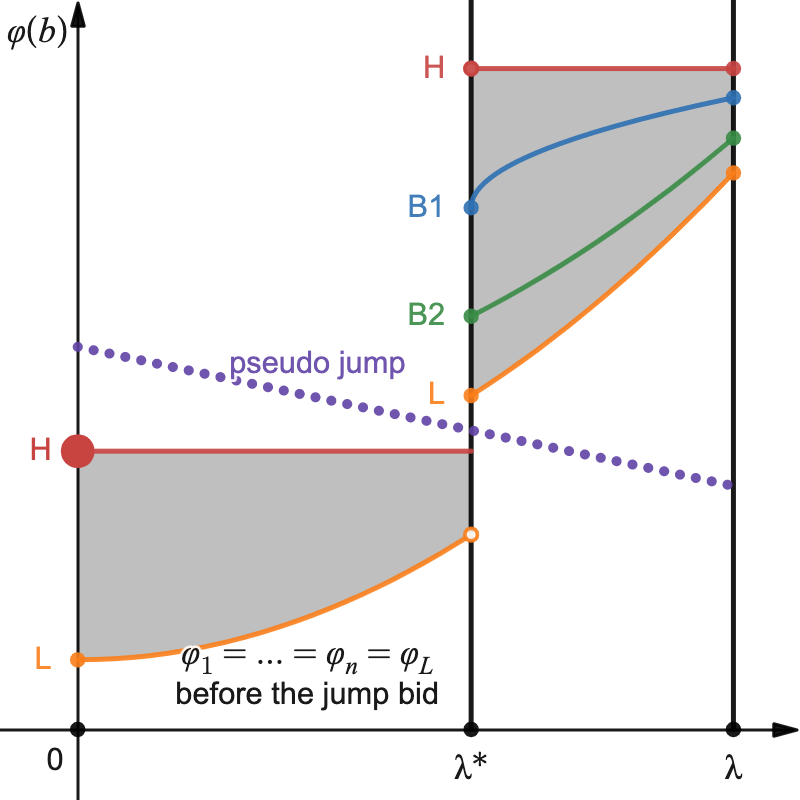}}
    \hfill
    \subfloat[\label{fig:reduction:real}
    {\em strong} but {\em non twin ceiling}; {\bf real jump}]{
    \includegraphics[width = .45\textwidth, height = 6.375cm]
    {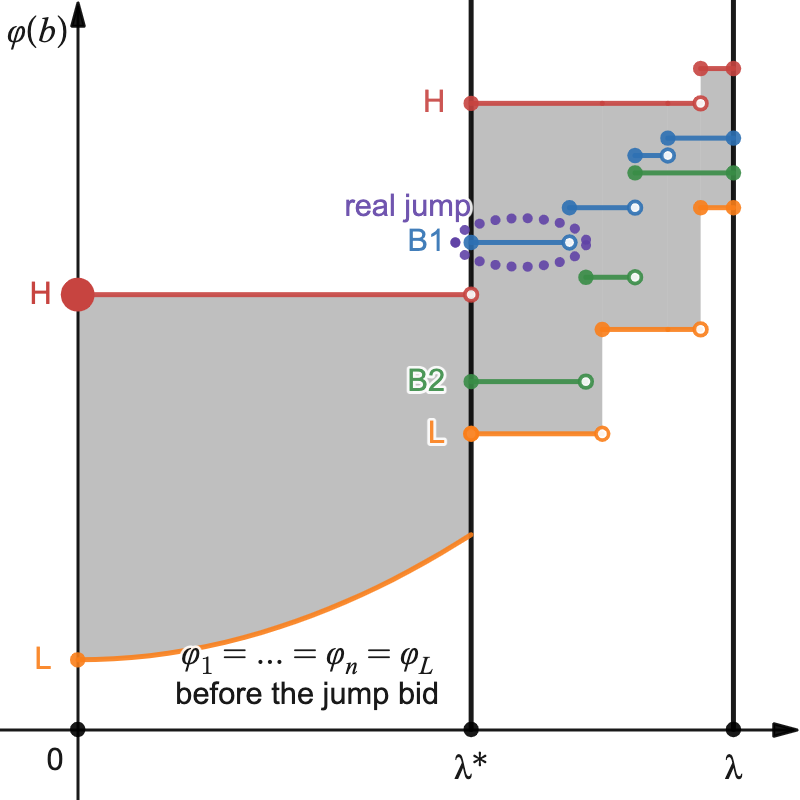}}
    \caption{Diagram of the case analysis in \Cref{sec:reduction}.
    \label{fig:reduction}}
\end{figure}
\clearpage}

\noindent
\term[\textbf{Sketch}]{ref:reduction:sketch}{\bf .}
Based on the above materials, now we are ready to describe how to transform {\em floor}/{\em ceiling} pseudo instances $(\Bfloor \cup \Bceiling)$ into {\em twin ceiling} pseudo instances $\Btwin$, as claimed. Indeed, for an {\em undesirable} pseudo instance $H \otimes \bB \otimes L \in (\Bfloor \cup (\Bceiling \setminus \Btwin))$, there are four types of possible modifications, which are one-to-one illustrated in \Cref{fig:reduction:floor,fig:reduction:non_strong,fig:reduction:pseudo,fig:reduction:real}. (Cf.\ \Cref{fig:twin} for comparison with {\em twin ceiling} pseudo instances $\in \Btwin$. Recall that $\Btwin \subsetneq \Bstrong \subsetneq \Bceiling$.)
\begin{flushleft}
\begin{itemize}
    \item {\bf $H \otimes \bB \otimes L \in \Bfloor$:}
    A {\em floor} pseudo instance (\Cref{fig:reduction:floor}){\bf .} \\
    We will deal with this case in \Cref{subsec:slice}, using the \blackref{alg:slice} reduction.
    
    \item {\bf $H \otimes \bB \otimes L \in (\Bceiling \setminus \Bstrong)$:}
    A {\em ceiling} but {\em non strong ceiling} pseudo instance (\Cref{fig:reduction:non_strong}){\bf .} \\
    We will deal with this case in \Cref{subsec:collapse}, using the \blackref{alg:collapse} reduction.
    
    \item {\bf $H \otimes \bB \otimes L \in (\Bstrong \setminus \Btwin)$; \blackref{pseudo_jump} $\sigma^{*} = L$:}
    A {\em strong ceiling} but {\em non twin ceiling} pseudo instance, and the jump bidder is the pseudo bidder $\sigma^{*} = L$ (\Cref{fig:reduction:pseudo}){\bf .} \\
    We will deal with this case in \Cref{subsec:halve}, using the \blackref{alg:halve} reduction.
    
    \item {\bf $H \otimes \bB \otimes L \in (\Bstrong \setminus \Btwin)$; \blackref{real_jump} $\sigma^{*} \neq L$:} 
    A {\em strong ceiling} but {\em non twin ceiling} pseudo instance, and the jump bidder is one of real bidders $\sigma^{*} \in \{H\} \cup [n]$ (\Cref{fig:reduction:real}){\bf .} \\
    We will deal with this case in \Cref{subsec:AD}, using the \blackref{alg:AD} reduction.
\end{itemize}
\end{flushleft}
As mentioned, in \Cref{subsec:main} we will leverage all the four reductions to build the \blackref{alg:main} procedure, and upper bound its running time through the {\em potential method}.

\subsection{{\slice}: From floor to ceiling}
\label{subsec:slice}

This subsection presents the {\blackref{alg:slice}} reduction (see \Cref{fig:alg:slice,fig:slice} for its description and a visual aid), which transforms (\Cref{def:ceiling_floor:restate}) a {\em floor} pseudo instance $H^{\downarrow} \otimes \bB \otimes L \in \Bfloor$ into a {\em ceiling} pseudo instance $\tilde{H}^{\uparrow} \otimes \tilde{\bB} \otimes \tilde{L} \in \Bceiling$.

% \begin{intuition*}
% \yj{to continue}
% \end{intuition*}

\Cref{lem:slice} summarizes performance guarantees of the \blackref{alg:slice} reduction. (The intuition of this reduction might be obscure. In brief, it generalizes the ideas behind the \blackref{alg:translate} reduction and the \blackref{alg:polarize} reduction from \Cref{sec:preprocessing}.)

\begin{lemma}[{\slice}; \Cref{fig:slice}]
\label{lem:slice}
Under reduction $\tilde{H}^{\uparrow} \otimes \tilde{\bB} \otimes \tilde{L} \gets \slice(H^{\downarrow} \otimes \bB \otimes L)$:
\begin{enumerate}[font = {\em\bfseries}]
    \item\label{lem:slice:property}
    The output is a ceiling pseudo instance $\tilde{H}^{\uparrow} \otimes \tilde{\bB} \otimes \tilde{L} \in \Bceiling$.
    
    \item\label{lem:slice:potential}
    The potential strictly decreases $\Psi(\tilde{H}^{\uparrow} \otimes \tilde{\bB} \otimes \tilde{L}) \leq \Psi(H^{\downarrow} \otimes \bB \otimes L) - 1$.
    
    \item\label{lem:slice:poa}
    A (weakly) worse bound is yielded $\PoA(\tilde{H}^{\uparrow} \otimes \tilde{\bB} \otimes \tilde{L}) \leq \PoA(H^{\downarrow} \otimes \bB \otimes L)$.
\end{enumerate}
\end{lemma}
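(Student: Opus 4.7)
The plan is to verify the three claims in order, since each one becomes more delicate than the last. For \textbf{Item~1}, I would inspect the output produced by \slice (which, from the informal description in \Cref{sec:overview}, switches the monopolist's conditional value from $P^{\downarrow}\equiv 0$ to $\tilde{P}^{\uparrow}\equiv\tilde{\varphi}_{H}(0)$ while simultaneously modifying the left end of the mappings $\bvarphi$ to compensate) and check the defining conditions of \Cref{def:ceiling_floor:restate} in turn: a finite piecewise-constant partition (\blackref{re:discretization}), row-monotonicity of $\tilde{\bPhi}$ (\blackref{re:monotonicity}), column-layeredness $\tilde{\varphi}_{H}\ge\tilde{\varphi}_{1}\ge\dots\ge\tilde{\varphi}_{n}\ge\tilde{\varphi}_{L}$ (\blackref{re:layeredness}), and ceilingness $\tilde{P}^{\uparrow}\equiv\tilde{\phi}_{H,0}$ (\blackref{re:ceilingness}). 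The well-definedness of the reconstructed CDFs $\tilde{\bB}\otimes\tilde{L}$ then reduces, via \Cref{lem:pseudo_distribution}, to checking \Cref{lem:pseudo_mapping:dominance} and \Cref{lem:pseudo_mapping:inequality} of \Cref{lem:pseudo_mapping}, both of which should follow because the only structural change is localized at the left end of the bid support and preserves the ordering that the input instance already satisfies.

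For \textbf{Item~2} on the potential, I would compare $\Psi$ before and after using \Cref{def:potential}. The input is a floor instance, so its potential is automatically one bigger than the count of ultra-ceiling entries in $\bPhi$. After \slice, we obtain a ceiling instance, so we lose this free $+1$ immediately. The remaining task is to check that the reduction does not introduce any \emph{new} ultra-ceiling entries; this should follow because the ceiling threshold becomes $\tilde{\phi}_{H,0}$, which by construction is at least as large as $\phi_{H,0}$ (since \slice merely determines $P$ at the ceiling and may also lift $\tilde{\varphi}_{H}(0)$), so every entry that was not ultra-ceiling before remains not ultra-ceiling. Combined with the lost $+1$, this yields the strict drop $\Psi(\tilde{H}^{\uparrow}\otimes\tilde{\bB}\otimes\tilde{L})\le\Psi(H^{\downarrow}\otimes\bB\otimes L)-1$.

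\textbf{Item~3} is where I expect the main obstacle. The strategy mirrors the win-win argument used for \blackref{alg:polarize} in \Cref{subsec:polarize}: express
\[
\FPA(\tilde{H}^{\uparrow}\otimes\tilde{\bB}\otimes\tilde{L})\;=\;\FPA(H^{\downarrow}\otimes\bB\otimes L)+\Delta_{\FPA},\qquad \OPT(\tilde{H}^{\uparrow}\otimes\tilde{\bB}\otimes\tilde{L})\;=\;\OPT(H^{\downarrow}\otimes\bB\otimes L)+\Delta_{\OPT},
\]
using the welfare formulas from \Cref{lem:pseudo_welfare}/\Cref{lem:translate_welfare}, and show that $\Delta_{\FPA}/\Delta_{\OPT}\le\PoA(H^{\downarrow}\otimes\bB\otimes L)$. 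The difficulty is that, unlike \blackref{alg:polarize}, here both the bid distributions $\bB\otimes L$ \emph{and} the conditional value $P$ change, so $\Delta_{\FPA}$ and $\Delta_{\OPT}$ do not factor as cleanly into a conditional-value integral against a fixed quasi first-order distribution $\calV$. I plan to handle this by decomposing the modification into two pieces: (a) the ``free'' change from $P^{\downarrow}\equiv 0$ to $P^{\uparrow}\equiv\phi_{H,0}$ with $\bB\otimes L$ held fixed, which by the \polarize-style concavity/convexity comparison (Fact~1 of \Cref{lem:polarize}) already worsens the ratio; and (b) the additional reweighting of the left portion of $\bB\otimes L$, for which I would show that the extra $\FPA$ contribution on the newly-inserted slice is bounded above by the corresponding extra $\OPT$ contribution times $\PoA(H^{\downarrow}\otimes\bB\otimes L)$. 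Combining (a) and (b) via a mediant inequality $\frac{a+a'}{b+b'}\le\max\{\tfrac{a}{b},\tfrac{a'}{b'}\}$ gives the claim. The bookkeeping between the two steps, together with verifying that the compensating modification of $\varphi_{H}$ near $b=0$ is tight enough to keep (b) valid, is the main technical hurdle.
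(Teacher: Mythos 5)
Your plan for Item~3 has a genuine gap, and it stems from a misreading of what \slice{} actually does. The reduction does not ``switch $P$ and compensate at the left end of the mappings''; it considers the whole one-parameter family of leftward translates $B_{\sigma}^{(t)}(b)=B_{\sigma}(b+t)$ of the floor instance, picks the translate $t^{*}$ that \emph{minimizes} the floor {\PoA} (well defined because the {\PoA} of the translates tends to $1$ as $t\nearrow\lambda$), and only then switches the conditional value from $P^{\downarrow}\equiv 0$ to the ceiling value. The reason this ordering matters is exactly the reason your step~(a) fails: for a floor instance that reaches \slice{} (e.g.\ one output by \polarize{}, whose tie-breaking favors ceiling), the ceiling candidate with the \emph{same} bid distributions is typically strictly {\PoA}-\emph{better}, not worse --- otherwise \polarize{} would already have returned the ceiling instance. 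Fact~1 of \Cref{lem:polarize} only gives $\Delta_{\FPA}^{\downarrow}/\Delta_{\OPT}^{\downarrow}\geq\Delta_{\FPA}^{\uparrow}/\Delta_{\OPT}^{\uparrow}$, i.e.\ that at least one of the two polarizations is weakly worse; it does not give the one-sided statement ``raising $P$ to the ceiling worsens the ratio'' that your decomposition (a)+(b) and the mediant inequality would need. So the inequality $\Delta_{\FPA}/\Delta_{\OPT}\leq\PoA(H^{\downarrow}\otimes\bB\otimes L)$ is simply not available for an arbitrary floor input, and no amount of bookkeeping in step~(b) can recover it.

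The paper's proof supplies the missing inequality variationally: after assuming WLOG that the minimizer is $t^{*}=0$, the first-order optimality condition $\frac{d}{dt}\PoA(H^{(t)\downarrow}\otimes\bB^{(t)}\otimes L^{(t)})\big|_{t=0}\geq 0$ is rewritten as $\FF(0)/\OO(0)\geq|\FF'(0)|/|\OO'(0)|$, and the explicit derivatives $\FF'(0)=-(1-\calB(0))-\sum_{\sigma}\frac{B'_{\sigma}(0)}{B_{\sigma}(0)}\phi_{\sigma,0}\calB(0)$ and $\OO'(0)=-(1-\calB(0))-\sum_{\sigma}\frac{B'_{\sigma}(0)}{B_{\sigma}(0)}\int_{0}^{\phi_{\sigma,0}}\calV(v)\,dv$ are then relaxed (dropping the common term, replacing each $\phi_{\sigma,0}$ by the largest value $\phi_{H,0}$, using monotonicity of $\calV$) to obtain precisely $\PoA\geq\phi_{H,0}\calB(0)\big/\int_{0}^{\phi_{H,0}}\calV(v)\,dv=\Delta_{\FPA}/\Delta_{\OPT}$, which makes the floor-to-ceiling switch at $t^{*}$ {\PoA}-nonincreasing. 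Your Items~1 and~2 are salvageable in spirit (the output is ceiling by construction, and the potential argument ``lose the extra $+1$, the ceiling value only weakly rises under the shift so no new ultra-ceiling entries appear'' matches the paper once phrased for translates), but Item~3 requires this optimality-at-the-minimizer argument, or some substitute for it, which your proposal does not contain.
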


\afterpage{
\begin{figure}[t]
    \centering
    \begin{mdframed}
    Reduction $\term[\slice]{alg:slice}(H^{\downarrow} \otimes \bB \otimes L)$
    
    \begin{flushleft}
    {\bf Input:}
    A {\em floor} pseudo instance $H^{\downarrow} \otimes \bB \otimes L \in \Bfloor$.\white{\term[\text{\em input}]{slice_input}}
    \hfill
    \Cref{def:ceiling_floor}
    
    \vspace{.05in}
    {\bf Output:}
    A {\em ceiling} pseudo instance $\tilde{H}^{\uparrow} \otimes \tilde{\bB} \otimes \tilde{L} \in \Bceiling$.\white{\term[\text{\em output}]{slice_output}}
    \hfill
    \Cref{def:ceiling_floor}
    
    \begin{enumerate}
        \item\label{alg:slice:spectrum}
        Define a spectrum of {\em floor} pseudo instances $H^{(t)\downarrow} \otimes \bB^{(t)} \otimes L^{(t)}$, with $0 \leq t < \lambda$, \\
        given by $B_{\sigma}^{(t)}(b) \equiv B_{\sigma}(b + t) \cdot \indicator(b \geq 0)$ for $\sigma \in \{L\} \cup [n] \cup \{L\}$.
        \white{\term[\text{\em interim}]{slice_interim}}
        
        \item\label{alg:slice:minimum}
        Define the \term[\text{\em minimizer}]{slice_minimizer} $t^{*} \eqdef \argmin \big\{\, \PoA(H^{(t)\downarrow} \otimes \bB^{(t)} \otimes L^{(t)}): 0 \leq t < \lambda \,\big\}$; \\
        breaking ties in favor of the smallest one among all alternatives.
        
        \item\label{alg:slice:output}
        {\bf Return} $H^{(t^{*})\uparrow} \otimes \bB^{(t^{*})} \otimes L^{(t^{*})}$, the particular {\em ceiling} pseudo instance given by the $t^{*}$. \\
        \OliveGreen{$\triangleright$ This is paired with the minimizer {\em floor} pseudo instance $H^{(t^{*})\downarrow} \otimes \bB^{(t^{*})} \otimes L^{(t^{*})}$.}
    \end{enumerate}
    \end{flushleft}
    \end{mdframed}
    \caption{The {\slice} reduction
    \label{fig:alg:slice}}
\end{figure}
\begin{figure}
    \centering
    \subfloat[\label{fig:slice:input}
    The {\em floor} input $H^{\downarrow} \otimes \bB \otimes L \in \Bfloor$]{
    \includegraphics[width = .49\textwidth]
    {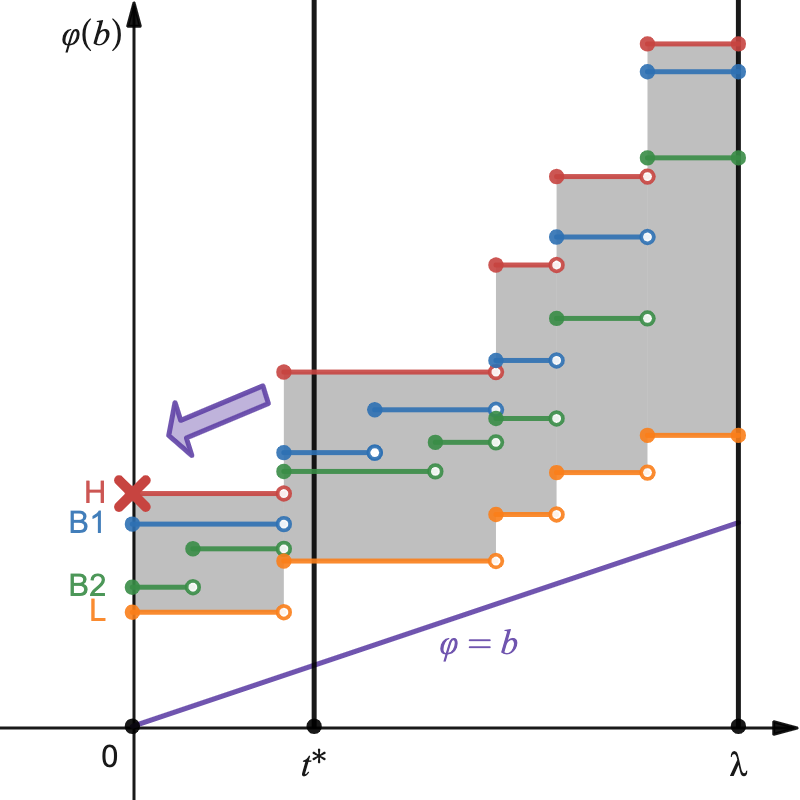}}
    \hfill
    \subfloat[\label{fig:slice:output}
    The {\em ceiling} output $H^{(t^{*})\uparrow} \otimes \bB^{(t^{*})} \otimes L^{(t^{*})} \in \Bceiling$]{
    \includegraphics[width = .49\textwidth]
    {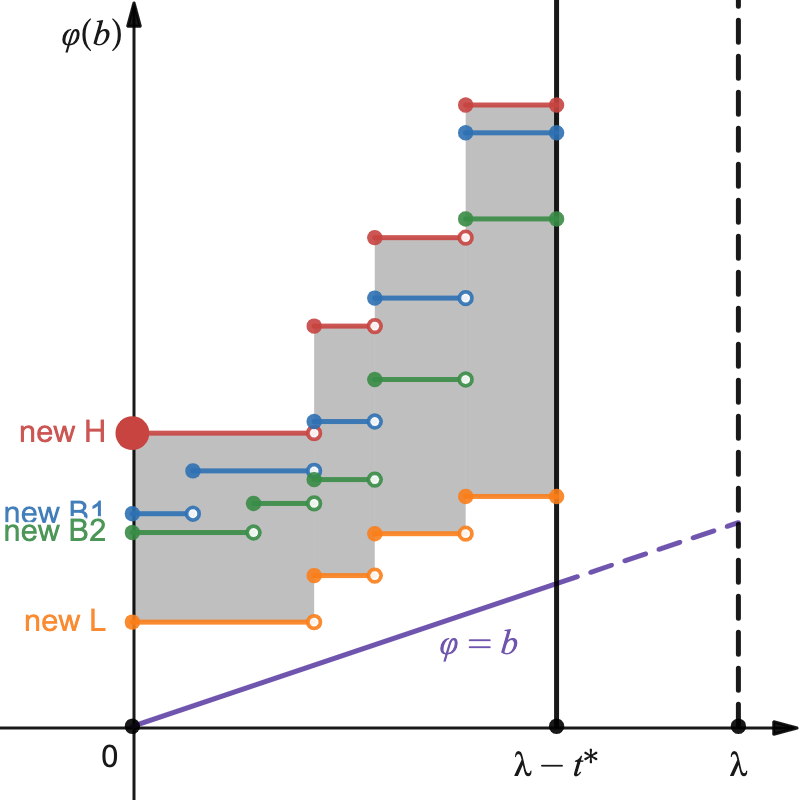}}
    \caption{Diagram of the {\slice} reduction (\Cref{fig:alg:slice}), which transforms (\Cref{def:ceiling_floor:restate}) a {\em floor} pseudo instance $H^{\downarrow} \otimes \bB \otimes L \in \Bfloor$ into a {\em ceiling} pseudo instance $H^{(t^{*})\uparrow} \otimes \bB^{(t^{*})} \otimes L^{(t^{*})} \in \Bceiling$.
    \label{fig:slice}}
\end{figure}
\clearpage}

\newcommand{\FF}{\mathfrak{F}}
\newcommand{\OO}{\mathfrak{O}}
\newcommand{\PP}{\mathfrak{P}}

\begin{proof}
See \Cref{fig:slice} for a visual aid.
Without loss of generality, we assume that the input bound is strictly less than one $\PoA(H^{\downarrow} \otimes \bB \otimes L) < 1$; otherwise, the input {\em floor} pseudo instance $H^{\downarrow} \otimes \bB \otimes L$ cannot be (one of) the worst cases.
For brevity, let $[N] \equiv \{H\} \cup [n] \cup \{L\}$.

The \blackref{alg:slice} reduction
(Line~\ref{alg:slice:spectrum}) considers a spectrum $\big\{\, H^{(t)\downarrow} \otimes \bB^{(t)} \otimes L^{(t)}: 0 \leq t < \lambda \,\big\}$ of \blackref{slice_interim} pseudo instances,
then (Line~\ref{alg:slice:minimum}) takes the {\PoA}-minimizer $t^{*} \in [0,\, \lambda)$,
and then (Line~\ref{alg:slice:output}) outputs the {\em ceiling} pseudo instance $H^{(t^{*})\uparrow} \otimes \bB^{(t^{*})} \otimes L^{(t^{*})}$ given by the $t^{*}$.

Our analysis relies on {\bf \Cref{fact:slice:minimizer}}. But we defer its proof to {\bf \Cref{lem:slice:poa}} to ease readability.

\setcounter{fact}{0}

\begin{fact}
\label{fact:slice:minimizer}
% \begin{flushleft}
The minimizer $t^{*} = \argmin \big\{\, \PoA(H^{(t)\downarrow} \otimes \bB^{(t)} \otimes L^{(t)}): 0 \leq t < \lambda \,\big\}$, breaking ties in favor of the smallest one among all the alternatives (Line~\ref{alg:slice:minimum}), is well defined.
% and bounded away from the supremum bid $\lambda$.
% \end{flushleft}
\end{fact}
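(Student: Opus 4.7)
The plan is to express $f(t) \equiv \PoA(H^{(t)\downarrow} \otimes \bB^{(t)} \otimes L^{(t)}) = \FF(t)/\OO(t)$ on $t \in [0, \lambda)$ as a continuous function with boundary limit $\lim_{t \nearrow \lambda} f(t) = 1$, and then invoke compactness on $[0,\lambda]$. From $B_{\sigma}^{(t)}(b) \equiv B_{\sigma}(b+t) \cdot \indicator(b \geq 0)$ a direct check yields $\varphi_{\sigma}^{(t)}(b) = \varphi_{\sigma}(b+t) - t$ and $\calB^{(t)}(b) = \calB(b+t)$; substituting $u = b + t$ into \Cref{lem:pseudo_welfare} (with $P^{\downarrow} \equiv 0$ killing the boundary term $\Ex[P^{\downarrow}] \cdot \calB^{(t)}(0)$) gives
\[
\FF(t) \;=\; \sum_{\sigma} \int_t^{\lambda} \bigl(\varphi_{\sigma}(u) - t\bigr) \cdot \frac{B_{\sigma}'(u)}{B_{\sigma}(u)} \cdot \calB(u) \cdot \d u,
\]
together with an analogous Lebesgue integral for $\OO(t)$ built from $V_{\sigma}^{(t)}(v) = B_{\sigma}(t) + \int_t^{\lambda} B_{\sigma}'(u) \cdot \indicator(\varphi_{\sigma}(u) \leq v + t) \cdot \d u$. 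From these representations, $f$ is continuous on $[0, \lambda)$ --- in fact piecewise analytic between consecutive partition points $\{\lambda_j\}$ supplied by \blackref{re:discretization} --- and $\OO(t) > 0$ throughout the interval, since $\calV^{(t)}(0) = \calB(t) < 1$ for $t < \lambda$.

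The main step, and the principal obstacle, is establishing the boundary limit $\lim_{t \nearrow \lambda} f(t) = 1$. Writing $\epsilon \equiv \lambda - t$, the event that two or more bidders in the translated instance strictly exceed the infimum bid $0$ has probability $O(\epsilon^{2})$, negligible. The complementary scenarios split into (i) every bidder at the infimum with floor value $0$ (contributing nothing to either $\FF$ or $\OO$) and (ii) exactly one bidder $\sigma$ bidding above $0$, with probability $B_{\sigma}'(\lambda) \cdot \epsilon + O(\epsilon^{2})$. In case (ii) this unique positive bidder simultaneously wins the first-price auction and realizes the maximum value $\varphi_{\sigma}(\lambda) - \lambda + O(\epsilon) > 0$ (strictly positive by \Cref{lem:pseudo_mapping:dominance} of \Cref{lem:pseudo_mapping}), so the contributions to $\FF(t)$ and $\OO(t)$ coincide exactly on this event. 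Summing over $\sigma$ gives $\FF(t) = \OO(t) + O(\epsilon^{2})$, while both quantities equal $\epsilon \cdot \sum_{\sigma} B_{\sigma}'(\lambda) \cdot (\varphi_{\sigma}(\lambda) - \lambda) + O(\epsilon^{2})$, of strict order $\epsilon$; hence $f(t) \to 1$.

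Consequently, $f$ extends continuously to the compact interval $[0, \lambda]$ by setting $f(\lambda) \equiv 1$, and attains its minimum there. This minimum cannot be at the endpoint $\lambda$: otherwise $f \equiv 1$ on $[0, \lambda]$ (since $f \leq 1$ always), contradicting the WLOG assumption $f(0) = \PoA(H^{\downarrow} \otimes \bB \otimes L) < 1$. Thus the minimum lies in $[0, \lambda)$; by continuity the set of minimizers is bounded away from $\lambda$ and closed in $[0, \lambda]$, hence compact, and therefore admits a smallest element, which we designate $t^{*}$. This establishes the claimed well-definedness of the tiebroken minimizer.
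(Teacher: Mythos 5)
Your proposal is correct, and its overall skeleton (continuity of $\PP(t) = \FF(t)/\OO(t)$ on $[0,\lambda)$, the boundary limit $\lim_{t \nearrow \lambda}\PP(t) = 1$, and then attainment of the minimum using the standing assumption $\PP(0) < 1$) is exactly the paper's; the difference lies in how the boundary limit is proved. The paper stays inside the last partition piece $t \in [\lambda_m,\lambda)$ and derives the \emph{non-asymptotic} inequality $\OO(t) \leq \FF(t)/\calB(t)$ by a union bound on the optimal-welfare integrand followed by a term-by-term comparison with the auction-welfare integral, so that $\PP(t) \geq \calB(t)$ and the squeeze theorem finishes. You instead do a first-order expansion in $\epsilon = \lambda - t$: the all-at-infimum event contributes nothing (floorness), the single-positive-bidder events make $\FF$ and $\OO$ agree exactly, and multi-bidder events are $O(\epsilon^2)$, so $\FF(t) = \OO(t) + O(\epsilon^2)$ with both of order $\epsilon$. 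This is a legitimate alternative and arguably more probabilistically transparent, but it needs one detail you leave implicit: the leading coefficient $\sum_{\sigma} B'_{\sigma}(\lambda)\,(\varphi_{\sigma}(\lambda)-\lambda)$ must be strictly positive, i.e.\ some bidder has nonzero density at $\lambda$. This does hold, since $\sum_{\sigma} B'_{\sigma}(\lambda)/B_{\sigma}(\lambda) = \calB'(\lambda)/\calB(\lambda) = (\varphi_{L}(\lambda)-\lambda)^{-1} > 0$ on the last piece, but it should be said; otherwise the $O(\epsilon^2)$ error estimate alone would not force the ratio to $1$. The paper's route avoids this issue (its bound $\PP(t)\geq\calB(t)$ needs no expansion), while your route has the small bonus of spelling out the compactness/closedness argument for the tiebroken smallest minimizer, which the paper treats as immediate.
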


\noindent
{\bf \Cref{lem:slice:property}.}
Every \blackref{slice_interim} $H^{(t)\downarrow} \otimes \bB^{(t)} \otimes L^{(t)}$ preserves \blackref{re:discretization}, \blackref{re:monotonicity} and \blackref{re:layeredness}:
This just shifts the input {\em floor} pseudo instance $H^{\downarrow} \otimes \bB \otimes L \equiv H^{(0)\downarrow} \otimes \bB^{(0)} \otimes L^{(0)} \in \Bfloor$ (\Cref{def:ceiling_floor:restate}) by a distance of $-t$, and restricts the support to the nonnegative bids $b \in [0,\, \lambda - t]$.
Every \blackref{slice_interim} $H^{(t)\downarrow} \otimes \bB^{(t)} \otimes L^{(t)}$ further promises \blackref{re:floorness}, thus being a {\em floor} pseudo instance,
including the \blackref{slice_minimizer} $H^{(t^{*})\downarrow} \otimes \bB^{(t^{*})} \otimes L^{(t^{*})} \in \Bfloor$ since the $t^{*} \in [0,\, \lambda)$ is well defined ({\bf \Cref{fact:slice:minimizer}}).
Hence, the paired \blackref{slice_output} $H^{(t^{*})\uparrow} \otimes \bB^{(t^{*})} \otimes L^{(t^{*})}$ satisfies all the required conditions, \blackref{re:ceilingness} etc, being a {\em ceiling} pseudo instance.
{\bf \Cref{lem:slice:property}} follows then.

\vspace{.1in}
\noindent
{\bf \Cref{lem:slice:potential}.}
The potential of a {\em ceiling} pseudo instance $\hat{H}^{\uparrow} \otimes \hat{\bB} \otimes \hat{L}$ counts all the ultra-ceiling entries $\Psi(\text{\em ceiling}) = \big|\big\{(\sigma,\, j) \in \hat{\bPhi}: \hat{\phi}_{\sigma,\, j} > \hat{\phi}_{H,\, 0}\big\}\big|$ in the bid-to-value table $\hat{\bPhi} = \big[\hat{\phi}_{\sigma,\, j}\big]$ (\Cref{def:potential}) yet the paired {\em floor} pseudo instance $\hat{H}^{\downarrow} \otimes \hat{\bB} \otimes \hat{L}$ counts another one $\Psi(\text{\em floor}) = \Psi(\text{\em ceiling}) + 1$.

Following {\bf \Cref{lem:slice:property}}, it suffices to prove that every \blackref{slice_interim} $H^{(t)\downarrow} \otimes \bB^{(t)} \otimes L^{(t)}$ for $t \in [0,\, \lambda)$ has a smaller or equal potential $\Psi(\blackref{slice_interim}) \leq \Psi(\blackref{slice_input})$ than the \blackref{slice_input} $H^{\downarrow} \otimes \bB \otimes L \equiv H^{(0)\downarrow} \otimes \bB^{(0)} \otimes L^{(0)}$, which accommodates the \blackref{slice_minimizer} $H^{(t^{*})\downarrow} \otimes \bB^{(t^{*})} \otimes L^{(t^{*})}$.

Without loss of generality, consider a specific $t \in [\lambda_{k},\, \lambda_{k + 1})$ that locates in the index $k \in [0: m]$ piece of the underlying partition $\bLambda = [0 \equiv \lambda_{0},\, \lambda_{1}) \cup
[\lambda_{1},\, \lambda_{2}) \cup \dots \cup
[\lambda_{m},\, \lambda_{m + 1} \equiv \lambda]$.
As \Cref{fig:slice:output} suggests, the \blackref{slice_interim} table $\bPhi^{(t)}$ entrywise shifts the \blackref{slice_input} table $\bPhi = \big[\phi_{\sigma,\, j}\big]$ for $(\sigma,\, j) \in [N] \times [0: m]$ by a distance of $-t$, and then discards the columns $j \in [0: k - 1]$. That is,
\begin{align*}
    \bPhi^{(t)} ~=~
    \Big[\, \text{$\phi_{\sigma,\, j}^{(t)} = \phi_{\sigma,\, j} - t$ for $(\sigma,\, j) \in [N] \times [k: m]$} \,\Big].
\end{align*}
In particular, the ceiling value changes into the SHIFTED row-$H$ column-$k$ entry $\phi_{H,\, k}^{(t)} = \phi_{H,\, k} - t$.
Since the \blackref{slice_input} $H^{\downarrow} \otimes \bB \otimes L \equiv H^{(0)\downarrow} \otimes \bB^{(0)} \otimes L^{(0)}$ and the \blackref{slice_interim} $H^{(t)\downarrow} \otimes \bB^{(t)} \otimes L^{(t)}$ are two {\em floor} pseudo instances, we can deduce that
\begin{align*}
    \Psi(\blackref{slice_interim})
    & ~=~ 1 + \big|\big\{(\sigma,\, j) \in \bPhi^{(t)}: \phi_{\sigma,\, j}^{(t)} > \phi_{H,\, k}^{(t)}\big\}\big| \phantom{\Big.} \\
    & ~=~ 1 + \big|\big\{(\sigma,\, j) \in \bPhi^{(t)}: \phi_{\sigma,\, j} > \phi_{H,\, k}\big\}\big|
    && \text{add back the shifts} \phantom{\Big.} \\
    & ~\leq~ 1 + \big|\big\{(\sigma,\, j) \in \bPhi^{(t)}: \phi_{\sigma,\, j} > \phi_{H,\, 0}\big\}\big|
    && \text{\blackref{re:monotonicity} $\phi_{H,\, 0} \leq \phi_{H,\, k}$} \phantom{\Big.} \\
    & ~\leq~ 1 + \big|\big\{(\sigma,\, j) \in \bPhi: \phi_{\sigma,\, j} > \phi_{H,\, 0}\big\}\big|
    ~=~ \Psi(\blackref{slice_input}).
    && \text{recount of the discarded entries} \phantom{\Big.}
\end{align*}
This equation holds for ({\bf \Cref{fact:slice:minimizer}}) the \blackref{slice_minimizer} $t^{*} \in [0,\, \lambda)$. The paired \blackref{slice_output} $H^{(t^{*})\uparrow} \otimes \bB^{(t^{*})} \otimes L^{(t^{*})}$ thus has a potential $\Psi(\blackref{slice_output}) = \Psi(\blackref{slice_minimizer}) - 1 \leq \Psi(\blackref{slice_input}) - 1$. {\bf \Cref{lem:slice:potential}} follows then.

\vspace{.1in}
\noindent
{\bf \Cref{lem:slice:poa}.}
Regarding every \blackref{slice_interim} $H^{(t)\downarrow} \otimes \bB^{(t)} \otimes L^{(t)}$,
let us consider the following formulas.
\begin{align*}
    & \FF(t)\,
    ~\eqdef~ \FPA(H^{(t)\downarrow} \otimes \bB^{(t)} \otimes L^{(t)}), \phantom{\big.} \\
    & \OO(t)
    ~\eqdef~ \OPT(H^{(t)\downarrow} \otimes \bB^{(t)} \otimes L^{(t)}), \phantom{\big.} \\
    & \PP(t)
    ~\eqdef~ \PoA(H^{(t)\downarrow} \otimes \bB^{(t)} \otimes L^{(t)}). \phantom{\big.}
\end{align*}
We decompose {\bf \Cref{lem:slice:poa}} into a sequence of facts.
%, each of which relies on the preceding ones.

\begin{restate}[{\Cref{fact:slice:minimizer}}]
\begin{flushleft}
The minimizer $t^{*} = \argmin \big\{\, \PP(t): 0 \leq t < \lambda \,\big\}$, breaking ties in favor of the smallest one among all the alternatives (Line~\ref{alg:slice:minimum}), is well defined.
\end{flushleft}
\end{restate}

\begin{proof}
As mentioned, we safely assume the \blackref{slice_input} {\PoA}-bound $\PP(0) \equiv \FF(0) \big/ \OO(0) \equiv \PoA(\blackref{slice_input}) < 1$.
Moreover, it is easy to verify that the formulas $\FF(t)$ and $\OO(t)$ are continuous functions;\footnote{This is because the \blackref{slice_interim} $H^{(t)\downarrow} \otimes \bB^{(t)} \otimes L^{(t)}$ changes ``continuously'' with respect to the $t \in [0,\, \lambda)$. Indeed, the formula $\FF(t)$ and $\OO(t)$ are even differentiable and we will show their derivatives $\FF'(t)$ and $\OO'(t)$ in {\bf \Cref{fact:slice:fpa,fact:slice:opt}}.}
then so is the {\PoA}-formula $\PP(t)$. To make the \blackref{slice_minimizer} $t^{*} = \argmin \big\{\, \PP(t): 0 \leq t < \lambda \,\big\}$ well defined, we just need to show that $\lim_{t \nearrow \lambda} \PP(t) = 1$.

% As \Cref{fig:slice:output} suggests, the \blackref{slice_interim} table $\bPhi^{(t)}$ entrywise shifts the \blackref{slice_input} table $\bPhi = \big[\phi_{\sigma,\, j}\big]$ for $(\sigma,\, j) \in [N] \times [0: m]$ by a distance of $-t$, and then discards the columns $j \in [0: k - 1]$. That is,

We consider a close enough $t \in [\lambda_{m},\, \lambda_{m + 1} \equiv \lambda)$, by which (\Cref{fig:slice}) the \blackref{slice_interim} table $\bPhi^{(t)}$ in {\bf \Cref{lem:slice:potential}} only contains the {\em rightmost} column $\phi_{H,\, m}^{(t)} \geq \phi_{1,\, m}^{(t)} \geq \dots \geq \phi_{n,\, m}^{(t)} \geq \phi_{L,\, m}^{(t)}$.
Thus the optimal {\SocialWelfare} (\Cref{lem:translate_welfare}; with \blackref{re:floorness} $P^{(t)\downarrow} \equiv 0$ and $B_{\sigma}^{(t)}(\lambda - t) = B_{\sigma}(\lambda) = 1$) is given by
\begin{align*}
    \OO(t)
    & = \int_{0}^{+\infty} \Big(1 - \prod_{\sigma \in [N]} \big(1 - \big(1 - B_{\sigma}^{(t)}(0)
    % \tfrac{}{B_{\sigma}^{(t)}(\lambda - t)}
    \big) \cdot \indicator(v < \phi_{\sigma,\, m}^{(t)})\big)\Big) \cdot \d v \\
    & \leq \int_{0}^{+\infty} \Big(\sum_{\sigma \in [N]} \big(1 - B_{\sigma}^{(t)}(0)
    % \tfrac{}{B_{\sigma}^{(t)}(\lambda - t)}
    \big) \cdot \indicator(v < \phi_{\sigma,\, m}^{(t)})\Big) \cdot \d v \\
    % && \mbox{$\prod (1 - x_{i}) \geq 1 - \sum x_{\sigma}$ for $|x_{\sigma}| \leq 1$} \\
    & = \sum_{\sigma \in [N]} \Big(\phi_{\sigma,\, m}^{(t)} \cdot \big(1 - B_{\sigma}^{(t)}(0)
    % \tfrac{}{B_{\sigma}^{(t)}(\lambda - t)}
    \big)\Big) \\
    & = \sum_{\sigma \in [N]} \Big(\int_{0}^{\lambda - t} \phi_{\sigma,\, m}^{(t)} \cdot {B_{\sigma}^{(t)}}'(b) \cdot \d b\Big) \\
    & ~\leq~ \tfrac{1}{\calB^{(t)}(0)} \cdot \sum_{\sigma \in [N]} \Big(\int_{0}^{\lambda - t} \phi_{\sigma,\, m}^{(t)} \cdot {B_{\sigma}^{(t)}}'(b) \cdot \tfrac{\calB^{(t)}(b)}{B_{\sigma}^{(t)}(b)} \cdot \d b\Big)
    && \text{$\calB^{(t)}(0) \leq \calB^{(t)}(b) \leq \tfrac{\calB^{(t)}(b)}{B_{\sigma}^{(t)}(b)}$} \\
    & = \tfrac{1}{\calB^{(t)}(0)} \cdot \FPA(H^{(t)\downarrow} \otimes \bB^{(t)} \otimes L^{(t)})
    = \tfrac{1}{\calB(t)} \cdot \FF(t).
    && \text{\Cref{lem:pseudo_welfare}} \phantom{\Big.}
\end{align*}
Thus, the {\PoA}-bound is at least $\PP(t) \equiv \FF(t) \big/ \OO(t) \geq \calB(t)$ for $t \in [\lambda_{m},\, \lambda)$.
The \blackref{slice_input} first-order bid distribution $\calB(b)$ is continuous on the bid support $b \in [0,\, \lambda]$ (???),
namely $\lim_{t \nearrow \lambda} \calB(t) = 1$.
By the squeeze theorem, we have $\lim_{t \nearrow \lambda} \PP(t) = 1$.
{\bf \Cref{fact:slice:minimizer}} follows then.
\end{proof}

Following {\bf \Cref{fact:slice:minimizer}}, without loss of generality, we can assume that the \blackref{slice_minimizer} is zero $t^{*} = 0$.\footnote{For any two $t,\, \tau \in [0,\, \lambda)$ that $t \leq \tau$, we can regard the $\tau$-\blackref{slice_interim} $H^{(\tau)\downarrow} \otimes \bB^{(\tau)} \otimes L^{(\tau)}$ as the result of (\`{a} la Line~\ref{alg:slice:spectrum}) shifting the $t$-\blackref{slice_interim} $H^{(t)\downarrow} \otimes \bB^{(t)} \otimes L^{(t)}$ by a distance $-(\tau - t)$.}
{\bf \Cref{fact:slice:fpa,fact:slice:opt}} help us to characterize the optimality condition at this \blackref{slice_minimizer} $t^{*} = 0$.

\begin{fact}
\label{fact:slice:fpa}
\begin{flushleft}
The derivative $\FF'(0) = -\big(1 - \calB(0)\big) - \sum_{\sigma \in [N]} \big(\frac{B'_{\sigma}(0)}{B_{\sigma}(0)} \cdot \phi_{\sigma,\, 0} \cdot \calB(0)\big) \leq 0$, where the first-order bid distribution (\Cref{lem:pseudo_welfare}) $\calB(b) = \prod_{\sigma \in [N]} B_{\sigma}(b)$.
\end{flushleft}
\end{fact}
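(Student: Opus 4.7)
The plan is to compute $\FF'(0)$ directly by expanding $\FF(t)$ via the auction welfare formula in \Cref{lem:pseudo_welfare}, performing a change of variables that converts shifted quantities back to the input, and then applying Leibniz's rule. Since the shifted pseudo instance $H^{(t)\downarrow} \otimes \bB^{(t)} \otimes L^{(t)}$ remains a \emph{floor} instance by construction (Line~\ref{alg:slice:spectrum}), its conditional value satisfies $P^{(t)\downarrow} \equiv 0$, so the boundary term $\Ex[P^{(t)\downarrow}] \cdot \calB^{(t)}(0)$ in \Cref{lem:pseudo_welfare} vanishes for every $t$.

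First I would record the shift relations $B_\sigma^{(t)}(b) = B_\sigma(b + t)$, $\calB^{(t)}(b) = \calB(b + t)$, and $\varphi_\sigma^{(t)}(b) = \varphi_\sigma(b + t) - t$ on the shifted support $b \in [0,\lambda - t]$ (the last follows because the $b$-independent $(\cdot)^{-1}$ term in \Cref{def:ceiling_floor:restate} is translation-invariant). Substituting into \Cref{lem:pseudo_welfare} and performing the change of variable $b' = b + t$ gives
\[
    \FF(t) \;=\; \sum_{\sigma \in [N]} \int_t^\lambda \bigl(\varphi_\sigma(b') - t\bigr) \cdot \frac{B'_\sigma(b')}{B_\sigma(b')} \cdot \calB(b') \cdot \d b'.
\]
Splitting this into two pieces and using the telescoping identity $\sum_{\sigma} \tfrac{B'_\sigma(b')}{B_\sigma(b')} = \tfrac{\calB'(b')}{\calB(b')}$, the ``$-t$'' part collapses to $-t \cdot \int_t^\lambda \calB'(b') \, \d b' = -t \cdot (1 - \calB(t))$, yielding
\[
    \FF(t) \;=\; \sum_\sigma \int_t^\lambda \varphi_\sigma(b') \cdot \tfrac{B'_\sigma(b')}{B_\sigma(b')} \cdot \calB(b') \cdot \d b' \;-\; t \cdot \bigl(1 - \calB(t)\bigr).
\]

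Next I would differentiate via Leibniz's rule. The first summation contributes $-\sum_\sigma \varphi_\sigma(t) \cdot \tfrac{B'_\sigma(t)}{B_\sigma(t)} \cdot \calB(t)$, while the second contributes $-(1 - \calB(t)) + t \cdot \calB'(t)$. At $t = 0$ the $t \cdot \calB'(0)$ term vanishes, and by \blackref{re:discretization} the mapping $\varphi_\sigma$ is the constant $\phi_{\sigma,0}$ on the first piece $[0, \lambda_1)$, so $\varphi_\sigma(0^+) = \phi_{\sigma,0}$. This gives exactly the claimed expression for $\FF'(0)$. Non-positivity is then immediate: $1 - \calB(0) \geq 0$, $\phi_{\sigma,0} \geq 0$, $\calB(0) \geq 0$, and $B'_\sigma(0)/B_\sigma(0) \geq 0$ by \blackref{re:monotonicity} of $B_\sigma$.

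The one delicate point is that $t = 0$ lies on the boundary of the domain $[0, \lambda)$, so the derivative in the statement is really a right-derivative; this is justified because, for $t \in [0, \lambda_1)$ (a neighborhood to the right of $0$), the mappings $\varphi_\sigma$ are constant and the bid CDFs $B_\sigma$ are smooth in the interior of their pieces (\Cref{rem:bid_distribution}), so both the integrand inside $G(t) \eqdef \sum_\sigma \int_t^\lambda \varphi_\sigma(b') \tfrac{B'_\sigma(b')}{B_\sigma(b')} \calB(b') \, \d b'$ and the function $\calB(t)$ are continuous at $t = 0^+$. A minor subtlety is that $B_\sigma(0)$ could in principle vanish for some $\sigma$; but this only affects how the formula is written (one can rewrite $\tfrac{B'_\sigma}{B_\sigma} \cdot \calB = B'_\sigma \cdot \calB_{-\sigma}$), and does not affect the non-positivity conclusion.
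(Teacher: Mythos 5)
Your proposal is correct and follows essentially the same route as the paper: both rewrite $\FF(t) = \sum_{\sigma}\int_{t}^{\lambda}(\varphi_{\sigma}(b)-t)\cdot\frac{B'_{\sigma}(b)}{B_{\sigma}(b)}\cdot\calB(b)\cdot\d b$ using the shift relations and the vanishing floor term, then apply Leibniz's rule together with the telescoping identity $\sum_{\sigma}\frac{B'_{\sigma}}{B_{\sigma}}\cdot\calB=\calB'$ and piecewise constancy $\varphi_{\sigma}(0)=\phi_{\sigma,0}$. Whether one splits off the $-t\cdot(1-\calB(t))$ piece before or after differentiating is an immaterial algebraic reordering, and your extra remarks (right-derivative at $t=0$, possible $B_{\sigma}(0)=0$) only add care without changing the argument.
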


\begin{proof}
We can write the auction \blackref{slice_interim} formula $\FF(t)$ as follows (\Cref{lem:pseudo_welfare}; \blackref{re:discretization} $\gamma = 0$ and \blackref{re:floorness} $P^{(t)\downarrow} \equiv 0$).
\begin{align*}
    \FF(t)
    & = \sum_{\sigma \in [N]} \Big(\int_{0}^{\lambda - t} \varphi_{\sigma}^{(t)}(b) \cdot \tfrac{{B_{\sigma}^{(t)}}'(b)}{B_{\sigma}^{(t)}(b)} \cdot \calB^{(t)}(b) \cdot \d b\Big) \\
    & = \sum_{\sigma \in [N]} \Big(\int_{t}^{\lambda} \big(\varphi_{\sigma}(b) - t\big) \cdot \tfrac{B_{\sigma}'(b)}{B_{\sigma}(b)} \cdot \calB(b) \cdot \d b\Big).
    \qquad\qquad \text{Line~\ref{alg:slice:spectrum}}
    \hspace{3.75cm}
\end{align*}
Therefore, the derivative $\FF'(0)$ is given by\footnote{\label{footnote:slice:fpa}Recall that a function $F(t) = \int_{t}^{\lambda} f(x,\, t) \cdot \d x$ has the derivative $F'(0) = -f(0,\, 0) + \int_{0}^{\lambda} \big(\frac{\partial f(x,\, t)}{\d t}\big)\bigmid_{t = 0} \cdot \d x$.}
\begin{align*}
    \FF'(0)
    & = -\sum_{\sigma \in [N]} \varphi_{\sigma}(0) \cdot \tfrac{B_{\sigma}'(0)}{B_{\sigma}(0)} \cdot \calB(0) - \sum_{\sigma \in [N]} \Big(\int_{0}^{\lambda} \tfrac{B_{\sigma}'(b)}{B_{\sigma}(b)} \cdot \calB(b) \cdot \d b\Big) \\
    & = -\sum_{\sigma \in [N]} \varphi_{\sigma}(0) \cdot \tfrac{B_{\sigma}'(0)}{B_{\sigma}(0)} \cdot \calB(0) - \int_{0}^{\lambda} \calB'(b) \cdot \d b
    \qquad\qquad \mbox{$\calB' = \big(\prod_{\sigma} B_{\sigma}\big)' = \calB \cdot \sum_{\sigma} \frac{B'_{\sigma}}{B_{\sigma}}$} \\
    & = -\sum_{\sigma \in [N]} \phi_{\sigma,\, 0} \cdot \tfrac{B_{\sigma}'(0)}{B_{\sigma}(0)} \cdot \calB(0) - \big(1 - \calB(0)\big).
    \qedhere
\end{align*}
\end{proof}

\begin{fact}
\label{fact:slice:opt}
\begin{flushleft}
The derivative $\OO'(0) = -\big(1 - \calB(0)\big) - \sum_{\sigma \in [N]} \big(\frac{B'_{\sigma}(0)}{B_{\sigma}(0)} \cdot \int_{0}^{\phi_{\sigma,\, 0}} \calV(v) \cdot \d v\big) \leq 0$, where the first-order value distribution (\Cref{lem:translate_welfare}) $\calV(v) = \prod_{(\sigma,\, j) \,\in\, \bPhi} \big(1 - \big(1 - \frac{B_{\sigma}(\lambda_{j})}{B_{\sigma}(\lambda_{j + 1})}\big) \cdot \indicator(v < \phi_{\sigma,\, j})\big)$.
\end{flushleft}
\end{fact}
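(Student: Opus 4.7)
The plan is to isolate the $t$-dependence inside the integral for $\OO(t)$ and then apply the Leibniz rule \`{a} la footnote~\ref{footnote:slice:fpa}. From \Cref{lem:pseudo_welfare,lem:value_dist} applied to the $t$-\blackref{slice_interim} (using \blackref{re:floorness} so $P^{(t)\downarrow}(v) = \indicator(v \geq 0)$ and \blackref{re:discretization} so $\gamma = 0$), I would write
\[
    \OO(t) ~=~ \int_{0}^{+\infty} \Big(1 - \prod_{\sigma \in [N]} F_\sigma^{(t)}(v)\Big) \cdot \d v, \qquad F_\sigma^{(t)}(v) ~\eqdef~ \Prx_{b_\sigma \sim B_\sigma^{(t)}}\big[(b_\sigma \leq 0) \vee (\varphi_\sigma^{(t)}(b_\sigma) \leq v)\big].
\]
Using the shift relations $B_\sigma^{(t)}(b) = B_\sigma(b+t) \cdot \indicator(b \geq 0)$ and $\varphi_\sigma^{(t)}(b) = \varphi_\sigma(b+t) - t$ for $b > 0$, together with the constancy $\varphi_\sigma(b) \equiv \phi_{\sigma, 0}$ on the first partition piece $b \in [0, \lambda_1)$ (so for small enough $t < \lambda_1$ every $b_\sigma \leq t$ satisfies $\varphi_\sigma(b_\sigma) = \phi_{\sigma, 0}$), elementary case analysis yields the clean decomposition
\[
    F_\sigma^{(t)}(v) ~=~ B_\sigma(t) \cdot \indicator(v + t < \phi_{\sigma, 0}) ~+~ G_\sigma(v+t) \cdot \indicator(v + t \geq \phi_{\sigma, 0}),
\]
where $G_\sigma(u) \eqdef \Prx_{b_\sigma \sim B_\sigma}[\varphi_\sigma(b_\sigma) \leq u]$ is independent of $t$.

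I would next substitute $u = v + t$ to rewrite $\OO(t) = \int_{t}^{+\infty} (1 - \prod_{\sigma} \hat{F}_\sigma(u; t)) \cdot \d u$ with $\hat{F}_\sigma(u; t) \eqdef B_\sigma(t) \cdot \indicator(u < \phi_{\sigma, 0}) + G_\sigma(u) \cdot \indicator(u \geq \phi_{\sigma, 0})$; here the $t$-dependence appears only through the lower integration limit and through the scalar factors $B_\sigma(t)$ acting on the sub-interval $\{u < \phi_{\sigma, 0}\}$. Differentiating by the Leibniz rule then gives
\[
    \OO'(0) ~=~ -\Big(1 - \prod_\sigma \hat{F}_\sigma(0; 0)\Big) ~-~ \int_{0}^{+\infty} \sum_{\sigma \in [N]} B'_\sigma(0) \cdot \indicator(u < \phi_{\sigma, 0}) \cdot \prod_{\sigma' \neq \sigma} \hat{F}_{\sigma'}(u; 0) \cdot \d u.
\]
Assuming each $\phi_{\sigma, 0} > 0$, the boundary term equals $-(1 - \calB(0))$ since $\hat{F}_\sigma(0; 0) = B_\sigma(0)$ and $\prod_\sigma B_\sigma(0) = \calB(0)$. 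On the set $\{u < \phi_{\sigma, 0}\}$ we have $\hat{F}_\sigma(u; 0) = B_\sigma(0)$, so $\prod_{\sigma' \neq \sigma} \hat{F}_{\sigma'}(u; 0) = \calV(u)/B_\sigma(0)$ where $\calV(u) \equiv \prod_\sigma \hat{F}_\sigma(u; 0)$ coincides with the first-order value distribution from \Cref{lem:translate_welfare}; the interior integral therefore reduces to $\sum_\sigma \tfrac{B'_\sigma(0)}{B_\sigma(0)} \int_{0}^{\phi_{\sigma, 0}} \calV(v) \cdot \d v$, matching the claimed identity.

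The non-positivity $\OO'(0) \leq 0$ is then immediate: $\calB(0) \leq 1$, and each summand is nonnegative since $B'_\sigma(0) \geq 0$ by CDF monotonicity and $\calV \geq 0$. The right-derivative $B'_\sigma(0)$ is well defined via the closed-form expression in \Cref{lem:pseudo_distribution}, which is smooth on the first partition piece $[0, \lambda_1)$. The only technical nuisance is the degenerate case $B_\sigma(0) = 0$: the ratio $\tfrac{B'_\sigma(0)}{B_\sigma(0)}$ should then be read symbolically as $B'_\sigma(0) \cdot \prod_{\sigma' \neq \sigma} \hat{F}_{\sigma'}(u; 0)$ (which is manifestly well-defined), and since $\calV(u) = 0$ on $\{u < \phi_{\sigma, 0}\}$ in that case, the apparent $0/0$ cancels cleanly --- this is the main subtlety to address, and it parallels the analogous issue already implicit in {\bf \Cref{fact:slice:fpa}}.
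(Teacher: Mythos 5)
Your proposal is correct and follows essentially the same route as the paper's proof: both isolate the $t$-dependence in the index-$0$ piece (your per-bidder factors $\hat{F}_\sigma(u;t)$ are exactly the paper's $f_\sigma(\cdot,t)\cdot$(per-entry factors) regrouped), shift variables so only the lower limit and the scalars $B_\sigma(t)$ carry the $t$-dependence, apply the Leibniz rule, and simplify using $\hat{F}_\sigma = B_\sigma(0)$ below $\phi_{\sigma,0}$ to recover $-(1-\calB(0)) - \sum_\sigma \frac{B'_\sigma(0)}{B_\sigma(0)}\int_0^{\phi_{\sigma,0}}\calV(v)\,\d v$. Your explicit remarks on $\phi_{\sigma,0}>0$ and the degenerate $B_\sigma(0)=0$ case are harmless refinements (the latter cannot occur for discretized instances, where the reconstruction of \Cref{lem:pseudo_distribution} gives $B_\sigma(0)>0$), so no gap remains.
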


% This just shifts the input {\em floor} pseudo instance $H^{\downarrow} \otimes \bB \otimes L$ (\Cref{def:ceiling_floor:restate}) by a distance of $-t$, and restricts the support to the nonnegative bids $b \in [0,\, \lambda - t]$.

% , by which (\Cref{fig:slice}) the \blackref{slice_interim} table $\bPhi^{(t)}$ studied in {\bf \Cref{lem:slice:potential}} is left with the {\em rightmost} column $\phi_{H,\, m}^{(t)} \geq \phi_{1,\, m}^{(t)} \geq \dots \geq \phi_{n,\, m}^{(t)} \geq \phi_{L,\, m}^{(t)}$.
% Following \Cref{lem:translate_welfare} (with $P^{(t)\downarrow} \equiv 0$), the optimal {\SocialWelfare} is given by

\begin{proof}
Given the \blackref{slice_input} partition $\bLambda = [0 \equiv \lambda_{0},\, \lambda_{1}) \cup [\lambda_{1},\, \lambda_{2}) \cup \dots \cup [\lambda_{m},\, \lambda_{m + 1} \equiv \lambda]$, let us consider a small enough $t \in [0,\, \lambda_{1})$.
Namely, the \blackref{slice_input} $H^{\downarrow} \otimes \bB \otimes L$ (Line~\ref{alg:slice:spectrum}) shifts by a distance $-t$ and only the shifted index-$0$ piece $[-t,\, \lambda_{1} - t)$ is restricted to the nonnegative bids $b \in [0,\, \lambda_{1} - t)$.

We consider the two-variable functions $f_{\sigma}(v,\, t) = 1 - \big(1 - \tfrac{B_{\sigma}(t)}{B_{\sigma}(\lambda_{1})}\big) \cdot \indicator(v < \phi_{\sigma,\, 0})$ for $\sigma \in [N]$ that account for the index-$0$ piece and the function $\calS(v) = \prod_{(\sigma,\, j) \,\in\, [N] \times [m]} \big(1 - \big(1 - \frac{B_{\sigma}(\lambda_{j})}{B_{\sigma}(\lambda_{j + 1})}\big) \cdot \indicator(v < \phi_{\sigma,\, j})\big)$ that accounts for the other pieces. The optimal \blackref{slice_interim} formula (\Cref{lem:translate_welfare}; \blackref{re:floorness} $P^{(t)\downarrow} \equiv 0$) can be written as $\OO(t) = \int_{t}^{+\infty} \big(1 - \calS(v) \cdot \prod_{\sigma \in [N]} f_{\sigma}(v,\, t)\big) \cdot \d v$.

% Consider the function $\calS(v) = \prod_{(\sigma,\, j) \,\in\, [N] \times [m]} \big(1 - \big(1 - \frac{B_{\sigma}(\lambda_{j})}{B_{\sigma}(\lambda_{j + 1})}\big) \cdot \indicator(v < \phi_{\sigma,\, j})\big)$ to account for the other pieces.

% We can write the optimal \blackref{slice_interim} formula as follows (\Cref{lem:translate_welfare}; \blackref{re:floorness} $P^{(t)\downarrow} \equiv 0$).
\begin{comment}
\begin{align*}
    \OO(t)
    & = \int_{0}^{+\infty} \Big(1 - \prod_{\sigma \in [N]} \big(1 - \big(1 - \tfrac{B_{\sigma}^{(t)}(0)}{B_{\sigma}^{(t)}(\lambda_{1} - t)}\big) \cdot \indicator(v < \phi_{\sigma,\, 0}^{(t)})\big) \cdot \calS(v + t)\Big) \cdot \d v
    \hspace{3.68cm} \\
    & = \int_{t}^{+\infty} \Big(1 - \prod_{\sigma \in [N]} \big(1 - \big(1 - \tfrac{B_{\sigma}(t)}{B_{\sigma}(\lambda_{1})}\big) \cdot \indicator(v < \phi_{\sigma,\, 0})\big) \cdot \calS(v)\Big) \cdot \d v.
    \qquad\qquad \text{Line~\ref{alg:slice:spectrum}}
\end{align*}
\end{comment}

Notice that $\big(\frac{\partial f_{\sigma}(v,\, t)}{\d t}\big)\bigmid_{t = 0} = \tfrac{B'_{\sigma}(t)}{B_{\sigma}(\lambda_{1})}$ for $v < \phi_{\sigma,\, 0}$, while $\big(\frac{\partial f_{\sigma}(v,\, t)}{\d t}\big)\bigmid_{t = 0} = 0$ for $v \geq \phi_{\sigma,\, 0}$. Further, we have $\calS(0) = \prod_{(\sigma,\, j) \,\in\, [N] \times [m]} \frac{B_{\sigma}(\lambda_{j})}{B_{\sigma}(\lambda_{j + 1})} = \prod_{\sigma \in [N]} \frac{B_{\sigma}(\lambda_{1})}{B_{\sigma}(\lambda_{m + 1})} = \prod_{\sigma \in [N]} B_{\sigma}(\lambda_{1}) = \calB(\lambda_{1})$.
Accordingly (cf.\ \Cref{footnote:slice:fpa}), the derivative $\OO'(0)$ is given by
\begin{align}
    \OO'(0)
    & = -\Big(1 - \calS(0) \cdot \prod_{\sigma \in [N]} f_{\sigma}(0,\, 0)\Big)
    -\sum_{\sigma \in [N]} \Big(\tfrac{B'_{\sigma}(0)}{B_{\sigma}(\lambda_{1})} \cdot \int_{0}^{\phi_{\sigma,\, 0}} \calS(v) \cdot \prod_{k \neq \sigma} f_{k}(v,\, 0) \cdot \d v\Big)
    \nonumber \\
    & = -\big(1 - \calB(0)\big) - \sum_{\sigma \in [N]} \Big(\tfrac{B'_{\sigma}(0)}{B_{\sigma}(0)} \cdot \int_{0}^{\phi_{\sigma,\, 0}} \calS(v) \cdot \prod_{k \in [N]} f_{k}(v,\, 0) \cdot \d v\Big)
    \label{eq:slice:S1}\tag{S1} \\
    & = -\big(1 - \calB(0)\big) - \sum_{\sigma \in [N]} \Big(\tfrac{B'_{\sigma}(0)}{B_{\sigma}(0)} \cdot \int_{0}^{\phi_{\sigma,\, 0}} \calV(v) \cdot \d v\Big).
    \nonumber
\end{align}
\eqref{eq:slice:S1}: The first term follows from $\calS(0) \cdot \prod_{\sigma} f_{\sigma}(0,\, 0) = \calS(0) \cdot \prod_{\sigma \in [N]} \big(\tfrac{B_{\sigma}(0)}{B_{\sigma}(\lambda_{1})}\big) = \calS(0) \cdot \frac{\calB(0)}{\calB(\lambda_{1})} = \calB(0)$. \\
And the second term follows from $f_{\sigma}(v,\, 0) = \tfrac{B_{\sigma}(0)}{B_{\sigma}(\lambda_{1})}$ when $v \in [0,\, \phi_{\sigma,\, 0})$.
% {\bf \Cref{fact:slice:opt}} follows then.
\end{proof}

Now instead of the {\em floor} \blackref{slice_minimizer} $H^{(t^{*})\downarrow} \otimes \bB^{(t^{*})} \otimes L^{(t^{*})} = H^{\downarrow} \otimes \bB \otimes L$ (with $P^{\downarrow} \equiv 0$), consider the paired {\em ceiling} \blackref{slice_output} $H^{(t^{*})\uparrow} \otimes \bB^{(t^{*})} \otimes L^{(t^{*})} = H^{\uparrow} \otimes \bB \otimes L$ (with $P^{\uparrow} \equiv \phi_{H,\, 0}$).
We can write its auction {\SocialWelfare} $\FPA(\blackref{slice_output}) = \FF(0) + \Delta_{\FPA}$ (\Cref{lem:pseudo_welfare}), using $\Delta_{\FPA} = \phi_{H,\, 0} \cdot \calB(0)$,
and its optimal {\SocialWelfare} $\OPT(\blackref{slice_output}) = \OO(0) + \Delta_{\OPT}$ (\Cref{lem:translate_welfare}), using $\Delta_{\OPT} = \int_{0}^{\phi_{H,\, 0}} \calV(v) \cdot \d v$.
Furthermore, the \blackref{slice_minimizer} $t^{*} = 0$ induces the optimality condition $\PP'(0) = \frac{\FF'(0) \cdot \OO(0) - \FF(0) \cdot \OO'(0)}{\OO(0)^{2}} \geq 0$, which after being rearranged gives $\frac{\FF(0)}{\OO(0)} \geq \frac{|\FF'(0)|}{|\OO'(0)|}$. We then deduce that
% \Cref{eq:slice:S2}.
\begin{align}
    \PP(0)
    ~=~ \frac{\FF(0)}{\OO(0)}
    ~\geq~ \frac{|\FF'(0)|}{|\OO'(0)|}
    & ~=~ \frac{\big(1 - \calB(0)\big) + \sum_{\sigma \in [N]} B'_{\sigma}(0) \big/ B_{\sigma}(0) \cdot \phi_{\sigma,\, 0} \cdot \calB(0) \hspace{0.665cm}}
    {\big(1 - \calB(0)\big) + \sum_{\sigma \in [N]} B'_{\sigma}(0) \big/ B_{\sigma}(0) \cdot \int_{0}^{\phi_{\sigma,\, 0}} \calV(v) \cdot \d v} \hspace{.7cm}
    \label{eq:slice:S2}\tag{S2} \\
    & ~\geq~ \frac{\sum_{\sigma \in [N]} B'_{\sigma}(0) \big/ B_{\sigma}(0) \cdot \phi_{\sigma,\, 0} \cdot \calB(0) \hspace{0.665cm}}
    {\sum_{\sigma \in [N]} B'_{\sigma}(0) \big/ B_{\sigma}(0) \cdot \int_{0}^{\phi_{\sigma,\, 0}} \calV(v) \cdot \d v}
    \label{eq:slice:S3}\tag{S3} \\
    & ~\geq~ \frac{\phi_{H,\, 0} \cdot \calB(0)}
    {\int_{0}^{\phi_{H,\, 0}} \calV(v) \cdot \d v}
    ~=~ \frac{\Delta_{\FPA}}{\Delta_{\OPT}}
    \label{eq:slice:S4}\tag{S4}
\end{align}
\eqref{eq:slice:S2}: Apply {\bf \Cref{fact:slice:fpa,fact:slice:opt}}. \\
\eqref{eq:slice:S3}: The dropped two terms are equal, while $\RHS \text{ of } \eqref{eq:slice:S2} \leq \PP(0) < 1$. \\
% follows for two reasons. The numerator in \Cref{eq:slice:S2} is less than or equal to the denominator $\big| \FF'(0) \big| \leq \big| \OO'(0) \big|$, due to the optimality condition~\eqref{eq:slice:S4}. \\
\eqref{eq:slice:S4}: Replace $\phi_{H,\, 0} \geq \phi_{1,\, 0} \geq \dots \geq \phi_{n,\, 0} \geq \phi_{L,\, 0}$ (\blackref{re:layeredness}) with the highest ceiling value $\phi_{H,\, 0}$.
Notice that $\big(\phi \cdot \calB(0)\big) \big/ \big(\int_{0}^{\phi} \calV(v) \cdot \d v\big)$ is decreasing in $\phi > 0$ because the CDF $\calV(v)$ is increasing.

We thus conclude that $\PoA(\blackref{slice_output})
= \frac{\FF(0) + \Delta_{\FPA}}{\OO(0) + \Delta_{\OPT}}
\leq \frac{\FF(0)}{\OO(0)}
= \PoA(\blackref{slice_minimizer})
\leq \PoA(\blackref{slice_input})$.
I.e., the {\em ceiling} \blackref{slice_output} yields a (weakly) worse bound. {\bf \Cref{lem:halve:poa}} follows then.
This finishes the proof.
\end{proof}

\subsection{{\collapse}: From ceiling to strong ceiling}
\label{subsec:collapse}

Following the previous subsection, we are able to handle {\em floor} pseudo instances $\in \Bfloor$.
Therefore,
% except the desirable {\em twin ceiling} pseudo instances,
we can restrict our attention to {\em ceiling} pseudo instances $H^{\uparrow} \otimes \bB \otimes L \in \Bceiling$.
Indeed, every {\em ceiling} pseudo instance can be transformed into a {\em strong ceiling} pseudo instance, through the {\blackref{alg:collapse}} reduction (see \Cref{fig:alg:collapse,fig:collapse} for its description and a visual aid). For ease of reference, let us rephrase the definition of {\em strong ceiling} pseudo instances.

\begin{restate}[\Cref{def:strong}]
{\em A {\em ceiling} pseudo instance $H^{\uparrow} \otimes \bB \otimes L \in \Bceiling$ from \Cref{def:ceiling_floor:restate} is further called {\em strong ceiling} when (\term[\textbf{non-monopoly collapse}]{restate:pro:collapse}) each non-monopoly bidder $i \in [n]$ (if existential) exactly takes the pseudo bid-to-value mapping $\varphi_{i}(b) \equiv \varphi_{L}(b)$ before the jump bid $b \in [0,\, \lambda^{*})$; therefore in each before-jump column $j \in [0: j^{*} - 1]$ of the bid-to-value table $\bPhi = [\phi_{\sigma,\, j}]$, all of the non-monopoly entries are the same as the pseudo entry $\phi_{1,\, j} = \dots = \phi_{n,\, j} = \phi_{L,\, j}$.
That is, before the jump bid $b \in [0,\, \lambda^{*})$, each non-monopoly bidder $i \in [n]$ has a {\em constant} bid distribution $B_{i}(b) = B_{i}(\lambda^{*})$ and thus has no effect.}
\end{restate}

\begin{intuition*}
Regarding a ceiling pseudo instance $H^{\uparrow} \otimes \bB \otimes L \in \Bceiling$, in the sense of \Cref{lem:ceiling_welfare}, the optimal {\SocialWelfare} formula $\OPT(H^{\uparrow} \otimes \bB \otimes L)$ turns out to be irrelevant to the non-ultra-ceiling entries $\phi_{\sigma,\, j} \leq \phi_{H,\, 0}$ in the bid-to-value table $\bPhi = \big[\phi_{\sigma,\, j}\big]$ (except the ceiling value $\phi_{H,\, 0}$ itself).
Conceivably, we would modify those non-ultra-ceiling entries $\phi_{\sigma,\, j} \leq \phi_{H,\, 0}$ to minimize the auction {\SocialWelfare} $\FPA(H^{\uparrow} \otimes \bB \otimes L)$, while preserving \blackref{re:monotonicity} etc.
% \pl{Not quite understand this. }
\end{intuition*}

\begin{lemma}[{\SocialWelfares}]
\label{lem:ceiling_welfare}
Following \Cref{lem:translate_welfare}, for a ceiling pseudo instance $H^{\uparrow} \otimes \bB \otimes L$, the expected optimal {\SocialWelfare} can be written as
\begin{align*}
    \OPT(H^{\uparrow} \otimes \bB \otimes L)
    ~=~ \int_{0}^{+\infty} \bigg(1 - \indicator(v \geq \phi_{H,\, 0}) \cdot \prod_{(\sigma,\, j) \,\in\, \bPhi^{*}} \Big(1 - \omega_{\sigma,\, j} \cdot \indicator(v < \phi_{\sigma,\, j})\Big)\bigg) \cdot \d v.
\end{align*}
where the subtable $\bPhi^{*} \subseteq \bPhi$ includes all of the ultra-ceiling entries $\bPhi^{*} \supseteq \big\{(\sigma,\, j) \in \bPhi: \phi_{\sigma,\, j} > \phi_{H,\, 0}\big\}$ but otherwise is arbitrary.
\end{lemma}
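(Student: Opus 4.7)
The plan is to start from the general optimal social welfare formula from Lemma~\ref{lem:translate_welfare}, specialize it to the ceiling setting via \blackref{re:ceilingness}, and then observe that every non-ultra-ceiling entry is ``invisible'' to the resulting integrand, so it can be dropped or kept at will.

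First I would substitute $P = P^{\uparrow}$ into the formula of Lemma~\ref{lem:translate_welfare}. By \blackref{re:ceilingness}, the monopolist's conditional value is the point mass $P^{\uparrow} \equiv \phi_{H,\, 0}$, whose CDF is $P^{\uparrow}(v) = \indicator(v \geq \phi_{H,\, 0})$. Plugging this in immediately gives
\[
    \OPT(H^{\uparrow} \otimes \bB \otimes L)
    ~=~ \int_{0}^{+\infty} \bigg(1 - \indicator(v \geq \phi_{H,\, 0}) \cdot \prod_{(\sigma,\, j) \,\in\, \bPhi} \big(1 - \omega_{\sigma,\, j} \cdot \indicator(v < \phi_{\sigma,\, j})\big)\bigg) \cdot \d v,
\]
which is the claimed formula with $\bPhi^{*} = \bPhi$.

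Next I would verify that we can freely shrink $\bPhi$ down to any subtable $\bPhi^{*}$ containing the ultra-ceiling entries. The key observation is a case split on $v$: if $v < \phi_{H,\, 0}$, then the outer factor $\indicator(v \geq \phi_{H,\, 0})$ is zero, so the entire product inside the integrand is irrelevant; if $v \geq \phi_{H,\, 0}$, then for any non-ultra-ceiling entry $(\sigma,\, j)$ with $\phi_{\sigma,\, j} \leq \phi_{H,\, 0} \leq v$, we have $\indicator(v < \phi_{\sigma,\, j}) = 0$, and hence its factor $(1 - \omega_{\sigma,\, j} \cdot \indicator(v < \phi_{\sigma,\, j}))$ equals $1$. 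Thus non-ultra-ceiling entries contribute $1$ wherever they are multiplied by a nonzero outer indicator, and are multiplied by zero otherwise. Including or excluding them from $\bPhi^{*}$ does not change the value of the integrand pointwise, and therefore does not change the integral.

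There is no real obstacle here; the lemma is essentially a pointwise identity derived from \blackref{re:ceilingness} plus an elementary indicator cancellation. The only care needed is to make the case split on $v$ clean and to use \blackref{re:monotonicity}/\blackref{re:layeredness} implicitly to characterize which entries are ultra-ceiling (so that $\bPhi^{*}$ containing all of them is well-defined).
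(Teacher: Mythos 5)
Your proposal is correct and follows essentially the same route as the paper: substitute the ceiling conditional value $P^{\uparrow}(v) = \indicator(v \geq \phi_{H,\, 0})$ into \Cref{lem:translate_welfare} and note that every non-ultra-ceiling entry $\phi_{\sigma,\, j} \leq \phi_{H,\, 0}$ contributes a factor that is either multiplied by a zero outer indicator (when $v < \phi_{H,\, 0}$) or equals $1$ (when $v \geq \phi_{H,\, 0} \geq \phi_{\sigma,\, j}$). Your explicit case split on $v$ just spells out the paper's one-line observation that such entries ``have no effect because of the leading indicator term.''
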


\begin{proof}
Following \Cref{lem:translate_welfare} (with \blackref{re:ceilingness} $P^{\uparrow} \equiv \phi_{H,\, 0}$), the optimal {\SocialWelfare} is given by $\OPT(H^{\uparrow} \otimes \bB \otimes L)
= \int_{0}^{+\infty} \big(1 - \indicator(v \geq \phi_{H,\, 0}) \cdot \prod_{(\sigma,\, j) \,\in\, \bPhi} \big(1 - \omega_{\sigma,\, j} \cdot \indicator(v < \phi_{\sigma,\, j})\big)\big) \cdot \d v$.
Each non-ultra-ceiling entry $\phi_{\sigma,\, j} \leq \phi_{H,\, 0}$ has no effect because of the leading {\em indicator} term $\indicator(v \geq \phi_{H,\, 0})$.
\end{proof}

The \blackref{alg:collapse} reduction implements the above intuition, and its performance guarantees are summarized in \Cref{lem:collapse}.

\begin{lemma}[{\collapse}; \Cref{fig:alg:collapse}]
\label{lem:collapse}
Under reduction $H^{\uparrow} \otimes \tilde{\bB} \otimes \tilde{L} \gets \collapse(H^{\uparrow} \otimes \bB \otimes L)$:
\begin{enumerate}[font = {\em\bfseries}]
    \item\label{lem:collapse:property}
    The output is a strong ceiling pseudo instance $H^{\uparrow} \otimes \tilde{\bB} \otimes \tilde{L} \in \Bstrong \subsetneq \Bceiling$; the monopolist $H^{\uparrow}$ is unmodified.
    
    \item\label{lem:collapse:potential}
    The potential keeps the same $\Psi(H^{\uparrow} \otimes \tilde{\bB} \otimes \tilde{L}) = \Psi(H^{\uparrow} \otimes \bB \otimes L)$.
    
    \item\label{lem:collapse:poa}
    A (weakly) worse bound is yielded $\PoA(H^{\uparrow} \otimes \tilde{\bB} \otimes \tilde{L}) \leq \PoA(H^{\uparrow} \otimes \bB \otimes L)$.
\end{enumerate}
\end{lemma}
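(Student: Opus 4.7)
The plan is to define \collapse\ to overwrite each non-monopoly mapping $\tilde{\varphi}_{i}(b)$ by $\varphi_{L}(b)$ on the before-jump range $b \in [0, \lambda^{*})$, leave $\tilde{\varphi}_{i}(b) = \varphi_{i}(b)$ on $[\lambda^{*}, \lambda]$, and keep $\varphi_{H}$ and $\varphi_{L}$ themselves untouched; the new bid distributions $\tilde{\bB} \otimes \tilde{L}$ are then reconstructed from $\tilde{\bvarphi}$ via \Cref{lem:pseudo_distribution}. The three items \Cref{lem:collapse:property}, \Cref{lem:collapse:potential}, and \Cref{lem:collapse:poa} will be verified in turn.

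I would begin with \Cref{lem:collapse:property}. Monotonicity of each $\tilde{\varphi}_{i}$ across the jump bid $\lambda^{*}$ holds because layeredness of the input gives $\tilde{\varphi}_{i}(\lambda^{*}-) = \varphi_{L}(\lambda^{*}) \leq \varphi_{i}(\lambda^{*}) = \tilde{\varphi}_{i}(\lambda^{*})$, so the (possible) discontinuity at $\lambda^{*}$ jumps upward; layeredness of $\tilde{\bvarphi}$ degenerates to the equality $\tilde{\varphi}_{i} \equiv \varphi_{L}$ on the before-jump range and is inherited from the input elsewhere; and the inequality in \Cref{lem:pseudo_mapping:inequality} collapses to the tautology $n \cdot 1 \geq n-1$ on $[0, \lambda^{*})$. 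Hence \Cref{lem:pseudo_distribution} reconstructs honest bid distributions. The strong ceiling condition \blackref{re:collapse} is then immediate by construction, as is the invariance of the monopolist $H^{\uparrow}$. For \Cref{lem:collapse:potential}, the reduction only modifies entries in columns $j < j^{*}$. By the definition of the jump piece combined with layeredness, $\phi_{\sigma, j} \leq \phi_{H, j} \leq \phi_{H, 0}$ for every $\sigma$ and every $j < j^{*}$, so none of these entries is ultra-ceiling; the replacements $\tilde{\phi}_{i, j} = \phi_{L, j}$ are likewise bounded by $\phi_{H, 0}$. The ultra-ceiling set, and hence $\Psi$, is therefore identical before and after.

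The crux is \Cref{lem:collapse:poa}. The key observation is that, after telescoping the individual $B_{\sigma}$ formulas of \Cref{lem:pseudo_distribution}, the first-order bid CDF simplifies to $\calB(b) = \exp\bigl(-\int_{b}^{\lambda} (\varphi_{L}(t)-t)^{-1} \cdot dt\bigr)$, a function of $\varphi_{L}$ alone. Since $\varphi_{L}$ is untouched, $\calB$ is invariant; so is the ceiling value $\phi_{H, 0}$; and a short check (using that $\tilde{B}_{\sigma}(b) = B_{\sigma}(b)$ for $b \geq \lambda^{*}$ and every real bidder, and similarly for $\tilde{L}$) shows that each $\omega_{\sigma, j}$ attached to an ultra-ceiling entry in column $j \geq j^{*}$ is unchanged. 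By \Cref{lem:ceiling_welfare}, $\OPT$ is therefore preserved. For $\FPA$ I would use the mapping-only formula in \Cref{lem:translate_welfare}, whose $\varphi_{i}$-dependent piece is the integrand $f(\varphi_{i}, \varphi_{L}, b) \cdot \calB(b)$ with $f(\varphi_{i}, \varphi_{L}, b) \eqdef \varphi_{i}/(\varphi_{L}-b) - (\varphi_{i}-\varphi_{L})/(\varphi_{i}-b)$. A direct differentiation gives
\[
    \frac{\partial f}{\partial \varphi_{i}} ~=~ \frac{1}{\varphi_{L} - b} - \frac{\varphi_{L}-b}{(\varphi_{i}-b)^{2}} ~=~ \frac{(\varphi_{i}-b)^{2} - (\varphi_{L}-b)^{2}}{(\varphi_{L}-b)(\varphi_{i}-b)^{2}} ~\geq~ 0
\]
whenever $\varphi_{i} \geq \varphi_{L}$, so replacing $\varphi_{i}$ by $\varphi_{L}$ on the before-jump range weakly decreases the integrand pointwise; the $\Ex[P] \cdot \calB(0)$ term and the $(n-1)\varphi_{L}/(\varphi_{L}-b) \cdot \calB$ term are untouched throughout. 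Hence $\FPA(\tilde{H}^{\uparrow} \otimes \tilde{\bB} \otimes \tilde{L}) \leq \FPA(H^{\uparrow} \otimes \bB \otimes L)$, and combined with $\OPT$ invariance this yields the claimed $\PoA$ inequality.

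The hard part is the bookkeeping. Even though $\calB$ is invariant, the individual reconstructed $\tilde{B}_{\sigma}$ and $\tilde{L}$ genuinely differ from their inputs on $[0, \lambda^{*})$, so comparing welfare via the $B'_{\sigma}/B_{\sigma}$ form of \Cref{lem:pseudo_welfare} is treacherous. Routing everything through the mapping-only formula in \Cref{lem:translate_welfare}, which isolates the $\varphi_{L}$-dependence into $\calB$ and compresses the $\varphi_{i}$-dependence into the single scalar function $f$, is the key move that makes the FPA comparison both clean and pointwise.
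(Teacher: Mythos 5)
Your proposal is correct and takes essentially the same approach as the paper: your mapping-level definition of {\collapse} reconstructs exactly the paper's $\tilde{B}_{i}(b)=\max(B_{i}(b),\,B_{i}(\lambda^{*}))$ and $\tilde{L}=L\cdot\prod_{i}B_{i}/\tilde{B}_{i}$, and all three items are argued as in the paper (validity and potential bookkeeping, {\OPT} invariance via \Cref{lem:ceiling_welfare}, {\FPA} decrease from $\varphi_{L}\leq\varphi_{i}$ together with invariance of $\calB$, $\varphi_{L}$, and the monopolist). The only cosmetic difference is that you establish the {\FPA} drop by differentiating the \Cref{lem:translate_welfare} integrand in $\varphi_{i}$, while the paper rearranges the \Cref{lem:pseudo_welfare} formula directly using the dominated-mapping inequality — the same underlying fact.
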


\afterpage{
\begin{figure}[t]
    \centering
    \begin{mdframed}
    Reduction $\term[\collapse]{alg:collapse}(H \otimes \bB \otimes L)$
    
    \begin{flushleft}
    {\bf Input:} A {\em ceiling} pseudo instance $H \otimes \bB \otimes L \in \Bceiling$
    \white{\term[\text{\em input}]{collapse_input}}
    \hfill
    \Cref{def:ceiling_floor:restate}
    
    \vspace{.05in}
    {\bf Output:} A {\em strong ceiling} pseudo instance $H \otimes \tilde{\bB} \otimes \tilde{L} \in \Bstrong \subsetneq \Bceiling$.
    \white{\term[\text{\em output}]{collapse_output}}
    \hfill
    \Cref{def:strong}
    
    \begin{enumerate}
        \item\label{alg:collapse:distribution}
        Define $\tilde{B}_{i}(b) \equiv \max\big(B_{i}(b),\, B_{i}(\lambda^{*})\big)$ for $i \in [n]$ and $\tilde{L}(b) \equiv L(b) \cdot \prod_{i \in [n]} \big(B_{i}(b) \big/ \tilde{B}_{i}(b)\big)$.
        
        \item[] \OliveGreen{$\triangleright$ The jump bid is one of the partition bids $\lambda^{*} \in \{\lambda_{1} < \dots < \lambda_{m + 1} \equiv \lambda\} \subseteq (0,\, \lambda]$ \\
        except the nil bid $\lambda_{0} \equiv 0$ (\Cref{def:jump,lem:jump,rem:jump}).}
        
        \item\label{alg:collapse:output}
        {\bf Return} $H \otimes \tilde{\bB} \otimes \tilde{L}$.
    \end{enumerate}
    \end{flushleft}
    \end{mdframed}
    \caption{The {\collapse} reduction
    \label{fig:alg:collapse}}
\end{figure}
\begin{figure}
    \centering
    \subfloat[\label{fig:collapse:input}
    {The {\em ceiling} input $H \otimes \bB \otimes L \in \Bceiling$}]{
    \includegraphics[width = .49\textwidth]
    {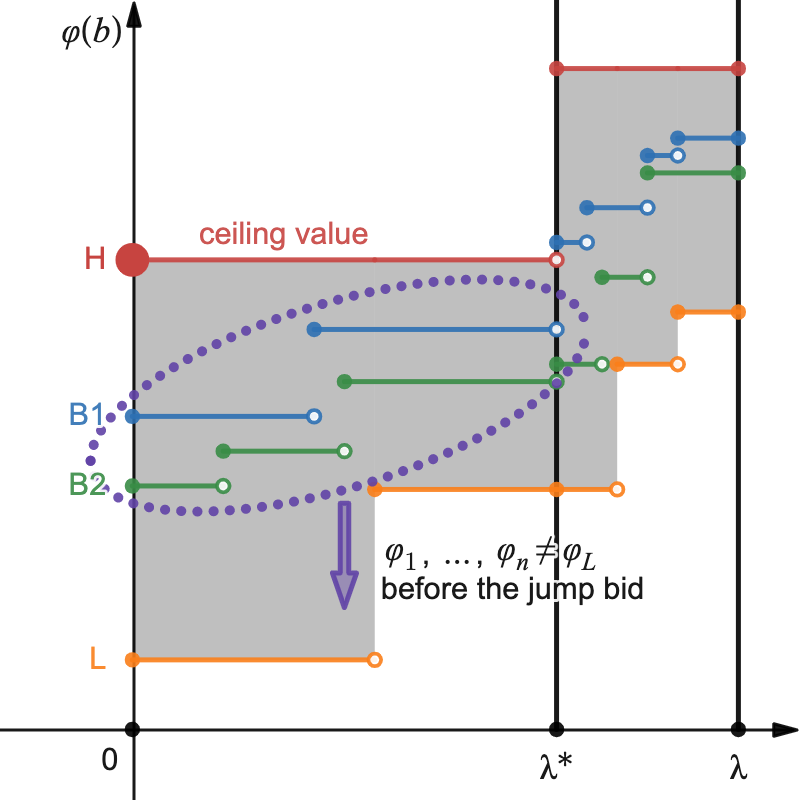}}
    \hfill
    \subfloat[\label{fig:collapse:output}
    {The {\em strong ceiling} output $H \otimes \tilde{\bB} \otimes \tilde{L} \in \Bstrong$}]{
    \includegraphics[width = .49\textwidth]
    {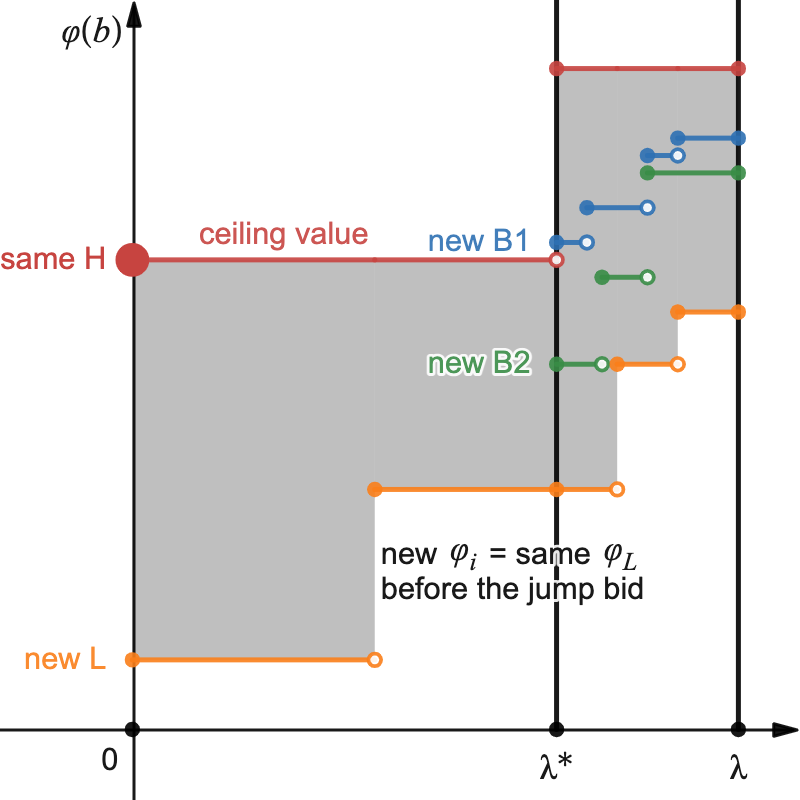}}
    \caption{{Diagram of the {\collapse} reduction (\Cref{fig:alg:collapse}), which transforms (\Cref{def:ceiling_floor:restate}) a {\em ceiling} pseudo instance $H^{\uparrow} \otimes \bB \otimes L \in \Bceiling$ into (\Cref{def:strong}) a {\em strong ceiling} pseudo instance $H^{\uparrow} \otimes \tilde{\bB} \otimes \tilde{L} \in \Bstrong$.}
    % Indeed, this reduction works regardless of (\Cref{def:ceiling_floor:restate,def:jump}) {\bf discretization} before the jump bid $b \in [0,\, \lambda^{*})$.
    \label{fig:collapse}}
\end{figure}
\clearpage}

\begin{proof}
See \Cref{fig:collapse} for a visual aid.
For brevity, let $[N] \equiv \{H\} \cup [n] \cup \{L\}$.
% Let us verify {\bf \Cref{lem:collapse:property,lem:collapse:poa}} one by one. 

\vspace{.1in}
\noindent
{\bf \Cref{lem:collapse:property}.}
By construction, the \blackref{collapse_output} non-monopoly bidders $\tilde{B}_{i}(b) \equiv \max\big(B_{i}(b),\, B_{i}(\lambda^{*})\big)$ for $i \in [n]$ and the \blackref{collapse_output} pseudo bidder $\tilde{L}(b) \equiv L(b) \cdot \prod_{i \in [n]} \big(B_{i}(b) \big/ \tilde{B}_{i}(b)\big)$ are well defined (Line~\ref{alg:collapse:distribution}). Moreover, the monopolist $H^{\uparrow}$ is invariant.

After the jump bid $b \in (\lambda^{*},\, \lambda]$, everything keeps the same.
And before the jump bid $b \in [0,\, \lambda^{*}]$, the first-order bid distribution $\tilde{\calB}(b) = \prod_{\sigma \in [N]} \tilde{B}_{\sigma}(b) = \prod_{\sigma \in [N]} B_{\sigma}(b) = \calB(b)$ keeps the same;
so the pseudo mapping $\varphi_{L}(b) = b + \frac{\calB(b)}{\calB'(b)}$ and the monopolist $H^{\uparrow}$'s mapping $\varphi_{H}(b) = b + (\frac{\calB'(b)}{\calB(b)} - \frac{H'(b)}{H(b)})^{-1}$ are invariant (cf.\ \Cref{fig:collapse}).
Moreover, each non-monopoly bidder $i \in [n]$ {\em does} have a constant bid distributions $\tilde{B}_{i}(b) = B_{i}(\lambda^{*})$ for $b \in [0,\, \lambda^{*}]$ (\Cref{def:strong}).
Hence, the \blackref{collapse_output} $H^{\uparrow} \otimes \tilde{\bB} \otimes \tilde{L}$ satisfies \blackref{re:discretization}, \blackref{re:monotonicity}, \blackref{re:layeredness}, and \blackref{re:ceilingness}
\`{a} la the {\em ceiling} \blackref{collapse_input} $H^{\uparrow} \otimes \bB \otimes L$ (\Cref{def:ceiling_floor:restate}) and additionally satisfies \blackref{restate:pro:collapse} (\Cref{def:strong}), being a {\em strong ceiling} pseudo instance $H^{\uparrow} \otimes \tilde{\bB} \otimes \tilde{L} \in \Bstrong$.
{\bf \Cref{lem:collapse:property}} follows then.

\vspace{.1in}
\noindent
{\bf \Cref{lem:collapse:potential}.}
The potential of a {\em ceiling} pseudo instance $\hat{H}^{\uparrow} \otimes \hat{\bB} \otimes \hat{L}$ counts all the ultra-ceiling entries $\Psi(\text{\em ceiling}) = \big|\big\{(\sigma,\, j) \in \hat{\bPhi}: \hat{\phi}_{\sigma,\, j} > \hat{\phi}_{H,\, 0}\big\}\big|$ in the bid-to-value table $\hat{\bPhi} = \big[\hat{\phi}_{\sigma,\, j}\big]$ (\Cref{def:potential}).
We do consider two {\em ceiling} pseudo instances ({\bf \Cref{lem:collapse:property}}) and need to verify that the \blackref{collapse_output} table $\tilde{\bPhi}$ has the same amount of ultra-ceiling entries as the \blackref{collapse_input} table $\bPhi$.

The \blackref{collapse_output} table $\tilde{\bPhi}$ (Line~\ref{alg:collapse:distribution} and \Cref{fig:collapse}) descends the non-monopoly entries $\phi_{1,\, j} \geq \dots \geq \phi_{n,\, j}$ to the {\em lower} pseudo entries $\phi_{L,\, j}$ (\blackref{restate:pro:collapse}) in before-jump columns $j \in [0: j^{*} - 1]$.
All of these entries $(i,\, j) \in [n] \times [0: j^{*} - 1]$ (\blackref{re:layeredness} and \Cref{def:jump}) are non-ultra-ceiling regardless of the reduction: $[\tilde{\phi}_{i,\, j} = \phi_{i,\, L}] \leq [\phi_{i,\, j}] \leq [\phi_{H,\, j} = \phi_{H,\, 0}]$.
So, the \blackref{collapse_output} $H^{\uparrow} \otimes \tilde{\bB} \otimes \tilde{L}$ keeps the same potential $\Psi(\blackref{collapse_output}) = \Psi(\blackref{collapse_input})$.

\vspace{.1in}
\noindent
{\bf \Cref{lem:collapse:poa}.}
We would show that compared to the \blackref{collapse_input} $H^{\uparrow} \otimes \bB \otimes L$, the \blackref{collapse_output} $H^{\uparrow} \otimes \tilde{\bB} \otimes \tilde{L}$ (weakly) decreases the auction {\SocialWelfare} $\FPA(\blackref{collapse_output}) \leq \FPA(\blackref{collapse_input})$ but keeps the same optimal {\SocialWelfare} $\OPT(\blackref{collapse_output}) = \OPT(\blackref{collapse_input})$.

\vspace{.1in}
\noindent
{\bf Auction {\SocialWelfares}.}
Following \Cref{lem:pseudo_welfare} (with $\gamma \equiv 0$ and $P^{\uparrow} \equiv \phi_{H,\, 0}$ by \Cref{def:ceiling_floor:restate}), the \blackref{collapse_input} has an auction {\SocialWelfare}
$\FPA(\blackref{collapse_input}) = \phi_{H,\, 0} \cdot \calB(0) + \sum_{\sigma \in [N]} \big(\int_{0}^{\lambda} \varphi_{\sigma}(b) \cdot \frac{B'_{\sigma}(b)}{B_{\sigma}(b)} \cdot \calB(b) \cdot \d b\big)$.
As mentioned ({\bf \Cref{lem:collapse:property}}), the following keep the same:
(i)~Everything after the jump bid $b \in (\lambda^{*},\, \lambda]$.
(ii)~Everything about the monopolist $H^{\uparrow}$. (iii)~The first-order value distribution $\tilde{\calB}(b) \equiv \calB(b)$.
We can formulate the output auction {\SocialWelfare} $\FPA(\blackref{collapse_output})$ likewise and, by concentrating on the changed parts (namely $b \in [0,\, \lambda^{*}]$ and $\sigma \neq H$), deduce that
\begin{align}
    \FPA(\blackref{collapse_output})
    & ~=~ \sum_{\sigma \neq H} \bigg(\int_{0}^{\lambda^{*}} \tilde{\varphi}_{\sigma}(b) \cdot \frac{\tilde{B}'_{\sigma}(b)}{\tilde{B}_{\sigma}(b)} \cdot \calB(b) \cdot \d b\bigg) \hspace{0.02cm} ~+~ \mathrm{const}
    \nonumber \\
    & ~=~ \int_{0}^{\lambda^{*}} \tilde{\varphi}_{L}(b) \cdot \frac{\tilde{L}'(b)}{\tilde{L}(b)} \cdot \calB(b) \cdot \d b \hspace{1.35cm} ~+~ \mathrm{const}
    \label{eq:collapse:auction:1}\tag{C1} \\
    & ~=~ \int_{0}^{\lambda^{*}} \varphi_{L}(b) \cdot \bigg(\sum_{\sigma \neq H} \frac{B'_{\sigma}(b)}{B_{\sigma}(b)}\bigg) \cdot \calB(b) \cdot \d b ~+~ \mathrm{const}
    \label{eq:collapse:auction:2}\tag{C2} \\
    & ~\leq~ \sum_{\sigma \neq H} \bigg(\int_{0}^{\lambda^{*}} \varphi_{\sigma}(b) \cdot \frac{B'_{\sigma}(b)}{B_{\sigma}(b)} \cdot \calB(b) \cdot \d b\bigg) ~+~ \mathrm{const}
    ~=~ \FPA(\blackref{collapse_input}).
    \label{eq:collapse:auction:3}\tag{C3}
\end{align}
\eqref{eq:collapse:auction:1}:
$\tilde{B}'_{i} \big/ \tilde{B}_{i} = 0$ for each $i \in [n]$ before the jump bid $b \in [0,\, \lambda^{*}]$ (\blackref{restate:pro:collapse}). \\
\eqref{eq:collapse:auction:2}:
$\tilde{\varphi}_{L} = \varphi_{L}$ ({\bf \Cref{lem:collapse:property}}) and $\tilde{L}' \big/ \tilde{L} = \sum_{\sigma \neq H} B'_{\sigma} \big/ B_{\sigma}$ (Line~\ref{alg:collapse:distribution})
before the jump bid $b \in [0,\, \lambda^{*}]$. \\
% $\tilde{L}(b) = L(b) \cdot \prod_{i \in [n]} \big(B_{i}(b) \big/ B_{i}(\lambda^{*})\big)$. \\
\eqref{eq:collapse:auction:3}:
The pseudo mapping is the dominated mapping $\varphi_{L}(b) \equiv \min(\bvarphi(b^{\otimes n + 1}))$ (\Cref{lem:pseudo_mapping}).

\vspace{.1in}
\noindent
{\bf Optimal {\SocialWelfares}.}
The optimal {\SocialWelfare} keeps the same $\OPT(\blackref{collapse_output}) = \OPT(\blackref{collapse_input})$, as an implication of \Cref{lem:ceiling_welfare}. Basically, the optimal {\SocialWelfare} formula is {\em irrelevant} to the non-ultra-ceiling entries $\phi_{\sigma,\, j} < \phi_{H,\, 0}$, including all the modified (non-monopoly before-jump) entries $\tilde{\phi}_{i,\, j} = \phi_{L,\, j}$ for $(i,\, j) \in [n] \times [0: j^{*} - 1]$.
{\bf \Cref{lem:collapse:poa}} follows then. This finishes the proof.
\end{proof}

\subsection{{\halve}: From strong ceiling towards twin ceiling, under pseudo jumps}
\label{subsec:halve}

Following the previous two subsections, we are able to deal with {\em floor}/{\em ceiling} but {\em non strong ceiling} pseudo instances $(\Bfloor \cup (\Bceiling \setminus \Bstrong))$.
As a consequence, it remains to deal with {\em strong ceiling} but {\em non twin ceiling} pseudo instances $H^{\uparrow} \otimes \bB \otimes L \in (\Bstrong \setminus \Btwin)$.
For ease of reference, below we rephrase the observations from \Cref{subsec:tech_prelim} on such pseudo instances.

\begin{restate}[\Cref{lem:potential,lem:jump,lem:strong}]
For a strong ceiling but non twin ceiling pseudo instance $H^{\uparrow} \otimes \bB \otimes L \in (\Bstrong \setminus \Btwin)$:
\begin{enumerate}[font = {\em\bfseries}]
    \item It violates \blackref{re:twin_ceiling}, i.e., its potential is nonzero $\Psi(H^{\uparrow} \otimes \bB \otimes L) > 0$.
    
    \item The \blackref{jump_entry} $(\sigma^{*},\, j^{*}) \in (\{H\} \cup [n] \cup \{L\}) \times [m]$ exists and cannot be in column $0$.
    
    % \item The jump bidder is either the pseudo bidder $\sigma^{*} = L$ or one of the real bidders $\sigma^{*} \in \{H\} \cup [n]$.
    
    \item The jump bid as the index $j^{*} \in [m]$ partition bid $\lambda^{*} \equiv \lambda_{j^{*}} \in (0,\, \lambda)$ neither can be the nil bid $\lambda_{0} \equiv 0$ nor can be the supremum bid $\lambda_{m + 1} \equiv \lambda$.
\end{enumerate}
\end{restate}

In particular, the current subsection deals with the case of a \term[\textbf{pseudo jump}]{halve_jump} $\sigma^{*} = L$, using the {\blackref{alg:halve}} reduction (see \Cref{fig:alg:halve,fig:halve} for its description and a visual aid).

\begin{figure}[t]
    \centering
    \begin{mdframed}
    Reduction $\term[\halve]{alg:halve}(H^{\uparrow} \otimes \bB \otimes L)$
    
    \begin{flushleft}
    {\bf Input:} A {\em strong ceiling} but {\em non twin ceiling} pseudo instance $H^{\uparrow} \otimes \bB \otimes L \in (\Bstrong \setminus \Btwin)$ \\
    \white{\bf Input:} that has a \blackref{halve_jump} $(\sigma^{*},\, j^{*}) \in \{L\} \times [m]$.
    \white{\term[\text{\em input}]{halve_input}}
    % \hfill
    % \Cref{def:strong,def:twin,def:jump}
    
    \vspace{.05in}
    {\bf Output:} A {\em floor}/{\em ceiling} pseudo instance $\tilde{H} \otimes \tilde{\bB} \otimes \tilde{L} \in (\Bfloor \cup \Bceiling)$.
    \white{\term[\text{\em output}]{halve_output}}
    
    {\begin{enumerate}
        \item\label{alg:halve:left}
        Define the \term[\text{\em left}]{halve_left} candidate $H^{\sf L\uparrow} \otimes \bB^{\sf L} \otimes L^{\sf L}$ as the {\em ceiling} pseudo instance \\
        given by $B_{\sigma}^{\sf L}(b) \equiv \min\big(B_{\sigma}(b) \big/ B_{\sigma}(\lambda^{*}),\, 1\big)$ for $\sigma \in \{H\} \cup [n] \cup \{L\}$.
        
        \item\label{alg:halve:right}
        Define the \term[\text{\em right}]{halve_right} candidate $H^{\sf R\downarrow} \otimes \bB^{\sf R} \otimes L^{\sf R}$ as the {\em floor} pseudo instance \\
        given by $B_{\sigma}^{\sf R}(b) \equiv B_{\sigma}(b + \lambda^{*}) \cdot \indicator(b \geq 0)$ for $\sigma \in \{H\} \cup [n] \cup \{L\}$.
        
        \item\label{alg:halve:output}
        {\bf Return} the {\PoA}-worse candidate $\argmin \big\{\, \PoA(\text{\em left}),~ \PoA(\text{\em right}) \,\big\}$; \\
        \white{\bf Return} breaking ties in favor of the {\em left} candidate $H^{\sf L \uparrow} \otimes \bB^{\sf L} \otimes L^{\sf L}$.
    \end{enumerate}}
    \end{flushleft}
    \end{mdframed}
    \caption{The {\halve} reduction
    \label{fig:alg:halve}}
\end{figure}

\begin{intuition*}
As \Cref{fig:halve:input} suggests, a \blackref{halve_jump} $(\sigma^{*},\, j^{*}) \in \{L\} \times [m]$ divides the {\em strong ceiling} but {\em non twin ceiling} pseudo instance $H^{\uparrow} \otimes \bB \otimes L \in (\Bstrong \setminus \Btwin)$ into the {\em left-lower} part and the {\em right-upper} part. It is conceivable that we can accordingly construct two {\em sub pseudo instances}, such that at least one of them yields a (weakly) worse {\PoA} bound.
\end{intuition*}

The \blackref{alg:halve} reduction implements the above intuition, and its performance guarantees are summarized in \Cref{lem:halve}.

\begin{lemma}[{\halve}; \Cref{fig:alg:halve}]
\label{lem:halve}
Under reduction $(\tilde{H} \otimes \tilde{\bB} \otimes \tilde{L}) \gets \halve(H^{\uparrow} \otimes \bB \otimes L)$:
% \footnote{Indeed, an {\em ceiling} output is further a {\em twin ceiling} output $(\tilde{H} = \tilde{H}^{\uparrow}) \otimes \tilde{\bB} \otimes \tilde{L} \cong \tilde{H}^{\uparrow} \otimes \tilde{L} \in \Btwin \subsetneq \Bstrong \subsetneq \Bceiling$, as \Cref{fig:halve:left} suggests. But the current (weaker) version of \Cref{lem:halve} is sufficient for our purpose.}
{\begin{enumerate}[font = {\em\bfseries}]
    \item\label{lem:halve:property}
    The output is a floor/ceiling pseudo instance $\tilde{H} \otimes \tilde{\bB} \otimes \tilde{L} \in (\Bfloor \cup \Bceiling)$.
    
    \item\label{lem:halve:potential}
    The potential strictly decreases $\Psi(\tilde{H} \otimes \tilde{\bB} \otimes \tilde{L}) \leq \Psi(H^{\uparrow} \otimes \bB \otimes L) - 1$.
    
    % The potential decreases or keeps the same $\Psi(\tilde{H} \otimes \tilde{\bB} \otimes \tilde{L}) \leq \Psi(H^{\uparrow} \otimes \bB \otimes L)$ in the case of a floor output $(\tilde{H} = \tilde{H}^{\downarrow}) \otimes \tilde{\bB} \otimes \tilde{L} \in \Bfloor$,
    % OR strictly decreases $\Psi(\tilde{H} \otimes \tilde{\bB} \otimes \tilde{L}) \leq \Psi(H^{\uparrow} \otimes \bB \otimes L) - 1$ in the case of a ceiling output $(\tilde{H} = \tilde{H}^{\uparrow}) \otimes \tilde{\bB} \otimes \tilde{L} \in \Bceiling$.
    
    \item\label{lem:halve:poa}
    A (weakly) worse bound is yielded $\PoA(\tilde{H} \otimes \tilde{\bB} \otimes \tilde{L}) \leq \PoA(H^{\uparrow} \otimes \bB \otimes L)$.
\end{enumerate}}
\end{lemma}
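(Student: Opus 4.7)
The plan is to verify the three parts in sequence. The key structural observation driving everything is that a \blackref{halve_jump} at $\sigma^{*} = L$ combined with \blackref{re:layeredness} forces the bid-to-value spectrum to bifurcate cleanly at bid $\lambda^{*}$: every entry $\phi_{\sigma, j^{*}}$ in column $j^{*}$ satisfies $\phi_{H, 0} < \phi_{L, j^{*}} \leq \phi_{\sigma, j^{*}} \leq \phi_{H, j^{*}}$, while every entry $\phi_{\sigma, j^{*}-1}$ in the preceding column is $\leq \phi_{H, 0}$ (and in particular, by row-wise monotonicity, $\phi_{H, 0} = \phi_{H, 1} = \dots = \phi_{H, j^{*}-1}$). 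Consequently no bidder has a value in the open gap $(\phi_{H, 0}, \phi_{L, j^{*}})$, and the input pseudo instance splits into a left half and a right half whose value ranges are disjoint.

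\textbf{Part 1 (property).} For the left candidate $B_{\sigma}^{\sf L}(b) = B_{\sigma}(b)/B_{\sigma}(\lambda^{*})$, renormalizing each CDF by a \emph{constant} factor preserves the log-derivatives $B_{\sigma}^{{\sf L}\prime}/B_{\sigma}^{\sf L} = B_{\sigma}'/B_{\sigma}$ and hence $\varphi_{\sigma}^{\sf L}(b) = \varphi_{\sigma}(b)$ on $[0, \lambda^{*}]$; this inherits \blackref{re:monotonicity}, \blackref{re:layeredness}, \blackref{re:discretization}, and the ceiling condition with $\phi_{H, 0}^{\sf L} = \phi_{H, 0}$. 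For the right candidate $B_{\sigma}^{\sf R}(b) = B_{\sigma}(b + \lambda^{*})$, the shift yields $\varphi_{\sigma}^{\sf R}(b) = \varphi_{\sigma}(b + \lambda^{*}) - \lambda^{*}$, monotonicity and layering pass through, and setting $P^{{\sf R}\downarrow} \equiv 0$ makes it a floor pseudo instance; the only nontrivial sanity check is the pseudo-mapping dominance $\varphi_{L}^{\sf R}(0) = \phi_{L, j^{*}} - \lambda^{*} > 0$, which follows from \Cref{lem:pseudo_mapping:dominance} applied to the input at the positive bid $\lambda^{*}$.

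\textbf{Part 2 (potential).} For the left candidate, every entry in columns $[0: j^{*} - 1]$ is $\leq \phi_{H, 0} = \phi_{H, 0}^{\sf L}$ by the minimality of $j^{*}$, so $\Psi(\text{left}) = 0 \leq \Psi(\text{input}) - 1$. For the right candidate, all $n + 2$ entries of column $j^{*}$ lie in the interval $(\phi_{H, 0}, \phi_{H, j^{*}}]$ by the pseudo-jump observation above, so they are ultra-ceiling in the input but cease to be ultra-ceiling under the new ceiling $\phi_{H, 0}^{\sf R} = \phi_{H, j^{*}}$ of the right candidate. Accounting for the $+1$ offset of floor pseudo instances gives $\Psi(\text{right}) \leq \Psi(\text{input}) - (n+2) + 1 \leq \Psi(\text{input}) - 1$. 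Whichever candidate is returned by Line~\ref{alg:halve:output}, the potential drops by at least one.

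\textbf{Part 3 (PoA).} Let $\alpha \eqdef \calB(\lambda^{*})$ and $\beta \eqdef 1 - \alpha$. Splitting the \Cref{lem:pseudo_welfare} integral for $\FPA(\text{input})$ at $\lambda^{*}$ and applying the change-of-variable computations $\alpha \cdot \FPA(\text{left}) = \phi_{H, 0}\calB(0) + \sum_\sigma \int_{0}^{\lambda^{*}} \varphi_{\sigma} \tfrac{B'_\sigma}{B_\sigma} \calB\,db$ and $\FPA(\text{right}) = \sum_\sigma \int_{\lambda^{*}}^{\lambda} \varphi_{\sigma} \tfrac{B'_\sigma}{B_\sigma} \calB\,db - \beta\lambda^{*}$ (the last subtraction coming from $\sum_\sigma \int_{\lambda^*}^\lambda \tfrac{B'_\sigma}{B_\sigma}\calB\,db = 1 - \alpha$) yields
\[
    \FPA(\text{input}) ~=~ \alpha \cdot \FPA(\text{left}) ~+~ 1 \cdot \FPA(\text{right}) ~+~ \beta \cdot \lambda^{*}.
\]
For the optimal welfare, the universal value gap forces $V_{\max}^{\text{input}} = \phi_{H, 0}$ on $E \eqdef \{\max_{\sigma} b_{\sigma} < \lambda^{*}\}$ (the monopolist's mapping is locked to $\phi_{H, 0}$ throughout $[0, \lambda^{*})$ and every other value is $\leq \phi_{H, 0}$) and $V_{\max}^{\text{input}} = V_{\max}^{\sf R} + \lambda^{*}$ on $E^{c}$ (the maximum in the input must be attained by a bidder with bid $> \lambda^{*}$), giving $\OPT(\text{left}) = \phi_{H, 0}$ and the parallel identity
\[
    \OPT(\text{input}) ~=~ \alpha \cdot \OPT(\text{left}) ~+~ 1 \cdot \OPT(\text{right}) ~+~ \beta \cdot \lambda^{*}.
\]
The three-component mediant inequality with positive weights $(\alpha, 1, \beta)$ and ratios $\bigl(\PoA(\text{left}), \PoA(\text{right}), 1\bigr)$ then yields $\PoA(\text{input}) \geq \min\{\PoA(\text{left}), \PoA(\text{right}), 1\} = \min\{\PoA(\text{left}), \PoA(\text{right})\}$. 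The main obstacle is the OPT decomposition: it crucially uses that $\sigma^{*} = L$ forces the value gap across \emph{all} bidders via layeredness, so that on $E$ the max value is pinned exactly to $\phi_{H, 0}$ and the welfare splits cleanly; for a \blackref{real_jump} this clean splitting fails and the paper resorts to the entirely different \blackref{alg:AD} reduction.
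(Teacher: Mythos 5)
Your proposal is correct and takes essentially the same route as the paper's proof: the same two candidates, the same potential accounting (the left table has no ultra-ceiling entries, while the right table loses the $\geq 2$ ultra-ceiling entries of the jump column, offsetting the $+1$ floor surcharge), and the same decompositions $\FPA(\text{input}) = \alpha\cdot\FPA(\text{left}) + \FPA(\text{right}) + \beta\lambda^{*}$ and $\OPT(\text{input}) = \alpha\cdot\OPT(\text{left}) + \OPT(\text{right}) + \beta\lambda^{*}$ combined via the mediant inequality with the ratio-$1$ term $\beta\lambda^{*}$. The only cosmetic difference is that you justify the $\OPT$ identity through the value-gap/coupling argument on the event $\{\max_{\sigma} b_{\sigma} < \lambda^{*}\}$ rather than through the formula of \Cref{lem:ceiling_welfare}, which is an equivalent derivation.
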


\afterpage{
\begin{figure}[t]
    \centering
    \subfloat[{The {\em strong ceiling} but {\em non twin ceiling} input $H^{\uparrow} \otimes \bB \otimes L \in (\Bstrong \setminus \Btwin)$ with a \textbf{pseudo jump} $\sigma^{*} = L$}
    \label{fig:halve:input}]{
    \parbox[c]{.99\textwidth}{
    {\centering
    \includegraphics[width = .49\textwidth]{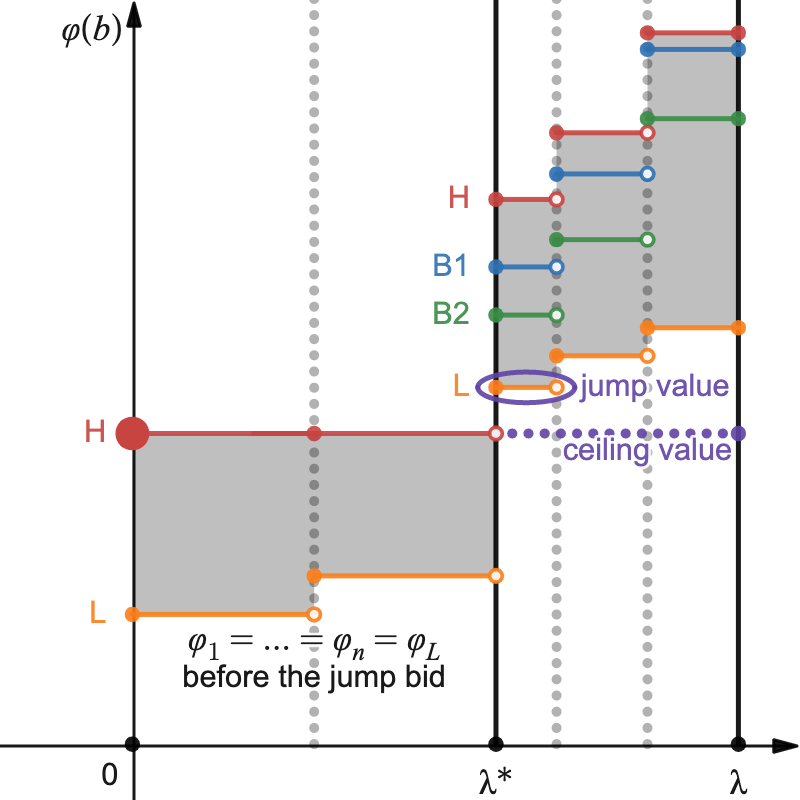}
    \par}}} \\
    \vspace{1cm}
    \subfloat[{The {\em ceiling} left candidate $H^{\sf L\uparrow} \otimes \bB^{\sf L} \otimes L^{\sf L} \in \Bceiling$}
    \label{fig:halve:left}]{
    \includegraphics[width = .49\textwidth]{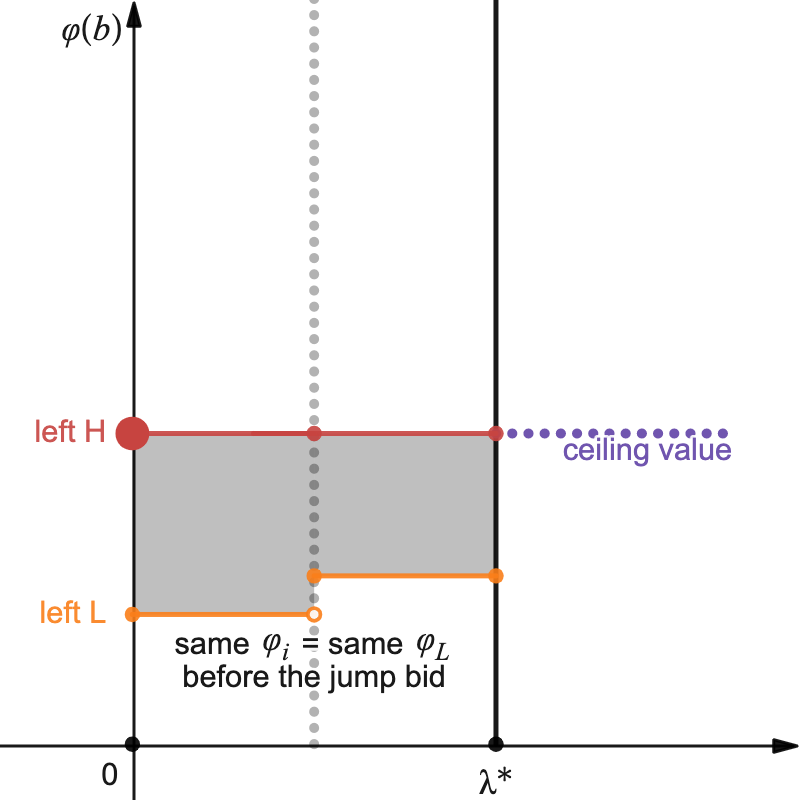}}
    \hfill
    \subfloat[{The {\em floor} right candidate $H^{\sf R\downarrow} \otimes \bB^{\sf R} \otimes L^{\sf R} \in \Bfloor$}
    \label{fig:halve:right}]{
    \includegraphics[width = .49\textwidth]{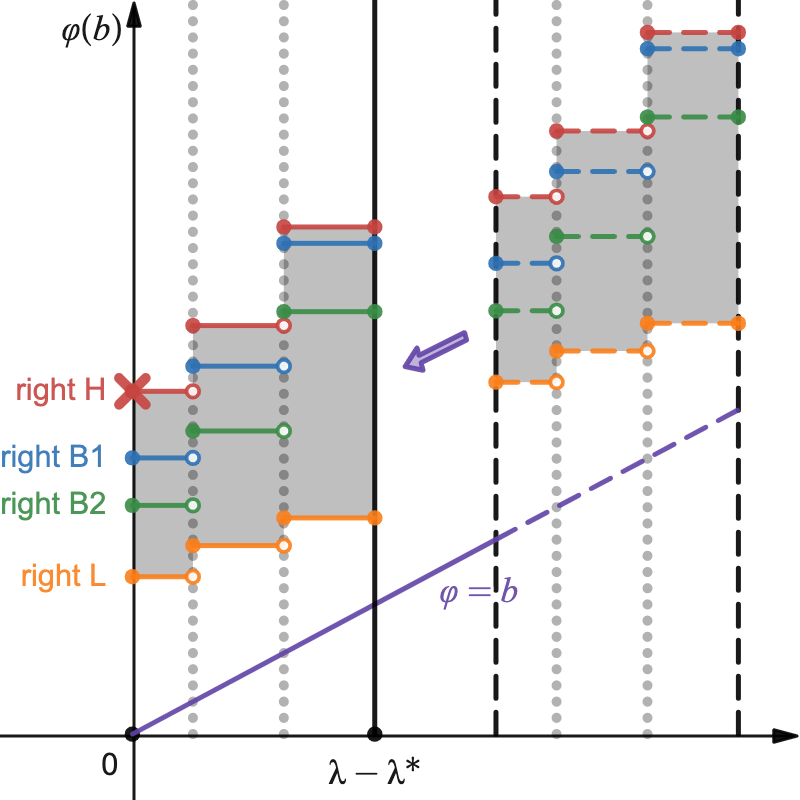}}
    \caption{Diagram of the {\halve} reduction (\Cref{fig:alg:halve}), which transforms a {\em strong ceiling} but {\em non twin ceiling} pseudo instance $H^{\uparrow} \otimes \bB \otimes L \in (\Bstrong \setminus \Btwin)$ that has a \textbf{pseudo jump} $\sigma = L$ (\Cref{def:twin,def:strong,def:jump}) into EITHER the {\em ceiling} left candidate $H^{\sf L\uparrow} \otimes \bB^{\sf L} \otimes L^{\sf L} \in \Bceiling$ OR the {\em floor} right candidate $H^{\sf R\downarrow} \otimes \bB^{\sf R} \otimes L^{\sf R} \in \Bfloor$.
    \label{fig:halve}}
\end{figure}
\clearpage}

\begin{proof}
See \Cref{fig:halve} for a visual aid. The \blackref{alg:halve} reduction chooses the {\PoA}-worse pseudo instance (Line~\ref{alg:halve:output}) between the \blackref{halve_left}/\blackref{halve_right} candidates $H^{\sf L\uparrow} \otimes \bB^{\sf L} \otimes L^{\sf L}$ and $H^{\sf R\downarrow} \otimes \bB^{\sf R} \otimes L^{\sf R}$, breaking ties in favor of the \blackref{halve_left} candidate. For brevity, let $[N] \equiv \{H\} \cup [n] \cup \{L\}$.

\vspace{.1in}
\noindent
{\bf \Cref{lem:halve:property}.}
We shall prove that the \blackref{halve_left} candidate $H^{\sf L\uparrow} \otimes \bB^{\sf L} \otimes L^{\sf L}$ is a {\em ceiling} pseudo instance while the \blackref{halve_right} candidate $H^{\sf R\downarrow} \otimes \bB^{\sf R} \otimes L^{\sf R}$ is a {\em floor} pseudo instance.

The \blackref{halve_left} candidate $H^{\sf L\uparrow} \otimes \bB^{\sf L} \otimes L^{\sf L}$ (Line~\ref{alg:halve:left} and \Cref{fig:halve:left}) shrinks the support to before-jump bids $b \in [0,\, \lambda^{*}]$, taking the form $B_{\sigma}^{\sf L}(b) = B_{\sigma}(b) \big/ B_{\sigma}(\lambda^{*})$ for $\sigma \in [N]$. We can easily check that for $b \in [0,\, \lambda^{*}]$, each \blackref{halve_left} bidder $B_{\sigma}^{\sf L}$ keeps the same mapping $\varphi_{\sigma}^{\sf L}(b) = \varphi_{\sigma}(b)$ as the counterpart \blackref{halve_input} bidder $B_{\sigma}$ (\Cref{def:ceiling_floor:restate}), therefore preserving \blackref{re:discretization}, \blackref{re:monotonicity}, and \blackref{re:layeredness}.
The construction further promises \blackref{re:ceilingness}, making the \blackref{halve_left} candidate $H^{\sf L\uparrow} \otimes \bB^{\sf L} \otimes L^{\sf L} \in \Bceiling$ a {\em ceiling} pseudo instance.

The \blackref{halve_right} candidate $H^{\sf R\downarrow} \otimes \bB^{\sf R} \otimes L^{\sf R}$ (Line~\ref{alg:halve:right} and \Cref{fig:halve:right}) shrinks the support to SHIFTED after-jump bids $b \in [0,\, \lambda - \lambda^{*}]$, taking the form $B_{\sigma}^{\sf R}(b) = B_{\sigma}(b + \lambda^{*})$ for $\sigma \in [N]$.
The exactly same construction has been used in the \blackref{alg:slice} reduction to yield a spectrum of {\em floor} pseudo instances $\big\{\, H^{(t)\downarrow} \otimes \bB^{(t)} \otimes L^{(t)}: 0 \leq t < \lambda \,\big\}$ (see the proof of \Cref{lem:slice:property} of \Cref{lem:slice}); the one given by the jump bid $\lambda^{*} \in (0,\, \lambda)$ is precisely the \blackref{halve_right} candidate $H^{\sf R\downarrow} \otimes \bB^{\sf R} \otimes L^{\sf R} \equiv H^{(\lambda^{*})\downarrow} \otimes \bB^{(\lambda^{*})} \otimes L^{(\lambda^{*})} \in \Bfloor$.

Putting both parts together gives {\bf \Cref{lem:halve:property}}.

\vspace{.1in}
\noindent
{\bf \Cref{lem:halve:potential}.}
The potential of a {\em ceiling} pseudo instance $\hat{H}^{\uparrow} \otimes \hat{\bB} \otimes \hat{L}$ counts all the ultra-ceiling entries $\Psi(\text{\em ceiling}) = \big|\big\{(\sigma,\, j) \in \hat{\bPhi}: \hat{\phi}_{\sigma,\, j} > \hat{\phi}_{H,\, 0}\big\}\big|$ in the bid-to-value table $\hat{\bPhi} = \big[\hat{\phi}_{\sigma,\, j}\big]$ (\Cref{def:potential}) yet the paired {\em floor} pseudo instance $\hat{H}^{\downarrow} \otimes \hat{\bB} \otimes \hat{L}$ counts another one $\Psi(\text{\em floor}) = \Psi(\text{\em ceiling}) + 1$.
% Hence, for the three pseudo instances studied here, it suffices to show that the \blackref{halve_left}/\blackref{halve_right} tables $\bPhi^{\sf L}$ and $\bPhi^{\sf R}$ each have $\geq 2$ FEWER ultra-ceiling entries than the \blackref{halve_input} table $\bPhi$.

The ultra-ceiling entries in the \blackref{halve_input} table $\bPhi$ (\Cref{def:jump,fig:halve:input}) are precisely those in the after-jump columns $\bPhi([N] \times [j^{*}: m])$, because the \blackref{halve_jump} entry $(\sigma^{*},\, j^{*}) \in \{L\} \times [m]$ is the {\em leftmost-then-lowest} ultra-ceiling entry. 

The \blackref{halve_left} table $\bPhi^{\sf L} = \bPhi([N] \times [0: j^{*} - 1])$ (Line~\ref{alg:halve:left} and \Cref{fig:halve:left}) discards all those ultra-ceiling entries in the after-jump columns $j \in [j^{*}: m]$. The ceiling value $\phi_{H,\, 0}$ keeps the same given $j^{*} \neq 0$.
Therefore, the \blackref{halve_left} table $\bPhi^{\sf L}$ has NO ultra-ceiling entry, while the \blackref{halve_input} table $\bPhi$ has at least one ultra-ceiling entry, i.e., the \blackref{halve_jump} entry $(\sigma^{*},\, j^{*})$.
% $|[N] \times [j^{*}: m]| \geq 2$ ultra-ceiling entries ($[N] = \{H\} \cup [n] \cup \{L\}$, where $n \geq 0$).
The first part of {\bf \Cref{lem:halve:potential}} follows.

The \blackref{halve_right} table $\bPhi^{\sf R} = \bPhi([N] \times [j^{*}: m]) - \blambda^{*}$ (Line~\ref{alg:halve:right} and \Cref{fig:halve:right}) discards all the non-ultra-ceiling entries in the before-jump columns $j \in [0; j^{*} - 1]$ and shifts the remaining ones (namely the ultra-ceiling entries in the \blackref{halve_input} table $\bPhi$) each by a distance $-\lambda^{*}$.
The ceiling value changes to the shifted row-$H$ jump-column entry $\phi_{H,\, j^{*}}^{\sf R} \equiv \phi_{H,\, j^{*}} - \lambda^{*}$.
Thus, the set of ultra-ceiling entries shrinks by removing all the \blackref{halve_right} entries $(\sigma,\, j) \in \bPhi^{\sf R}$ for which $\phi_{\sigma,\, j}^{\sf R} \leq \phi_{H,\, j^{*}}^{\sf R}$,
namely all the \blackref{halve_input} entries $\phi_{\sigma,\, j} \leq \phi_{H,\, j^{*}}$ for $(\sigma,\, j) \in [N] \times [j^{*}: m]$.
In particular, all the entries $(\sigma,\, j) \in [N] \times \{j^{*}\}$ in the jump column (\blackref{re:layeredness}) are removed from the set of ultra-ceiling entries; there are $|[N] \times \{j^{*}\}| \geq 2$ such entries ($[N] = \{H\} \cup [n] \cup \{L\}$, where $n \geq 0$).
The second part of {\bf \Cref{lem:halve:potential}} follows.

\vspace{.1in}
\noindent
{\bf \Cref{lem:halve:poa}.}
We claim that either the \blackref{halve_left} candidate
$H^{\sf L\uparrow} \otimes \bB^{\sf L} \otimes L^{\sf L}$ or the \blackref{halve_right} candidate $H^{\sf R\downarrow} \otimes \bB^{\sf R} \otimes L^{\sf R}$ yields a (weakly) worse {\PoA} bound than the \blackref{halve_input} $H^{\uparrow} \otimes \bB \otimes L$.
% , through a coupling argument.

% \vspace{1cm}
% Following \Cref{lem:pseudo_welfare} (given \blackref{re:ceilingness} $P^{\uparrow} \equiv \phi_{H,\, 0}$), the \blackref{collapse_input} auction {\SocialWelfare} is given by
% $\FPA(\blackref{collapse_input}) = \phi_{H,\, 0} \cdot \calB(0) + \sum_{\sigma \in [N]} \big(\int_{0}^{\lambda} \varphi_{\sigma}(b) \cdot \frac{B'_{\sigma}(b)}{B_{\sigma}(b)} \cdot \calB(b) \cdot \d b\big)$.
% Likewise, the \blackref{halve_left} counterpart (given \blackref{re:ceilingness} $P^{{\sf L}\uparrow} \equiv \phi_{H,\, 0}^{\sf L}$) is given by

\vspace{.1in}
\noindent
{\bf Auction {\SocialWelfares}.}
The \blackref{halve_left} candidate $H^{{\sf L}\uparrow} \otimes \bB^{\sf L} \otimes L$ yields (\Cref{lem:pseudo_welfare}; with $P^{{\sf L}\uparrow} \equiv \phi_{H,\, 0}^{\sf L}$) the auction {\SocialWelfare}
\begin{align*}
    \FPA(\blackref{halve_left})\hspace{.34cm}
    & ~=~ \phi_{H,\, 0}^{\sf L} \cdot \calB^{\sf L}(0)
    + \sum_{\sigma \in [N]} \int_{0}^{\lambda^{*}} \varphi_{\sigma}^{\sf L}(b) \cdot \frac{{B_{\sigma}^{\sf L}}'(b)}{B_{\sigma}^{\sf L}(b)} \cdot \calB^{\sf L}(b) \cdot \d b \\
    & ~=~ \Big(\phi_{H,\, 0} \cdot \calB(0)
    + \sum_{\sigma \in [N]} \int_{0}^{\lambda^{*}} \varphi_{\sigma}(b) \cdot \frac{B'_{\sigma}(b)}{B_{\sigma}(b)} \cdot \calB(b) \cdot \d b\Big) \cdot \frac{1}{\calB(\lambda^{*})}.
    && \parbox{2cm}{Line~\ref{alg:halve:left}}
\end{align*}
The \blackref{halve_right} candidate $H^{{\sf R}\downarrow} \otimes \bB^{\sf R} \otimes L^{\sf R}$ yields (\Cref{lem:pseudo_welfare}; with $P^{{\sf R}\downarrow} \equiv 0$) the auction {\SocialWelfare}
\begin{align*}
    \FPA(\blackref{halve_right})\hspace{.08cm}
    & ~=~ \sum_{\sigma \in [N]} \int_{0}^{\lambda - \lambda^{*}} \varphi_{\sigma}^{\sf R}(b) \cdot \frac{{B_{\sigma}^{\sf R}}'(b)}{B_{\sigma}^{\sf R}(b)} \cdot \calB^{\sf R}(b) \cdot \d b \\
    & ~=~ \sum_{\sigma \in [N]} \int_{\lambda^{*}}^{\lambda} \big(\varphi_{\sigma}(b) - \lambda^{*}\big) \cdot \frac{B'_{\sigma}(b)}{B_{\sigma}(b)} \cdot \calB(b) \cdot \d b
    && \text{Line~\ref{alg:halve:right}} \\
    & ~=~ \sum_{\sigma \in [N]} \int_{\lambda^{*}}^{\lambda} \varphi_{\sigma}(b) \cdot \frac{B'_{\sigma}(b)}{B_{\sigma}(b)} \cdot \calB(b) \cdot \d b
    - \int_{\lambda^{*}}^{\lambda} \lambda^{*} \cdot \calB'(b) \cdot \d b
    \hspace{1.cm}
    && \parbox{2cm}{$\calB = \prod_{\sigma} B_{\sigma}$} \\
    & ~=~ \sum_{\sigma \in [N]} \int_{\lambda^{*}}^{\lambda} \varphi_{\sigma}(b) \cdot \frac{B'_{\sigma}(b)}{B_{\sigma}(b)} \cdot \calB(b) \cdot \d b
    - \lambda^{*} \cdot \big(1 - \calB(\lambda^{*})\big).
    && \text{$\calB(\lambda) = 1$}
\end{align*}
Further, the \blackref{halve_input} $H^{\uparrow} \otimes \bB \otimes L$ yields (\Cref{lem:pseudo_welfare}; with $P^{\uparrow} \equiv \phi_{H,\, 0}$) the auction {\SocialWelfare}
\begin{align*}
    \FPA(\blackref{collapse_input})
    & ~=~ \phi_{H,\, 0} \cdot \calB(0) + \sum_{\sigma \in [N]} \Big(\int_{0}^{\lambda} \varphi_{\sigma}(b) \cdot \frac{B'_{\sigma}(b)}{B_{\sigma}(b)} \cdot \calB(b) \cdot \d b\Big) \\
    & ~=~ \FPA(\blackref{halve_left}) \cdot \calB(\lambda^{*}) + \FPA(\blackref{halve_right}) + \lambda^{*} \cdot \big(1 - \calB(\lambda^{*})\big)\hspace{1.55cm}
    && \parbox{2cm}{substitution} \phantom{\Big.}
\end{align*}

\vspace{.1in}
\noindent
{\bf Optimal {\SocialWelfares}.}
Let $\bPhi^{*} = [N] \times [j^{*}: m]$ be the after-jump subtable.
Following {\bf \Cref{lem:halve:potential}} and \Cref{fig:halve}:
(i)~The \blackref{halve_left} table $\bPhi^{\sf L}$ has NO ultra-ceiling entry $\emptyset$.
(ii)~The \blackref{halve_right} table $\bPhi^{\sf R}$'s ultra-ceiling entries $\phi_{\sigma,\, j}^{\sf R} > \phi_{H,\, j^{*}}^{\sf R}$ are all included in the subtable $\bPhi^{*}$.
(iii)~The \blackref{halve_input} table $\bPhi$' ultra-ceiling entries $\phi_{\sigma,\, j} > \phi_{H,\, 0}$ are precisely the subtable $\bPhi^{*}$.
Hence, the \blackref{halve_left} candidate $H^{{\sf L}\uparrow} \otimes \bB^{\sf L} \otimes L$ yields (\Cref{lem:ceiling_welfare}; with $P^{{\sf L}\uparrow} \equiv \phi_{H,\, 0}^{\sf L}$) the optimal {\SocialWelfare}
\begin{align*}
    \OPT(\blackref{halve_left})\hspace{.34cm}
    & ~=~ \int_{0}^{+\infty} \big(1 - \indicator(v \geq \phi_{H,\, 0}^{\sf L})\big) \cdot \d v
    ~=~ \phi_{H,\, 0}^{\sf L}
    ~=~ \phi_{H,\, 0}.\hspace{.99cm}
    \text{Line~\ref{alg:halve:left}}
    \hspace{3.18cm}
\end{align*}
The \blackref{halve_right} candidate $H^{{\sf R}\downarrow} \otimes \bB^{\sf R} \otimes L^{\sf R}$ yields (\Cref{lem:ceiling_welfare}; with $P^{{\sf R}\downarrow} \equiv 0$) the optimal {\SocialWelfare}\footnote{The \blackref{halve_right} candidate (Line~\ref{alg:halve:right}) preserves the probabilities $\omega_{\sigma,\, j}^{\sf R} = 1 - \frac{B_{\sigma}^{\sf R}(\lambda_{j})}{B_{\sigma}^{\sf R}(\lambda_{j + 1})} = 1 - \frac{B_{\sigma}(\lambda_{j})}{B_{\sigma}(\lambda_{j + 1})} = \omega_{\sigma,\, j}$ for $(\sigma,\, j) \in \bPhi^{*}$.}
\begin{align}
    \OPT(\blackref{halve_right})\hspace{.08cm}
    & ~=~ \int_{0}^{+\infty} \Big(1 - \prod_{(\sigma,\, j) \,\in\, \bPhi^{*}} \big(1 - \omega_{\sigma,\, j} \cdot \indicator(v < \phi_{\sigma,\, j}^{\sf R})\big)\Big) \cdot \d v
    \nonumber \\
    & ~=~ \int_{\lambda^{*}}^{+\infty} \Big(1 - \prod_{(\sigma,\, j) \,\in\, \bPhi^{*}} \big(1 - \omega_{\sigma,\, j} \cdot \indicator(v < \phi_{\sigma,\, j})\big)\Big) \cdot \d v
    \qquad \text{Line~\ref{alg:halve:right}}
    \nonumber \\
    & ~=~ \int_{0}^{+\infty} \Big(1 - \prod_{(\sigma,\, j) \,\in\, \bPhi^{*}} \big(1 - \omega_{\sigma,\, j} \cdot \indicator(v < \phi_{\sigma,\, j})\big)\Big) \cdot \d v
    ~-~ \lambda^{*} \cdot \big(1 - \calB(\lambda^{*})\big).
    \hspace{.75cm}
    \label{eq:halve:H1}\tag{H1}
\end{align}
\eqref{eq:halve:H1}: When $v \leq \lambda^{*}$, the integrand $\overset{(\dagger)}{=} 1 - \prod_{\bPhi^{*}} (1 - \omega_{\sigma,\, j})
= 1 - \prod_{\bPhi^{*}} \frac{B_{\sigma}(\lambda_{j})}{B_{\sigma}(\lambda_{j + 1})}
\overset{(\ddagger)}{=} 1 - \prod_{[N]} B_{\sigma}(\lambda^{*})
= 1 - \calB(\lambda^{*})$; \\
($\dagger$) $\phi_{\sigma,\, j} > \phi_{H,\, 0} = \lim_{b \nearrow \lambda^{*}} \varphi_{H}(b) > \lambda^{*}$ given a \blackref{halve_jump} $(\sigma^{*},\, j^{*}) \in \{L\} \times [m]$ and \Cref{lem:pseudo_mapping}; \\
($\ddagger$) $\lambda_{j^{*}} \equiv \lambda^{*}$ and $\calB(\lambda_{m + 1} \equiv \lambda) = 1$.

\vspace{.1in}
\noindent
Further, the \blackref{halve_input} $H^{\uparrow} \otimes \bB \otimes L$ yields (\Cref{lem:ceiling_welfare}; with $P^{\uparrow} \equiv \phi_{H,\, 0}$) the optimal {\SocialWelfare}
\begin{align*}
    \OPT(\blackref{halve_input})
    & ~=~ \int_{0}^{+\infty} \Big(1 - \indicator(v \geq \phi_{H,\, 0}) \phantom{\Big)} \cdot \prod_{(\sigma,\, j) \,\in\, \bPhi^{*}} \big(1 - \omega_{\sigma,\, j} \cdot \indicator(v < \phi_{\sigma,\, j})\big)\Big) \cdot \d v \\
    & ~=~ \int_{0}^{\phi_{H,\, 0}}
    % \Big(1 - \indicator(v \geq \phi_{H,\, 0})\Big) \cdot 
    \prod_{(\sigma,\, j) \,\in\, \bPhi^{*}} \big(1 - \omega_{\sigma,\, j} \cdot \indicator(v < \phi_{\sigma,\, j})\big) \cdot \d v \\
    & \phantom{~=~} + \OPT(\blackref{halve_right}) + \lambda^{*} \cdot \big(1 - \calB(\lambda^{*})\big)
    \qquad\qquad\qquad~~~~~~~~~\, \text{substitute \eqref{eq:halve:H1}}
    \phantom{\Big.} \\
    & ~=~ \phi_{H,\, 0} \cdot \calB(\lambda^{*}) + \OPT(\blackref{halve_right}) + \lambda^{*} \cdot \big(1 - \calB(\lambda^{*})\big)
    \qquad\;~~~~~ \text{reuse arguments for \eqref{eq:halve:H1}}
    \phantom{\Big.} \\
    & ~=~ \OPT(\blackref{halve_input}) \cdot \calB(\lambda^{*}) + \OPT(\blackref{halve_right}) + \lambda^{*} \cdot \big(1 - \calB(\lambda^{*})\big).
    \phantom{\Big.}
\end{align*}
Therefore, either or both of the \blackref{halve_left}/\blackref{halve_right} candidates yields a (weakly) worse {\PoA} bound, as follows. (Recall that a {\PoA} bound is always at most $1$.)
\begin{align*}
    \PoA(\blackref{halve_input})
    % & \min \bigg\{\, \PoA(\blackref{halve_left}) \equiv \frac{\FPA(\blackref{halve_left})}{\OPT(\blackref{halve_left})},~
    % \PoA(\blackref{halve_right}) = \frac{\FPA(\blackref{halve_right})}{\OPT(\blackref{halve_right})} \,\bigg\} \\
    % & ~\leq~ \frac{\FPA^{\sf L} \cdot q + \FPA^{\sf right}}{\OPT^{\sf L} \cdot q + \OPT^{\sf right}}
    \,=\, \frac{\FPA(\blackref{halve_left}) \cdot \calB(\lambda^{*}) + \FPA(\blackref{halve_right}) + \lambda^{*} \cdot (1 - \calB(\lambda^{*}))}
    {\OPT(\blackref{halve_left}) \cdot \calB(\lambda^{*}) + \OPT(\blackref{halve_right}) + \lambda^{*} \cdot (1 - \calB(\lambda^{*}))}
    \,\geq\, \min \big\{\, \PoA(\blackref{halve_left}),~ \PoA(\blackref{halve_right}) \,\big\}.
\end{align*}
% The \blackref{alg:halve} reduction (Line~\ref{alg:halve:output}) chooses the {\PoA}-worse one between the two candidates, breaking ties in favor of the {\em ceiling} ``left'' candidate. This together with the above inequality give
{\bf \Cref{lem:halve:poa}} follows then.
This finishes the proof.
\end{proof}

\subsection{{\AD}: From strong ceiling towards twin ceiling, under real jumps}
\label{subsec:AD}

This subsection also considers {\em strong ceiling} but {\em non twin ceiling} pseudo instances $H^{\uparrow} \otimes \bB \otimes L \in (\Bstrong \setminus \Btwin)$. In particular, we deal with the case of a \term[\textbf{real jump}]{AD_jump} $\sigma^{*} \in \{H\} \cup [n]$ through the {\blackref{alg:AD}} reduction (see \Cref{fig:alg:AD,fig:AD} for its description and a visual aid).
For ease of reference, let us rephrase the observations from \Cref{subsec:tech_prelim}.

\begin{restate}[\Cref{lem:potential,lem:jump,lem:strong}]
For a strong ceiling but non twin ceiling pseudo instance $H^{\uparrow} \otimes \bB \otimes L \in (\Bstrong \setminus \Btwin)$:
\begin{enumerate}[font = {\em\bfseries}]
    \item It violates \blackref{re:twin_ceiling}, i.e., its potential is nonzero $\Psi(H^{\uparrow} \otimes \bB \otimes L) > 0$.
    
    \item The \blackref{jump_entry} $(\sigma^{*},\, j^{*}) \in (\{H\} \cup [n] \cup \{L\}) \times [m]$ exists and cannot be in column $0$.
    
    % \item The jump bidder is either the pseudo bidder $\sigma^{*} = L$ or one of the real bidders $\sigma^{*} \in \{H\} \cup [n]$.
    
    \item The jump bid as the index $j^{*} \in [m]$ partition bid $\lambda^{*} \equiv \lambda_{j^{*}} \in (0,\, \lambda)$ neither can be the nil bid $\lambda_{0} \equiv 0$ nor can be the supremum bid $\lambda_{m + 1} \equiv \lambda$.
\end{enumerate}
\end{restate}

\Cref{lem:AD} summarizes performance guarantees of the \blackref{alg:AD} reduction.
(The intuition of this reduction might be obscure. Roughly speaking, it adopts a {\em win-win} approach between two relatively natural modifications that both {\em decrease the potential}.)

% leverages and generalizes the ideas behind the \blackref{alg:translate} reduction and \blackref{alg:polarize} reduction from \Cref{sec:preprocessing}.)

\begin{lemma}[{\AD}]
\label{lem:AD}
Under reduction $\tilde{H}^{\uparrow} \otimes \tilde{\bB} \otimes \tilde{L} \gets \AD(H^{\uparrow} \otimes \bB \otimes L)$:
\begin{enumerate}[font = {\em\bfseries}]
    \item\label{lem:AD:property}
    The output is a ceiling pseudo instance $\tilde{H}^{\uparrow} \otimes \tilde{\bB} \otimes \tilde{L} \in \Bceiling$.
    
    \item\label{lem:AD:potential}
    The potential strictly decreases $\Psi(\tilde{H}^{\uparrow} \otimes \tilde{\bB} \otimes \tilde{L}) \leq \Psi(H^{\uparrow} \otimes \bB \otimes L) - 1$.
    
    \item\label{lem:AD:poa}
    A (weakly) worse bound is yielded $\PoA(\tilde{H}^{\uparrow} \otimes \tilde{\bB} \otimes \tilde{L}) \leq \PoA(H^{\uparrow} \otimes \bB \otimes L)$.
\end{enumerate}
\end{lemma}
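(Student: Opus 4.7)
For both the \emph{ascended} and \emph{descended} candidates I need to confirm \blackref{re:discretization}, \blackref{re:monotonicity}, \blackref{re:layeredness}, and \blackref{re:ceilingness}, as well as the well-definedness of the reconstructed bid distributions per \Cref{lem:pseudo_distribution}. Everything inherits from the input except where specific entries are modified. For the \blackref{AD_ascend} candidate, the modified row-$H$ entries $\bar{\phi}_{H,0} = \dots = \bar{\phi}_{H,j^{*}-1} = \phi^{*}$ must respect row-monotonicity ($\bar{\phi}_{H,j^{*}-1} \leq \bar{\phi}_{H,j^{*}}$) and column-layeredness ($\bar{\phi}_{H,j} \geq \bar{\phi}_{i,j}$ for every $i \neq H$ and $j < j^{*}$). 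Both hold because the \blackref{jump_entry} $(\sigma^{*}, j^{*})$ is the leftmost-then-lowest ultra-ceiling entry: all before-jump entries satisfy $\phi_{i,j} \leq \phi_{H,0} < \phi^{*}$, while $\phi_{H,j^{*}} \geq \phi_{\sigma^{*},j^{*}} = \phi^{*}$ follows from column-wise layeredness since $\sigma^{*} \in \{H\} \cup [n]$. For the \blackref{AD_descend} candidate, the single modified entry $\underline{\phi}_{\sigma^{*},j^{*}} = \phi_{H,0}$ must respect row-$\sigma^{*}$ monotonicity and column-$j^{*}$ layeredness: the left neighbour $\phi_{\sigma^{*}, j^{*}-1} \leq \phi_{H,0}$ by the leftmost property, the right neighbour $\phi_{\sigma^{*}, j^{*}+1} \geq \phi^{*} > \phi_{H,0}$ by row-monotonicity, and within column $j^{*}$ the ultra-ceiling entries form a prefix (since the column is decreasing and $\sigma^{*}$ is the last ultra-ceiling index), so descending the $\sigma^{*}$-entry to $\phi_{H,0}$ keeps the column weakly decreasing. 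The conditions of \Cref{lem:pseudo_mapping} then have to be checked piecewise on the new tables to ensure \Cref{lem:pseudo_distribution} applies.

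\textbf{Counting the potential drop (Property 2).} By \Cref{def:potential}, the potential equals the number of entries strictly above the ceiling. For the \blackref{AD_ascend} candidate the ceiling rises from $\phi_{H,0}$ to $\phi^{*}$, so the ultra-ceiling set becomes $\{(\sigma,j): \bar{\phi}_{\sigma,j} > \phi^{*}\}$; since $\phi^{*} > \phi_{H,0}$ and the only entries that were modified (the row-$H$ before-jump ones) now equal the new ceiling and hence are non-ultra-ceiling, the new set is a subset of the old one that strictly excludes the old jump entry $\phi_{\sigma^{*}, j^{*}} = \phi^{*}$. For the \blackref{AD_descend} candidate the ceiling is fixed but the jump entry drops to $\phi_{H,0}$, again losing the jump entry from the ultra-ceiling set. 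In both cases the potential drops by at least one.

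\textbf{The win-win {\PoA} bound (Property 3; the main obstacle).} This is the technically demanding part, and I will follow the contradiction paradigm used in \Cref{lem:polarize}. Write
\[
    \Delta_{A}^{\FPA} \eqdef \FPA(\blackref{AD_ascend}) - \FPA(\blackref{AD_input}),
    \qquad
    \Delta_{A}^{\OPT} \eqdef \OPT(\blackref{AD_ascend}) - \OPT(\blackref{AD_input}),
\]
and define $\Delta_{D}^{\FPA}$, $\Delta_{D}^{\OPT}$ analogously for the \blackref{AD_descend} candidate. If both candidates strictly exceeded $\PoA(\blackref{AD_input})$, then (after a case split on the signs of the $\Delta^{\OPT}$'s) we would obtain
\[
    \Delta_{A}^{\FPA} \big/ \Delta_{A}^{\OPT}
    \;>\; \PoA(\blackref{AD_input})
    \;>\; \Delta_{D}^{\FPA} \big/ \Delta_{D}^{\OPT}
\]
(or the reverse chain). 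The plan is to exhibit a \emph{mediant identity}: nonnegative constants $\alpha,\beta\geq 0$, not both zero, with
\[
    \alpha\,\Delta_{A}^{\FPA} + \beta\,\Delta_{D}^{\FPA}
    \;=\; \PoA(\blackref{AD_input}) \cdot \bigl(\alpha\,\Delta_{A}^{\OPT} + \beta\,\Delta_{D}^{\OPT}\bigr),
\]
which directly contradicts the strict chain above and forces at least one candidate to be (weakly) \PoA-worse.

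\textbf{Deriving the identity.} Using \Cref{lem:pseudo_welfare,lem:ceiling_welfare} I will compute $\FPA$ and $\OPT$ explicitly for each candidate. The two candidates change the input in very localized ways: \blackref{AD_ascend} raises $\phi_{H}$ on the before-jump support $[0,\lambda^{*})$ from $\phi_{H,0}$ to $\phi^{*}$, affecting the ceiling term $\phi_{H,0}\cdot \calB(0)$ and the bid-CDF $B_{H}$ (hence $\calB$ and, through \Cref{lem:pseudo_distribution}, $L$) on $[0,\lambda^{*})$; while \blackref{AD_descend} lowers only the jump entry $\phi_{\sigma^{*},j^{*}}$ from $\phi^{*}$ to $\phi_{H,0}$, affecting $B_{\sigma^{*}}$ (and $L$) only on the jump piece $[\lambda^{*},\lambda_{j^{*}+1})$. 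On the optimal side, raising the ceiling removes one integrand factor $\indicator(v \geq \phi_{H,0})$ and replaces it by $\indicator(v \geq \phi^{*})$, while lowering the jump entry replaces one factor of the form $(1 - \omega_{\sigma^{*},j^{*}}\cdot\indicator(v < \phi^{*}))$ by $(1 - \omega_{\sigma^{*},j^{*}}\cdot\indicator(v < \phi_{H,0}))$. I expect these two ``knobs'' to produce $\Delta_{A}^{\bullet}$ and $\Delta_{D}^{\bullet}$ with matching structural forms, so that the mediant identity emerges after algebraic rearrangement.

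\textbf{Anticipated difficulty.} The main obstacle is that, per \Cref{lem:pseudo_distribution}, the bid CDFs $\bB \otimes L$ depend \emph{nonlinearly} (through exponential integrals) on the piecewise constant mappings, so changing $\varphi_{H}$ or $\varphi_{\sigma^{*}}$ locally propagates into $\varphi_{L}$ and into \emph{every} $B_{\sigma}$. Consequently a naive one-parameter interpolation between input and a candidate does not keep other bidders' contributions linear in the interpolation parameter. I will most likely have to choose the two parameters of the family jointly and exploit the specific form $\FPA = \phi_{H,0}\calB(0) + \sum_{\sigma}\int \varphi_{\sigma}\cdot (B'_{\sigma}/B_{\sigma})\cdot \calB\, db$ to split the change into a clean ceiling-term contribution (isolating $\phi_{H,0}$ from $\phi^{*}$) and a bid-support contribution on $[\lambda^{*},\lambda_{j^{*}+1})$ that isolates the jump entry, and then match these against the corresponding decomposition of $\OPT$ via \Cref{lem:ceiling_welfare} to read off $\alpha,\beta$.
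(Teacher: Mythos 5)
Your treatment of \Cref{lem:AD:property} and \Cref{lem:AD:potential} matches the paper's proof in substance (the paper does carry out the deferred check of the \Cref{lem:pseudo_mapping} conditions, which is easy: ascending only adds a smaller positive reciprocal, descending only enlarges one reciprocal). The genuine gap is in \Cref{lem:AD:poa}. Your plan rests on a ``mediant identity'' $\alpha\,\Delta_{A}^{\FPA}+\beta\,\Delta_{D}^{\FPA}=\PoA(\text{\em input})\cdot(\alpha\,\Delta_{A}^{\OPT}+\beta\,\Delta_{D}^{\OPT})$, i.e.\ on the input being recoverable as an exact weighted combination of the two candidates, as in the \halve{} reduction. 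No such decomposition exists here: the two candidates perturb the input in opposite directions rather than partitioning it, and demanding the identity unconditionally is essentially equivalent to the statement you are trying to prove. What is actually available (and what the paper proves, mirroring \Cref{fact:polarize}) is the \emph{inequality} between the marginal trade-off ratios, $\bar{\Delta}_{\FPA}/\bar{\Delta}_{\OPT}\leq\underline{\Delta}_{\FPA}/\underline{\Delta}_{\OPT}$, which already contradicts the chain $\bar{\Delta}_{\FPA}/\bar{\Delta}_{\OPT}>\PoA>\underline{\Delta}_{\FPA}/\underline{\Delta}_{\OPT}$. Proving it requires concrete estimates you have not supplied: the exact computation $\bar{\Delta}_{\OPT}/\underline{\Delta}_{\OPT}=(1-\omega^{*})/\omega^{*}$ with $\omega^{*}=1-B_{\sigma^{*}}(\lambda_{j^{*}})/B_{\sigma^{*}}(\lambda_{j^{*}+1})$, together with $\bar{\Delta}_{\FPA}\leq(\phi^{*}-\phi_{H,\,0})\cdot\calB(\lambda_{j^{*}})$ and $\underline{\Delta}_{\FPA}\geq(\phi^{*}-\phi_{H,\,0})\cdot\frac{\omega^{*}}{1-\omega^{*}}\cdot\calB(\lambda_{j^{*}})$, the latter two obtained from the $\FPA$ formula of \Cref{lem:translate_welfare} using $\phi^{*}>\phi_{H,\,0}\geq\varphi_{L}(b)>b$ and monotonicity of $\calB/B_{\sigma^{*}}$.

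A second, related error blocks your route to those estimates: you assert that modifying $\varphi_{H}$ (or $\varphi_{\sigma^{*}}$) changes $\calB$ and propagates into every $B_{\sigma}$. In fact, since neither candidate touches the row $L$, the pseudo mapping $\varphi_{L}$ is invariant, and by \Cref{lem:pseudo_distribution} the first-order distribution $\calB(b)=\exp\big(-\int_{b}^{\lambda}(\varphi_{L}(t)-t)^{-1}\,\d t\big)$ is therefore \emph{unchanged}; only the modified real bidder's CDF and $L$ change, in exactly compensating ways. This invariance is what makes the welfare changes computable in closed form and is the backbone of the paper's win-win argument; your ``anticipated difficulty'' paragraph both misstates it and, by aiming at an identity rather than the ratio inequality, points the remaining work in a direction that cannot be completed.
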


\begin{proof}
See \Cref{fig:AD} for a visual aid. The \blackref{alg:AD} reduction chooses the {\PoA}-worse pseudo instance (Line~\ref{alg:AD:output}) between the \blackref{AD_ascend}/\blackref{AD_descend} candidates $\bar{H}^{\uparrow} \otimes \bar{\bB} \otimes \bar{L}$ and $\underline{H}^{\downarrow} \otimes \underline{\bB} \otimes \underline{L}$, breaking ties in favor of the \blackref{AD_ascend} candidate. For brevity, let $[N] \equiv \{H\} \cup [n] \cup \{L\}$.

\vspace{.1in}
\noindent
{\bf \Cref{lem:AD:property}.}
We shall prove that both candidates $\bar{H}^{\uparrow} \otimes \bar{\bB} \otimes \bar{L}$ and $\underline{H}^{\uparrow} \otimes \underline{\bB} \otimes \underline{L}$ are {\em ceiling} pseudo instances, which are reconstructed (Lines~\ref{alg:AD:ascend} and \ref{alg:AD:descend}) essentially from the \blackref{AD_ascend}/\blackref{AD_descend} tables $\bar{\bPhi} = \big[\bar{\phi}_{\sigma,\, j}\big]$ and $\underline{\bPhi} = \big[\underline{\phi}_{\sigma,\, j}\big]$ according to \Cref{lem:pseudo_distribution}.

\afterpage{
\begin{figure}[t]
    \centering
    \begin{mdframed}
    Reduction $\term[\AD]{alg:AD}(H^{\uparrow} \otimes \bB \otimes L)$
    
    \begin{flushleft}
    {\bf Input:} A {\em strong ceiling} but {\em non twin ceiling} pseudo instance $H^{\uparrow} \otimes \bB \otimes L \in (\Bstrong \setminus \Btwin)$ \\
    \white{\bf Input:} that has a \blackref{AD_jump} $(\sigma^{*},\, j^{*}) \in (\{H\} \cup [n]) \times [m]$.
    \white{\term[\text{\em input}]{AD_input}}
    % \hfill
    % \Cref{def:strong,def:twin,def:jump}
    
    \vspace{.05in}
    {\bf Output:} A {\em ceiling} pseudo instance $\tilde{H}^{\uparrow} \otimes \tilde{\bB} \otimes \tilde{L} \in \Bceiling$.
    \white{\term[\text{\em output}]{AD_output}}
    \begin{enumerate}
    \setcounter{enumi}{-1}
        \item Consider the underlying partition $\bLambda = [0 \equiv \lambda_{0},\, \lambda_{1}) \cup [\lambda_{1},\, \lambda_{2}) \cup \dots \cup [\lambda_{m},\, \lambda_{m + 1} \equiv \lambda]$ \\
        and the {\em input} bid-to-value table $\bPhi = \big[\phi_{\sigma,\, j}\big]$ for $(\sigma,\, j) \in (\{H\} \cup [n] \cup \{L\}) \times [0: m]$.
        
        \item\label{alg:AD:ascend}
        Define the {\em ascended} table $\bar{\bPhi} = \big[\bar{\phi}_{\sigma,\, j}\big]$ by ONLY {\em ascending} the row-$H$ before-jump entries, from the ceiling value $\phi_{H,\, 0} = \dots = \phi_{H,\, j^{*} - 1}$ to the jump value $\phi^{*} \equiv \phi_{\sigma^{*},\, j^{*}}$.
        
        Reconstruct the ({\em ceiling}) \term[\text{\em ascended}]{AD_ascend} candidate $\bar{H}^{\uparrow} \otimes \bar{\bB} \otimes \bar{L} \in \Bceiling$ from the piecewise constant mappings $\bar{\bvarphi} = \{\bar{\varphi}_{\sigma}\}_{\sigma \in \{H\} \cup [n] \cup \{L\}}$ given by $(\bLambda,\, \bar{\bPhi})$ according to \Cref{lem:pseudo_distribution}.
        
        % \item[]
        % \OliveGreen{$\triangleright$ Changed: The monopolist $H^{\uparrow}$: bid distribution $H(b)$ and bid-to-value mapping $\varphi_{H}(b)$ before the jump bid $b \in [0, \phi^{*} \equiv \phi_{\sigma^{*}, j^{*}})$. \\
        % $\triangleright$ Unchanged: The pseudo row $L \neq H$ of table $\bPhi$ keeps the same; so do the pseudo mapping $\varphi_{L}(b)$ and the first-order bid distribution $\calB(b) = \exp\big(-\int_{b}^{\lambda} (\varphi_{L}(b) - b)^{-1} \cdot \d b\big)$ (\Cref{lem:pseudo_distribution}).}
        
        \item\label{alg:AD:descend}
        Define the {\em descended} table $\underline{\bPhi} = \big[\underline{\phi}_{\sigma,\, j}\big]$ by ONLY {\em descending} the jump entry $(\sigma^{*},\, j^{*})$, from the jump value $\phi_{\sigma^{*},\, j^{*}} \equiv \phi^{*}$ to the ceiling value $\phi_{H,\, 0}$.
        
        Reconstruct the ({\em ceiling}) \term[\text{\em descended}]{AD_descend} candidate $\underline{H}^{\uparrow} \otimes \underline{\bB} \otimes \underline{L} \in \Bceiling$ from the piecewise constant mappings $\underline{\bvarphi} = \{\underline{\varphi}_{\sigma}\}_{\sigma \in \{H\} \cup [n] \cup \{L\}}$ given by $(\bLambda,\, \underline{\bPhi})$ according to \Cref{lem:pseudo_distribution}.
        
        % \item[]
        % \OliveGreen{$\triangleright$ The pseudo row $L \neq \sigma^{*}$ of table $\bPhi$ keeps the same; so do the pseudo mapping $\varphi_{L}(b)$ and the first-order bid distribution $\calB(b) = \exp\big(-\int_{b}^{\lambda} (\varphi_{L}(b) - b)^{-1} \cdot \d b\big)$ (\Cref{lem:pseudo_distribution}).}
        
        \item\label{alg:AD:output}
        {\bf Return} the {\PoA}-worse candidate $\argmin \big\{\, \PoA(\text{\em ascended}),~ \PoA(\text{\em descended}) \,\big\}$; \\
        \white{\bf Return} breaking ties in favor of the {\em ascended} candidate $\bar{H}^{\uparrow} \otimes \bar{\bB} \otimes \bar{L}$.
    \end{enumerate}
%     \begin{enumerate}
%         \item\label{alg:ascend:reduction}
%         \red{Let $\bar{\bPhi} = \big[\bar{\phi}_{\sigma,\, j}\big]$ be modified entry-wise from $\bPhi = \big[\phi_{\sigma,\, j}\big]$, such that \\
%         (i)~$\bar{\phi}_{1,\, j} = \phi^{*}$ for each $j \in [0:\, j^{*} - 1]$; and (ii)~all other entries keep the same.}
%         
%         \item \red{Let $\underline{\bPhi} = \big[\, \underline{\phi}_{\sigma,\, j} \,\big]$ be modified entry-wise from $\bPhi = \big[\phi_{\sigma,\, j}\big]$, such that \\
%         (i)~$P^{\uparrow}$, note that $P^{\uparrow} = \varphi_{1}(0) = \phi_{H,\, 0}$; and (ii)~all other entries keep the same.}
%     \end{enumerate}
    \end{flushleft}
    \end{mdframed}
    {\small\begin{tabular}{|c|c|>{\centering\arraybackslash}p{4.55cm}|>{\centering\arraybackslash}p{5.35cm}|}
        \hline
        \multicolumn{2}{|c|}{\rule{0pt}{11pt}} & bid-to-value mapping $\varphi_{\sigma}(b)$ & bid distribution $B_{\sigma}(b)$ \\ [1pt]
        \hline
        \hline
        \rule{0pt}{11pt} & monopolist $\sigma = H^{\uparrow}$ & ascended on $b \in [0,\, \lambda_{j^{*}})$ & modified on $b \in [0,\, \lambda_{j^{*}})$ \\ [1pt]
        \cline{2-4}
        \rule{0pt}{11pt}{\em ascended} & pseudo bidder $\sigma = L$ & -- & ``adapted'' to preserve $H^{\uparrow}(b) \cdot L(b)$ \\ [1pt]
        \cline{2-4}
        \rule{0pt}{11pt} & $\sigma \notin \{H^{\uparrow}, L\}$ & -- & -- \\ [1pt]
        \hline
        \hline
        \rule{0pt}{11pt} & jump bidder $\sigma = j^{*}$ & descended on $b \in [\lambda_{j^{*}},\, \lambda_{j^{*} + 1})$ & modified on $b \in [0,\, \lambda_{j^{*} + 1})$ \\ [1pt]
        \cline{2-4}
        \rule{0pt}{11pt}{\em descended} & pseudo bidder $\sigma = L$ & -- & ``adapted'' to preserve $B_{j^{*}}(b) \cdot L(b)$ \\ [1pt]
        \cline{2-4}
        \rule{0pt}{11pt} & $\sigma \notin \{j^{*}, L\}$ & -- & -- \\ [1pt]
        \hline
    \end{tabular}}
    \caption{\label{fig:alg:AD}
    The {\AD} reduction, where the table summarizes whether the bid-to-value mappings $\bvarphi$ and the bid distributions $\bB$ change or not (--). \\
    Both candidates {\em ascended} and {\em descended} preserve the pseudo row $L \notin \{H^{\uparrow},\, j^{*}\}$ of the table $\bPhi$. Likewise, the pseudo mapping $\varphi_{b}(b)$ and the first-order bid distribution $\calB(b) = \exp\big(-\int_{b}^{\lambda} (\varphi_{L}(b) - b)^{-1} \cdot \d b\big)$ (\Cref{lem:pseudo_distribution}) are invariant over $b \in [\gamma,\, \lambda]$.}
\end{figure}
\clearpage}

The {\em table-based} construction (Lines~\ref{alg:AD:ascend} and \ref{alg:AD:descend}) intrinsically ensures \blackref{re:discretization} and \blackref{re:ceilingness}; we just need to check
\blackref{re:monotonicity},
\blackref{re:layeredness}, and
{\bf the \Cref{lem:pseudo_mapping} conditions}.
The \blackref{AD_input} $H^{\uparrow} \otimes \bB \otimes L \in (\Bstrong \setminus \Btwin)$ as a {\em strong ceiling} but {\em non twin ceiling} not only satisfies these, but also  (\Cref{lem:jump,lem:strong}) has a \blackref{AD_jump} and satisfies \blackref{re:collapse}.
\begin{flushleft}
\begin{itemize}
    \item {\bf \blackref{re:monotonicity} and \blackref{re:layeredness}:}
    The \blackref{AD_input} table $\bPhi = [\phi_{\sigma,\, j}]$ for $(\sigma,\, j) \in [N] \times [0: m]$
    is increasing $\phi_{\sigma,\, 0} \leq \dots \leq \phi_{\sigma,\, j} \leq \dots \leq \phi_{\sigma,\, m}$ in each row $\sigma \in [N] = \{H\} \cup [n] \cup \{L\}$ and
    is decreasing $\phi_{H,\, j} \geq \phi_{1,\, j} \geq \dots \geq \phi_{n,\, j} \geq \phi_{L,\, j}$ in each column $j \in [0: m]$.
    
    \item {\bf the \Cref{lem:pseudo_mapping} conditions:}\footnote{Only {\bf \Cref{lem:pseudo_mapping:inequality}} is presented. {\bf \Cref{lem:pseudo_mapping:dominance}} that $\min(\bvarphi(b^{\otimes n + 1})) = \varphi_{L}(b) > b$ is ignored; the equality restates \blackref{re:layeredness} and the inequality is vacuous because the pseudo mapping $\varphi_{L}(b)$ is invariant (Lines~\ref{alg:AD:ascend} and Line~\ref{alg:AD:descend}; $L \notin \{H,\, \sigma^{*}\}$).}
    $\sum_{i \in \{H\} \cup [n]} (\phi_{i,\, j} - b)^{-1} \geq n \cdot (\phi_{L,\, j} - b)^{-1}$ over the index-$j$ piece
    $b \in [\lambda_{j},\, \lambda_{j + 1})$, for each column $j \in [0: m]$.
    
    \item The \blackref{AD_jump} entry $(\sigma^{*},\, j^{*}) \in (\{H\} \cup [n]) \times [m]$ as the {\em leftmost-then-lowest}
    ultra-ceiling entry $\phi_{\sigma^{*},\, j^{*}} > \phi_{H,\, 0}$, cannot be in the row $L$ and cannot be in the column $0$.
    In each before-jump column $j \in [0: j^{*} - 1]$, all the non-monopoly entries (\blackref{re:collapse}) collapse to the pseudo entry $\phi_{1,\, j} = \dots = \phi_{n,\, j} = \phi_{L,\, j}$.
\end{itemize}
\end{flushleft}

The \blackref{AD_ascend} table $\bar{\bPhi}$ (Line~\ref{alg:AD:ascend} and \Cref{fig:AD:ascend}) ascends the row-$H$ before-jump entries, from the ceiling value $\phi_{H,\, 0} = \dots = \phi_{H,\, j^{*} - 1}$ to the {\em higher} jump value $\phi_{\sigma^{*},\, j^{*}}$.
The only modified row $H$ keeps increasing
$[\bar{\phi}_{H,\, 0} = \dots = \bar{\phi}_{H,\, j^{*} - 1} = \phi_{\sigma^{*},\, j^{*}}]
\leq [\phi_{H,\, j^{*}}]
\leq \dots \leq
[\phi_{H,\, m}]$
(\blackref{re:monotonicity}),
where the only nontrivial inequality $\phi_{\sigma^{*},\, j^{*}} \leq \phi_{H,\, j^{*}}$ stems from \blackref{re:layeredness} of the \blackref{AD_input} table $\bPhi$.
Each modified/before-jump column $j \in [0: j^{*} - 1]$ keeps decreasing
$[\bar{\phi}_{H,\, j} = \phi_{\sigma^{*},\, j^{*}} > \phi_{H,\, j}]
\geq [\phi_{1,\, j}]
= \dots = [\phi_{n,\, j}]
= [\phi_{L,\, j}]$ (\blackref{re:layeredness})
and satisfies that \\
$(\bar{\phi}_{H,\, j} - b)^{-1} + \sum_{i \neq H} (\phi_{i,\, j} - b)^{-1}
~=~ (\phi_{\sigma^{*},\, j^{*}} - b)^{-1} + n \cdot (\phi_{L,\, j} - b)^{-1}
~\geq~ n \cdot (\phi_{L,\, j} - b)^{-1}$
({\bf \Cref{lem:pseudo_mapping}}).

\afterpage{
\begin{figure}[t]
    \centering
    \subfloat[{The {\em strong ceiling} but {\em non twin ceiling} input $H^{\uparrow} \otimes \bB \otimes L \in (\Bstrong \setminus \Btwin)$ with a \textbf{real jump} $\sigma^{*} \in \{H\} \cup [n]$}
    \label{fig:AD:input}]{
    \parbox[c]{.99\textwidth}{
    {\centering
    \includegraphics[width = .49\textwidth]{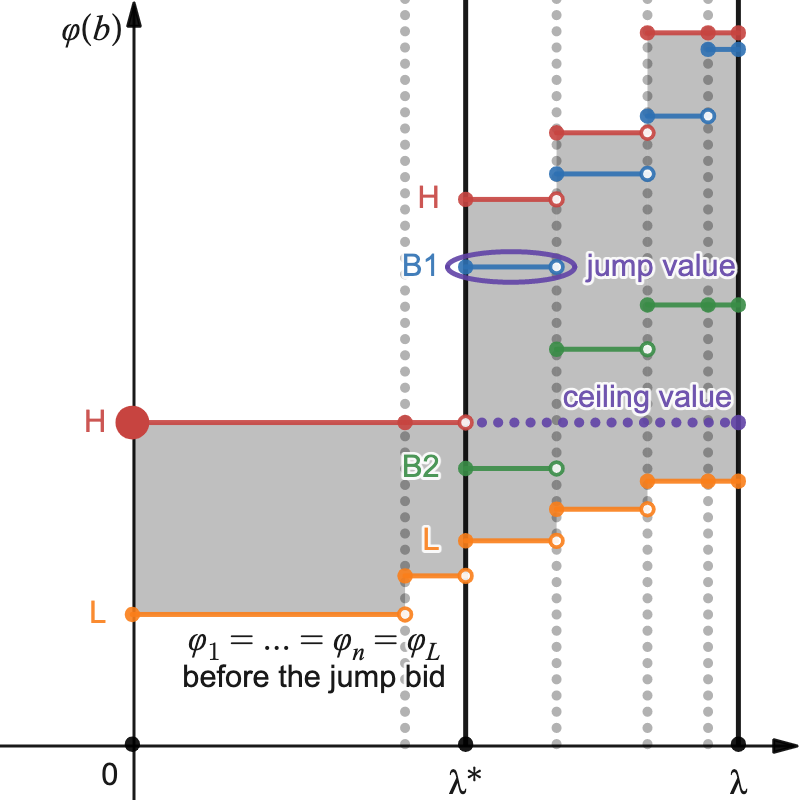}
    \par}}} \\
    \vspace{1cm}
    \subfloat[{The {\em ceiling} ascended candidate $\bar{H}^{\uparrow} \otimes \bar{\bB} \otimes \bar{L} \in \Bceiling$}
    \label{fig:AD:ascend}]{
    \includegraphics[width = .49\textwidth]{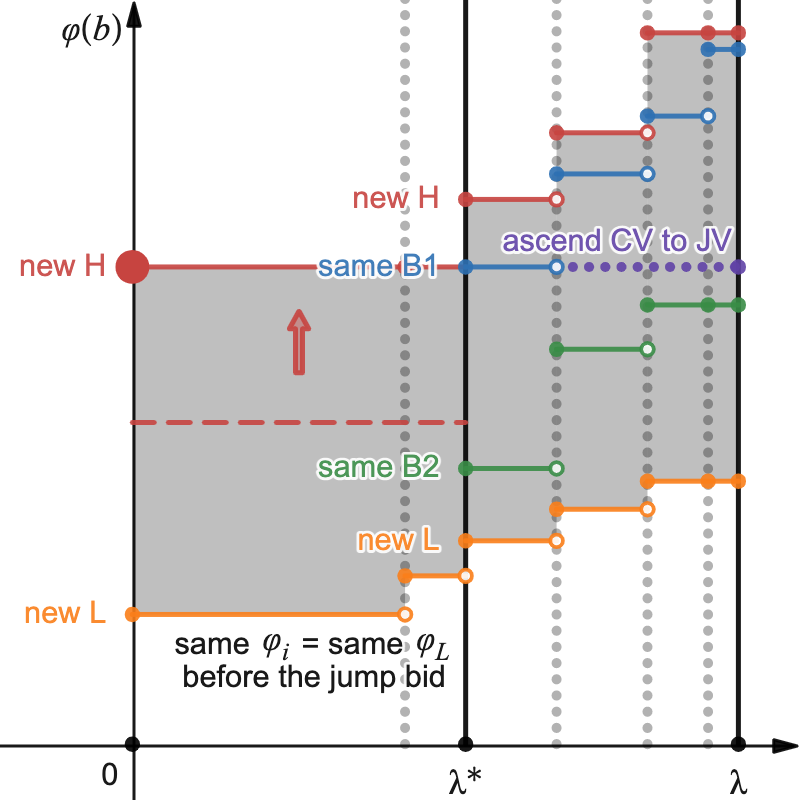}}
    \hfill
    \subfloat[{The {\em ceiling} descended candidate $\underline{H}^{\uparrow} \otimes \underline{\bB} \otimes \underline{L} \in \Bceiling$}
    \label{fig:AD:descend}]{
    \includegraphics[width = .49\textwidth]{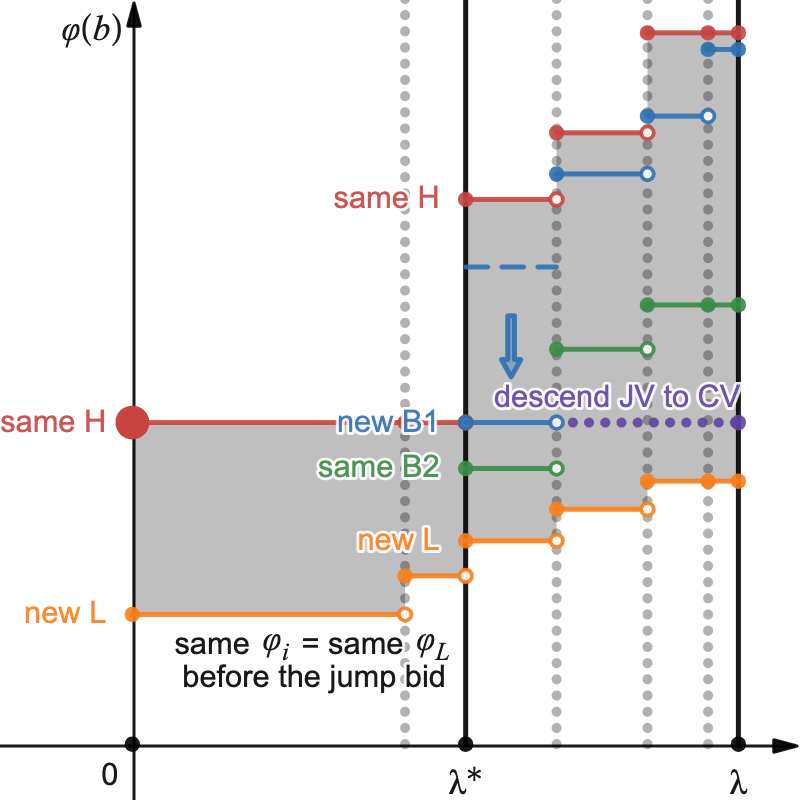}}
    \caption{Diagram of the {\AD} reduction (\Cref{fig:alg:AD}), which transforms (\Cref{def:twin,def:strong}) a {\em strong ceiling} but {\em non twin ceiling} pseudo instance $H^{\uparrow} \otimes \bB \otimes L \in (\Bstrong \setminus \Btwin)$ that has a \textbf{real jump} $\sigma \in \{H\} \cup [n]$ to EITHER the {\em ceiling} ascended candidate $\bar{H}^{\uparrow} \otimes \bar{\bB} \otimes \bar{L} \in \Bceiling$ OR the {\em ceiling} descended candidate $\underline{H}^{\uparrow} \otimes \underline{\bB} \otimes \underline{L} \in \Bceiling$. \\
    Shorthand: ceiling value (CV) and jump value (JV). \\
    \Cref{fig:AD:ascend}: invariant $\bar{H}(b) \cdot \bar{L}(b) \equiv H(b) \cdot L(b)$ and $\bar{\varphi}_{L}(b) \equiv \varphi_{L}(b)$, but modified $\bar{H}(b)$ and $\bar{L}(b)$. \\
    \Cref{fig:AD:descend}: invariant $\underline{H}(b) \cdot \underline{L}(b) \equiv H(b) \cdot L(b)$ and $\underline{\varphi}_{L}(b) \equiv \varphi_{L}(b)$, but modified $\underline{H}(b)$ and $\underline{L}(b)$.
    \label{fig:AD}}
\end{figure}
\clearpage}

\vspace{.1in}
The \blackref{AD_descend} table $\underline{\bPhi}$ (Line~\ref{alg:AD:descend} and \Cref{fig:AD:descend}) descends the \blackref{AD_jump} entry $(\sigma^{*},\, j^{*})$, from the jump value $\phi_{\sigma^{*},\, j^{*}}$ to the {\em lower} ceiling value $\phi_{H,\, 0}$.
This is the {\em leftmost-then-lowest} ultra-ceiling entry $(\sigma^{*},\, j^{*}) \in (\{H\} \cup [n]) \times [m]$ and cannot be in row $L$ and cannot be in column $0$; thus the {\em lefter}/{\em lower} entries
% $(\sigma^{*},\, j^{*} - 1) \in (\{H\} \cup [n]) \times [0: m - 1]$ and $(\sigma^{*} + 1,\, j^{*}) \in ([n] \cup \{L\}) \times [m]$
both are non-ultra-ceiling $[\phi_{\sigma^{*},\, j^{*} - 1}],\, [\phi_{\sigma^{*} + 1,\, j^{*}}] \leq \phi_{H,\, 0}$.
% , whereas (ii)~both of its {\em right-adjacent}/{\em up-adjacent} entries $(\sigma^{*},\, j^{*} + 1)$ and $(\sigma^{*} - 1,\, j^{*})$ (if existential) must be ultra-ceiling  $\phi_{\sigma^{*},\, j^{*} + 1},\, \phi_{\sigma^{*} - 1,\, j^{*}} > \phi_{H,\, 0}$.
So, the only modified/jump row $\sigma^{*}$ keeps increasing
$[\phi_{\sigma^{*},\, 0}]
\leq \dots
\leq [\phi_{\sigma^{*},\, j^{*} - 1}]
\leq [\underline{\phi}_{\sigma^{*},\, j^{*}} = \phi_{H,\, 0} < \phi_{\sigma^{*},\, j^{*}}]
\leq [\phi_{\sigma^{*},\, j^{*} + 1}]
\leq \dots \leq
[\phi_{\sigma^{*},\, m}]$
(\blackref{re:monotonicity}).
% $[\underline{\phi}_{\sigma^{*},\, 0} \equiv \phi_{\sigma^{*},\, 0}]
% \leq \dots
% \leq [\underline{\phi}_{\sigma^{*},\, j^{*} - 1} \equiv \phi_{\sigma^{*},\, j^{*} - 1}]
% \leq [\underline{\phi}_{\sigma^{*},\, j^{*}} = \phi_{H,\, 0} < \phi_{\sigma^{*},\, j^{*}}]
% \leq [\underline{\phi}_{\sigma^{*},\, j^{*} + 1} \equiv \phi_{\sigma^{*},\, j^{*} + 1}]
% \leq \dots \leq
% [\underline{\phi}_{\sigma^{*},\, m} \equiv \phi_{\sigma^{*},\, m}]$
% (\blackref{re:monotonicity}).
The only modified/jump column $j^{*}$ keeps decreasing
$[\phi_{H,\, j^{*}}]
\geq \dots
\geq [\phi_{\sigma^{*} - 1,\, j^{*}}] \geq [\phi_{\sigma^{*},\, j^{*}} > \underline{\phi}_{\sigma^{*},\, j^{*}} = \phi_{H,\, 0}]
\geq [\phi_{\sigma^{*} + 1,\, j^{*}}]
\geq \dots
\geq [\phi_{L,\, j^{*}}]$ (\blackref{re:layeredness})
% $[\phi_{H,\, j^{*}} \equiv \underline{\phi}_{H,\, j^{*}}]
% \geq \dots \geq [\phi_{\sigma^{*} - 1,\, j^{*}} \equiv \underline{\phi}_{\sigma^{*} - 1,\, j^{*}}] \geq [\phi_{\sigma^{*},\, j^{*}} > \underline{\phi}_{\sigma^{*},\, j^{*}} = \phi_{H,\, 0}] \geq [\phi_{\sigma^{*} + 1,\, j^{*}} \equiv \underline{\phi}_{\sigma^{*} + 1,\, j^{*}}] \geq \dots \geq [\phi_{L,\, j^{*}} \equiv \underline{\phi}_{L,\, j^{*}}]$.
and satisfies that \\
$(\underline{\phi}_{\sigma^{*},\, j^{*}} - b)^{-1} + \sum_{i \neq \sigma^{*}} (\phi_{i,\, j^{*}} - b)^{-1}
~\geq~ \sum_{i} (\phi_{i,\, j^{*}} - b)^{-1}
~\geq~ n \cdot (\phi_{L,\, j^{*}} - b)^{-1}$
({\bf \Cref{lem:pseudo_mapping}}).

\vspace{.1in}
To conclude, both candidates are {\em ceiling} pseudo instance. {\bf \Cref{lem:AD:property}} follows then.

\vspace{.1in}
\noindent
{\bf \Cref{lem:AD:potential}.}
The potential of a {\em ceiling} pseudo instance $\hat{H}^{\uparrow} \otimes \hat{\bB} \otimes \hat{L}$ counts all the ultra-ceiling entries $\Psi(\text{\em ceiling}) = \big|\big\{(\sigma,\, j) \in \hat{\bPhi}: \hat{\phi}_{\sigma,\, j} > \hat{\phi}_{H,\, 0}\big\}\big|$ in the bid-to-value table $\hat{\bPhi} = \big[\hat{\phi}_{\sigma,\, j}\big]$ (\Cref{def:potential}).
We do consider three {\em ceiling} pseudo instances ({\bf \Cref{lem:AD:property}}) and need to show that the \blackref{AD_ascend}/\blackref{AD_descend} tables $\bar{\bPhi}$ and $\underline{\bPhi}$ EACH have strictly fewer ultra-ceiling entries than the \blackref{AD_input} table $\bPhi$.

The \blackref{AD_ascend} table $\bar{\bPhi}$ (Line~\ref{alg:AD:ascend} and \Cref{fig:AD:ascend}) ascends the row-$H$ before-jump entries, from the ceiling value $\phi_{H,\, 0} = \dots = \phi_{H,\, j^{*} - 1}$ to the {\em higher} jump value $\phi_{\sigma^{*},\, j^{*}}$.
% ; these entries $\{H\} \times [0: j^{*} - 1]$ are non-ultra-ceiling in both of the \blackref{AD_input}/\blackref{AD_ascend} tables $\bPhi$ and $\bar{\bPhi}$.
Therefore, the \blackref{AD_ascend} table $\bar{\bPhi}$ shrinks the set of ultra-ceiling entries by removing all the unmodified entries $\phi_{\sigma,\, j} \in (\phi_{H,\, 0},\, \phi_{\sigma^{*},\, j^{*}}]$ between the \blackref{AD_input} ceiling value $\phi_{H,\, 0}$ (excluded) and the \blackref{AD_input} jump value $\phi_{\sigma^{*},\, j^{*}}$ (included); specifically, the \blackref{AD_jump} entry $(\sigma^{*},\, j^{*})$ itself gets removed.
This precisely means the \blackref{AD_ascend} table $\bar{\bPhi}$ has strictly fewer ultra-ceiling entries.

The \blackref{AD_descend} table $\underline{\bPhi}$ (Line~\ref{alg:AD:descend} and \Cref{fig:AD:descend}) descends the \blackref{AD_jump} entry $(\sigma^{*},\, j^{*})$, from the jump value $\phi_{\sigma^{*},\, j^{*}}$ to the {\em lower} ceiling value $\phi_{H,\, 0}$.
All the other entries are unmodified, including the ceiling value $\phi_{H,\, 0}$ (given that $j^{*} \neq 0$).
So, the \blackref{AD_descend} table $\underline{\bPhi}$ shrinks the set of ultra-ceiling entries by just removing the \blackref{AD_jump} entry $(\sigma^{*},\, j^{*})$ itself.
Namely, the \blackref{AD_descend} table $\underline{\bPhi}$ has strictly fewer ultra-ceiling entries.

To conclude, both candidates EACH strictly decrease the potential. {\bf \Cref{lem:AD:potential}} follows then.

\vspace{.1in}
\noindent
{\bf \Cref{lem:AD:poa}.}
We investigate the auction/optimal {\SocialWelfares} separately.

\vspace{.1in}
\noindent
{\bf Auction {\SocialWelfares}.}
Using \Cref{lem:translate_welfare} (with \blackref{re:discretization} and \blackref{re:ceilingness} $P^{\uparrow} \equiv \phi_{H,\, 0}$), the \blackref{AD_input} auction {\SocialWelfare} $\FPA(\blackref{AD_input}) \equiv \FPA(H^{\uparrow} \otimes \bB \otimes L)$ is given by
\begin{align*}
    \FPA(\blackref{AD_input})
    & ~=~ \phi_{H,\, 0} \cdot \calB(0) ~+~ \sum_{i \in \{H\} \cup [n]} \sum_{j \in [0: m]} \int_{\lambda_{j}}^{\lambda_{j + 1}} \Big(\frac{\phi_{i,\, j}}{\varphi_{L}(b) - b} - \frac{\phi_{i,\, j} - \varphi_{L}(b)}{\phi_{i,\, j} - b}\Big) \cdot \calB(b) \cdot \d b \\
    & \phantom{~=~ \phi_{H,\, 0} \cdot \calB(0)} ~-~ \int_{0}^{\lambda} (n - 1) \cdot \frac{\varphi_{L}(b)}{\varphi_{L}(b) - b} \cdot \calB(b) \cdot \d b.
\end{align*}

The \blackref{AD_ascend}/\blackref{AD_descend} candidates modify either the row $H$ or the row $\sigma^{*}$ of table $\bPhi$ (Lines~\ref{alg:AD:ascend} and \ref{alg:AD:descend}), while the pseudo row $L \notin \{H, \sigma^{*}\}$ and the pseudo mapping $\varphi_{L}(b)$ keep the same. Likewise, the first-order bid distribution $\calB(b) = \exp\big(-\int_{b}^{\lambda} (\varphi_{L}(b) - b)^{-1} \cdot \d b\big)$ (\Cref{lem:pseudo_distribution}), which is determined by the pseudo mapping, also keeps the same.

The \blackref{AD_ascend} table $\bar{\bPhi}$ (Line~\ref{alg:AD:ascend} and \Cref{fig:AD:ascend}) ascends the row-$H$ before-jump entries, from the ceiling value $\phi_{H,\, 0} = \dots = \phi_{H,\, j^{*} - 1}$ to the {\em higher} jump value $\phi_{\sigma^{*},\, j^{*}}$.
Considering the modified terms, we can formulate the \blackref{AD_ascend} counterpart as $\FPA(\blackref{AD_ascend}) = \FPA(\blackref{AD_input}) + \bar{\Delta}_{\FPA}$, using
\begin{align*}
    \bar{\Delta}_{\FPA}
    & ~=~ (\phi_{\sigma^{*},\, j^{*}} - \phi_{H,\, 0}) \cdot \calB(0)
    ~+~ \sum_{j = 0}^{j^{*} - 1} \int_{\lambda_{j}}^{\lambda_{j + 1}} \Big(\Big(\frac{\phi_{\sigma^{*},\, j^{*}}}{\varphi_{L}(b) - b} - \frac{\phi_{\sigma^{*},\, j^{*}} - \varphi_{L}(b)}{\phi_{\sigma^{*},\, j^{*}} - b}\Big) \\
    & \phantom{~=~ (\phi_{\sigma^{*},\, j^{*}} - \phi_{H,\, 0}) \cdot \calB(0) ~~ \sum_{j = 0}^{j^{*} - 1} \int_{\lambda_{j}}^{\lambda_{j + 1}}\Big(}
    - \Big(\frac{\phi_{H,\, 0}}{\varphi_{L}(b) - b} - \frac{\phi_{H,\, 0} - \varphi_{L}(b)}{\phi_{H,\, 0} - b}\Big)\Big) \cdot \calB(b) \cdot \d b.
    \hspace{1.1cm}
\end{align*}

The \blackref{AD_descend} table $\underline{\bPhi}$ (Line~\ref{alg:AD:descend} and \Cref{fig:AD:descend}) descends the \blackref{AD_jump} entry $(\sigma^{*},\, j^{*})$, from the jump value $\phi_{\sigma^{*},\, j^{*}}$ to the {\em lower} ceiling value $\phi_{H,\, 0}$.
Considering the only modified term, we can formulate the \blackref{AD_descend} counterpart as $\FPA(\blackref{AD_descend}) = \FPA(\blackref{AD_input}) - \underline{\Delta}_{\FPA}$, using
\begin{align*}
    \underline{\Delta}_{\FPA}
    & = \int_{\lambda_{j^{*}}}^{\lambda_{j^{*} + 1}} \Big(\Big(\frac{\phi_{\sigma^{*},\, j^{*}}}{\varphi_{L}(b) - b} - \frac{\phi_{\sigma^{*},\, j^{*}} - \varphi_{L}(b)}{\phi_{\sigma^{*},\, j^{*}} - b}\Big) - \Big(\frac{\phi_{H,\, 0}}{\varphi_{L}(b) - b} - \frac{\phi_{H,\, 0} - \varphi_{L}(b)}{\phi_{H,\, 0} - b}\Big)\Big) \cdot \calB(b) \cdot \d b.
    \hspace{.75cm}
\end{align*}

\noindent
{\bf Optimal {\SocialWelfares}.}
Consider the \blackref{AD_input} subtable $\bPhi^{*} \eqdef \bPhi \setminus (\{H\} \times [0: j^{*} - 1] \cup \{(\sigma^{*},\, j^{*})\})$ that is invariant in both $\bar{\bPhi}$ and $\underline{\bPhi}$. Following {\bf \Cref{lem:AD:potential}} and \Cref{fig:AD}, (i)~the \blackref{AD_input} table $\bPhi$ has the ceiling value $\phi_{H,\, 0}$ and its ultra-ceiling entries are all included in $\bPhi^{*} \cup \{(\sigma^{*},\, j^{*})\}$; (ii)~the \blackref{AD_ascend} table $\bar{\bPhi}$ has the higher ceiling value $= \phi_{\sigma^{*},\, j^{*}}$ and its ultra-ceiling entries are all included in $\bPhi^{*}$; and (iii)~the \blackref{AD_descend} table $\underline{\bPhi}$ has the {\em same} ceiling value $= \phi_{H,\, 0}$ and its ultra-ceiling entries are all included in $\bPhi^{*}$.
Therefore, using the invariant function $\calI(v) \eqdef \prod_{(\sigma,\, j) \in \bPhi^{*}} \big(1 - \omega_{\sigma,\, j} \cdot \indicator(v < \phi_{\sigma,\, j})\big)$,\footnote{Precisely, each probability $\omega_{\sigma,\, j} = 1 - \frac{B_{\sigma}(\lambda_{j})}{B_{\sigma}(\lambda_{j + 1})}$ for $(\sigma,\, j) \in \bPhi^{*}$ is also invariant $\omega_{\sigma,\, j} = \bar{\omega}_{\sigma,\, j} = \underline{\omega}_{\sigma,\, j}$ (cf.\ \Cref{lem:pseudo_distribution} and Lines~\ref{alg:AD:ascend} and \ref{alg:AD:descend}); we omit the detailed verification for brevity.}
we deduce from \Cref{lem:ceiling_welfare} that
\begin{align*}
    \OPT(\blackref{AD_input})~~~~~~
    & ~=~ \int_{0}^{+\infty} \Big(1 - \indicator(v \geq \phi_{H,\, 0}) \cdot \big(1 - \omega_{\sigma^{*},\, j^{*}} \cdot \indicator(v < \phi_{\sigma^{*},\, j^{*}})\big) \cdot \calI(v)\Big) \cdot \d v,
    \hspace{1.75cm} \\
    \OPT(\blackref{AD_ascend})\;\,
    & ~=~ \int_{0}^{+\infty} \Big(1 - \indicator(v \geq \phi_{\sigma^{*},\, j^{*}}) \cdot \calI(v)\Big) \cdot \d v, \\
    \OPT(\blackref{AD_descend})
    & ~=~ \int_{0}^{+\infty} \Big(1 - \indicator(v \geq \phi_{H,\, 0}) \cdot \calI(v)\Big) \cdot \d v.
\end{align*}
We have $\OPT(\blackref{AD_ascend}) \geq \OPT(\blackref{AD_input}) \geq \OPT(\blackref{AD_descend})$ given that $\phi_{H,\, 0} < \phi_{\sigma^{*},\, j^{*}}$.
As before, let $\bar{\Delta}_{\OPT} \equiv \OPT(\blackref{AD_ascend}) - \OPT(\blackref{AD_input})$ and $\underline{\Delta}_{\OPT} \equiv \OPT(\blackref{AD_input}) - \OPT(\blackref{AD_descend})$ be the absolute changes.
The remaining proof relies on {\bf \Cref{fact:AD:poa}}.
(For brevity, we ignore the ``$0 / 0$'' issue.)

\setcounter{fact}{0}

\begin{fact}
\label{fact:AD:poa}
$\bar{\Delta}_{\FPA} \big/ \bar{\Delta}_{\OPT} \,\leq\, \underline{\Delta}_{\FPA} \big/ \underline{\Delta}_{\OPT}$.
% and the denominators are positive $\bar{\Delta}_{\OPT},\, \underline{\Delta}_{\OPT} > 0$.
\end{fact}

\begin{proof}
The \blackref{AD_ascend} auction {\SocialWelfare} change $\bar{\Delta}_{\FPA} \equiv \FPA(\blackref{AD_ascend}) - \FPA(\blackref{AD_input})$ is at most
\begin{align}
    \bar{\Delta}_{\FPA}
    & ~=~ (\phi^{*} - \phi_{H,\, 0}) \cdot \calB(0)
    + \sum_{j = 0}^{j^{*} - 1} \int_{\lambda_{j}}^{\lambda_{j + 1}} \Big(\big(\tfrac{\phi^{*}}{\varphi_{L}(b) - b} - \tfrac{\phi^{*} - \varphi_{L}(b)}{\phi^{*} - b}\big)
    \nonumber \\
    & \phantom{~=~ (\phi^{*} - \phi_{H,\, 0}) \cdot \calB(0) \sum_{j = 0}^{j^{*} - 1} \int_{\lambda_{j}}^{\lambda_{j + 1}} \Big(}
    - \big(\tfrac{\phi_{H,\, 0}}{\varphi_{L}(b) - b} - \tfrac{\phi_{H,\, 0} - \varphi_{L}(b)}{\phi_{H,\, 0} - b}\big)\Big) \cdot \calB(b) \cdot \d b
    \tag{restate} \\
    & ~=~ (\phi^{*} - \phi_{H,\, 0}) \cdot \calB(0)
    + \int_{0}^{\lambda_{j^{*}}} \Big(\tfrac{\phi^{*} - \phi_{H,\, 0}}{\varphi_{L}(b) - b} + \big(\tfrac{\phi_{H,\, 0} - \varphi_{L}(b)}{\phi_{H,\, 0} - b} - \tfrac{\phi^{*} - \varphi_{L}(b)}{\phi^{*} - b}\big)\Big) \cdot \calB(b) \cdot \d b
    \hspace{1.95cm}~~\;\; \nonumber \\
    % \label{eq:win_win:FPA:A2}\tag{A1} \\
    & ~\leq~ (\phi^{*} - \phi_{H,\, 0}) \cdot \calB(0)
    + \int_{0}^{\lambda_{j^{*}}} \phantom{\Big(}\tfrac{\phi^{*} - \phi_{H,\, 0}}{\varphi_{L}(b) - b} \cdot \calB(b) \cdot \d b
    \label{eq:win_win:FPA:A3}\tag{A1} \\
    & ~=~ (\phi^{*} - \phi_{H,\, 0}) \cdot \calB(0)
    + \int_{0}^{\lambda_{j^{*}}} (\phi^{*} - \phi_{H,\, 0}) \cdot \calB'(b) \cdot \d b \phantom{\bigg.}
    \label{eq:win_win:FPA:A4}\tag{A2} \\
    & ~=~ (\phi^{*} - \phi_{H,\, 0}) \cdot \calB(\lambda_{j^{*}}). \phantom{\bigg.}
    \nonumber
    % \label{eq:win_win:FPA:A5}\tag{A3}
\end{align}
% \eqref{eq:win_win:FPA:A2}: Additivity of integrals, given $\lambda_{0} \equiv 0$. Rearrange the integrand. \\
\eqref{eq:win_win:FPA:A3}: The dropped term $\leq 0$ (\Cref{def:jump,lem:pseudo_mapping}; $\phi^{*} \equiv \phi_{\sigma^{*},\, j^{*}} > \phi_{H,\, 0} \geq \varphi_{L}(b) > b$). \\
\eqref{eq:win_win:FPA:A4}: The pseudo mapping $\varphi_{L} = b + \calB \big/ \calB'$ (\Cref{def:pseudo}).

\vspace{.1in}
\noindent
The \blackref{AD_descend} auction {\SocialWelfare} change $\underline{\Delta}_{\FPA} \equiv \FPA(\blackref{AD_input}) - \FPA(\blackref{AD_descend})$ is at least
\begin{align}
    \underline{\Delta}_{\FPA}
    & ~=~ \int_{\lambda_{j^{*}}}^{\lambda_{j^{*} + 1}} \Big(\big(\tfrac{\phi^{*}}{\varphi_{L}(b) - b} - \tfrac{\phi^{*} - \varphi_{L}(b)}{\phi^{*} - b}\big) - \big(\tfrac{\phi_{H,\, 0}}{\varphi_{L}(b) - b} - \tfrac{\phi_{H,\, 0} - \varphi_{L}(b)}{\phi_{H,\, 0} - b}\big)\Big) \cdot \calB(b) \cdot \d b
    \tag{restate} \\
    & ~=~ \int_{\lambda_{j^{*}}}^{\lambda_{j^{*} + 1}} \Big(\big(\tfrac{\phi^{*} - \phi_{H,\, 0}}{\varphi_{L}(b) - b} - \tfrac{\phi^{*} - \phi_{H,\, 0}}{\phi^{*} - b}\big) + \big(\tfrac{\phi_{H,\, 0} - \varphi_{L}(b)}{\phi_{H,\, 0} - b} - \tfrac{\phi_{H,\, 0} - \varphi_{L}(b)}{\phi^{*} - b}\big)\Big) \cdot \calB(b) \cdot \d b
    \hspace{2.76cm}~~ \nonumber \\
    % \label{eq:win_win:FPA:D2}\tag{D1} \\
    & ~\geq~ \int_{\lambda_{j^{*}}}^{\lambda_{j^{*} + 1}} \phantom{\Big(} \big(\tfrac{\phi^{*} - \phi_{H,\, 0}}{\varphi_{L}(b) - b} - \tfrac{\phi^{*} - \phi_{H,\, 0}}{\phi^{*} - b}\big) \cdot \calB(b) \cdot \d b
    \label{eq:win_win:FPA:D3}\tag{D1} \\
    & ~=~ \int_{\lambda_{j^{*}}}^{\lambda_{j^{*} + 1}}~~\; (\phi^{*} - \phi_{H,\, 0}) \cdot B'_{\sigma^{*}}(b) \cdot \tfrac{\calB(b)}{B_{\sigma^{*}}(b)} \cdot \d b
    \label{eq:win_win:FPA:D4}\tag{D2} \\
    & ~\geq~ \int_{\lambda_{j^{*}}}^{\lambda_{j^{*} + 1}}~~\; (\phi^{*} - \phi_{H,\, 0}) \cdot B'_{\sigma^{*}}(b) \cdot \tfrac{\calB(\lambda_{j^{*}})}{B_{\sigma^{*}}(\lambda_{j^{*}})} \cdot \d b
    \label{eq:win_win:FPA:D5}\tag{D3} \\
    & ~=~ (\phi^{*} - \phi_{H,\, 0}) \cdot \big(\tfrac{B_{\sigma^{*}}(\lambda_{j^{*} + 1})}{B_{\sigma^{*}}(\lambda_{j^{*}})} - 1\big) \cdot \calB(\lambda_{j^{*}})
    ~\geq~ 0.
    \nonumber
    % \label{eq:win_win:FPA:D6}\tag{D4}
\end{align}
% \eqref{eq:win_win:FPA:D2}: Merge the first/third terms into $\frac{\phi^{*} - \phi_{H,\, 0}}{\varphi_{L}(b) - b}$. Split the second term into $-\frac{\phi^{*} - \phi_{H,\, 0}}{\phi^{*} - b} - \frac{\phi_{H,\, 0} - \varphi_{L}(b)}{\phi^{*} - b}$. \\
\eqref{eq:win_win:FPA:D3}: The dropped term $\geq 0$ (\Cref{def:jump,lem:pseudo_mapping}; $\phi^{*} \equiv \phi_{\sigma^{*},\, j^{*}} > \phi_{H,\, 0} \geq \varphi_{L}(b) > b$). \\
\eqref{eq:win_win:FPA:D4}: $\phi^{*} \equiv \phi_{\sigma^{*},\, j^{*}} = b + \big(\calB' \big/ \calB - B'_{\sigma^{*}} \big/ B_{\sigma^{*}}\big)^{-1}$ (given a \blackref{AD_jump} $\sigma^{*} \neq L$) and $\varphi_{L} = b + \calB \big/ \calB'$. \\
\eqref{eq:win_win:FPA:D5}: $\calB \big/ B_{\sigma^{*}} = \prod_{\sigma \in [N] \setminus \{\sigma^{*}\}} B_{\sigma}$ is an increasing CDF.
% \eqref{eq:win_win:FPA:D6}: Resolve the integral.

Combining the above two equations and applying the substitution $\omega^{*} \equiv \omega_{\sigma^{*},\, j^{*}} = 1 - \frac{B_{\sigma^{*}}(\lambda_{j^{*}})}{B_{\sigma^{*}}(\lambda_{j^{*} + 1})}$ (\Cref{lem:translate_welfare}), we deduce that $\bar{\Delta}_{\FPA} \big/ \underline{\Delta}_{\FPA} \leq (1 - \omega^{*}) / \omega^{*}$.

% the only difference is a scaling factor $\big(\frac{B_{\sigma^{*}}(\lambda_{j^{*} + 1})}{B_{\sigma^{*}}(\lambda^{*})} - 1\big) > 0$.

\vspace{.1in}
\noindent
The \blackref{AD_ascend} optimum change $\bar{\Delta}_{\OPT} \equiv \OPT(\blackref{AD_ascend}) - \OPT(\blackref{AD_input})$ is equal to
\begin{align*}
    \bar{\Delta}_{\OPT}
    & ~=~ \int_{0}^{+\infty} \Big(\indicator(v \geq \phi_{H,\, 0}) \cdot
    \big(1 - \omega^{*} \cdot \indicator(v < \phi^{*})\big) - \indicator(v \geq \phi^{*})\big) \cdot \calI(v) \cdot \d v
    \tag{restate} \\
    & ~=~ \int_{0}^{+\infty} \Big(\big(1 - \omega^{*} \cdot \indicator(v < \phi^{*})\big) - \indicator(v \geq \phi^{*})\Big) \cdot \indicator(v \geq \phi_{H,\, 0}) \cdot \calI(v) \cdot \d v
    \hspace{3.35cm}~~ \\
    & ~=~ \big(1 - \omega^{*}\big) \cdot \int_{0}^{+\infty} \indicator\big(v \geq \phi_{H,\, 0}\big) \cdot
    \indicator\big(v < \phi^{*}\big) \cdot \calI(v) \cdot \d v.
\end{align*}
The second step: $\indicator(v \geq \phi^{*}) \equiv \indicator(v \geq \phi_{H,\, 0}) \cdot \indicator(v \geq \phi^{*})$ because the jump value $\phi^{*} \equiv \phi_{\sigma^{*},\, j^{*}} > \phi_{H,\, 0}$. \\
The last step: $\indicator(v \geq \phi^{*}) \equiv 1 - \indicator(v < \phi^{*})$.

\vspace{.1in}
\noindent
The \blackref{AD_descend} optimum change $\underline{\Delta}_{\OPT} \equiv \OPT(\blackref{AD_input}) - \OPT(\blackref{AD_descend})$ is given by
\begin{align*}
    \underline{\Delta}_{\OPT}
    & ~=~ \int_{0}^{+\infty} \Big(\Big(1 - \indicator(v \geq \phi_{H,\, 0}) \cdot \big(1 - \omega^{*} \cdot \indicator(v < \phi^{*})\big) \cdot \calI(v)\Big) - \Big(1 - \indicator(v \geq \phi_{H,\, 0}) \cdot \calI(v)\Big)\Big) \cdot \d v \\
    & ~=~ \omega^{*} \cdot \int_{0}^{+\infty} \indicator(v \geq \phi_{H,\, 0}) \cdot \indicator(v < \phi^{*}) \cdot \calI(v) \cdot \d v.
\end{align*}
We thus conclude that $\bar{\Delta}_{\OPT} \big/ \underline{\Delta}_{\OPT} = (1 - \omega^{*}) / \omega^{*} \geq \bar{\Delta}_{\FPA} \big/ \underline{\Delta}_{\FPA}$, which implies {\bf \Cref{fact:AD:poa}}.
\end{proof}

Assume the opposite to {\bf \Cref{lem:AD:poa}}: Both of the \blackref{AD_ascend}/\blackref{AD_descend} candidates $\bar{H}^{\uparrow} \otimes \bar{\bB} \otimes \bar{L}$ and $\underline{H}^{\uparrow} \otimes \underline{\bB} \otimes \underline{L}$ yield strictly larger bounds than the \blackref{AD_input} $H^{\uparrow} \otimes \bB \otimes L$. That is,
\begin{align*}
    & \PoA(\blackref{AD_ascend}) \;\;\;>~ \PoA(\blackref{AD_input})
    && \iff &&
    \frac{\FPA(\blackref{AD_input}) + \bar{\Delta}_{\FPA}}{\OPT(\blackref{AD_input}) + \bar{\Delta}_{\OPT}}
    ~>~ \frac{\FPA(\blackref{AD_input})}{\OPT(\blackref{AD_input})}, \\
    & \PoA(\blackref{AD_descend}) ~>~ \PoA(\blackref{AD_input})
    && \iff &&
    \frac{\FPA(\blackref{AD_input}) - \underline{\Delta}_{\FPA}}{\OPT(\blackref{AD_input}) - \underline{\Delta}_{\OPT}}
    ~>~ \frac{\FPA(\blackref{AD_input})}{\OPT(\blackref{AD_input})}.
\end{align*}
Rearranging both equations gives $\bar{\Delta}_{\FPA} \big/ \bar{\Delta}_{\OPT} \,>\, \PoA(\blackref{AD_input}) \,>\, \underline{\Delta}_{\FPA} \big/ \underline{\Delta}_{\OPT}$ (notice that the denominators $\bar{\Delta}_{\OPT},\, \underline{\Delta}_{\OPT} > 0$), which contradicts {\bf \Cref{fact:AD:poa}}.
Refute our assumption: At least one candidate between $\bar{H}^{\uparrow} \otimes \bar{\bB} \otimes \bar{L}$ and $\underline{H}^{\uparrow} \otimes \underline{\bB} \otimes \underline{L}$ yields a (weakly) worse bound. {\bf \Cref{lem:AD:poa}} follows.

This finishes the proof of \Cref{lem:AD}.
\end{proof}

\subsection{The main algorithm: From floor/ceiling to twin ceiling}
\label{subsec:main}

This subsection presents the \blackref{alg:main} procedure (see \Cref{fig:alg:main} for its description), which transforms a {\em floor}/{\em ceiling} \blackref{main_input} pseudo instance $H \otimes \bB \otimes L \in (\Bfloor \cup \Bceiling)$ into a {\em twin ceiling} output pseudo instance $\tilde{H} \otimes \tilde{L} \in \Btwin$ (\Cref{def:twin}).
This \blackref{alg:main} procedure runs iteratively and is built from all the given reductions:
\blackref{alg:slice}, \blackref{alg:collapse}, \blackref{alg:halve}, and \blackref{alg:AD} (\Cref{subsec:slice,subsec:collapse,subsec:halve,subsec:AD}).
Overall, through the potential method (\Cref{def:potential}), we will see that this procedure terminates in {\em finite} iterations, returning the {\em twin ceiling} output $\tilde{H} \otimes \tilde{L} \in \Btwin$ afterward.

For ease of reference, we rephrase \Cref{lem:slice,lem:collapse,lem:halve,lem:AD}.

\begin{restate}[{\Cref{lem:slice}}]
\begin{flushleft}
\blackref{alg:slice} transforms a Floor $H^{\downarrow} \otimes \bB \otimes L \in \Bfloor$ to
a \textsf{PoA}-worse Ceiling $\tilde{H}^{\uparrow} \otimes \tilde{\bB} \otimes \tilde{L} \in \Bceiling$ such that the potential decreases $\tilde{\Psi} \leq \Psi - 1$.
\end{flushleft}
\end{restate}

\begin{restate}[{\Cref{lem:collapse}}]
\begin{flushleft}
\blackref{alg:collapse} transforms a Ceiling $H^{\uparrow} \otimes \bB \otimes L \in \Bceiling$ to a \textsf{PoA}-worse Strong Ceiling $\tilde{H}^{\uparrow} \otimes \tilde{\bB} \otimes \tilde{L} \in \Bstrong \subsetneq \Bceiling$
such that the potential keeps the same $\tilde{\Psi} = \Psi$.
\end{flushleft}
\end{restate}

\begin{restate}[{\Cref{lem:halve}}]
\begin{flushleft}
\blackref{alg:halve} transforms a Strong Ceiling $H^{\uparrow} \otimes \bB \otimes L \in \Bstrong \subsetneq \Bceiling$ to
a \textsf{PoA}-worse Floor/Ceiling $\tilde{H} \otimes \tilde{\bB} \otimes \tilde{L} \in (\Bfloor \cup \Bceiling)$
such that the potential decreases $\tilde{\Psi} \leq \Psi - 1$.
\end{flushleft}
\end{restate}

\begin{restate}[{\Cref{lem:AD}}]
% \begin{flushleft}
\blackref{alg:AD} transforms a Strong Ceiling $H^{\uparrow} \otimes \bB \otimes L \in \Bstrong$ to a \textsf{PoA}-worse Ceiling $\tilde{H}^{\uparrow} \otimes \tilde{\bB} \otimes \tilde{L} \in \Bceiling$ such that the potential decreases $\tilde{\Psi} \leq \Psi - 1$.
% \end{flushleft}
\end{restate}

\Cref{lem:main} summarizes performance guarantees of the \blackref{alg:main} procedure.

\begin{lemma}[{\main}; \Cref{fig:alg:main}]
\label{lem:main}
Under procedure $\tilde{H}^{\uparrow} \otimes \tilde{L} \gets \main(H \otimes \bB \otimes L)$:
\begin{enumerate}[font = {\em\bfseries}]
    \item\label{lem:main:time}
    It terminates in at most $(1 + 2\Psi^{*})$ iterations, where $\Psi^{*} = \Psi(H \otimes \bB \otimes L) < +\infty$ (\Cref{lem:potential:bound} of \Cref{lem:potential}) is the finite potential of the input floor/ceiling pseudo instance.
    
    \item\label{lem:main:property}
    The output is a twin ceiling pseudo instance $\tilde{H}^{\uparrow} \otimes \tilde{L} \in \Btwin$.
    
    \item\label{lem:main:poa}
    A (weakly) worse bound is yielded $\PoA(\tilde{H}^{\uparrow} \otimes \tilde{L}) \leq \PoA(H \otimes \bB \otimes L)$.
\end{enumerate}
\end{lemma}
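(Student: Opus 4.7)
The plan is to view \blackref{alg:main} as an iterative driver that at each step inspects the current pseudo instance and dispatches to whichever of the four reductions from \Cref{subsec:slice,subsec:collapse,subsec:halve,subsec:AD} is applicable, halting once the instance lies in $\Btwin$. Concretely: if the current instance is floor, call \blackref{alg:slice}; if it is ceiling but not strong ceiling, call \blackref{alg:collapse}; if it is strong ceiling but not twin ceiling, call \blackref{alg:halve} when the jump is pseudo and \blackref{alg:AD} when the jump is real; and if it is already twin ceiling, return it. By \Cref{lem:slice:property}, \Cref{lem:collapse:property}, \Cref{lem:halve:property}, and \Cref{lem:AD:property}, each individual reduction's output type fits the input type required by some other branch, so this dispatch is well-typed at every step and defines an unambiguous trajectory through $(\Bfloor \cup \Bceiling)$.

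Granting termination, parts~\ref{lem:main:property} and \ref{lem:main:poa} become essentially automatic. The only halting condition is membership in $\Btwin$, so the returned object $\tilde{H}^{\uparrow} \otimes \tilde{L}$ is by definition a twin ceiling pseudo instance. The \textsf{PoA} inequality then follows by composing the per-reduction \textsf{PoA} guarantees (\Cref{lem:slice:poa}, \Cref{lem:collapse:poa}, \Cref{lem:halve:poa}, \Cref{lem:AD:poa}) along the finite chain of states visited.

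For part~\ref{lem:main:time}, the plan is a potential argument. By \Cref{lem:slice:potential}, \Cref{lem:halve:potential}, and \Cref{lem:AD:potential}, each of \blackref{alg:slice}, \blackref{alg:halve}, and \blackref{alg:AD} strictly decreases $\Psi$ by at least one, while by \Cref{lem:collapse:potential} the reduction \blackref{alg:collapse} preserves $\Psi$. Since $\Psi$ is a nonnegative integer bounded above by the finite initial value $\Psi^{*}$ (\Cref{lem:potential:bound} of \Cref{lem:potential}), there are at most $\Psi^{*}$ strict-decrease steps over the entire run. Moreover, \blackref{alg:collapse} is only invoked on a ceiling but non strong ceiling input and always produces a strong ceiling output, which is then immediately routed to \blackref{alg:halve} or \blackref{alg:AD} (or is already twin ceiling and triggers halting); hence two consecutive iterations cannot both be \blackref{alg:collapse}, and the number of \blackref{alg:collapse} calls is at most one more than the number of strict-decrease calls. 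Summing gives at most $1 + 2\Psi^{*}$ iterations total, at which point $\Psi = 0$ on a strong ceiling instance, which by \Cref{lem:strong} is exactly a twin ceiling pseudo instance, so the loop halts.

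The main obstacle is handling the state transitions carefully to make sure the counting is not undermined. In particular, \blackref{alg:halve} may output either a floor or a ceiling instance (\Cref{lem:halve:property}), so the outer loop must re-route back to \blackref{alg:slice} in the floor subcase; one has to verify that this re-routing does not inflate the number of \blackref{alg:collapse} calls beyond $\Psi^{*}+1$, and that no infinite ``ceiling $\to$ strong ceiling $\to$ ceiling $\to \cdots$'' cycle can form without a potential drop in between. Both of these follow from the fact that every exit from the strong ceiling state (via \blackref{alg:halve} or \blackref{alg:AD}) is a strict-decrease step, but articulating this dispatch invariant cleanly is what makes the bookkeeping delicate.
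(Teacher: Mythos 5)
Your proposal is correct and follows essentially the same route as the paper: Items~2 and 3 are reduced to termination plus the per-reduction guarantees, and Item~1 is proved by the same potential accounting — {\slice}, {\halve}, {\AD} each strictly decrease $\Psi$ (hence at most $\Psi^{*}$ such calls), while {\collapse} preserves $\Psi$ but cannot occur twice in a row, giving at most $\Psi^{*}+1$ {\collapse} calls and $1+2\Psi^{*}$ iterations in total. The dispatch/re-routing concerns you flag are handled exactly as in the paper, so no gap remains.
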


\begin{figure}[t]
    \centering
    \begin{mdframed}
    Procedure $\term[\main]{alg:main}(H \otimes \bB \otimes L)$
    
    \begin{flushleft}
    {\bf Input:}
    A {\em floor}/{\em ceiling} pseudo instance $H \otimes \bB \otimes L \in (\Bfloor \cup \Bceiling)$.
    \white{\term[\text{\em input}]{main_input}}
    \hfill
    \Cref{def:ceiling_floor:restate}
    
    \vspace{.05in}
    {\bf Output:}
    A {\em twin ceiling} pseudo instance $\tilde{H}^{\uparrow} \otimes \tilde{L} \in \Btwin$.
    \white{\term[\text{\em input}]{main_output}}
    \hfill
    \Cref{def:twin}
    
    \vspace{.05in}
    {\bf Remark:}
    ``{\rename}'' denotes the reassignment ``$H \otimes \bB \otimes L \,\gets\, \tilde{H} \otimes \tilde{\bB} \otimes \tilde{L}$''.
    % A superscript up-arrow may be added to the conditional value distributions ($P^{\uparrow}$ and $\tilde{P}^{\uparrow}$) \\
    % to emphasize the {\em ceiling} pseudo instances (\Cref{def:ceiling_floor}).
    
    \begin{enumerate}
        \item\label{alg:main:while_begin}
        \term[{\bf While}]{main_while} $\big\{ H \otimes \bB \otimes L \notin \Btwin \big\}${\bf :}
        % \subsetneq \Bstrong \subsetneq \Bceiling
        % \hfill
        % \OliveGreen{$\triangleright$ $\Btwin \subsetneq \Bstrong \subsetneq \Bceiling$}
        
        \item\label{alg:main:slice}
        \qquad
        {\bf If} $\big\{ H \otimes \bB \otimes L \notin \Bceiling \big\}${\bf :}
        \hfill
        \OliveGreen{$\triangleright$ $\neg$(\colorref{OliveGreen}{re:ceilingness})}
        
        \item
        \qquad\qquad
        $\tilde{H}^{\uparrow} \otimes \tilde{\bB} \otimes \tilde{L} \,\gets\, \slice(H^{\downarrow} \otimes \bB \otimes L)$ and {\rename}.
        
        \item\label{alg:main:collapse}
        \qquad
        {\bf Else If} $\big\{ H \otimes \bB \otimes L \in (\Bceiling \setminus \Bstrong) \big\}${\bf :}
        \hfill
        \OliveGreen{$\triangleright$ $\neg$(\colorref{OliveGreen}{re:collapse})}
        
        \item
        \qquad\qquad
        $\tilde{H}^{\uparrow} \otimes \tilde{\bB} \otimes \tilde{L} \,\gets\, \collapse(H^{\uparrow} \otimes \bB \otimes L)$ and {\rename}.
        
        \item\label{alg:main:halve}
        \qquad
        {\bf Else If} $\big\{ H \otimes \bB \otimes L \in (\Bstrong \setminus \Btwin) \big\}${\bf :}
        \hfill
        \OliveGreen{$\triangleright$ $\neg$(\colorref{OliveGreen}{re:twin_ceiling})}
        
        \item[]
        \qquad
        \OliveGreen{$\triangleright$ The jump entry $(\sigma^{*},\, j^{*}) \in (\{H\} \cup [n] \cup \{L\}) \times [m]$ (\Cref{lem:jump}).}
        
        \item
        \qquad\qquad
        {\bf Case $\{\sigma^{*} = L \}$:}
        % \hfill
        % \OliveGreen{$\triangleright$ $\neg$(\colorref{OliveGreen}{re:connectivity})}
        $\tilde{H} \otimes \tilde{\bB} \otimes \tilde{L} \,\gets\, \halve(H^{\uparrow} \otimes \bB \otimes L)$ and {\rename}.

        \item\label{alg:main:AD}
        \qquad\qquad
        {\bf Case $\{\sigma^{*} \neq L \}$:}
        % \OliveGreen{$\triangleright$ the jump bidder is one real bidder $\sigma^{*} \in \{H\} \cup [n]$}
        % \item
        % \qquad\qquad\qquad
        $\tilde{H}^{\uparrow} \otimes \tilde{\bB} \otimes \tilde{L} \,\gets\, \AD(H^{\uparrow} \otimes \bB \otimes L)$ and {\rename}.
        
        \item {\bf Return}
        $H^{\uparrow} \otimes \bB \otimes L \cong \tilde{H}^{\uparrow} \otimes \tilde{L} \in \Btwin$.
    \end{enumerate}
    \end{flushleft}
    \end{mdframed}
    \caption{The {\main} procedure
    \label{fig:alg:main}}
\end{figure}

\begin{proof}
It suffices to show {\bf \Cref{lem:main:time}}. Suppose so, {\bf \Cref{lem:main:property,lem:main:poa}} follow directly from the termination condition for the \blackref{main_while} loop (Line~\ref{alg:main:while_begin}) and a combination of \Cref{lem:slice,lem:collapse,lem:halve,lem:AD}.

\vspace{.1in}
\noindent
{\bf \Cref{lem:main:time}.}
We can divide all of the four reductions into two types.
\begin{itemize}
    \item \textbf{Type-1: \blackref{alg:slice}, \blackref{alg:halve}, and \blackref{alg:AD}.}
    
    Such a reduction decreases the potential $\tilde{\Psi} \leq \Psi - 1$. Therefore, such reductions in total can be invoked at most ``the \blackref{main_input} potential'' many times. (Recall \Cref{lem:potential} that a {\em twin ceiling} pseudo instance $\hat{H}^{\uparrow} \otimes \hat{\bB} \otimes \hat{L} \in \Btwin$, which triggers the termination condition for the \blackref{main_while} loop, has a zero potential.) Formally, we have $\#\big[\textbf{Type-1}\big] \,\leq\, \Psi^{*}$.
    
    \item \textbf{Type-2: \blackref{alg:collapse}.}
    
    This reduction transforms a {\em ceiling} but {\em non strong ceiling} pseudo instance $\in (\Bceiling \setminus \Bstrong)$ that violates \blackref{re:collapse} into a {\em strong ceiling} pseudo instance $\in \Bstrong$ that satisfies \blackref{re:collapse}. Therefore, between any two invocations of this reduction, (Line~\ref{alg:main:collapse}) at least one invocation of another \text{Type-1} reduction is required. Formally, we have $\#\big[\textbf{Type-2}\big] \,\leq\, \#\big[\textbf{Type-1}\big] + 1 \,\leq\, \Psi^{*} + 1$.
\end{itemize}
Hence, the \blackref{alg:main} procedure terminates in at most $\#\big[\textbf{Type-1}\big] + \#\big[\textbf{Type-2}\big] \leq 2\Psi^{*} + 1$ iterations.
{\bf \Cref{lem:main:time}} follows then.
This finishes the proof.
\end{proof}

\begin{wrapfigure}{r}{0.3\textwidth}
    \centering
    \includegraphics[width = \linewidth]
    {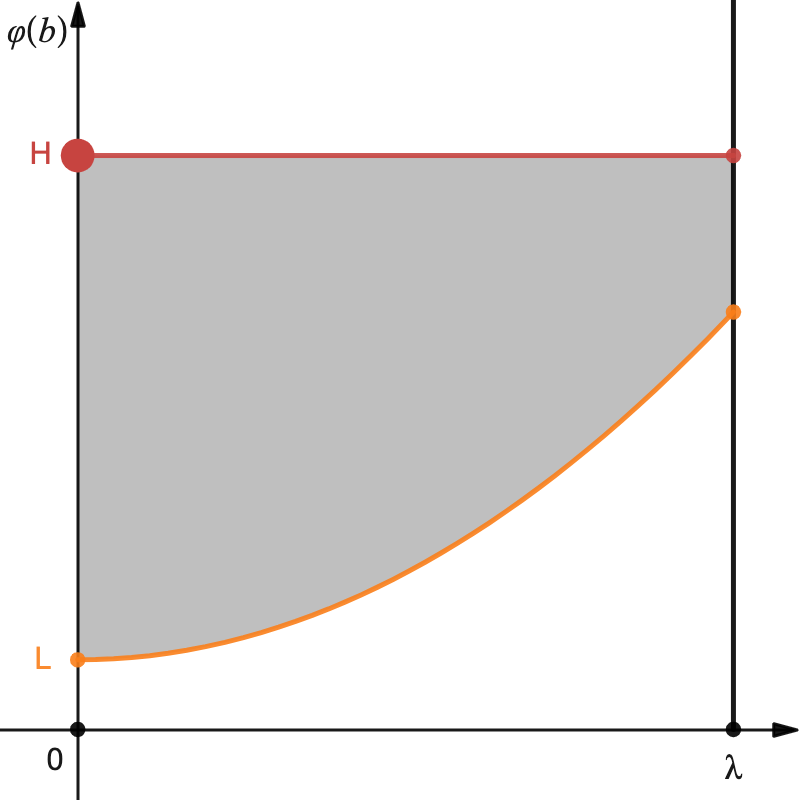}
    \caption{$H \otimes L \in \Btwin$
    %{\em Twin ceiling} pseudo instances.
    \label{fig:twin:restate}}
\end{wrapfigure}

%\begin{figure}[t]
%    \centering
%    \includegraphics[width = .45\textwidth, height = 6.375cm]
%    {twin_restate.png}
%    \caption{Diagram of {\em twin ceiling} pseudo instances.
%    \label{fig:twin:restate}}
%\end{figure}

Put \Cref{lem:main,cor:preprocess} together: Towards a lower bound on the {\PriceofAnarchy}, we can focus on {\em twin ceiling} pseudo instances $H \otimes L \in \Btwin$ (see \Cref{fig:twin:restate} for a visual aid).
For such a pseudo instance, the monopolist $H$ has a constant bid-to-value mapping $\varphi_{H}(b) = \phi_{H,\, 0}$ over the bid support $b \in [0,\, \lambda]$ and the optimal {\SocialWelfare} is exactly the ceiling value $\OPT(H \otimes L) = \phi_{H,\, 0}$ (\Cref{lem:ceiling_welfare}).
Obviously, scaling this pseudo instance to normalize the ceiling value $\phi_{H,\, 0} = 1$ does not change the {\PoA} bound.
% , for a {\em twin ceiling} pseudo instance $H \otimes L \in \Btwin$

In addition, the \blackref{re:layeredness} condition becomes vacuously true (cf.\ \Cref{lem:pseudo_mapping}), while removing the \blackref{re:discretization} condition will expand the search space, which is fine for our purpose towards a lower bound on the {\PriceofAnarchy}.

Taking these into account, we would redefine the (normalized) {\em twin ceiling} pseudo instances and consider a modified functional optimization (\Cref{cor:reduction}).

\begin{definition}[Twin ceiling pseudo instances]
\label{def:twin:restate}
For a {\em twin ceiling} pseudo instance $H \otimes L$:
\begin{itemize}
    \item The (real) {\em monopolist} $H$ competes with the pseudo bidder $L$, having a {\em constant} conditional value $P^{\uparrow} \equiv 1$ and a {\em constant} bid-to-value mapping $\varphi_{H}(b) = b + \frac{L(b)}{L'(b)} = 1$ on $b \in [0,\, \lambda]$. Hence, the supremum bid is bounded between $\lambda \in (0,\, 1)$ and the pseudo bidder $L$'s bid distribution is given by $L(b) = L_{\lambda}(b) \eqdef \frac{1 - \lambda}{1 - b}$.
    
    \item The pseudo bidder $L$ competes with both bidders $H \otimes L$, having an {\em increasing} bid-to-value mapping $\varphi_{L}(b) = b + \big(\frac{H'(b)}{H(b)} + \frac{L'(b)}{L(b)}\big)^{-1} = b + \big(\frac{H'(b)}{H(b)} + \frac{1}{1 - b}\big)^{-1} = 1 - \frac{(1 - b)^{2}}{H(b) / H'(b) + 1 - b}$ on $b \in [0,\, \lambda]$.
\end{itemize}
Given a supremum bid $\lambda \in (0,\, 1)$, the pseudo bidder $L = L_{\lambda}$ is determined, so the search space is simply $\mathbb{H}_{\lambda} \eqdef \big\{H \bigmid \text{the pseudo mapping $\varphi_{L}(b)$ is increasing on $b \in [0,\, \lambda]$} \big\}$.
\end{definition}

\begin{corollary}[Lower bound]
\label{cor:reduction}
Regarding {\FirstPriceAuctions}, the {\PriceofAnarchy} is at least
\begin{align}
\label{eq:reduction}\tag{$\clubsuit$}
    \PoA ~\geq~ \inf \Big\{\, \FPA(H \otimes L_{\lambda}) \,\Bigmid\, \lambda \in (0,\, 1) ~\text{\em and}~ H \in \mathbb{H}_{\lambda} \,\Big\}.
\end{align}
\end{corollary}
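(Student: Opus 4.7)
The plan is to chain together the earlier structural results, then perform a scaling/re-parameterization to arrive at the specific form in the statement. Concretely, I would start from Corollary~\ref{cor:preprocess}, which already reduces the \textsf{PoA} problem to an infimum over floor/ceiling pseudo instances $H \otimes \bB \otimes L \in (\Bfloor \cup \Bceiling)$. Applying Lemma~\ref{lem:main} to each such instance produces a twin ceiling pseudo instance $\tilde{H}^{\uparrow} \otimes \tilde{L} \in \Btwin$ whose \textsf{PoA} is weakly worse. Since this is a pointwise improvement, the infimum over $(\Bfloor \cup \Bceiling)$ is lower bounded by the infimum over $\Btwin$, giving
\[
    \PoA ~\geq~ \inf \big\{\, \PoA(\tilde{H}^{\uparrow} \otimes \tilde{L}) \,\bigmid\, \tilde{H}^{\uparrow} \otimes \tilde{L} \in \Btwin \,\big\}.
\]

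Next I would normalize and simplify a generic twin ceiling pseudo instance. By Definition~\ref{def:twin}, the monopolist has the constant mapping $\varphi_{H}(b) \equiv \phi_{H,\, 0}$ on $b \in [0,\, \lambda]$ and the \blackref{re:ceilingness} conditional value $P^{\uparrow} \equiv \phi_{H,\, 0}$, so Lemma~\ref{lem:ceiling_welfare} yields $\OPT(\tilde{H}^{\uparrow} \otimes \tilde{L}) = \phi_{H,\, 0}$. Rescaling bids and values by the factor $1 / \phi_{H,\, 0}$ leaves the \textsf{PoA} invariant and sets the ceiling to one; after this normalization $\OPT = 1$ and hence $\PoA(\tilde{H}^{\uparrow} \otimes \tilde{L}) = \FPA(\tilde{H}^{\uparrow} \otimes \tilde{L})$. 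The constant-mapping identity $\varphi_{H}(b) = b + L(b)/L'(b) \equiv 1$ is a first-order ODE in $L$ on $[0,\, \lambda]$; solving it under the boundary condition $L(\lambda) = 1$ forces $L(b) = L_{\lambda}(b) = (1 - \lambda)/(1 - b)$, so the entire pseudo bidder is pinned down by the single scalar $\lambda \in (0,\, 1)$ and the only remaining degree of freedom is the monopolist's bid CDF $H$. The induced pseudo mapping $\varphi_{L}(b) = 1 - (1-b)^{2}/(H(b)/H'(b) + 1 - b)$ is forced to be (weakly) increasing by \blackref{re:monotonicity}, which is exactly the membership condition $H \in \mathbb{H}_{\lambda}$ in Definition~\ref{def:twin:restate}.

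Finally I would handle the relaxation from \emph{discretized} twin ceiling instances to arbitrary $H \in \mathbb{H}_{\lambda}$. The discretized twin ceiling instances of Definition~\ref{def:twin} form a strict subset of the $(\lambda,\, H)$ pairs ranging over $(0,\, 1) \times \mathbb{H}_{\lambda}$ in \eqref{eq:reduction}; enlarging the feasible set can only decrease the infimum, so
\[
    \inf\big\{\PoA(\tilde{H}^{\uparrow} \otimes \tilde{L}) : \tilde{H}^{\uparrow} \otimes \tilde{L} \in \Btwin\big\}
    ~\geq~ \inf \big\{\, \FPA(H \otimes L_{\lambda}) \,\bigmid\, \lambda \in (0,\, 1),~ H \in \mathbb{H}_{\lambda} \,\big\},
\]
and concatenating this with the earlier chain gives \eqref{eq:reduction}.

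The main obstacle, conceptually modest but worth spelling out, is verifying that each step in the chain preserves the direction of the inequality after the normalization: the rescaling is \textsf{PoA}-invariant (both $\FPA$ and $\OPT$ are homogeneous of degree one in bids/values, so the ratio is unchanged), and the solution $L_{\lambda}(b) = (1-\lambda)/(1-b)$ is forced uniquely by the ODE and the boundary condition. A minor bookkeeping point is that Corollary~\ref{cor:preprocess} excludes instances with $\OPT = +\infty$; this is automatic after normalization since $\OPT = 1$, so those degenerate cases play no role here.
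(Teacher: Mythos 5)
Your proposal is correct and follows essentially the same route as the paper: combine Corollary~\ref{cor:preprocess} with Lemma~\ref{lem:main} to restrict to twin ceiling pseudo instances, use Lemma~\ref{lem:ceiling_welfare} to get $\OPT = \phi_{H,\,0}$, normalize the ceiling value to $1$ (so the ratio becomes $\FPA$), solve $b + L(b)/L'(b) = 1$ with $L(\lambda)=1$ to pin down $L_{\lambda}$, and observe that dropping the discretization constraint only enlarges the feasible set, which is harmless for a lower bound. The paper performs exactly this normalization and relaxation when passing to Definition~\ref{def:twin:restate}, so no further comment is needed.
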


\newpage

\section{Lower Bound Analysis}
\label{sec:UB}

Following \Cref{cor:reduction} and Optimization~\eqref{eq:reduction}, we study the space of twin ceiling pseudo instances $H \otimes L_{\lambda}$ under all possible supremum bids $\lambda \in (0,\, 1)$ and aim to find the worst case $H^{*} \otimes L^{*}$.  This task is accomplished in three steps:
\begin{itemize}
    \item \Cref{subsec:UB_welfare} presents an explicit formula for the expected auction {\SocialWelfare} $\FPA(H \otimes L_{\lambda})$. Also, we prove that imposing certain properties (such as {\em differentiability}) on the monopolist's bid distribution $H$ does not change the solution to Optimization~\eqref{eq:reduction}.
    
    \item \Cref{subsec:UB_ODE} resolves Optimization~\eqref{eq:reduction} by leveraging tools from Calculus of Variations (i.e., the imposed properties enable these tools). Roughly speaking, the worst case $H^{*} \otimes L^{*}$ turns out to be specified by the supremum bid $\lambda \in (0,\, 1)$ and one more parameter $\mu > 0$, thus taking the form $H^{*} \otimes L^{*} = H_{\mu^{*}} \otimes L_{\lambda^{*}}$.
    
    \item \Cref{subsec:UB_lambda_mu} captures the worst case $H_{\mu^{*}} \otimes L_{\lambda^{*}}$ by optimizing the parameters $\lambda$ and $\mu$, which immediately implies the lower-bound part of \Cref{thm:main} that the $\PoA \geq 1 - 1 / e^{2}$.
\end{itemize}

\subsection{The expected auction {\SocialWelfare}}
\label{subsec:UB_welfare}

We introduce the space of {\em twice continuously differentiable} CDF's $\mathbb{C}_{\lambda}^{2}$ right below (\Cref{def:differentiability}). Afterward, we (i)~get an explicit reformulation for Optimization~\eqref{eq:reduction}, then (ii)~prove that the same solution can be achieved from a smaller search space $\mathbb{H}_{\lambda} \cap \mathbb{C}_{\lambda}^{2} \subsetneq \mathbb{H}_{\lambda}$, and then (iii)~relax/expand the search space to $\mathbb{C}_{\lambda}^{2}$.

\begin{definition}[Differentiability]
\label{def:differentiability}
Denote by $\mathbb{C}_{\lambda}^{0}$ the space of all {\em continuous} CDF's that are supported on $[0,\, \lambda]$. The smaller space $\mathbb{C}_{\lambda}^{1}$ (resp.\ the even smaller space $\mathbb{C}_{\lambda}^{2}$) further requires {\em continuously differentiable} CDF's (resp.\ {\em twice continuously differentiable} CDF's), i.e., the derivatives (resp.\ the second derivatives) of those CDF's are continuous functions.
\end{definition}

\begin{lemma}[{\PriceofAnarchy}]
\label{lem:reformulation}
Regarding {\FirstPriceAuctions}, the {\PriceofAnarchy} satisfies that
\begin{align}
    \PoA
    & ~\geq~ 1 - \sup \Bigg\{\, (1 - \lambda) \cdot \int_{0}^{\lambda} \frac{H(x) \cdot H'(x) \cdot \d x}{H(x) + (1 - x) \cdot H'(x)} \,\Biggmid\, \text{\em $\lambda \in (0,\, 1)$ and $H \in \mathbb{H}_{\lambda}$} \,\Bigg\}
    \label{eq:reformulation:1}\tag{I} \\
    & ~=~ 1 - \sup \Bigg\{\, (1 - \lambda) \cdot \int_{0}^{\lambda} \frac{H(x) \cdot H'(x) \cdot \d x}{H(x) + (1 - x) \cdot H'(x)} \,\Biggmid\, \text{\em $\lambda \in (0,\, 1)$ and $H \in (\mathbb{H}_{\lambda} \cap \mathbb{C}_{\lambda}^{2})$} \,\Bigg\}
    \label{eq:reformulation:2}\tag{II} \\
    & ~\geq~ 1 - \sup \Bigg\{\, (1 - \lambda) \cdot \int_{0}^{\lambda} \frac{H(x) \cdot H'(x) \cdot \d x}{H(x) + (1 - x) \cdot H'(x)} \,\Biggmid\, \text{\em $\lambda \in (0,\, 1)$ and $H \in \mathbb{C}_{\lambda}^{2}$} \,\Bigg\}.
    \label{eq:reformulation:3}\tag{III}
\end{align}
\end{lemma}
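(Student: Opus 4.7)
The plan is to prove the three parts in order, each being a separate type of argument. The setup from Corollary~\ref{cor:reduction} and Definition~\ref{def:twin:restate} already normalizes $\OPT(H\otimes L_\lambda)\equiv 1$, so the efficiency ratio equals $\FPA(H\otimes L_\lambda)$ itself.

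For \eqref{eq:reformulation:1}, I would derive the closed-form expression
\[
  \FPA(H\otimes L_\lambda) \;=\; 1 - (1-\lambda)\int_0^\lambda \frac{H(x)\,H'(x)}{H(x)+(1-x)H'(x)}\,dx
\]
by applying Lemma~\ref{lem:pseudo_welfare} to the two-bidder instance with $\varphi_H(b)\equiv 1$, $L_\lambda(b)=(1-\lambda)/(1-b)$, and the formula for $\varphi_L$ from Definition~\ref{def:twin:restate}. After substitution, the telescoping identity $\tfrac{d}{db}\bigl[H(b)/(1-b)\bigr] = H'(b)/(1-b)+H(b)/(1-b)^2$, combined with the boundary condition $H(\lambda)=1$ and the probability-mass term $\mathbb{E}[P]\cdot\mathcal{B}(0)=(1-\lambda)H(0)$, collapses most of the expression and leaves exactly the claimed integrand. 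Taking infimum over $\lambda$ and $H\in\mathbb{H}_\lambda$ and invoking Corollary~\ref{cor:reduction} yields \eqref{eq:reformulation:1}.

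The most delicate step is the equality \eqref{eq:reformulation:2}. The ``$\leq$'' direction is immediate from $\mathbb{H}_\lambda\cap\mathbb{C}_\lambda^2\subseteq\mathbb{H}_\lambda$. For the reverse direction I would approximate an arbitrary $H\in\mathbb{H}_\lambda$ by smooth elements. The clean route is to reparameterize: solving the definition of $\varphi_L$ for $H'/H$ gives $H'/H=(1-\varphi_L)/((\varphi_L-x)(1-x))$, so (with $H(\lambda)=1$) the CDF $H$ is completely determined by the increasing function $\varphi_L:[0,\lambda]\to\mathbb{R}$ via an explicit exponential-integral formula. I would then mollify $\varphi_L$ by standard non-negative kernels (which preserve monotonicity), obtaining smooth increasing $\tilde\varphi_L^{(k)}$ whose corresponding $\tilde H^{(k)}$ automatically lie in $\mathbb{H}_\lambda\cap\mathbb{C}_\lambda^2$. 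A short computation gives the equivalent form of the integrand $HH'/(H+(1-x)H')=H(1-\varphi_L)/(1-x)^2$, which is well suited to dominated convergence and yields $F(\tilde H^{(k)})\to F(H)$. The main obstacle will be controlling the singular behavior near the endpoints during mollification: $(1-x)^2$ is small near $x=\lambda$ and $(\varphi_L-x)(1-x)$ may be small near $x=0$, so I plan to extend $\varphi_L$ off $[0,\lambda]$ by constants before mollifying and handle a thin boundary layer $[0,\delta]\cup[\lambda-\delta,\lambda]$ separately by $L^1$ estimates, sending $\delta\to 0$ after the mollification parameter.

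Finally, \eqref{eq:reformulation:3} is immediate: $\mathbb{H}_\lambda\cap\mathbb{C}_\lambda^2\subseteq\mathbb{C}_\lambda^2$ means the supremum over the enlarged space is at least the supremum over the smaller space, hence $1$ minus it is a weaker lower bound on $\PoA$, which is exactly the ``$\geq$'' being asserted.
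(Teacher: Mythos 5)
Your proposal is correct and follows essentially the same route as the paper: derive the closed form $\FPA(H\otimes L_\lambda)=1-(1-\lambda)\int_0^\lambda \frac{H(x)H'(x)}{H(x)+(1-x)H'(x)}\,\mathrm{d}x$ from Lemma~\ref{lem:pseudo_welfare} via integration by parts (your telescoping identity is the same computation), then argue that restricting to $\mathbb{H}_\lambda\cap\mathbb{C}_\lambda^2$ loses nothing by smooth approximation, and finally pass to $\mathbb{C}_\lambda^2$ by set inclusion. The only divergence is in step \eqref{eq:reformulation:2}, where the paper merely asserts the approximation via an integrability/smoothness-of-the-integrand remark, while you give a concrete scheme (reparameterize by $\varphi_L$, mollify, and use $\frac{HH'}{H+(1-x)H'}=\frac{H(1-\varphi_L)}{(1-x)^2}$ with dominated convergence); this is a legitimate, indeed more careful, instantiation of the same idea, since mollifying $\varphi_L$ keeps the approximants inside $\mathbb{H}_\lambda$, a point the paper leaves implicit.
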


Let us prove the three steps of \Cref{lem:reformulation} one by one.

\begin{proof}[Proof of Optimization~\eqref{eq:reformulation:1}]
This is a reformulation of Optimization~\eqref{eq:reduction}. The monopolist $H$ has a constant bid-to-value mapping $\varphi_{H}(b) = v_{H} = 1$ for $b \in [0,\, \lambda]$ (\Cref{def:twin:restate}). Following the formula in \Cref{lem:pseudo_welfare}, the expected auction {\SocialWelfare} $\FPA(H \otimes L_{\lambda})$ is given by
\begin{align*}
    \FPA(H \otimes L_{\lambda})
    & ~=~ v_{H} \cdot H(0) \cdot L_{\lambda}(0) + \int_{0}^{\lambda} \bigg(v_{H} \cdot H'(b) \cdot L_{\lambda}(b) + \varphi_{L}(b) \cdot L'_{\lambda}(b) \cdot H(b)\bigg) \cdot \d b \\
    & ~=~ v_{H} \cdot H(\lambda) \cdot L_{\lambda}(\lambda) - \int_{0}^{\lambda} \big(v_{H} - \varphi_{L}(b)\big) \cdot L'_{\lambda}(b) \cdot H(b) \cdot \d b \\
    & ~=~ 1 - \int_{0}^{\lambda} \frac{(1 - b)^{2}}{H(b) / H'(b) + 1 - b} \cdot L'_{\lambda}(b) \cdot H(b) \cdot \d b \\
    & ~=~ 1 - (1 - \lambda) \cdot \int_{0}^{\lambda} \frac{H(b) \cdot \d b}{H(b) / H'(b) + 1 - b}
\end{align*}
Here the second step uses integration by parts $\int_{0}^{\lambda} \big(H'(b) \cdot L_{\lambda}(b) + L'_{\lambda}(b) \cdot H(b)\big) \cdot \d b = H(b) \cdot L_{\lambda}(b) \bigmid_{b = 0}^{\lambda}$. The third step uses $H(\lambda) = L_{\lambda}(\lambda) = 1$ (because $\lambda$ is the supremum bid) and the $\varphi_{L}(b)$ formula in \Cref{def:twin:restate}. And the last step applies $L'_{\lambda}(b) = \frac{1 - \lambda}{(1 - b)^{2}}$ (\Cref{def:twin:restate}).

The above formula $\FPA(H \otimes L_{\lambda})$ together with \Cref{cor:reduction} implies Optimization~\eqref{eq:reformulation:1}.
\end{proof}

\begin{proof}[Proof of Optimization~\eqref{eq:reformulation:2}]
Given a supremum bid $\lambda \in (0,\, 1)$, we consider a specific CDF $H \in \mathbb{H}_{\lambda}$. The integral $\int_{0}^{\lambda} \frac{H(b) \cdot H'(b) \cdot \d b}{H(b) + (1 - b) \cdot H'(b)}$ must be {\em Lebesgue integrable}, since the integrand $\frac{H(b) \cdot H'(b)}{H(b) + (1 - b) \cdot H'(b)}$ is nonnegative and the integral is upper bounded by $\int_{0}^{\lambda} \frac{H(b) \cdot H'(b) \cdot \d b}{H(b)} = H(\lambda) - H(0) \leq 1$.
Further, the integrand, if being regarded as a function of three variables $b \in (0,\, \lambda)$, $H \in [0,\, 1]$ and $H' \geq 0$, is {\em twice continuously differentiable}. For these reasons, the CDF $H \in \mathbb{H}_{\lambda}$ can be replaced
% \footnote{\blue{By definition, it can be easily checked that both $\mathbb{H}_{\lambda}$ and $\mathbb{C}_{\lambda}^{2}$ are closed sets, i.e., the limit of a {\em converged} sequence of CDF's $\in \mathbb{H}_{\lambda}$ (resp.\ $\mathbb{C}_{\lambda}^{2}$) is still a CDF $\in \mathbb{H}_{\lambda}$ (resp.\ $\mathbb{C}_{\lambda}^{2})$. Thus the intersection $(\mathbb{H}_{\lambda} \cap \mathbb{C}_{\lambda}^{2})$ is also a closed set, which guarantees the existence of a substitute CDF $\tilde{H}$ as required.}}
with another {\em twice continuously differentiable} CDF $\tilde{H} \in (\mathbb{H}_{\lambda} \cap \mathbb{C}_{\lambda}^{2})$ such that the integral {\em keeps the same}, namely the replacement incurs an infinitesimal error whose magnitude is irrelevant to the $\lambda \in (0,\, 1)$.

To conclude, Optimization~\eqref{eq:reformulation:2} has the same solution as Optimization~\eqref{eq:reformulation:1}.
\end{proof}

\begin{proof}[Proof of Optimization~\eqref{eq:reformulation:3}]
This step trivially holds, since it relaxes/enlarges the search space.
% from $(\mathbb{H}_{\lambda} \cap \mathbb{C}_{\lambda}^{2})$ to $\mathbb{C}_{\lambda}^{2}$.
\end{proof}

\subsection{The optima are solutions to an ODE}
\label{subsec:UB_ODE}

Following Optimization~\eqref{eq:reformulation:3}, we consider a {\em constant} supremum bid $\lambda \in (0,\, 1)$ for the moment, and study the next optimization (i.e., a functional of all twice differentiable CDF's $H \in \mathbb{C}_{\lambda}^{2}$):
\begin{align}
\label{eq:functional}\tag{IV}
    \sup \Bigg\{\, \int_{0}^{\lambda} \frac{H(x) \cdot H'(x) \cdot \d x}{H(x) + (1 - x) \cdot H'(x)} \,\Biggmid\, \text{$H \in \mathbb{C}_{\lambda}^{2}$} \,\Bigg\}.
\end{align}
Below we first (\Cref{lem:ODE}) leverage tools from Calculus of Variations to get a {\em necessary condition}, i.e., an {\em ordinary differential equation} (ODE), for any worst-case CDF $H \in \mathbb{C}_{\lambda}^{2}$ of Optimization~\eqref{eq:functional}, and then (\Cref{lem:implicit_equation}) explicitly resolve this ODE. In this way, any candidate worst-case CDF $H \in \mathbb{C}_{\lambda}^{2}$ will be controlled by just one parameter $\mu > 0$, namely $H = H_{\mu}$ (see \Cref{fig:function_H_mu} for a visual aid). The following two lemmas formalize our proof plan.

\begin{figure}[t]
    \centering
    \includegraphics[width = .7\textwidth]{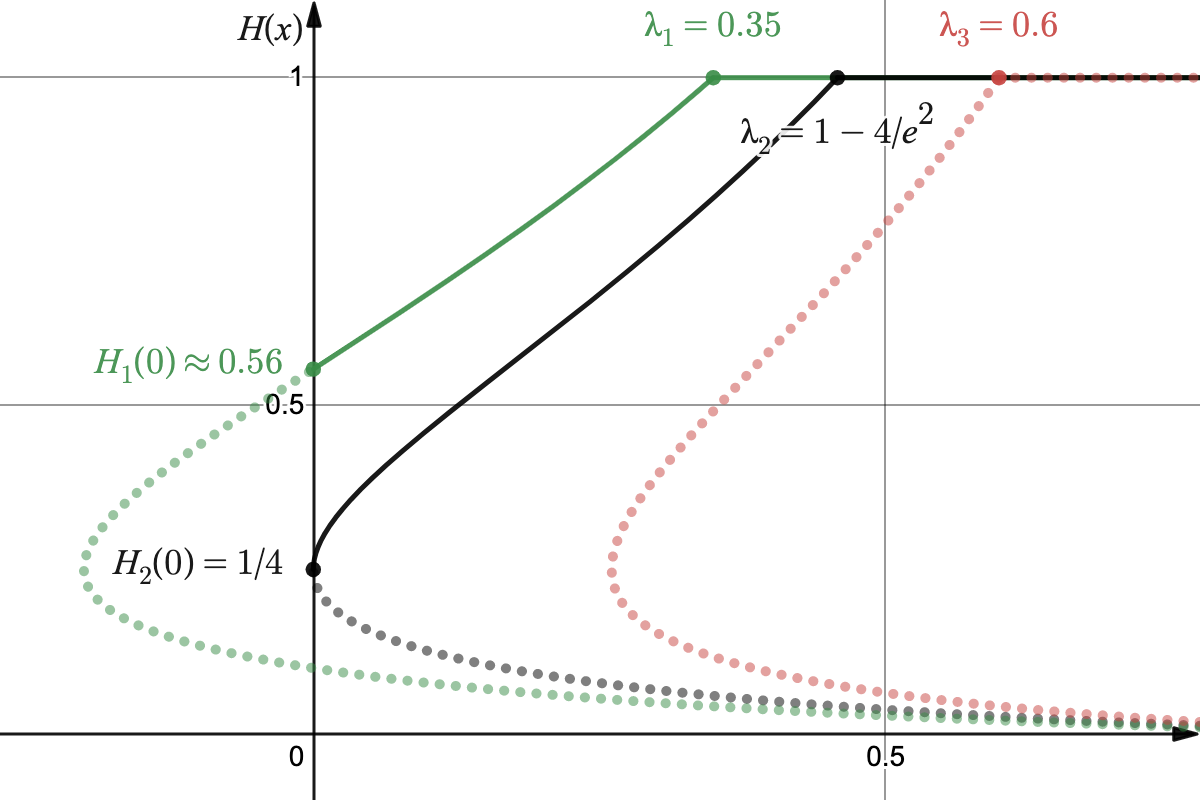}
    \caption{Diagram of the CDF $H_{\mu}$ given by Implicit Equation~\eqref{eq:implicit_equation}, where a {\em full curve} is a well-defined CDF restricted to $\{(x,\, H_{\mu}) \mid 0 \leq x \leq \lambda ~\text{and}~ (1 + 1 / \mu)^{-2} \leq H_{\mu} \leq 1\}$. \\ [.05in]
    In contrast, a {\em dotted curve} is just an {\em analytic continuation}, through the same implicit equation, to all $H_{\mu} \in [0,\, 1]$. Such an analytic continuation violates the above restriction and/or Condition~\eqref{eq:condition}, so it is {\em not} a well-defined CDF supported on $[0,\, \lambda]$. \\ [.05in]
    All the three CDF's in the figure choose the same parameter $\mu = 1$, so Condition~\eqref{eq:condition} is equivalent to $\lambda \leq 1 - (1 + 1 / \mu)^{2} \cdot e^{-2 / \mu} = 1 - 4 / e^{2} \approx 0.4587$.
    (i)~The green {\em full/dotted} curves choose a supremum bid $\lambda_{1} = 0.35$ that lies {\em within} the feasible space, by which the $y$-axis crosses the implicit equation twice, hence an {\em intersecting line}.
    (ii)~The black {\em full/dotted} curves choose the {\em boundary-case} supremum bid $\lambda_{2} = 1 - (1 + 1 / \mu)^{2} \cdot e^{-2 / \mu}$, by which the $y$-axis just touches the implicit equation, hence a {\em tangent line}.
    (iii)~The red {\em dotted} curve chooses an {\em infeasible} supremum bid $\lambda_{3} = 0.6$, by which the $y$-axis {\em never} crosses the implicit equation, hence a {\em disjoint line}.}
    \label{fig:function_H_mu}
\end{figure}

\begin{lemma}[The ODE; a necessary condition]
\label{lem:ODE}
Given a supremum bid $\lambda \in (0,\, 1)$, any extremum CDF $H \in \mathbb{C}_{\lambda}^{2}$ of Optimization~\eqref{eq:functional} satisfies the following ODE for any $x \in (0,\, \lambda)$:
\begin{align}
\label{eq:ODE}
    \frac{(1 - x) \cdot (H'(x))^{2}}{(H(x) + (1 - x) \cdot H'(x))^{2}}
    ~=~
    \frac{2 \cdot H(x) \cdot H'(x)}{(H(x) + (1 - x) \cdot H'(x))^{2}} ~-~ \frac{2 \cdot (1 - x) \cdot (H(x))^{2} \cdot H''(x)}{(H(x) + (1 - x) \cdot H'(x))^{3}}.
\end{align}
\end{lemma}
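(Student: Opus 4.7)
The plan is to recognize Optimization~\eqref{eq:functional} as a standard problem in the Calculus of Variations and derive Equation~\eqref{eq:ODE} as the associated Euler-Lagrange equation. Define the Lagrangian
\[
    \mathcal{L}(x,\, H,\, p) ~\eqdef~ \frac{H \cdot p}{H + (1 - x) \cdot p},
\]
so that the functional equals $J[H] = \int_{0}^{\lambda} \mathcal{L}(x,\, H(x),\, H'(x)) \cdot \d x$. Since $H \in \mathbb{C}_{\lambda}^{2}$ and the integrand is a smooth function of $(x,\, H,\, p)$ (away from the zero set of the denominator, which cannot be approached under \blackref{value_monotonicity} of the underlying pseudo mapping $\varphi_{L}$), we may perturb $H \mapsto H + \epsilon \eta$ for any test function $\eta \in C_{c}^{\infty}((0,\, \lambda))$ that vanishes near the endpoints. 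Any such perturbation keeps $H + \epsilon \eta$ within $\mathbb{C}_{\lambda}^{2}$ for $|\epsilon|$ small enough. Requiring $\frac{\d}{\d \epsilon} J[H + \epsilon \eta]\bigmid_{\epsilon = 0} = 0$ and integrating by parts (no boundary terms appear since $\eta$ is compactly supported) yields, via the fundamental lemma of the Calculus of Variations, the pointwise condition
\[
    \frac{\partial \mathcal{L}}{\partial H}\Big(x,\, H(x),\, H'(x)\Big)
    ~=~ \frac{\d}{\d x}\bigg(\frac{\partial \mathcal{L}}{\partial p}\Big(x,\, H(x),\, H'(x)\Big)\bigg)
\]
for every $x \in (0,\, \lambda)$.

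The remaining work is a routine computation. Using $D \eqdef H + (1 - x) \cdot H'$ for brevity, the quotient rule gives
\[
    \frac{\partial \mathcal{L}}{\partial H} ~=~ \frac{p \cdot D - H \cdot p \cdot 1}{D^{2}} ~=~ \frac{(1 - x) \cdot (H')^{2}}{D^{2}},
    \qquad
    \frac{\partial \mathcal{L}}{\partial p} ~=~ \frac{H \cdot D - H \cdot p \cdot (1 - x)}{D^{2}} ~=~ \frac{H^{2}}{D^{2}}.
\]
Differentiating the second quantity in $x$ along the curve and using $\frac{\d D}{\d x} = H' - H' + (1 - x) \cdot H'' = (1 - x) \cdot H''$, I get
\[
    \frac{\d}{\d x}\bigg(\frac{H^{2}}{D^{2}}\bigg)
    ~=~ \frac{2 \cdot H \cdot H'}{D^{2}} ~-~ \frac{2 \cdot H^{2} \cdot (1 - x) \cdot H''}{D^{3}}.
\]
Equating the two sides of the Euler-Lagrange equation reproduces Equation~\eqref{eq:ODE} exactly, finishing the proof.

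The only subtlety worth flagging is well-posedness of the variation: I must check that the denominator $D = H + (1 - x) \cdot H'$ stays bounded away from zero on $[0,\, \lambda]$ for any candidate extremum in $\mathbb{C}_{\lambda}^{2}$, so that $\mathcal{L}$ and its partial derivatives are smooth along $(x,\, H(x),\, H'(x))$ and the exchange of differentiation and integration is justified. This follows because, per \Cref{def:twin:restate}, the pseudo mapping $\varphi_{L}(x) = 1 - (1 - x)^{2} / \big(H(x)/H'(x) + 1 - x\big)$ must be well defined (in fact increasing) on $[0,\, \lambda]$, which forces $H(x)/H'(x) + 1 - x > 0$ and hence $D(x) > 0$; equivalently, $H$ can be perturbed by a small $\epsilon \eta$ without leaving $\mathbb{H}_{\lambda} \cap \mathbb{C}_{\lambda}^{2}$. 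This is the main (albeit minor) obstacle. Everything else is just bookkeeping.
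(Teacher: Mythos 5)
Your proposal is correct and follows essentially the same route as the paper: the paper also sets up the Lagrangian $\calL(x, H, H') = \frac{H \cdot H'}{H + (1 - x) \cdot H'}$, invokes the Euler--Lagrange equation (stated there as \Cref{thm:Euler-Lagrange}), and performs exactly the same two quotient-rule computations to arrive at ODE~\eqref{eq:ODE}. The only cosmetic difference is that you re-derive the Euler--Lagrange condition from first principles (compactly supported perturbations plus the fundamental lemma) and explicitly flag the positivity of the denominator $H + (1 - x) \cdot H'$, whereas the paper simply cites the theorem after checking its smoothness premises.
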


\begin{lemma}[The ODE solutions]
\label{lem:implicit_equation}
Given a supremum bid $\lambda \in (0,\, 1)$, any extremum CDF $H \in \mathbb{C}_{\lambda}^{2}$ of Optimization~\eqref{eq:functional} is determined by one parameter $\mu > 0$ such that
\begin{align}
\label{eq:condition}
    1 - (1 + 1 / \mu)^{2} \cdot e^{-2 / \mu} ~\geq~ \lambda,
\end{align}
namely $H = H_{\mu}$. Moreover, this extremum CDF $H_{\mu}$ is defined by the following implicit equation:
\begin{align}
    \label{eq:implicit_equation}
    \left\{\, (x,\, H_{\mu}) \,\middlemid\,
    \begin{aligned}
        & x ~=~ 1 - (1 - \lambda) \cdot H_{\mu} \cdot \exp\Big((1 + 1 / \mu) \cdot \big(2 - 2\sqrt{H_{\mu}}\big)\Big) \\
        & \text{\em such that $0 \leq x \leq \lambda$ and $(1 + 1 / \mu)^{-2} \leq H_{\mu} \leq 1$} \phantom{\bigg.}
    \end{aligned}\,\right\}.
\end{align}
\end{lemma}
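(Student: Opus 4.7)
The plan is to resolve the ODE from \Cref{lem:ODE} via a chain of substitutions that flatten it into a separable first-order equation. First, I would rewrite the ODE in terms of $y = 1 - x$ and $t = \sqrt{H}$, so that $H = t^2$, $H' = 2tt'$, $H'' = 2(t')^2 + 2tt''$, and the recurring combination $H + (1-x)H'$ factors as $t(t + 2y t')$. A direct algebraic substitution into \eqref{eq:ODE} (cancelling the common factor $t(t+2yt')^2$ on both sides after clearing denominators) should collapse the equation to
\begin{align*}
    (1-x)\, t^2\, t'' + 2(1-x)^2 (t')^3 ~=~ t^2 t',
\end{align*}
which is the first key simplification: the dependence on $H + (1-x)H'$ has completely disappeared.

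Next, dividing through by $t^2 t'$ and introducing the logarithmic derivative $s = t'/t$ (with $t''/t' = s'/s + s$), I would rewrite the equation as $y(s'/s + s) + 2y^2 s^2 = 1$ in the variable $y = 1-x$. The substitution $u \eqdef y\, s = \tfrac{1}{2}(1-x)\, H'/H$ then turns this into the genuinely separable ODE
\begin{align*}
    y\, \tfrac{du}{dy} ~=~ u^2 (1 + 2u),
\end{align*}
which I would integrate by partial fractions: $\frac{1}{u^2(1+2u)} = -\frac{2}{u} + \frac{1}{u^2} + \frac{4}{1+2u}$. Integrating both sides yields the first integral
\begin{align*}
    \frac{(1+2u)^2}{u^2}\cdot e^{-1/u} ~=~ C \cdot (1-x),
\end{align*}
for some constant $C > 0$ of integration. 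This constant is the free parameter that ultimately becomes $\mu$.

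Translating back to $H$ uses the identity $2u = (1-x)H'/H$. The natural parametrization is to set $(1 + 2u)/(2u) = (1 + 1/\mu)\sqrt{H}$, which is equivalent to introducing $\mu > 0$ through $1/u = 2(1+1/\mu)\sqrt{H} - 2$; substituting these into the first integral and simplifying gives
\begin{align*}
    1 - x ~=~ \frac{4(1+1/\mu)^2}{C}\cdot e^{-2/\mu}\cdot H \cdot \exp\!\Big((1+1/\mu)(2 - 2\sqrt{H})\Big).
\end{align*}
The boundary condition $H(\lambda) = 1$ (since $\lambda$ is the supremum bid) fixes the prefactor to $1 - \lambda$, yielding exactly Implicit Equation~\eqref{eq:implicit_equation}. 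I would then verify \emph{a posteriori} by implicit differentiation of \eqref{eq:implicit_equation} that this $H_\mu$ does solve \eqref{eq:ODE}, so that the family in the statement is precisely the set of extremum candidates.

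Finally, I would pin down the validity ranges. Analysing the right-hand side function $f(H) \eqdef H \cdot \exp((1+1/\mu)(2-2\sqrt{H}))$, a routine derivative check shows $f'(H) = 0$ at $H = (1+1/\mu)^{-2}$ with $f$ increasing before and decreasing after. For $H_\mu(x)$ to be a well-defined \emph{increasing} CDF on $[0,\lambda]$ with $H_\mu(\lambda) = 1$, I need to select the decreasing branch $H \in [(1+1/\mu)^{-2},\,1]$, giving the stated range of $H_\mu$. Evaluating at $x = 0$ requires $1/(1-\lambda) \leq f((1+1/\mu)^{-2}) = (1+1/\mu)^{-2} e^{2/\mu}$, which rearranges into Condition~\eqref{eq:condition}. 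The main obstacle I expect is not the integration step, which is mechanical, but choosing the clean parametrization ($t = \sqrt{H}$ followed by $u = ys$) that makes the ODE explicitly separable, and then selecting the correct branch of the implicit equation so that monotonicity and the boundary behaviour at both endpoints are consistent; \Cref{fig:function_H_mu} provides the geometric intuition for why only the decreasing branch works.
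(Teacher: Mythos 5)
Your route is in substance the paper's own: your $u=\tfrac12(1-x)H'/H$ is exactly half of the paper's variable $z(x)=(1-x)H'(x)/H(x)$, your separable equation $y\,\frac{du}{dy}=u^{2}(1+2u)$ is the paper's Equation~\eqref{eq:solve_ODE:2} in disguise, and your first integral, the boundary condition $H(\lambda)=1$, and the branch analysis at the end (selecting $H\in[(1+1/\mu)^{-2},1]$ and evaluating at $x=0$ to get Condition~\eqref{eq:condition}) all coincide with the paper's derivation, which phrases the last point as $\calX(+\infty)\leq 0$. The intermediate pass through $t=\sqrt H$ is correct but not needed; the algebra you claim (the collapsed ODE $yt^{2}t''+2y^{2}(t')^{3}=t^{2}t'$, the partial fractions, and the simplification of the first integral into Implicit Equation~\eqref{eq:implicit_equation}) does check out.

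The one step you cannot simply ``set'' is the parametrization $(1+2u)/(2u)=(1+1/\mu)\sqrt{H}$. Defining $\mu$ pointwise by that identity is vacuous; the substantive claim is that this ratio is \emph{constant along the solution}, i.e.\ it is a second conserved quantity, and that is exactly what the paper obtains by a second integration: from $2u=(1-x)H'/H$ together with your separable ODE one gets $\frac{d\ln H}{du}=\frac{-2}{u(1+2u)}$, which integrates (with $H(\lambda)=1$, $u(\lambda)=\mu/2$) to the paper's relation $\calH(z)=(1+1/z)^{2}/(1+1/\mu)^{2}$ --- precisely the identity you assert. Your fallback, verifying a posteriori that $H_{\mu}$ solves \eqref{eq:ODE}, only gives sufficiency: it shows every member of the family is a solution, not that every extremum CDF with $H(\lambda)=1$ lies in the family, which is what \Cref{lem:implicit_equation} claims. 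To close this, either carry out that second integration (one more partial-fractions computation already available in your setup), or invoke uniqueness for the second-order ODE plus the observation that the family $\{H_{\mu}\}_{\mu>0}$ realizes every admissible slope $H'(\lambda)=\mu/(1-\lambda)>0$ at the right endpoint. With that repair the argument is complete and is essentially the paper's proof.
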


From subsequent proof details, it will be clear that whenever the parameter $\mu > 0$ satisfies Condition~\eqref{eq:condition}, Implicit Equation~\eqref{eq:implicit_equation} does specify a well-defined CDF $H_{\mu}$.

We first establish \Cref{lem:ODE}. The proof exploits the {\em Euler-Lagrange equation} (stated below as \Cref{thm:Euler-Lagrange}), which is a standard tool from Calculus of Variations.

\begin{theorem}[The Euler-Lagrange equation; Calculus of Variations {\cite{GS20}}]
\label{thm:Euler-Lagrange}
Assume the following four premises for the functional $\calJ[f] \eqdef \int_{x_{1}}^{x_{2}} \calL\big(x, f(x), f'(x)\big) \cdot \d x$.
\begin{itemize}
    \item $x_{1}$ and $x_{2}$ are constants.
    
    \item $f(x)$ is twice continuously differentiable.
    
    \item $f'(x) = \frac{\d}{\d x} f(x)$.
    
    \item $\calL\big(x, f(x), f'(x)\big)$ is twice continuously differentiable with respect to variables $x$, $f$ and $f'$.
\end{itemize}
Then any extremum $\calJ[f]$ satisfies (a necessary condition) the Euler-Lagrange equation
\begin{align*}
    \frac{\partial \calL}{\partial f} ~=~ \frac{\d}{\d x} \bigg(\frac{\partial \calL}{\partial f'}\bigg)
    \quad \text{for any} \quad x \in (x_{1},\, x_{2}).
\end{align*}
\end{theorem}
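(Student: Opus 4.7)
The plan is to prove this classical Euler-Lagrange theorem by a standard variational argument: perturb the candidate extremum $f$ in an admissible direction, differentiate the resulting one-parameter family of values of $\calJ$, and conclude that the stationarity condition at the parameter $0$ forces the Euler-Lagrange equation pointwise via the fundamental lemma of calculus of variations. Concretely, first I would fix an arbitrary test function $\eta(x)$ that is twice continuously differentiable on $[x_1,\,x_2]$ and vanishes at both endpoints, i.e.\ $\eta(x_1)=\eta(x_2)=0$, and then consider the one-parameter family $f_\epsilon(x) \eqdef f(x)+\epsilon\,\eta(x)$ for $\epsilon\in\mathbb{R}$. Since $f$ is an extremum of $\calJ$ and $f_\epsilon$ is an admissible competitor (twice continuously differentiable) that agrees with $f$ at the endpoints, the real function $\Phi(\epsilon) \eqdef \calJ[f_\epsilon]$ attains a local extremum at $\epsilon=0$, hence $\Phi'(0)=0$.

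Second, I would compute $\Phi'(0)$ by differentiating under the integral sign, which is justified because $\calL$ is twice continuously differentiable in all three variables and the domain $[x_1,\,x_2]$ is compact. This yields
\begin{align*}
    \Phi'(0)
    ~=~ \int_{x_1}^{x_2}\bigg(\frac{\partial \calL}{\partial f}\big(x,f(x),f'(x)\big)\cdot \eta(x) + \frac{\partial \calL}{\partial f'}\big(x,f(x),f'(x)\big)\cdot \eta'(x)\bigg)\cdot\d x.
\end{align*}
Integration by parts on the second summand (legitimate because $\frac{\partial \calL}{\partial f'}$ is continuously differentiable along $(x,f(x),f'(x))$, using the $C^2$-regularity of both $f$ and $\calL$) converts $\eta'(x)$ into $-\eta(x)$, with the boundary term $\big[\frac{\partial \calL}{\partial f'}\cdot\eta\big]_{x_1}^{x_2}$ vanishing by the boundary conditions on $\eta$. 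Combining everything gives
\begin{align*}
    0 ~=~ \Phi'(0)
    ~=~ \int_{x_1}^{x_2}\bigg(\frac{\partial \calL}{\partial f} - \frac{\d}{\d x}\Big(\frac{\partial \calL}{\partial f'}\Big)\bigg)\cdot \eta(x)\cdot \d x.
\end{align*}

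Third, since $\eta$ is an arbitrary $C^2$ test function vanishing at the endpoints, I would invoke the \emph{fundamental lemma of calculus of variations}: if a continuous function $\psi:[x_1,x_2]\to\mathbb{R}$ satisfies $\int_{x_1}^{x_2}\psi(x)\,\eta(x)\,\d x = 0$ for every such $\eta$, then $\psi\equiv 0$ on $(x_1,x_2)$. The continuity of the integrand $\psi(x) \eqdef \frac{\partial \calL}{\partial f} - \frac{\d}{\d x}\big(\frac{\partial \calL}{\partial f'}\big)$ follows from the twice-continuous differentiability hypotheses: $\frac{\partial \calL}{\partial f}$ is continuous in $x$ via the chain rule, and $\frac{\d}{\d x}\big(\frac{\partial \calL}{\partial f'}\big)$ expands (by the chain rule) to $\frac{\partial^2\calL}{\partial x\,\partial f'} + \frac{\partial^2\calL}{\partial f\,\partial f'}\cdot f'(x) + \frac{\partial^2\calL}{\partial f'^2}\cdot f''(x)$, each factor of which is continuous. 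Applying the fundamental lemma then yields the Euler-Lagrange equation on $(x_1,x_2)$, as claimed.

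The main technical obstacle will be twofold: (i)~justifying that the boundary term in integration by parts vanishes and that we may differentiate under the integral sign, which requires the $C^2$-regularity assumptions given in the premises; and (ii)~proving the fundamental lemma itself, which I would sketch by a standard contradiction argument (if $\psi(x_0)\neq 0$ at some interior $x_0$, use continuity to find a small neighborhood on which $\psi$ has a definite sign, then construct a nonnegative bump function $\eta$ supported in that neighborhood, vanishing at endpoints, to force $\int \psi \eta \neq 0$). Both steps are routine once the regularity hypotheses are in hand, which is exactly why the theorem is stated with $f\in C^2$ and $\calL\in C^2$.
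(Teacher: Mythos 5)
Your proof is correct: it is the standard first-variation argument (perturb by $\epsilon\,\eta$ with $\eta$ vanishing at the endpoints, differentiate under the integral, integrate by parts, and invoke the fundamental lemma of the calculus of variations), and the $C^{2}$ hypotheses on $f$ and $\calL$ are used exactly where needed to justify the differentiation, the integration by parts, and the continuity of the integrand in the fundamental lemma. Note that the paper does not prove this statement at all — it is quoted as a standard tool with a citation to \cite{GS20} — so there is no in-paper argument to compare against; your write-up is essentially the textbook proof that the cited reference supplies.
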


\begin{remark}[The Euler-Lagrange equation]
We emphasize that (i)~the {\LHS} of this equation takes the partial derivative of $\calL(x, f, f')$ with respect to $f$; and (ii)~the {\RHS} first regards the partial derivative $\frac{\partial \calL}{\partial f'}$ as a function of {\em one} variable $x$, and then takes the derivative (with respect to $x$).
\end{remark}

Now we use the Euler-Lagrange equation to obtain the desired necessary condition. Regarding Optimization~\eqref{eq:functional}, we will consider the following functional.
\begin{align*}
    \calJ[H] ~\eqdef~ \int_{0}^{\lambda} \calL\big(x, H(x), H'(x)\big) \cdot \d x,
    \quad \text{where} \quad
    \calL\big(x, H(x), H'(x)\big) ~\eqdef~ \frac{H(x) \cdot H'(x)}{H(x) + (1 - x) \cdot H'(x)}.
\end{align*}
Notice that all the four premises of the Euler-Lagrange equation are satisfied:
\begin{itemize}
    \item $x_{1} = 0$ and $x_{2} = \lambda$ are constants.
    
    \item We only consider those twice continuously differentiable CDF's $H \in \mathbb{C}_{\lambda}^{2}$.
    
    \item Obviously $H'(x) = \frac{\d}{\d x} H(x)$.
    
    \item It is easy to verify that, with respect to variables $x \in (0,\, \lambda) \subseteq (0,\, 1)$, $H \in [0,\, 1]$ and $H' \geq 0$, the integrand $\calL\big(x, H(x), H'(x)\big)$ is twice continuously differentiable.
\end{itemize}

\begin{proof}[Proof of \Cref{lem:ODE}]
For ease of notation, we would use the shorthands $H = H(x)$, $h = H'(x)$ and $h' = H''(x)$, thus writing the integrand as $\calL(x, H, h) = \frac{H \cdot h}{H + (1 - x) \cdot h}$. Then regarding our functional, the {\LHS} of the Euler-Lagrange equation is given by
\[
    \frac{\partial \calL}{\partial H}
    ~=~ \frac{h \cdot (H + (1 - x) \cdot h) - H \cdot h}{(H + (1 - x) \cdot h)^{2}}
    ~=~ \frac{(1 - x) \cdot (H'(x))^{2}}{(H(x) + (1 - x) \cdot H'(x))^{2}}
    ~=~ \LHS \text{ of ODE~\eqref{eq:ODE}}
\]
where the second step combines the like terms and then substitutes back $H = H(x)$ and $h = H'(x)$.

Moreover, the partial derivative of $\calL(x, H, h)$ with respect to variable $h$ is given by
\[
    \frac{\partial \calL}{\partial h}
    ~=~ \frac{H \cdot (H + (1 - x) \cdot h) - (H \cdot h) \cdot (1 - x)}{(H + (1 - x) \cdot h)^{2}}
    ~=~ \frac{H^{2}}{(H + (1 - x) \cdot h)^{2}},
\]
where the last step combines the like terms. When being regarded as a function of {\em one} variable $x$,
% namely $\frac{\partial \calL}{\partial h} = \frac{(H(x))^{2}}{(H(x) + (1 - x) \cdot H'(x))^{2}}$,
the derivative of function $\frac{\partial \calL}{\partial h}$ (i.e., the {\RHS} of the Euler-Lagrange equation) is given by
\begin{align*}
    \frac{\d}{\d x} \bigg(\frac{\partial \calL}{\partial h}\bigg)
    & ~=~ \frac{\d}{\d x} \big(H^{2}\big) ~\cdot~ \frac{1}{(H + (1 - x) \cdot h)^{2}}
    ~+~ H^{2} ~\cdot~ \frac{\d}{\d x} \bigg(\frac{1}{(H + (1 - x) \cdot h)^{2}}\bigg) \\
    & ~=~ \frac{2 \cdot H \cdot h}{(H + (1 - x) \cdot h)^{2}} \hspace{2.06cm} +~ H^{2} \cdot \frac{-2 \cdot (1 - x) \cdot h'}{(H + (1 - x) \cdot h)^{3}} \phantom{\bigg.} \\
    & ~=~ \frac{2 \cdot H(x) \cdot H'(x)}{(H(x) + (1 - x) \cdot H'(x))^{2}}
    \hspace{0.78cm} -~ (H(x))^{2} \cdot \frac{2 \cdot (1 - x) \cdot H''(x)}{(H(x) + (1 - x) \cdot H'(x))^{3}} \phantom{\bigg.} \\
    & ~=~ \RHS \text{ of ODE~\eqref{eq:ODE}}, \phantom{\bigg.}
\end{align*}
where the third step substitutes back $H = H(x)$, $h = H'(x)$ and $h' = H''(x)$.

Therefore, ODE~\eqref{eq:ODE} is just a reformulation of the Euler-Lagrange equation and we immediately conclude the lemma from \Cref{thm:Euler-Lagrange}.
\end{proof}

In the rest of this subsection, we would resolve ODE~\eqref{eq:ODE}. For ease of notation, we often write $H = H(x)$, $H' = H'(x)$ and $H'' = H''(x)$. Following \Cref{lem:ODE}, we can deduce that
\begin{align}
    \text{ODE~\eqref{eq:ODE}} \quad \iff \quad \parbox{3.35cm}{\hfill $\dfrac{(1 - x) \cdot (H')^{2}}{(H + (1 - x) \cdot H')^{2}}$}
    & ~=~ \frac{2 \cdot H \cdot H'}{(H + (1 - x) \cdot H')^{2}} ~-~ \frac{2 \cdot (1 - x) \cdot H^{2} \cdot H''}{(H + (1 - x) \cdot H')^{3}} \phantom{\bigg.}
    \nonumber \\
    \iff \quad \parbox{3.35cm}{\hfill $\dfrac{2 \cdot (1 - x) \cdot H^{2} \cdot H''}{(H + (1 - x) \cdot H')^{3}}$}
    & ~=~ \frac{2H \cdot H'}{(H + (1 - x) \cdot H')^{2}} ~-~ \frac{(1 - x) \cdot (H')^{2}}{(H + (1 - x) \cdot H')^{2}} \phantom{\bigg.}
    \nonumber \\
    \iff \quad \parbox{3.35cm}{\hfill $2 \cdot (1 - x) \cdot \dfrac{H''}{H'}$}
    & ~=~ \bigg(2H \cdot H' - (1 - x) \cdot (H')^{2}\bigg) \cdot \frac{H + (1 - x) \cdot H'}{H^{2} \cdot H'}
    \nonumber \\
    \iff \quad \parbox{3.35cm}{\hfill $2 \cdot (1 - x) \cdot \dfrac{H''}{H'}$}
    & ~=~ \bigg(2 - (1 - x) \cdot \frac{H'}{H}\bigg) \cdot \bigg(1 + (1 - x) \cdot \frac{H'}{H}\bigg).
    \label{eq:solve_ODE:1}
\end{align}
Here the first step restates ODE~\eqref{eq:ODE}. The second step moves the first term to the {\RHS} and moves the third term to the {\LHS}. The third step first multiplies\footnote{Because $x \in (0,\, \lambda)$ and $\lambda \in (0,\, 1)$, this term $\geq \frac{1}{H^{2} \cdot H'} \cdot \binom{3}{2} \cdot H^{2} \cdot (1 - x) \cdot H' = 3 \cdot (1 - x)$ is {\em strictly positive}.} the both hand sides by the same term $\frac{1}{H^{2} \cdot H'} \cdot (H + (1 - x) \cdot H')^{3}$ and then simplifies the {\LHS}. And the last step rearranges the {\RHS}.

We define the function $z(x) \eqdef (1 - x) \cdot \frac{H'(x)}{H(x)}$ and, for brevity, may write $z = z(x)$ and $z' = z'(x)$. We have $z(x) \geq 0$ for any $x \in (0,\, \lambda)$, since the supremum bid $\lambda \in (0,\, 1)$ and the $H$ is a {\em nonnegative increasing} CDF. Therefore, there exists a {\em nonnegative} subset $\Omega \subseteq [0,\, +\infty]$ such that, over the whole support $x \in [0,\, \lambda]$, the CDF $H$ can be formulated in term of a {\em parametric equation}
\[
    \Big\{\, (x,\, H) = \big(\calX(z),\, \calH(z)\big) \,\Bigmid\, z \in \Omega \,\Big\}.
\]
Below we would capture the analytic formulas $x = \calX(z)$ and $H = \calH(z)$ and then (in the proof of \Cref{lem:implicit_equation}) study the condition for which the above parametric equation does yield a CDF $H$ that is well defined on the support $x \in [0,\, \lambda]$.

\begin{proof}[Proof of the $x = \calX(z)$ formula]
We can deduce that
\begin{align*}
    z \cdot H
    & ~=~ (1 - x) \cdot H'
    && \Longleftarrow && \mbox{definition of $z(x)$} \phantom{\dfrac{H''}{H'}} \\
    z' \cdot H ~+~ z \cdot H'
    & ~=~ -H' ~+~ (1 - x) \cdot H''
    && \Longleftarrow && \mbox{take derivative} \phantom{\dfrac{H''}{H'}} \\
    (1 - x) \cdot \frac{H''}{H'}
    & ~=~ 1 ~+~ z ~+~ z' \cdot \frac{H}{H'}
    && \Longleftarrow && \mbox{multiply $1 \big/ H'$ and rearrange} \phantom{\dfrac{H''}{H'}} \\
    (1 - x) \cdot \frac{H''}{H'}
    & ~=~ 1 ~+~ z ~+~ (1 - x) \cdot \frac{z'}{z}
    && \Longleftarrow && \mbox{substitute $H \big/ H' = \frac{1 - x}{z}$ for the {\RHS}} \phantom{\dfrac{H''}{H'}}
\end{align*}
Apply this identity (resp.\ the definition $z = (1 - x) \cdot \frac{H'}{H}$) to the {\LHS} (resp.\ {\RHS}) of ODE~\eqref{eq:solve_ODE:1}:
\begin{align}
    \text{ODE~\eqref{eq:solve_ODE:1}} \quad \iff \quad \parbox{4.35cm}{\hfill $2 \cdot \bigg(1 + z + (1 - x) \cdot \dfrac{z'}{z}\bigg)$}
    & ~=~ (2 - z) \cdot (1 + z)
    \nonumber \\
    \iff \quad \parbox{4.35cm}{\hfill $2 \cdot (1 - x) \cdot \dfrac{z'}{z}$}
    & ~=~ -z \cdot (1 + z) \phantom{\bigg.}
    \nonumber \\
    \iff \quad \parbox{4.35cm}{\hfill $\dfrac{2}{z^{2} \cdot (1 + z)} \cdot z'$}
    & ~=~ \frac{-1}{1 - x}.
    \label{eq:solve_ODE:2}
\end{align}
Here the second step cancels the common term $(2 + 2z)$ on the both hand sides. And the last step multiplies the both hand sides by the same positive term $\frac{1}{(1 - x) \cdot z \cdot (1 + z)}$.

We denote $\mu = z(\lambda)$ for some parameter $\mu > 0$, namely the boundary condition for \eqref{eq:solve_ODE:2} as a {\em partial differential equation} (PDE). For this PDE, the integration of the {\LHS} on any interval between $z = z(x)$ and $\mu = z(\lambda)$, is equal to that of the {\RHS} on the corresponding interval between $x$ and $\lambda$. Namely, we have
\begin{align}
    \text{PDE~\eqref{eq:solve_ODE:2}} \quad
    \Longrightarrow \quad \parbox{4.75cm}{\hfill $\displaystyle{\int_{\mu}^{z} \frac{2}{y^{2} \cdot (1 + y)} \cdot \d y}$}
    & ~=~ \int_{\lambda}^{x} \frac{-1}{1 - y} \cdot \d y
    \nonumber \\
    \Longrightarrow \quad \parbox{4.75cm}{\hfill $\Big(2\ln(1 + 1 / y) - 2 / y\Big)\Bigmid_{y \,=\, \mu}^{z}$}
    & ~=~ \ln(1 - y)\Bigmid_{y \,=\, \lambda}^{x} \phantom{\bigg.}
    \nonumber \\
    \Longrightarrow \quad \parbox{4.75cm}{\hfill $\displaystyle{2\ln\bigg(\frac{1 + 1 / z}{1 + 1 / \mu}\bigg) - 2 / z + 2 / \mu}$}
    & ~=~ \ln(1 - x) - \ln(1 - \lambda)
    \nonumber \\
    \Longrightarrow \quad \parbox{4.75cm}{\phantom{}}
    & \hspace{-4.75cm} x ~=~ \calX(z) ~=~ 1 - \frac{1 - \lambda}{(1 + 1 / \mu)^{2} \cdot e^{-2 / \mu}} \cdot (1 + 1 / z)^{2} \cdot e^{-2 / z}, \phantom{\bigg.}
    \label{eq:solve_ODE:3}
\end{align}
where the last step rearranges the equation. Notably, at $z = \mu$ we achieve the supremum bid $\calX(\mu) = \lambda$.

Formula~\eqref{eq:solve_ODE:3} is precisely the first part $x = \calX(z)$ of our parametric equation.
\end{proof}

\begin{proof}[Proof of the $H = \calH(z)$ formula]
To further obtain this formula, we observe that
\begin{align}
    \label{eq:solve_ODE:4}
    \frac{1}{H} \cdot \frac{\d H}{\d x} ~=~ \frac{H'}{H}
    ~=~ \frac{z}{1 - x}
    ~=~ \frac{(1 + 1 / \mu)^{2} \cdot e^{-2 / \mu}}{1 - \lambda} \cdot \frac{z}{(1 + 1 / z)^{2} \cdot e^{-2 / z}},
\end{align}
where the second step applies the definition $z = (1 - x) \cdot \frac{H'}{H}$ and the last step applies Formula~\eqref{eq:solve_ODE:3}. Moreover, the derivative of function $\calX(z)$ is given by
\begin{align}
    \calX'(z)
    ~=~ \frac{\d x}{\d z}
    & ~=~ -\frac{1 - \lambda}{(1 + 1 / \mu)^{2} \cdot e^{-2 / \mu}}
    \cdot \frac{\d}{\d z}\bigg((1 + 1 / z)^{2} \cdot e^{-2 / z}\bigg)
    \nonumber \\
    & ~=~ -\frac{1 - \lambda}{(1 + 1 / \mu)^{2} \cdot e^{-2 / \mu}}
    \cdot \bigg(2 \cdot (1 + 1 / z) \cdot \frac{-1}{z^{2}} \cdot e^{-2 / z} + (1 + 1 / z)^{2} \cdot e^{-2 / z} \cdot \frac{2}{z^{2}}\bigg)
    \nonumber \\
    & ~=~ -\frac{1 - \lambda}{(1 + 1 / \mu)^{2} \cdot e^{-2 / \mu}}
    \cdot \frac{2 \cdot (1 + 1 / z) \cdot e^{-2 / z}}{z^{3}},
    \label{eq:solve_ODE:5}
\end{align}
where the first step applies Formula~\eqref{eq:solve_ODE:3} and the last step combines the like terms. Notice that this derivative $\calX'(z)$ is {\em strictly negative}, so $\calX(z)$ is a {\em strictly decreasing} function.

Combining \Cref{eq:solve_ODE:4,eq:solve_ODE:5} together gives
\begin{align}
    \big(\ln \calH(z)\big)'
    & ~=~ \frac{\d}{\d z}\big(\ln H\big)
    ~=~ \frac{1}{H} \cdot \frac{\d H}{\d z}
    ~=~ \bigg(\frac{1}{H} \cdot \frac{\d H}{\d x}\bigg) \cdot \bigg(\frac{\d x}{\d z}\bigg)
    \nonumber \\
    & ~=~ \bigg(\frac{z}{(1 + 1 / z)^{2} \cdot e^{-2 / z}}\bigg) \cdot \bigg(-\frac{2 \cdot (1 + 1 / z) \cdot e^{-2 / z}}{z^{3}}\bigg)
    \nonumber \\
    & ~=~ \frac{-2}{z \cdot (z + 1)}.
    \label{eq:solve_ODE:6}
\end{align}
This ODE~\eqref{eq:solve_ODE:6} must admit the boundary condition $\calH(\mu) = 1$, given that (as mentioned) at $z = \mu$ we achieve the supremum bid $\calX(\mu) = \lambda$. For this reason, taking the integration of the both hand sides on any interval $[\mu,\, z]$ gives
\begin{align}
    \text{ODE~\eqref{eq:solve_ODE:6}} \quad
    \Longrightarrow \quad \parbox{3.05cm}{\hfill $\displaystyle{\int_{\mu}^{z} \big(\ln \calH(y)\big)' \cdot \d y}$}
    & ~=~ \int_{\mu}^{z} \frac{-2}{y \cdot (y + 1)} \cdot \d y
    \nonumber \\
    \Longrightarrow \quad \parbox{3.05cm}{\hfill $\ln \calH(z) - \ln \calH(\mu)$}
    & ~=~ \Big(2\ln(1 + 1 / y)\Big)\Bigmid_{y \,=\, \mu}^{z} \phantom{\bigg.}
    \nonumber \\
    \Longrightarrow \quad \parbox{3.05cm}{\hfill $\ln \calH(z)$}
    & ~=~ 2\ln\bigg(\frac{1 + 1 / z}{1 + 1 / \mu}\bigg)
    \nonumber \\
    \Longrightarrow \quad \parbox{3.05cm}{\hfill $H ~=~ \calH(z)$}
    & ~=~ \frac{(1 + 1 / z)^{2}}{(1 + 1 / \mu)^{2}}.
    \label{eq:solve_ODE:7}
\end{align}
where the third step applies the boundary condition $\calH(\mu) = 1$ to the {\LHS}. Notice that $\calH(z)$ is a {\em strictly decreasing} function.

Formula~\eqref{eq:solve_ODE:7} is precisely the second part $H = \calH(z)$ of our parametric equation.
\end{proof}

Based on the above discussions, we are able to prove \Cref{lem:implicit_equation}.

\begin{proof}[Proof of \Cref{lem:implicit_equation}]
We first check Condition~\eqref{eq:condition} and then capture Implicit Equation~\eqref{eq:implicit_equation}.

\vspace{.1in}
\noindent
{\bf Condition~\eqref{eq:condition}.}
As mentioned, both of $\calX(z)$ and $\calH(z)$ are {\em strictly decreasing} functions (\Cref{eq:solve_ODE:5,eq:solve_ODE:7}). Moreover, at $z = \mu$ we achieve the supremum bid $\calX(\mu) = \lambda$ and the boundary condition $\calH(\mu) = 1$. By considering all possible parameters $z \in [\mu,\, +\infty]$, our parametric equation $\big(\calX(z),\, \calH(z)\big)$ yields a {\em strictly increasing} function $\bar{H}$ with the domain $x \in [\calX(+\infty),\, \lambda]$. In this domain, function $\bar{H}$ is bounded between $\calH(+\infty) = (1 + 1 / \mu)^{-2} \geq 0$ and $\calH(\mu) = 1$.

This function $\bar{H}$ must be {\em consistent} with the considered CDF $H$. In this way, we require that: (i)~the support $x \in [0,\, \lambda]$ of CDF $H$ is a {\em subset} of the domain $x \in [\calX(+\infty),\, \lambda]$ of function $\bar{H}$; and, if so, (ii)~function $\bar{H}$ can be a {\em well-defined} CDF on the support $x \in [0,\, \lambda]$.

Clearly, the first requirement is equivalent to
\[
    0 ~\geq~ \calX(+\infty) ~=~ 1 - (1 - \lambda) \cdot (1 + 1 / \mu)^{-2} \cdot e^{2 / \mu},
\]
which after being rearranged gives Condition~\eqref{eq:condition}. Indeed, the first requirement ensures the second one -- function $\bar{H}$ is {\em strictly increasing} and is bounded between $\calH(+\infty) = (1 + 1 / \mu)^{-2} \geq 0$ and $\calH(\mu) = 1$ in the whole domain $x \in [\calX(+\infty),\, \lambda]$, thus (once the first requirement is satisfied) being a well-defined CDF on the support $x \in [0,\, \lambda] \subseteq [\calX(+\infty),\, \lambda]$.

\vspace{.1in}
\noindent
{\bf Implicit Equation~\eqref{eq:implicit_equation}.}
We reformulate our {\em parametric equation} $(x,\, H) = \big(\calX(z),\, \calH(z)\big)$ below, which precisely gives the claimed implicit equation.
\begin{align*}
    \text{Formula~\eqref{eq:solve_ODE:3}} ~~ \iff ~~
    x & ~=~ 1 - \frac{1 - \lambda}{(1 + 1 / \mu)^{2} \cdot e^{-2 / \mu}} \cdot (1 + 1 / z)^{2} \cdot e^{-2 / z} \phantom{\bigg.} \\
    & ~=~ 1 - \frac{1 - \lambda}{(1 + 1 / \mu)^{2} \cdot e^{-2 / \mu}} \cdot \Big((1 + 1 / \mu)^{2} \cdot H\Big) \cdot \exp\Big(2 - (1 + 1 / \mu) \cdot 2\sqrt{H}\Big) \phantom{\bigg.} \\
    % & ~=~ 1 - \parbox{3.02cm}{\centering $\Big((1 - \lambda) \cdot e^{2 / \mu}\Big)$} \cdot \parbox{2.83cm}{\centering $H$} \cdot \exp\Big(2 - (1 + 1 / \mu) \cdot 2\sqrt{H}\Big) \phantom{\bigg.} \\
    & ~=~ 1 - (1 - \lambda) \cdot H \cdot \exp\Big((1 + 1 / \mu) \cdot \big(2 - 2\sqrt{H}\big)\Big). \phantom{\bigg.}
\end{align*}
Here the first step restates Formula~\eqref{eq:solve_ODE:3}. The second step (Formula~\eqref{eq:solve_ODE:7}) substitutes $(1 + 1 / z)^{2} = (1 + 1 / \mu)^{2} \cdot H$ and $-1 / z = 1 - (1 + 1 / \mu) \cdot \sqrt{H}$.\footnote{Since the variable $z \in [\mu,\, +\infty]$ is positive, we safely ignore the other (impossible) case $-1 / z = 1 + (1 + 1 / \mu) \cdot \sqrt{H}$.}
% The third step cancels the common term $(1 + 1 / \mu)^{2}$.
And the last step rearranges the equation.

\vspace{.1in}
Combining Condition~\eqref{eq:condition} and Implicit Equation~\eqref{eq:implicit_equation} together finishes the proof.
\end{proof}

As an implication of \Cref{lem:implicit_equation}, by setting $x = 0$ in Implicit Equation~\eqref{eq:implicit_equation} we can characterize the pointmass $h_{\mu} \eqdef H_{\mu}(0)$ of a candidate worst-case CDF. This is formalized into \Cref{cor:pointmass}. Also, because \Cref{lem:implicit_equation} holds for any given supremum bid $\lambda \in (0,\, 1)$, combining it with \Cref{lem:reformulation} gives \Cref{cor:reformulation}. These two corollaries are more convenient for our later use.

\begin{corollary}[The ODE solutions]
\label{cor:pointmass}
Given a supremum bid $\lambda \in (0,\, 1)$ and a parameter $\mu > 0$ that meets Condition~\eqref{eq:condition}, the CDF $H_{\mu}$ defined by Implicit Equation~\eqref{eq:implicit_equation} has a pointmass $h_{\mu} \eqdef H_{\mu}(0)$ at $x = 0$, which is the unique solution to the following equation:
\begin{align*}
    \left\{\,~
    \begin{aligned}
        & 1 ~=~ (1 - \lambda) \cdot h_{\mu} \cdot \exp\Big((1 + 1 / \mu) \cdot \big(2 - 2\sqrt{h_{\mu}}\big)\Big) \\
        & \text{\em such that $(1 + 1 / \mu)^{-2} \leq h_{\mu} < 1$} \phantom{\bigg.}
    \end{aligned}
    ~\,\right\}.
\end{align*}
\end{corollary}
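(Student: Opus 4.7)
The plan is to specialize Implicit Equation~\eqref{eq:implicit_equation} to $x = 0$ and then to establish uniqueness by a one-variable monotonicity argument on the interval $[(1+1/\mu)^{-2},\, 1)$. No new ideas are needed beyond those already used for \Cref{lem:implicit_equation}; this corollary is essentially a boundary-value extraction, and I expect no real obstacle.

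First I would substitute $x = 0$ into Implicit Equation~\eqref{eq:implicit_equation}. The proof of \Cref{lem:implicit_equation} establishes, under Condition~\eqref{eq:condition}, that $H_{\mu}$ is a strictly increasing, well-defined CDF on $[0,\, \lambda]$, so the pair $(0,\, h_{\mu})$ with $h_{\mu} \eqdef H_{\mu}(0)$ lies on the implicit curve. Plugging in yields
\[
0 ~=~ 1 - (1-\lambda)\, h_{\mu} \exp\!\Big((1+1/\mu)\,\big(2-2\sqrt{h_{\mu}}\big)\Big),
\]
which rearranges to the displayed equation. The range $h_{\mu} \geq (1+1/\mu)^{-2}$ is inherited directly from Implicit Equation~\eqref{eq:implicit_equation}. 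The strict inequality $h_{\mu} < 1$ follows because $H_{\mu}$ is strictly increasing with $H_{\mu}(\lambda) = 1$ and $\lambda > 0$, so $h_{\mu} = H_{\mu}(0) < H_{\mu}(\lambda) = 1$.

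For uniqueness, I would introduce $f(h) \eqdef (1-\lambda)\, h \exp\!\big((1+1/\mu)(2-2\sqrt{h})\big)$ on $\big[(1+1/\mu)^{-2},\, 1\big]$ and compute its logarithmic derivative
\[
\frac{\d}{\d h}\, \ln f(h) ~=~ \frac{1}{h} - \frac{1+1/\mu}{\sqrt{h}} ~=~ \frac{1 - (1+1/\mu)\sqrt{h}}{h},
\]
which is strictly negative on $\big((1+1/\mu)^{-2},\, 1\big]$ and vanishes at the left endpoint. Hence $f$ is strictly decreasing on the closed interval, so the equation $f(h) = 1$ has \emph{at most} one solution there.

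Finally, I would check the two boundary values. At the left endpoint, $\sqrt{h} = 1/(1+1/\mu)$ gives $(1+1/\mu)(2-2/(1+1/\mu)) = 2/\mu$, so $f\big((1+1/\mu)^{-2}\big) = (1-\lambda)\,(1+1/\mu)^{-2}\, e^{2/\mu}$, which is $\geq 1$ precisely by Condition~\eqref{eq:condition} after rearrangement. At the right endpoint, $f(1) = 1 - \lambda < 1$. Strict monotonicity combined with the intermediate value theorem then pins down a unique $h_{\mu} \in \big[(1+1/\mu)^{-2},\, 1\big)$ with $f(h_{\mu}) = 1$. The only mildly subtle point --- and it is not really an obstacle --- is recognizing the left-endpoint evaluation of $f$ as an exact algebraic restatement of Condition~\eqref{eq:condition}, with equality in one corresponding to equality in the other.
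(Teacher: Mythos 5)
Your proposal is correct and takes essentially the paper's route: the paper obtains \Cref{cor:pointmass} precisely by setting $x = 0$ in Implicit Equation~\eqref{eq:implicit_equation}, with uniqueness implicit in the strict monotonicity of the parametric curve $\big(\calX(z),\, \calH(z)\big)$ from the proof of \Cref{lem:implicit_equation} (the same monotonicity computation you perform on $\ln f$ reappears, inverted, as the function $G$ in the proof of \Cref{lem:worst_case}). Your explicit existence-and-uniqueness check --- strict decrease of $f$ on $\big[(1+1/\mu)^{-2},\,1\big]$, $f(1) = 1-\lambda < 1$, and the left-endpoint value being exactly Condition~\eqref{eq:condition} --- is a sound, self-contained verification of what the paper leaves implicit.
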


\begin{corollary}[{\PriceofAnarchy}]
\label{cor:reformulation}
Regarding {\FirstPriceAuctions}, the {\PriceofAnarchy} satisfies that
\begin{align*}
    \PoA ~\geq~
    1 - \sup \left\{\, (1 - \lambda) \cdot \int_{0}^{\lambda} \frac{H_{\mu}(x) \cdot H'_{\mu}(x) \cdot \d x}{H_{\mu}(x) + (1 - x) \cdot H'_{\mu}(x)} \,\middlemid\,
    \begin{aligned}
        & 0 < \mu < +\infty \\
        & 0 < \lambda \leq 1 - (1 + 1 / \mu)^{2} \cdot e^{-2 / \mu}
    \end{aligned}\,\right\}.
\end{align*}
\end{corollary}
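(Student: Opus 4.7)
\medskip
\noindent\textbf{Proof proposal.} The plan is to combine \Cref{lem:reformulation} with \Cref{lem:implicit_equation} in a straightforward way, treating the inner maximization over $H$ (for each fixed supremum bid $\lambda$) as a one-dimensional calculus-of-variations problem whose extrema have already been completely characterized. Starting from Optimization~\eqref{eq:reformulation:3} I may write
\begin{align*}
\PoA ~\geq~ 1 - \sup_{\lambda \in (0,1)} \Bigl(\, \sup_{H \in \mathbb{C}_\lambda^2} \mathcal{J}_\lambda[H] \,\Bigr),
\qquad \mathcal{J}_\lambda[H] \,\eqdef\, (1-\lambda)\cdot \int_0^\lambda \frac{H(x)\,H'(x)\,\d x}{H(x)+(1-x)\,H'(x)},
\end{align*}
so it suffices to show that, for every fixed $\lambda \in (0,1)$, the inner supremum over $H \in \mathbb{C}_\lambda^2$ coincides with the supremum of $\mathcal{J}_\lambda[H_\mu]$ over those $\mu > 0$ allowed by Condition~\eqref{eq:condition}.

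Next I will invoke \Cref{lem:ODE} and \Cref{lem:implicit_equation}: every extremum CDF of $\mathcal{J}_\lambda$ in $\mathbb{C}_\lambda^2$ necessarily satisfies the Euler--Lagrange ODE~\eqref{eq:ODE} and therefore belongs to the one-parameter family $\{H_\mu : \mu > 0,\ 1-(1+1/\mu)^2 e^{-2/\mu} \geq \lambda\}$. Hence if the inner supremum is attained, it is attained within this family, and the two supremums agree. If the inner supremum is only a limit, I plan to argue that one can still approach it within $\{H_\mu\}$: the functional $\mathcal{J}_\lambda$ is bounded and continuous in the natural topology on $\mathbb{C}_\lambda^2$, and any maximizing sequence in $\mathbb{C}_\lambda^2$ can, after a smoothing/regularization that preserves $\mathcal{J}_\lambda$ to arbitrary precision (analogous to the smoothing used inside the proof of Optimization~\eqref{eq:reformulation:2}), be replaced by a sequence of extremum candidates. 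This gives $\sup_{H \in \mathbb{C}_\lambda^2}\mathcal{J}_\lambda[H] = \sup_{\mu}\mathcal{J}_\lambda[H_\mu]$ with the stated constraint on $\mu$.

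Finally I will reassemble: taking the supremum over $\lambda \in (0,1)$ as well, and combining the two constraints $\lambda \in (0,1)$ and $\lambda \leq 1-(1+1/\mu)^2 e^{-2/\mu}$ (the latter is automatically stronger than $\lambda < 1$ for any $\mu > 0$, since $(1+1/\mu)^2 e^{-2/\mu} > 0$), yields exactly the feasible region $\{(\lambda,\mu) : 0 < \mu < +\infty,\ 0 < \lambda \leq 1-(1+1/\mu)^2 e^{-2/\mu}\}$ appearing in the statement, and the corollary follows.

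I expect the main obstacle to be the one point I glossed over: justifying that the supremum over $\mathbb{C}_\lambda^2$ is either attained, or else approachable by the one-parameter family $\{H_\mu\}$, rather than by sequences drifting to the boundary of the function space (e.g.\ CDFs with vanishing derivative, or $\mu \to 0^+, +\infty$ limits corresponding to degenerate bid distributions). Everything else is pure substitution of \Cref{lem:implicit_equation}'s parametrization into the functional from \Cref{lem:reformulation}, so no further calculation is needed beyond verifying that the two parametric constraints on $\mu$ and $\lambda$ in \Cref{cor:pointmass}/\Cref{lem:implicit_equation} combine into the feasible region stated.
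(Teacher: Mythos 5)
Your proposal takes essentially the same route as the paper: its proof of this corollary is exactly the substitution of \Cref{lem:implicit_equation} into Optimization~\eqref{eq:reformulation:3}, yielding a double supremum (inner over $\mu$ subject to Condition~\eqref{eq:condition} for the given $\lambda$, outer over $\lambda\in(0,1)$), followed by switching the order of the two suprema to obtain the stated feasible region. The attainment issue you flag at the end is not handled any more carefully in the paper either --- it simply treats the Euler--Lagrange characterization of extrema as reducing the search space --- so your write-up matches the paper's argument and level of rigor (your suggested smoothing-of-maximizing-sequences patch is neither carried out nor needed to reproduce the paper's proof).
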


\begin{proof}
Applying \Cref{lem:implicit_equation} to Optimization~\eqref{eq:reformulation:3} gives an interim optimization that takes the supremum operation {\em twice}: (i)~the inner operation, according to Condition~\eqref{eq:condition}, works on feasible space $\big\{\, \mu > 0 \,\bigmid\, 1 - (1 + 1 / \mu)^{2} \cdot e^{-2 / \mu} \geq \lambda \,\big\}$ given by the $\lambda \in (0,\, 1)$; and (ii)~the outer operation works on feasible space $\lambda \in (0,\, 1)$. Switching the order of both operations gives the corollary.
\end{proof}

\subsection{Optimizing the parameters \texorpdfstring{$(\lambda,\, \mu)$}{}}
\label{subsec:UB_lambda_mu}

We solve the optimization in \Cref{cor:reformulation} in two steps: (i)~\Cref{lem:UB_lambda_mu} transforms the considered {\em integral formula} into an equivalent {\em analytic formula}, Optimization~\eqref{eq:UB_lambda_mu}; and (ii)~\Cref{lem:worst_case} further derives the analytic solution, i.e., finding the (unique) worst case $(\lambda^{*},\, \mu^{*})$ of Optimization~\eqref{eq:UB_lambda_mu}.

\begin{lemma}[{\PriceofAnarchy}]
\label{lem:UB_lambda_mu}
Regarding {\FirstPriceAuctions}, the {\PriceofAnarchy} satisfies that
\begin{align}
\label{eq:UB_lambda_mu}\tag{V}
    \PoA ~\geq~
    1 - \sup \left\{\, (1 - \lambda) \cdot \bigg((1 - h_{\mu}) - \frac{2 - 2\sqrt{h_{\mu}}}{1 + 1 / \mu}\bigg) \,\middlemid\,
    \begin{aligned}
        & 0 < \mu < +\infty \\
        & 0 < \lambda \leq 1 - (1 + 1 / \mu)^{2} \cdot e^{-2 / \mu} \\
        & \text{\em $h_{\mu}$ defined by \Cref{cor:pointmass}}
    \end{aligned}\,\right\}.
\end{align}
\end{lemma}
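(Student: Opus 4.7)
My plan is to evaluate the integral appearing in \Cref{cor:reformulation} in closed form, using the parametric substitution $z = z(x) \eqdef (1-x)\cdot H_\mu'(x)/H_\mu(x)$ that was central to the derivation of \Cref{lem:implicit_equation}. The first observation is that dividing both numerator and denominator of the integrand by $H_\mu(x)$ rewrites it as $H_\mu'(x)/(1 + z(x))$, so the integral simplifies to $\int_{h_\mu}^{1} \d H/(1 + z(H))$, where $h_\mu = H_\mu(0)$ is the point mass characterized by \Cref{cor:pointmass}.

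Next I would change the variable of integration from $H$ to $z$ using the parametric representation $\calH(z) = (1+1/z)^{2}/(1+1/\mu)^{2}$ established as Formula~\eqref{eq:solve_ODE:7}. A direct computation of $\d H/\d z$ produces a factor of $(z+1)$ in the numerator, which cancels perfectly against the $(1+z)$ in the denominator; what is left is the elementary antiderivative $\int -2z^{-3}\,\d z = z^{-2}$, scaled by $1/(1+1/\mu)^{2}$. Evaluating between the endpoints $z = z_{0}$ (where $\calH(z_0) = h_\mu$, i.e.\ $x = 0$) and $z = \mu$ (where $\calH(\mu) = 1$, i.e.\ $x = \lambda$) yields the compact expression $\bigl(1/\mu^{2} - 1/z_{0}^{2}\bigr)/(1 + 1/\mu)^{2}$. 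A sign check is needed here since $z$ decreases as $x$ increases, but the orientation reversal is exactly cancelled by the negative derivative $\d H/\d z < 0$.

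Finally I would eliminate $z_{0}$ in favour of $h_\mu$. Inverting $\calH(z_0) = h_\mu$ gives $1/z_{0} = \sqrt{h_\mu}(1 + 1/\mu) - 1$, so $1/z_{0}^{2}$ expands to $h_\mu(1+1/\mu)^{2} - 2\sqrt{h_\mu}(1+1/\mu) + 1$. Combining this with the elementary identity $(1/\mu^{2} - 1)/(1+1/\mu)^{2} = (1/\mu - 1)/(1+1/\mu) = 1 - 2/(1+1/\mu)$ collapses everything to precisely $(1 - h_\mu) - (2 - 2\sqrt{h_\mu})/(1 + 1/\mu)$, which is the analytic formula in the statement. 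Substituting back into the bound of \Cref{cor:reformulation} (with the feasible region for $(\lambda, \mu)$ unchanged) yields Optimization~\eqref{eq:UB_lambda_mu}. The argument is essentially bookkeeping on a parametric representation that has already been built up in \Cref{subsec:UB_ODE}; the only mildly delicate point is tracking the orientation of the change of variables, and I do not foresee any real obstacle.
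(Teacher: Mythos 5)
Your computation is correct and amounts to essentially the same argument as the paper's: both evaluate the integral from \Cref{cor:reformulation} in closed form using the solution structure of \Cref{subsec:UB_ODE} and then read off Optimization~\eqref{eq:UB_lambda_mu}. The only cosmetic difference is that the paper changes variables from $x$ to $H_\mu$ and computes $\tfrac{1-x}{\d x/\d H_\mu}$ from Implicit Equation~\eqref{eq:implicit_equation}, arriving at the antiderivative $H_\mu - \tfrac{2\sqrt{H_\mu}}{1+1/\mu}$, whereas you push one step further to the $z$-parametrization $\calH(z)$ and integrate $-2z^{-3}$, then convert the endpoint $z_0$ back to $h_\mu$ — the endpoint bookkeeping (including the degenerate case $h_\mu = (1+1/\mu)^{-2}$, i.e.\ $z_0 = +\infty$) and the sign/orientation check you flag all go through.
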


\begin{lemma}[Worst case]
\label{lem:worst_case}
The solution to Optimization~\eqref{eq:UB_lambda_mu} is $1 - 1 / e^{2} \approx 0.8647$, which can be achieved by the worst case $(\lambda^{*},\, \mu^{*}) = (1 - 4 / e^{2},\, 1)$ and the resulting pointmass $h_{\mu^{*}} = 1 / 4$.
\end{lemma}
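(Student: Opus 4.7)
The plan is to reduce the constrained two-variable optimization in \eqref{eq:UB_lambda_mu} to a clean bivariate calculus problem, after which the global maximum can be located by an elementary case split. First I would verify the claimed extremizer directly: at $(\lambda^{*}, \mu^{*}) = (1 - 4/e^{2},\, 1)$ the upper bound on $\lambda$ holds with equality, and setting $h_{\mu^{*}} = 1/4$ into \Cref{cor:pointmass} gives $(4/e^{2}) \cdot (1/4) \cdot e^{2 \cdot (2 - 2 \cdot 1/2)} = 1$, so $h_{\mu^{*}} = 1/4$ is the pointmass (unique in $[(1 + 1/\mu^{*})^{-2},\, 1) = [1/4,\, 1)$). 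The objective then evaluates to $(4/e^{2}) \cdot (3/4 - 1/2) = 1/e^{2}$, yielding the PoA lower bound $1 - 1/e^{2}$.

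Next I would change variables to $t := 1 + 1/\mu > 1$ and $s := \sqrt{h_{\mu}}$, under which \Cref{cor:pointmass} reads $1 - \lambda = e^{-2t(1 - s)}/s^{2}$ on the domain $s \in [1/t,\, 1)$. A short computation reveals that the upper bound $\lambda \leq 1 - t^{2} e^{-2(t - 1)}$ translates into $e^{t(1 - s) - 1} \geq t(1 - s)$, which is the universally valid inequality $e^{x - 1} \geq x$ at $x = t(1 - s)$ and so is automatic. Substituting $1 - \lambda$ into the objective and then setting $z := t(1 - s)$ produces the clean analytic form
\[
    f(t,\, z) ~=~ \frac{z \cdot (2t - z - 2)}{(t - z)^{2}} \cdot e^{-2z},
\]
to be maximized over $t > 1$ and $z \in (0,\, t - 1]$.

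The main calculation is then the bivariate maximization of $f(t,\, z)$. Direct differentiation will give $\partial_{t} f(t,\, z) = \frac{2z \cdot (2 - t) \cdot e^{-2z}}{(t - z)^{3}}$, whose sign equals that of $(2 - t)$; hence for every fixed $z > 0$ the interior optimizer in $t$ is $t = 2$, and this is feasible precisely when $z \leq 1$. I would then split into two cases: (i) if $z \leq 1$, plug $t = 2$ into $f$ to obtain $f(2,\, z) = z \cdot e^{-2z}/(2 - z)$, whose derivative simplifies to $\frac{2 \cdot (1 - z)^{2} \cdot e^{-2z}}{(2 - z)^{2}} \geq 0$, so $f(2,\, \cdot)$ is non-decreasing on $(0,\, 1]$ with maximum $f(2,\, 1) = 1/e^{2}$; (ii) if $z > 1$ the boundary $t = z + 1$ binds, reducing $f$ to $f(z + 1,\, z) = z^{2} \cdot e^{-2z}$, whose derivative $2z(1 - z) e^{-2z}$ is strictly negative on $(1,\, +\infty)$, so $f(z + 1,\, z) < 1/e^{2}$ throughout. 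Combining both cases, the supremum equals $1/e^{2}$ and is uniquely attained at $(t,\, z) = (2,\, 1)$, which translates back to $(\mu^{*},\, \lambda^{*}) = (1,\, 1 - 4/e^{2})$ with $h_{\mu^{*}} = 1/4$.

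The main obstacle I anticipate is purely organizational: carefully checking that the ostensibly two-sided constraint $0 < \lambda \leq 1 - (1 + 1/\mu)^{2} e^{-2/\mu}$ collapses under the $(t,\, s) \mapsto (t,\, z)$ substitution to the single inequality $z \leq t - 1$, so that no Lagrange multipliers are needed on the $\lambda$-side and everything reduces to the two explicit single-variable subproblems in $z$ above. Once this bookkeeping is in place the remaining derivative computations are routine and the argument is complete.
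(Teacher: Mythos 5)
Your proposal is correct, and it reaches the paper's conclusion by the same broad strategy --- eliminate $\lambda$ through the pointmass relation of \Cref{cor:pointmass} and then solve an explicit two-parameter calculus problem with a case split --- but via a genuinely different parametrization and slicing. The paper sets $\beta = \sqrt{h_{\mu}}$ and $\gamma = (1 + 1/\mu)^{-1}$, maximizes over $\gamma$ for each fixed $\beta$ (the sign analysis runs through a parabola $P_{\beta}(\gamma)$), and splits on whether $\beta \leq 1/2$ or $\beta > 1/2$; the two resulting one-variable problems are then dispatched somewhat tersely (``elementary algebra'' plus a figure). You instead set $t = 1 + 1/\mu$ and $z = t(1 - \sqrt{h_{\mu}})$, maximize over $t$ for each fixed $z$ (the sign of $\partial_{t} f$ is simply that of $2 - t$), and split on whether $z \leq 1$ or $z > 1$; your inner problems $z e^{-2z}/(2 - z)$ and $z^{2} e^{-2z}$ have fully explicit derivatives, so your route is, if anything, more self-contained at the final step. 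I checked the algebra: indeed $f(t,\, z) = z(2t - z - 2) e^{-2z}/(t - z)^{2}$, $\partial_{t} f = 2z(2 - t)e^{-2z}/(t - z)^{3}$, both case reductions are as you state, and the maximizer $(t,\, z) = (2,\, 1)$ translates back to $(\mu^{*},\, \lambda^{*}) = (1,\, 1 - 4/e^{2})$ with $h_{\mu^{*}} = 1/4$, matching the claimed worst case.

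Two small bookkeeping corrections. First, the translated form of the constraint $\lambda \leq 1 - (1 + 1/\mu)^{2} e^{-2/\mu}$ under the pointmass equation is $e^{ts - 1} \geq ts$, equivalently $e^{(t - z) - 1} \geq t - z$, not $e^{t(1 - s) - 1} \geq t(1 - s)$; this is harmless, since either way it is an instance of $e^{x - 1} \geq x$ and hence automatic. Second, precisely because that bound becomes automatic once the pointmass equation is imposed, the domain restriction $z \leq t - 1$ is not what the $\lambda$-constraint ``collapses to'': it comes from the range restriction $h_{\mu} \geq (1 + 1/\mu)^{-2}$ built into the definition of $h_{\mu}$ in \Cref{cor:pointmass} (the $\lambda$-upper bound is what guarantees such an $h_{\mu}$ exists in the first place). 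This makes your anticipated obstacle lighter than you fear: for the upper-bound direction you only need the forward map sending each feasible $(\lambda,\, \mu)$ to a pair with $t > 1$ and $0 < z \leq t - 1$ whose objective equals $f(t,\, z)$, while attainment is your direct verification at $(\lambda^{*},\, \mu^{*}) = (1 - 4/e^{2},\, 1)$; no surjectivity of the reparametrization, and no multiplier analysis on the $\lambda$-side, is required.
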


\begin{proof}[Proof of \Cref{lem:UB_lambda_mu}]
Following \Cref{cor:reformulation}, given any feasible pair $(\lambda,\, \mu)$, we study the integral
\begin{align}
    (1 - \lambda) \cdot \int_{0}^{\lambda} \frac{H_{\mu}(x) \cdot H'_{\mu}(x) \cdot \d x}{H_{\mu}(x) + (1 - x) \cdot H'_{\mu}(x)}
    & ~=~ (1 - \lambda) \cdot \int_{x \,\in\, [0,\, \lambda]} \frac{H_{\mu} \cdot \big(\d H_{\mu} \big/ \d x\big) \cdot \d x}{H_{\mu} + (1 - x) \cdot \big(\d H_{\mu} \big/ \d x\big)}
    \nonumber \\
    & ~=~ (1 - \lambda) \cdot \int_{H_{\mu} \,\in\, [h_{\mu},\, 1]} \frac{H_{\mu} \cdot \d H_{\mu}}{H_{\mu} + (1 - x) \big/ (\d x / \d H_{\mu})},
    \label{eq:PoA_formula:1}
\end{align}
where the last step changes the variables. We aim to give an analytic formula for the term $\frac{1 - x}{\d x / \d H_{\mu}}$. To this end, let us regard Implicit Equation~\eqref{eq:implicit_equation} as a function $x(H_{\mu})$ of variable $H_{\mu} \in [h_{\mu},\, 1]$ (i.e., the inverse function of the considered CDF), then the derivative is given by
\begin{align*}
    \frac{\d x}{\d H_{\mu}}
    & ~=~ -(1 - \lambda) \cdot \frac{\d}{\d H_{\mu}} \bigg(H_{\mu} \cdot \exp\Big((1 + 1 / \mu) \cdot \big(2 - 2\sqrt{H_{\mu}}\big)\Big)\bigg) \\
    & ~=~ -(1 - \lambda) \cdot \bigg(\exp\Big((1 + 1 / \mu) \cdot \big(2 - 2\sqrt{H_{\mu}}\big)\Big)\Bigg. \\
    & \phantom{~=~} \hspace{2.5cm} ~+~ \bigg.H_{\mu} \cdot \exp\Big((1 + 1 / \mu) \cdot \big(2 - 2\sqrt{H_{\mu}}\big)\Big) \cdot (1 + 1 / \mu) \cdot \frac{-1}{\sqrt{H_{\mu}}}\bigg) \\
    & ~=~ -(1 - \lambda) \cdot \Big(1 - (1 + 1 / \mu) \cdot \sqrt{H_{\mu}}\Big) \cdot \exp\Big((1 + 1 / \mu) \cdot \big(2 - 2\sqrt{H_{\mu}}\big)\Big) \phantom{\bigg.}
\end{align*}
Combining this formula with Implicit Equation~\eqref{eq:implicit_equation} gives 
\begin{align*}
    \frac{1 - x}{\d x \big/ \d H_{\mu}}
    & ~=~ \frac{\phantom{-}(1 - \lambda) \cdot \parbox{3.88cm}{\centering $H_{\mu}$} \cdot \exp\Big((1 + 1 / \mu) \cdot \big(2 - 2\sqrt{H_{\mu}}\big)\Big)}{-(1 - \lambda) \cdot \Big(1 - (1 + 1 / \mu) \cdot \sqrt{H_{\mu}}\Big) \cdot \exp\Big((1 + 1 / \mu) \cdot \big(2 - 2\sqrt{H_{\mu}}\big)\Big)} \\
    & ~=~ \frac{H_{\mu}}{(1 + 1 / \mu) \cdot \sqrt{H_{\mu}} - 1}.
\end{align*}

Plugging this analytic formula for the term $\frac{1 - x}{\d x / \d H_{\mu}}$ back to Integral~\eqref{eq:PoA_formula:1} gives
\begin{align*}
    \text{Integral~\eqref{eq:PoA_formula:1}}
    & ~=~ (1 - \lambda) \cdot \int_{H_{\mu} \,\in\, [h_{\mu},\, 1]} \frac{H_{\mu} \cdot \d H_{\mu}}{H_{\mu} + \frac{H_{\mu}}{(1 + 1 / \mu) \cdot \sqrt{H_{\mu}} - 1}} \\
    & ~=~ (1 - \lambda) \cdot \int_{H_{\mu} \in [h_{\mu},\, 1]} \bigg(1 - \frac{1}{(1 + 1 / \mu) \cdot \sqrt{H_{\mu}}}\bigg) \cdot \d H_{\mu} \\
    & ~=~ (1 - \lambda) \cdot \bigg(H_{\mu} - \frac{2\sqrt{H_{\mu}}}{1 + 1 / \mu}\bigg) \biggmid_{H_{\mu} \,=\, h_{\mu}}^{1} \\
    & ~=~ (1 - \lambda) \cdot \bigg((1 - h_{\mu}) - \frac{2 - 2\sqrt{h_{\mu}}}{1 + 1 / \mu}\bigg),
\end{align*}
which is precisely the formula in Optimization~\eqref{eq:UB_lambda_mu}. This finishes the proof of the lemma.
\end{proof}

Optimization~\eqref{eq:UB_lambda_mu} involves just three variables and, indeed, one of them $h_{\mu}$ is even determined by the other two $(\lambda,\, \mu)$. \Cref{lem:worst_case} gives the analytic solution to Optimization~\eqref{eq:UB_lambda_mu}.

\begin{proof}[Proof of \Cref{lem:worst_case}]
For the moment, we regard the $\mu > 0$ as a given parameter. It follows from \Cref{cor:pointmass} that $\lambda = G(h_{\mu})$, where the function $G$ is defined in the domain $h_{\mu} \in \big[(1 + 1 / \mu)^{-2},\, 1\big)$:
\begin{align*}
    G(h_{\mu}) ~\eqdef~ 1 - (1 / h_{\mu}) \cdot \exp\Big(-\big(2 - 2\sqrt{h_{\mu}}\big) \cdot (1 + 1 / \mu)\Big).
\end{align*}
This $G(h_{\mu})$ is a continuous function. And for any $h_{\mu} > (1 + 1 / \mu)^{-2}$, its derivative
\begin{align*}
    G'(h_{\mu})
    & ~=~ (1 / h_{\mu}^{2}) \cdot \exp\Big(-\big(2 - 2\sqrt{h_{\mu}}\big) \cdot (1 + 1 / \mu)\Big) \\
    & \hspace{1cm} ~-~ (1 / h_{\mu}) \cdot \exp\Big(-\big(2 - 2\sqrt{h_{\mu}}\big) \cdot (1 + 1 / \mu)\Big) \cdot \Big((1 + 1 / \mu) \big/ \sqrt{h_{\mu}}\Big) \\
    & ~=~ (1 / h_{\mu}^{2}) \cdot \exp\Big(-\big(2 - 2\sqrt{h_{\mu}}\big) \cdot (1 + 1 / \mu)\Big) \cdot \Big(1 - (1 + 1 / \mu) \cdot \sqrt{h_{\mu}}\Big)
\end{align*}
is {\em strictly positive}. Thus, this function $G(h_{\mu})$ is {\em strictly increasing} in the whole {\em left-closed right-open} domain $h_{\mu} \in \big[(1 + 1 / \mu)^{-2},\, 1\big)$. Moreover, the image of function $G(h_{\mu})$,
\[
    \Big\{\, G(h_{\mu}) \,\Bigmid\, (1 + 1 / \mu)^{-2} \leq h_{\mu} < 1 \,\Big\} ~=~ \big(0,\, 1 - (1 + 1 / \mu)^{2} \cdot e^{-2 / \mu}\big]
\]
is precisely the feasible space of $\lambda$ in Optimization~\eqref{eq:UB_lambda_mu}, which means the mapping between all feasible $\lambda$ and all $h_{\mu} \in \big[(1 + 1 / \mu)^{-2},\, 1\big)$ is {\em one-to-one}. For these reasons, we can deduce that
\begin{align*}
    \text{Optimization~\eqref{eq:UB_lambda_mu}}
    & ~=~ 1 - \sup \left\{\, \big(1 - G(h_{\mu})\big) \cdot \bigg((1 - h_{\mu}) - \frac{2 - 2\sqrt{h_{\mu}}}{1 + 1 / \mu}\bigg) \phantom{\frac{}{\Big.}}\middlemid\,
    \begin{aligned}
        & 0 < \mu < +\infty \\
        & (1 + 1 / \mu)^{-2} \leq h_{\mu} < 1
    \end{aligned}\,\right\} \\
    & ~=~ 1 - \sup \left\{\, \frac{(1 - h_{\mu}) - \big(2 - 2\sqrt{h_{\mu}}\big) \cdot (1 + 1 / \mu)^{-1}}{h_{\mu} \cdot \exp\Big(\big(2 - 2\sqrt{h_{\mu}}\big) \cdot (1 + 1 / \mu)\Big)} \,\middlemid\,
    \begin{aligned}
        & 0 < \mu < +\infty \\
        & (1 + 1 / \mu)^{-2} \leq h_{\mu} < 1
    \end{aligned}\,\right\} \\
    & ~=~ 1 - \sup \left\{\, \calG(\beta,\, \gamma) \eqdef \frac{(1 - \beta^{2}) - (2 - 2\beta) \cdot \gamma}{\beta^{2} \cdot \exp\big((2 - 2\beta) \cdot \gamma^{-1}\big)}\phantom{\frac{}{\Big.}} \middlemid\, \text{$0 < \gamma \leq \beta < 1$} \,\right\},
\end{align*}
where the first step switches the optimization variable from $\lambda$ to $h_{\mu} \in \big[(1 + 1 / \mu)^{-2},\, 1\big)$, the second step plugs the $G(h_{\mu})$ formula, and the last step substitutes $\beta \eqdef \sqrt{h_{\mu}}$ and $\gamma \eqdef (1 + 1 / \mu)^{-1}$.

Below we would reason about the function $\calG(\beta,\, \gamma)$ defined above. The partial derivative of this function with respect to variable $\gamma$ is given by
\begin{align*}
    \frac{\partial \calG}{\partial \gamma}
    & ~=~ \frac{-(2 - 2\beta)}{\beta^{2} \cdot \exp\big((2 - 2\beta) \cdot \gamma^{-1}\big)}
    ~+~ \frac{(1 - \beta^{2}) - (2 - 2\beta) \cdot \gamma}{\beta^{2} \cdot \exp\big((2 - 2\beta) \cdot \gamma^{-1}\big)} \cdot (2 - 2\beta) \cdot \gamma^{-2} \\
    & ~=~ \frac{(1 - \beta^{2}) - (2 - 2\beta) \cdot \gamma - \gamma^{2}}{\beta^{2} \cdot \exp\big((2 - 2\beta) \cdot \gamma^{-1}\big)} \cdot (2 - 2\beta) \cdot \gamma^{-2},
\end{align*}
where the last step rearranges the formula. Clearly, in the feasible space $\{0 < \gamma \leq \beta < 1\}$, the sign of this partial derivative $\partial \calG \big/ \partial \gamma$ depends on the numerator $(1 - \beta^{2}) - 2\gamma \cdot (1 - \beta) - \gamma^{2}$.

\begin{figure}[t]
    \centering
    \includegraphics[width = .9\textwidth]{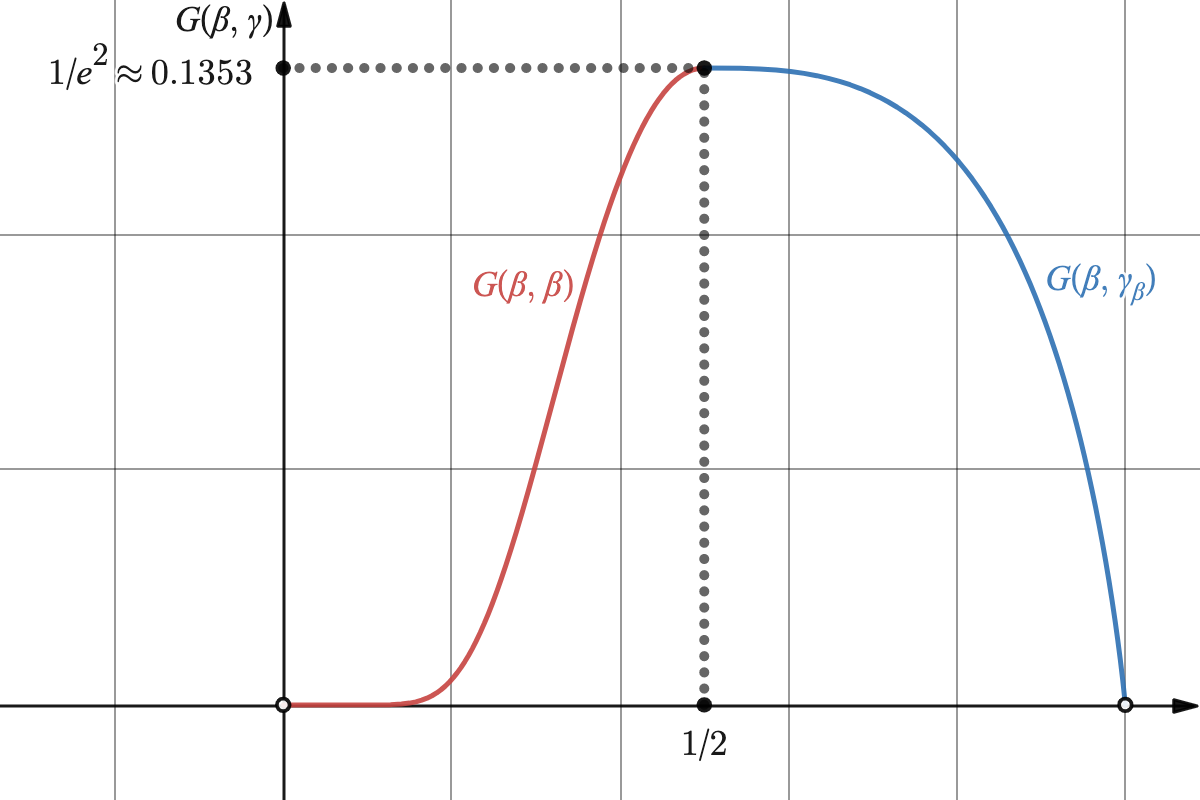}
    \caption{Diagram of (the red curve) the $\calG(\beta,\, \beta)$ as an increasing function of $\beta \in (0,\, 1 / 2]$ and (the blue curve) the $\calG(\beta,\, \gamma_{\beta})$ as a decreasing function of $\beta \in (1 / 2,\, 1)$.
    \label{fig:functionG}}
\end{figure}

For the moment, we regard $\beta \in (0,\, 1)$ as a given parameter and consider the parabola
\[
    P_{\beta}(\gamma) ~\eqdef~ (1 - \beta^{2}) - 2\gamma \cdot (1 - \beta) - \gamma^{2}
\]
in the domain $\gamma \in [0,\, \beta]$. This parabola opens {\em downwards} and has a {\em nonpositive} axis of symmetry $\gamma = -(1 - \beta)$, thus being a {\em decreasing} function in the whole domain $\gamma \in [0,\, \beta]$. Also, the left/right ends are respectively $P_{\beta}(0) = 1 - \beta^{2} > 0$ and $P_{\beta}(\beta) = 1 - 2\beta$. Let us do case analysis.

\vspace{.1in}
\noindent
{\bf Case~I that $0 < \beta \leq 1 / 2$.}
Now the both ends $P_{\beta}(0)$ and $P_{\beta}(\beta)$, as well as the whole parabola, are {\em nonnegative}, which implies that $\partial \calG \big/ \partial \gamma \geq 0$ for any $\gamma \in (0,\, \beta]$. Accordingly, we have
\begin{align*}
    \sup \Big\{\, \calG(\beta,\, \gamma) \,\Bigmid\, \text{\bf Case~I} \,\Big\}
    & ~=~ \sup \Big\{\, \calG(\beta,\, \beta) \,\Bigmid\, \text{$0 < \beta \leq 1 / 2$} \,\Big\} \\
    & ~=~ \sup \Big\{\, (1 / \beta - 1)^{2} \cdot e^{-(2 / \beta - 2)} \,\Bigmid\, \text{$0 < \beta \leq 1 / 2$} \,\Big\} \\
    & ~=~ 1 / e^{2}, \phantom{\Big.}
\end{align*}
where the last two steps can be checked through elementary algebra. In particular, the $\calG(\beta,\, \beta)$ as a function of $\beta \in (0,\, 1 / 2]$ is increasing (see \Cref{fig:functionG} for a visual aid), and the supremum $1 / e^{2} \approx 0.1353$ is achieved when $\beta = \gamma = 1 / 2$.

\vspace{.1in}
\noindent
{\bf Case~II that $1 / 2 < \beta < 1$.}
Now the left end $P_{\beta}(0) = 1 - \beta^{2}$ is {\em positive}, the right end $P_{\beta}(\beta) = 1 - 2\beta$ is {\em negative}, and (elementary algebra) the parabola has exactly one root $\gamma_{\beta} \eqdef \sqrt{(2 - 2\beta)} - (1 - \beta)$ in its domain. This implies that $\partial \calG \big/ \partial \gamma \geq 0$ for any $\gamma \in [0,\, \gamma_{\beta}]$ and $\partial \calG \big/ \partial \gamma \leq 0$ for any $\gamma \in [\gamma_{\beta},\, \beta]$. As a consequence, we have
\begin{align*}
    \sup \Big\{\, \calG(\beta,\, \gamma) \,\Bigmid\, \text{\bf Case~II} \,\Big\}
    & ~=~ \sup \Big\{\, \calG(\beta,\, \gamma_{\beta}) \,\Bigmid\, \text{$1 / 2 < \beta < 1$} \,\Big\} \\
    & ~=~ \sup \Big\{\, \Big(\tfrac{\sqrt{2} - \sqrt{1 - \beta}}{\beta / \sqrt{1 - \beta}}\Big)^{2} \cdot \exp\Big(\tfrac{-2\sqrt{1 - \beta}}{\sqrt{2} - \sqrt{1 - \beta}}\Big) \,\Bigmid\, \text{$1 / 2 < \beta < 1$} \,\Big\} \\
    & ~=~ 1 / e^{2}, \phantom{\Big.}
\end{align*}
where the last two steps can be checked through elementary algebra. In particular, the $\calG(\beta,\, \gamma_{\beta})$ as a function of $\beta \in (1 / 2,\, 1)$ is decreasing (see \Cref{fig:functionG} for a visual aid), and the supremum $1 / e^{2} \approx 0.1353$ is approached (but cannot be achieved) when $\gamma = \gamma_{\beta}$ and $\beta \searrow 1 / 2$.

\vspace{.1in}
Combining both cases together gives $\sup \{\calG(\beta,\, \gamma) \mid 0 < \gamma \leq \beta < 1\} = 1 / e^{2}$ and thus
\begin{align*}
    \text{Optimization~\eqref{eq:UB_lambda_mu}}
    ~=~ 1 - \sup \Big\{\, \calG(\beta,\, \gamma) \,\Bigmid\, 0 < \gamma \leq \beta < 1 \,\Big\}
    ~=~ 1 - 1 / e^{2}.
\end{align*}
In particular, the worst case can be achieved when $\beta = \gamma = 1 / 2$, or equivalently, when the supremum bid $\lambda = 1 - 4 / e^{2} \approx 0.4587$, the parameter $\mu = 1$ and the pointmass $h_{\mu} = 1 / 4$.

This finishes the proof of \Cref{lem:worst_case}.
\end{proof}

\Cref{lem:UB_lambda_mu,lem:worst_case} together imply the lower-bound part of \Cref{thm:main}, which is restated below.

\begin{restate}[{\Cref{thm:main}}]
The {\PriceofAnarchy} in {\FirstPriceAuctions} is $\geq 1 - 1 / e^{2} \approx 0.8647$.
% Selling an indivisible item to $n \geq 1$ bidders whose values are distributed independently, via {\FirstPriceAuction}, the {\sf Bayes Nash \PriceofAnarchy} $\geq 1 - 1 / e^{2} \approx 0.8647$.
\end{restate}

\newpage

\section{Upper Bound Analysis}
\label{sec:LB}

In this section, we first (\Cref{exp:LB}) present our worst case instances in terms of {\em bid distributions}; then (\Cref{lem:LB_validity}) check the validity, i.e., the corresponding {\em value distributions} are well defined; and then (\Cref{lem:LB_poa}) evaluate the auction/optimal {\SocialWelfares} from this instance. The upper-bound part of \Cref{thm:main} will be a direct consequence of \Cref{lem:LB_validity,lem:LB_poa}.

The idea is to simulate the worst-case pseudo instance $H^{*} \otimes L^{*}$ from \Cref{sec:UB}, {\em using a number of i.i.d.\ low-impact bidders $\{L\}^{\otimes n}$ to approximate the pseudo bidder $L^{*}$} (\Cref{rem:pseudo_instance}): Each individual low-impact bidder $L$ is likely to have a small bid $\approx 0$, but the highest bid from $\{L\}^{\otimes n}$ is (almost) identically distributed as the worst-case pseudo bidder $L^{*}$.

\begin{example}[Worst-case instances]
\label{exp:LB}
Let $\lambda^{*}=1 - 4 / e^{2}$ and $L^{*}(b) = \frac{1 - \lambda^{*}}{1 - b}$ for any $b \in [0,\, \lambda^{*}]$ be a CDF function. Another CDF function $H^{*}(b)$ is  given by the implicit equation
\begin{align}
\label{eq:LB:H}
    \left\{\, (b,\, H^{*}) \,\middlemid\,
    \begin{aligned}
        & b ~=~ 1 - 4H^{*} \cdot \exp\Big(2 - 4\sqrt{H^{*}}\Big) \\
        & \text{such that $0 \leq b \leq \lambda^{*}$ and $1 / 4 \leq H^{*} \leq 1$} \phantom{\bigg.}
    \end{aligned} \,\right\}.
\end{align} 
For an arbitrarily small constant $\epsilon \in (0,\, 1)$, consider the following $(n + 1)$-bidder instance $H \otimes \{L\}^{\otimes n}$ for $n \eqdef \lceil 3 / \epsilon \rceil \geq 4$:
\begin{itemize}
    \item $H(b) \eqdef H^{*}(b)$ for $b \in [0,\, \lambda^{*}]$ denotes the monopolist.

    \item $L(b) \eqdef \big(L^{*}(b)\big)^{\frac{1}{n - 1}}$ for $b \in [0,\, \lambda^{*}]$ denotes the {\em common} bid distribution of $n \geq 4$ many i.i.d.\ {\em low-impact} bidders.
\end{itemize}
\end{example}

Here we choose $L(b) \eqdef \big(L^{*}(b)\big)^{\frac{1}{n - 1}}$ rather than $ \big(L^{*}(b)\big)^{\frac{1}{n}}$ such that each lower-impact bidder $L$ has the competing bid distribution $H(b)\cdot \big(L(b)\big)^{n - 1} = H^{*}(b) \cdot L^{*}(b)$, exactly the same as the pseudo bidder $L^{*}$. But in this way, the monopolist $H$'s competing bid $\big(L(b)\big)^{n}=\big(L^{*}(b)\big)^{\frac{n}{n - 1}}$ slightly differs from the $L^{*}(b)$,\footnote{The other choice $L(b) \eqdef\big(L^{*}(b)\big)^{\frac{1}{n}}$ is also fine, but the analysis will be more complicated.}
so his/her value is not a constant $\equiv 1$ but a random variable $\approx 1$ spanning in a tiny range. 
These are all formalized in \Cref{lem:LB_validity}, which checks the validity of this $(n + 1)$-bidder instance; see \Cref{fig:LB} for a visual aid.

\begin{lemma}[Validity]
\label{lem:LB_validity}
Given an arbitrarily small constant $\epsilon \in (0,\, 1)$, the following hold for the $(n + 1)$-bidder instance $H \otimes \{L\}^{\otimes n}$ in \Cref{exp:LB}:
\begin{enumerate}[font = {\em\bfseries}]
    \item\label{lem:LB_validity:1}
    The monopolist $H$ has an increasing bid-to-value mapping $\varphi_{H}(b) \eqdef 1 - \frac{1 - b}{n}$ for $b \in [0,\, \lambda^{*}]$ and a well-defined value distribution $V_{H}(v)$ given by the implicit equation
    \begin{align*}
        \left\{\, (v,\, V_{H}) \,\middlemid\,
        \begin{aligned}
            & v ~=~ 1 - \tfrac{1}{n} \cdot 4V_{H} \cdot \exp\Big(2 - 4\sqrt{V_{H}}\Big) \\
            & \text{\em such that $1 - \tfrac{1}{n} \leq v \leq 1 - \tfrac{4 / e^{2}}{n}$ and $1 / 4 \leq V_{H} \leq 1$} \phantom{\bigg.}
        \end{aligned} \,\right\}.
    \end{align*}
    The strategy $s_{H}$ as the inverse of $\varphi_{H}(b)$ is given by $s_H(v) = 1 - n \cdot (1 - v)$.

    \item\label{lem:LB_validity:2}
    Each low-impact bidder $L$ has an increasing bid-to-value mapping $\varphi_{L}(b)$ given by the parametric equation
    \begin{align*}
        \Big\{\, (b,\, \varphi_{L}) = \big(1 - t^{2} \cdot e^{2 - 2t},\quad 1 - t \cdot e^{2 - 2t}\big) \,\Bigmid\,
        1 \leq t \leq 2 \,\Big\}
    \end{align*}
    and a well-defined value distribution $V_{L}(v)$ given by the parametric equation
    \begin{align*}
        \Big\{\, (v,\, V_{L}) = \big(1 - t \cdot e^{2 - 2t},\quad \big(4 / t^{2} \cdot e^{2t - 4}\big)^{\frac{1}{n - 1}}\big) \,\Bigmid\,
        1 \leq t \leq 2 \,\Big\}.
    \end{align*}
    The strategy $s_{L}$ as the inverse of $\varphi_{L}(b)$ is given by the parametric equation $s_L(1 - t \cdot e^{2 - 2t})=1 - t^{2} \cdot e^{2 - 2t}$ for $1 \leq t \leq 2$.
    
    \item\label{lem:LB_validity:3}
    The strategy profile $s_{H} \otimes \{s_{L}\}^{\otimes n}$ forms an equilibrium $s_{H} \otimes \{s_{L}\}^{\otimes n} \in \bbBNE(V_{H} \otimes \{V_{L}\}^{\otimes n})$.
\end{enumerate}
\end{lemma}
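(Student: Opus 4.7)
The plan is to verify \Cref{lem:LB_validity:1,lem:LB_validity:2,lem:LB_validity:3} by direct computation, exploiting two key identities: the monopolist's competing distribution $L(b)^{n} = (L^{*}(b))^{n/(n-1)}$ inherits the simple form of $L^{*}$, and the low-impact bidder's competing distribution $H(b) \cdot L(b)^{n-1} = H^{*}(b) \cdot L^{*}(b)$ coincides with the first-order bid distribution of the worst-case pseudo instance from \Cref{sec:UB}. For \Cref{lem:LB_validity:1}, combining $L^{*}(b) = (1-\lambda^{*})/(1-b)$ with the chain rule yields $n \cdot L'(b)/L(b) = n/[(n-1)(1-b)]$, so \Cref{def:mapping} gives $\varphi_{H}(b) = b + (n-1)(1-b)/n = 1 - (1-b)/n$, which is linear-increasing. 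Inverting produces $s_{H}(v) = 1 - n(1-v)$, and substituting $b = 1 - n(1-v)$ into \Cref{eq:LB:H} with $H^{*} \equiv V_{H}$ delivers the claimed implicit equation on $v \in [1 - 1/n,\, 1 - (4/e^{2})/n]$.

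For \Cref{lem:LB_validity:2}, the factorization above reduces $\varphi_{L}(b) = b + H^{*} L^{*}/(H^{*} L^{*})'$ to the pseudo bidder's mapping at the worst-case $(\lambda^{*}, \mu^{*}) = (1 - 4/e^{2}, 1)$. Parameterizing via $t = 2\sqrt{H^{*}} \in [1, 2]$, \Cref{eq:LB:H} rewrites as $b = 1 - t^{2} e^{2-2t}$; a short calculation then exhibits the striking simplification $H^{*}(b) \cdot L^{*}(b) = e^{2t-4}$, after which differentiation through $t$ yields $\varphi_{L}(b) = 1 - t \cdot e^{2-2t}$. The value distribution follows from $V_{L}(v) = L(b) = (L^{*}(b))^{1/(n-1)} = (4 e^{2t-4}/t^{2})^{1/(n-1)}$ via \Cref{cor:high_bid}. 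Monotonicity of $\varphi_{L}$ reduces to verifying $db/dt = 2t(t-1) e^{2-2t} \geq 0$ and $d\varphi_{L}/dt = (2t-1) e^{2-2t} > 0$ on $t \in [1, 2]$, both of which are immediate.

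For \Cref{lem:LB_validity:3}, with both bid-to-value mappings confirmed to be increasing, the instance $(P, H \otimes \{L\}^{\otimes n})$ with monopolist $H$ and conditional value $P \equiv 1 - 1/n \in [0, \varphi_{H}(0)]$ satisfies all the conditions of \Cref{def:valid} and hence is valid. \Cref{thm:valid} then furnishes an equilibrium $\bs \in \bbBNE$ realizing these bid distributions; the strategies $s_{H}$ and $s_{L}$ stated in the lemma are precisely the pointwise inverses of $\varphi_{H}$ and $\varphi_{L}$ that the proof of \Cref{thm:valid} constructs (via the quantile identification of values and bids from \Cref{quantiles}). The main obstacle will be the parametric manipulation producing $\varphi_{L}$: verifying the clean cancellation $H^{*}(b) \cdot L^{*}(b) = e^{2t-4}$ and then the derived closed-form $\varphi_{L}(b) = 1 - t e^{2-2t}$, both of which hinge on the worst-case choice $\mu^{*} = 1$ making the key exponent $(1 + 1/\mu^{*})$ equal to $2$.
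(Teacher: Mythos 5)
Your proposal is correct and follows essentially the same route as the paper: computing $\varphi_{H}(b)=1-\frac{1-b}{n}$ from the competing distribution $L(b)^{n}=(L^{*}(b))^{n/(n-1)}$ and substituting $b=1-n(1-v)$ into Implicit Equation~\eqref{eq:LB:H}; then parameterizing via $t=2\sqrt{H^{*}}\in[1,2]$, using the identity $H^{*}(b)\cdot L^{*}(b)=e^{2t-4}$ to obtain $\varphi_{L}(b)=1-t\,e^{2-2t}$ and $V_{L}$, with the same monotonicity checks. Your explicit appeal to \Cref{def:valid} and \Cref{thm:valid} for the equilibrium claim is exactly the machinery the paper's terse ``direct consequence'' for \Cref{lem:LB_validity:3} relies on.
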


\begin{figure}[t]
    \centering
    \includegraphics[width = .9\textwidth]{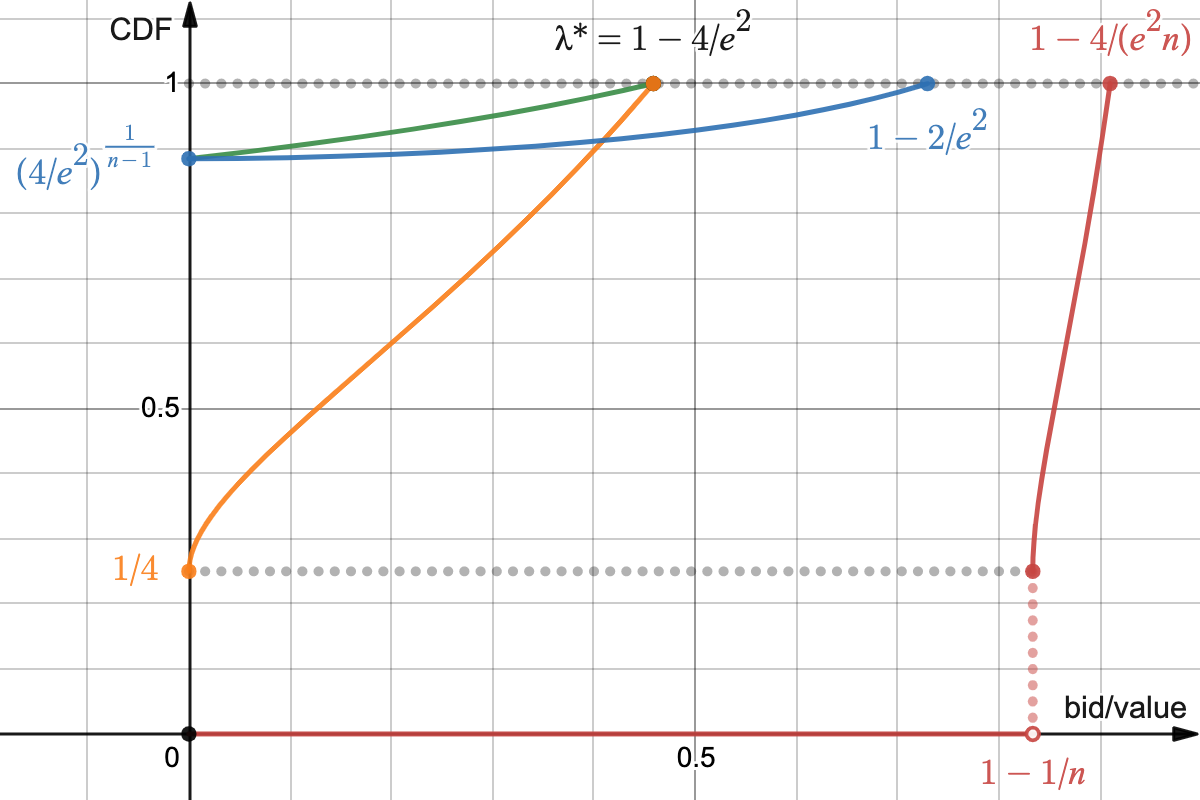}
    \caption{Diagram of the $(n + 1)$-bidder instance $H \otimes \{L\}^{\otimes n}$ in \Cref{exp:LB}. \\
    (i)~The monopolist $H$ has a bid CDF $H$ (orange) given by Implicit Equation~\eqref{eq:LB:H} and a value CDF $V_{H}$ (red) given by \Cref{lem:LB_validity:1} of \Cref{lem:LB_validity}. \\
    (ii)~Each of the i.i.d.\ low-impact bidders $\{L\}^{\otimes n}$ has a bid CDF $L$ (green) given by $L(b) = (\frac{1 - \lambda^{*}}{1 - b})^{\frac{1}{n - 1}}$ for any $b \in [0,\, \lambda^{*}] = [0,\, 1 - 4 / e^{2}]$ and a value CDF $V_{L}$ (blue) given by \Cref{lem:LB_validity:2} of \Cref{lem:LB_validity}.}
    \label{fig:LB}
\end{figure}

\begin{proof}
Let us prove {\bf \Cref{lem:LB_validity:1,lem:LB_validity:2}} one by one. Then {\bf \Cref{lem:LB_validity:3}} will be a direct consequence.

\vspace{.1in}
\noindent
{\bf \Cref{lem:LB_validity:1}.}
The monopolist $H$ has a competing bid distribution $L(b)^{n} = L^{*}(b)^{\frac{n}{n - 1}} = (\frac{1 - \lambda^{*}}{1 - b})^{\frac{n}{n - 1}}$ and thus (by elementary algebra) a bid-to-value mapping
\begin{align*}
    \varphi_{H}(b) ~=~ b + L(b)^{n} \big/ \big(L(b)^{n}\big)' ~=~ b + \tfrac{n - 1}{n} \cdot (1 - b) ~=~ 1 - \tfrac{1 - b}{n}.
\end{align*}
This mapping $\varphi_{H}(b)$ is increasing on $b \in [0,\, \lambda^{*}] = [0,\, 1 - 4 / e^{2}]$, with the minimum $\varphi_{H}(0) = 1 - \frac{1}{n}$ and the maximum $\varphi_{H}(\lambda^{*}) = 1 - \frac{1 - \lambda^{*}}{n} = 1 - \frac{4 / e^{2}}{n}$.

According to \Cref{lem:value_dist}, the value distribution $V_{H}(v)$ follows the parametric equation
\begin{align*}
    \Big\{\, (v,\, V_{H}) = \big(\varphi_{H}(b),\, H(b)\big) \,\Bigmid\, b \in [0,\, \lambda^{*}] \,\Big\},
\end{align*}
where the formula $V_{H} = H(b) = H^{*}(b)$ is defined by Implicit Equation~\eqref{eq:LB:H}. Based on elementary algebra (i.e., using the substitution $b = 1 - n \cdot (1 - v)$ due to the formula $v = \varphi_{H}(b) = 1 - \frac{1 - b}{n}$), we obtain the claimed implicit equation for $V_{H}(v)$. {\bf \Cref{lem:LB_validity:1}} follows then.

\vspace{.1in}
\noindent
{\bf \Cref{lem:LB_validity:2}.}
For notational brevity, below we may simply write $H^{*} = H^{*}(b)$ and $L^{*} = L^{*}(b)$ etc.
We let $t \eqdef 2\sqrt{H^{*}} \in [1,\, 2]$  as a function of $b$. Then
$b = 1 - t^2 \cdot \exp(2 - 2t)$ by \Cref{eq:LB:H} and
\begin{equation}
  \tfrac{\d b}{\d t} ~=~ (2t^{2} - 2t) \cdot \exp(2 - 2t).
\end{equation}
The low-impact bidders $L^{\otimes n}$ have the same competing bid distribution $B(b) = H(b) \cdot \big(L(b)\big)^{n - 1}= H^{*}(b) \cdot L^{*}(b)$. This distribution can be written as
\[
    B ~=~ H^{*} \cdot L^{*} ~=~ H^{*} \cdot \tfrac{4 / e^{2}}{1 - b} ~=~ \exp\big(-4 + 2t\big),
\]
where
% the second step applies $L^{*}(b) = \frac{1 - \lambda^{*}}{1 - b} = \frac{4 / e^{2}}{1 - b}$, and
the last step applies  $H^{*}=t^2/4$ and $b = 1 - t^2 \cdot \exp(2 - 2t)$. The above formula switches the range $t\in [1,\, 2]$ into the range $B \in [1 / e^{2},\, 1]$.
Thus
\begin{align*}
    \varphi_{L}
    & ~=~ b + B \big/ (\tfrac{\d B}{\d b}) \\
    & ~=~ b + \exp\big(-4 + 2t \big) \cdot (\tfrac{\d b}{\d t}) \big/ (\tfrac{\d B}{\d t}) \\
    & ~=~ 1 - t^2 \cdot \exp(2 - 2t) + (t^{2} - t) \cdot \exp(2 - 2t) \\
    & ~=~ 1 - t \cdot e^{2 - 2t}.
\end{align*}

We conclude with the parametric equation for $ \varphi_{L}(b)$ claimed in the statement of \Cref{lem:LB_validity:2}:
\begin{align*}
    \Big\{\, (b,\, \varphi_{L}) = \big(1 - t^{2} \cdot e^{2 - 2t},\quad 1 - t \cdot e^{2 - 2t}\big) \,\Bigmid\,
    1 \leq t \leq 2 \,\Big\}
\end{align*}
With respect to $t \in [1,\, 2]$, the both formulas $b = 1 - t^{2} \cdot e^{2 - 2t}$ and $\varphi_{L} = 1 - t \cdot e^{2 - 2t}$ are increasing functions. Thus, the bid-to-value mapping $\varphi_{L}(b)$ is increasing for $b \in [0,\, 1 - 4 / e^{2}] = [0,\, \lambda^{*}]$, with the minimum $\varphi_{L}(0) = 0$ and the maximum $\varphi_{L}(\lambda^{*}) = 1 - 2 / e^{2}$.

According to \Cref{lem:value_dist}, the value distribution $V_{L}(v)$ follows the parametric equation
\begin{align*}
    \Big\{\, (v,\, V_{L}) = \big(\varphi_{L}(b),\, L(b)\big) \,\Bigmid\, b \in [0,\, \lambda^{*}] \,\Big\},
\end{align*}
where the formula $V_{L} = L(b) = (\frac{1 - \lambda^{*}}{1 - b})^{\frac{1}{n - 1}} = (\frac{4 / e^{2}}{1 - b})^{\frac{1}{n - 1}}$. By elementary algebra (i.e., employing the formula $b = 1 - t^{2} \cdot e^{2 - 2t}$ in the defining parametric equation for $\varphi_{L}(b) $), we derive the claimed parametric equation for $V_{H}(v)$. {\bf \Cref{lem:LB_validity:2}} follows then. This finishes the proof.
\end{proof}

Now we study the auction/optimal {\SocialWelfares} from our $(n + 1)$-bidder instance $H \otimes \{L\}^{\otimes n}$.

\begin{lemma}[{\PriceofAnarchy}]
\label{lem:LB_poa}
Given an arbitrarily small constant $\epsilon \in (0,\, 1)$, the following hold for the $(n + 1)$-bidder instance $H \otimes \{L\}^{\otimes n}$ in \Cref{exp:LB}:
\begin{enumerate}[font = {\em\bfseries}]
    \item\label{lem:LB_poa:1}
    The expected optimal {\SocialWelfare} $\OPT(H \otimes \{L\}^{\otimes n}) \geq 1 - \epsilon$; and

    \item\label{lem:LB_poa:2}
    The expected auction {\SocialWelfare} $\FPA(H \otimes \{L\}^{\otimes n}) \leq 1 - 1 / e^{2} $.
\end{enumerate}
\end{lemma}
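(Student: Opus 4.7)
The plan is to handle \Cref{lem:LB_poa:1,lem:LB_poa:2} separately. \Cref{lem:LB_poa:1} is essentially immediate: by \Cref{lem:LB_validity:1} of \Cref{lem:LB_validity}, the monopolist's value satisfies $v_H \geq 1 - 1/n$ deterministically, so $\OPT \geq \Ex[v_H] \geq 1 - 1/n \geq 1 - \epsilon/3 \geq 1 - \epsilon$ by the choice $n = \lceil 3/\epsilon \rceil$. The bulk of the proof is \Cref{lem:LB_poa:2}, and the plan there is to derive a closed-form decomposition of $\FPA$ that exposes the worst-case pseudo integral from \Cref{sec:UB} as a subterm.

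For \Cref{lem:LB_poa:2}, I would apply \Cref{lem:auction_welfare}. The monopolist's bid mass at $b = 0$ corresponds to the deterministic value $\varphi_H(0) = 1 - 1/n$, while each low-impact bidder's mass at $b = 0$ corresponds to $\varphi_L(0) = 0$, so only $H$ is a monopolist in the sense of \Cref{def:monopolist} and $\Ex[P] = 1 - 1/n$. Setting $K(b) \eqdef L(b)^n = L^*(b)^{n/(n-1)}$ so that $n L' L^{n-1} = K'$, decomposing $\varphi_H(b) = 1 - (1-b)/n$ and $\varphi_L(b) = 1 - (1 - \varphi_L(b))$, and applying $\int_0^{\lambda^*} (H'K + H K')\,db = 1 - H(0) K(0)$, the auction welfare formula collapses to
\begin{align*}
\FPA ~=~ 1 - \tfrac{1}{n} H(0) K(0) - \tfrac{1}{n} \int_0^{\lambda^*} (1-b)\, H'(b)\, K(b)\, db - \int_0^{\lambda^*} (1 - \varphi_L(b))\, K'(b)\, H(b)\, db.
\end{align*}
Because the first two subtracted terms are manifestly nonnegative, it will suffice to show the third integral is at least $1/e^2$.

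To lower bound that integral, I would first use the identity $1 - \varphi_L(b) = (1-b)^2 H'(b)/[(1-b) H'(b) + H(b)]$ (obtained from $\varphi_L = b + \calB^*/\calB^{*\prime}$ with $\calB^* = H\cdot L^*$) together with $K'(b)/K(b) = n/[(n-1)(1-b)]$, to rewrite it as
\begin{align*}
\int_0^{\lambda^*} (1 - \varphi_L) K' H \, db ~=~ \frac{n}{n-1} \int_0^{\lambda^*} \frac{(1-b)\, H(b)\, H'(b)\, K(b)}{H(b) + (1-b) H'(b)} \, db.
\end{align*}
The extremum in \Cref{lem:reformulation,lem:worst_case} at $(H,\lambda) = (H^*, \lambda^*)$ asserts $(1 - \lambda^*) \int_0^{\lambda^*} H^* H^{*\prime}/[H^* + (1-b) H^{*\prime}]\,db = 1/e^2$; using the constant-product identity $(1 - \lambda^*) = (1-b)\, L^*(b)$, this is exactly the same integrand but with $L^*$ in place of $K$. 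Since $K(b)/L^*(b) = L^*(b)^{1/(n-1)} \geq (4/e^2)^{1/(n-1)}$ uniformly on $[0, \lambda^*]$ (because $L^*(b) \geq L^*(0) = 4/e^2$), the integral is at least $(4/e^2)^{1/(n-1)}/e^2$, giving the lower bound $\frac{n}{n-1}\,(4/e^2)^{1/(n-1)} \cdot (1/e^2)$ on the full integral.

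The main technical hurdle will be the elementary inequality $\tfrac{n}{n-1}\cdot (4/e^2)^{1/(n-1)} \geq 1$ for all $n \geq 4$. Writing $x = 1/(n-1)$, this is equivalent to $\log(1+x) \geq (2 - \log 4)\, x$; using the standard bound $\log(1+x) \geq x/(1+x)$ for $x \geq 0$, it is enough to check $x \leq (\log 4 - 1)/(2 - \log 4) \approx 0.629$, which holds for every $n \geq 3$. Combining all the bounds yields $1 - \FPA \geq 1/e^2$, hence $\FPA \leq 1 - 1/e^2$, which together with $\OPT \geq 1 - \epsilon$ delivers $\PoA \leq (1 - 1/e^2)/(1 - \epsilon) \to 1 - 1/e^2$ as $\epsilon \to 0^+$, completing the upper-bound half of \Cref{thm:main}.
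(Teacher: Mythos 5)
Your proposal is correct, and for Item~1 it coincides with the paper's argument verbatim. For Item~2 the skeleton also matches the paper's proof: both start from \Cref{lem:auction_welfare}, discard the slack coming from $\varphi_H \leq 1$ (you via the exact decomposition $\FPA = 1 - \tfrac1n H(0)K(0) - \tfrac1n\int(1-b)H'K - \int(1-\varphi_L)K'H$, the paper by bounding $\varphi_H(0),\varphi_H(b)\le 1$ directly), reduce to lower-bounding the same integral $\int_0^{\lambda^*} n(1-\varphi_L)\tfrac{L'}{L}\,\calB\,db$ by $\tfrac{n}{n-1}(4/e^2)^{1/(n-1)}e^{-2}$, and finish with the elementary inequality $\tfrac{n}{n-1}(4/e^2)^{1/(n-1)}\ge 1$ (your $\log(1+x)\ge x/(1+x)$ argument versus the paper's $(1+\tfrac{1}{n-1})^{n-1}\ge 2$; both are fine). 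The one genuine difference is how the core integral is evaluated: the paper re-derives it by the parametric substitution $t=2\sqrt{H^*}$ and computes $\int_1^2\tfrac{t-1}{2}\,dt$ explicitly, whereas you rewrite the integrand through $1-\varphi_L(b)=\tfrac{(1-b)^2H'(b)}{H(b)+(1-b)H'(b)}$ and the constant-product identity $(1-b)L^*(b)=1-\lambda^*$, then recycle the already-established evaluation $(1-\lambda^*)\int_0^{\lambda^*}\tfrac{H^*H^{*\prime}}{H^*+(1-x)H^{*\prime}}\,dx = 1/e^2$ together with the pointwise comparison $K\ge(4/e^2)^{1/(n-1)}L^*$. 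Your route avoids redoing the calculus and makes transparent that the finite-$n$ instance inherits exactly the worst-case pseudo integral; its only cost is the dependence on \Cref{sec:UB}, and the citation should really be to the computation in \Cref{lem:UB_lambda_mu} (which converts the integral into the closed form $(1-\lambda)\big((1-h_\mu)-\tfrac{2-2\sqrt{h_\mu}}{1+1/\mu}\big)$) evaluated at $(\lambda^*,\mu^*)=(1-4/e^2,1)$, $h_{\mu^*}=1/4$ via \Cref{lem:worst_case}, rather than to \Cref{lem:reformulation}, which does not by itself supply the value of the integral.
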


\begin{proof}
Let us prove {\bf \Cref{lem:LB_poa:1,lem:LB_poa:2}} one by one.

\vspace{.1in}
\noindent
{\bf \Cref{lem:LB_poa:1}.}
The value $v_{H} \sim V_{H}$ of the monopolist $H$, supported on $\supp(V_{H}) = \big[1 - \frac{1}{n},\, 1 - \frac{4 / e^{2}}{n}\big]$, is higher than the values of the lower-impact bidders $\{L\}^{\otimes n}$, which are supported on 
$[0, 1-2/e^2]$. Thus the optimal {\SocialWelfare} $\OPT(H \otimes \{L\}^{\otimes n})$ always stems from the monopolist $H$:
\[
    \OPT(H \otimes \{L\}^{\otimes n}) ~=~ \E_{v_{H}}\big[v_{H}\big] ~\geq~ 1 - 1 \big/ n ~\geq~ 1 - \epsilon,
    \hspace{5.95cm}
\]
where the last step uses $n = \lceil 3 / \epsilon \rceil \geq 1 / \epsilon$ (\Cref{exp:LB}). {\bf \Cref{lem:LB_poa:1}} follows then.

\vspace{.1in}
\noindent
{\bf \Cref{lem:LB_poa:2}.}
Recall that the low-impact bidders $L^{\otimes n}$ have the common bid-to-value mapping $\varphi_{L}(b)$.
Following \Cref{lem:auction_welfare} (with $\calB(b) = H(b) \cdot L(b)^{n}$),
the auction {\SocialWelfare} $\FPA(H \otimes \{L\}^{\otimes n})$ from our $(n + 1)$-bidder instance is given by
\begin{align*}
    \FPA(H \otimes \{L\}^{\otimes n})
    & ~=~ \varphi_{H}(0) \cdot \calB(0) + \int_{0}^{\lambda^{*}} \Big(\varphi_{H}(b) \cdot \tfrac{H'(b)}{H(b)} \cdot \calB(0) + n \cdot \varphi_{L}(b) \cdot \tfrac{L'(b)}{L(b)} \cdot \calB(b)\Big) \cdot \d b \\
    & ~\leq~ \calB(0) + \int_{0}^{\lambda^{*}} \Big(\tfrac{H'(b)}{H(b)} \cdot \calB(0) + n \cdot \varphi_{L}(b) \cdot \tfrac{L'(b)}{L(b)} \cdot \calB(b)\Big) \cdot \d b \\
    & ~=~ \calB(0) + \int_{0}^{\lambda^{*}} \calB'(b) \cdot \d b - \int_{0}^{\lambda^{*}} n \cdot (1 - \varphi_{L}(b)) \cdot \tfrac{L'(b)}{L(b)} \cdot \calB(b) \cdot \d b \\
    & ~=~ 1 - \int_{0}^{\lambda^{*}} n \cdot (1-\varphi_{L}(b)) \cdot \tfrac{1}{(n - 1) \cdot (1 - b)} \cdot H(b) \cdot \big(\tfrac{1 - \lambda^{*}}{1 - b}\big)^{\frac{n}{n - 1}} \cdot \d b \\
    & ~\leq~ 1 - \tfrac{n}{n - 1} \cdot (1 - \lambda^{*})^{\frac{n}{n - 1}} \cdot \int_{0}^{\lambda^{*}} \frac{(1 - \varphi_{L}(b)) \cdot H(b)}{(1 - b)^2} \cdot \d b \\
    % & \qquad ~=~ 1 - \tfrac{n}{n - 1} \cdot (1 - \lambda^{*})^{\frac{n}{n-1}} \cdot \int_{0}^{\lambda^{*}} \Big( t e^{2 - 2t} \cdot \frac{t^2}{4} \cdot  t^{-4} e^{-4 + 4t} \Big) \cdot \d b \\
    & ~=~ 1 - \tfrac{n}{n - 1} \cdot (1 - \lambda^{*})^{\frac{n}{n - 1}} \cdot \int_{1}^{2} \frac{(t \cdot e^{2 - 2t}) \cdot (t^{2} / 4)}{(t^{2} \cdot e^{2 - 2t})^{2}} \cdot \Big(\frac{\d b}{\d t}\Big) \cdot \d t \\
    % & \qquad ~=~ 1 - \tfrac{n}{n - 1} \cdot (1 - \lambda^{*})^{\frac{n}{n-1}} \cdot \int_{1}^{2} \tfrac{1}{4t \cdot e^{2 - 2t}} \cdot (2t^{2} - 2t) \cdot \exp(2 - 2t) \cdot \d t \\
    & ~=~ 1 - \tfrac{n}{n - 1} \cdot (1 - \lambda^{*})^{\frac{n}{n - 1}} \cdot \int_{1}^{2} \frac{t - 1}{2} \cdot \d t \\
    & ~=~ 1 - \tfrac{n}{n - 1} \cdot (4 / e^2)^{\frac{1}{n - 1}} \cdot e^{-2} \phantom{\bigg.} \\
    & ~\leq~ 1 - e^{-2},
\end{align*}
where the last inequality uses the fact $\frac{n}{n - 1} \cdot (4/e^2)^{\frac{1}{n - 1}} = \big((1 + \frac{1}{n - 1})^{n - 1}\cdot 4 / e^2\big)^{\frac{1}{n - 1}} \geq (8 / e^2)^{\frac{1}{n - 1}}> 1$. The definition of $t \in [1,\, 2]$ and the expressions of $b$, $H$, $\varphi_{L}$ and $\frac{\d b}{\d t}$ in terms of $t$ are given in the proof of \Cref{lem:LB_poa}.
{\bf \Cref{lem:LB_poa:2}} follows then.
This finishes the proof.
\end{proof}

We directly infer the upper-bound part of \Cref{thm:main} (restated below) from \Cref{lem:LB_poa}, since the constant $\epsilon \in (0,\, 1)$ chosen in \Cref{exp:LB} can be arbitrarily small.

\begin{restate}[{\Cref{thm:main}}]
The {\PriceofAnarchy} in {\FirstPriceAuctions} is $\leq 1 - 1 / e^{2} \approx 0.8647$.
\end{restate}

\section*{Acknowledgements}
We are grateful to Xi Chen for invaluable discussions in multiple stages throughout this work and would like to thank Xiaohui Bei, Hu Fu, Tim Roughgarden, Rocco Servedio, Zhihao Gavin Tang, Zihe Wang, and anonymous reviewers for helpful comments.

% Y.J.\ is supported by NSF grants IIS-1838154, CCF-1563155, CCF-1703925, CCF-1814873, CCF-2106429, and CCF-2107187.
P.L.\ is supported by Science and Technology Innovation 2030 – ``New Generation of Artificial Intelligence'' Major Project No.(2018AAA0100903), NSFC grant 61922052 and 61932002, Innovation Program of Shanghai Municipal Education Commission, Program for Innovative Research Team of Shanghai University of Finance and Economics, and Fundamental Research Funds for Central Universities.

\newpage

\begin{flushleft}
\bibliographystyle{alpha}
\bibliography{main}
\end{flushleft}

\end{document}